\newcommand{\Symb}{\mathcal{S}}
\newcommand{\sm}{\setminus}
\newcommand{\Act}{\mathcal{A}}
\newcommand{\sig}{\mathsf{sig}}
\newcommand{\zero}{\mathsf{0}}
\newcommand{\one}{\mathsf{1}}
\newcommand{\Zz}{\mathbb{Z}_{\ge 0}}
\newcommand{\Zp}{\mathbb{Z}_{+}}
\newcommand{\str}{\mathsf{str}}
\newcommand{\ceil}[1]{\lceil #1 \rceil}
\newcommand{\floor}[1]{\lfloor #1 \rfloor}
\newcommand{\tow}[1]{\Sigma_{#1}}
\newcommand{\ptow}[1]{\overline{\Sigma}_{#1}}
\newcommand{\RUNS}{\mathsf{RUNS}}
\newcommand{\Tr}{\mathcal{T}}
\newcommand{\Gr}{\mathcal{G}}
\renewcommand{\root}{\mathsf{root}}
\newcommand{\degree}{\mathsf{deg}}
\newcommand{\parent}{\mathsf{parent}}
\newcommand{\child}{\mathsf{child}}
\newcommand{\symb}{\mathsf{symb}}
\newcommand{\level}{\mathsf{level}}
\newcommand{\indexop}{\mathsf{index}}
\newcommand{\rightop}{\mathsf{right}}
\newcommand{\letter}{\mathsf{letter}}
\newcommand{\length}{\mathsf{len}}
\newcommand{\id}{\mathsf{id}}
\newcommand{\block}{\mathsf{block}}
\newcommand{\preflength}{\mathsf{pLen}}
\newcommand{\X}{\mathcal{X}}
\newcommand{\Y}{\mathcal{Y}}
\newcommand{\Pts}{\mathcal{P}}
\newcommand{\Points}{\mathrm{Points}}
\newcommand{\rot}{\mathsf{rot}}
\newcommand{\Ints}{\mathcal{I}}
\newcommand{\LW}{\mathcal{L}}
\newcommand{\val}{\mathsf{val}}
\newcommand{\labelop}{\mathsf{label}}
\newcommand{\deleteop}{\mathsf{delete}}
\newcommand{\unlabel}{\mathsf{unlabel}}
\newcommand{\swapop}{\mathsf{swap}}
\newcommand{\initop}{\mathsf{initialize}}
\newcommand{\W}{\mathcal{W}}
\newcommand{\makeop}{\mathsf{insert}}
\newcommand{\lcpop}{\mathsf{lcp}}
\newcommand{\perop}{\mathsf{period}}
\newcommand{\canshop}{\mathsf{canShift}}
\newcommand{\ipmop}{\mathsf{ipm}}
\newcommand{\lcsop}{\mathsf{lcs}}
\newcommand{\splitop}{\mathsf{split}}
\newcommand{\concatop}{\mathsf{concat}}
\newcommand{\accessop}{\mathsf{access}}
\numberwithin{equation}{section}
\definecolor{mypink}{RGB}{199, 21 133}
\newtheorem{theorem}{Theorem}[section]
\newtheorem{construction}[theorem]{Construction}
\newtheorem{corollary}[theorem]{Corollary}
\newtheorem{conjecture}[theorem]{Conjecture}
\newtheorem{lemma}[theorem]{Lemma}
\newtheorem{fact}[theorem]{Fact}
\newtheorem{observation}[theorem]{Observation}
\newtheorem{proposition}[theorem]{Proposition}
\newtheorem{assumption}[theorem]{Assumption}
\newtheorem{problem}[theorem]{Problem}
\theoremstyle{definition}
\newtheorem{definition}[theorem]{Definition}
\theoremstyle{remark}
\newtheorem{remark}[theorem]{Remark}
\crefname{fact}{Fact}{Facts}
\crefname{assumption}{Assumption}{Assumptions}
\newcommand{\LCE}{\mathrm{LCE}}
\newcommand{\lcp}{\mathrm{lcp}}
\newcommand{\lcs}{\mathrm{lcs}}
\newcommand{\per}{\mathrm{per}}
\newcommand{\bigO}{\mathcal{O}}
\newcommand{\Oh}{\bigO}
\newcommand{\Ohtilde}{\widetilde{\Oh}}
\DeclareMathOperator{\polylog}{polylog}
\newcommand{\revstr}[1]{\overline{#1}}
\newcommand{\Lroot}{\mathrm{root}}
\newcommand{\Lrootstrgen}[2]{\mathrm{root}_{#1}(#2)}
\newcommand{\Lrootgen}[4]{\mathrm{root}_{#1}(#2,#3,#4)}
\newcommand{\Lhead}{\mathrm{head}}
\newcommand{\Lheadgen}[4]{\mathrm{head}_{#1}(#2,#3,#4)}
\newcommand{\Ltail}{\mathrm{tail}}
\newcommand{\Ltailgen}[4]{\mathrm{tail}_{#1}(#2,#3,#4)}
\newcommand{\Lexp}{\mathrm{exp}}
\newcommand{\Lexpgen}[4]{\mathrm{exp}_{#1}(#2,#3,#4)}
\newcommand{\Lexpcut}[2]{\mathrm{exp}^{\rm cut}(#1,#2)}
\newcommand{\Lexpcutgen}[5]{\mathrm{exp}^{\rm cut}_{#1}(#2,#3,#4,#5)}
\newcommand{\type}{\mathrm{type}}
\newcommand{\typegen}[3]{\mathrm{type}(#1,#2,#3)}
\newcommand{\Posalll}{\mathrm{Pos}_{\ell}}
\newcommand{\Posallg}{\mathrm{Pos}'_{\ell}}
\newcommand{\Poslow}{\mathrm{Pos}_{\ell}^{\rm low}}
\newcommand{\Posmid}{\mathrm{Pos}_{\ell}^{\rm mid}}
\newcommand{\Poshigh}{\mathrm{Pos}_{\ell}^{\rm high}}
\newcommand{\deltal}{\delta}
\newcommand{\deltag}{\delta'}
\newcommand{\deltalow}{\delta_{\ell}^{\rm low}}
\newcommand{\deltamid}{\delta_{\ell}^{\rm mid}}
\newcommand{\deltahigh}{\delta_{\ell}^{\rm high}}
\newcommand{\Occ}{\mathrm{Occ}}
\newcommand{\Occm}{\mathrm{Occ}^{-}}
\newcommand{\Occp}{\mathrm{Occ}^{+}}
\newcommand{\Occeq}{\mathrm{Occ}^{\rm eq}}
\newcommand{\Occgt}{\mathrm{Occ}^{\rm gt}}
\newcommand{\Occeqm}{\mathrm{Occ}^{{\rm eq}-}}
\newcommand{\Occgtm}{\mathrm{Occ}^{{\rm gt}-}}
\newcommand{\Occeqp}{\mathrm{Occ}^{{\rm eq}+}}
\newcommand{\Occgtp}{\mathrm{Occ}^{{\rm gt}+}}
\newcommand{\rendgen}[3]{e(#1,#2,#3)}
\newcommand{\rend}[1]{e(#1)}
\newcommand{\rendfull}[1]{e^{\rm full}(#1)}
\newcommand{\rendfullgen}[4]{e^{\rm full}_{#1}(#2,#3,#4)}
\newcommand{\rendcut}[2]{e^{\rm cut}(#1,#2)}
\newcommand{\rendcutgen}[5]{e^{\rm cut}_{#1}(#2,#3,#4,#5)}
\newcommand{\rendlow}[1]{e^{\rm low}(#1)}
\newcommand{\rendlowgen}[4]{e^{\rm low}_{#1}(#2,#3,#4)}
\newcommand{\rendhigh}[1]{e^{\rm high}(#1)}
\newcommand{\rendhighgen}[4]{e^{\rm high}_{#1}(#2,#3,#4)}
\newcommand{\LB}{\mathrm{RangeBeg}}
\newcommand{\UB}{\mathrm{RangeEnd}}
\newcommand{\Successor}{\mathrm{succ}}
\newcommand{\Predecessor}{\mathrm{pred}}
\newcommand{\rcountb}[3]{\mathsf{r\mbox{-}count}_{#1}(#2, #3)}
\newcommand{\rcounti}[4]{\mathsf{r\mbox{-}count}^{\rm inc}_{#1}(#2, #3, #4)}
\newcommand{\rcount}[4]{\mathsf{r\mbox{-}count}_{#1}(#2, #3, #4)}
\newcommand{\rselect}[4]{\mathsf{r\mbox{-}select}_{#1}(#2, #3, #4)}
\newcommand{\mcountb}[3]{\mathsf{mod\mbox{-}count}_{#1}(#2,#3)}
\newcommand{\mcount}[4]{\mathsf{mod\mbox{-}count}_{#1}(#2, #3, #4)}
\newcommand{\mselect}[4]{\mathsf{mod\mbox{-}select}_{#1}(#2, #3, #4)}
\newcommand{\T}{T}
\newcommand{\Pat}{P}
\renewcommand{\S}{\mathsf{S}}
\newcommand{\R}{\mathsf{R}}
\newcommand{\SA}{\mathrm{SA}}
\newcommand{\ISA}{\mathrm{ISA}}
\newcommand{\BWT}{\mathrm{BWT}}
\newcommand{\Z}{\mathbb{Z}}
\newcommand{\Lroots}{\mathrm{Roots}}
\newcommand{\Lrootsgen}[3]{\mathrm{Roots}_{#1}(#2,#3)}
\newcommand{\sub}{\subseteq}
\newcommand{\dd}{\mathinner{.\,.}}
\newcommand{\eps}{\varepsilon}
\setlist[enumerate]{nosep, topsep=1ex}
\setlist[itemize]{nosep, topsep=1ex}
\setlist[description]{nosep,topsep=1ex}
\let\OLDthebibliography\thebibliography
  \renewcommand\thebibliography[1]{
  \OLDthebibliography{#1}
  \setlength{\parskip}{0pt}
  \setlength{\itemsep}{0pt plus 0.3ex}
}
\begin{document}

\title{Dynamic Suffix Array with Polylogarithmic Queries and Updates}

\author{
  \normalsize Dominik Kempa\\[-0.2ex]
  \normalsize Stony Brook University\\[-0.2ex]
  \normalsize \texttt{kempa@cs.stonybrook.edu}
  \and
  \normalsize Tomasz Kociumaka\thanks{Supported by
    NSF 1652303, 1909046, and HDR TRIPODS 1934846 grants,
    and an Alfred P. Sloan Fellowship.}\\[-0.2ex]
  \normalsize University of California, Berkeley\\[-0.2ex]
  \normalsize \texttt{kociumaka@berkeley.edu}
}

\date{\vspace{-0.5cm}}
\maketitle

\begin{abstract}
  The suffix array $\SA[1 \dd n]$ of a text $\T$ of length $n$ is a
  permutation of $\{1, \ldots, n\}$ describing the lexicographical
  ordering of suffixes of $\T$, and it is considered to be among of
  the most important data structures in string algorithms, with dozens
  of applications in data compression, bioinformatics, and information
  retrieval.  One of the biggest drawbacks of the suffix array is that
  it is very difficult to maintain under text updates: even a single
  character substitution can completely change the contents of the
  suffix array. Thus, the suffix array of a dynamic text is modelled
  using \emph{suffix array queries}, which return the value $\SA[i]$
  given any $i \in [1 \dd n]$.

  Prior to this work, the fastest dynamic suffix array implementations
  were by Amir and Boneh.  At ISAAC 2020, they showed how to answer
  suffix array queries in $\Ohtilde(k)$ time, where $k\in [1\dd n]$ is
  a trade-off parameter, with $\Ohtilde(\frac{n}{k})$-time text
  updates.  In a very recent preprint [arXiv, 2021], they also
  provided a solution with $\Oh(\log^5 n)$-time queries and
  $\Ohtilde(n^{2/3})$-time updates.

  We propose the first data structure that supports both suffix array
  queries and text updates in $\bigO(\polylog n)$ time (achieving
  $\Oh(\log^4 n)$ and $\Oh(\log^{3+o(1)} n)$ time, respectively). Our
  data structure is deterministic and the running times for all
  operations are worst-case.  In addition to the standard
  single-character edits (character insertions, deletions, and
  substitutions), we support (also in $\Oh(\log^{3+o(1)} n)$ time) the
  ``cut-paste'' operation that moves any (arbitrarily long) substring
  of $\T$ to any place in $\T$. To achieve our result, we develop a
  number of new techniques which are of independent interest.  This
  includes a new flavor of dynamic locally consistent parsing, as well
  as a dynamic construction of string synchronizing sets with an extra
  \emph{local sparsity} property; this significantly generalizes the
  sampling technique introduced at STOC 2019.  We complement our
  structure by a hardness result: unless the Online Matrix-Vector
  Multiplication (OMv) Conjecture fails, no data structure with
  $\bigO(\polylog n)$-time suffix array queries can support the
  ``copy-paste'' operation in $\bigO(n^{1-\eps})$ time for any $\eps >
  0$.
\end{abstract}

\pagenumbering{arabic}

\section{Introduction}

For a text $T$ of length $n$, the suffix array $\SA[1 \dd n]$ stores
the permutation of $\{1, \ldots, n\}$ such that $T[\SA[i]\dd n]$ is
the $i$th lexicographically smallest suffix of~$T$.  The original
application of SA~\cite{mm1993} was to solve the \emph{text indexing}
problem: construct a data structure such that, given a pattern $\Pat[1
\dd m]$ (typically with $m \ll n$), can quickly count (and optionally
list) all occurrences of $\Pat$ in $\T$.  Since the sought set of
positions occupies a contiguous block $\SA[b \dd e)$ (for some $b,e
\in [1 \dd n+1]$) and since, given $j \in [1 \dd n]$, we can in
$\bigO(m)$ time check if the value $\SA[j]$ is before, inside, or
after this block, the indices $b$ and $e$ can be computed in $\bigO(m
\log n)$ time with binary search. If $b < e$, we can then report all
occurrences of $\Pat$ in~$\T$ at the extra cost of $\bigO(e-b)$.  Soon
after the discovery of $\SA$ it was realized that a very large set of
problems on strings is essentially solved (or at least becomes much
easier) once we have a suffix array (often augmented with the
\emph{LCP array}~\cite{mm1993,klaap2001}). The textbook of
Gusfield~\cite{gusfield} lists many such problems including:
\begin{itemize}
\item finding repeats (e.g., \textsc{MaximalRepeats},
  \textsc{LongestRepeatedFactor}, \textsc{TandemRepeats});
\item computing special subwords (e.g., \textsc{MinimalAbsentWord},
  \textsc{ShortestUniqueSubstring});
\item sequence comparisons (e.g., \textsc{LongestCommonSubstring},
  \textsc{MaximalUniqueMatches}); and
\item data compression (e.g., \textsc{LZ77Factorization},
  \textsc{CdawgCompression}).
\end{itemize}
This trend continues in more recent textbooks~\cite{ohl2013,
MBCT2015,navarrobook}, with the latest suffix array representations
(such as \emph{FM-index}~\cite{FerraginaM05}, \emph{compressed suffix
array}~\cite{GrossiV05}, and \emph{$r$-index}~\cite{Gagie2020}) as
central data structures.  There are even textbooks such
as~\cite{bwtbook} solely dedicated to the applications of suffix
arrays and the closely related Burrows--Wheeler transform
(BWT)~\cite{bwt}.

The power of suffix array comes with one caveat: It is very difficult
to maintain it for a text undergoing updates.  For example, for $\T =
\texttt{b}^n$ (symbol $\texttt{b}$ repeated $n$ times) we have
$\SA_{\T} = [n, \dots, 2, 1]$, whereas for $\T' =
\texttt{b}^{n-1}\texttt{c}$ (obtained from $\T$ with a single
substitution), it holds $\SA_{\T'} = [1, 2, \dots, n]$, i.e., the
complete reversal. Even worse, if $n=2m+1$ and $\T'' =
\texttt{b}^{m}\texttt{a}\texttt{b}^{m}$ (again, a single
substitution), then $\SA_{\T''} = [m{+}1, n, m, n-1, m-1, \ldots, m+2,
1]$, i.e., the near-perfect interleaving of two halves of $\SA_{\T}$.
In general, even a single character substitution may permute $\SA$ in
a very complex manner.  Thus, if one wishes to maintain the suffix
array of a dynamic text, its entries cannot be stored in plain form
but must be obtained by querying a data structure. The quest for such
\emph{dynamic suffix array} is open since the birth of suffix array
nearly three decades ago. We thus pose our problem:

\begin{problem}\label{pr:main}
  Can we support efficient $\SA$ queries for a dynamically changing
  text?
\end{problem}

\paragraph*{Previous Work}

One of the first attempts to tackle the above problem is due to
Salson, Lecroq, L{\'e}onard, and Mouchard~\cite{salson}.  Their
dynamic suffix array achieves good practical performance on real-world
texts (including English text and DNA sequences), but its update takes
$\Theta(n)$ in the worst-case. A decade later, Amir and
Boneh~\cite{AmirB20} proposed a structure that, for any parameter $k
\in [1 \dd n]$, supports $\SA$ and $\SA^{-1}$ queries in $\Ohtilde(k)$
time and character substitutions in $\Ohtilde(\frac{n}{k})$ time. This
implies the first nontrivial trade-off for the dynamic suffix array,
e.g., $\Ohtilde(\sqrt{n})$-time operations. Very recently, Amir and
Boneh~\cite{AmirB21} described a dynamic suffix array that supports
updates (insertions and deletions) in the text in $\Ohtilde(n^{2/3})$
time and $\SA$ queries in $\bigO(\log^5 n)$ time. They also gave a
structure that supports substitutions in $\Ohtilde(n^{1/2})$ time and
$\SA^{-1}$ queries in $\bigO(\log^4 n)$ time.

A separate line of research focused on the related problem of
\emph{dynamic text indexing} introduced by Gu, Farach, and,
Beigel~\cite{GuFB94}. This problem aims to design a data structure
that permits updates to the text $\T$ and pattern searches (asking for
all occurrences of a given pattern $\Pat$ in $\T$).  As noted
in~\cite{AmirB20}, the solution in~\cite{GuFB94} achieves
$\Ohtilde(\sqrt{n})$-time updates to text and $\Ohtilde(m\sqrt{n} +
{\rm occ} \log n)$ pattern search query (where ${\rm occ}$ is the
number of occurrences of $\Pat$ in~$\T$).  Sahinalp and
Vishkin~\cite{SahinalpV96} then proposed a solution based on the idea
of \emph{locally consistent parsing} that achieves $\bigO(\log^3
n)$-time update and $\bigO(m + {\rm occ} + \log n)$ pattern searching
time. The update time was later improved by Alstrup, Brodal, and
Rauhe~\cite{Alstrup2000} to $\bigO(\log^2 n \log \log n \log^{*} n)$
at the expense of the slightly slower query $\bigO(m + {\rm occ} +
\log n \log \log n)$.  This last result was achieved by building on
techniques for the \emph{dynamic string equality} problem proposed by
Mehlhorn, Sundar, and Uhrig~\cite{Mehlhorn}.  This was improved
in~\cite{dynstr2} to $\bigO(\log^2 n)$-time update and $\bigO(m + {\rm
occ})$-time search.  A slightly different approach to dynamic text
indexing, based on \emph{dynamic position heaps} was proposed
in~\cite{EhrenfeuchtMOW11}. It achieves $\bigO(m \log n + {\rm occ})$
amortized search, but its update take $\Theta(n)$ time in the worst
case. There is also work on \emph{dynamic compressed text
indexing}. Chan, Hon, Lam, and Sadakane~\cite{ChanHLS07} proposed an
index that uses $\bigO(\tfrac{1}{\eps}(nH_0(\T) + n))$ bits of space
(where $H_0(\T)$ is the zeroth order empirical entropy of $\T$),
searches in $\bigO(m \log^2 n (\log^{\eps} n + \log \sigma) + {\rm
occ} \log^{1 + \eps} n)$ time (where $\sigma$ is the alphabet size),
and performs updates in $\bigO(\sqrt{n} \log^{2+\eps} n)$ amortized
time, where $0 < \eps \leq 1$.  Recently, Nishimoto, I, Inenaga,
Bannai, and Takeda~\cite{NishimotoDAM} proposed a faster index for a
text $\T$ with LZ77 factorization of size $z$. Assuming for simplicity
$z \log n \log^{*} n = \bigO(n)$, their index occupies $\bigO(z \log n
\log^{*} n)$ space, performs updates in amortized $\bigO((\log n
\log^{*} n)^2 \log z)$ time (in addition to edits, they also support
the ``cut-paste'' operation that moves a substring of $\T$ from one
place to another), and searches in $\bigO\left(m\min\left\{\tfrac{\log
\log n \log \log z}{\log \log \log n}, \sqrt{\tfrac{\log z}{\log \log
z}}\right\} + \log z \log m \log^{*}n (\log n + \log m \log^{*} n) +
{\rm occ} \log n\right)$ time.

We point out that although the (compressed) dynamic text indexing
problem~\cite{GuFB94,NishimotoDAM} discussed above is related to
dynamic suffix array, it is not the same.  Assuming one accepts
additional log factors, the dynamic suffix array problem is
\textbf{\emph{strictly harder}}: it solves dynamic indexing (by simply
adding binary search on top), but no converse reduction is known; such
a reduction would compute values of $\SA$ in $\bigO(\polylog n)$ time
using searches for short patterns.  Thus, the many applications of
$\SA$ listed above cannot be solved with these indexes. Unfortunately,
due to lack of techniques, the dynamic suffix array problem has seen
very little progress; as noted by Amir and Boneh~\cite{AmirB20},
``(\ldots) although a dynamic suffix array algorithm would be
extremely useful to automatically adapt many static pattern matching
algorithms to a dynamic setting, other techniques had to be
sought''. They remark, however, that ``Throughout all this time, an
algorithm for maintaining the suffix tree or suffix array of a
dynamically changing text had been sought''. To sum up, until now, the
best dynamic suffix arrays have been those of~\cite{AmirB20}, taking
$\Ohtilde(k)$ time to answer $\SA$ and $\SA^{-1}$ queries and
$\Ohtilde(\frac{n}{k})$ time for substitutions, and~\cite{AmirB21},
taking $\Ohtilde(n^{2/3})$ time for insertions/deletions and
$\bigO(\log^5 n)$ time for $\SA$ queries, or $\Ohtilde(n^{1/2})$ time
for substitutions and $\bigO(\log^4 n)$ time for $\SA^{-1}$
queries. No solution with $\polylog n$-time queries and updates (even
amortized or expected) was known, not even for character substitutions
only.

\paragraph{Our Results}

We propose the first dynamic suffix array with all operations (queries
\textbf{\textit{and}} updates) taking only $\bigO(\polylog n)$
time. Our data structure is \ul{deterministic} and the complexities of
both queries and updates are \ul{worst-case}. Thus, we leap directly
to a solution satisfying the commonly desired properties on the query
and update complexity for this almost thirty-years old open
problem. In addition to single-character edits, our structure supports
the powerful ``cut-paste'' operation, matching the functionality of
state-of-the-art indexes~\cite{Alstrup2000,NishimotoDAM}.  More
precisely, our structure supports the following operations (the bounds
below are simplified overestimates; see \cref{thm:saisa}):
\begin{itemize}
\item We support $\SA$ queries (given $i \in [1 \dd n]$, return
  $\SA[i]$) in $\bigO(\log^4 N)$ time.
\item We support $\SA^{-1}$ queries (given $j \in [1 \dd n]$, return
  $\SA^{-1}[j]$) in $\bigO(\log^5 N)$ time.
\item We support updates (insertion and deletion of a single symbol in
  $\T$ as well as the ``cut-paste'' operation, moving any block of
  $\T$ to any other place in $\T$) in $\bigO(\log^3 N (\log \log
  N)^2)$ time.
\end{itemize}
Here, $N = n\sigma$ is the product of the current text length and the
size of the alphabet $\Sigma = [0 \dd \sigma)$.

The above result may leave a sense of incompleteness regarding further
updates such as ``copy-paste''. We show that, despite its similarity
with ``cut-paste'', supporting this operation in the dynamic setting
is most likely very costly. More precisely, we prove that, unless the
Online Matrix-Vector Multiplication (OMv)
Conjecture~\cite{DBLP:conf/stoc/HenzingerKNS15} fails, no data
structure that supports $\SA$ or $\SA^{-1}$ queries in $\bigO(\polylog
n)$ time can support the ``copy-paste'' operation in
$\bigO(n^{1-\eps})$ time for any $\eps > 0$.\footnote{In fact, we
prove a more general result that the \emph{product} of time needed to
support $\SA$/$\SA^{-1}$ queries and copy-paste operation cannot be
within $\bigO(n^{1-\eps})$ for any $\eps > 0$. Thus, the trade-off
similar to the one achieved by Amir and Boneh~\cite{AmirB20}
($\Ohtilde(\tfrac{n}{k})$-time query and $\Ohtilde(k)$-time update)
may still be possible for a dynamic suffix array with copy-pastes; we
leave proving such upper bound as a possible direction for future
work.}

\paragraph{Technical Contributions}

To achieve our result, we develop new techniques and significantly
generalize several existing ones.  Our first novel technique is the
notion of \emph{locally sparse} synchronizing sets. String
synchronizing sets~\cite{sss} have recently been introduced in the
context of efficient construction of $\BWT$ and $\LCE$ queries for
``packed strings'' (where a single machine word stores multiple
characters). Since then, they have found many other
applications~\cite{circfactor,Charalampopoulos21,resolution,sss-index,quantum-sss}.
Loosely speaking (a formal definition follows in \cref{sec:prelim}),
for any fixed $\tau \geq 1$, a set $\S \sub [1 \dd n]$ is a
$\tau$-synchronizing set of string $\T \in \Sigma^{n}$ if $\S$ samples
positions of $\T$ \emph{consistently} (i.e., whether $i \in [1 \dd n]$
is sampled depends only on the length-$\Theta(\tau)$ context of $i$ in
$\T$) and does not sample positions inside highly periodic fragments
(the so-called \emph{density} condition). In all prior applications
utilizing synchronizing sets, the goal is to ensure that $\S$ is
\emph{sparse on average}, i.e., that the size $|\S|$ is minimized.  In
this paper, we prove that at the cost of increasing the size of $|\S|$
by a mere factor $\bigO(\log^{*}(\tau\sigma))$, we can additionally
ensure that $\S$ is also \emph{locally sparse} (see
\cref{lem:sss}). We then show that such $\S$ can be maintained
dynamically using a new construction of $\S$ from the signature
parsing (a technique introduced in~\cite{Mehlhorn} and used, for
example, in~\cite{Alstrup2000,NishimotoDAM}).  The crucial benefit of
local sparsity is that any auxiliary structure that depends on the
length-$\Theta(\tau)$ contexts of positions in $\S$, including $\S$
itself, can be updated efficiently.  Another result of independent
interest is the first dynamic construction of string synchronizing
sets.  For this, we internally represent some substrings of $\T$ using
the abstraction of \emph{dynamic
strings}~\cite{Alstrup2000,SahinalpV94,Mehlhorn,dynstr}.  The problem
with all existing variants of dynamic strings, however, is that they
rely on representing the strings using a hierarchical representation
whose only goal is to ensure the string shrinks by a constant factor
at each level. This, however, is not sufficient for our purpose: in
order to satisfy the density condition for the resulting synchronizing
set, we also need the expansions of all symbols at any given level to
have some common upper bound on the expansion length. The notion of
such ``balanced'' parsing is known~\cite{SahinalpV94,BirenzwigeGP20},
but has only been implemented in static settings. We show the first
variant of dynamic strings that maintains a ``balanced'' parsing at
every level, and consequently lets us dynamically maintain a locally
sparse string synchronizing set.

The mainstay among data structures for pattern matching or $\SA$
queries is the use of (typically 2D) orthogonal range
searching~\cite{Chazelle1988}.  Example indexes include nearly all
indexes based on LZ77~\cite{ArroyueloNS12, DCC2015, BilleEGV18,
  ChristiansenEKN21, GagieGKNP14, phdjuha, kreft2013compressing,
  NishimotoDAM}, context-free grammars~\cite{BLRSRW15, ClaudeN11,
  ClaudeNP21, GagieGKNP12, MaruyamaNKS13, TakabatakeTS14, LyndonSLP},
and some recent BWT-based
indexes~\cite{sss-index,MunroNN20b,ChienHSTV15}.  Nearly all these
structures maintain a set of points corresponding to some set of
suffixes of $\T$ (identified with their starting positions $\mathsf{P}
\sub [1 \dd n]$) ordered lexicographically. The problem with adapting
this to the dynamic setting is that once we modify $\T$ near the end,
the order of (potentially all) elements in $\mathsf{P}$ changes.  We
overcome this obstacle as follows. Rather than on sampling of
suffixes, we rely on $\log n$ levels of sampling of \emph{substrings}
(identified by the set $\S_k$ of the starting positions of their
occurrences) of length roughly $2^k$, where $k \in [0 \dd \lfloor \log
n \rfloor)$ implemented using dynamic locally sparse synchronizing
sets.  With such structure, we can efficiently update the sets $\S_k$
and the associated geometric structures, but we lose the ability to
compare suffixes among each other. In \cref{sec:sa}, we show that even
with such partial sampling, we can still implement $\SA$ queries.  In
$\log n$ steps, our query algorithm successively narrows the range of
$\SA$ to contain only suffixes prefixed with $\T[\SA[i] \dd \SA[i] +
\ell)$, while also maintaining the starting position of some arbitrary
suffix in the range, where $\ell = 2^k$ is the current prefix
length. In the technical overview, we sketch the main ideas of this
reduction, but we remark that the details of this approach are
nontrivial and require multiple technical results (see, e.g.,
Section~\ref{sec:sa-periodic-pos-decomposition}--\ref{sec:sa-periodic-occ-pos}).

Finally, we remark that, even though (as noted earlier), dynamic
suffix array is a strictly harder problem that text indexing (since
one can be reduced to the other, but not the other way around), our
result has implications even for the text indexing problem.  The
existing dynamic text indexes with fast queries and updates (such
as~\cite{dynstr2,NishimotoDAM}) can only \emph{list} all $k$
occurrences of any pattern.  One however, cannot obtain the number of
occurrences (which is the standard operation supported by most of the
static indexes~\cite{NavarroM07}) in time that is always bounded by
$\bigO(m \polylog n)$ (since $k$ can be arbitrarily
large).\footnote{Efficient counting for indexes relying on orthogonal
range searching is a technically challenging problem that has been
solved only recently for the static case~\cite{ChristiansenEKN21}. It
is possible that these ideas can be combined with~\cite{NishimotoDAM},
but the result will nevertheless be significantly more complicated
than counting using the suffix array.}  Our dynamic suffix array, on
the other hand, implements counting seamlessly: it suffices to perform
the standard binary search~\cite{mm1993} over $\SA$ for the pattern
$\Pat$, resulting in the endpoints of range $\SA[b \dd e)$ of suffixes
of $\T$ prefixed with $\Pat$, and return $e - b$.

\paragraph{Related Work}

Chan, Hon, Lam, and Sadakane~\cite{ChanHLS07} introduced a problem of
\emph{indexing text collections}, which asks to store a dynamically
changing collection $\mathcal{C}$ of strings (under insertions and
deletions of \emph{entire strings}) so that pattern matching queries
on $\mathcal{C}$ can be answered efficiently.  Although the resulting
data structures are also called \emph{dynamic full-text
indexes}~\cite{MakinenN08} or \emph{dynamic suffix
trees}~\cite{RussoNO08}, we remark that they are solving a different
problem that what, by analogy to dynamic suffix array, we would mean
by ``dynamic suffix tree''. Observe that we cannot simulate the
insertion of a symbol in the middle of $\T$ using a collection of
strings. Maintaining symbols of $\T$ as a collection of length-1
strings will not work because the algorithms
in~\cite{ChanHLS07,MakinenN08,RussoNO08} only report occurrences
entirely inside one of the strings. Since the insertion of a string
$S$ of length $m$ into $\mathcal{C}$ takes $\Omega(m)$ time
in~\cite{ChanHLS07,MakinenN08,RussoNO08} (similarly for deletion), one
also cannot efficiently use $\mathcal{C}$ to represent the entire text
$\T$ as a single element of $\mathcal{C}$.

Tangentially related to the problem of text indexing is the problem of
\emph{sequence representation}~\cite{NavarroN14}, where we store a
string $S[1 \dd n]$ under character insertions and deletions and
support the access to $S$, ${\rm rank}(i,c)$ (returning $|\{j \in [1
\dd i] : S[j] = c\}|$) and ${\rm select}(i,c)$ (returning the $i$th
occurrence of $c$ in $S$). A long line of research,
including~\cite{MakinenN08, GonzalezN09, HeM10, NavarroS14},
culminated with the work of Navarro and Nekrich, who achieved
$\bigO(\log n/\log \log n)$ time for all operations (amortized for
updates), while using space close to $nH_0(S)$, where $H_0$ is the
zeroth order~empirical~entropy.

Finally, there is a line of research focusing on storing a text $\T$
under updates so that we can support efficient \emph{longest common
extension} queries $\LCE_{\T}(i,j)$ that return the length of the
longest common prefix of $\T[i \dd |\T|]$ and $\T[j \dd |\T|]$.  The
research was initiated with the seminal work
in~\cite{Mehlhorn,Alstrup2000} (recently improved
in~\cite{dynstr}). More recently, a solution working in the compressed
space (achieving similar runtimes as the index in~\cite{NishimotoDAM})
was proposed in~\cite{NishimotoMFCS}.

\section{Preliminaries}\label{sec:prelim}

A \emph{string} is a finite sequence of characters from some set
$\Sigma$ called the \emph{alphabet}. We denote the length of a string
$S$ as $|S|$. For any index $i\in [1\dd |S|]$,\footnote{ For $i,j\in
\mathbb{Z}$, denote $[i\dd j]=\{k \mkern 1.5mu {\in} \mkern 1.5mu
\mathbb{Z} : i \le k \le j\}$, $[i\dd j)=\{k \mkern 1.5mu {\in} \mkern
1.5mu \mathbb{Z} : i \le k < j\}$, and $(i\dd j]={\{k \mkern 1.5mu
{\in} \mkern 1.5mu \mathbb{Z}: i < k \le j\}}$.  } we denote the $i$th
character of $S$ as $S[i]$.  A string of the form $S[i \dd
j)=S[i]S[i+1]\dots S[j-1]$, where $i \leq i \leq j \leq |S|+1$ is
called a \emph{fragment} of $S$.  If $S$ is known, we will encode
$S[i\dd j)$ in $\bigO(1)$ space as a pair $(i,j)$. \emph{Prefixes} and
\emph{suffixes} are of the form $S[1\dd j)$ and $S[i\dd |S|]$,
respectively. By $\revstr{S}$ we denote the \emph{reverse} of $S$,
i.e., $\revstr{S} = S[|S|]\dots S[2][1]$.  The \emph{concatenation} of
two strings $U$ and $V$ is denoted $UV$ or $U\cdot V$.  Moreover,
$S^k=\bigodot_{i=1}^k S$ is the concatenation of $k$ copies of $S$;
note that $S^0=\eps$ is the \emph{empty string}. An integer $p\in
[1\dd |S|]$ is a called a \emph{period} of $S$ if $S[1\dd
|S|-p]=S[1+p\dd |S|]$; we denote the shortest period of $S$
as~$\per(S)$.  We use $\preceq$ to denote the order on $\Sigma$,
extended to the \emph{lexicographic} order on $\Sigma^*$ (the set of
strings over $\Sigma$) so that $U,V\in \Sigma^*$ satisfy $U\preceq V$
if and only if either $U$ is a prefix of $V$, or $U[1\dd i)=V[1\dd i)$
and $U[i]\prec V[i]$ holds for some $i\in [1\dd \min(|U|,|V|)]$.

\begin{wrapfigure}{r}{0.34\textwidth}
  \vspace{-.75cm}
  \begin{tikzpicture}[yscale=0.39]
    \foreach \x [count=\i] in {a, aababa, aababababaababa, aba,
      abaababa, abaababababaababa, ababa, ababaababa, abababaababa,
      ababababaababa, ba, baababa, baababababaababa, baba, babaababa,
      babaababababaababa, bababaababa, babababaababa,bbabaababababaababa}
        \draw (1.9, -\i) node[right]
          {$\texttt{\x}$};
    \draw(1.9,0) node[right] {\scriptsize $\T[\SA[i]\dd n]$};
    \foreach \x [count=\i] in {b, b, b, b, b, b, a, b, b,
                               a, a, a, a, a, a, b, a, a,\$}
      \draw (0.7, -\i) node {\footnotesize $\i$};
    \draw(0.7,0) node{\scriptsize $i\vphantom{\BWT[]}$};
    \foreach \x [count=\i] in {19,14,5,17,12,3,15,10,
                               8,6,18,13,4,16,11,2,9,7,1}
      \draw (1.4, -\i) node {$\x\vphantom{\textbf{\underline{7}}}$};
    \draw(1.4,0) node{\scriptsize $\SA[i]$};
  \end{tikzpicture}
  \caption{A list of all sorted suffixes of $\T=
    \texttt{bbabaababababaababa}$ along with
    the suffix array.}\label{fig:example}
  \vspace{-.5cm}
\end{wrapfigure}

Throughout, we consider a string (called the \emph{text}) $\T$ of
length $n\geq 1$ over an integer alphabet $\Sigma = [0 \dd
\sigma)$. We assume that $\T[n] = \texttt{\$}$ (where $\$ =
\min\Sigma$) and $\texttt{\$}$ does not occur in $\T[1 \dd n)$.

The \emph{suffix array} $\SA[1\dd n]$ of $\T$ is a permutation of
$[1\dd n]$ such that $\T[\SA[1]\dd n] \prec \T[\SA[2]\dd n] \prec
\cdots \prec \T[\SA[n]\dd n]$, i.e., $\SA[i]$ is the starting position
of the lexicographically $i$th suffix of $\T$; see \cref{fig:example}
for an example.  The \emph{inverse suffix array} $\ISA[1 \dd n]$ is
the inverse permutation of $\SA$, i.e., $\ISA[j] = i$ holds if and
only if $\SA[i] = j$. By $\lcp(U,V)$ (resp.\ $\lcs(U,V)$) we denote
the length of the longest common prefix (resp.\ suffix) of $U$ and
$V$. For $j_1, j_2 \in [1 \dd n]$, we let $\LCE_{\T}(j_1,j_2) =
\lcp(\T[j_1 \dd ], \T[j_2 \dd ])$.

The \emph{rotation} operation $\rot(\cdot)$, given a string $S\in
\Sigma^+$, moves the last character of $S$ to the front so that
$\rot(S)=S[|S|]\cdot S[1\dd |S|-1]$.  The inverse operation
$\rot^{-1}(\cdot)$ moves the first character of $S$ to the back so
that $\rot^{-1}(S)=S[2\dd |S|]\cdot S[1]$.  For an integer $s\in \Z$,
the operation $\rot^s(\cdot)$ denotes the $|s|$-time composition of
$\rot(\cdot)$ (if $s\ge 0$) or $\rot^{-1}(\cdot)$ (if $s\le 0$).
Strings $S,S'$ are \emph{cyclically equivalent} if $S'=\rot^{s}(S)$
for some $s\in \Z$.

We use the word RAM model of computation~\cite{Hagerup98} with $w$-bit
\emph{machine words}, where $w \ge \log n$.

\begin{definition}[{\fontfamily{lmss}\selectfont $\tau$-synchronizing
      set~\cite{sss}}{\fontfamily{lmr}\selectfont}]\label{def:sss} Let
  $\T\in \Sigma^n$ be a string and let $\tau \in [1\dd
  \lfloor\frac{n}{2}\rfloor]$ be a parameter. A set $\S \subseteq [1
  \dd n - 2\tau + 1]$ is called a \emph{$\tau$-synchronizing set} of
  $\T$ if it satisfies the following \emph{consistency} and
  \emph{density} conditions:
  \begin{enumerate}
  \item If $\T[i \dd i + 2\tau) = \T[j\dd j + 2\tau)$, then $i \in \S$
    holds if and only if $j \in \S$ (for $i, j \in [1 \dd n - 2\tau +
    1]$),
  \item $\S\cap[i \dd i + \tau)=\emptyset$ if and only if $i \in
    \R(\tau,\T)$ (for $i \in [1 \dd n - 3\tau + 2]$), where
  \end{enumerate}
  \[
    \R(\tau,\T) := \{i \in [1 \dd |\T| - 3\tau + 2] : \per(\T[i \dd i
      + 3\tau - 2]) \leq \tfrac{1}{3}\tau\}.
  \]
\end{definition}

In most applications, we want to minimize $|\S|$. The density
condition imposes a lower bound $|\S| = \Omega(\frac{n}{\tau})$ for
strings of length $n \ge 3\tau - 1$ that do not contain highly
periodic substrings of length $3\tau-1$. Hence, in the worst case, we
cannot hope to improve upon the following bound.

\begin{theorem}[{\cite[Proposition~8.10]
      {sss}}]\label{th:sss-existence-and-construction}
  For every $\T\in \Sigma^n$ and $\tau \in [1\dd
  \lfloor\frac{n}{2}\rfloor]$, there exists a $\tau$-synchronizing set
  $\S$ of size $|\S| = \bigO \left( \frac{n}{\tau} \right)$. Such $\S$
  can be deterministically constructed in $\bigO(n)$ time.
\end{theorem}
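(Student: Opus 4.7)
The plan is to construct $\S$ via a local minimizer rule and then verify the two conditions of \cref{def:sss}, the size bound, and the running time. Concretely, for each $i \in [1 \dd n-3\tau+2] \setminus \R(\tau, \T)$, let $q(i) \in [i \dd i+\tau)$ be the leftmost starting position at which the lexicographically smallest length-$2\tau$ substring of $\T[i \dd i+3\tau-1]$ occurs, and set $\S := \{q(i) : i \in [1 \dd n-3\tau+2] \setminus \R(\tau, \T)\}$.

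The density condition absorbs most of the work. The ``if'' direction ($i \notin \R(\tau, \T) \Rightarrow \S \cap [i \dd i+\tau) \neq \emptyset$) is immediate from the construction, since $q(i) \in \S \cap [i \dd i+\tau)$. For the ``only if'' direction ($i \in \R(\tau, \T) \Rightarrow \S \cap [i \dd i+\tau) = \emptyset$), I would show that no $k \in [i-\tau+1 \dd i] \setminus \R(\tau, \T)$ can satisfy $q(k) \in [i \dd i+\tau)$. The key structural fact is that $\R(\tau, \T)$ decomposes into a disjoint union of maximal highly-periodic runs of $\T$; inside such a run of period $p \le \tau/3$, the lex-smallest length-$2\tau$ substring is $p$-periodic, its occurrences form an arithmetic progression with common difference $p$, and $q(k)$ for each non-periodic $k$ nearby is pinned to the run's left boundary and does not drift into the periodic interior. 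I expect this boundary argument to be the main technical obstacle; it is driven by the Fine--Wilf periodicity lemma together with a careful case analysis on how a non-periodic $k$ can sit next to a highly-periodic run.

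Consistency demands that $\T[i \dd i+2\tau) = \T[j \dd j+2\tau)$ implies $i \in \S \iff j \in \S$. Since $j \in \S$ iff some $k \in [j-\tau+1 \dd j] \setminus \R(\tau, \T)$ has $q(k) = j$, membership nominally depends on a $4\tau$-length context. However, a Fine--Wilf argument shows that whenever the $2\tau$-windows at $i$ and $j$ agree, the witnessing ranges of $k$ together with their $\R(\tau, \T)$-membership pattern must also agree: the only way for the surrounding contexts to diverge while the $2\tau$-windows coincide is for additional high periodicity to be forced, which is precisely the situation in which both $i$ and $j$ fall in parallel inside (or outside) the same type of run, giving matching memberships.

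For the size bound $|\S| = O(n/\tau)$, I would argue that within any length-$\tau$ interval of non-periodic positions $k$, the values $q(k)$ collapse to a single common value (all such $k$'s share the same window minimizer and hence the same leftmost occurrence). Across the $O(n/\tau)$ such intervals this yields $O(n/\tau)$ distinct values, and periodic intervals contribute nothing. For the $O(n)$-time construction, I would precompute the suffix array, LCP array, and an RMQ structure over $\T$ in linear time; each $q$-evaluation and each membership test in $\R(\tau, \T)$ then reduces to $O(1)$ LCE and range-minimum queries, so iterating over all $O(n)$ candidate positions yields total time $O(n)$.
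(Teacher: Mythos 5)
Your construction has two gaps that I believe are fatal, not merely technical obstacles to be smoothed over. (For context: the paper imports \cref{th:sss-existence-and-construction} from~\cite{sss} as a black box, so the relevant comparison is with the construction there, which is designed precisely to avoid the problems below.)

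First, \emph{consistency fails}. The condition in \cref{def:sss} requires that membership of $i$ in $\S$ be a function of $\T[i \dd i+2\tau)$ alone. Under your rule, $i \in \S$ iff $q(k)=i$ for \emph{some} nonperiodic $k \in [i-\tau+1 \dd i]$, and each $q(k)$ is determined by comparing length-$2\tau$ substrings starting at positions in $[k \dd k+\tau)$ — these reach up to position $k+3\tau-2$. So membership of $i$ depends on roughly $\T[i-\tau+1 \dd i+3\tau-2]$, i.e., on $\tau-1$ characters to the left of $i$ and $\tau-2$ characters beyond the permitted right context. Your claim that a divergence in these extra characters ``forces high periodicity'' is not true: take two occurrences of the same non-periodic $W=\T[i\dd i+2\tau)=\T[j\dd j+2\tau)$ and choose $\T[i+2\tau \dd i+3\tau)$ to be lexicographically tiny but $\T[j+2\tau\dd j+3\tau)$ huge; then a position $p\in(i\dd i+\tau)$ whose window dips into the tiny region steals the minimum at the first occurrence but not at the second, and no periodicity is forced anywhere. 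The construction in~\cite{sss} avoids this by comparing \emph{length-$\tau$} identifiers over a window of $\tau+1$ positions and selecting $i$ iff the minimum over $[i\dd i+\tau]$ is attained at an endpoint — a predicate on $i$ itself whose total context is exactly $\T[i\dd i+2\tau)$.

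Second, the \emph{size bound fails for lexicographic order}. Your claim that within a length-$\tau$ block of nonperiodic windows all values $q(k)$ coincide is false; the minimizer of a sliding window changes whenever a smaller element enters, and with lexicographic comparison an adversary can arrange for the newly entering position to beat the current minimum at (almost) every step, so that $\Theta(n)$ distinct positions are selected even in non-periodic regions. The $\Oh(n/\tau)$ density of minimizer schemes holds in expectation for a \emph{random} order, not in the worst case for the lexicographic one; this is exactly why the deterministic construction of~\cite[Proposition~8.10]{sss} must build a bespoke identifier function (derandomizing the random-order scheme) rather than use lexicographic comparison. Your linear-time machinery (suffix array, LCE/RMQ, sliding-window minima, run detection for $\R(\tau,\T)$) is fine, but it is implementing a selection rule that is neither consistent nor sparse.
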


\section{Technical Overview}\label{sec:tech}

We derive our dynamic suffix array gradually. We start
(\cref{sec:range-queries,sec:mod-queries}), by introducing the
auxiliary tools. In the first part of the paper (\cref{sec:sa}) we
prove that if we have $\Theta(\log n)$ synchronizing sets for values
of $\tau$ spanning across the whole range $[1 \dd n]$, and we can
support some set of queries (stated as ``assumptions'') concerning
either positions in those synchronizing sets, gaps across successive
elements, or their length-$\Theta(\tau)$ contexts in $\T$, then we can
support $\SA$ queries on $\T$.  In the second part of the paper
(\cref{sec:parsing,sec:ds,sec:bsp2sss,sec:app}) we then show how to
satisfy these assumptions for text supporting update operations. This
approach lets us separate the clean (combinatorial) details concerning
$\SA$ queries from (more algorithmic and technical) details concerning
the maintenance of the underlying synchronizing sets and the
associated structures.  This also lets us for now treat $\T$ as well
as each synchronizing sets as static, since the reduction works for
\emph{any} collection of such sets, and thus if after the update these
sets (and the associated structures) change, the query algorithm is
unaffected.

We start with an overview of \cref{sec:sa}. Let $\ell \geq 1$. For any
$j \in [1 \dd n]$, we define
\begin{align*}
  \Occ_{\ell}(j)
    &= \{j' \in [1 \dd n] : \T^{\infty}[j' \dd j' + \ell) = 
       T^{\infty}[j \dd j + \ell)\},\\
  \LB_{\ell}(j)
    &= |\{j' \in [1 \dd n] : \T[j' \dd n] \prec \T[j \dd n]
       \text{ and }\LCE_{\T}(j, j) < \ell\}|, \text{ and }\\
  \UB_{\ell}(j)
    &= \LB_{\ell}(j) + |\Occ_{\ell}(j)|.
\end{align*}
For the motivation of the above names, note that viewing $P :=
\T^{\infty}[j \dd j + \ell)$ as a pattern, we have $\{\SA[i] : i \in
(\LB_{\ell}(j) \dd \UB_{\ell}(j)]\} = \{i \in [1 \dd n] :
\T^{\infty}[i \dd i + |P|) = P\}$.

Moreover, for any $j \in [1 \dd n]$, we define
\begin{align*}
  \Posalll(j) &=
    \{j' \in [1 \dd n] : \T[j' \dd n] \prec \T[j \dd n]
      \text{ and }\LCE_{\T}(j',j) \in [\ell \dd 2\ell)\}\text{ and }\\
  \Posallg(j) &=
    \{j' \in [1 \dd n] : \T[j' \dd n] \succ \T[j \dd n]
      \text{ and }\LCE_{\T}(j',j) \in [\ell \dd 2\ell)\}.
\end{align*}
We denote the size of $\Posalll(j)$ and $\Posallg(j)$ as
$\deltal_{\ell}(j) = |\Posalll(j)|$ and $\deltag_{\ell}(j) =
|\Posallg(j)|$. Note that by definition, $(\LB_{2\ell}(j),
\UB_{2\ell}(j)) = (\LB_{\ell}(j) + \deltal_{\ell}(j), \LB_{\ell}(j) +
\deltal_{\ell}(j) + |\Occ_{2\ell}(j)|)$ and $(\LB_{2\ell}(j),
\UB_{2\ell}(j)) = (\UB_{\ell}(j) - \deltag_{\ell}(j) -
|\Occ_{2\ell}(j)|, \UB_{\ell}(j) - \deltag_{\ell}(j))$.

The main idea of our algorithm is as follows.  Suppose that we have
obtained some $j \in \Occ_{16}(\SA[i])$ and $(\LB_{16}(\SA[i]),
\UB_{16}(\SA[i]))$ (\cref{as:small}). Then, for $q = 4, \ldots, \lceil
\log n \rceil - 1$, denoting $\ell = 2^q$, we compute
$(\LB_{2\ell}(\SA[i]), \UB_{2\ell}(\SA[i]))$ and some $j' \in
\Occ_{2\ell}(\SA[i])$, by using as input some position $j \in
\Occ_{\ell}(\SA[i])$ and the pair $(\LB_{\ell}(\SA[i]),
\UB_{\ell}(\SA[i]))$, i.e., the output of the earlier step. This lets
us compute $\SA[i]$, since eventually we obtain some $j' \in
\Occ_{2^{\lceil \log n \rceil}}(\SA[i])$, and for any $k \geq n$,
$\Occ_{k}(\SA[i]) = \{\SA[i]\}$, i.e., we must have $j' = \SA[i]$.

Our goal is to show how to implement a single step of the above
process.  Let $\ell \in [16 \dd n)$ and $i \in [1 \dd n]$. Suppose
that we are given $(\LB_{\ell}(\SA[i]), \UB_{\ell}(\SA[i]))$ and some
$j \in \Occ_{\ell}(\SA[i])$ as input.  We aim to show that under
specific assumptions about the ability to perform some queries, we can
compute some $j' \in \Occ_{2\ell}(\SA[i])$ and the pair
$(\LB_{2\ell}(\SA[i]), \UB_{2\ell}(\SA[i]))$.

Let $\tau := \lfloor \tfrac{\ell}{3} \rfloor$. Our algorithm works
differently, depending on whether it holds $\SA[i] \in \R(\tau,\T)$ or
$\SA[i] \in [1 \dd n] \setminus \R(\tau,\T$). Thus, we first need to
efficiently implement this check.  Observe that whether or not it
holds $j \in \R(\tau,\T)$ depends only on $\T[j \dd j + 3\tau -
1)$. Therefore, by $3\tau - 1 \leq \ell$, if $j \in
\Occ_{\ell}(\SA[i])$ then $\SA[i] \in \R(\tau,\T)$ holds if and only
if $j \in \R(\tau,\T)$. Consequently, given any $j \in
\Occ_{\ell}(\SA[i])$, using \cref{as:core} we can efficiently check if
$\SA[i] \in \R(\tau,\T)$ (such $\SA[i]$ is called \emph{periodic}) or
$\SA[i] \in [1 \dd n] \sm \R(\tau,\T)$ (i.e., the position $\SA[i]$ is
\emph{nonperiodic}).

\paragraph{The Nonperiodic Positions}\label{sec:to-sa-nonperiodic}

Assume $\SA[i] \in [1 \dd n] \setminus \R(\tau,\T)$. We proceed in two
steps. First, we show how to compute $|\Posalll(\SA[i])|$ and
$|\Occ_{2\ell}(\SA[i])|$ assuming we have some $j' \in
\Occ_{2\ell}(\SA[i])$.  By the earlier observation about
$\deltal(\SA[i]) = |\Posalll(\SA[i])|$ and the definition of
$\Occ_{2\ell}(\SA[i])$, this gives us the pair $(\LB_{2\ell}(\SA[i]),
\UB_{2\ell}(\SA[i]))$. We then explain how to find $j'$.

Let $j' \in \Occ_{2\ell}(\SA[i])$.  Observe that by $\T^{\infty}[j'
\dd j' + 2\ell) = \T[\SA[i] \dd \SA[i] + 2\ell)$, we have
$|\Posalll(\SA[i])| = |\Posalll(j')|$, $|\Occ_{2\ell}(\SA[i])| =
|\Occ_{2\ell}(j')|$, and $j' \not \in \R(\tau,\T)$. We can thus focus
on computing $|\Posalll(j')|$ and $|\Occ_{2\ell}(j')|$.  Let $\S$ be
any $\tau$-synchronizing set of $\T$.  First, observe that by $j'
\not\in \R(\tau,\T)$, the position $s' = \Successor_{\S}(j')$
satisfies $s' - j' < \tau$. Thus, by the consistency of $\S$ and
$3\tau \leq \ell$, all $j'' \in \Posalll(j')$ share a common offset
$\delta_{\S} = s' - j'$ such that $j'' + \delta_{\S} = \min(\S \cap
[j'' \dd j''+\tau))$ and hence the relative lexicographical order
between $\T[j'' \dd n]$ and $\T[j' \dd n]$ is the same as between
$\T[j''+\delta_{\S} \dd n]$ and $\T[j' + \delta_{\S} \dd n]$.  Hence,
to compute $|\Posalll(j')|$ it suffices to count $s'' \in \S$ that
are:
\begin{enumerate}
\item\label{to-sa-nonperiodic-condition-1} Preceded in $\T$ by the
  string $\T[j' \dd s')$, and
\item\label{to-sa-nonperiodic-condition-2} For which it holds $\T[s''
  \dd n] \prec \T[s' \dd n]$ and $\LCE_{\T}(s'', s') \in [\ell -
  \delta_{\S} \dd 2\ell - \delta_{\S})$.
\end{enumerate}
Observe, that for any $q \geq 2\ell$,
Condition~\ref{to-sa-nonperiodic-condition-1} is equivalent to
position $s''$ having a reversed left length-$q$ context in the
lexicographical range $[X \dd X')$, where $\revstr{X} = \T[j' \dd s')$
and $X' = Xc^{\infty}$ (where $c = \max\Sigma$), and
Condition~\ref{to-sa-nonperiodic-condition-2} is equivalent to
position $s''$ having a right length-$q$ context in $[Y \dd Y')$,
where $Y = \T^{\infty}[s' \dd j' + \ell)$ and $Y' = \T^{\infty}[s' \dd
j' + 2\ell)$ (\cref{lm:utils}). Consequently, the only queries needed
to compute $|\Posalll(j')|$ are $\Successor_{\S}(j')$ and generalized
range queries on a set of points $\Pts = \{(\revstr{\T^{\infty}[s'-q
\dd s')}, \T^{\infty}[s' \dd s'+q)) : s' \in \S\}$.  Since $\tau
\,{=}\, \lfloor \tfrac{\ell}{3} \rfloor$ and $\ell \geq 16$, it
suffices to choose $q = 7\tau$ to satisfy $q \geq 2\ell$. Thus, under
\cref{as:nonperiodic}, we can efficiently compute $|\Posalll(j')| =
|\Posalll(\SA[i])|$.

The intuition for $|\Occ_{2\ell}(j')|$ is similar, except we observe
that Condition~\ref{to-sa-nonperiodic-condition-2} is that $s''$
satisfies $\T^{\infty}[s'' \dd j'' + 2\ell) = \T^{\infty}[s' \dd j' +
2\ell)$, which is equivalent to $s''$ having a right length-$q$
context in $[Y' \dd Y'c^{\infty})$. Thus, we can also count such $s''$
(and consequently, compute $|\Occ_{2\ell}(\SA[i])|$) using $\Pts$.

The above reductions are proved in
\cref{lm:sa-nonperiodic-pos,lm:sa-nonperiodic-occ} and lead to the
following result.

\begin{proposition}\label{pr:to-nonperiodic-first}
  Under \cref{as:nonperiodic}, assuming $\SA[i] \in [1 \dd n] \sm
  \R(\tau,\T)$, given a position $j' \in \Occ_{2\ell}(\SA[i])$, we can
  efficiently compute $|\Posalll(\SA[i])|$ and
  $|\Occ_{2\ell}(\SA[i])|$.
\end{proposition}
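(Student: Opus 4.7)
The plan is to formalize the sketch immediately preceding the proposition. Because $j' \in \Occ_{2\ell}(\SA[i])$ gives $\T^{\infty}[j' \dd j'+2\ell) = \T[\SA[i] \dd \SA[i]+2\ell)$, a direct comparison of length-$2\ell$ contexts immediately yields $|\Posalll(\SA[i])| = |\Posalll(j')|$ and $|\Occ_{2\ell}(\SA[i])| = |\Occ_{2\ell}(j')|$; moreover, since membership in $\R(\tau,\T)$ depends only on the length-$(3\tau-1)$ prefix and $3\tau - 1 \le \ell \le 2\ell$, we also obtain $j' \notin \R(\tau,\T)$. It therefore suffices to compute $|\Posalll(j')|$ and $|\Occ_{2\ell}(j')|$ starting from $j'$ alone.

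Next I would fix a $\tau$-synchronizing set $\S$ of $\T$ (supplied by \cref{as:nonperiodic}), set $s' := \Successor_{\S}(j')$, and $\delta_{\S} := s' - j'$. The density condition together with $j' \notin \R(\tau,\T)$ forces $\delta_{\S} < \tau$. For every $j'' \in \Posalll(j')$ we have $\LCE_{\T}(j',j'') \ge \ell \ge 3\tau$, so the consistency condition places $s'' := j'' + \delta_{\S}$ in $\S$ and guarantees $s'' = \min(\S \cap [j'' \dd j''+\tau))$. Because $\T[j'' \dd j''+\delta_{\S}) = \T[j' \dd s')$ and the two suffixes share a prefix of length $\ge \ell > \delta_{\S}$, the lexicographic order between $\T[j'' \dd n]$ and $\T[j' \dd n]$ agrees with that between $\T[s'' \dd n]$ and $\T[s' \dd n]$, and $\LCE_{\T}(j'',j') = \delta_{\S} + \LCE_{\T}(s'',s')$. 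Thus $|\Posalll(j')|$ equals the number of $s'' \in \S$ whose length-$q$ left context has $\T[j' \dd s')$ as its suffix (which also recovers $j''$ from $s''$) and whose length-$q$ right context certifies $\T[s'' \dd n] \prec \T[s' \dd n]$ together with $\LCE_{\T}(s'',s') \in [\ell - \delta_{\S} \dd 2\ell - \delta_{\S})$.

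These two conditions are precisely a 2D orthogonal range query on the point set $\Pts = \{(\revstr{\T^{\infty}[s-q \dd s)},\, \T^{\infty}[s \dd s+q)) : s \in \S\}$ with $q := 7\tau$, which satisfies $q \ge 2\ell$ since $\ell \ge 16$ and $\tau = \lfloor \ell/3 \rfloor$. Taking $X$ with $\revstr{X} = \T[j' \dd s')$, $c := \max\Sigma$, $Y := \T^{\infty}[s' \dd j'+\ell)$, and $Y' := \T^{\infty}[s' \dd j'+2\ell)$, the rectangle $[X \dd Xc^{\infty}) \times [Y \dd Y')$ counts exactly the contributors to $|\Posalll(j')|$, while the rectangle $[X \dd Xc^{\infty}) \times [Y' \dd Y'c^{\infty})$ counts exactly $|\Occ_{2\ell}(j')|$ via the same bijection applied to length-$2\ell$ equality (as outlined in the paragraph preceding the proposition). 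Both counts are then obtained from the range queries in \cref{as:nonperiodic}, plus one $\Successor_{\S}$ call to locate $s'$.

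The main obstacle I anticipate is not any of the individual reductions but the bookkeeping that certifies the map $j'' \mapsto s'' = j'' + \delta_{\S}$ is a bijection between $\Posalll(j')$ (resp.\ $\Occ_{2\ell}(j')$) and the set of $s''$ selected by the rectangle: one has to check that the offset $\delta_{\S}$ is forced by consistency (so no $s''$ is double-counted for two different values of $j''$) and, conversely, that every $s''$ returned by the query comes from a valid $j''$, which is where the length-$q$ left context of $s''$ is used to recover $j'' = s'' - \delta_{\S}$ unambiguously because $q > \tau > \delta_{\S}$. Formally packaging these verifications is exactly the content of \cref{lm:sa-nonperiodic-pos,lm:sa-nonperiodic-occ}.
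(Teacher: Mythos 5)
Your proposal is correct and follows essentially the same route as the paper: reduce from $\SA[i]$ to $j'$, use the common offset $\delta_{\S} = \Successor_{\S}(j') - j'$ forced by consistency of $\S$, and express both counts as rectangle queries on $\Points_{7\tau}(\T,\S)$ with ranges $[X \dd Xc^{\infty}) \times [Y \dd Y')$ and $[X \dd Xc^{\infty}) \times [Y' \dd Y'c^{\infty})$ — exactly the content of \cref{lm:sa-nonperiodic-pos,lm:sa-nonperiodic-occ,lm:sa-nonperiodic-2,pr:sa-nonperiodic-pos-and-occ-size}. The only detail you omit is the degenerate case $j' > n - 3\tau + 1$, where the density condition no longer guarantees $\Successor_{\S}(j') - j' < \tau$; the paper dispatches it separately using the uniqueness of $\T[n] = \texttt{\$}$, which forces $|\Occ_{\ell}(j')| = 1$ and hence $|\Posalll(\SA[i])| = 0$ and $|\Occ_{2\ell}(\SA[i])| = 1$.
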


It remains to show how to find some $j' \in \Occ_{2\ell}(\SA[i])$.
For this, observe that if we sort all $j'' \in \Occ_{\ell}(\SA[i])$ by
their right length-$2\ell$ context in $\T^{\infty}$ then for the $k$th
position $j''$ in this order we have $\T^{\infty}[j'' \dd j'' + 2\ell)
= \T^{\infty}[\SA[b + k] \dd \SA[b + k] + 2\ell)$, since $\SA(b \dd
e]$ also contains all $j'' \in \Occ_{2\ell}(\SA[i])$ sorted by their
length-$2\ell$ right context, although potentially in a different
order.  Note, however, that $j' \in \Occ_{2\ell}(\SA[i])$ only
requires $\T^{\infty}[j' \dd j' + 2\ell) = \T^{\infty}[\SA[i] \dd
\SA[i] + 2\ell)$.  Thus, the ability to find the $k$th element in the
sequence of all $j'' \in \Occ_{\ell}(\SA[i])$ sorted by
$\T^{\infty}[j'' \dd j'' + 2\ell)$ (with ties resolved arbitrarily) is
all we need to compute $j'$.  Note now that to show a common offset
$\delta_{\S}$ above, we only used the fact that suffixes shared a
prefix of length at least $3\tau$. Thus, by $3\tau \leq \ell$, here we
also have that all $j'' \in \Occ_{\ell}(\SA[i])$ share a common offset
$\delta_{\S} = \Successor_{\S}(\SA[i]) - \SA[i]$ such that $j'' +
\delta_{\S} = \min(\S \cap [j'' \dd j'' + \tau))$. Consequently, to
find $j'$ we take some $q \geq 2\ell$ and:
\begin{enumerate}
\item First, letting $\delta_{\S} = \Successor_{\S}(\SA[i]) - \SA[i]$,
  we compute the number $m$ of positions $s' \in \S$ that have their
  reversed left length-$q$ context in the range $[X \dd X')$ (where
  $\revstr{X} = \T[\SA[i] \dd \SA[i] + \delta_{\S})$ and $X' =
  Xc^{\infty}$) and their right length-$q$ context in $[\eps \dd Y)$
  (where $Y = \T^{\infty}[\SA[i] + \delta_{\S} \dd \SA[i] + \ell)$).
\item Then, for any $k \in [1 \dd |\Occ_{\ell}(\SA[i])|]$, the
  $(m+k)$th element in the sequence of all $s' \in \S$ sorted by the
  length-$2\ell$ right context which simultaneously has its left
  length-$q$ context in $[X \dd X')$, satisfies $\T^{\infty}[s' -
  \delta_{\S} \dd s' - \delta_{\S} + 2\ell) = \T^{\infty}[\SA[b + k]
  \dd \SA[b + k] + 2\ell)$.  In particular, the position $s'$ for $k =
  i - b$ satisfies $\T^{\infty}[s' - \delta_{\S} \dd s' - \delta_{\S}
  + 2\ell) = \T^{\infty}[\SA[i] \dd \SA[i] + 2\ell)$, i.e., $s' -
  \delta_{\S} \in \Occ_{2\ell}(\SA[i])$. Thus, finding $j'$ reduces to
  a range selection query on $\Pts$.
\end{enumerate}

The above reduction is proved in \cref{lm:sa-nonperiodic-occ-pos}.
One last detail is that we need $X$, $Y$, and $\delta_{\S}$. We note,
however, that they all depend only on $\T^{\infty}[\SA[i] \dd \SA[i] +
\ell)$, and thus can be computed using $j \in \Occ_{\ell}(\SA[i])$
(which we have as input). We have thus proved the following result,
which combined with \cref{pr:to-nonperiodic-first} concludes the
description of the $\SA$ query for nonperiodic $\SA[i]$.

\begin{proposition}
  Under \cref{as:nonperiodic}, assuming $\SA[i] \in [1 \dd n] \sm
  \R(\tau,\T)$, given a position $j \in \Occ_{\ell}(\SA[i])$ and the
  pair $(\LB_{\ell}(\SA[i]), \UB_{\ell}(\SA[i]))$, we can efficiently
  find some $j' \in \Occ_{2\ell}(\SA[i])$.
\end{proposition}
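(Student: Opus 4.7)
The plan is to reduce the search for $j' \in \Occ_{2\ell}(\SA[i])$ to a single range-counting plus a single range-selection query on the point set $\Pts = \{(\revstr{\T^{\infty}[s-q \dd s)}, \T^{\infty}[s \dd s+q)) : s \in \S\}$ available through \cref{as:nonperiodic}, where $\S$ is any $\tau$-synchronizing set of $\T$ and $q := 7\tau$ (so $q \geq 2\ell$, using $\ell \geq 16$). First, from $j \in \Occ_{\ell}(\SA[i])$ and $3\tau - 1 \leq \ell$ I would apply the density condition at the nonperiodic position $\SA[i]$ to produce a synchronizing position in $[\SA[i] \dd \SA[i]+\tau)$, then transfer it to $j$ by consistency to obtain a common offset $\delta_{\S} := \Successor_{\S}(j) - j = \Successor_{\S}(\SA[i]) - \SA[i] \in [0 \dd \tau)$ using one successor query on $\S$. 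From the known length-$\ell$ context of $\SA[i]$ (available via $j$) I would read off $X := \revstr{\T^{\infty}[\SA[i] \dd \SA[i]+\delta_{\S})}$, $X' := Xc^{\infty}$ with $c := \max\Sigma$, and $Y := \T^{\infty}[\SA[i]+\delta_{\S} \dd \SA[i]+\ell)$.

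Next, let $\Su_X := \{s \in \S : \revstr{\T^{\infty}[s-q \dd s)} \in [X \dd X')\}$. I would establish that the shift $j'' \mapsto j''+\delta_{\S}$ embeds $\Occ_{\ell}(\SA[i])$ injectively into $\Su_X$, and that its image is exactly the set of $s \in \Su_X$ whose right $q$-context starts with $Y$: the embedding uses consistency applied to the length-$\ell \geq 3\tau$ shared context, and the image characterisation uses $|Y| = \ell - \delta_{\S} < q$. Consequently, when $\Su_X$ is sorted by the right coordinate, the $|\Occ_{\ell}(\SA[i])|$ elements of the image appear as one contiguous block, preceded by exactly those $s \in \Su_X$ whose right $q$-context lies in $[\eps \dd Y)$; within the block, the sort refines the order by $\T^{\infty}[j'' \dd j''+2\ell)$, so the decomposition $\Occ_{\ell}(\SA[i]) = \Posalll(\SA[i]) \sqcup \Occ_{2\ell}(\SA[i]) \sqcup \Posallg(\SA[i])$ places the $\Occ_{2\ell}(\SA[i])$-images as a contiguous sub-block starting at local rank $\delta_{\ell}(\SA[i]) + 1$.

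Finally, I would compute $m := |\{s \in \S : \revstr{\T^{\infty}[s-q \dd s)} \in [X \dd X') \text{ and } \T^{\infty}[s \dd s+q) \in [\eps \dd Y)\}|$ by one range-counting query, and then issue one range-selection query for the element of rank $m + (i - \LB_{\ell}(\SA[i]))$ within $\Su_X$ sorted by right $q$-context. Using the identity $\LB_{2\ell}(\SA[i]) = \LB_{\ell}(\SA[i]) + \delta_{\ell}(\SA[i])$, this rank lies in $(\delta_{\ell}(\SA[i]) \dd \delta_{\ell}(\SA[i]) + |\Occ_{2\ell}(\SA[i])|]$, so the returned $s'$ is the image of some $j'' \in \Occ_{2\ell}(\SA[i])$, whence $j' := s' - \delta_{\S}$ is as required. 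The main obstacle will be the partition argument for $\Su_X$: identifying the image of $\Occ_{\ell}(\SA[i])$ as the "right $q$-context starts with $Y$" sub-block and refining it to $\Occ_{2\ell}(\SA[i])$ both rely on applying \cref{def:sss} in both directions (forward at $\SA[i]$, backward at each $s \in \Su_X$) while simultaneously tracking $q \geq 2\ell$, $3\tau - 1 \leq \ell$, and $\delta_{\S} < \tau$, all of which follow from $\ell \geq 16$ and $\tau = \lfloor \ell/3 \rfloor$.
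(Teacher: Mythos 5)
Your proposal is correct and follows essentially the same route as the paper's proof (Lemma~5.9 / Proposition~5.10): the common offset $\delta_{\S}$ via consistency and density, the strings $X, X', Y$ read off from $j$, one range-counting query for $m$, and one range-selection query at rank $m + (i - \LB_{\ell}(\SA[i]))$ among the synchronizing positions preceded by $\revstr{X}$, sorted by right $7\tau$-context. Your justification via the partition $\Occ_{\ell}(\SA[i]) = \Posalll(\SA[i]) \sqcup \Occ_{2\ell}(\SA[i]) \sqcup \Posallg(\SA[i])$ and local ranks is just a rephrasing of the paper's argument that the sorted block of $\Occ_{\ell}$-images and $\SA(\LB_{\ell}(\SA[i]) \dd \UB_{\ell}(\SA[i])]$ carry the same multiset of length-$2\ell$ contexts in the same order.
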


\paragraph{The Periodic Positions}

Assume $\SA[i] \in \R(\tau,\T)$.  The standard way to introduce
structure among periodic positions (see, e.g.,~\cite{sss}) is as
follows.  Observe that if $j, j + 1 \in \R(\tau,\T)$, then $\per(\T[j
\dd j {+} 3\tau {-} 1)) = \per(\T[j {+} 1 \dd j {+} 1 {+} 3\tau {-}
1))$. This implies that any maximal block of positions in
$\R(\tau,\T)$ defines a highly periodic fragment of $\T$ (called a
``run'') with an associated period $p$. To compare runs, we ``anchor''
each run $\T[j \dd j')$ by selecting some $H \in \Sigma^{+}$, called
its \emph{root}, so that $\T[j \dd j')$ is a substring of $H^{\infty}$
and no nontrivial rotation of $H$ is selected for other runs.  Every
run $\T[j \dd j')$ can then be uniquely written as $\T[j \dd j') =
H'H^{k}H''$, where $k \geq 1$ and $H'$ (resp.\ $H''$) is a proper
suffix (resp.\ prefix) of $H$.  The value $k$ is called the
\emph{exponent} of $j$. We finally classify $\type(j) {=} -1$ if
$\T[j'] \,{\prec}\, \T[j'-p]$ and $\type(j) = +1$ otherwise, where
$p=|H|$.

\begin{figure}[t!]
  \vspace{-0.5cm}
  \begin{center}
    \begin{tikzpicture}[scale=0.25]

      \fill[pattern=north west lines, pattern color=blue!50]
        (0,1) rectangle (40,17);
      \fill[pattern=north east lines, pattern color=red!50]
        (0,4) rectangle (40,11);

      \scriptsize
      \draw[|-|] (44,9+0.05) -- (44,11-0.05) node[right,midway]
           {$\Poshigh(\SA[i])$};
      \draw[|-|] (44,11+0.05) -- (44,17-0.05) node[right,midway]
           {$\Posalll(\SA[i])$};
      \draw[|-|] (-1,9+0.05) -- (-1,14-0.05) node[left,midway]
           {$\Posmid(\SA[i])$};
      \draw[|-|] (-1,14+0.05) -- (-1,17-0.05) node[left,midway]
           {$\Poslow(\SA[i])$};

      \foreach \v [count=\x from 0] in
          {13,15,16,20,30,35,38,35,33,31,30,29,23,18,16,16,15,14,13,12}{
        \fill[black!25] (0,\x) rectangle (\v, \x+1);
        \pgfmathtruncatemacro{\d}{0.25*(\v-5)}
        \foreach \r in {0,1,...,\d}{
          \draw[thin,fill=black!50] (1+4*\r,\x) rectangle
            node[white]{\scriptsize$H$} (5+4*\r, \x+1);
        }
        \draw (40, \x) -- (0,\x) -- (0,\x+1) -- (40, \x+1);
        \draw[dotted] (40, \x) -- (43, \x) (40, \x+1) -- (43, \x+1);
      }
      \draw[dotted] (0,20) -- (0,21.5);
      \draw[densely dotted] (15, 0) -- (15,21.5) (30, 0) -- (30,21.5);

      \draw[latex-latex] (0,21) -- node[above]{$\ell$} (15, 21);
      \draw[latex-latex] (15,21) -- node[above]{$\ell$} (30, 21);

      \draw[ultra thick] (0,9) rectangle (40,10);

    \end{tikzpicture}
  \end{center}
  \vspace{-.5cm}
  \caption{Example showing the sets $\Posalll(\SA[i])$,
    $\Poslow(\SA[i])$, $\Posmid(\SA[i])$, and
    $\Poshigh(\SA[i])$. Note, that $|\Posalll(\SA[i])| =
    |\Poslow(\SA[i])| + |\Posmid(\SA[i])| - |\Poshigh(\SA[i])|$ (see
    \cref{lm:delta}). The suffix $\T[\SA[i] \dd n]$ is highlighted in
    bold. The sets $\Occ_{\ell}(\SA[i])$ and $\Occ_{2\ell}(\SA[i])$
    and marked with blue and red (respectively). }
  \label{fig:ranges}
\end{figure}

The first major challenge in the periodic case is selecting roots
efficiently.  An easy solution in the static case
(e.g.~\cite{sss,sss-index}) is to choose the lexicographically
smallest rotation of $H$ (known as the \emph{Lyndon root}). This seems
very difficult in the dynamic case, however.  We instead show a
construction that exploits the presence of symmetry-breaking component
in the signature parsing (i.e., the deterministic coin
tossing~\cite{Cole1986}) to develop a custom computation of roots
(\cref{sec:ccs}).

Denote $b = \LB_{\ell}(\SA[i])$ and $e = \UB_{\ell}(\SA[i])$. Observe
that all $i' \in (b \dd e]$ with $\type(\SA[i'])<0$ precede those with
$\type(\SA[i'])>0$. Moreover, $\Lexp(\SA[i'])$ among those with
$\type(\SA[i'])<0$ (resp.\ $\type(\SA[i'])>0$) is increasing
(resp.\ decreasing) as we increase $i'$; see also \cref{lm:utils}.
This structure of the periodic block has led to relatively
straightforward processing of periodic positions in previous
applications (e.g., the BWT construction in~\cite{sss}), where the
positions were first grouped by the type, and then by the exponent. In
our case, however, we need to exclude the positions with very small
and very large exponents. We thus employ the following modification to
the above scheme: Rather than computing $|\Posalll(\SA[i])|$ in one
step, we prove (\cref{lm:delta}) that it can be expressed as a
combination of three sets, $\Poslow(\SA[i])$ (containing exponents
``truncated'' at length $\ell$), $\Poshigh(\SA[i])$ (where truncation
occurs for length $2\ell$), and $\Posmid(\SA[i])$ (all exponents in
between); see \cref{fig:ranges} for an example. We then compute the
following values: the type $\type(\SA[i])$, the size
$|\Poslow(\SA[i])|$, the exponent $\Lexp(\SA[i])$, the size
$|\Posmid(\SA[i])|$, a position $j'\in \Occ_{2\ell}(\SA[i])$, the size
$|\Poshigh(\SA[i])|$, and the size $|\Occ_{2\ell}(\SA[i])|$.  See the
proof of \cref{pr:sa-periodic} for the high-level algorithm and
\cref{sec:sa-periodic-occ-size,sec:sa-periodic-type,sec:sa-periodic-poslow-poshigh,sec:sa-periodic-exp,sec:sa-periodic-posmid,sec:sa-periodic-occ-pos}
for the implementation of subsequent steps.  Finally, we derive
$|\Posalll(\SA[i])|$, which equals
$|\Poslow(\SA[i])|+|\Posmid(\SA[i])|-|\Poshigh(\SA[i])|$, if
$\type(\SA[i])=-1$, and the values
$\LB_{2\ell}(\SA[i])=\LB_\ell(\SA[i])+|\Posalll(\SA[i])|$ as well as
$\UB_{2\ell}(\SA[i])=\LB_{2\ell}(\SA[i])+|\Occ_{2\ell}(\SA[i])|$.  The
case of $\type(\SA[i])=+1$ is symmetric: we then compute
$|\Posallg(\SA[i])|$ instead of $|\Posalll(\SA[i])|$.

\paragraph{Dynamic Text Implementation}

In the second part of the paper
(\cref{sec:parsing,sec:ds,sec:bsp2sss,sec:app}), we develop a data
structure that maintains a dynamic text (subject to updates listed
below) and provides efficient implementation of the auxiliary queries
specified in the four assumptions of \cref{sec:sa}.

\begin{definition}\label{def:dt}
  We say that a \emph{dynamic text} over an integer alphabet $\Sigma$
  (with $\$ = \min \Sigma$) is a data structure that maintains a text
  $T\in \Sigma^+$ subject to the following updates:
\begin{description}
  \item[$\initop(\sigma)$:] Given the alphabet size $\sigma$,
    initialize the data structure, setting $T:=\$$;
  \item[$\makeop(i, a)$:] Given $i\in [1\dd |T|]$ and $a\in \Sigma \sm
    \{\$\}$, set $T := T[1\dd i) \cdot a \cdot T[i\dd |T|]$.
  \item[$\deleteop(i)$:] Given $i\in [1\dd |T|)$, set $T:= T[1\dd
    i)\cdot T(i\dd |T|]$.
  \item[$\swapop(i,j,k)$:] Given $i,j,k\in [1\dd |T|]$ with $i\le j
    \le k$, set $T:= T[1\dd i) \cdot T[j\dd k) \cdot T[i\dd j) \cdot
    T[k\dd |T|]$.
\end{description}
\end{definition}
Observe that our interface does not directly support character
substitutions; this is because $\mathsf{substitute}(i,a)$ can be
implemented as $\deleteop(i)$ followed by $\makeop(i, a)$.  Moreover,
note that the interface enforces the requirement of \cref{sec:sa} that
$\{i\in [1\dd |T|] : T[i]=\$\} = \{|T|\}$.

Various components of our data structure, described in \cref{sec:app},
maintain auxiliary information associated to individual positions of
the text $T$ (such as whether the position belongs to a synchronizing
set). Individual updates typically alter the positions of many
characters in $T$ (for example, insertion moves the character at
position $j$ to position $j+1$ for each $j\in [i\dd |T|]$), so
addressing the characters by their position is volatile. To address
this issue, we use a formalism of \emph{labelled strings}, where each
character is associated to a unique label that is fixed throughout
character's lifetime. This provides a clean realization of the concept
of \emph{pointers to characters}, introduced in~\cite{Alstrup2000}
along with $\Oh(\log n)$-time conversion between labels (pointers) and
positions.

The main challenge that arises in our implementation of a dynamic text
is to maintain synchronizing sets.  As for the $\Successor_{\S}$
queries alone, we could generate synchronizing positions at query
time.  Nevertheless, \cref{as:nonperiodic} also entails supporting
range queries over a set of points in one-to-one correspondence with
the synchronizing positions, we cannot evade robust maintenance.

\paragraph{Dynamic Synchronizing Sets}

By the consistency condition in \cref{def:sss}, whether a position $j$
belongs to a synchronizing set $\S$ is decided based on the right
context $T[j\dd j+2\tau)$.  This means that the entire synchronizing
set can be described using a family $\mathcal{F} \sub \Sigma^{2\tau}$
such that $j\in \S$ if and only if $T[j\dd j+2\tau)\in \mathcal{F}$.
The construction algorithms in~\cite{sss} select such $\mathcal{F}$
adaptively (based on the contents of $T$) to guarantee that $|\S|$ is
small.  In a dynamic scenario, we could either select $\mathcal{F}$
non-adaptively and keep it fixed; or adaptively modify $\mathcal{F}$
as the text $T$ changes.

The second approach poses a very difficult task: the procedure
maintaining $\mathcal{F}$ does not know the future updates, yet it
needs to be robust against any malicious sequence of updates that an
adversary could devise. This is especially hard in the deterministic
setting, where we cannot hide $\mathcal{F}$ from the adversary.  Thus,
we aim for non-adaptivity, which comes at the price of increasing the
synchronizing set size by a factor of $\Oh(\log^*(\sigma \tau))$.  On
the positive side, a non-adaptive choice of $\mathcal{F}$ means that
$\S$ only undergoes local changes; for example, a substitution of
$T[i]$ may only affect $\S\cap (i-2\tau\dd i]$.  Moreover, since
$\mathcal{F}$ yields small synchronizing sets for all strings, this is
in particular true for all substrings $T[i\dd j)$, whose synchronizing
sets are in one-to-one correspondence with $\S\cap [i\dd j-2\tau]$.
This means that $\S$ is \emph{locally sparse} and that each update
incurs $\Oh(\log^*(\sigma\tau))$ changes to $\S$.

The remaining challenge is thus to devise a non-adaptive synchronizing
set construction.  Although all existing constructions are adaptive,
Birenzwige, Golan, and Porat~\cite{BirenzwigeGP20} provided a
non-adaptive construction of a related notion of \emph{partitioning
sets}. While partitioning sets and synchronizing sets satisfy similar
consistency conditions, the density condition of synchronizing sets is
significantly stronger, which is crucial for a clean separation
between periodic and nonperiodic positions. Thus, in
\cref{sec:bsp2sss}, we strengthen the construction
of~\cite{BirenzwigeGP20} so that it produces synchronizing sets. This
boils down to `fixing' the set in the vicinity of positions in
$\R(\tau,T)$.

\paragraph{Dynamic Strings over Balanced Signature Parsing}

Unfortunately, the approach of~\cite{BirenzwigeGP20} only comes with a
static implementation.  Thus, we need to dive into their techniques
and provide an efficient dynamic implementation.  Their central tool
is a locally consistent parsing algorithm that iteratively parses the
text using deterministic coin tossing~\cite{Cole1986} to determine
phrase boundaries. Similar techniques have been used many times (see
e.g.~\cite{SahinalpV94,SahinalpV96,Mehlhorn,Alstrup2000,NishimotoDAM}),
but the particular flavor employed in~\cite{BirenzwigeGP20} involves a
mechanism that, up to date, has not been adapted to the dynamic
setting. Namely, as subsequent levels of the parsing provide coarser
and coarser partitions into phrases, the procedure grouping phrases
into blocks (to be merged in the next level) takes into account the
lengths of the phrases, enforcing very long phrases to form
single-element blocks.  This trick makes the phrase lengths much more
balanced, which is crucial in controlling the context size that
governs the local consistency of the synchronizing sets derived from
the parsing.

In \cref{sec:parsing}, we develop \emph{balanced signature parsing}: a
version of signature parsing (originating the early works on dynamic
strings~\cite{Mehlhorn,Alstrup2000}) that involves the phrase
balancing mechanism. Then, in \cref{sec:ds}, we provide a dynamic
strings implementation based on the balanced signature parsing.  The
main difference compared to the previous
work~\cite{Mehlhorn,Alstrup2000,dynstr} stems from the fact that the
size of the \emph{context-sensitive} part of the parsing (that may
change depending on the context surrounding a given substring) is
bounded in terms of the number of individual letters rather than the
number of phrases at the respective level of the parsing.  Due to
this, we need to provide new (slightly modified) implementations of
the basic operations (such as updates and longest common prefix
queries). However, we also benefit from this feature, obtaining faster
running times for more advanced queries, such as the period queries
(which we utilize in \cref{sec:bsp2sss} to retrieve $\R(\tau,T)$)
compared to solutions using existing dynamic strings
implementations~\cite{CKW20}.

\section{Generalized Range Counting and Selection Queries}\label{sec:range-queries}

In this section we introduce generalized range counting and selection
queries. The generalization lies in the fact that the ``coordinates''
of points can come from any ordered set. In particular, coordinates of
points in some of our structures will be elements of $\Zp$ or
$\Sigma^{*}$ (i.e., the strings over alphabet $\Sigma$). Furthermore,
the points in our data structures are labelled with distinct integer
identifiers; we allow multiple points with the same coordinates,
though.

The section is organized as follows. We start with the definition of
range counting/selection queries. In the following two sections
(\cref{sec:str-str,sec:int-str}), we present two data structures
supporting efficient range counting/selection queries over instances
that will be of interest to us.

Let $\X$ and $\Y$ be some linearly ordered sets (we denote the order
on both sets using $\prec$ or $\preceq$).  Let $\Pts \sub \X\times \Y
\times \Z$ be a finite set of points with distinct integer labels.  We
define the notation for \emph{range counting} and \emph{range
selection} queries as follows.

\begin{description}[style=sameline,itemsep=1ex]
\item[Range counting query $\rcount{\Pts}{X_l}{X_u}{Y_u}$:] Given
  $X_l, X_r {\in} \X$ and $Y_u {\in} \Y$, return $|\{(X,Y,\ell)\in
  \Pts : X_l \preceq X \prec X_r\text{ and }Y \prec Y_u\}|$. We also
  let $\rcounti{\Pts}{X_l}{X_r}{Y_u} = |\{(X,Y,\ell)\in \Pts : X_l
  \preceq X \prec X_r\text{ and } Y\preceq Y_r\}|$ and
  $\rcountb{\Pts}{X_l}{X_r} = |\{(X,Y,\ell) \in \Pts : X_l \preceq X
  \prec X_r\}|$.
\item[Range selection query $\rselect{\Pts}{X_{l}}{X_{u}}{r}$:] Given
  $X_{l}, X_{u} \in \mathcal{X}$ and $r\in [1 \dd
  \rcountb{\Pts}{X_{\ell}}{X_{r}}]$, return any $\ell \in
  \rselect{\Pts}{X_l}{X_u}{r} := \{\ell : (X,Y,\ell) \in
  \Pts\text,\ X_l \preceq X \prec X_r,\text{ and }Y = Y_u\}$, where
  $Y_u \in \Y$ is such that $r \in (\rcount{\Pts}{X_l}{X_r}{Y_u} \dd
  \rcounti{\Pts}{X_l}{X_r}{Y_u}]$.
\end{description}

\begin{theorem}\label{thm:range}
  Suppose that the elements of $\X$ and $\Y$ can be compared in
  $\Oh(t)$ time.  Then, there is a deterministic data structure that
  maintains a set $\Pts\sub \X\times \Y \times [0\dd 2^w)$ of size
  $n$, with insertions in $\Oh((t + \log n)\log n)$ time and deletions
  in $\Oh(\log^2 n)$ time so that range queries are answered in
  $\Oh((t + \log^2 n)\log n)$ time.
\end{theorem}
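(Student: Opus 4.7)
The plan is to build a standard two-dimensional dynamic orthogonal range tree, coupled with order-maintenance (OM) data structures so that the $\Oh(t)$ cost of comparing raw coordinates in $\X$ or $\Y$ is decoupled from the $\Oh(\log n)$ comparisons done internally by the tree operations. Concretely, I would maintain (i) a weight-balanced primary BBST $\mathcal{T}$ over $\Pts$ keyed by the pair $(X\text{-OM-label},\,\ell)$, (ii) at every node $v\in \mathcal{T}$ a secondary balanced BBST $\mathcal{T}_v$ storing the points in the subtree of $v$, keyed by $(Y\text{-OM-label},\,\ell)$ and augmented with subtree sizes, (iii) a worst-case $\Oh(1)$-comparison OM structure for the $\X$-order and another for the $\Y$-order, (iv) two auxiliary BBSTs $\mathcal{U}_X,\mathcal{U}_Y$ over the distinct $\X$- and $\Y$-values appearing in $\Pts$, so that a raw coordinate can be located and, if absent, inserted in $\Oh(t\log n)$ time and then assigned an OM label, and (v) a dictionary by integer label pointing to the leaf of $\mathcal{T}$ holding each point, allowing deletions to bypass all $X$-comparisons.

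For insertion of $(X,Y,\ell)$, I would first query $\mathcal{U}_X$ and $\mathcal{U}_Y$ to obtain---and if necessary allocate---OM labels for $X$ and $Y$ at a cost of $\Oh(t\log n)$. With $\Oh(1)$-comparable labels in hand, descending in $\mathcal{T}$ to the insertion point and inserting the new element into each of the $\Oh(\log n)$ ancestor secondary BBSTs costs $\Oh(\log n)$ per secondary structure, giving $\Oh(\log^2 n)$ in total and an overall $\Oh((t+\log n)\log n)$ update bound. Weight-balance rebuilds---which rebuild a primary subtree together with its secondary structures from scratch in linear time via in-order traversal---contribute $\Oh(\log^2 n)$ amortized per insertion; a standard incremental deamortization spreads each rebuild over the subsequent updates that touch the affected subtree, preserving the worst-case bounds. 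Deletion first uses the label dictionary to jump to the target leaf of $\mathcal{T}$ in $\Oh(1)$, then removes the point from $\mathcal{T}$ and from the $\Oh(\log n)$ ancestor secondary BBSTs using OM labels already attached to the point; $\mathcal{U}_X$, $\mathcal{U}_Y$, and the OM structures are only touched when the multiplicity of a coordinate drops to zero, and the total cost is worst-case $\Oh(\log^2 n)$.

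For $\rcount{\Pts}{X_l}{X_r}{Y_u}$, I would first translate $X_l,X_r,Y_u$ into OM handles using $\mathcal{U}_X,\mathcal{U}_Y$ in $\Oh(t\log n)$ time, then descend in $\mathcal{T}$ to extract the canonical decomposition of the $X$-interval into $\Oh(\log n)$ subtrees, and at each of their secondary BBSTs perform a size-augmented rank query on $Y$-OM-labels in $\Oh(\log n)$ time; summing across subtrees yields the answer in $\Oh(t\log n+\log^2 n)$ time, well within the claimed $\Oh((t+\log^2 n)\log n)$ bound. The variants $\rcounti{\Pts}{X_l}{X_r}{Y_u}$ and $\rcountb{\Pts}{X_l}{X_r}$ are handled identically. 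For $\rselect{\Pts}{X_l}{X_r}{r}$, after the same coordinate translation and canonical decomposition I would run a coordinated rank-select descent through the $\Oh(\log n)$ secondary BBSTs in parallel: at each step the augmented subtree sizes aggregated across the cursors identify the secondary tree whose next split advances the global rank past the target and direct the descent, running in $\Oh(\log n)$ steps with $\Oh(\log n)$ work per step and returning the label of the leaf reached.

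The main obstacle is getting everything deterministic and worst-case simultaneously: one needs a worst-case OM structure (e.g., the Bender--Cole--Demaine--Farach-Colton--Zito construction) rather than the simpler amortized Dietz--Sleator scheme, a deamortized weight-balanced rebalancing that keeps subtree-size and label invariants consistent as each rebuild spreads over many updates, and a careful analysis of the coordinated multi-way rank-select descent for $\rselect$ to ensure the global rank invariant is preserved across the $\Oh(\log n)$ cursors at every step.
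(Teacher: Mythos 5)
Your proposal is correct and follows essentially the same approach as the paper: the paper likewise stores $\Pts$ in a (black-boxed) dynamic 2D range-counting structure of Lueker/Willard with worst-case $\Oh(\log^2 n)$ updates and queries, maintains two order-maintenance structures over the $\X$- and $\Y$-orders so that internal comparisons cost $\Oh(1)$, and pays the $\Oh(t\log n)$ term only for the binary searches that locate a new or query coordinate in those orders. The one divergence is range selection: the paper simply binary-searches over the $\Y$-coordinates using counting as an oracle in $\Oh(\log^3 n)$ time (which already fits the stated bound), whereas your coordinated multi-cursor rank-select descent is more delicate to get right in $\Oh(\log^2 n)$ and is not needed -- if in doubt, fall back to the oracle binary search.
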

\begin{proof}
  The set $\Pts$ is stored in a data structure of~\cite{Lueker1982}
  (see also~\cite{Willard1985,Chazelle1988}) for dynamic range
  counting queries; this component supports updates and queries in
  $\Oh(\log^2 n)$ assuming $\Oh(1)$-time comparisons.  As for range
  selection queries, we resort to binary search with range counting
  queries as an oracle (the universe searched consists of the second
  coordinates of all points in $\Pts$).  Thus, range selection queries
  cost $\Oh(\log^3 n)$ time.

  In order to substantiate the assumption on constant comparison time,
  we additionally maintain $\Pts$ in two instances of the
  order-maintenance data structure~\cite{Dietz1987,Bender2017}, with
  points $(X,Y,\ell)$ ordered according to $X$ in the first instance
  and according to $Y$ in the second instance (ties are resolved
  arbitrarily). This allows for $\Oh(1)$-time comparisons between
  points in $\Pts$.  The overhead for deletions is $\Oh(1)$, but
  insertions to the order-maintenance structure require specifying the
  predecessor of the newly inserted element; we find the predecessor
  in $\Oh(t\log n)$ time using binary search.  Similarly, at query
  time, we temporarily add the query coordinates ($X_{l}$, $X_r$, and,
  if specified, $Y_u$) to the appropriate order-maintenance structure,
  also at the cost of an extra $\Oh(t\log n)$ term in the query time.
\end{proof}

\subsection{A String-String Instance}\label{sec:str-str}

\begin{definition}\label{def:p-context}
  Let $\T \in \Sigma^n$. For any $q \in \Zp$ and any
  set $\mathsf{P} \sub [1 \dd n]$, we define
  \[
    \Points_{q}(\T, \mathsf{P}) = \{(\revstr{\T^{\infty}[p \, {-} \, q
    \,{\dd}\, p)}, \T^{\infty}[p \, {\dd} \, p \,{+}\, q),p) : p \in
    \mathsf{P}\}.
  \]
\end{definition}

In other words, $\Points_{q}(\T, \mathsf{P})$ represents the
collection of string-pairs $(X, Y)$ composed of reversed length-$q$
left context and a length-$q$ right context (in $\T^{\infty}$) of
every $p \in \mathsf{P}$, and for any $(X, Y, \ell) \in
\Points_{q}(\T, \mathsf{P})$, the label $\ell \in \mathsf{P}$ is the
underlying position.  Equivalently, $\Points_{q}(\T, \mathsf{P})$ can
be interpreted as a set of labelled points $(X,Y,\ell)$ with
coordinates $X,Y$ and an integer label $\ell$.  The order among
strings on the $\mathcal{X} = \Sigma^{*}$ and $\mathcal{Y} =
\Sigma^{*}$ axis is the lexicographical order.

Next, we define the problem of supporting range counting/selection
queries on $\Points_{q}(\T, \mathsf{P})$ for some special family of
queries that can be succinctly represented as substrings of
$\T^{\infty}$ or $\revstr{\T^{\infty}}$.

\begin{problem}[String-String Range Queries]\label{prob:str-str}
  Let $\T \in \Sigma^n$, $q \in [1 \dd 3n]$, and $\mathsf{P} \sub [1
  \dd n]$ be a set satisfying $|\mathsf{P}| = m$. Denote $\Pts =
  \Points_{q}(\T, \mathsf{P})$ and $c = \max\Sigma$. Provide efficient
  support for the following queries:
  \begin{enumerate}
  \item\label{prob:str-str-it-1} Given $i \in [1 \dd n]$ and $q_{l},
    q_r \in [0 \dd 2n]$, return $\rcount{\Pts}{X_l}{X_u}{Y_l}$ and
    $\rcount{\Pts}{X_l}{X_u}{Y_u}$, where $\revstr{X_l} =
    \T^{\infty}[i - q_{l} \dd i)$, $X_u = X_lc^{\infty}$, $Y_l =
    \T^{\infty}[i \dd i + q_r)$, and $Y_u = Y_lc^{\infty}$, and
  \item\label{prob:str-str-it-2} Given $i \,{\in}\, [1 \dd n]$, $q_{l}
    \,{\in}\, [0 \dd 2n]$, and $r \,{\in}\, [1 \dd
    \rcountb{\Pts}{X_l}{X_u}]$ (where \mbox{$\revstr{X_l} \,{=}\,
    \T^{\infty}[i \,{-}\, q_{l} \dd i)$} and $X_u = X_lc^{\infty}$),
    return some position $p \in \rselect{\Pts}{X_l}{X_u}{r}$.
  \end{enumerate}
\end{problem}

\subsection{An Int-String Instance}\label{sec:int-str}

\begin{definition}\label{def:p-right-context}
  Let $\T \in \Sigma^n$, $q \in \Zp$, and suppose that $\mathsf{P}
  \sub [1 \dd n] \times \Zz$ contains pairs with distinct first
  coordinates. We define
  \[
    \Points_{q}(\T, \mathsf{P})=\{(d, \T^{\infty}[p \,{\dd}\, p
    \,{+}\, q),p) : (p, d) \in \mathsf{P}\}.
  \]
\end{definition}

In other words, $\Points_{q}(\T, \mathsf{P})$ represents the
collection of pairs $(X, Y)$ composed of an integer $X = d$ and a
length-$q$ right context (in $\T^{\infty}$) of position $p$ for every
$(p, d) \in \mathsf{P}$.  The label of each point is set to $p$; the
assumption on $\mathsf{P}$ guarantees that points have distinct
labels.  Equivalently, $\Points_{q}(\T, \mathsf{P})$ can be
interpreted as a set of labelled points from $\mathcal{X} \times
\mathcal{Y}\times \Z$, where the coordinates of each point
$(X,Y,\ell)\in \Pts$ are an integer $X$, and a string $Y$, whereas the
label is $\ell$.  We use the standard order on $\mathcal{X} = \Zp$ and
the lexicographic order on $\mathcal{Y} = \Sigma^{*}$.

Next, we define the problem of efficiently supporting range counting
and selection queries on $\Points_{q}(\T, \mathsf{P})$ for some
restricted family of queries that can be succinctly represented as
substrings of $\T^{\infty}$.

\begin{problem}[Int-String Range Queries]\label{prob:int-str}
  Let $\T \in \Sigma^n$, $q \in [1 \dd 3n]$, and $\mathsf{P} \sub [1
  \dd n] \times [0 \dd n]$ be a set of size $|\mathsf{P}| = m$.
  Denote $\Pts = \Points_{q}(\T, \mathsf{P})$ and $c =
  \max\Sigma$. Provide efficient support for the following queries:
  \begin{enumerate}
  \item\label{prob:int-str-it-1} Given $i \in [1 \dd n]$, $x \in [0
    \dd n]$, and $q_r \in [0 \dd 2n]$, return
    $\rcount{\Pts}{x}{n}{Y_l}$, $\rcount{\Pts}{x}{n}{Y_u}$, and
    $\rcountb{\Pts}{x}{n}$, where $Y_l = \T^{\infty}[i \dd i + q_r)$,
    $Y_u = Y_lc^{\infty}$, and
  \item\label{prob:int-str-it-2} Given $x \in [0 \dd n]$ and $r \in [1
    \dd \rcountb{\Pts}{x}{n}]$, return some position $p \in
    \rselect{\Pts}{x}{n}{r}$.
  \end{enumerate}
\end{problem}

\section{Modular Constraint Queries}\label{sec:mod-queries}

Let $\Ints \sub \Zz^2 \times \Z$ be a finite set of integer intervals
(represented by endpoints) with distinct integer labels (we allow
multiple intervals with the same endpoints).  We define the
\emph{modular constraint counting} and \emph{modular constraint
selection} queries as follows:

\begin{description}[style=sameline,itemsep=1ex]
\item[Modular constraint count query $\mcount{\mathcal{I}}{h}{r}{q}$:]
  Given $h \in \Zp$, $r \in [0 \dd h)$ and $q \in \Zz$, return
  $\sum_{(b,e,\ell) \in \Ints}|\{j \in [b \dd e) : j \bmod h = r\text{
  and }\lfloor \tfrac{j}{h} \rfloor \leq q\}|$.  As a notational
  convenience, we define $\mcountb{\mathcal{I}}{h}{r} =
  \sum_{(b,e,\ell) \in \Ints}|\{j \in [b \dd e) : j \bmod h = r\}|$.
\item[Modular constraint selection query
  $\mselect{\mathcal{I}}{h}{r}{c}$:] Given $h \in \Zp$, $r \in [0 \dd
  h)$ and $c \in [1 \dd \mcountb{\mathcal{I}}{h}{r}]$, return (the
  unique) positive integer $q$ satisfying the condition $c \in
  (\mcount{\mathcal{I}}{h}{r}{q-1} \dd
  \mcount{\mathcal{I}}{h}{r}{q}]$.
\end{description}

We extend the above notation to sets of labelled one-sided intervals
(identified for simplicity with integers) as follows.  For any finite
set $Q \sub \Zz \times \Z$ of coordinates labelled with distinct
integer labels (we allow equal coordinates), we define
$\mcount{Q}{h}{r}{q} := \mcount{\Ints}{h}{r}{q}$, $\mcountb{Q}{h}{r}
:= \mcountb{\Ints}{h}{r}$, and $\mselect{Q}{h}{r}{c} :=
\mselect{\Ints}{h}{r}{c}$, where $\Ints = \{(0, e, \ell) : (e, \ell)
\in Q\}$.

\begin{proposition}\label{pr:mod-queries}
  Let $h \in [0 \dd 2^w)$. There is a deterministic data structure
  that maintains a set of labelled coordinates $Q \sub [0 \dd 2^w)
  \times \Z$ of size $|Q| = n$ so that updates $($insertions and
  deletions in $Q)$ and modular constraint counting queries
  $($returning $\mcount{Q}{h}{r}{q}$, given any $r \in [0 \dd h)$ and
  $q \in \Zz)$ take $\bigO(\log^2 n)$ time.
\end{proposition}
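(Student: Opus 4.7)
The plan is to reduce $\mcount{Q}{h}{r}{q}$ to $\bigO(1)$ two-dimensional rectangle aggregates and then answer those with a standard augmented range tree. Since $h$ is fixed at initialization, every $(e,\ell)\in Q$ can be represented as a triple $(u,v,\ell)$ with $u=e\bmod h\in[0\dd h)$ and $v=\lfloor e/h\rfloor\in \Zz$. A short case analysis separating $u\le r$ from $u>r$ and comparing $v$ with $q$ will show that the contribution of one such element to $\mcount{Q}{h}{r}{q}$ equals
\[
  \min\bigl(v+\mathbb{1}[u>r],\ q+1\bigr),
\]
where $\mathbb{1}[\cdot]$ denotes the Iverson bracket. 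Applying the identity $\sum_i \min(w_i,q+1) = (q+1)\cdot|\{i:w_i>q\}| + \sum_{w_i\le q}w_i$ separately to the class $u\le r$ (with $w=v$) and to the class $u>r$ (with $w=v+1$) expresses the query as an integer combination of $\bigO(1)$ quantities of the form ``count, respectively sum the $v$-coordinates of, all triples $(u,v,\ell)$ whose $u$-coordinate lies in $[u_1\dd u_2]$ and whose $v$-coordinate lies in $[v_1\dd v_2]$,'' where the rectangle endpoints depend only on $h$, $r$, and $q$.

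Second, I would maintain the multiset $\{(u,v,\ell):(e,\ell)\in Q\}$ in a classical dynamic two-dimensional range tree, namely a weight-balanced BST ordered by $u$ whose every internal node carries a secondary weight-balanced BST indexed by $v$, each secondary-tree node augmented with the number of descendants and the sum of their $v$-values. This structure answers any rectangle count or rectangle sum-of-$v$ query in $\bigO(\log^2 n)$ time and supports insertions and deletions in $\bigO(\log^2 n)$ time, because each point participates in $\bigO(\log n)$ secondary trees and each update inside a secondary tree costs $\bigO(\log n)$. Since an update to $Q$ triggers one insertion or deletion in this range tree and a query triggers only $\bigO(1)$ rectangle aggregates, both operations run in $\bigO(\log^2 n)$ worst-case time, matching the claimed bound.

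The main technical step is the derivation of the clean formula $\min(v+\mathbb{1}[u>r],q+1)$ for a single element's contribution. This requires verifying three cases: when $e\le r$ the formula returns $0$; when $r<e\le (q+1)h+r$ it must reproduce $\lceil(e-r)/h\rceil$; and when $e>(q+1)h+r$ it must saturate at $q+1$. Each case further splits according to whether $e\bmod h$ is $\le r$ or $>r$. Once this algebraic bookkeeping is complete, the rest is a textbook invocation of a 2D range tree, so no further ingenuity is needed.
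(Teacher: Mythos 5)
Your proposal is correct, and your single-element formula $\min(v+[u>r],\,q+1)$ (Iverson bracket) checks out in all cases; the subsequent reduction to $\Oh(1)$ rectangle counts and rectangle $v$-sums, answered by a sum-augmented dynamic range tree, gives the claimed $\Oh(\log^2 n)$ bounds at the same level of rigor as the paper, which likewise black-boxes the classical Lueker--Willard machinery. The route does differ from the paper's in one meaningful way. The paper deliberately splits the contribution as $\min(y,q)$ plus an indicator precisely so that the ``min'' part is independent of the query argument $r$: summing it then needs only a \emph{one-dimensional} BST keyed by $y$ and augmented with subtree sums, while the residual indicator part needs only two-dimensional \emph{counting}, which is exactly what the already-available component of \cref{thm:range} provides. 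Your formula keeps $r$ inside the min via the indicator, so after applying the $\sum_i\min(w_i,q+1)$ identity separately to the classes $u\le r$ and $u>r$ you genuinely need two-dimensional range \emph{sums} of the $v$-coordinate, not just counts; this forces you to build a richer secondary-tree augmentation rather than reusing the generic range-counting structure plus a cheap 1D tree. Both decompositions land at $\Oh(\log^2 n)$ per operation, so the difference is one of engineering: the paper's buys modularity and reuse, yours buys a single uniform per-element formula and a one-shot reduction to standard rectangle aggregates.
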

\begin{proof}

  Observe that for any $e \in \Zz$, letting $x := e \bmod n$ and $y :=
  \lfloor \tfrac{e}{h} \rfloor$, it holds
  \begin{equation}\label{eq:mod-queries}
    |\{j \in [0 \dd e) : j \bmod h = r\text{ and }\lfloor
    \tfrac{j}{h} \rfloor \leq q\}| = \min(y, q) + \delta,
  \end{equation}
  where $\delta = 1$ if and only if $r \leq x$ and $y \leq q$.  The
  basic idea is to maintain two data structures, each responsible for
  computing the aggregate value (summed over all coordinates in $Q$)
  of one of the terms on the right-hand side of \cref{eq:mod-queries}.
  The key property of this separation is that computing the sum of the
  first terms does not depend on the query argument $r$, and hence can
  be reduced to a prefix sum query. On the other hand, the second term
  for each integer in $Q$ contributes either zero or one to the total
  sum, and hence can be reduced to an orthogonal range counting query.

  Denote $Q = \{(e_1, \ell_1), \ldots, (e_k, \ell_k)\}$, where $e_1
  \leq \dots \leq e_k$. For any $i \in [1 \dd k]$, let $x_i = e_i
  \bmod h$ and $y_i = \lfloor \tfrac{e_i}{h} \rfloor$.  Denote
  $\mathcal{S} = ((y_1, \ell_1), \ldots, (y_k, \ell_k))$ and $\Pts =
  \{(x_i, y_i, \ell_i) : i \in [1 \dd k]\}$. Then, by
  \cref{eq:mod-queries}, if we let $j \in [0 \dd k]$ be the largest
  integer satisfying $\max\{y_1, \ldots, y_j\} < q$, it holds
  \begin{align*}
    \mcount{Q}{h}{r}{q}
      &= \sum_{i=1}^{k}\min(y_i, q) +
         {\big|}\{i \in [1 \dd k] : r \leq x_i < h\text{ and }y_i
         \leq q\}{\big|}\\
      &= \sum_{i=1}^{j}y_i + q(k-j) +
         \rcount{\Pts}{r}{h}{q {+} 1}.
  \end{align*}

  To compute the first two terms, we store elements $(y_i,\ell_i)$ of
  $\mathcal{S}$ using a balanced binary search tree (e.g., an AVL
  tree~\cite{AVL}) ordered by $y_i$. A node $v_i$ corresponding to
  $(y_i,\ell_i)$ is augmented with the value $\sum_{j \in J(v_i)}y_j$,
  where $J(v_i)$ contains all $j \in [1 \dd k]$ such $v_j$ is in the
  subtree rooted in $v_i$. Given such representation of $\mathcal{S}$,
  we can compute $j$ together with $\sum_{i=1}^{j}y_i$ in $\bigO(\log
  n)$ time. To compute the third term, we apply \cref{thm:range} for
  $\Pts$. The query takes $\bigO(\log^2 n)$ time.

  To insert or delete a labelled coordinate $(v,\ell) \in [0 \dd 2^w)
  \times \Z$ from $Q$, we first compute $x = v \bmod h$ and $y =
  \lfloor \tfrac{v}{h} \rfloor$. Adding/deleting $(y,\ell)$ from the
  first component of our structure is straightforward and takes
  $\bigO(\log n)$ time. The extra values in each node can be computed
  using the information stored in the node and its children, and thus
  can be maintained during rotations. By \cref{thm:range}, insertions
  and deletions on the second structure take $\bigO(\log^2 n)$ time.
\end{proof}

\begin{corollary}\label{cor:mod-queries}
  Let $h, n \in [0 \dd 2^w)$. There is a deterministic data structure
  that maintains a set of labelled intervals $\Ints \sub [0 \dd n)^2
  \times \Z$ of size $|\Ints| \leq n$ so that updates $($insertions
  and deletions in $\Ints)$ take $\bigO(\log^2 n)$ time and modular
  constraint counting (resp.\ selection) queries are answered in
  $\bigO(\log^2 n)$ (resp.\ $\bigO(\log^3 n)$) time.
\end{corollary}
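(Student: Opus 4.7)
The plan is to reduce the two-sided interval case to the one-sided (labelled coordinate) case already handled by \cref{pr:mod-queries}. The key observation is that, for any integers $b \leq e$, the quantity $|\{j \in [b \dd e) : j \bmod h = r\text{ and }\lfloor j/h \rfloor \leq q\}|$ equals the analogous quantity for $[0 \dd e)$ minus the same for $[0 \dd b)$. Summing this identity over all $(b,e,\ell) \in \Ints$, the value $\mcount{\Ints}{h}{r}{q}$ decomposes as a difference in which each labelled interval contributes positively through its right endpoint and negatively through its left endpoint.

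Concretely, I would maintain two instances $D_{e}$ and $D_{b}$ of the structure from \cref{pr:mod-queries}, each over a set of labelled coordinates in $[0 \dd n) \times \Z$. For every $(b,e,\ell) \in \Ints$, the coordinate $(e,\ell)$ is inserted into $D_e$ and $(b,\ell)$ into $D_b$; reusing the label $\ell$ across both components makes the deletion of $(b,e,\ell)$ unambiguous, as it amounts to removing the unique entry with label $\ell$ from each of $D_e$ and $D_b$. Each update to $\Ints$ thus triggers a constant number of underlying updates, costing $\bigO(\log^2 n)$ time by \cref{pr:mod-queries}. A counting query is then answered as $\mcount{\Ints}{h}{r}{q} = \mcount{Q_e}{h}{r}{q} - \mcount{Q_b}{h}{r}{q}$, where $Q_e$ and $Q_b$ denote the coordinate sets currently stored in $D_e$ and $D_b$; this costs two underlying counting queries, i.e.\ $\bigO(\log^2 n)$.

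For a selection query I would binary search for $q$ in the universe $[0 \dd n)$ (a valid overestimate since all interval endpoints lie in $[0 \dd n)$), using the above counting scheme as an oracle. Since the map $q \mapsto \mcount{\Ints}{h}{r}{q}$ is non-decreasing, the search converges in $\bigO(\log n)$ oracle calls to the unique $q$ with $c \in (\mcount{\Ints}{h}{r}{q-1} \dd \mcount{\Ints}{h}{r}{q}]$, for a total of $\bigO(\log^3 n)$. I do not anticipate a conceptual obstacle: the whole plan rests on the additive decomposition of two-sided intervals into one-sided ones, so the only care needed is the bookkeeping that keeps $D_e$ and $D_b$ synchronized and ensures that deletions locate both companion entries via the shared label $\ell$.
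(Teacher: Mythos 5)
Your proposal is correct and follows essentially the same route as the paper: two instances of the structure from \cref{pr:mod-queries}, one storing left endpoints and one storing right endpoints, with counting answered by subtraction and selection by binary search over $[0\dd n)$ using the counting oracle. No gaps.
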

\begin{proof}

  We keep two instances of the structure from \cref{pr:mod-queries},
  one for the set of labelled coordinates $Q^{-}$ containing left
  endpoints of intervals in $\mathcal{I}$ and one for the set $Q^{+}$
  of right endpoints.

  Given any $r \in [0 \dd h)$ and $q \in \Zz$, we return
  $\mcount{\mathcal{I}}{h}{r}{q} = \mcount{Q^{+}}{h}{r}{q} -
  \mcount{Q^{-}}{h}{r}{q}$ in $\bigO(\log^2 n)$ time. The select query
  on $\mathcal{I}$ is implemented in $\bigO(\log^3 n)$ time by using
  binary search (in the range $[0 \dd n)$) and modular constraint
  counting queries.

  Updates simply insert/delete the left (resp.\ right) endpoint of the
  input interval into the structure for $Q^{-}$ (resp.\ $Q^{+}$).  By
  \cref{pr:mod-queries}, each update to $\mathcal{I}$ takes
  $\bigO(\log^2 n)$ time.
\end{proof}

\section{\texorpdfstring{$\SA$}{SA} Query Algorithm}\label{sec:sa}

Let $\T \in \Sigma^n$. In this section we show that under some small
set of assumptions about the ability to perform queries on string
synchronizing sets~\cite{sss}, we can perform $\SA$ queries (i.e.,
returning the value $\SA[i]$, given any position $i \in [1 \dd n]$)
for $\T$.

Let $\ell \geq 1$. For any $j \in [1 \dd n]$, we define
\begin{align*}
  \Occ_{\ell}(j)
    &= \{j' \in [1 \dd n] : \T^{\infty}[j' \dd j' + \ell) = 
       T^{\infty}[j \dd j + \ell)\},\\
  \LB_{\ell}(j)
    &= |\{j' \in [1 \dd n] : \T[j' \dd n] \prec \T[j \dd n]
       \text{ and }\LCE_{\T}(j, j) < \ell\}|, \text{ and }\\
  \UB_{\ell}(j)
    &= \LB_{\ell}(j) + |\Occ_{\ell}(j)|.
\end{align*}
For the motivation of the above names, note that viewing $P :=
\T^{\infty}[j \dd j + \ell)$ as a pattern, we have $\{\SA[i] : i \in
(\LB_{\ell}(j) \dd \UB_{\ell}(j)]\} = \{i \in [1 \dd n] :
\T^{\infty}[i \dd i + |P|) = P]\}$.

Moreover, for any $j \in [1 \dd n]$, we define
\begin{align*}
  \Posalll(j) &=
    \{j' \in [1 \dd n] : \T[j' \dd n] \prec \T[j \dd n]
      \text{ and }\LCE_{\T}(j',j) \in [\ell \dd 2\ell)\}\text{ and }\\
  \Posallg(j) &=
    \{j' \in [1 \dd n] : \T[j' \dd n] \succ \T[j \dd n]
      \text{ and }\LCE_{\T}(j',j) \in [\ell \dd 2\ell)\}.
\end{align*}
We denote the size of $\Posalll(j)$ and $\Posallg(j)$ as
$\deltal_{\ell}(j) = |\Posalll(j)|$ and $\deltag_{\ell}(j) =
|\Posallg(j)|$.  Then, it holds $(\LB_{2\ell}(j), \UB_{2\ell}(j)) =
(\LB_{\ell}(j) + \deltal_{\ell}(j), \LB_{\ell}(j) + \deltal_{\ell}(j)
+ |\Occ_{2\ell}(j)|)$, $(\LB_{2\ell}(j), \UB_{2\ell}(j)) =
(\UB_{\ell}(j) - \deltag_{\ell}(j) - |\Occ_{2\ell}(j)|, \UB_{\ell}(j)
- \deltag_{\ell}(j))$.

\begin{assumption}\label{as:small}
  For any $i \in [1 \dd n]$, we can compute some $j \in
  \Occ_{16}(\SA[i])$ and the pair $(\LB_{16}(\SA[i]),
  \UB_{16}(\SA[i]))$ in $\bigO(t)$ time.
\end{assumption}

The main idea of our algorithm to compute $\SA[i]$ is as follows.
Suppose that we have obtained some $j \in \Occ_{16}(\SA[i])$ and
$(\LB_{16}(\SA[i]), \UB_{16}(\SA[i]))$ using \cref{as:small}. Then,
for $q = 4, \ldots, \lceil \log n \rceil - 1$, we compute
$(\LB_{2^{q+1}}(\SA[i]), (\UB_{2^{q+1}}(\SA[i]))$ and some $j' \in
\Occ_{2^{q+1}}(\SA[i])$, by using as input some position $j \in
\Occ_{2^q}(\SA[i])$ and the pair of integers $(\LB_{2^q}(\SA[i]),
\UB_{2^q}(\SA[i]))$, i.e., the output of the earlier step. This
algorithm lets us compute $\SA[i]$, since eventually we obtain some
$j' \in \Occ_{2^{\lceil \log n \rceil}}(\SA[i])$, and for any $k \geq
n$, $\Occ_{k}(\SA[i]) = \{\SA[i]\}$, i.e., the final computed position
must satisfy $j' = \SA[i]$.  To implement the above strategy, we now
need to present a query algorithm parameterized by any $\ell \geq 16$
that, given $i \in [1 \dd n]$ along with some $j \in
\Occ_{\ell}(\SA[i])$ and the pair $(\LB_{\ell}(\SA[i]),
\UB_{\ell}(\SA[i]))$ as input, returns some $j' \in
\Occ_{2\ell}(\SA[i])$ and the pair $(\LB_{2\ell}(\SA[i]),
\UB_{2\ell}(\SA[i]))$.

Let us now fix some $\ell \in [16 \dd n)$. For any $\tau \in \Zp$ we
define
\begin{equation}\label{eq:R}
  \R(\tau,\T) := \{i \in [1 \dd |\T| - 3\tau + 2] : \per(\T[i \dd i +
  3\tau - 2]) \leq \tfrac{1}{3}\tau\}.
\end{equation}

The rest of this section is organized into fours subsections. In the
first three subsections, we show that under three assumptions
(\cref{as:core,as:nonperiodic,as:periodic}) about the ability to
perform some queries in a black-box manner, given any index $i \in [1
\dd n]$ along with some position $j \in \Occ_{\ell}(\SA[i])$ and the
pair of integers $(\LB_{\ell}(\SA[i]), \UB_{\ell}(\SA[i]))$ as input,
we can efficiently compute some $j' \in \Occ_{2\ell}(\SA[i])$ and the
pair $(\LB_{2\ell}(\SA[i]), \UB_{2\ell}(\SA[i]))$ as output.  In the
first of these three subsections (\cref{sec:sa-core}) we show the
algorithm to efficiently determine if the query position $i \in [1 \dd
  n]$ satisfies $\SA[i] \in \R(\lfloor \tfrac{\ell}{3} \rfloor, \T)$.
In \cref{sec:sa-nonperiodic} (resp.\ \cref{sec:sa-periodic}) we then
present the query algorithm to efficiently compute some position $j'
\in \Occ_{2\ell}(\SA[i])$ and the pair $(\LB_{2\ell}(\SA[i]),
\UB_{2\ell}(\SA[i]))$ from $j \in \Occ_{2\ell}(\SA[i])$ and
$(\LB_{\ell}(\SA[i]), \UB_{\ell}(\SA[i]))$ for the case $\SA[i] \in [1
\dd n] \sm \R(\lfloor \tfrac{\ell}{3} \rfloor, \T)$ (resp.\ $\SA[i]
\in \R(\lfloor \tfrac{\ell}{3} \rfloor, \T)$).  All steps of the query
algorithms are put together in \cref{sec:sa-final}.

\vspace{1ex}
\begin{remark}
  Note that this section only presents a \emph{reduction} of $\SA$
  query to some other queries. To obtain the data structure for $\SA$
  queries, we need to still prove that such queries can be
  implemented, i.e., that
  \cref{as:small,as:core,as:nonperiodic,as:periodic} can be satisfied
  with some small query time $t$. We will prove this in later
  sections, and for now focus only on showing the reduction.
\end{remark}

\subsection{The Index Core}\label{sec:sa-core}

Assume $\ell \in [16 \dd n)$. In this section, we show that if for any
$i \in [1 \dd n]$ we can efficiently check if $i \in \R(\lfloor
\tfrac{\ell}{3} \rfloor, \T)$ (\cref{as:core}), then given any $j \in
\Occ_{\ell}(\SA[i])$ we can equally efficiently determine if $\SA[i]
\in \R(\lfloor \tfrac{\ell}{3} \rfloor,\T)$ (i.e., if $\SA[i]$ is a
\emph{periodic position}) or $\SA[i] \in [1 \dd n] \sm \R(\lfloor
\tfrac{\ell}{3} \rfloor,\T)$ (such $\SA[i]$ is called a
\emph{nonperiodic position}).

\begin{assumption}\label{as:core}
  For any $i \in [1 \dd n]$, we can in $\bigO(t)$ time check if $i \in
  \R(\tau,\T)$, where $\tau = \lfloor \tfrac{\ell}{3} \rfloor$.
\end{assumption}

\begin{proposition}\label{pr:sa-core}
  Let $i \in [1 \dd n]$. Under \cref{as:core}, given any position $j
  \in \Occ_{\ell}(\SA[i])$, we can in $\bigO(t)$ time compute a bit
  indicating whether $\SA[i] \in \R(\lfloor \tfrac{\ell}{3} \rfloor,
  \T)$ holds.
\end{proposition}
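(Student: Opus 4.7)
The plan is to exploit the remark made in the technical overview: membership in $\R(\tau,\T)$ depends only on a length-$(3\tau-1)$ window of $\T$ at the given position, and since $j\in \Occ_\ell(\SA[i])$ forces $j$ and $\SA[i]$ to share such a window, the two membership bits must agree.

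More concretely, I would first set $\tau = \lfloor \ell/3\rfloor$, note that $3\tau-1\le \ell$, and unfold $j\in \Occ_\ell(\SA[i])$ to the character-by-character identity $\T^\infty[j+k]=\T^\infty[\SA[i]+k]$ for all $k\in [0\dd \ell)$, and in particular for $k\in [0\dd 3\tau-1)$. Then I would claim that $\SA[i]\in \R(\tau,\T)$ iff $j\in \R(\tau,\T)$, and simply return the value of the latter, obtained in $\bigO(t)$ time via \cref{as:core}.

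To justify the claim, I would handle two cases. If $j\in [1\dd n-3\tau+2]$, then the length-$(3\tau-1)$ window of $j$ lies inside $\T$, and by the equality above the same holds for $\SA[i]$, so $\T[j\dd j+3\tau-1)=\T[\SA[i]\dd \SA[i]+3\tau-1)$ as honest substrings of $\T$; since the defining inequality $\per(\cdot)\le \tfrac{\tau}{3}$ depends only on this fragment, both positions are in $\R(\tau,\T)$ or both are outside. The remaining case is $j>n-3\tau+2$: here the window wraps around in $\T^\infty$, so it contains the unique sentinel $\texttt{\$}$, and the character-equality above forces the window at $\SA[i]$ to contain $\texttt{\$}$ at the same offset, putting $\SA[i]$ also in $(n-3\tau+2\dd n]$. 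By \cref{eq:R}, neither position lies in $\R(\tau,\T)$, so again the two bits agree (both are $0$).

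I do not foresee any real obstacle here; the only subtlety worth spelling out is the wrap-around case, which is resolved cleanly by the uniqueness of $\texttt{\$}$ inside $\T$. Everything else is a direct application of \cref{as:core}, with no additional data structures involved.
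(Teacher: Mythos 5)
Your proposal is correct and follows essentially the same route as the paper: reduce the test to $j \in \R(\tau,\T)$ via \cref{as:core}, justified by the fact that membership in $\R(\tau,\T)$ depends only on the length-$(3\tau-1)$ window, which $j$ and $\SA[i]$ share since $3\tau-1\le\ell$. Your explicit treatment of the wrap-around case via the uniqueness of $\texttt{\$}$ is a point the paper's terse proof glosses over, but it is handled correctly and does not change the argument.
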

\begin{proof}
  Let $\tau = \lfloor \tfrac{\ell}{3} \rfloor$. Observe that for any
  $j \in [1 \dd n - 3\tau + 2]$, $j \in \R$ holds if and only if
  $\per(\T[i \dd i + 3\tau - 1)) \leq \tfrac{1}{3}\tau$.  In other
  words, $j \in \R(\tau,\T)$ depends only on the substring $\T[j \dd j
  + 3\tau - 1)$. Therefore, by $3\tau - 1 \leq \ell$ and $j \in
  \Occ_{\ell}(\SA[i])$, $\SA[i] \in \R(\tau,\T)$ holds if and only if
  $j \in \R(\tau,\T)$.
\end{proof}

\subsection{The Nonperiodic Positions}\label{sec:sa-nonperiodic}

Let $\tau = \lfloor \tfrac{\ell}{3} \rfloor$.  In this section, we
show that assuming that for some $\tau$-synchronizing set $\S$ we can
efficiently perform $\Successor_{\S}$ queries (with
$\Successor_{\S}(i) := \min\{j \in \S\cup\{|T|-2\tau+2 : j \geq i\}$
for any $i \in [1 \dd n - 2\tau + 1]$), and support some string-string
range queries (\cref{as:nonperiodic}), given a position $i \in [1 \dd
n]$ satisfying $\SA[i] \in [1 \dd n] \sm \R(\tau,\T)$, along with
some $j \in \Occ_{\ell}(\SA[i])$ and the pair $(\LB_{\ell}(\SA[i]),
\UB_{\ell}(\SA[i]))$ as input, we can efficiently compute some $j' \in
\Occ_{2\ell}(\SA[i])$ and the pair $(\LB_{2\ell}(\SA[i]),
\UB_{2\ell}(\SA[i]))$.

\begin{assumption}\label{as:nonperiodic}
  For some $\tau$-synchronizing set $\S$ of $\T$ $($where $\tau =
  \lfloor \tfrac{\ell}{3} \rfloor)$ the queries $\Successor_{\S}(i)$
  $($where $i \in [1 \dd n - 3\tau + 1] \sm \R(\tau,\T))$ and the
  string-string range queries (\cref{prob:str-str}) for text $\T$,
  integer $q = 7\tau$, and the set of positions $\mathsf{P} = \S$ can
  be supported in $\bigO(t)$ time.
\end{assumption}

\begin{remark}
  Note that by $\ell < n$ and $3\tau \leq \ell$, the value $q = 7\tau
  \leq 2\ell + \tau < 3n$ in the above assumption satisfies the
  requirement of \cref{prob:str-str}.
\end{remark}

The section is organized into four parts. First, we present
combinatorial results showing how to reduce LCE queries and prefix
conditions to substring inequalities
(\cref{sec:sa-nonperiodic-prelim}).  Next, we show how under
\cref{as:nonperiodic} to combine the properties of synchronizing sets
with these reductions to compute the cardinalities of sets
$\Posalll(j)$ and $\Occ_{2\ell}(j)$
(\cref{sec:sa-nonperiodic-pos-and-occ-size}). Then, in
\cref{sec:sa-nonperiodic-occ-pos}, we show how under the same
assumptions to efficiently compute some $j' \in \Occ_{2\ell}(\SA[i])$.
Finally, in \cref{sec:sa-nonperiodic-structure}, we put everything
together.

\subsubsection{Preliminaries}\label{sec:sa-nonperiodic-prelim}

We start with a combinatorial result that shows the three fundamental
reductions. In the first and second, we show an equivalence between
LCE queries for suffixes of $\T$ and comparisons of substrings of
$\T$. These results will be used to derive the characterization of the
set $\Posalll(j)$ and its components. In the third reduction we show
an equivalence, where LCE queries are replaced with substring
equalities. This will be used to characterize the set
$\Occ_{2\ell}(j)$.

\begin{lemma}\label{lm:utils}
  Let $j \in [1 \dd n]$ and $c = \max\Sigma$. Then:
  \begin{enumerate}
  \item\label{lm:utils-1} If $0 \leq \ell_1 < \ell_2 \leq \ell_3$ then
    for any $j' \in [1 \dd n]$, $\T[j' \dd n] \prec \T[j \dd n]$ and
    $\LCE_{\T}(j, j') \in [\ell_1 \dd \ell_2)$ holds if and only if
    $\T^{\infty}[j \dd j + \ell_1) \preceq \T^{\infty}[j' \dd j' +
    \ell_3) \prec \T^{\infty}[j \dd j + \ell_2)$.
  \item\label{lm:utils-2} If $0 \leq \ell_1 \leq \ell_2$ then for any
    $j' \in [1 \dd n]$, $\T[j' \dd n] \succeq \T[j \dd n]$ or
    $\LCE_{\T}(j, j') \geq \ell_1$ holds if and only of
    $\T^{\infty}[j' \dd j' + \ell_2) \succeq \T^{\infty}[j \dd j +
    \ell_1)$.
  \item\label{lm:utils-3} If $0 \leq \ell_1 \leq \ell_2$ then for any
    $j' \in [1 \dd n]$, $\T^{\infty}[j' \dd j' + \ell_1) =
    \T^{\infty}[j \dd j + \ell_1)$ holds if and only if $\T^{\infty}[j
    \dd j + \ell_1) \preceq \T^{\infty}[j' \dd j' + \ell_2) \prec
    \T^{\infty}[j \dd j + \ell_1)c^{\infty}$.
  \end{enumerate}
\end{lemma}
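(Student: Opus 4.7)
The plan is to reduce all three equivalences to a single preliminary observation and then track where the first mismatch between $\T^{\infty}[j \dd]$ and $\T^{\infty}[j' \dd]$ sits relative to $\ell_1$, $\ell_2$, and $\ell_3$. Since $\$ = \min\Sigma$ occurs uniquely at position $n$ of $\T$, for any distinct $j, j' \in [1 \dd n]$ the infinite tails $\T^{\infty}[j \dd]$ and $\T^{\infty}[j' \dd]$ must disagree at some offset that lies strictly inside $\T$ (otherwise $\$$ would have to occur twice). Consequently, the LCP of these two infinite tails equals $\LCE_{\T}(j, j')$, and the sign of the comparison between $\T[j' \dd n]$ and $\T[j \dd n]$ coincides with the sign of the character comparison at this first mismatch. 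I would state this as a short preliminary at the start of the proof.

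For item~\ref{lm:utils-1}, the forward direction follows by letting $p = \LCE_{\T}(j, j')$: since $\ell_1 \leq p < \ell_2 \leq \ell_3$, the mismatch at offset $p$ lies inside both $\T^{\infty}[j \dd j + \ell_2)$ and $\T^{\infty}[j' \dd j' + \ell_3)$, and the sign $\T^{\infty}[j' + p] \prec \T^{\infty}[j + p]$ (given by $\T[j' \dd n] \prec \T[j \dd n]$) yields $\T^{\infty}[j' \dd j' + \ell_3) \prec \T^{\infty}[j \dd j + \ell_2)$; meanwhile $\ell_1 \leq p$ makes $\T^{\infty}[j \dd j + \ell_1)$ a prefix of $\T^{\infty}[j' \dd j' + \ell_3)$. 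Conversely, the strict inequality $\T^{\infty}[j' \dd j' + \ell_3) \prec \T^{\infty}[j \dd j + \ell_2)$ cannot be witnessed by one side being a prefix of the other (since $\ell_3 \geq \ell_2$), so there is a first mismatch at some offset $q < \ell_2$; the preliminary identifies $q$ with $\LCE_{\T}(j, j')$ and the sign $\T^{\infty}[j' + q] \prec \T^{\infty}[j + q]$ gives $\T[j' \dd n] \prec \T[j \dd n]$. To rule out $q < \ell_1$, I would note that otherwise both $\T^{\infty}[j \dd j + \ell_1)$ and $\T^{\infty}[j' \dd j' + \ell_3)$ would contain position $q$, and the sign at $q$ would force $\T^{\infty}[j \dd j + \ell_1) \succ \T^{\infty}[j' \dd j' + \ell_3)$, contradicting the remaining assumption.

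Items~\ref{lm:utils-2} and~\ref{lm:utils-3} follow the same template. For item~\ref{lm:utils-2}, setting $p = \LCE_{\T}(j, j')$ I would split on $p \geq \ell_1$ (in which case $\T^{\infty}[j \dd j + \ell_1)$ is a prefix of $\T^{\infty}[j' \dd j' + \ell_2)$) versus $p < \ell_1$ (in which the mismatch at $p$ lies inside both windows and its direction follows from the order of $\T[j' \dd n]$ and $\T[j \dd n]$); the reverse direction is the contrapositive of the same analysis. For item~\ref{lm:utils-3}, the forward direction combines the prefix property $\T^{\infty}[j' \dd j' + \ell_1) = \T^{\infty}[j \dd j + \ell_1)$ with the fact that $c = \max\Sigma$ dominates every character of $\T^{\infty}$, so $\T^{\infty}[j' \dd j' + \ell_2)$ is componentwise at most $\T^{\infty}[j \dd j + \ell_1) c^{\infty}$, with strictness coming from the finite length $\ell_2$ against the infinite right-hand side. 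The subtle point is the reverse direction: I would assume a mismatch at some offset $q < \ell_1$ between $\T^{\infty}[j \dd]$ and $\T^{\infty}[j' \dd]$ and derive a contradiction by noting that the sign at $q$ forced by $\T^{\infty}[j \dd j + \ell_1) \preceq \T^{\infty}[j' \dd j' + \ell_2)$ is exactly opposite to the sign forced by $\T^{\infty}[j' \dd j' + \ell_2) \prec \T^{\infty}[j \dd j + \ell_1) c^{\infty}$, since the two right-hand sides agree at every offset $< \ell_1$. The main bookkeeping obstacle throughout is tracking the relative positions of $q$, $\ell_1$, $\ell_2$, and $\ell_3$; once the preliminary absorbs the interaction with the $\$$ terminator, the remaining work is elementary case analysis.
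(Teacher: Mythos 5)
Your proposal is correct and follows essentially the same route as the paper's proof: both reduce each equivalence to locating the first mismatch between the two tails and using the uniqueness of $\T[n]=\$$ to identify that mismatch offset with $\LCE_{\T}(j,j')$ and to transfer the sign of the character comparison to the suffix order. The only detail to add when writing it out in full is the degenerate case $j=j'$ in items~\ref{lm:utils-2} and~\ref{lm:utils-3} (where $\LCE_{\T}(j,j)=n-j+1$ may be smaller than $\ell_1$ yet there is no mismatch), which the paper treats explicitly and which is trivially consistent with both sides of each equivalence.
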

\begin{proof}

  1. Assume $\T[j' \dd n] \prec \T[j \dd n]$ and $\LCE_{\T}(j, j') \in
  [\ell_1 \dd \ell_2)$. By $\ell_1 \leq \ell_3$, this implies
  $\T^{\infty}[j \dd j + \ell_1) = \T^{\infty}[j' \dd j' + \ell_1)
  \preceq \T^{\infty}[j' \dd j' + \ell_3)$.  To show the second
  condition, denote $\ell = \LCE_{\T}(j, j')$. From $\T[j' \dd n]
  \prec \T[j \dd n]$ we obtain $j \neq j'$.  Thus, by the uniqueness
  of $\T[n] = \texttt{\$}$ in $\T$, we must have $\T[j' \dd j' + \ell)
  = \T[j \dd j + \ell)$ and $\T[j' + \ell] \prec \T[j + \ell]$, or
  equivalently, $\T[j' \dd j' + \ell + 1) \prec \T[j \dd j + \ell +
  1)$.  By $\ell + 1 \leq \ell_3$ and $\ell + 1 \leq \ell_2$, and
  since appending symbols at the end of the distinct equal-length
  substrings does not change their lexicographical order, we obtain
  $\T^{\infty}[j' \dd j' + \ell_3) \prec \T^{\infty}[j \dd j +
  \ell_2)$.

  To show the opposite implication, assume $\T^{\infty}[j \dd j +
  \ell_1) \preceq \T^{\infty}[j' \dd j' + \ell_3) \prec \T^{\infty}[j
  \dd j + \ell_2)$. First, note that $j \neq j'$, since otherwise, by
  $\ell_2 \leq \ell_3$, $\T^{\infty}[j \dd j + \ell_2)$ would be a
  prefix of $\T^{\infty}[j' \dd j' + \ell_3)$ and hence
  $\T^{\infty}[j' \dd j' + \ell_3) \succeq \T^{\infty}[j \dd j +
  \ell_2)$.  Denote $\ell = \LCE_{\T}(j, j')$. By $j \neq j'$ and the
  uniqueness of $\T[n] = \texttt{\$}$, we have $\max(j + \ell, j' +
  \ell) \leq n$. Suppose $\ell < \ell_1$ and consider two cases:
  \begin{itemize}
  \item If $\T[j + \ell] \prec \T[j + \ell]$ then $\T[j' \dd j' + \ell
    + 1) \succ \T[j \dd j + \ell + 1)$, which by $\ell + 1 \leq
    \ell_3$ and $\ell + 1 \leq \ell_2$ implies $\T^{\infty}[j' \dd j'
    + \ell_3) \succ \T^{\infty}[j \dd j + \ell_2)$, contradicting the
    assumption.
  \item On the other hand, if $\T[j + \ell] \succ \T[j' + \ell]$, then
    by $\ell + 1 \leq \ell_1$ and $\ell + 1 \leq \ell_3$ we obtain
    $\T^{\infty}[j \dd j + \ell_1) \succ \T^{\infty}[j' \dd j' +
      \ell_3)$.
  \end{itemize}
  Therefore, we must have $\ell \geq \ell_1$. Next, we note that
  $\T^{\infty}[j' \dd j' + \ell_3) \prec \T^{\infty}[j \dd j +
  \ell_2)$ implies $\T^{\infty}[j' \dd j' + \ell_2) \prec
  \T^{\infty}[j \dd j + \ell_2)$, which in turn gives $\ell <
  \ell_2$. Thus, it remains to show $\T[j' \dd n] \prec \T[j \dd
  n]$. For that it suffices to notice that knowing $\ell \in [\ell_1
  \dd \ell_2)$, the assumption $\T[j' + \ell] \succ[j + \ell]$ implies
  $\T^{\infty}[j' \dd j' + \ell + 1) \succ \T^{\infty}[j \dd j + \ell
  + 1)$, which in turn implies $\T^{\infty}[j' \dd j' + \ell_3) \succ
  \T^{\infty}[j \dd j + \ell_2)$, contradicting the assumption.

  2. Assume that it holds $\T[j' \dd n] \succeq \T[j \dd n]$ or
  $\LCE_{\T}(j, j') \geq \ell_1$. Let us first assume $\T[j' \dd n]
  \succeq \T[j \dd n]$. If $j' = j$, then by $\ell_1 \leq \ell_2$ we
  obtain $\T^{\infty}[j' \dd j' + \ell_2) \succeq \T^{\infty}[j' \dd
  j' + \ell_1) = \T^{\infty}[j \dd j + \ell_1)$.  Let us thus assume
  $j \neq j'$ and let $\ell = \LCE_{\T}(j, j')$.  By $j \neq j'$ and
  the uniqueness of $\T[n] = \texttt{\$}$, we have $\max(j + \ell, j'
  + \ell) \leq n$. Consider two cases:
  \begin{itemize}
  \item If $\ell \geq \ell_1$, then it holds $\T^{\infty}[j' \dd j' +
    \ell_2) \succeq \T^{\infty}[j' \dd j' + \ell_1) = \T[j' \dd j' +
    \ell_1) = \T[j \dd j + \ell_1) = \T^{\infty}[j \dd j + \ell_1)$,
    i.e., we obtain the claim.
  \item Otherwise, (i.e., $\ell < \ell_1$), by $\T[j' \dd n] \succeq
    \T[j \dd n]$, we must have $\T[j' + \ell] \succ \T[j +
    \ell]$. Thus, $\T[j' \dd j' + \ell + 1) \succ \T[j \dd j + \ell +
    1)$.  Appending $\ell_1 - (\ell + 1)$ symbols after the mismatch
    does not change the other between suffixes.  Thus, $T^{\infty}[j'
    \dd j' + \ell_2) \succeq \T^{\infty}[j' \dd j' + \ell_1) \succ
    \T^{\infty}[j \dd j + \ell_1)$.
  \end{itemize}
  Let us now assume $\LCE_{\T}(j, j') \geq \ell_1$. Then we obtain
  $\T^{\infty}[j' \dd j' + \ell_2) \succeq \T^{\infty}[j' \dd j' +
  \ell_1) = \T[j' \dd j' + \ell_1) = \T[j \dd j + \ell_1) =
  \T^{\infty}[j \dd j + \ell_1)$.

  To show the opposite implication, assume $\T^{\infty}[j' \dd j' +
  \ell_2) \succeq \T^{\infty}[j \dd j + \ell_1)$.  If $j = j'$, then
  we immediately obtain the claim, since $\T[j' \dd n] \succeq \T[j
  \dd n]$. Let us thus assume $j \neq j'$ and let $\ell = \LCE_{\T}(j,
  j')$. By the uniqueness of $\T[n] = \texttt{\$}$, it holds $\max(j +
  \ell, j' + \ell) \leq n$. If $\ell \geq \ell_1$, then we obtain the
  claim. Let us thus assume $\ell < \ell_1$. Then, $\T[j' + \ell] \neq
  \T[j' + \ell]$. If $\T[j' + \ell] \prec \T[j + \ell]$, then $\T[j'
  \dd j' + \ell + 1) \prec \T[j \dd j + \ell + 1)$. Appending symbols
  after the mismatch does not change the other between suffixes. Thus,
  this implies $\T^{\infty}[j' \dd j' + \ell_2) \prec \T^{\infty}[j
  \dd j + \ell_1)$, contradicting the assumption. Thus, we must have
  $\T[j' + \ell] \succ \T[j' + \ell]$. This implies $\T[j' \dd n]
  \succ \T[j \dd n]$.

  3. Assume $\T^{\infty}[j' \dd j' + \ell_1) = \T^{\infty}[j \dd j +
  \ell_1)$. Then, by $\ell_1 \leq \ell_2$, we first obtain
  $\T^{\infty}[j \dd j + \ell_1) = \T^{\infty}[j' \dd j' + \ell_1)
  \preceq \T^{\infty}[j' \dd j' + \ell_2)$. On the other hand, by $c =
  \max\Sigma$, $\T^{\infty}[j' \dd j' + \ell_2) = \T^{\infty}[j' \dd
  j' + \ell_1) \cdot \T^{\infty}[j' + \ell_1 \dd j' + \ell_2) \preceq
  \T^{\infty}[j \dd j + \ell_1) \cdot c^{\ell_2 - \ell_1} \prec
  \T^{\infty}[j \dd j + \ell_1) \cdot c^{\infty}$.

  To show the opposite implication, assume $\T^{\infty}[j \dd j +
  \ell_1) \preceq \T^{\infty}[j' \dd j' + \ell_2) \prec \T^{\infty}[j
  \dd j + \ell_1)c^{\infty}$. If $j' = j$, then we immediately obtain
  the claim. Let thus assume $j \neq j'$ and denote $\ell =
  \LCE_{\T}(j, j')$. By the uniqueness of $\T[n] = \texttt{\$}$, we
  then have $\max(j + \ell, j' + \ell) \leq n$.  Suppose $\ell <
  \ell_1$ and let us consider two cases:
  \begin{itemize}
  \item If $\T[j + \ell] \prec \T[j' + \ell]$, then we have $\T[j \dd
    j + \ell + 1) \prec \T[j' \dd j' + \ell + 1)$, which by $\ell + 1
    \leq \ell_1$ implies $\T^{\infty}[j \dd j + \ell_1) \prec
    \T^{\infty}[j' \dd j' + \ell_1)$.  Thus, in turn, by $\ell_1 \leq
    \ell_2$, implies $\T^{\infty}[j \dd j + \ell_1)c^{\infty} \prec
    \T^{\infty}[j' \dd j' + \ell_2)$, contradicting our assumption.
  \item On the other hand, if $\T[j + \ell] \succ \T[j + \ell]$, then
    $\T[j \dd j + \ell + 1) \succ \T[j' \dd j' + \ell + 1)$, which by
    $\ell + 1 \leq \ell_1$ implies $\T^{\infty}[j \dd j + \ell_1)
    \succ \T^{\infty}[j' \dd j' + \ell_1)$.  This in turn, by $\ell_1
    \leq \ell_2$, implies $\T^{\infty}[j \dd j + \ell_1) \succ
    \T^{\infty}[j' \dd j' + \ell_2)$, contradicting the assumption.
  \end{itemize}
  We have thus proved $\ell \geq \ell_1$, i.e., $\T^{\infty}[j \dd j +
  \ell_1) = \T^{\infty}[j' \dd j' + \ell_1)$.
\end{proof}

\begin{corollary}\label{cor:lbub}
  For every $j\in [1\dd n]$ and $\ell \in \Zp$, we have $\LB_{\ell}(j)
  = |\{j' \in [1 \dd n] : \T^{\infty}[j' \dd j'+\ell) \prec
  \T^\infty[j \dd j+\ell)\}|$ and $\UB_{\ell}(j) = |\{j' \in [1 \dd n]
  : \T^{\infty}[j' \dd j'+\ell) \preceq \T^\infty[j \dd j+\ell)\}| =
  \LB_{\ell}(j) + |\Occ_{\ell}(j)|$.
\end{corollary}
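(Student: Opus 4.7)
The plan is to derive both identities directly from \cref{lm:utils}, essentially just rewriting the defining conditions of $\LB_{\ell}(j)$ and $\Occ_{\ell}(j)$ into a uniform form based on substring comparisons over $\T^\infty$.

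For the first equality, I will apply \cref{lm:utils}\eqref{lm:utils-1} with the parameter choice $\ell_1 = 0$, $\ell_2 = \ell$, and $\ell_3 = \ell$ (which satisfies $0 \le \ell_1 < \ell_2 \le \ell_3$ since $\ell \in \Zp$). Under this instantiation, the lemma states that for every $j'\in [1\dd n]$, the conjunction $\T[j'\dd n] \prec \T[j\dd n]$ and $\LCE_{\T}(j',j) \in [0\dd \ell)$ is equivalent to $\T^\infty[j \dd j+\ell) \preceq \T^\infty[j'\dd j'+\ell) \prec \T^\infty[j\dd j+\ell)$; but the left inequality here collapses to a tautology when $\ell_1 = 0$ (the empty prefix), so the equivalence simplifies to the single strict inequality $\T^\infty[j'\dd j'+\ell) \prec \T^\infty[j\dd j+\ell)$. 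Since $\LCE_{\T}(j',j) \ge 0$ is automatic, the condition $\LCE_{\T}(j',j) \in [0\dd \ell)$ is the same as $\LCE_{\T}(j',j) < \ell$, matching (after correcting the apparent typo $j \mapsto j'$) the defining condition of $\LB_{\ell}(j)$. Taking cardinalities yields the first identity.

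For the second identity, I will partition the set
\[
  A := \{j' \in [1\dd n] : \T^\infty[j'\dd j'+\ell) \preceq \T^\infty[j\dd j+\ell)\}
\]
into the disjoint union $A = A_\prec \sqcup A_=$, where $A_\prec$ uses the strict inequality and $A_=$ uses equality. By the first equality already proved, $|A_\prec| = \LB_{\ell}(j)$, while $A_= = \Occ_{\ell}(j)$ directly from \cref{def:p-context}-style unfolding of the definition. Hence $|A| = \LB_{\ell}(j) + |\Occ_{\ell}(j)|$, which by the definition of $\UB_{\ell}(j)$ is exactly $\UB_{\ell}(j)$, giving both remaining equalities.

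There is no serious obstacle: the entire argument is a direct specialization of \cref{lm:utils}\eqref{lm:utils-1} combined with the definitions. The only minor subtlety worth flagging in the write-up is the (harmless) typo in the definition of $\LB_\ell(j)$ where $\LCE_{\T}(j,j)$ should read $\LCE_{\T}(j',j)$; modulo this, the proof reduces to a one-line application of the lemma plus a disjoint union.
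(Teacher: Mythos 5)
Your proof is correct and matches the intended derivation: the paper states this as an immediate corollary of \cref{lm:utils} without an explicit proof, and your instantiation of \cref{lm:utils}\eqref{lm:utils-1} with $\ell_1=0$, $\ell_2=\ell_3=\ell$ (where the left inequality degenerates to a tautology) together with the disjoint decomposition of $\preceq$ into $\prec$ and $=$ is exactly the argument the authors leave implicit. The only nitpick is your citation of \cref{def:p-context} for the identity $A_=\,=\Occ_\ell(j)$, which actually follows directly from the definition of $\Occ_\ell(j)$ at the start of \cref{sec:sa}; this does not affect correctness.
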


\subsubsection{Computing the Size of
  \texorpdfstring{$\Posalll(j)$}{Pos} and
  \texorpdfstring{$\Occ_{2\ell}(j)$}{Occ}}\label{sec:sa-nonperiodic-pos-and-occ-size}

Let $j \in [1 \dd n] \sm \R(\tau,\T)$. In this section, we show how
under \cref{as:nonperiodic} to efficiently compute the values
$|\Posalll(j)|$ and $|\Occ_{2\ell}(j)|$.

The section is organized as follows. First, we present two
combinatorial results
(\cref{lm:sa-nonperiodic-pos,lm:sa-nonperiodic-occ}) characterizing
the sets $\Posalll(j)$ and $\Occ_{2\ell}(j)$ using the string
synchronizing set $\S$. We then use these characterizations to prove a
formula for the cardinality of these sets
(\cref{lm:sa-nonperiodic-2}). We conclude with
\cref{pr:sa-nonperiodic-pos-and-occ-size} showing how under
\cref{as:nonperiodic}, to utilize this formula to quickly compute
values $|\Posalll(j)|$ and $|\Occ_{2\ell}(j)|$ given position $j$.

\begin{remark}

  Before formally characterizing sets $\Posalll(j)$ and
  $\Occ_{2\ell}(j)$, we first provide some intuition. By $j \not\in
  \R(\tau,\T)$, the position $s = \Successor_{\S}(j)$ satisfies $s - j
  < \tau$. Thus, by the consistency of $\S$ and $3\tau \leq \ell$, all
  $j' \in \Posalll(j)$ share a common offset $\delta_{\S} = s - j$
  such that $j' + \delta_{\S} = \min(\S \cap [j' \dd j'+\tau))$ and
  hence the relative lexicographical order between $\T[j' \dd n]$ and
  $\T[j \dd n]$ is the same as between $\T[j'+\delta_{\S} \dd n]$ and
  $\T[j + \delta_{\S} \dd n]$.  Consequently, it suffices to only
  consider positions in $\S$. Then, computing $|\Posalll(j)|$ reduces
  to finding those $s' \in \S$ that are:
  \begin{enumerate}
  \item\label{sa-nonperiodic-condition-1} Preceded in $\T$ by the
    string $\T[j \dd s)$, and
  \item\label{sa-nonperiodic-condition-2} For which it holds $\T[s'
    \dd n] \prec \T[s \dd n]$ and $\LCE_{\T}(s', s) \in [\ell -
    \delta_{\S} \dd 2\ell - \delta_{\S})$.
  \end{enumerate}
  For any $q \geq 2\ell$, the above
  Condition~\ref{sa-nonperiodic-condition-1} is equivalent to position
  $s'$ having a reversed left length-$q$ context in the
  lexicographical range $[X \dd X')$, where $\revstr{X} = \T[j \dd s)$
  and $X' = Xc^{\infty}$ (where $c = \max\Sigma$), and
  Condition~\ref{sa-nonperiodic-condition-2} is equivalent to position
  $s'$ having a right length-$q$ context in $[Y \dd Y')$, where $Y =
  \T^{\infty}[s \dd j + \ell)$ and $Y' = \T^{\infty}[s \dd j +
  2\ell)$.  Thus, counting such $s'$ can be reduced to a
  two-dimensional weighted orthogonal range counting query on a set of
  points having length-$q$ substrings of $\T^{\infty}$ or
  $\revstr{\T^{\infty}}$ as coordinates (see
  \cref{sec:range-queries}). Crucially, since $\tau = \lfloor
  \tfrac{\ell}{3} \rfloor$ and $\ell \geq 16$, it suffices to choose
  $q = 7\tau$ to guarantee $q \geq 2\ell$, and we will later see that
  $q = \bigO(\tau)$ is the crucial property of this reduction that
  lets us generalize the index to the dynamic case.

  The intuition for the computation of $|\Occ_{2\ell}(j)|$ is similar,
  except we observe that Condition~\ref{sa-nonperiodic-condition-2} is
  that $s'$ satisfies $\T^{\infty}[s' \dd j' + 2\ell) = \T^{\infty}[s
  \dd j + 2\ell)$, which is equivalent to $s'$ having a right
  length-$q$ context in $[Y' \dd Y'c^{\infty})$. Thus, we can count
  such $s'$ (and consequently, compute $|\Occ_{2\ell}(j)|$), using the
  same set of two-dimensional points as in the computation of
  $|\Posalll(j)|$.
\end{remark}

\begin{lemma}\label{lm:sa-nonperiodic-pos}
  Let $j \in [1 \dd n - 3\tau + 1] \sm \R(\tau,\T)$ and $s =
  \Successor_{\S}(j)$. Let $X \in \Sigma^{*}$ and $Y, Y' \in
  \Sigma^{+}$ be such that $\revstr{X} = \T[j \dd s)$, $\T^{\infty}[j
  \dd j {+} \ell) = \revstr{X}Y$, and $\T^{\infty}[j \dd j {+} 2\ell)
  = \revstr{X}Y'$. Then, for any $j' \in [1 \dd n]$, letting $s' = j'
  + |X|$, it holds
  \begin{equation*}
    j' \in \Posalll(j)\text{ if and only if }s' \,{\in}\,
    \S,\ Y\,{\preceq}\,\T^{\infty}[s' \dd s' +
    7\tau)\,{\prec}\,Y',\text{ and } \T^{\infty}[s' {-} |X| \dd s') =
    \revstr{X}.
  \end{equation*}
\end{lemma}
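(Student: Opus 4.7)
The plan is to derive the iff directly via \cref{lm:utils}(1) with parameters $\ell_1 = \ell$, $\ell_2 = 2\ell$, and $\ell_3 = |X| + 7\tau$. A preliminary step verifies the hypothesis $\ell_1 < \ell_2 \leq \ell_3$: because $j \notin \R(\tau, \T)$, the density axiom of \cref{def:sss} forces $\S \cap [j \dd j + \tau) \neq \emptyset$, so $s - j = |X| < \tau$; combining this with $\tau = \lfloor \ell/3 \rfloor \geq 5$ (which follows from $\ell \geq 16$) yields $2\ell \leq 6\tau + 4 \leq 7\tau \leq |X| + 7\tau$. The same density observation also places $s$ in $\S$ (and forces $s \leq n - 2\tau + 1$), which will be needed when invoking the consistency axiom below.

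For the forward direction, assume $j' \in \Posalll(j)$. The LCE bound $\LCE_{\T}(j', j) \geq \ell \geq |X| + 2\tau$ implies that $\T[j' \dd j' + |X| + 2\tau)$ coincides with $\T[j \dd j + |X| + 2\tau)$. Reading off the length-$|X|$ prefix gives $\T^{\infty}[s' - |X| \dd s') = \T[j \dd s) = \revstr{X}$, which is the third claimed condition. Reading off the length-$2\tau$ window starting at offset $|X|$ gives $\T[s' \dd s' + 2\tau) = \T[s \dd s + 2\tau)$; the consistency axiom of \cref{def:sss} then transfers $s \in \S$ to $s' \in \S$, which is the first claimed condition. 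Finally, \cref{lm:utils}(1) rewrites the two suffix conditions defining $j' \in \Posalll(j)$ as
$$\revstr{X}Y = \T^{\infty}[j \dd j + \ell) \preceq \T^{\infty}[j' \dd j' + |X| + 7\tau) \prec \T^{\infty}[j \dd j + 2\ell) = \revstr{X}Y'.$$
Using the third condition, the middle string factors as $\revstr{X} \cdot \T^{\infty}[s' \dd s' + 7\tau)$, and cancelling the common prefix $\revstr{X}$ yields the second condition $Y \preceq \T^{\infty}[s' \dd s' + 7\tau) \prec Y'$.

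For the reverse direction, the third condition lets us write $\T^{\infty}[j' \dd j' + |X| + 7\tau) = \revstr{X} \cdot \T^{\infty}[s' \dd s' + 7\tau)$, and prepending $\revstr{X}$ to each side of the second condition recovers $\revstr{X}Y \preceq \T^{\infty}[j' \dd j' + |X| + 7\tau) \prec \revstr{X}Y'$. The converse direction of \cref{lm:utils}(1) then delivers $\T[j' \dd n] \prec \T[j \dd n]$ together with $\LCE_{\T}(j', j) \in [\ell \dd 2\ell)$, i.e., $j' \in \Posalll(j)$. Note that the first condition $s' \in \S$ is not actually needed in this direction; it is recorded in the statement only because it is part of the characterization (and indeed automatically satisfied whenever the other two conditions are).

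The main obstacle is the careful parameter bookkeeping that justifies the specific context length $7\tau$: it must simultaneously be large enough that $|X| + 7\tau \geq 2\ell$ (so that \cref{lm:utils}(1) applies) and still $\bigO(\tau)$ (so that the derived characterization can be implemented as a range query of the form in \cref{prob:str-str} with $q = 7\tau$, which will be invoked downstream under \cref{as:nonperiodic}). Both constraints are simultaneously tight only near the threshold $\ell = 16$, which explains why the assumption $\ell \geq 16$ is baked into the overall algorithm.
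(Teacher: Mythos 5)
Your proof is correct and follows essentially the same route as the paper's: the consistency condition of $\S$ (applied to the matching length-$2\tau$ contexts at $s$ and $s'$, which exist because $\LCE_{\T}(j,j')\ge \ell \ge |X|+2\tau$) yields $s'\in\S$, and \cref{lm:utils-1} of \cref{lm:utils} together with the bound $2\ell\le 7\tau$ converts the suffix-order/LCE conditions into the substring sandwich. The only cosmetic difference is that you invoke the utility lemma at $(j,j')$ with $\ell_3=|X|+7\tau$ and then cancel the common prefix $\revstr{X}$, whereas the paper invokes it at $(s,s')$ with $\ell_3=7\tau$ after transferring the order and LCE conditions via $\LCE_{\T}(j,j')=|X|+\LCE_{\T}(s,s')$; both are valid, and your observation that $s'\in\S$ is not needed for the reverse implication matches the paper's argument.
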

\begin{proof}

  Note that by the uniqueness of $\T[n] = \texttt{\$}$, it holds
  $\per(\T[n - 3\tau + 2 \dd n]) = 3\tau - 1$ and hence $\S \cap [n -
  3\tau + 2 \dd n - 2\tau + 2) \neq \emptyset$. Thus, by $j < n -
  3\tau + 2$, $s = \Successor_{\S}(j)$ is well-defined. Moreover, by
  $j \not\in \R(\tau,\T)$, it holds $\S \cap [j \dd j + \tau) \neq
  \emptyset$.  Therefore, we have $|X| = s - j < \tau \leq \ell$, and
  hence the strings $Y$ and $Y'$ (of length $\ell - |X|$ and $2\ell -
  |X|$, respectively) are well-defined and nonempty.

  Let $j' \in \Posalll(j)$, i.e., $\T[j' \dd n] \prec \T[j \dd n]$ and
  $\LCE_{\T}(j, j') \in [\ell \dd 2\ell)$. This implies $j \neq j'$
  and $\T[j' \dd j' + \ell) = \T[j \dd j + \ell)$. Therefore, by $\ell
  - (s - j) \geq \ell - \tau \geq 2\tau$ and the consistency of the
  string synchronizing set $\S$ (\cref{def:sss}) applied for positions
  $j_1 = j + t$ and $j_2 = j' + t$, where $t \in [0 \dd |X|)$, we
  obtain $\Successor_{\S}(j') - j' = \Successor_{\S}(j) - j$, or
  equivalently, $\Successor_{\S}(j') = j' + (s - j) = s'$. Thus, it
  holds $s' \in \S$. Next, by $\T[j' \dd j' + \ell) = \T[j \dd j +
  \ell)$ and $|X| < \tau \leq \ell$, it holds $\LCE_{\T}(j, j') = |X|
  + \LCE_{\T}(s, s')$. Thus, $\LCE_{\T}(s, s') \in [\ell - |X| \dd
  2\ell - |X|)$. By \cref{lm:utils-1} in \cref{lm:utils} (with
  parameters $\ell_1 = |Y| = \ell - |X|$, $\ell_2 = |Y'| = 2\ell -
  |X|$, and $\ell_3 = 7\tau$) and $2\ell \leq 7\tau$ (holding for
  $\ell \geq 16$), this implies $Y \preceq \T^{\infty}[s' \dd s' +
  7\tau) \prec Y'$.  Finally, the equality $\T^{\infty}[j' \dd j' +
  \ell) = \T^{\infty}[j \dd j + \ell) = \revstr{X}Y$ implies
  $\T^{\infty}[s' - |X| \dd s') = \T^{\infty}[j' \dd s') =
  \revstr{X}$.

  For the opposite implication, assume $s' \in \S$, $Y \preceq
  \T^{\infty}[s' \dd s' + 7\tau) \prec Y'$, and $\T^{\infty}[s' - |X|
  \dd s') = \revstr{X}$. By \cref{lm:utils-1} in \cref{lm:utils} (with
  the same parameters as above) and $2\ell \leq 7\tau$, this implies
  $\T[s' \dd n] \prec \T[s \dd n]$ and $\LCE_{\T}(s, s') \in [\ell -
  |X| \dd 2\ell - |X|)$.  Since $\T[j \dd s) = \revstr{X}$ and by $s
  \in \S$ we have $s < n$, $X$ does not contain the symbol
  $\texttt{\$}$, and hence $j' \geq 1$. Thus, $\T^{\infty}[j' \dd s')
  = X$ implies $\T[j \dd s) = \T[j' \dd s')$, and consequently, $\T[j'
  \dd n] \prec \T[j \dd n]$ and $\LCE_{\T}(j, j') = |X| + \LCE_{\T}(s,
  s') \in [\ell \dd 2\ell)$. Thus, $j' \in \Posalll(j)$.
\end{proof}

\begin{lemma}\label{lm:sa-nonperiodic-occ}
  Let $j \in [1 \dd n - 3\tau + 1] \sm \R(\tau,\T)$, $s =
  \Successor_{\S}(j)$, and $d \in [\ell \dd 2\ell]$.  Let $X \in
  \Sigma^{*}$ and $Y, Y' \in \Sigma^{+}$ be such that $\revstr{X} =
  \T[j \dd s)$, $\T^{\infty}[j \dd j {+} \ell) = \revstr{X}Y$, and
  $\T^{\infty}[j \dd j {+} d) = \revstr{X}Y'$. Then, for any $j' \in
  [1 \dd n]$, letting $s' = j' + |X|$ and $c = \max\Sigma$, it holds
  \begin{equation*}
    j' \in \Occ_{d}(j)\text{ if and only if }s' \,{\in}\,
    \S,\ Y'\,{\preceq}\,\T^{\infty}[s' \dd s' +
    7\tau)\,{\prec}\,Y'c^{\infty},\,\text{ and }\T^{\infty}[s' {-} |X|
    \dd s') = \revstr{X}.
  \end{equation*}
\end{lemma}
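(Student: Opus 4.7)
The plan is to mirror the proof of \cref{lm:sa-nonperiodic-pos} almost verbatim, except that, since we are characterizing equality of length-$d$ right contexts rather than a strict-inequality LCE condition, the appropriate tool from \cref{lm:utils} is item~\ref{lm:utils-3} (the equality-to-inequality reduction $\T^\infty[j'\dd j'+\ell_1)=\T^\infty[j\dd j+\ell_1) \iff \T^\infty[j\dd j+\ell_1)\preceq \T^\infty[j'\dd j'+\ell_2)\prec \T^\infty[j\dd j+\ell_1)c^\infty$) rather than item~\ref{lm:utils-1}. First I would verify the well-definedness of $s$, $X$, $Y$, $Y'$: as in \cref{lm:sa-nonperiodic-pos}, $j \not\in \R(\tau,\T)$ together with the fact that $\texttt{\$}$ forces $\S$ to contain positions near the end of $\T$ guarantees that $s=\Successor_{\S}(j)$ exists and satisfies $|X|=s-j<\tau\le \ell\le d$, so $Y$ and $Y'$ are nonempty.

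For the forward implication, assume $j'\in\Occ_{d}(j)$, i.e., $\T^\infty[j'\dd j'+d)=\T^\infty[j\dd j+d)$. Since $d\ge \ell\ge 2\tau+|X|$ (using $|X|<\tau$ and $3\tau\le\ell$), I would apply the consistency condition of \cref{def:sss} to the $|X|$ pairs of positions $(j+t,j'+t)$ for $t\in[0\dd |X|)$, whose length-$2\tau$ right contexts agree; this yields $\Successor_{\S}(j')-j'=|X|$, hence $s'\in\S$. Splitting the equality $\T^\infty[j'\dd j'+d)=\revstr{X}Y'$ at position $s'=j'+|X|$ immediately gives $\T^\infty[s'-|X|\dd s')=\revstr{X}$ and $\T^\infty[s'\dd s'+|Y'|)=Y'$. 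Now I would apply \cref{lm:utils}(\ref{lm:utils-3}) with $\ell_1=|Y'|=d-|X|$ and $\ell_2=7\tau$. This application needs $\ell_2\ge \ell_1$, i.e., $7\tau\ge d-|X|$, which follows from $d\le 2\ell$ and the elementary bound $2\ell\le 7\tau$ (valid for $\ell\ge 16$ since $\tau=\lfloor \ell/3\rfloor\ge 5$ gives $2\ell\le 6\tau+4\le 7\tau$). The conclusion of item~\ref{lm:utils-3} yields $Y'\preceq \T^\infty[s'\dd s'+7\tau)\prec Y'c^\infty$, completing the forward direction.

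For the reverse implication, assume $s'\in\S$, $\T^\infty[s'-|X|\dd s')=\revstr{X}$, and $Y'\preceq \T^\infty[s'\dd s'+7\tau)\prec Y'c^\infty$. The other direction of \cref{lm:utils}(\ref{lm:utils-3}), with the same parameters, gives $\T^\infty[s'\dd s'+|Y'|)=Y'$. Concatenating with $\T^\infty[s'-|X|\dd s')=\revstr{X}$ and using $s'=j'+|X|$ produces $\T^\infty[j'\dd j'+d)=\revstr{X}Y'=\T^\infty[j\dd j+d)$, which is precisely $j'\in\Occ_{d}(j)$. Note that the reverse direction does not require invoking synchronizing-set consistency; we only need the hypothesis $s'\in\S$ to guarantee that $j'=s'-|X|\ge 1$ (since, as in the proof of \cref{lm:sa-nonperiodic-pos}, the assumption $\revstr{X}=\T[j\dd s)$ combined with $s<n$ rules out $\texttt{\$}\in X$, so the leftward extension stays inside $\T$).

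I expect the main obstacle, modest as it is, to be bookkeeping around the parameter $d$: unlike in \cref{lm:sa-nonperiodic-pos} where the length was fixed at $\ell$ or $2\ell$, here $d$ ranges over the interval $[\ell\dd 2\ell]$, and one must confirm that the ``common offset'' argument and the choice of $\ell_2=7\tau$ both work uniformly over this range. Both reduce to the single inequality $7\tau\ge d$, which, as noted, is ensured by $d\le 2\ell\le 7\tau$ for $\ell\ge 16$. Everything else is a direct transcription of the pattern of \cref{lm:sa-nonperiodic-pos} with item~\ref{lm:utils-1} replaced by item~\ref{lm:utils-3} of \cref{lm:utils}.
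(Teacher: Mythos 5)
Your proposal is correct and follows essentially the same route as the paper's proof: well-definedness of $s$, $X$, $Y'$ from $j \notin \R(\tau,\T)$; the forward direction via the consistency condition applied to the shifted pairs $(j+t, j'+t)$ together with \cref{lm:utils-3} of \cref{lm:utils} instantiated with $\ell_1 = d - |X|$ and $\ell_2 = 7\tau$; and the reverse direction by applying \cref{lm:utils-3} backwards and concatenating with $\T^{\infty}[s'-|X| \dd s') = \revstr{X}$, all hinging on the same bound $d \le 2\ell \le 7\tau$. The only cosmetic difference is that the paper splits the forward direction into the cases $j = j'$ and $j \neq j'$ (using the uniqueness of $\texttt{\$}$ to turn the $\T^{\infty}$-equality into a genuine substring equality before invoking consistency), a detail your write-up glosses over but which does not affect the argument.
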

\begin{proof}

  Similarly as in \cref{lm:sa-nonperiodic-pos}, we first observe that
  by $j < n - 3\tau + 2$, $\Successor_{\S}(j)$ is well-defined.
  Moreover, by $j \not\in \R(\tau,\T)$, it holds $\S \cap [j \dd j +
  \tau) \neq \emptyset$.  Therefore, $|X| = s - j < \tau \leq \ell$,
  and hence the strings $Y$ and $Y'$ (of length $\ell - |X|$ and $d -
  |X|$, respectively) are well-defined and nonempty.

  Let $j' \in \Occ_{d}(j)$, i.e., $\T^{\infty}[j' \dd j' + d) =
  \T^{\infty}[j \dd j + d)$.  To show $s' \in \S$, we consider two
  cases. If $j = j'$ then $s' = s \in \S$ holds by definition.
  Otherwise, by $\T^{\infty}[j \dd j + d) = \T^{\infty}[j' \dd j' +
  d)$ and the uniqueness of $\T[n] = \texttt{\$}$, we must have $\T[j
  \dd j + \ell) = \T[j' \dd j' + \ell)$.  Thus, by $\ell - (s - j)
  \geq \ell - \tau \geq 2\tau$ and the consistency of $\S$ (applied as
  in the proof of \cref{lm:sa-nonperiodic-pos}) for $j_1 = j + t$ and
  $j_2 = j' + t$, where $t \in [0 \dd |X|)$, we obtain
  $\Successor_{\S}(j') = s'$. Thus, $s' \in \S$.  Next, by
  $\T^{\infty}[s' \dd j' + d) = \T^{\infty}[s \dd j + d)$ and $d \leq
  2\ell \leq 7\tau$, we obtain from \cref{lm:utils-3} in
  \cref{lm:utils} (with parameters $\ell_1 = |Y'| = d - |X|$ and
  $\ell_2 = 7\tau$) that $Y' \preceq \T^{\infty}[s' \dd s' + 7\tau)
  \prec Y'c^{\infty}$. Finally, $\T^{\infty}[j' \dd j' + \ell) =
  \T^{\infty}[j \dd j + \ell) = \revstr{X}Y$ implies $\T^{\infty}[s' -
  |X| \dd s') = \T^{\infty}[j' \dd s') = \revstr{X}$, i.e., the third
  condition.

  For the opposite implication, assume $s' \in \S$, $Y' \preceq
  \T^{\infty}[s' \dd s' + 7\tau) \prec Y'c^{\infty}$, and
  $\T^{\infty}[s' - |X| \dd s') = \revstr{X}$.  By \cref{lm:utils-3}
  in \cref{lm:utils} (with the same parameters as above) and $d \leq
  2\ell \leq 7\tau$, this implies $\T^{\infty}[s' \dd j' + d) =
  \T^{\infty}[s \dd j + d)$.  Combining this with the assumption
  $\T^{\infty}[j' \dd s') = \revstr{X} = \T^{\infty}[j \dd s)$, we
  obtain $\T^{\infty}[j' \dd j' + d) = \T^{\infty}[j \dd j + d)$,
  i.e., $j' \in \Occ_{d}(j)$.
\end{proof}

\begin{lemma}\label{lm:sa-nonperiodic-2}
  Let $j \in [1 \dd n - 3\tau + 1] \sm \R(\tau,\T)$ and $s =
  \Successor_{\S}(j)$. Let $X \in \Sigma^{*}$ and $X', Y, Y' \in
  \Sigma^{+}$ be such that $\revstr{X} = \T[j \dd s)$, $X' =
  Xc^{\infty}$ (with $c = \max\Sigma$), $\T^{\infty}[j \dd j {+} \ell)
  = \revstr{X}Y$, and $\T^{\infty}[j \dd j {+} 2\ell) =
  \revstr{X}Y'$. Then, letting $q = 7\tau$ and $\Pts = \Points_{q}(\T,
  \S)$, it holds:
  \begin{enumerate}
  \item $|\Posalll(j)| = \rcount{\Pts}{X}{X'}{Y'} -
    \rcount{\Pts}{X}{X'}{Y}$ and
  \item $|\Occ_{2\ell}(j)| = \rcount{\Pts}{X}{X'}{Y'c^{\infty}} -
    \rcount{\Pts}{X}{X'}{Y'}$.
  \end{enumerate}
\end{lemma}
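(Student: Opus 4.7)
The plan is to combine the two structural characterizations proved in \cref{lm:sa-nonperiodic-pos} and \cref{lm:sa-nonperiodic-occ} with a single translation step: rewriting the equality $\T^{\infty}[s' - |X| \dd s') = \revstr{X}$ appearing in those characterizations as a lexicographic range constraint on the reversed length-$7\tau$ left context of $s'$. Once this is done, both formulas follow by directly comparing to the definitions of the coordinates of $\Pts = \Points_{q}(\T,\S)$ and of the counting primitive $\rcount{\Pts}{\cdot}{\cdot}{\cdot}$.

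First I would establish the following "prefix $\Leftrightarrow$ lex-range" fact, which is the reversed analogue of \cref{lm:utils} item~\ref{lm:utils-3}: for every $s' \in \S$, we have $\T^{\infty}[s' - |X| \dd s') = \revstr{X}$ if and only if $X \preceq \revstr{\T^{\infty}[s' - 7\tau \dd s')} \prec Xc^{\infty} = X'$. This is the standard "$W$ is a prefix of $Z$ iff $W \preceq Z \prec Wc^{\infty}$" equivalence, and it applies because $|X| = s - j < \tau \leq 7\tau$, so the would-be prefix $X$ indeed fits inside the reversed length-$7\tau$ left context used as the first coordinate of the corresponding point in $\Pts$.

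For part~(1), I would then invoke \cref{lm:sa-nonperiodic-pos}: the map $j' \mapsto s' := j' + |X|$ is a bijection between $\Posalll(j)$ and the set of $s' \in \S$ satisfying both $\T^{\infty}[s' - |X| \dd s') = \revstr{X}$ and $Y \preceq \T^{\infty}[s' \dd s' + 7\tau) \prec Y'$. Using the auxiliary fact above to replace the first condition, the counted set corresponds exactly to those points $(X_{s'}, Y_{s'}, s') \in \Pts$ with $X \preceq X_{s'} \prec X'$ and $Y \preceq Y_{s'} \prec Y'$. By writing the half-open range $Y \preceq Y_{s'} \prec Y'$ as the difference of the two strict upper-bound ranges that define the $\rcount$ primitive, I obtain $|\Posalll(j)| = \rcount{\Pts}{X}{X'}{Y'} - \rcount{\Pts}{X}{X'}{Y}$. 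Part~(2) follows by the same pattern, but invoking \cref{lm:sa-nonperiodic-occ} with $d = 2\ell$: the left-context condition is unchanged and is again translated via the auxiliary fact, while the right-context condition becomes $Y' \preceq \T^{\infty}[s' \dd s' + 7\tau) \prec Y'c^{\infty}$, yielding $|\Occ_{2\ell}(j)| = \rcount{\Pts}{X}{X'}{Y'c^{\infty}} - \rcount{\Pts}{X}{X'}{Y'}$.

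The only real obstacle is the reversed lex-range equivalence above; after that is in place the rest is bookkeeping. One should also check that the bijection $j' \leftrightarrow s' = j' + |X|$ is well defined in both directions: $s' \in \S \subseteq [1 \dd n - 2\tau + 1]$ together with $|X| < \tau$ puts $j' = s' - |X|$ in the valid range of text positions (in particular, no coordinate appearing in the range count has to be conceptually "off the text"), and then both characterization lemmas supply injectivity and surjectivity simultaneously. Hence the two claimed identities hold.
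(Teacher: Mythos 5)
Your proposal is correct and follows essentially the same route as the paper's proof: both arguments take the characterizations from \cref{lm:sa-nonperiodic-pos} and \cref{lm:sa-nonperiodic-occ}, translate the left-context equality $\T^{\infty}[s'-|X|\dd s')=\revstr{X}$ into the lexicographic range $X \preceq \revstr{\T^{\infty}[s'-7\tau \dd s')} \prec X'$ via the standard prefix-versus-range equivalence (valid since $|X| < \tau \le 7\tau$), and express each half-open right-context range as a difference of two $\mathsf{r\mbox{-}count}$ values. Your explicit check that $j' \mapsto s' = j'+|X|$ is a well-defined bijection is a slightly more careful rendering of the paper's observation that the two sets have equal cardinality; no gap remains.
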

\begin{proof}

  1. By \cref{lm:sa-nonperiodic-pos}, we can write $\Posalll(j) = \{s'
  - |X| : s' \in \S,\ Y \preceq \T^{\infty}[s' \dd s' + 7\tau) \prec
  Y', \text{ and } \T^{\infty}[s' {-} |X| \dd s') = \revstr{X}\}$.  On
  the other hand, by \cref{def:p-context} and the definition of the
  $\mathsf{rcount}$ query, we have:
  \begin{align*}
    \rcount{\Pts}{X}{X'}{Y}
      &= |\{s' \in \S : \T^{\infty}[s' \dd s' {+} 7\tau) \prec Y
         \text{ and }X \preceq \revstr{\T^{\infty}[s' {-} 7\tau
         \dd s')} \prec X'\}| \\
      &= |\{s' \in \S : \T^{\infty}[s' \dd s' {+} 7\tau) \prec Y
         \text{ and }X\text{ is a prefix of }
         \revstr{\T^{\infty}[s' {-} 7\tau \dd s')}\}| \\
      &= |\{s' \in \S : \T^{\infty}[s' \dd s' {+} 7\tau) \prec Y
         \text{ and }\T^{\infty}[s' {-} |X| \dd s') = \revstr{X}\}|.
  \end{align*}
  Analogously, $\rcount{\Pts}{X}{X'}{Y'} = |\{s' \in \S :
  \T^{\infty}[s' \dd s' {+} 7\tau) \prec Y' \text{ and }
  \T^{\infty}[s' {-} |X| \dd s') = \revstr{X}\}|$.  Since any position
  $s' \in \S$ that satisfies $\T^{\infty}[s' \dd s' + 7\tau) \prec Y$
  also satisfies $\T^{\infty}[s' \dd s' + 7\tau) \prec Y'$, we obtain
  $\rcount{\Pts}{X}{X'}{Y'} - \rcount{\Pts}{X}{X'}{Y} = |\{s' \in \S :
  Y \preceq \T^{\infty}[s' \dd s' {+} 7\tau) \prec Y'\text{ and
  }\T^{\infty}[s' {-} |X| \dd s') = \revstr{X}\}|$. The cardinality of
  this set is clearly the same as the earlier set characterizing
  $\Posalll(j)$. Thus, $\rcount{\Pts}{X}{X'}{Y'} -
  \rcount{\Pts}{X}{X'}{Y} = |\Posalll(j)|$.

  2. By \cref{lm:sa-nonperiodic-occ}, we have $\Occ_{2\ell}(j) = \{s'
  - |X| : s' \in \S,\ Y' \preceq \T^{\infty}[s' \dd s' + 7\tau) \prec
  Y'c^{\infty}, \text{ and } \T^{\infty}[s' {-} |X| \dd s') =
  \revstr{X}\}$. On the other hand, by \cref{def:p-context} and the
  definition of $\mathsf{rcount}$, we have $\rcount{\Pts}{X}{X'}{Y'} =
  |\{s' \in \S : \T^{\infty}[s' \dd s' {+} 7\tau) \prec Y' \text{ and
  }\T^{\infty}[s' {-} |X| \dd s') = \revstr{X}\}|$ and
  $\rcount{\Pts}{X}{X'}{Y'c^{\infty}} = |\{s' \in \S : \T^{\infty}[s'
  \dd s' {+} 7\tau) \prec Y'c^{\infty} \text{ and } \T^{\infty}[s' {-}
  |X| \dd s') = \revstr{X}\}|$. Since any position $s' \in \S$ that
  satisfies $\T^{\infty}[s' \dd s' + 7\tau) \prec Y'$ also satisfies
  $\T^{\infty}[s' \dd s' + 7\tau) \prec Y'c^{\infty}$, we thus obtain
  $\rcount{\Pts}{X}{X'}{Y'c^{\infty}} - \rcount{\Pts}{X}{X'}{Y'} =
  |\{s' \in \S : Y' \preceq \T^{\infty}[s' \dd s' {+} 7\tau) \prec
  Y'c^{\infty}\text{ and }\T^{\infty}[s' {-} |X| \dd s') =
  \revstr{X}\}|$. The cardinality of this set is clearly the same as
  the earlier set characterizing $\Occ_{2\ell}(j)$. Thus,
  $\rcount{\Pts}{X}{X'}{Y'c^{\infty}} - \rcount{\Pts}{X}{X'}{Y'} =
  |\Occ_{2\ell}(j)|$.
\end{proof}

\begin{proposition}\label{pr:sa-nonperiodic-pos-and-occ-size}
  Under \cref{as:nonperiodic}, given a position $j \in [1 \dd n] \sm
  \R(\tau,\T)$, we can compute $|\Posalll(j)|$ and $|\Occ_{2\ell}(j)|$
  in $\bigO(t)$ time.
\end{proposition}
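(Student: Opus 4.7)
My plan is to apply \cref{lm:sa-nonperiodic-2} directly: it already expresses both $|\Posalll(j)|$ and $|\Occ_{2\ell}(j)|$ as signed combinations of four $\mathsf{r\mbox{-}count}$ values over the set $\Pts = \Points_{7\tau}(\T, \S)$. Since \cref{as:nonperiodic} guarantees $\bigO(t)$-time support for both $\Successor_{\S}$ queries and the range-counting queries of \cref{prob:str-str} with $q = 7\tau$ and $\mathsf{P} = \S$, the argument reduces to computing the parameters of \cref{lm:sa-nonperiodic-2} and phrasing each needed count as a legal instance of the range-counting interface.

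Concretely, I would first invoke $s := \Successor_{\S}(j)$ in $\bigO(t)$ time. The density condition of \cref{def:sss} combined with $j \notin \R(\tau, \T)$ ensures $s - j < \tau \leq \ell$, so $s$ is well defined and the strings $X = \revstr{\T[j \dd s)}$, $Y = \T^{\infty}[s \dd j + \ell)$, and $Y' = \T^{\infty}[s \dd j + 2\ell)$ from \cref{lm:sa-nonperiodic-2} are implicitly available via their representations in $\T^{\infty}$ (with $X' = Xc^{\infty}$). Then I would issue two queries of \cref{prob:str-str}(item~\ref{prob:str-str-it-1}), both with $i = s$ and $q_l = s - j$, which forces $\revstr{X_l} = \T^{\infty}[s - q_l \dd s) = \revstr{X}$ and hence $X_l = X$, $X_u = X'$. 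The first query uses $q_r = \ell - (s - j)$, so that $Y_l = Y$ and $Y_u = Yc^{\infty}$, returning $\rcount{\Pts}{X}{X'}{Y}$ and $\rcount{\Pts}{X}{X'}{Yc^{\infty}}$; the second query uses $q_r = 2\ell - (s - j)$, yielding $Y_l = Y'$ and $Y_u = Y'c^{\infty}$, and returning $\rcount{\Pts}{X}{X'}{Y'}$ and $\rcount{\Pts}{X}{X'}{Y'c^{\infty}}$. Finally, I combine these using \cref{lm:sa-nonperiodic-2} to obtain $|\Posalll(j)|$ from the first and third returned values and $|\Occ_{2\ell}(j)|$ from the third and fourth.

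The main point to watch — and the only mild obstacle — is the asymmetry between the formulas in \cref{lm:sa-nonperiodic-2} and the restricted shape of \cref{prob:str-str}: a single query of that problem compares $Y$-coordinates only against a pair $(Y_l, Y_l c^{\infty})$, so we cannot realize a count with arbitrary $Y$-upper bound $Y'$ in one call. Recognizing that two queries with different $q_r$ suffice is what makes the reduction fit. The remaining housekeeping is to check that the inputs satisfy the interface constraints, i.e.\ $q_l \in [0 \dd 2n]$, $q_r \in [0 \dd 2n]$, and $q = 7\tau \in [1 \dd 3n]$; these follow immediately from $s - j < \tau$, $\ell \geq 16$, $\ell < n$, and $3\tau \leq \ell$. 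Summing the costs of the successor query and the two range-counting queries gives the claimed $\bigO(t)$ running time.
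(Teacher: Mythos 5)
Your reduction is essentially the paper's own: compute $s = \Successor_{\S}(j)$, then read off both cardinalities from \cref{lm:sa-nonperiodic-2} via two calls to item~\ref{prob:str-str-it-1} of \cref{prob:str-str} with $(i,q_l,q_r)=(s,\,s-j,\,\ell-(s-j))$ and $(s,\,s-j,\,2\ell-(s-j))$, and the parameter checks are the same.

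The one step you skip is the boundary case $j > n-3\tau+1$. The proposition admits any $j\in[1\dd n]\sm\R(\tau,\T)$, but \cref{lm:sa-nonperiodic-2} is stated only for $j\in[1\dd n-3\tau+1]\sm\R(\tau,\T)$, and \cref{as:nonperiodic} only guarantees $\Successor_{\S}(i)$ queries for $i$ in that same range; moreover, the density condition you invoke to conclude $s-j<\tau$ is only stated for $i\in[1\dd n-3\tau+2]$, so for $j$ near the end of the text there need not be any synchronizing position in $[j\dd j+\tau)$ at all (recall $\S\sub[1\dd n-2\tau+1]$). The paper dispatches this case up front: if $j>n-3\tau+1$, then $\T^{\infty}[j\dd j+\ell)$ contains the unique occurrence of $\texttt{\$}$, hence $|\Occ_{\ell}(j)|=1$, which forces $|\Occ_{2\ell}(j)|=1$ and $|\Posalll(j)|=0$ without any query. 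With that check prepended, your argument is complete.
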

\begin{proof}

  We first check if $j > n - 3\tau + 1$. If yes, then by the
  uniqueness of $\T[n] = \texttt{\$}$, it holds $|\Occ_{\ell}(j)| =
  1$. By $\Occ_{2\ell}(j) \neq \emptyset$, $\Occ_{2\ell}(j) \sub
  \Occ_{\ell}(j)$, we can therefore return $|\Posalll(j)| = 0$ and
  $|\Occ_{2\ell}(j)| = 1$.  Let us thus assume $j \leq n - 3\tau + 1$
  and recall from the proof of \cref{lm:sa-nonperiodic-pos} that then
  $\Successor_{\S}(j)$ is well-defined.  Using \cref{as:nonperiodic}
  we compute $s = \Successor_{\S}(j)$. Let $X \in \Sigma^{*}$ and $X',
  Y, Y' \in \Sigma^{+}$ be such that $\revstr{X} = \T[j \dd s)$, $X' =
  Xc^{\infty}$, $\T^{\infty}[j \dd j {+} \ell) = \revstr{X}Y$, and
  $\T^{\infty}[j \dd j {+} 2\ell) = \revstr{X}Y'$. Then:
  \begin{enumerate}
  \item By \cref{lm:sa-nonperiodic-2}, we have $|\Posalll(j)| =
    \rcount{\Pts}{X}{X'}{Y'} - \rcount{\Pts}{X}{X'}{Y}$ (where $\Pts =
    \Points_{7\tau}(\T, \S)$) which under \cref{as:nonperiodic} we can
    efficiently compute using the query arguments $(i, q_{l}, q_{r}) =
    (s, s - j, 2\ell - (s - j))$ and then with arguments $(i, q_{l},
    q_{r}) = (s, s - j, \ell - (s - j))$ (see \cref{prob:str-str}). By
    $j \not\in \R(\tau,\T)$ and the density property of $\S$, we have
    $s - j < \tau \leq \ell < n$. On the other hand, $q_{r} \leq 2\ell
    - (s - j) \leq 2\ell < 2n$. Thus, the arguments $q_{l}$ and $q_r$
    of both queries satisfy the requirements in \cref{prob:str-str}.
  \item By \cref{lm:sa-nonperiodic-2}, we also have $|\Occ_{2\ell}(j)|
    = \rcount{\Pts}{X}{X'}{Y'c^{\infty}} - \rcount{\Pts}{X}{X'}{Y'}$
    (where $\Pts$ is defined as above), which under
    \cref{as:nonperiodic} we can compute using the query arguments
    $(i, q_{l}, q_{r}) = (s, s - j, 2\ell - (s - j))$ (see
    \cref{prob:str-str}). As noted above, these arguments satisfy the
    requirements in \cref{prob:str-str}.
  \end{enumerate}
  By \cref{as:nonperiodic}, the query takes $\bigO(t)$ time in total.
\end{proof}

\subsubsection{Computing a Position in
  \texorpdfstring{$\Occ_{2\ell}(\SA[i])$}{Occ}}\label{sec:sa-nonperiodic-occ-pos}

Assume that $i \in [1 \dd n]$ satisfies $\SA[i] \in [1 \dd n] \sm
\R(\tau,\T)$. In this section, we show how under
\cref{as:nonperiodic}, given the index $i$ along with values
$\LB_{\ell}(\SA[i])$, $\UB_{\ell}(\SA[i])$, and some position $j \in
\Occ_{\ell}(\SA[i])$, to efficiently compute some position $j' \in
\Occ_{2\ell}(\SA[i])$.

The section is organized as follows. First, we present a combinatorial
result (\cref{lm:sa-nonperiodic-occ-pos}) that reduces the computation
of $j' \in \Occ_{2\ell}(\SA[i])$ to a generalized range selection
query (see \cref{sec:range-queries}).  We then use this reduction to
present the query algorithm for the computation of some $j' \in
\Occ_{2\ell}(\SA[i])$ in \cref{pr:sa-nonperiodic-occ-pos}.

\begin{lemma}\label{lm:sa-nonperiodic-occ-pos}
  Assume $i \in [1 \dd n]$ is such that $\SA[i] \in [1 \dd n - 3\tau +
  1] \sm \R(\tau,\T)$.  Denote $b = \LB_{\ell}(\SA[i])$, $d =
  |\Occ_{\ell}(\SA[i])|$, and $s = \Successor_{\S}(\SA[i])$.  Let $X
  \in \Sigma^{*}$ and $X', Y \in \Sigma^{+}$ be such that $\revstr{X}
  = \T[\SA[i] \dd s)$, $X' = Xc^{\infty}$ (with $c = \max\Sigma$), and
  $\T^{\infty}[\SA[i] \dd \SA[i] {+} \ell) = \revstr{X}Y$. Let also
  $\Pts = \Points_{7\tau}(\T, \S)$, $m = \rcount{\Pts}{X}{X'}{Y}$, and
  $m' = \rcountb{\Pts}{X}{X'}$. Then, $m + d \leq m'$.  Moreover:
  \begin{enumerate}
  \item\label{lm:sa-nonperiodic-occ-pos-it-1} For $\delta \in [1 \dd
    d]$, any position $p \in \rselect{\Pts}{X}{X'}{m + \delta}$
    satisfies $\T^{\infty}[p - |X| \dd p - |X| + 2\ell) =
    \T^{\infty}[\SA[b + \delta] \dd \SA[b + \delta] + 2\ell)$.
  \item\label{lm:sa-nonperiodic-occ-pos-it-2} For $\delta = i - b$,
    any position $p \in \rselect{\Pts}{X}{X'}{m + \delta}$ satisfies
    $p - |X| \in \Occ_{2\ell}(\SA[i])$.
  \end{enumerate}
\end{lemma}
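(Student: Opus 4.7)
The plan is to leverage a bijection, implicit in \cref{lm:sa-nonperiodic-occ} applied with $d=\ell$, between $\Occ_\ell(\SA[i])$ and those points of $\Pts$ whose $X$-coordinate lies in $[X \dd X')$ and whose $Y$-coordinate lies in $[Y \dd Yc^\infty)$: namely $j' \leftrightarrow p$ with $p=j'+|X|$. Since $|\Occ_\ell(\SA[i])|=d$, exactly $d$ points fall in this range; they are disjoint from the $m$ points with $X$-coordinate in $[X \dd X')$ and $Y$-coordinate $\prec Y$ counted by $m$, which immediately yields $m+d \leq m'$.

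For Item~\ref{lm:sa-nonperiodic-occ-pos-it-1}, I fix $\delta \in [1 \dd d]$. When the points with $X$-coordinate in $[X \dd X')$ are sorted by $Y$-coordinate, they split into three contiguous blocks of ranks: the $m$ points with $Y$-coordinate $\prec Y$ occupy ranks $1,\ldots,m$; the $d$ points with $Y$-coordinate in $[Y \dd Yc^\infty)$ occupy ranks $m+1,\ldots,m+d$; the remaining $m'-m-d$ points occupy the rest. In particular, the definition of $\rselect$ forces any $p \in \rselect{\Pts}{X}{X'}{m+\delta}$ to have $Y$-coordinate in $[Y \dd Yc^\infty)$, so the bijection gives $j'_\delta := p-|X| \in \Occ_\ell(\SA[i])$.

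The main obstacle is to show that $j'_\delta$ and $\SA[b+\delta]$ share the same length-$2\ell$ prefix. The key observation is that both the $Y$-coordinate sort (restricted to ranks $m+1,\ldots,m+d$) and the suffix array order (restricted to indices $b+1,\ldots,b+d$) induce on $\Occ_\ell(\SA[i])$ the same partition into classes of equal length-$2\ell$ prefix, listed in the same inter-class order, with each class occupying the same consecutive block of ranks/indices in both orderings. Indeed, every $j' \in \Occ_\ell(\SA[i])$ satisfies $\T^\infty[j' \dd j'+2\ell) = \revstr{X}\cdot\T^\infty[j'+|X| \dd j'+2\ell)$ and, because $2\ell-|X|\le 7\tau$, this is fully determined by the length-$(2\ell-|X|)$ prefix of the associated $Y$-coordinate; moreover, the lex order on those prefixes matches the lex order on length-$2\ell$ prefixes since the leading $\revstr{X}$ is common. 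Hence the $Y$-coordinate sort groups the points of each $\Occ_{2\ell}$-equivalence class into a contiguous block and orders the blocks by lex of the shared length-$2\ell$ prefix; by the very definition of $\SA$, the sequence $\SA[b+1 \dd b+d]$ does exactly the same. Therefore $j'_\delta$ and $\SA[b+\delta]$ fall in a common class and share a length-$2\ell$ prefix.

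Item~\ref{lm:sa-nonperiodic-occ-pos-it-2} is then an immediate corollary: with $\delta = i-b$, the fact that $\SA[i] \in \Occ_\ell(\SA[i])$ gives $i \in (b \dd b+d]$ and $\SA[b+\delta]=\SA[i]$, so Item~\ref{lm:sa-nonperiodic-occ-pos-it-1} yields $\T^\infty[p-|X| \dd p-|X|+2\ell) = \T^\infty[\SA[i] \dd \SA[i]+2\ell)$, i.e., $p-|X| \in \Occ_{2\ell}(\SA[i])$.
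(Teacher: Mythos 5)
Your proof is correct and takes essentially the same approach as the paper's: both establish $m+d\le m'$ by identifying $\Occ_\ell(\SA[i])$ with the points of $\Pts$ having $X$-coordinate in $[X\dd X')$ and $Y$-coordinate in $[Y\dd Yc^\infty)$ via \cref{lm:sa-nonperiodic-occ}, and both prove Item~\ref{lm:sa-nonperiodic-occ-pos-it-1} by observing that the $Y$-coordinate order on ranks $m+1,\ldots,m+d$ and the suffix-array order on indices $b+1,\ldots,b+d$ list the same set consistently with the length-$2\ell$-prefix equivalence classes. The paper phrases the last step with the finer key $\T^\infty[\cdot\dd \cdot+|X|+7\tau)$ and an explicit auxiliary subsequence $(b_j)$, but the underlying argument is the same.
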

\begin{proof}

  Note that by the uniqueness of $\T[n] = \texttt{\$}$, it holds
  $\per(\T[n - 3\tau + 2 \dd n]) = 3\tau - 1$ and hence $\S \cap [n -
  3\tau + 2 \dd n - 2\tau + 2) \neq \emptyset$. Thus, by $\SA[i] < n -
  3\tau + 2$, $s = \Successor_{\S}(\SA[i])$ is well-defined.  Denote
  $q = |\S|$. Let $(a_j)_{j \in [1 \dd q]}$ be a sequence containing
  all positions $p \in \S$ ordered according to the string
  $\T^{\infty}[p \dd p + 7\tau)$.  In other words, for any $j, j' \in
  [1 \dd q]$, $j < j'$ implies $\T^{\infty}[a_j\dd a_j + 7\tau)
  \preceq \T^{\infty}[a_{j'} \dd a_{j'} + 7\tau)$.  Note, that the
  sequence $(a_j)_{j \in [1 \dd q]}$ is not unique.  Since $\{a_j : j
  \in [1 \dd q]\} = \S$, it holds $|\{a_j - |X| : j \in [1 \dd
  q]\text{ and } \T^{\infty}[a_j - |X| \dd a_j) = \revstr{X}\}| = |\{j
  \in [1 \dd q] : \T^{\infty}[a_j - |X| \dd a_j) = \revstr{X}\}| =
  |\{p \in \S : \T^{\infty}[p - |X| \dd p) = \revstr{X}\}| = m'$,
  where the last equality follows by \cref{lm:utils-3} of
  \cref{lm:utils} and the definition of $\rcountb{\Pts}{X}{X'}$ (see
  \cref{sec:str-str}).  Moreover, by the same argument (utilizing the
  definition of $\rcount{\Pts}{X}{X'}{Y}$ instead of
  $\rcountb{\Pts}{X}{X'}$), we have $|\{a_j - |X| : j \in [1 \dd q],\,
  \T^{\infty}[a_j - |X| \dd a_j) = \revstr{X}\text{ and }
  \T^{\infty}[a_j \dd a_j + 7\tau) \prec Y\}| = |\{j \in [1 \dd q] :
  \T^{\infty}[a_j - |X| \dd a_j) = \revstr{X}\text{ and }
  \T^{\infty}[a_j \dd a_j + 7\tau) \prec Y\}| = m$.  On the other
  hand, observe that by \cref{lm:sa-nonperiodic-occ}, for any $j \in
  [1 \dd n]$, it holds $j \in \Occ_{\ell}(\SA[i])$ if and only if $j +
  |X| \in \S$, $\T^{\infty}[j \dd j + |X|) = \revstr{X}$, and $Y
  \preceq \T^{\infty}[j + |X| \dd j + |X| + 7\tau) \prec Yc^{\infty}$.
  In other words, $\Occ_{\ell}(\SA[i]) = \{a_j - |X| : j \in [1 \dd
  q],\, \T^{\infty}[a_j - |X| \dd a_j) = \revstr{X},\text{ and } Y
  \preceq \T^{\infty}[a_j \dd a_j + 7\tau) \prec Yc^{\infty}\}$.  The
  latter set (whose cardinality is equal to $d$) is clearly a subset
  of $\{a_j - |X| : j \in [1 \dd q]\text{ and }\T^{\infty}[a_j - |X|
  \dd a_j) = \revstr{X}\}$ (whose cardinality, as shown above, is
  equal to $m'$). Thus, $d \leq m'$.  On the other hand, the set
  $\{a_j - |X| : j \in [1 \dd q],\, \T^{\infty}[a_j - |X| \dd a_j) =
  \revstr{X},\text{ and }\T^{\infty}[a_j \dd a_j + 7\tau) \prec Y\}$
  (whose cardinality, as shown above, is $m$) is also clearly a subset
  of $\{a_j - |X| : j \in [1 \dd q]\text{ and }\T^{\infty}[a_j - |X|
  \dd a_j) = \revstr{X}\}$ (whose cardinality is $m'$). Thus, $m \leq
  m'$. Since $j \in [1 \dd q]$ cannot simultaneously satisfy
  $\T^{\infty}[a_j \dd a_j + 7\tau) \prec Y$ and $Y \preceq
  \T^{\infty}[a_j \dd a_j + 7\tau)$, these subsets are disjoint.
  Hence, $m + d \leq m'$.

  1. As shown above, $|\{j \in [1 \dd q] : \T^{\infty}[a_j - |X| \dd
  a_j) = \revstr{X}\}| = m'$.  Let $(b_j)_{j \in [1 \dd m']}$ be a
  subsequence of $(a_j)_{j \in [1 \dd q]}$ containing all elements of
  $\{a_j : j \in [1 \dd q]\text{ and } \T^{\infty}[a_j - |X| \dd a_j)
  = \revstr{X}\}$ (in the same order as they appear in the sequence
  $(a_j)_{j \in [1 \dd q]}$). Our proof consists of three steps:
  \begin{enumerate}[label=(\roman*)]
  \item\label{lm:sa-nonperiodic-occ-pos-it-1-1} Let $j \in [1 \dd
    m']$. We start by showing that $b_j \in \rselect{\Pts}{X}{X'}{j}$.
    Let $Q, Q'$ be such that $\revstr{Q} = \T^{\infty}[b_j - 7\tau \dd
    b_j)$ and $Q' = \T^{\infty}[b_j \dd b_j + 7\tau)$. Let
    \begin{align*}
      r_{\rm beg} &=
        |\{a_t : t \in [1 \dd q],
          \T^{\infty}[a_t - |X| \dd a_t) = \revstr{X},\text{ and }
          \T^{\infty}[a_t \dd a_t + 7\tau)\prec Q'\}|\text{ and}\\
      r_{\rm end} &=
        |\{a_t : t \in [1 \dd q],
          \T^{\infty}[a_t - |X| \dd a_t) = \revstr{X},\text{ and }
          \T^{\infty}[a_t \dd a_t + 7\tau)\preceq Q'\}|.
    \end{align*}
    If $t \in [1 \dd q]$ satisfies $\T^{\infty}[a_t - |X| \dd a_t) =
    \revstr{X}$, then $a_t \in \{b_1, \ldots, b_m\}$.  Moreover, since
    for any $t, t' \in [1 \dd m]$, $t < t'$ implies $\T^{\infty}[b_t
    \dd b_t + 7\tau) \preceq \T^{\infty}[b_{t'} \dd b_{t'} + 7\tau)$,
    any $t \in [1 \dd q]$ that additionally satisfies $\T^{\infty}[a_t
    \dd a_t + 7\tau) \prec \T^{\infty}[b_j \dd b_j + 7\tau)$, also
    satisfies $a_t \in \{b_1, \ldots, b_{j-1}\}$. Thus, $r_{\rm beg} <
    j$.  On the other hand, every $t \in [1 \dd j]$ satisfies
    $\T^{\infty}[b_t - |X| \dd b_t) = \revstr{X}$ and $\T^{\infty}[b_t
    \dd b_t + 7\tau) \preceq \T^{\infty}[b_j \dd b_j + 7\tau)$. Thus,
    $j \leq r_{\rm end}$. Altogether, $j \in (r_{\rm beg} \dd r_{\rm
    end}]$. Recall now the definition $\Pts = \Points_{7\tau}(\T, \S)$
    (\cref{def:p-right-context}) and note that by \cref{lm:utils-3} of
    \cref{lm:utils}, it holds $r_{\rm beg} = \rcount{\Pts}{X}{X'}{Q'}$
    and $r_{\rm end} = \rcounti{\Pts}{X}{X'}{Q'}$.  We thus obtain
    that $j \in (\rcount{\Pts}{X}{X'}{Q'} \dd
    \rcounti{\Pts}{X}{X'}{Q'}]$. On the other hand, $(Q,Q',b_j)\in
    \Pts$ and $T^\infty[b_j-|X|\dd b_j)=\revstr{X}$, so $b_j \in
    \rselect{\Pts}{X}{X'}{j}$ holds as claimed.
  \item\label{lm:sa-nonperiodic-occ-pos-it-1-2} Let $j \in [1 \dd
    m']$. We will now show that, for any $p \in
    \rselect{\Pts}{X}{X'}{j}$, it holds $\T^{\infty}[p - |X| \dd p +
    7\tau) = \T^{\infty}[b_j - |X| \dd b_j + 7\tau)$. By
    \cref{lm:sa-nonperiodic-occ-pos-it-1-1} and the definition of
    $\rselect{\Pts}{X}{X'}{j}$, the assumption $p \in
    \rselect{\Pts}{X}{X'}{j}$ implies $\T^{\infty}[p \dd p + 7\tau) =
    \T^{\infty}[b_j \dd b_j + 7\tau)$.  Moreover, letting $Q$ be such
    that $\revstr{Q} = \T^{\infty}[p - |X| \dd p)$, it also implies $X
    \preceq Q \prec X'$.  Thus, by \cref{lm:utils-3} of
    \cref{lm:utils}, $p$ is preceded by $X$ in $\T$.  Since by
    definition of $(b_j)_{j\in [1 \dd m']}$, the position $b_j$ is
    also preceded by $X$ in $\T$, we obtain $\T^{\infty}[p - |X| \dd p
    + 7\tau) = \T^{\infty}[b_j - |X| \dd b_j + 7\tau)$.
  \item We are now ready to prove the main claim.  As observed above,
    $\Occ_{\ell}(\SA[i]) = \{a_j - |X| : j \in [1 \dd q],\,
    \T^{\infty}[a_j - |X| \dd a_j) = \revstr{X},\text{ and } Y \preceq
    \T^{\infty}[a_j \dd a_j + 7\tau) \prec Yc^{\infty}\}$. Note, that
    since the positions $k$ in the sequence $(a_j)_{j \in [1 \dd q]}$
    are sorted by $\T^{\infty}[k \dd k + 7\tau)$, we can simplify the
    second condition. Denoting $j_{\rm skip} = |\{j \in [1 \dd q] :
    \T^{\infty}[a_j \dd a_j + 7\tau) \prec Y\}|$, we have
    \begin{equation*}
      \Occ_{\ell}(\SA[i]) = \left\{a_j - |X| : \substack{j \in (j_{\rm
      skip} \dd q],\ \T^{\infty}[a_j - |X| \dd a_j) =
      \revstr{X},\\ \text{ and }\T^{\infty}[a_j \dd a_j + 7\tau) \prec
      Yc^{\infty}}\right\}.
    \end{equation*}
    Let us now estimate $|\{j \in [1 \dd j_{\rm skip}] :
    \T^{\infty}[a_j - |X| \dd a_j) = \revstr{X}\}|$.  Any $j$ in this
    set satisfies $j \in [1 \dd q]$, $\T^{\infty}[a_j - |X| \dd a_j) =
    \revstr{X}$, and $\T^{\infty}[a_j \dd a_j + 7\tau) \prec
    Y$. Earlier we observed that the number of such $j$ is precisely
    $m$. Combining this fact with the above formula for
    $\Occ_{\ell}(\SA[i])$ and the definition of $(b_j)_{j \in [1 \dd
    m']}$, we have $\Occ_{\ell}(\SA[i]) = \{b_j - |X| : j \in (m \dd m
    + d]\}$.  On the other hand, $b + d = \LB_{\ell}(\SA[i]) +
    |\Occ_{\ell}(\SA[i])| = \UB_{\ell}(\SA[i])$.  Thus,
    $\Occ_{\ell}(\SA[i]) = \{\SA[j] : j \in (b \dd b + d]\}$.  We now
    observe:
    \begin{itemize}
    \item Let $j_1, j_2 \in (m \dd m+d]$ and assume $j_1 < j_2$. Since
      the elements of $(b_j)$ occur in the same order as in $(a_j)$,
      and positions $p$ in $(a_j)$ are sorted by $\T[p \dd p +
      7\tau)$, it follows that $\T^{\infty}[b_{j_1} \dd b_{j_1} +
      7\tau) \preceq \T^{\infty}[b_{j_2} \dd b_{j_2} + 7\tau)$. On the
      other hand, by definition of $(b_j)$, both positions $b_{j_1}$
      and $b_{j_2}$ are preceded in $\T$ by the string $\revstr{X}$.
      Thus, $\T^{\infty}[b_{j_1} - |X| \dd b_{j_2} + 7\tau) \preceq
      \T^{\infty}[b_{j_2} - |X| \dd b_{j_2} + 7\tau)$.
    \item On the other hand, by definition of lexicographical order,
      for any $j_1, j_2 \in [1 \dd d]$, the assumption $j_1 < j_2$
      implies $\T^{\infty}[\SA[b + j_1] \dd \SA[b + j_1] + |X| +
      7\tau) \preceq \T^{\infty}[\SA[b + j_2] \dd \SA[b + j_2] + |X| +
      7\tau)$.
    \end{itemize}
    We have thus shown that both sequences $\SA[b + 1], \ldots,
    \SA[b+d]$ and $b_{m+1} - |X|, \ldots, b_{m+d} - |X|$ contain the
    same set of positions $\Occ_{\ell}(\SA[i])$ ordered according to
    the length-$(|X|+7\tau)$ right context in $\T^{\infty}$.
    Therefore, regardless of how ties are resolved in each sequence,
    for any $\delta \in [1 \dd d]$, we have
    \begin{equation*}
      \T^{\infty}[\SA[b + \delta] \dd \SA[b + \delta] + |X| + 7\tau) =
      \T^{\infty}[b_{m+\delta} - |X| \dd b_{m+\delta} + 7\tau).
    \end{equation*}
    To finalize the proof of the claim, take any $p \in
    \rselect{\Pts}{X}{X'}{m+\delta}$.  By
    \cref{lm:sa-nonperiodic-occ-pos-it-1-2}, for $j = m + \delta$, we
    have $\T^{\infty}[p - |X| \dd p + 7\tau) = \T^{\infty}[b_{m +
    \delta} - |X| \dd b_{m + \delta} + 7\tau) = \T^{\infty}[\SA[b +
    \delta] \dd \SA[b + \delta] + |X| + 7\tau)$. In particular, by $0
    \leq |X|$ and $2\ell \leq 7\tau$, we obtain $2\ell \leq |X| +
    7\tau$ and $\T^{\infty}[p - |X| \dd p - |X| + 2\ell) =
    \T^{\infty}[\SA[b+ \delta] \dd \SA[b + \delta] + 2\ell)$, i.e.,
    the claim.
  \end{enumerate}

  \noindent
  2. Applying \cref{lm:sa-nonperiodic-occ-pos-it-1} for $\delta = i -
  b$, we conclude that any $p \in \rselect{\Pts}{X}{X'}{m+\delta}$,
  satisfies $\T^{\infty}[p - |X| \dd p - |X| + 2\ell) =
  \T^{\infty}[\SA[b + \delta] \dd \SA[b +\delta] + 2\ell) =
  \T^{\infty}[\SA[i] \dd \SA[i] + 2\ell)$, i.e., $p - |X| \in
  \Occ_{2\ell}(\SA[i])$.
\end{proof}

\begin{proposition}\label{pr:sa-nonperiodic-occ-pos}
  Let $i \in [1 \dd n]$ be such that $\SA[i] \in [1 \dd n] \sm
  \R(\tau,\T)$. Under \cref{as:nonperiodic}, given the values $i$,
  $\LB_{\ell}(\SA[i])$, $\UB_{\ell}(\SA[i])$, and some $j \in
  \Occ_{\ell}(\SA[i])$ as input, we can compute some $j' \in
  \Occ_{2\ell}(\SA[i])$ in $\bigO(t)$ time.
\end{proposition}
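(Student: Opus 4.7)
The plan is to invoke \cref{lm:sa-nonperiodic-occ-pos} essentially as a black box; the only subtlety is that we must supply all of its parameters without direct access to $\SA[i]$. First, I dispense with the boundary case $\SA[i] > n - 3\tau + 1$: since $3\tau \leq \ell$, the fragment $\T^{\infty}[\SA[i] \dd \SA[i]+\ell)$ then contains $\texttt{\$}$, so by uniqueness of $\texttt{\$}$ we must have $\Occ_{\ell}(\SA[i]) = \{\SA[i]\}$; in particular the input satisfies $j = \SA[i]$ and $j > n - 3\tau + 1$. Since $\emptyset \neq \Occ_{2\ell}(\SA[i]) \subseteq \Occ_{\ell}(\SA[i])$, returning $j' = j$ suffices, and this case is recognized in $\Oh(1)$ time by comparing $j$ with $n - 3\tau + 1$.

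In the main case $\SA[i] \leq n - 3\tau + 1$, I compute $\delta_{\S} := \Successor_{\S}(j) - j$ via one $\Oh(t)$-time query from \cref{as:nonperiodic}; the query is admissible because $j \notin \R(\tau,\T)$, a fact inherited from $\SA[i] \notin \R(\tau,\T)$ exactly as in \cref{pr:sa-core} (since $3\tau - 1 \leq \ell$ and $j \in \Occ_{\ell}(\SA[i])$). The density of $\S$ gives $\delta_{\S} < \tau$, and the consistency of $\S$ (\cref{def:sss}) applied to the positions $j$ and $\SA[i]$, which share a length-$2\tau$ right context by $2\tau \leq \ell$, yields $\delta_{\S} = s - \SA[i]$, where $s = \Successor_{\S}(\SA[i])$. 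Consequently, the strings $X, X', Y$ of \cref{lm:sa-nonperiodic-occ-pos} depend only on $\T^{\infty}[\SA[i] \dd \SA[i] + \ell) = \T^{\infty}[j \dd j + \ell)$ and on $\delta_{\S}$, so each admits a succinct encoding as a fragment of $\T^{\infty}$ (with a $c^{\infty}$ tail for $X'$) anchored at positions computable from $j$ alone.

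Next, I set $d := \UB_{\ell}(\SA[i]) - \LB_{\ell}(\SA[i]) = |\Occ_{\ell}(\SA[i])|$ and $\delta := i - \LB_{\ell}(\SA[i])$; both are $\Oh(1)$-time computations, and $\delta \in [1 \dd d]$ by definition of $\LB_{\ell}$ and $\UB_{\ell}$. Using \cref{as:nonperiodic}, I issue one string-string range query (item~\ref{prob:str-str-it-1} of \cref{prob:str-str}) to obtain $m := \rcount{\Pts}{X}{X'}{Y}$, where $\Pts = \Points_{7\tau}(\T, \S)$, with parameters $(j + \delta_{\S}, \delta_{\S}, \ell - \delta_{\S})$; these satisfy the validity conditions of \cref{prob:str-str} since $\delta_{\S} < \tau \leq \ell < n$. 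By \cref{lm:sa-nonperiodic-occ-pos}, $1 \leq m + \delta \leq m + d \leq \rcountb{\Pts}{X}{X'}$, so the range selection query $\rselect{\Pts}{X}{X'}{m + \delta}$ is well-defined. One further call to \cref{as:nonperiodic} (item~\ref{prob:str-str-it-2} of \cref{prob:str-str}) returns a position $p$, and I output $j' := p - |X|$, which lies in $\Occ_{2\ell}(\SA[i])$ by item~\ref{lm:sa-nonperiodic-occ-pos-it-2} of \cref{lm:sa-nonperiodic-occ-pos}. The total cost is $\Oh(t)$.

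The main obstacle is essentially bookkeeping: checking that every string passed to the queries of \cref{as:nonperiodic} is expressible in the format admitted by \cref{prob:str-str} and can be reconstructed from $j$, $i$, $\LB_{\ell}(\SA[i])$, and $\UB_{\ell}(\SA[i])$ without reference to $\SA[i]$ itself. The consistency of $\S$ is the lever that makes this possible, transporting the $\Successor_{\S}$ computation from the unknown $\SA[i]$ to the available anchor $j$.
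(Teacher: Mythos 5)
Your proposal is correct and follows essentially the same route as the paper's proof: dispose of the boundary case where the length-$\ell$ window reaches $\texttt{\$}$ (the paper tests $|\Occ_{\ell}(\SA[i])|=1$ instead of $j>n-3\tau+1$, but the two tests are interchangeable here), transfer the $\Successor_{\S}$ offset from $\SA[i]$ to $j$ via the consistency of $\S$, and then issue one range-counting and one range-selection query, appealing to \cref{lm:sa-nonperiodic-occ-pos} for correctness. The only nit is that transferring the successor offset requires applying consistency at every shift $t\in[0\dd \delta_{\S}]$, so the shared right context must have length up to $\delta_{\S}+2\tau<3\tau$ rather than the $2\tau$ you cite --- harmless, since $3\tau\le\ell$.
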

\begin{proof}

  We first calculate $|\Occ_{\ell}(\SA[i])| = \UB_{\ell}(\SA[i]) -
  \LB_{\ell}(\SA[i])$ using the input arguments. If
  $|\Occ_{\ell}(\SA[i])| = 1$, then by $\Occ_{2\ell}(\SA[i]) \neq
  \emptyset$ and $\Occ_{2\ell}(\SA[i]) \sub \Occ_{\ell}(\SA[i])$ we
  must have $\LB_{2\ell}(\SA[i]) = \LB_{\ell}(\SA[i])$,
  $\UB_{2\ell}(\SA[i]) = \UB_{\ell}(\SA[i])$ and $\Occ_{2\ell} =
  \{j\}$. Thus, we return $j' := j$. Let us thus assume
  $|\Occ_{\ell}(\SA[i])| > 1$, and observe that by the uniqueness of
  $\T[n] = \texttt{\$}$, this implies $\SA[i], j \in [1 \dd n - 3\tau
  + 1]$. Moreover, by $3\tau - 1 \leq \ell$, $j \in
  \Occ_{\ell}(\SA[i])$, and $\SA[i] \not\in \R(\tau,\T)$, we also have
  $j \not\in \R(\tau,\T)$. Therefore, as noted in the proof of
  \cref{lm:sa-nonperiodic-pos}, $\Successor_{\S}(j)$ is well-defined
  and using \cref{as:nonperiodic} we can compute $s =
  \Successor_{\S}(j)$ in $\bigO(t)$ time. Let $X \in \Sigma^{*}$ and
  $X', Y \in \Sigma^{+}$ be such that $\revstr{X} = \T[j \dd s)$, $X'
  = Xc^{\infty}$, and $\T^{\infty}[j \dd j {+} \ell) = \revstr{X}Y$
  (where $c = \max\Sigma$). Observe now that by $j \in
  \Occ_{\ell}(\SA[i])$, we have $\T^{\infty}[j \dd j + \ell) =
  \T^{\infty}[\SA[i] \dd \SA[i] + \ell)$.  We also have $3\tau - 1
  \leq \ell$. Thus, by the consistency condition of $\S$
  (\cref{def:sss}), for any $t \in [0 \dd \tau)$, $\SA[i] + t \in \S$
  holds if and only if $j + t \in \S$. Recall, however, that we
  assumed $\SA[i] \not\in \R(\tau,\T)$, i.e., $\S \cap [\SA[i] \dd
  \SA[i] + \tau) \neq \emptyset$.  Therefore, denoting $s' =
  \Successor_{\S}(\SA[i])$, it holds $s' - \SA[i] = s -
  j$. Consequently, it holds $\T^{\infty}[\SA[i] \dd s') =
  \T^{\infty}[j \dd s) = \revstr{X}$ and $\T^{\infty}[s' \dd \SA[i] +
  \ell) = \T^{\infty}[s \dd j + \ell) = Y$.  We can thus apply
  \cref{lm:sa-nonperiodic-occ-pos} without knowing the values of
  $\SA[i]$ or $s'$ (we only need to know $|X|$ and the starting
  position of some occurrence of $\revstr{X}Y$ in $\T$). First, using
  \cref{prob:str-str-it-1} of \cref{prob:str-str} with the query
  arguments $(i, q_{l}, q_{r}) = (s, s - j, \ell - (s - j))$ we
  compute in $\bigO(t)$ time (which is possible under
  \cref{as:nonperiodic}) the value $m := \rcount{\Pts}{X}{X'}{Y}$ (the
  arguments satisfy the requirements of \cref{prob:str-str} since
  $q_{l} = s - j < \tau \leq \ell < n$ and $q_{r} = \ell - (s - j)
  \leq \ell < n$). We then calculate $\delta = i - \LB_{\ell}(\SA[i])$
  and using \cref{prob:str-str-it-2} of \cref{prob:str-str} with the
  query arguments $(i, q_{l}, r) = (s, j - s, m + \delta)$ we compute
  in $\bigO(t)$ time some position $p \in \rselect{\Pts}{X}{X'}{m +
  \delta}$ (the arguments satisfy the requirements of
  \cref{prob:str-str} by the argument as above). By
  \cref{lm:sa-nonperiodic-occ-pos}, we then have $p - q_{l} \in
  \Occ_{2\ell}(\SA[i])$. In total, we spend $\bigO(t)$ time.
\end{proof}

\subsubsection{The Data Structure}\label{sec:sa-nonperiodic-structure}

By combining the above results, we obtain that under
\cref{as:nonperiodic}, given an index $i \in [1 \dd n]$ satisfying
$\SA[i] \in [1 \dd n] \sm \R(\tau,\T)$, along with
$\LB_{\ell}(\SA[i])$, $\UB_{\ell}(\SA[i])$ and some $j \in
\Occ_{\ell}(\SA[i])$ as input, we can efficiently compute
$(\LB_{2\ell}(\SA[i]), \UB_{2\ell}(\SA[i]))$ and some $j' \in
\Occ_{2\ell}(\SA[i])$.

\begin{proposition}\label{pr:sa-nonperiodic}
  Let $i \in [1 \dd n]$ be such that $\SA[i] \in [1 \dd n] \sm
  \R(\tau,\T)$.  Under \cref{as:nonperiodic}, given the index $i$
  along with values $\LB_{\ell}(\SA[i])$, $\UB_{\ell}(\SA[i])$, and
  some position $j \in \Occ_{\ell}(\SA[i])$, we can compute
  $(\LB_{2\ell}(\SA[i]), \UB_{2\ell}(\SA[i]))$ and some $j' \in
  \Occ_{2\ell}(\SA[i])$ in $\bigO(t)$ time.
\end{proposition}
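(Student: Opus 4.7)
The plan is to straightforwardly compose the two reductions established earlier in this subsection. \cref{pr:sa-nonperiodic-occ-pos} already takes precisely the inputs appearing in the statement and returns some $j' \in \Occ_{2\ell}(\SA[i])$ in $\Oh(t)$ time, so only the pair $(\LB_{2\ell}(\SA[i]), \UB_{2\ell}(\SA[i]))$ remains to be produced. For this I will invoke the identities stated at the start of \cref{sec:sa}, namely $\LB_{2\ell}(\SA[i]) = \LB_\ell(\SA[i]) + |\Posalll(\SA[i])|$ and $\UB_{2\ell}(\SA[i]) = \LB_{2\ell}(\SA[i]) + |\Occ_{2\ell}(\SA[i])|$, which reduce the remaining task to computing the two cardinalities on the right-hand side.

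The execution proceeds in three steps. First, invoke \cref{pr:sa-nonperiodic-occ-pos} on $(i, \LB_\ell(\SA[i]), \UB_\ell(\SA[i]), j)$ to obtain some $j' \in \Occ_{2\ell}(\SA[i])$ in $\Oh(t)$ time. Second, observe that $j' \in \Occ_{2\ell}(\SA[i])$ yields $\T^\infty[j' \dd j' + 2\ell) = \T^\infty[\SA[i] \dd \SA[i] + 2\ell)$, and since $3\tau - 1 \leq 2\ell$, the predicate ``$\cdot \in \R(\tau,\T)$'' depends on its argument only through the length-$(3\tau-1)$ cyclic prefix; hence $\SA[i] \not\in \R(\tau,\T)$ implies $j' \not\in \R(\tau,\T)$, so \cref{pr:sa-nonperiodic-pos-and-occ-size} applies to $j'$ and returns $|\Posalll(j')|$ and $|\Occ_{2\ell}(j')|$ in $\Oh(t)$ time. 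Third, substitute $|\Posalll(\SA[i])| = |\Posalll(j')|$ and $|\Occ_{2\ell}(\SA[i])| = |\Occ_{2\ell}(j')|$ into the two identities above and report the resulting pair together with $j'$. The total running time is $\Oh(t)$.

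The only non-routine point is the pair of cardinality equalities used in the third step. The equality $|\Occ_{2\ell}(\SA[i])| = |\Occ_{2\ell}(j')|$ is immediate from the definition, since $\Occ_{2\ell}(\cdot)$ depends on its argument only through the length-$2\ell$ cyclic context, which agrees for $\SA[i]$ and $j'$. For $\Posalll$, membership of a position $j''$ is determined by $\LCE_\T(j'',\SA[i]) \in [\ell \dd 2\ell)$ and the sign of the suffix comparison $\T[j'' \dd n] \prec \T[\SA[i] \dd n]$; an LCE strictly below $2\ell$ forces the first mismatch inside the length-$2\ell$ windows at $\SA[i]$ and $j''$, so both conditions are unchanged upon replacing $\SA[i]$ by $j'$. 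The only boundary issue is when $\T^\infty[\SA[i] \dd \SA[i] + 2\ell)$ wraps past the unique sentinel $\T[n] = \$$; matching cyclic contexts then force $j'$ into the analogous boundary position, and both $\Posalll(\SA[i])$ and $\Posalll(j')$ turn out to be empty, exactly as handled at the opening of the proof of \cref{pr:sa-nonperiodic-pos-and-occ-size}. This boundary bookkeeping is the only mildly delicate aspect; the rest is a direct composition.
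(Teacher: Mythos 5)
Your proposal is correct and follows essentially the same route as the paper's proof: first obtain $j'\in\Occ_{2\ell}(\SA[i])$ via \cref{pr:sa-nonperiodic-occ-pos}, then argue that $\Posalll(\cdot)$ and $\Occ_{2\ell}(\cdot)$ depend only on the length-$2\ell$ context so that \cref{pr:sa-nonperiodic-pos-and-occ-size} applied to $j'$ yields the needed cardinalities (the paper phrases this via \cref{lm:utils-1}, you via a first-mismatch argument — both work, and your explicit check that $j'\notin\R(\tau,\T)$ is a welcome detail the paper leaves implicit). One tiny quibble: in the sentinel-wrapping boundary case the sets $\Posalll(\SA[i])$ and $\Posalll(j')$ need not be empty; what actually saves you there is that uniqueness of $\$$ forces $j'=\SA[i]$, making the equality trivial, which you also note.
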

\begin{proof}

  First, using \cref{pr:sa-nonperiodic-occ-pos}, we compute some $j'
  \in \Occ_{2\ell}(\SA[i])$. This takes $\bigO(t)$ time and all the
  required values ($i$, $\LB_{2\ell}(\SA[i])$, $\UB_{\ell}(\SA[i])$,
  and some $j \in \Occ_{\ell}(\SA[i])$) are given as input. We now
  observe that since for $j'$ we have $\T^{\infty}[j' \dd j' + 2\ell)
  = \T^{\infty}[\SA[i] \dd \SA[i] + 2\ell)$, we have $j'' \in
  \Occ_{2\ell}(\SA[i])$ if and only if $j'' \in \Occ_{2\ell}(j')$.
  Thus, $\Occ_{2\ell}(\SA[i]) = \Occ_{2\ell}(j')$. On the other hand,
  by \cref{lm:utils-1} of \cref{lm:utils}, we have $j'' \in
  \Posalll(\SA[i])$ if and only if $\T^{\infty}[\SA[i] \dd \SA[i] +
  \ell) \preceq \T^{\infty}[j'' \dd j'' + 2\ell) \prec
  \T^{\infty}[\SA[i] \dd \SA[i] + 2\ell)$, i.e., whether $j'' \in
  \Posalll(\SA[i])$ depends only on $\T^{\infty}[\SA[i] \dd \SA[i] +
  2\ell)$.  Thus, $j' \in \Occ_{2\ell}(\SA[i])$ implies
  $\Posalll(\SA[i]) = \Posalll(j')$.  Therefore, in the second step of
  the query, using \cref{pr:sa-nonperiodic-pos-and-occ-size} we
  compute in $\bigO(t)$ time the values
  \begin{align*}
    \deltal_{\ell}(\SA[i]) &:= |\Posalll(j')| =
    |\Posalll(\SA[i])|\text{ and }\\ m &:= |\Occ_{2\ell}(j')| =
    |\Occ_{2\ell}(\SA[i])|.
  \end{align*}
  Letting $b = \LB_{\ell}(\SA[i])$, it then holds
  $(\LB_{2\ell}(\SA[i]), \UB_{2\ell}(\SA[i])) = (b +
  \deltal_{\ell}(\SA[i]), b + \deltal_{\ell}(\SA[i]) + m)$. In total,
  the query takes $\bigO(t)$ time.
\end{proof}

\subsection{The Periodic Positions}\label{sec:sa-periodic}

In this section, we show that assuming we can compute basic
characteristic of periodic positions and perform some int-string range
and modular constraint queries (\cref{as:periodic}), given any
position $i \in [1 \dd n]$ satisfying $\SA[i] \in \R(\lfloor
\tfrac{\ell}{3} \rfloor, \T)$, along with some position $j \in
\Occ_{\ell}(\SA[i])$ and the pair $(\LB_{\ell}(\SA[i]),
\UB_{\ell}(\SA[i]))$ as input, we can efficiently compute some $j'
\,{\in}\, \Occ_{2\ell}(\SA[i])$ and the pair $(\LB_{2\ell}(\SA[i]),
\UB_{2\ell}(\SA[i]))$.

The section is organized into nine parts. First
(\cref{sec:sa-periodic-prelim}) we recall the standard definitions and
notation for string synchronizing sets~\cite{sss}, that we will use to
formalize the combinatorial properties used in our algorithms and then
present a main assumption that we will use throughout this
section. Next (\cref{sec:sa-periodic-pos-decomposition}), we formulate
the main idea of the algorithm by describing three sets
$\Poslow(\SA[i])$, $\Posmid(\SA[i])$, and $\Poshigh(\SA[i])$, which
play a central role in the SA query algorithm, and prove the
combinatorial result showing how their sizes relate to the size of
$\Posalll(\SA[i])$. The following six sections
(Section~\ref{sec:sa-periodic-occ-size}--Section~\ref{sec:sa-periodic-occ-pos})
describe the individual steps of the query algorithm in the order in
which are used (with some minor exceptions). In
\cref{sec:sa-periodic-structure} we put everything together to obtain
the algorithm that computes the pair $(\LB_{2\ell}(\SA[i]),
\UB_{2\ell}(\SA[i]))$ and some $j' \in \Occ_{2\ell}(\SA[i])$ for $i
\in [1 \dd n]$ satisfying $\SA[i] \in \R$.

\subsubsection{Preliminaries}\label{sec:sa-periodic-prelim}

In this section, we define preliminary concepts used for answering
$\SA$ queries for periodic positions.

The section is organized as follows. First, we recall basic
definitions and notation for periodic positions in string
synchronizing sets. We then recall two results characterizing blocks
of periodic positions in lexicographical order (\cref{lm:exp}) and in
text order (\cref{lm:R-block}). We conclude with the central
assumption that we will use all through the section.

\begin{definition}\label{def:nc}
  A function $f: \Sigma^{+} \rightarrow \Sigma^{+}$ is said to be
  \emph{necklace-consistent} if it satisfies the following conditions
  for every $S,S'\in \Sigma^+$:
  \begin{enumerate}
  \item The strings $f(S)$ and $S$ are cyclically equivalent.
  \item If $S$ and $S'$ are cyclically equivalent, then $f(S)= f(S')$.
  \end{enumerate}
\end{definition}

Let $S \in \Sigma^{+}$ and $\tau \in \Zp$, and let $f$ be some
necklace-consistent function.  For any $X \in \Sigma^{+}$, we define
$\Lrootstrgen{f}{X} := f(X[1 \dd p])$, where $p =
\per(X)$.\footnote{Note that this definition of $\Lroot$ generalizes
the original definition~\cite{sss}, in which $f(S)$ always returned
the lexicographically minimal string that is cyclically equivalent
with $S$. Such function is clearly necklace-consistent.}  For any
position $j \in \R(\tau,S)$, we then let $\Lrootgen{f}{\tau}{S}{j} =
\Lrootstrgen{f}{S[j \dd j + 3\tau - 1)}$.  We denote
$\Lrootsgen{f}{\tau}{S} = \{\Lrootgen{f}{\tau}{S}{j} : j \in
\R(\tau,S)\}$.  Next, we define the \emph{run-decomposition}.  For any
$j \in \R(\tau,S)$, let $\rendgen{\tau}{S}{j} := \min\{j' \in [j \dd
n] : j' \not\in \R(\tau,S)\} + 3\tau - 2$.  By~\cite[Fact~3.2]{sss},
for $j \in \R(\tau,S)$, the substring $S[j \dd \rendgen{\tau}{S}{j})$
is the longest prefix of $S[j \dd n]$ that has a period
$|\Lrootgen{f}{\tau}{S}{j}|$. Moreover, by definition of $\Lroot$,
letting $p = |\Lrootgen{f}{\tau}{S}{j}|$, there exists $s \in [0 \dd
p)$ such that $S[j + s \dd j + s + p) = \Lrootgen{f}{\tau}{S}{j})$.
Thus, for every $j \in \R(\tau,S)$, we can write $S[j \dd
\rendgen{\tau}{S}{j}) = H'H^{k}H''$, where $H =
\Lrootgen{f}{\tau}{S}{j}$, and $H'$ (resp.\ $H''$) is a proper suffix
(resp.\ prefix) of $H$. Note that there is always only one way to
write $S[j \dd \rendgen{\tau}{S}{j})$ in this way, since the opposite
would contradict the synchronization property of primitive
strings~\cite[Lemma~1.11]{AlgorithmsOnStrings}. In other words, for
any fixed $f$, the run-decomposition in unique. We denote
$\Lheadgen{f}{\tau}{S}{j} = |H'|$, $\Lexpgen{f}{\tau}{S}{j} = k$, and
$\Ltailgen{f}{\tau}{S}{j} = |H''|$.  For $j \in \R(\tau,S,j)$, we let
$\typegen{\tau}{S}{j} = +1$ if $\rendgen{\tau}{S}{j} \leq |S|$ and
$S[\rendgen{\tau}{S}{j}] \succ S[\rendgen{\tau}{S}{j} - p]$ (where $p
= |\Lrootgen{f}{\tau}{S}{j}|$), and $\typegen{\tau}{S}{j} = - 1$
otherwise.\footnote{Although the string $\Lrootgen{f}{\tau}{S}{j}$
depends on the function $f$, the value $|\Lrootgen{f}{\tau}{S}{j}| =
\per(S[j \dd j + 3\tau - 1))$ does not. Thus, it is correct to drop
$f$ in the notation for $\typegen{\tau}{S}{j}$.} For any $j \in
\R(\tau,S)$ and $t \geq 3\tau - 1$, we define
$\Lexpcutgen{f}{\tau}{S}{j}{t} := \min(\Lexpgen{f}{\tau}{S}{j},
\lfloor \tfrac{t - s}{|H|} \rfloor)$ and
$\rendcutgen{f}{\tau}{S}{j}{t} := j + s +
\Lexpcutgen{f}{\tau}{S}{j}{t}|H|$, where $s =
\Lheadgen{f}{\tau}{S}{j}$ and $H = \Lrootgen{f}{\tau}{S}{j}$.  We
denote $\rendfullgen{f}{\tau}{S}{j} := \rendcutgen{f}{\tau}{S}{j}{n}$,
$\rendlowgen{f}{\tau}{S}{j} := \rendcutgen{f}{\tau}{S}{j}{\ell}$, and
$\rendhighgen{f}{\tau}{S}{j} :=
\rendcutgen{f}{\tau}{S}{j}{2\ell}$. Observe that it holds
$\rendfullgen{f}{\tau}{S}{j} = j + s + \Lexpgen{f}{\tau}{S}{j}|H|$.

We will repeatedly refer to the following subsets of $\R(\tau,S)$.
First, we denote $\R^{-}(\tau,S) = \{j \in \R(\tau,S) :
\typegen{\tau}{S}{j} = -1\}$ and $\R^{+}(\tau,S) = \R(\tau,S) \sm
\R^{-}(\tau,S)$. For any $H \in \Sigma^{+}$ and $s \in \Zz$, we then
let $\R_{f,H}(\tau,S) = \{j \in \R(\tau,S) : \Lrootgen{f}{\tau}{S}{j}
= H\}$, $\R^{-}_{f,H}(\tau,S) = \R^{-}(\tau,S) \cap \R_{f,H}(\tau,S)$,
$\R^{+}_{f,H}(\tau,S) = \R^{+}(\tau,S) \cap \R_{f,H}(\tau,S)$,
$\R_{f,s,H}(\tau,S) = \{j \in \R_{f,H}(\tau,S) :
\Lheadgen{f}{\tau}{S}{j}=s\}$, $\R^{-}_{f,s,H}(\tau,S) =
\R^{-}(\tau,S) \cap \R_{f,s,H}(\tau,S)$, and $\R^{+}_{f,s,H}(\tau,S) =
\R^{+}(\tau,S) \cap \R_{f,s,H}(\tau,S)$.

The following two lemmas establish the basic properties of periodic
positions. First, we show that for any necklace-consistent $f$, the
set of positions $\R_{f,s,H}(\tau,\T)$ occupies a contiguous block in
the $\SA$ of $\T$ and describe the structure of such
\emph{lexicographical} block. The following lemma was proved
in~\cite[Lemma~3.5]{sss-index}, but for completeness, we provide its
proof in \cref{app:extra-proofs}.

\begin{restatable}{lemma}{lmexp}\label{lm:exp}
  Let $S \in \Sigma^{k}$, $\tau \in \Zp$, and let $f$ be any
  necklace-consistent function. If $j \in \R_{f,s,H}(\tau,S)$ then for
  any $j' \in [1 \dd k]$, $\LCE_{S}(j, j') \geq 3\tau - 1$ holds if
  and only if $j' \in \R_{f,s,H}(\tau,S)$. Moreover, if $j'\in
  \R_{f,s,H}(\tau, S)$, then:
  \begin{enumerate}
  \item\label{lm:SA-block-1} If $\typegen{\tau}{S}{j} = -1$ and
    $\typegen{\tau}{S}{j'} = +1$, then $S[j \dd] \prec S[j' \dd]$,
  \item\label{lm:SA-block-2} If $\typegen{\tau}{S}{j} =
    \typegen{\tau}{S}{j'} = -1$ and $\rendgen{\tau}{S}{j} - j <
    \rendgen{\tau}{S}{j'} - j'$, then $S[j \dd] \prec S[j' \dd]$,
  \item If $\typegen{\tau}{S}{j} = \typegen{\tau}{S}{j'} = +1$ and
    $\rendgen{\tau}{S}{j} - j > \rendgen{\tau}{S}{j'} - j'$, then $S[j
    \dd] \prec S[j' \dd]$.
  \end{enumerate}
\end{restatable}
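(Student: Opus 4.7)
The plan is to first establish the characterization of $\R_{f,s,H}(\tau,S)$ via equality of length-$(3\tau-1)$ prefixes, and then to analyze the lexicographic ordering by locating the first position at which the shared periodic template breaks.

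For the equivalence, the forward direction is immediate: $\LCE_S(j,j') \geq 3\tau - 1$ gives $S[j' \dd j'+3\tau-1) = S[j \dd j+3\tau-1)$, so the shortest period $p$, the offset $s$ at which $H$ sits inside the periodic window, and (by necklace-consistency of $f$) the value $\Lroot = H$ all transfer from $j$ to $j'$; hence $j' \in \R_{f,s,H}(\tau,S)$. Conversely, if both $j, j' \in \R_{f,s,H}(\tau,S)$, then $S[j+s \dd j+s+p) = H = S[j'+s \dd j'+s+p)$ together with the common period $p$ forces $S[j \dd j+p) = S[j' \dd j'+p)$ as the same cyclic shift of $H$, and the period relation propagates the equality to $3\tau - 1$ characters, so $\LCE_S(j,j') \geq 3\tau - 1$.

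For the three ordering statements, the key structural fact is that $S[j \dd \rend{j})$ and $S[j' \dd \rend{j'})$ are both prefixes of the same infinite periodic word of period $p$ (the iterate of the common cyclic shift of $H$ identified above). Hence $S[j \dd]$ and $S[j' \dd]$ agree on the first $m := \min(\rend{j}-j,\,\rend{j'}-j')$ characters. Assume first that $\rend{j}-j < \rend{j'}-j'$; the reverse strict inequality is handled symmetrically by swapping $j$ and $j'$. At relative position $m$, $S[j \dd]$ holds $S[\rend{j}]$ (or simply ends, if $\rend{j} > |S|$), whereas $S[j' \dd]$ holds $S[j'+m]$, which by periodicity inside $S[j' \dd \rend{j'})$ combined with the $m$-character agreement equals $S[\rend{j}-p]$. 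Thus the comparison is decided by the sign of $S[\rend{j}]$ vs.\ $S[\rend{j}-p]$: the former is strictly smaller iff $\type(j) = -1$ (with the boundary $\rend{j} > |S|$ making $S[j \dd]$ a proper prefix of $S[j' \dd]$). This directly yields Case 2 and the $\rend{j}-j < \rend{j'}-j'$ subcase of Case 1; the strict opposite in Case 1 follows by swapping, and Case 3 follows by swapping and using $\type(j) = +1$.

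The main obstacle is the tie $\rend{j}-j = \rend{j'}-j'$, which can arise only in Case 1 (different types). From the $m$-character agreement we obtain $S[\rend{j}-p] = S[\rend{j'}-p]$ (both equal the character at relative position $m-p$ in the shared periodic prefix, where $m - p \geq 3\tau - 1 - p \geq 0$). Combining $\type(j) = -1$ (so $S[\rend{j}] \prec S[\rend{j}-p]$ or $\rend{j}>|S|$) with $\type(j') = +1$ (so $S[\rend{j'}] \succ S[\rend{j'}-p]$) gives $S[\rend{j}] \prec S[\rend{j'}]$, and comparing at position $m$ in the two suffixes yields $S[j \dd] \prec S[j' \dd]$. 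In the boundary subcase $\rend{j} = |S|+1$, the tie together with $\rend{j'} \leq |S|+1$ forces $j' < j$ (equality $j'=j$ is incompatible with $\type(j)\neq\type(j')$), and then $S[j \dd]$ is a proper prefix of $S[j' \dd]$, once again yielding $S[j \dd] \prec S[j' \dd]$.
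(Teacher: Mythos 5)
Your proof is correct, and its skeleton matches the paper's own argument. The equivalence is established the same way in both: the root transfers by necklace-consistency and the head by uniqueness of the run-decomposition of the shared length-$(3\tau-1)$ window (the paper spells this out via the synchronization property of primitive strings, which you only gesture at, but the fact is standard); conversely, both runs are prefixes of $H'H^{\infty}$. Cases 2 and 3 are resolved exactly as in the paper: a mismatch at offset $\min(\rendgen{\tau}{S}{j}-j,\ \rendgen{\tau}{S}{j'}-j')$ whose direction is dictated by the type, with the boundary $\rendgen{\tau}{S}{j}=|S|+1$ giving a proper-prefix relation. The one place you genuinely diverge is Case 1: the paper sandwiches $S[j\dd] \prec H'H^{\infty} \prec S[j'\dd]$, which disposes of all subcases (unequal run lengths in either direction, the tie, and the boundary) in one stroke, whereas you compare the two suffixes directly and must handle the tie $\rendgen{\tau}{S}{j}-j=\rendgen{\tau}{S}{j'}-j'$ separately. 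Your tie argument is sound: $S[\rendgen{\tau}{S}{j}-p]=S[\rendgen{\tau}{S}{j'}-p]$ does follow from the shared length-$m$ periodic prefix since $m-p\ge 0$, and chaining the two type inequalities through this common character (or falling back to the proper-prefix case when $\rendgen{\tau}{S}{j}=|S|+1$, where $\typegen{\tau}{S}{j'}=+1$ forces $\rendgen{\tau}{S}{j'}\le |S|$ and hence $j'<j$) closes the case. So the difference is one of bookkeeping rather than substance: the paper's sandwich buys a shorter Case 1, while your unified mismatch analysis buys a single template covering all three cases.
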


The key to the efficient computation of $\deltal_{\ell}(j)$ is
processing of the elements of $\R(\tau,S)$ in blocks (note that here
by ``blocks'' we mean blocks of positions in \emph{text order}, as
opposed to blocks in \emph{lexicographical order} (i.e., in the suffix
array) considered in the previous lemma). The starting positions of
these blocks are defined as
\begin{equation}\label{eq:Rprim}
  \R'(\tau,S) := \{j\in \R(\tau,S) : j-1 \notin \R(\tau,S)\}.
\end{equation}
We also let $\R'^{-}(\tau,S) = \R^{-}(\tau,S) \cap \R'(\tau,S)$,
$\R'^{+}(\tau,S) = \R^{+}(\tau,S) \cap \R'(\tau,S)$,
$\R'^{-}_{f,H}(\tau,S) = \R'(\tau,S) \cap \R^{-}_{f,H}(\tau,S)$, and
$\R'^{+}_{f,H}(\tau,S) = \R'(\tau,S) \cap \R^{+}_{f,H}(\tau,S)$.  For
any $H \in \Sigma^{+}$ we also denote:
\[
  E^{-}_{f,H}(\tau,S) := \{(\rendfullgen{f}{\tau}{S}{j} - j,
  \rendfullgen{f}{\tau}{S}{j}) : j \in \R'^{-}_{f,H}(\tau,S)\}.
\]
The set $E^{+}_{f,H}(\tau,S)$ is defined analogously, but with
$\R'^{-}_{f,H}(\tau,S)$ replaced by $\R'^{+}_{f,H}(\tau,S)$.  The next
lemma justifies our strategy. As with the previous lemma, this is a
standard result characterizing periodic positions of string
synchronizing sets and was proved in~\cite[Lemma~3.6]{sss-index} for
the original definition of $\Lroot$. For completeness, in
\cref{app:extra-proofs} we provide its minimally modified proof
adapting it to the more general version of $\Lroot$ used in this
paper.

\begin{restatable}{lemma}{lmRblock}\label{lm:R-block}
  Let $S \in \Sigma^{+}$, $\tau \in \Zp$, and assume that $f$ is a
  necklace-consistent function. For any position $j \in \R(\tau,S) \sm
  \R'(\tau,S)$ it holds
  \begin{itemize}
  \item $\Lrootgen{f}{\tau}{S}{j {-} 1} = \Lrootgen{f}{\tau}{S}{j}$,
  \item $\rendgen{\tau}{S}{j {-} 1} = \rendgen{\tau}{S}{j}$, and
  \item $\typegen{\tau}{S}{j {-} 1} = \typegen{\tau}{S}{j}$.
  \end{itemize}
\end{restatable}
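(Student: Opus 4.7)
The plan is to establish the three equalities in order, with the first—the equality of roots—resting on a short combinatorial step about periods, and the other two following almost immediately from the definitions. Since $j \in \R(\tau,S) \setminus \R'(\tau,S)$, by \eqref{eq:Rprim} we have $j-1 \in \R(\tau,S)$ as well, so both $S[j-1 \dd j-1+3\tau-1)$ and $S[j \dd j+3\tau-1)$ have shortest period at most $\tfrac{1}{3}\tau$; call these $p_1$ and $p$, respectively. The core combinatorial claim is that $p_1 = p$.

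To prove this, I would apply Fine and Wilf's theorem to the length-$(3\tau-2)$ overlap $S[j \dd j-1+3\tau-1)$, which inherits both periods $p$ and $p_1$. Since $p + p_1 \le \tfrac{2}{3}\tau \le 3\tau - 2$ for $\tau \ge 1$, the overlap acquires period $q := \gcd(p,p_1)$. It remains to extend $p$ across the left-boundary character $S[j-1]$ in order to promote $p$ to the whole left string. Using the left period, $S[j-1] = S[j-1+p_1]$; and because the positions $j-1+p$ and $j-1+p_1$ both lie in the overlap and differ by a multiple of $q$, iterating the overlap's period $q$ between them gives $S[j-1+p] = S[j-1+p_1]$. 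Thus $p$ is a period of the left string, forcing $p_1 \le p$. The symmetric argument, pushing $p_1$ past the right-boundary character $S[j-1+3\tau-1]$, yields $p \le p_1$, and hence $p = p_1$.

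With $p = p_1$ in hand, the $\Lroot$ equality is immediate: the arguments $S[j-1 \dd j-1+p]$ and $S[j \dd j+p]$ of $f$ are length-$p$ substrings of the period-$p$ string $S[j-1 \dd j-1+3\tau-1)$, so they are cyclic rotations of one another, and necklace-consistency (\cref{def:nc}) then forces $f$ to agree on them. For $\rend$: since $j-1 \in \R(\tau,S)$, the value $\min\{j' \in [j-1 \dd n] : j' \notin \R(\tau,S)\}$ coincides with $\min\{j' \in [j \dd n] : j' \notin \R(\tau,S)\}$, so $\rendgen{\tau}{S}{j-1} = \rendgen{\tau}{S}{j}$. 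Finally, $\type$ is defined entirely through $\rend$, $|S|$, and the root length $p$, all of which have now been shown to agree at $j-1$ and $j$, so $\typegen{\tau}{S}{j-1} = \typegen{\tau}{S}{j}$.

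The main obstacle is the Fine--Wilf step: the theorem only produces a common period on the overlap itself, whereas what is required is to push each period exactly one character past the overlap, across $S[j-1]$ and $S[j-1+3\tau-1]$. The little chase via $S[j-1] = S[j-1+p_1]$ combined with the overlap's period $q$ is the only place where the argument requires care; everything else is essentially bookkeeping from the definitions.
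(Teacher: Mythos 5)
Your proof is correct and follows the same overall route as the paper's: show that the shortest periods of $S[j-1\dd j-1+3\tau-1)$ and $S[j\dd j+3\tau-1)$ coincide via the periodicity lemma, deduce the root equality from cyclic equivalence together with necklace-consistency, and read off the $\rend$ and $\type$ equalities. The one place where your execution genuinely differs is the Fine--Wilf step: the paper applies the weak periodicity lemma to the shorter window $S[j\dd j+\tau)$ and concludes via a contradiction with the primitivity of $S[j\dd j+p')$ (which tacitly needs $\gcd(p,p_1)<\min(p,p_1)$), whereas you take the full length-$(3\tau-2)$ overlap and explicitly push each period one character past it in both directions using the chain $S[j-1]=S[j-1+p_1]=S[j-1+p]$; your two-sided extension is, if anything, the more robust version since it also disposes of the case where one period divides the other. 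Your derivation of $\rendgen{\tau}{S}{j-1}=\rendgen{\tau}{S}{j}$ directly from the definition (using that $j-1\in\R(\tau,S)$ drops $j-1$ from the set being minimized) is simpler than the paper's detour through the LCE characterization of $\rend$, and equally valid. One cosmetic point: the arguments fed to $f$ should be the half-open, length-$p$ fragments $S[j-1\dd j-1+p)$ and $S[j\dd j+p)$ rather than $S[j-1\dd j-1+p]$.
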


\begin{definition}\label{def:intervals}
  For any $j \in \R(\tau,S)$, denote $I_j(\tau,S) = (b + 1, e + 1,j)$,
  where $e = \rendfullgen{f}{\tau}{S}{j} - j$, $t =
  \rendgen{\tau}{s}{j} - j - 3\tau + 2$, and $b = e - t$. Let $H \in
  \Sigma^{+}$.  We define
  \begin{align*}
    \mathcal{I}^{-}_{f,H}(\tau,S) &
      :=\{I_{j}(\tau,S): j \in \R'^{-}_{f,H}(\tau,S)\},\\ 
    \mathcal{I}^{+}_{f,H}(\tau,S) &
      :=\{I_{j}(\tau,S): j \in \R'^{+}_{f,H}(\tau,S)\}.
  \end{align*}
\end{definition}

We have now defined all the necessary notation to express the
assumption that we will use in the rest of this section.

\begin{assumption}\label{as:periodic}
  For $\tau = \lfloor \tfrac{\ell}{3} \rfloor$, some
  necklace-consistent function $f$, and any $H \in \Sigma^{+}$, the
  following queries on text $\T$ can be supported in $\bigO(t)$ time:
  \begin{enumerate}
  \item\label{as:periodic-it-1}Given any $j \in \R(\tau,\T)$, return
    $|\Lrootgen{f}{\tau}{\T}{j}|$, $\Lheadgen{f}{\tau}{\T}{j}$, and
    $\rendgen{\tau}{\T}{j}$.
  \item\label{as:periodic-it-2}Range queries (\cref{prob:int-str}) on
    $\Points_{7\tau}(\T, E^{-}_{f,H}(\tau,\T))$ and
    $\Points_{7\tau}(\T, E^{+}_{f,H}(\tau,\T))$.
  \item\label{as:periodic-it-3}Modular constraint queries
    (\cref{sec:mod-queries}) on $\mathcal{I}^{-}_{f,H}(\tau,\T)$ and
    $\mathcal{I}^{+}_{f,H}(\tau,\T)$.
  \end{enumerate}
  The string $H$ for queries 2.\ and 3.\ is specified by some $i, p
  \in [1 \dd n]$ such that $H = \T[i \dd i + p)$.
\end{assumption}

The above notation was introduced in the general form, because it is
needed in the section implementing \cref{as:periodic} which supports
these queries for \emph{collections of strings}.  For the rest of the
section, we define $\Lroot(j)$, $\Lhead(j)$, $\Lexp(j)$,
$\Lexpcut{j}{t}$, $\Ltail(j)$, $\rend{j}$, $\rendfull{j}$,
$\rendlow{j}$, $\rendhigh{j}$, $\rendcut{j}{t}$, and $\type(j)$ as a
shorthand for, respectively, $\Lrootgen{f}{\tau}{S}{j}$,
$\Lheadgen{f}{\tau}{S}{j}$, $\Lexpgen{f}{\tau}{S}{j}$,
$\Lexpcutgen{f}{\tau}{S}{j}{t}$, $\Ltailgen{f}{\tau}{S}{j}$,
$\rendgen{\tau}{S}{j}$, $\rendfullgen{f}{\tau}{S}{j}$,
$\rendlowgen{f}{\tau}{S}{j}$, $\rendhighgen{f}{\tau}{S}{j}$,
$\rendcutgen{f}{\tau}{S}{j}{t}$, and $\typegen{\tau}{S}{j}$ with $S =
\T$, $\tau = \lfloor \tfrac{\ell}{3} \rfloor$, and $f$ being some
necklace-consistent function (recall that $\T$ is the main text
defined globally all through the paper).

We also define $\R$, $\R^{-}$, $\R^{+}$, $\R_H$, $\R^{-}_H$,
$\R^{+}_H$, $\R_{s,H}$, $\R^{-}_{s,H}$, $\R^{+}_{s,H}$, $\R'$,
$\R'^{-}$, $\R'^{+}$, $\R'^{-}_{s,H}$, $\R'^{+}_{s,H}$, $E^{-}_{H}$,
$E^{+}_{H}$, $\mathcal{I}^{-}_{H}$, $\mathcal{I}^{+}_{H}$, and
$\Lroots$ as a shorthand notation for, respectively, the corresponding
sets $\R(\tau,S)$, $\R^{-}(\tau,S)$, $\R^{+}(\tau,S)$,
$\R_{f,H}(\tau,S)$, $\R^{-}_{f,H}(\tau,S)$, $\R^{+}_{f,H}(\tau,S)$,
$\R_{f,s,H}(\tau,S)$, $\R^{-}_{f,s,H}(\tau,S)$,
$\R^{+}_{f,s,H}(\tau,S)$, $\R'(\tau,S)$, $\R'^{-}(\tau,S)$,
$\R'^{+}(\tau,S)$, $\R'^{-}_{f,s,H}(\tau,S)$,
$\R'^{+}_{f,s,H}(\tau,S)$, $E^{-}_{f,H}(\tau,S)$,
$E^{+}_{f,H}(\tau,S)$, $\mathcal{I}^{-}_{f,H}(\tau,S)$,
$\mathcal{I}^{+}_{f,H}(\tau,S)$, and finally $\Lrootsgen{f}{\tau}{S}$
with $S = \T$, $\tau = \lfloor \tfrac{\ell}{3} \rfloor$, and $f$ being
some necklace-consistent function.

\subsubsection{Decomposition of Pos}\label{sec:sa-periodic-pos-decomposition}

Let $j \in \R^{-}$. In this section we introduce the three sets
$\Poslow(j)$, $\Posmid(j)$, and $\Poshigh(j)$ playing a central role
in the SA query algorithm (positions $j \in \R^{+}$ are processed
symmetrically; the details are provided in the proof of
\cref{pr:sa-periodic}).

The section is organized as follows. First, we define $\Poslow(j)$,
$\Posmid(j)$, and $\Poshigh(j)$. We then prove the central
combinatorial result (\cref{lm:delta}) of the $\SA$ query, showing how
the size $\deltal_{\ell}(j) = |\Posalll(j)|$ relates to the sizes of
the other three sets.

\begin{definition}
  Assume $H \in \Lroots$, $s \in [0 \dd |H|)$, and $j \in
  \R_{s,H}^{-}$. Denote $k_1 := \Lexpcut{j}{\ell} = \min(\Lexp(j),
  \lfloor \frac{\ell-s}{|H|} \rfloor)$ and $k_2 := \Lexpcut{j}{2\ell}
  = \min(\Lexp(j), \lfloor \frac{2\ell-s}{|H|} \rfloor)$.  We define
  \begin{align*}
    \Poslow(j) &=
      \{j' \in \R_{s,H}^{-} : \Lexp(j') = k_1\text{ and }(\T[j' \dd n]
      \succeq \T[j \dd n]\text{ or }\LCE_{\T}(j,j') \geq \ell)\},\\
    \Posmid(j) &=
      \{j' \in \R_{s,H}^{-} : \Lexp(j') \in (k_1 \dd k_2]\},\text{
      and}\\
    \Poshigh(j) &=
      \{j' \in \R_{s,H}^{-} : \Lexp(j') = k_2\text{ and }(\T[j' \dd n]
      \succeq \T[j \dd n]\text{ or }\LCE_{\T}(j, j') \geq 2\ell)\}.
  \end{align*}
  We denote $\deltalow(j) = |\Poslow(j)|$, $\deltamid(j) =
  |\Posmid(j)|$, and $\deltahigh(j) = |\Poshigh(j)|$.
\end{definition}

\begin{lemma}\label{lm:delta}
  For any $j \in \R^{-}$, it holds $\deltal_{\ell}(j) = \deltalow(j) +
  \deltamid(j) - \deltahigh(j)$.
\end{lemma}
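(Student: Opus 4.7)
The plan is to establish the set identity $\Posalll(j) = (\Poslow(j) \cup \Posmid(j)) \setminus \Poshigh(j)$, from which the cardinality formula follows by inclusion--exclusion, because $\Poslow(j) \cap \Posmid(j) = \emptyset$ (the two sets pin down different values of $\Lexp$) and $\Poshigh(j) \subseteq \Poslow(j) \cup \Posmid(j)$. The latter splits into two easy subcases: if $k_1 < k_2$ then $\Poshigh(j) \subseteq \Posmid(j)$ since $\Lexp = k_2 \in (k_1 \dd k_2]$, while if $k_1 = k_2$ then $\Posmid(j) = \emptyset$ and the defining disjunction of $\Poslow(j)$ is a weakening of that of $\Poshigh(j)$, so $\Poshigh(j) \subseteq \Poslow(j)$.

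To reduce to $\R_{s,H}^-$ with $s = \Lhead(j)$ and $H = \Lroot(j)$, I would apply \cref{lm:exp}: every $j' \in \Posalll(j)$ satisfies $\LCE_{\T}(j,j') \geq \ell \geq 3\tau - 1$, so $j' \in \R_{s,H}$, and the strict inequality $\T[j' \dd n] \prec \T[j \dd n]$ combined with $j \in \R^-$ rules out $\type(j') = +1$ by item~1 of \cref{lm:exp}. Then, for $j' \in \R_{s,H}^-$, I would analyze the lexicographic comparison and $\LCE_\T(j,j')$ by cases on $(\Lexp(j'),\Ltail(j'))$ versus $(\Lexp(j),\Ltail(j))$. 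Using the run decomposition $\T[j'' \dd \rendfull(j'')) = H'H^{\Lexp(j'')}$ with $|H'| = s$ and the fact that $\type(j'') = -1$ forces $\T[\rend(j'')] \prec H[\Ltail(j'')+1]$, one obtains: whenever $\Lexp(j') < \Lexp(j)$, or the exponents are equal and $\Ltail(j') < \Ltail(j)$, we have $\T[j' \dd n] \prec \T[j \dd n]$ with $\LCE_{\T}(j,j') = s + \min(\Lexp(j'),\Lexp(j))|H| + \Ltail(j')$; the symmetric ``strictly greater'' subcases yield $\T[j'\dd n] \succ \T[j\dd n]$; and the degenerate subcase $\Lexp(j') = \Lexp(j)$, $\Ltail(j') = \Ltail(j)$ only yields the lower bound $\LCE_\T(j,j') \geq s + \Lexp(j)|H| + \Ltail(j)$ without fixing the lex direction.

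From these formulas I would derive $\Posalll(j) \subseteq \{j' \in \R_{s,H}^- : \Lexp(j') \in [k_1 \dd k_2]\} = \Poslow(j) \cup \Posmid(j)$: the lower bound $\Lexp(j') \geq k_1$ follows because $\LCE_\T(j,j') < s + (\Lexp(j')+1)|H|$ together with $\LCE_\T(j,j') \geq \ell$ forces $\Lexp(j')+1 > (\ell-s)/|H|$; the upper bound $\Lexp(j') \leq k_2$ follows because $\Lexp(j') \leq \Lexp(j)$ (forced by $\T[j'\dd n] \prec \T[j\dd n]$) together with $\LCE_\T(j,j') < 2\ell$ forces $\Lexp(j') \leq \lfloor (2\ell-s)/|H| \rfloor$. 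At the boundary $\Lexp(j') = k_1$ the clause $\LCE_\T(j,j') \geq \ell$ is precisely what places $j'$ in $\Poslow(j)$, while $\T[j'\dd n] \prec \T[j\dd n]$ with $\LCE_\T(j,j') < 2\ell$ excludes $j'$ from $\Poshigh(j)$. This gives $\Posalll(j) \subseteq (\Poslow(j) \cup \Posmid(j)) \setminus \Poshigh(j)$.

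For the reverse inclusion, given $j' \in (\Poslow(j) \cup \Posmid(j)) \setminus \Poshigh(j)$, I would split on whether $\Lexp(j')$ equals $k_1$, lies strictly between $k_1$ and $k_2$, or equals $k_2$. The middle subcase is automatic from the exponent bounds combined with the above formulas. At the extremes, the disjunctions ``$\T[j'\dd n] \succeq \T[j\dd n]$ or $\LCE_\T(j,j') \geq \text{threshold}$'' in the definitions of $\Poslow(j)$ and $\Poshigh(j)$, together with the removal of $\Poshigh(j)$, pin down both $\T[j'\dd n] \prec \T[j\dd n]$ and $\LCE_\T(j,j') \in [\ell \dd 2\ell)$ without having to characterize the post-run characters in the delicate subcase $\Lexp(j') = \Lexp(j)$ and $\Ltail(j') = \Ltail(j)$. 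This degenerate subcase is the main obstacle in the argument, and the disjunctive form of the definitions of $\Poslow(j)$ and $\Poshigh(j)$ is exactly the technical device engineered to absorb it symmetrically so that the identity closes up.
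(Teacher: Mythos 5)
Your proof is correct and follows essentially the same route as the paper's: the identity $\Posalll(j) = (\Poslow(j)\cup\Posmid(j))\setminus\Poshigh(j)$ with $\Poshigh(j)\subseteq\Poslow(j)\cup\Posmid(j)$ is just an equivalent restatement of the paper's $\Poslow(j)\cup\Posmid(j)=\Posalll(j)\cup\Poshigh(j)$ (given the disjointness facts), and the supporting steps — reduction to $\R^{-}_{s,H}$ via \cref{lm:exp}, the exponent bounds $\Lexp(j')\in[k_1\dd k_2]$ extracted from the run-decomposition LCE formulas, and the use of the disjunctive clauses of $\Poslow$/$\Poshigh$ to absorb the degenerate subcase $\rend{j'}-j'=\rend{j}-j$ — all mirror the paper's argument. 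The only blemish is the displayed ``$=$'' in $\{j'\in\R^{-}_{s,H}:\Lexp(j')\in[k_1\dd k_2]\}=\Poslow(j)\cup\Posmid(j)$, which is really only a $\supseteq$ because of $\Poslow$'s extra clause, but your subsequent boundary analysis at $\Lexp(j')=k_1$ supplies exactly the missing verification, so this is a notational slip rather than a gap.
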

\begin{proof}

  By definition, it holds $\Poslow(j) \cap \Posmid(j) = \emptyset$.
  On the other hand, by definition, we also have $\Posalll(j) \cap
  \Poshigh(j) = \emptyset$.  The main strategy of the proof is to show
  the equality $\Poslow(j) \cup \Posmid(j) = \Posalll(j) \cup
  \Poshigh(j)$. Since both elements of the equation are disjoint
  unions, this implies
  \begin{align*}
    \deltalow(j) + \deltamid(j)
      &= |\Poslow(j)| + |\Posmid(j)|
       = |\Poslow(j) \cup \Posmid(j)|\\
      &= |\Posalll(j) \cup \Poshigh(j)|
       = |\Posalll(j)| + |\Poshigh(j)|\\
      &= \deltal_{\ell}(j) + \deltahigh(j),
  \end{align*}
  which yields the claim. It thus remains to show $\Poslow(j) \cup
  \Posmid(j) = \Posalll(j) \cup \Poshigh(j)$. Let $H \in \Lroots$ and
  $s \in [0 \dd |H|)$ be such that $j \in \R_{s,H}^{-}$. Before we
  move on to proving the main equality, we observe that since for any
  $j' \in \Posalll(j)$, it holds $\LCE_{\T}(j, j') \geq \ell \geq
  3\tau - 1$, by \cref{lm:exp}, it holds $j' \in \R_{s,H}$ and
  $\type(j') = -1$. Thus, we can equivalently write $\Posalll(j) =
  \{j' \in \R_{s,H}^{-} : \T[j' \dd n] \prec \T[j \dd n]\text{ and }
  \LCE_{\T}(j, j') \in [\ell \dd 2\ell)\}$.

  We first show the inclusion $\Poslow(j) \cup \Posmid(j) \sub
  \Posalll(j) \cup \Poshigh(j)$.  Let us first take a position $j' \in
  \Poslow(j)$. Observe that it holds $k_1 \leq k_2$. We consider two
  cases:
  \begin{itemize}
  \item If $k_1 < k_2$, or equivalently, $\min(\Lexp(j), \lfloor
    \frac{\ell-s}{|H|} \rfloor) < \min(\Lexp(j), \lfloor
    \frac{2\ell-s}{|H|} \rfloor)$ then $\lfloor \frac{\ell-s}{|H|}
    \rfloor < \Lexp(j)$ and hence $k_1 = \lfloor \frac{\ell-s}{|H|}
    \rfloor < \Lexp(j)$. Therefore, since for $j'$ we have $\type(j')
    = \type(j) = -1$ and $\rend{j'} - j' = s + \Lexp(j')|H| +
    \Ltail(j') = s + k_1|H| + \Ltail(j') < s + (k_1 + 1)|H| \leq s +
    \Lexp(j)|H| \leq s + \Lexp(j)|H| + \Ltail(j) = \rend{j} - j$, we
    obtain from \cref{lm:SA-block-2} of \cref{lm:exp} that $\T[j' \dd
    n] \prec \T[j \dd n]$.  Thus, by definition of $\Poslow(j)$ we
    must also have $\LCE_{\T}(j,j') \geq \ell$. On the other hand, by
    $j,j' \in \R_{s,H}$ and $\rend{j'} - j' < \rend{j} - j$, we have
    $\T[j \dd j + t) = \T[j' \dd j' + t)$ (where $t = \rend{j'} - j'$)
    and $\T[j' + t] \neq \T[j' + t - |H|] = \T[j + t - |H|] = \T[j +
    t]$.  Thus, $\LCE_{\T}(j, j') = \rend{j'} - j' = s + k_1|H| +
    \Ltail(j') = s + \lfloor \frac{\ell - s}{|H|} \rfloor|H| +
    \Ltail(j') \leq \ell + \Ltail(j') < \ell + \tau < 2\ell$. We thus
    proved $\T[j' \dd n] \prec \T[j \dd n]$ and $\LCE_{\T}(j, j') \in
    [\ell \dd 2\ell)$, i.e., $j' \in \Posalll(j)$.
  \item If $k_1 = k_2$, then it suffices to consider two subcases. If
    $\T[j' \dd n] \succeq \T[j \dd n]$ or $\LCE_{\T}(j, j') \geq
    2\ell$, then we immediately obtain $j' \in \Poshigh(j)$.  The
    other possibility is that $\T[j' \dd n] \prec \T[j \dd n]$ and
    $\LCE_{\T}(j, j') < 2\ell$. Combining this with $\LCE_{\T}(j, j')
    \geq \ell$ (from the definition of $\Poslow(j)$), we obtain $j'
    \in \Posalll(j)$.
  \end{itemize}
  We have thus proved that $\Poslow(j) \sub \Posalll(j) \cup
  \Poshigh(j)$. Next, we show $\Posmid(j) \sub \Posalll(j) \cup
  \Poshigh(j)$. Consider $j' \in \Posmid(j)$. If $k_1 = k_2$, the
  claim follows trivially, since $\Posmid(j) = \emptyset$. Let us thus
  assume $k_1 < k_2$. As noted above, this implies $k_1 = \lfloor
  \frac{\ell - s}{|H|} \rfloor < \Lexp(j)$. Then, by $\Lexp(j') >
  k_1$, we obtain $\rend{j'} - j' = s + \Lexp(j')|H| + \Ltail(j') \geq
  s + (k_1 + 1)|H| = (s + |H|) + \lfloor \frac{\ell - s}{|H|} \rfloor
  |H| \geq (s + |H|) + (\ell - s - |H|) = \ell$. Similarly, by
  $\lfloor \frac{\ell-s}{|H|} \rfloor < \Lexp(j)$, we have $\rend{j} -
  j = s + \Lexp(j)|H| + \Ltail(j) \geq s + (\lfloor \frac{\ell-s}{|H|}
  \rfloor + 1)|H| \geq \ell$.  Thus, by definition of
  run-decomposition, we obtain $\LCE_{\T}(j, j') \geq \min(\rend{j} -
  j, \rend{j'} - j') \geq \ell$. Recall now that we have $\Lexp(j')
  \in (k_1 \dd k_2]$. Consider now two cases:
  \begin{itemize}
  \item Let $\Lexp(j') = k_2$. If $\T[j' \dd n] \succeq \T[j \dd n]$
    or $\LCE_{\T}(j, j') \geq 2\ell$, then we immediately obtain $j'
    \in \Poshigh(j)$.  The remaining case is $\T[j' \dd n] \prec \T[j
    \dd n]$ and $\LCE_{\T}(j, j') < 2\ell$.  Combining with the above
    observation $\LCE_{\T}(j, j') \geq \ell$, we obtain $j' \in
    \Posalll(j)$.
  \item Let $\Lexp(j') \in (k_1 \dd k_2)$. We will show that it holds
    $j' \in \Posalll(j)$.  By $\Lexp(j') < k_2 = \min(\Lexp(j),
    \lfloor \frac{2\ell-s}{|H|} \rfloor)$, it follows in particular
    that $\Lexp(j') < \lfloor \frac{2\ell-s}{|H|} \rfloor$.  Thus,
    $\rend{j'} - j' = s + \Lexp(j')|H| + \Ltail(j') < s + (\Lexp(j') +
    1)|H| \leq s + \lfloor \frac{2\ell - s}{|H|} \rfloor|H| \leq
    2\ell$.  On the other hand, by $\Lexp(j') < k_2$, it follows
    $\Lexp(j') < \Lexp(j)$.  Thus, $\rend{j'} - j' = s + \Lexp(j')|H|
    + \Ltail(j') < s + (\Lexp(j') + 1)|H| \leq s + \Lexp(j)|H| \leq s
    + \Lexp(j)|H| + \Ltail(j) = \rend{j} - j$. Therefore, for $t =
    \rend{j'} - j'$, we have $\T[j' \dd j' + t) = \T[j \dd j + t)$ and
    $\T[j' + t] \neq \T[j' + t - |H|] = \T[j + t - |H|] = \T[j +
    t]$. Thus, $\LCE_{\T}(j, j') = \rend{j'} - j' < 2\ell$. Combining
    with the above observation (holding for all $j' \in (k_1 \dd
    k_2]$) $\LCE_{\T}(j, j') \geq \ell$, we obtain $\LCE_{\T}(j, j')
    \in [\ell \dd 2\ell)$.  Finally, by $\rend{j'} - j' < \rend{j} -
    j$ and $\type(j') = \type(j) = -1$, it follows from
    \cref{lm:SA-block-2} of \cref{lm:exp}, that $\T[j' \dd n] \prec
    \T[j \dd n]$. We have thus proved $j' \in \Posalll(j)$.
  \end{itemize}
  Thus, $\Posmid(j) \sub \Posalll(j) \cup \Poshigh(j)$. Combining with
  the above proof of $\Poslow(j) \sub \Posalll(j) \cup \Poshigh(j)$,
  this establishes the inclusion $\Poslow(j) \cup \Posmid(j) \sub
  \Posalll(j) \cup \Poshigh(j)$.

  We now show the opposite inclusion, i.e., $\Posalll(j) \cup
  \Poshigh(j) \sub \Poslow(j) \cup \Posmid(j)$. Let us first assume
  $j' \in \Posalll(j)$. By $j \in \R_{s,H}^{-}$ and $\T[j' \dd n]
  \prec \T[j \dd n]$, we obtain from \cref{lm:SA-block-2} of
  \cref{lm:exp}, that $\rend{j'} - j' \leq \rend{j} - j$. Therefore,
  $\Lexp(j') = \lfloor \tfrac{\rend{j'} - j' - s}{|H|} \rfloor \leq
  \lfloor \frac{\rend{j} - j - s}{|H|} \rfloor = \Lexp(j)$. On the
  other hand, by definition of the run-decomposition, we have
  $\LCE_{\T}(j, j') \geq \min(\rend{j'} - j', \rend{j} - j)$. Thus, by
  $\rend{j'} - j' < \rend{j} - j$ and $\LCE_{\T}(j, j') < 2\ell$, we
  obtain $\rend{j'} - j' = \min(\rend{j'} - j', \rend{j} - j) \leq
  \LCE_{\T}(j, j') < 2\ell$, and consequently, $\Lexp(j') = \lfloor
  \frac{\rend{j'} - j' - s}{|H|} \rfloor \leq \lfloor \frac{2\ell -
  s}{|H|} \rfloor$.  Combining the two upper bounds on $\Lexp(j')$, we
  thus obtain $\Lexp(j') \leq \min(\Lexp(j), \lfloor \frac{2\ell -
  s}{|H|} \rfloor) = k_2$. Next, we prove $\Lexp(j') \geq k_1$. For
  this, we consider two cases:
  \begin{itemize}
  \item Let us first assume $\rend{j} - j < \ell$. By definition of
    $k_1$, we then have $k_1 = \min(\Lexp(j), \lfloor \tfrac{\ell -
    s}{|H|} \rfloor) = \min(\lfloor \frac{\rend{j} - j - s}{|H|}
    \rfloor, \lfloor \frac{\ell - s}{|H|} \rfloor) = \lfloor
    \frac{\rend{j} - j - s}{|H|} \rfloor = \Lexp(j)$. Observe now that
    by $\LCE_{\T}(j, j') \geq \ell$, we have $\T[j \dd j + \ell) =
    \T[j' \dd j' + \ell)$. Recall that by~\cite[Fact~3.2]{sss}, $\T[j
    \dd \rend{j})$ is the longest prefix of $\T[j \dd n]$ having
    period $|\Lroot(j)|$.  Therefore, since $\Lroot(j') = \Lroot(j)$
    and $\T[j \dd \rend{j})$ is a proper prefix of $\T[j \dd j +
    \ell)$, applying this equivalent definition to $\T[j' \dd n]$, we
    obtain $\rend{j'} - j' = \rend{j} - j$. Thus, $\Lexp(j') = \lfloor
    \frac{\rend{j'} - j' - s}{|H|} \rfloor = \lfloor \frac{\rend{j} -
    j - s}{|H|} \rfloor = \Lexp(j) = k_1$.
  \item Let us now assume $\rend{j} - j \geq \ell$. Then, by the above
    alternative definition of $\T[j \dd \rend{j})$, $\T[j \dd j +
    \ell)$ has a period $|\Lroot(j)|$. Thus, by $\Lroot(j') =
    \Lroot(j)$ and $\T[j \dd j + \ell) = \T[j' \dd j' + \ell)$, the
    string $\T[j' \dd j' + \ell)$ has period $|\Lroot(j')|$, and
    consequently, $\rend{j'} - j' \geq \ell$.  Therefore, $\Lexp(j') =
    \lfloor \frac{\rend{j'} - j' - s}{|H|} \rfloor \geq \lfloor
    \frac{\ell - s}{|H|} \rfloor = \min(\lfloor \frac{\ell - s}{|H|}
    \rfloor, \lfloor \frac{\rend{j} - j - s}{|H|} \rfloor) = k_1$.
  \end{itemize}
  We have thus shown than in both cases, it holds $\Lexp(j') \geq
  k_1$.  Combining with the earlier upper bound on $\Lexp(j')$, we
  therefore obtain $\Lexp(j') \in [k_1 \dd k_2]$. To see that this
  immediately implies the inclusion $\Posalll(j) \sub \Poslow(j) \cup
  \Posmid(j)$, it suffices to consider two cases. If $\Lexp(j') >
  k_1$, then $j' \in \Posmid(j)$ holds by definition of
  $\Posmid(j)$. On the other hand, if $\Lexp(j') = k_1$, then $j' \in
  \Poslow(j)$ follows from $\LCE_{\T}(j, j') \in [\ell \dd 2\ell)$. It
  remains to show $\Poshigh(j) \sub \Poslow(j) \cup \Posmid(j)$.  Let
  $j' \in \Poshigh(j)$. We again consider two cases. If $k_1 < k_2$,
  then we immediately have $j' \in \Posmid(j)$ since $\Lexp(j') \in
  (k_1 \dd k_2]$. On the other hand, if $k_1 = k_2$, then either
  $\T[j' \dd n] \succeq \T[j \dd n]$ or $\LCE_{\T}(j, j') \geq 2\ell
  \geq \ell$. In either case, $j' \in \Poslow(j)$. This concludes the
  proof of the inclusion $\Posalll(j) \cup \Poshigh(j) \sub \Poslow(j)
  \cup \Posmid(j)$.
\end{proof}

\subsubsection{Computing the Size of Occ}\label{sec:sa-periodic-occ-size}

Let $j \in \R$ and $d \geq \ell$. We define $\Occm_{d}(j) :=
\Occ_{d}(j) \cap \R^{-}$ and $\Occp_{d}(j) := \Occ_{d}(j) \cap
\R^{+}$.  In this section, we show how under \cref{as:periodic}, given
any position $j \in \R$, to efficiently compute the cardinalities of
the four sets $\Occm_{\ell}$, $\Occp_{\ell}(j)$, $\Occm_{2\ell}(j)$,
and $\Occp_{2\ell}(j)$. In the rest of this section, we focus on the
computation of $|\Occm_{\ell}(j)|$ and $|\Occm_{2\ell}(j)|$. The
cardinalities of $\Occp_{\ell}(j)$ and $\Occp_{2\ell}(j)$ are computed
analogously (see the proof of \cref{pr:sa-periodic}).

The section is organized into four parts. First, we prove the
combinatorial results (\cref{lm:occ-exp,cor:occ}) characterizing the
set $\Occm_{d}(j)$, $d \geq \ell$, as a disjoint union of two sets
$\Occeqm_{d}(j)$ and $\Occgtm_{d}(j)$ (in particular, each of the sets
$\Occm_{\ell}(j)$ and $\Occm_{2\ell}(j)$ admits such
decomposition). In the following two parts, we present a query
algorithms (\cref{pr:sa-occ-eq} and \cref{pr:sa-occ-gt}) to compute
the cardinality of $\Occeqm_{d}(j)$ for any $d \in [\ell \dd 2\ell]$
(in particular, we can compute $|\Occeqm_{\ell}(j)|$ and
$|\Occeqm_{2\ell}(j)|$) and the cardinality of $\Occgtm_{d}(j)$ for
any $d \geq \ell$ (in particular, we can compute $|\Occgtm_{\ell}(j)|$
and $|\Occgtm_{2\ell}(j)|$). In \cref{pr:sa-occ-size}, we put them
together to obtain the final query algorithm.

\paragraph{The Main Idea}

Let $d \geq \ell$.  Assume $H \in \Lroots$, $s \in [0 \dd |H|)$, and
$j \in \R_{s,H}$. Denote $k = \Lexpcut{j}{d} = \min(\Lexp(j), \lfloor
\frac{d - s}{|H|} \rfloor)$. We define
\begin{align*}
  \Occeqm_{d}(j)
    &= \{j' \in \R_{s,H} \cap \Occm_{d}(j) : \Lexp(j') = k\},\\
  \Occgtm_{d}(j)
    &= \{j' \in \R_{s,H} \cap \Occm_{d}(j) : \Lexp(j') > k\}.
\end{align*}

\begin{lemma}\label{lm:occ-exp}
  Assume $d \geq \ell$. If $j \in R_{s,H}$ then for any $j' \in
  \Occ_{d}(j)$, it holds:
  \begin{enumerate}
  \item\label{lm:occ-exp-item-1} It holds $j' \in \R^{-}_{s,H}$,
  \item\label{lm:occ-exp-item-2} $\rendcut{j'}{d} - j' =
    \rendcut{j}{d} - j$, and
  \item\label{lm:occ-exp-item-3} $T^{\infty}[\rendcut{j'}{d} \dd j' +
    d) = T^{\infty}[\rendcut{j}{d} \dd j + d)$.
  \end{enumerate}
\end{lemma}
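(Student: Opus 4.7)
The plan is to leverage the hypothesis $j' \in \Occ_{d}(j)$---equivalently $\T^{\infty}[j \dd j+d) = \T^{\infty}[j' \dd j'+d)$---together with $d \geq \ell > 3\tau - 1$ in order to transport the entire local structure at $j$ (root, head, truncated exponent, and post-run tail) over to $j'$. A preliminary reduction disposes of boundary effects: if either window contains the unique symbol $\T[n]=\$$, then uniqueness forces $\$$ to appear at the same relative offset in both, which pins $j=j'$ and makes all three claims trivial. Henceforth I may assume that both windows lie inside $\T$ proper, so $\T[j \dd j+d) = \T[j' \dd j'+d)$ and in particular $\LCE_{\T}(j,j') \geq d \geq 3\tau - 1$.

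Part 1 is then immediate from \cref{lm:exp}: the common prefix of length at least $3\tau-1$ places $j'$ into the same $\R_{s,H}$ class as $j$. The type-refinement to $\R^{-}_{s,H}$ will be inherited from $j$ whenever the window is wide enough to expose the position $\rend{j}$ that terminates the run---the equality then transports that position and its local comparison verbatim to $j'$, yielding $\type(j')=\type(j)$.

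For part 2, I would case-split on whether $\rend{j}-j<d$. If yes, \cite[Fact~3.2]{sss} gives the mismatch $\T[\rend{j}]\neq \T[\rend{j}-|H|]$ strictly inside the window, and by window equality the analogous mismatch fires at position $j'+(\rend{j}-j)$ in the window at $j'$, so $\rend{j'}-j'=\rend{j}-j$, hence $\Lexp(j')=\Lexp(j)$; combining with $\Lhead(j')=s$ from part 1 yields $\rendcut{j'}{d}-j'=\rendcut{j}{d}-j$. Otherwise $\T[j \dd j+d)$ is a prefix of the $|H|$-periodic substring $\T[j \dd \rend{j})$ and is itself $|H|$-periodic; by window equality $\T[j' \dd j'+d)$ is also $|H|$-periodic, forcing $\rend{j'}-j'\geq d$ as well, so both truncated exponents saturate at $\lfloor (d-s)/|H|\rfloor$ and the conclusion follows.

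Part 3 is then an immediate corollary of parts 1 and 2: the two length-$d$ windows agree and have equal length-$(\rendcut{j}{d}-j)$ prefixes, so chopping off the common prefix leaves the required equality of suffixes in $\T^{\infty}$. The step I expect to need the most care is the case split in part 2, in particular the seamless treatment of the boundary $\rend{j}=n+1$ (which always lands in the second case regardless of $d$) and the verification that, in the first case, we really do have $|H|\leq \rend{j}-j$ so that both $\T[\rend{j}]$ and $\T[\rend{j}-|H|]$ lie inside the window and the mismatch is genuinely transported.
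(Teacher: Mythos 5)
Your argument for items 1--3 matches the paper's in substance: window equality plus the uniqueness of $\T[n]=\$$ gives $\LCE_\T(j,j')\ge d\ge 3\tau-1$ (or $j=j'$), \cref{lm:exp} then yields $j'\in\R_{s,H}$, and for item 2 your case split on $\rend{j}-j<d$ versus $\rend{j}-j\ge d$ is an equivalent unpacking of the paper's one-line observation that \cite[Fact~3.2]{sss} and window equality force $\min(\rend{j'}-j',d)=\min(\rend{j}-j,d)$; item 3 is the same trivial consequence in both proofs.

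The one place you go astray is the type refinement in item 1. The superscript in ``$j'\in\R^{-}_{s,H}$'' is an error in the statement: the hypothesis is only $j\in\R_{s,H}$, the paper's own proof concludes merely $j'\in\R_{s,H}$, every downstream use (\cref{cor:occ}, \cref{lm:type}, \cref{lm:poslow}) cites item 1 as giving $\R_{s,H}$, and the remark following \cref{cor:occ} explicitly states that $\Occ_{2\ell}(j)$ may intersect both $\R^{-}$ and $\R^{+}$. Your own hedge --- that $\type(j')=\type(j)$ is ``inherited \ldots whenever the window is wide enough to expose $\rend{j}$'' --- is exactly the obstruction: when $\rend{j}-j\ge d$ (and hence $\rend{j'}-j'\ge d$), the length-$d$ windows say nothing about the characters at $\rend{j}$ and $\rend{j'}$, so the types of $j$ and $j'$ are genuinely independent and no argument can close this case. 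You should drop the type claim rather than leave it conditional; with that deletion the proof is complete and aligned with the paper's.
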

\begin{proof}

  1. Consider two cases. If $j' = j$, then immediately $j' \in \R$,
  $\Lhead(j') = s$ and $\Lroot(j') = H$. Otherwise, by definition of
  $\Occ_{d}(j)$, it holds $\T^{\infty}[j \dd j + d) = \T^{\infty}[j'
  \dd j' + d)$. By $j \neq j'$ and the uniqueness of $\T[n] =
  \texttt{\$}$, this is equivalent to $\LCE_{\T}(j, j') \geq d$. Since
  by definition of $\tau$, it holds $3\tau \leq \ell \leq d$, we
  obtain $\LCE_{T}(j, j') \geq 3\tau - 1$. Thus, by \cref{lm:exp}, $j'
  \in \R_{s,H}$.

  2. We first prove that $\Lexpcut{j'}{d} = \Lexpcut{j}{d}$.  Recall
  that by~\cite[Fact~3.2]{sss}, for $i \in \R$, the substring $\T[i
  \dd \rend{i})$ is the longest prefix of $\T[i \dd n]$ that has a
  period $|\Lroot(i)|$. By $\T^{\infty}[j \dd j + d) = \T^{\infty}[j'
  \dd j' + d)$, this immediately implies $\min(\rend{j'} - j', d) =
  \min(\rend{j} - j, d)$.  Therefore, we obtain
  \begin{align*}
    \Lexpcut{j'}{d}
      &= \min(\Lexp(j'), \lfloor \tfrac{d - s}{|H|} \rfloor)
       = \min(\lfloor \tfrac{\rend{j'} - j' - s}{|H|} \rfloor,
              \lfloor \tfrac{d - s}{|H|} \rfloor)\\
      &= \lfloor \tfrac{\min(\rend{j'} - j', d) - s}{|H|} \rfloor
       = \lfloor \tfrac{\min(\rend{j} - j, d) - s}{|H|} \rfloor\\
      &= \min(\lfloor \tfrac{\rend{j} - j - s}{|H|} \rfloor,
              \lfloor \tfrac{d - s}{|H|} \rfloor)
       = \min(\Lexp(j), \lfloor \tfrac{d - s}{|H|} \rfloor)\\
      &= \Lexpcut{j}{d}
  \end{align*}
  Thus, $\rendcut{j'}{d} - j' = \Lhead(j') + \Lexpcut{j'}{d}|H| =
  \Lhead(j) + \Lexpcut{j}{d}|H| = \rendcut{j}{d} - j$.

  3. By $j' \in \Occ_{d}(j)$, $\T^{\infty}[j' \dd j' + d) =
  \T^{\infty}[j \dd j + d)$. By \cref{lm:occ-exp-item-2} this yields
  the claim.
\end{proof}

\begin{corollary}\label{cor:occ}
  For any $d \geq \ell$ and any $j \in \R$, the set $\Occm_{d}(j)$ is
  a disjoint union of $\Occeqm_{d}(j)$ and $\Occgtm_{d}(j)$.
\end{corollary}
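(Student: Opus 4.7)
The plan is to verify the corollary as a direct consequence of \cref{lm:occ-exp}, so I do not anticipate any significant obstacle. The argument splits naturally into three parts: disjointness, the ``easy'' inclusion, and the one substantive inclusion.

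First I would fix $H \in \Lroots$ and $s \in [0 \dd |H|)$ with $j \in \R_{s,H}$ (which exist because $j \in \R$), and set $k = \Lexpcut{j}{d} = \min(\Lexp(j), \lfloor (d-s)/|H| \rfloor)$. Disjointness of $\Occeqm_{d}(j)$ and $\Occgtm_{d}(j)$ is then immediate, since the defining conditions $\Lexp(j') = k$ and $\Lexp(j') > k$ are mutually exclusive, and the inclusion $\Occeqm_{d}(j) \cup \Occgtm_{d}(j) \subseteq \Occm_{d}(j)$ follows by unfolding definitions: both sets are built as subsets of $\R_{s,H} \cap \Occm_{d}(j) \subseteq \Occm_{d}(j)$.

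The only substantive content is the reverse inclusion $\Occm_{d}(j) \subseteq \Occeqm_{d}(j) \cup \Occgtm_{d}(j)$. For this I would pick an arbitrary $j' \in \Occm_{d}(j) \subseteq \Occ_{d}(j) \cap \R^{-}$ and invoke \cref{lm:occ-exp}. \Cref{lm:occ-exp-item-1} places $j'$ in $\R_{s,H}$, which combined with $j' \in \R^{-}$ puts $j' \in \R_{s,H}^{-} \cap \Occm_{d}(j)$, matching the common prefix of both target definitions. \Cref{lm:occ-exp-item-2} (or rather the intermediate equality $\Lexpcut{j'}{d} = \Lexpcut{j}{d}$ established inside its proof) then yields $\min(\Lexp(j'), \lfloor (d-s)/|H| \rfloor) = k$, which forces $\Lexp(j') \ge k$. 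The dichotomy $\Lexp(j') = k$ versus $\Lexp(j') > k$ deposits $j'$ into $\Occeqm_{d}(j)$ or $\Occgtm_{d}(j)$ respectively, completing the cover.

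The main care point, rather than an obstacle, is ensuring that the parameters $(s,H)$ used to define $k$ for $j$ are the same ones inherited by each $j'$ through \cref{lm:occ-exp-item-1}; this is exactly what that part supplies, so the proof collapses into a short bookkeeping reformulation of \cref{lm:occ-exp}.
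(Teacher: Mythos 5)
Your proposal is correct and matches the paper's own proof essentially verbatim: both dispose of disjointness and the easy inclusion by definition, and both derive the substantive inclusion from \cref{lm:occ-exp} via $j'\in\R_{s,H}$ and the equality $\Lexpcut{j'}{d}=\Lexpcut{j}{d}$, which forces $\Lexp(j')\ge k$ and places $j'$ in one of the two sets. No gaps.
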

\begin{proof}
  By definition, it holds $\Occeqm_{d}(j) \cap \Occgtm_{d}(j) =
  \emptyset$ and $\Occeqm_{d}(j) \cup \Occgtm_{d}(j) \sub
  \Occm_{d}(j)$. It remains to show $\Occm_{d}(j) \sub \Occeqm_{d}(j)
  \cup \Occgtm_{d}(j)$.  Let $j' \in \Occm_{d}(j)$.  By
  \cref{lm:occ-exp}, this implies $j' \in \R_{s,H}$ and
  $\Lexpcut{j'}{d} = \Lexpcut{j}{d}$. Thus, by $\Lexp(j') \geq
  \min(\Lexp(j'), \lfloor \tfrac{d-s}{|H|} \rfloor) =
  \Lexpcut{j'}{d}$, we obtain that $j' \in \Occeqm_{d}(j)$ or $j' \in
  \Occgtm_{d}(j)$.
\end{proof}

\begin{remark}

  Note that the differentiation between $j \in \R$ satisfying
  $\type(j) = -1$ and $\type(j) = +1$ used (without the loss of
  generality) in \cref{sec:sa-periodic-pos-decomposition} is different
  from the symmetry used in this section.  Here we partition the
  output set $\Occeq_{2\ell}(j)$ (resp.\ $\Occgt_{2\ell}(j)$) into two
  subsets $\Occeqm_{2\ell}(j)$ and $\Occeqp_{2\ell}(j)$
  (resp.\ $\Occgtm_{2\ell}(j)$ and $\Occgtp_{2\ell}(j)$), but the
  computation is always performed regardless of $\type(j)$, leading to
  two queries for each $j \in \R$. In
  \cref{sec:sa-periodic-pos-decomposition}, on the other hand, the
  computation is performed separately for $j \in \R^{-}$ and $j \in
  \R^{+}$, without the need to partition $\Posalll(j)$ within each
  case, leading to a single query but only on the appropriate
  structure depending on $\type(j)$.

  This subtle difference follows from the fact, then when computing
  $\deltal_{\ell}(j) = |\Posalll(j)|$ by \cref{lm:exp} we have
  $\Posalll(j) \sub \R^{-}$ for any $j \in \R^{-}$ (and $\Posalll(j)
  \sub \R^{+}$ for $j \in \R^{+}$). However, when computing $|{\rm
  Occ}_{2\ell}(j)|$ for $j \in \R^{-}$, it is possible that ${\rm
  Occ}_{2\ell}(j) \cap \R^{-} \neq \emptyset$ and ${\rm
  Occ}_{2\ell}(j) \cap \R^{+} \neq \emptyset$ hold. This is the reason
  for why the seemingly related computation of $|{\rm
  Occ}_{2\ell}(j)|$ and $|\Posalll(j)|$ is (unlike for the nonperiodic
  positions; see \cref{sec:sa-nonperiodic-pos-and-occ-size}) described
  separately.
\end{remark}

\paragraph{Computing $|\Occeqm_{\ell}(j)|$ and $|\Occeqm_{2\ell}(j)|$}

We now describe the query algorithm that, given any position $j \in
\R$ and any $d \in [\ell \dd 2\ell]$, returns the cardinality of the
sets $\Occeqm_{d}(j)$ (in particular, it can return the cardinality of
$\Occeqm_{\ell}(j)$ and $\Occeqm_{2\ell}(j)$).  We start with a
combinatorial result (\cref{lm:occ-eq}) that shows how to count the
elements of $\Occeqm_{d}(j)$ (where $d \in [\ell \dd 2\ell]$) that
belong to any right-maximal contiguous block of positions in
$\R^{-}$. The query algorithm to compute the size of $\Occeqm_{d}(j)$
is presented next (\cref{pr:sa-occ-eq}).

\begin{lemma}\label{lm:occ-eq}
  Let $d \in [\ell \dd 2\ell]$ and $j \in \R_H$.  Assume $i \in
  \R^{-}_{H}$ and denote $t = \rend{i} - i - 3\tau + 2$.  Then,
  $|\Occeqm_{d}(j) \cap [i \dd i + t)| \leq 1$.  Moreover,
  $|\Occeqm_{d}(j) \cap [i \dd i + t)| = 1$ holds if and only if
  $\rendfull{i} - i \geq \rendcut{j}{d} - j$ and
  $\T^{\infty}[\rendcut{j}{d} \dd j + d)$ is a prefix of
  $\T^{\infty}[\rendfull{i} \dd \rendfull{i} + 7\tau)$.
\end{lemma}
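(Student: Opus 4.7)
My plan is to pinpoint a unique candidate $j^{\star} := \rendfull{i} - (\rendcut{j}{d} - j)$ and show that $\Occeqm_{d}(j) \cap [i \dd i+t)$ is either empty or equal to $\{j^{\star}\}$, with the latter characterized by the stated conditions. Throughout, write $p = |H|$, $s = \Lhead(j)$, and $k = \Lexpcut{j}{d}$, so that $\rendcut{j}{d} - j = s + kp$. By \cref{lm:R-block}, every position in $[i \dd i+t) \cap \R$ inherits from $i$ the root $H$, type $-1$, and terminator $\rend{i}$; combined with the uniqueness of the run-decomposition, this forces the maximal aligned block of the run to end at a common position, so $\rendfull{j'} = \rendfull{i}$ and $\Ltail(j') = \Ltail(i)$ for all such $j'$. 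Consequently, any $j' \in [i \dd i+t) \cap \R^{-}_{s,H}$ with $\Lexp(j') = k$ must satisfy $j' = \rendfull{j'} - \Lhead(j') - \Lexp(j') p = \rendfull{i} - s - kp = j^{\star}$, proving $|\Occeqm_{d}(j) \cap [i \dd i+t)| \leq 1$.

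For the forward direction of the equivalence, suppose $j^{\star} \in \Occeqm_{d}(j) \cap [i \dd i+t)$. Then $j^{\star} \geq i$ is exactly condition~(A). Moreover, $\rendcut{j^{\star}}{d} = j^{\star} + s + kp = \rendfull{i}$ (since $\Lexp(j^{\star}) = k \leq \lfloor(d-s)/p\rfloor$), so restricting the equality $\T^{\infty}[j^{\star} \dd j^{\star}+d) = \T^{\infty}[j \dd j+d)$ to its suffix past $\rendfull{i}$ gives $\T^{\infty}[\rendfull{i} \dd j^{\star}+d) = \T^{\infty}[\rendcut{j}{d} \dd j+d)$, a string of length $d - s - kp \leq 2\ell \leq 7\tau$, which is precisely~(B).

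For the reverse direction, assume (A) and (B). Condition (A) immediately gives $j^{\star} \geq i$. Moreover, $\T^{\infty}[j^{\star} \dd \rendfull{i})$ and $\T^{\infty}[j \dd \rendcut{j}{d})$ are both the length-$(s+kp)$ rotation of $H^{\infty}$ beginning with the length-$s$ suffix of $H$, hence equal; concatenating with (B) gives $\T^{\infty}[j^{\star} \dd j^{\star}+d) = \T^{\infty}[j \dd j+d)$. Once $j^{\star}$ is shown to lie in $[i \dd i+t)$, the same periodic argument as before identifies $\Lhead(j^{\star}) = s$ and $\Lexp(j^{\star}) = k$ via $\rendfull{j^{\star}} = \rendfull{i}$, so $j^{\star} \in \Occeqm_{d}(j)$ and the count is exactly~$1$. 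The main obstacle is therefore to establish $j^{\star} < i + t$, equivalently $(s + kp) + \Ltail(i) \geq 3\tau - 1$.

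I argue by contradiction: assume $(s + kp) + \Ltail(i) \leq 3\tau - 2$ and split on the regime of $k = \min(\Lexp(j), \lfloor(d-s)/p\rfloor)$. In each case I will show that offset $\Ltail(i)$ lies in the length-$(d - s - kp)$ range of~(B) and that the characters at that offset differ, contradicting~(B). If $k = \lfloor(d-s)/p\rfloor$, write $r = (d - s) \bmod p \in [0, p)$ so that $s + kp = d - r$; combined with $d \geq \ell \geq 3\tau$, the assumption forces $\Ltail(i) \leq r - 2 < r = d - s - kp$. The character of~(B) at offset $\Ltail(i)$ on the $\rendfull{i}$-side is $\T[\rend{i}]$, which by the run break differs from $\T[\rend{i} - p] = H[\Ltail(i)]$; on the $\rendcut{j}{d}$-side, $\rendcut{j}{d} + \Ltail(i)$ still lies in $j$'s periodic region (because $k < \Lexp(j)$), so the character is $H[\Ltail(i)]$, contradicting~(B). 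If $k = \Lexp(j)$, then $s + kp = \rend{j} - j - \Ltail(j)$; combining with $\rend{j} - j \geq 3\tau - 1$ and the assumption yields $\Ltail(j) \geq \Ltail(i) + 1$, while the assumption together with $d \geq \ell \geq 3\tau$ gives $\Ltail(i) < d - s - kp$. Now at offset $\Ltail(i)$ the position $\rendfull{j} + \Ltail(i)$ on the $\rendcut{j}{d}$-side lies within $j$'s partial tail block $H''$ (of length $\Ltail(j) > \Ltail(i)$), producing the character $H[\Ltail(i)]$, which again contradicts the $\rendfull{i}$-side character $\T[\rend{i}] \ne H[\Ltail(i)]$. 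Both cases being impossible, $j^{\star} < i + t$, completing the proof.
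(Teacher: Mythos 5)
Your proof is correct and takes essentially the paper's approach: both identify the unique candidate $j^{\star}=\rendfull{i}-(\rendcut{j}{d}-j)$ via the constancy of $\rendfull{\cdot}$ on the block $[i\dd i+t)$, obtain the forward direction by truncating the occurrence equality at $\rendfull{i}$, and obtain the reverse direction by exhibiting a mismatch $\T[\rend{i}]\ne\T[\rend{i}-|H|]$ inside the range of condition (B). The only genuine difference is the packaging of the hard step $j^{\star}<i+t$: the paper lower-bounds $\Ltail(i)=\rend{i}-\rendfull{i}$ by $\rend{j}-\rendfull{j}$ resp.\ $j+d-\rendcut{j}{d}$ with a case split on $\rend{j}-j$ versus $d$ and then chains inequalities, whereas you negate $(s+kp)+\Ltail(i)\ge 3\tau-1$ directly and split on which term attains the minimum in $\Lexpcut{j}{d}$ --- the same mismatch argument in a different order. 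One small repair: your two cases overlap when $\Lexp(j)=\lfloor(d-s)/p\rfloor$, and in that overlap the parenthetical justification ``$k<\Lexp(j)$'' in the first case is false; restrict the first case to $k<\Lexp(j)$ (which still forces $k=\lfloor(d-s)/p\rfloor$) and let the second case absorb the overlap, after which everything goes through. You also tacitly use $\rend{i}\le n$ to get the strict mismatch at position $\rend{i}$; this holds by the uniqueness of $\T[n]=\texttt{\$}$, as the paper notes in the analogous step.
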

\begin{proof}

  Denote $k_d = \Lexpcut{j}{d}$.  Recall that $\rend{i} = \min\{i' \in
  [i \dd n] : i' \not\in\R\} + 3\tau - 2$.  Thus, $t > 0$ and $[i \dd
  \rend{i} - 3\tau + 2) = [i \dd i + t) \sub \R$. For any $\delta \in
  [0 \dd t)$, by \cref{lm:R-block}, we have $\Lroot(i + \delta) =
  \Lroot(i)$ and $\type(i + \delta) = \type(i)$. Thus, $[i \dd i + t)
  \sub \R^{-}_H$.  Moreover, by \cref{lm:R-block}, $\rend{i + \delta}
  = \rend{i}$. Therefore, by the uniqueness of run-decomposition, it
  holds $\Ltail(i + \delta) = \Ltail(i)$. Together, these facts imply
  $\rendfull{i + \delta} = \rendfull{i}$, and consequently, that
  $\rendfull{i + \delta} - (i + \delta) = \rendfull{i} - i -
  \delta$. Let $s = \Lhead(j)$. Recall that for any $j' \in {\rm
  Occ}_{d}^{{\rm eq}-}(j)$, we have $\rendfull{j'} - j' = s +
  \Lexp(j')|H| = s + k_d(j)|H| = \rendcut{j}{d} - j$.  Thus, $i +
  \delta \in \Occeqm_{d}(j)$ implies $\rendfull{i + \delta} - (i +
  \delta) = \rendfull{i} - i - \delta = \rendcut{j}{d} - j$, or
  equivalently, $\delta = (\rendfull{i} - i) - (\rendcut{j}{d} - j)$,
  and thus $|\Occeqm_{d}(j) \cap [i \dd i + t)| \leq 1$.

  We now prove the equivalence. Let us first assume $|{\rm
  Occ}_{d}^{{\rm eq}-}(j) \cap [i \dd i + t)| = 1$, i.e., that there
  exists $\delta \in [0 \dd t)$ such that $i + \delta \in {\rm
  Occ}_{d}^{{\rm eq}-}(j)$. Then, as observed above, $\rendcut{j}{d} -
  j = \rendfull{i} - i - \delta \leq \rendfull{i} - i$.  This proves
  the first condition. To show the second one, let $s = \Lhead(j)$. By
  definition of $\Occeqm_{d}(j)$, it holds $i + \delta \in
  \R^{-}_{s,H}$, $\Lexp(i + \delta) = k$, and $\T^{\infty}[i + \delta
  \dd i + \delta + d) = \T^{\infty}[j \dd j + d)$.  Therefore, by $s +
  k_d|H| \leq d$, we obtain $\T^{\infty}[i + \delta \dd \rendfull{i +
  \delta}) = \T^{\infty}[i + \delta \dd \rendfull{i}) = \T[j \dd
  \rendcut{j}{d}) = H'H^{k}$, where $H'$ is a length-$s$ suffix of
  $H$. Thus, $\T^{\infty}[\rendfull{i} \dd i + \delta + d) =
  \T^{\infty}[\rendcut{j}{d} \dd j + d)$. Therefore, by $i + \delta +
  d \leq \rendfull{i} + 7\tau$, $\T^{\infty}[\rendcut{j}{d} \dd j +
  d)$ is a prefix of $\T^{\infty}[\rendfull{i} \dd \rendfull{i} +
  7\tau)$.

  To prove the opposite implication, assume $\rendfull{i} - i \geq
  \rendcut{j}{d} - j$ and that $\T^{\infty}[\rendcut{j}{d} \dd j + d)$
  is a prefix of $\T^{\infty}[\rendfull{i} \dd \rendfull{i} + 7\tau)$.
  Let $\delta = (\rendfull{i} - i) - (\rendcut{j}{d} - j)$. We will
  show that $\delta \in [0 \dd t)$ and $i + \delta \in {\rm
  Occ}_{d}^{{\rm eq}-}(j)$. The inequality $\delta \geq 0$ follows
  from the definition of $\delta$ and our assumptions. To show $\delta
  < t$, we consider two cases:
  \begin{itemize}
  \item Let $\rend{j} - j < d$. We start by noting $k_d =
    \min(\Lexp(j), \lfloor \frac{d-s}{|H|} \rfloor) = \min(\lfloor
    \frac{\rend{j} - j - s}{|H|} \rfloor, \lfloor \frac{d - s}{|H|}
    \rfloor) = \lfloor \frac{\rend{j} - j - s}{|H|} \rfloor =
    \Lexp(j)$. Thus, $\rendcut{j}{d} = j + s + k_d|H| = j + s +
    \Lexp(j)|H| = \rendfull{j}$.  We next show that it holds $\rend{i}
    - \rendfull{i} \geq \rend{j} - \rendfull{j}$. Let $q = \rend{i} -
    \rendfull{i}$ and suppose $q < \rend{j} - \rendfull{j}$.  Then, by
    definition of run-decomposition, it holds $\LCE_{\T}(\rendfull{i},
    \rendfull{j}) \geq \min(\rend{i} - \rendfull{i}, \rend{j} -
    \rendfull{j}) = q$.  We claim that it holds $\T[\rendfull{i} + q]
    \prec \T[\rendfull{j} + q]$. To show this, we first note that by
    $\T[n] = \texttt{\$}$ being unique in $\T$, we have $\rendfull{i}
    + q = \rend{i} \leq n$. We then consider two subcases:
    \begin{enumerate}
    \item If $\rendfull{i} + q = n$, then by $\rendfull{j} + q <
      \rend{j} \leq n$, we have $\T[\rendfull{i} + q] = \texttt{\$}
      \prec \T[\rendfull{j} + q]$.
    \item On the other hand, if $\rendfull{i} + q < n$, then by $i \in
      \R^{-}$, we again have $\T[\rend{i}] = \T[\rendfull{i} + q]
      \prec \T[\rendfull{i} + q - |H|] = \T[\rendfull{j} + q - |H|] =
      \T[\rendfull{j} + q]$.
    \end{enumerate}
    We have thus obtained $\T[\rendfull{i} \dd \rendfull{i} + q] \prec
    \T[\rendfull{j} \dd \rendfull{j} + q]$. By $q < \rend{j} -
    \rendfull{j} < j + d - \rendfull{j} = j + d - \rendcut{j}{d} \leq
    d \leq 2\ell \leq 7\tau$, this contradicts
    $\T^{\infty}[\rendcut{j}{d} \dd j + d)$ being a prefix of
    $\T^{\infty}[\rendfull{i} \dd \rendfull{i} + 7\tau)$.  Thus, it
    holds $\rend{i} - \rendfull{i} \geq \rend{j} - \rendfull{j}$.
    Applying the definition of $\delta$ and using the above
    inequality, we obtain
    \begin{align*}
      \rend{i} - (i + \delta)
        &=    (\rendfull{i} - (i + \delta)) + (\rend{i} - \rendfull{i})
         =    (\rendcut{j}{d} - j) + (\rend{i} - \rendfull{i})\\
        &=    (\rendfull{j} - j) + (\rend{i} - \rendfull{i})
         \geq (\rendfull{j} - j) + (\rend{j} - \rendfull{j})\\
        &=    \rend{j} - j
         \geq 3\tau - 1.
    \end{align*}
  \item Let $\rend{j} - j \geq d$. We first show that in this case it
    holds $\rend{i} - \rendfull{i} \geq j + d - \rendcut{j}{d}$.  Let
    $q = \rend{i} - \rendfull{i}$ and suppose $q < j + d -
    \rendcut{j}{d}$. Then, by definition of run-decomposition, it
    holds $\LCE_{\T}(\rendfull{i}, \rendcut{j}{d}) \geq \min(\rend{i}
    - \rendfull{i}, \rend{j} - \rendcut{j}{d}) = q$.  We claim that
    $\T[\rendfull{i} + q] \prec \T[\rendcut{j}{d} + q]$.  To show
    this, we first note that by $\T[n] = \texttt{\$}$, we have
    $\rendfull{i} + q = \rend{i} \leq n$. Thus:
    \begin{enumerate}
    \item If $\rendfull{i} + q = n$, then by $\rendcut{j}{d} + q < d +
      j \leq \rend{j} \leq n$, we have $\T[\rendfull{i} + q] =
      \texttt{\$} \prec \T[\rendcut{j}{d} + q]$.
    \item On the other hand, if $\rendfull{i} + q < n$, then by $i \in
      \R^{-}$, we again have $\T[\rend{i}] = \T[\rendfull{i} + q]
      \prec \T[\rendfull{i} + q - |H|] = \T[\rendcut{j}{d} + q - |H|]
      = \T[\rendcut{j}{d} + q]$.
    \end{enumerate}
    We thus obtained $\T[\rendfull{i} \dd \rendfull{i} + q] \prec
    \T[\rendcut{j}{d} \dd \rendcut{j}{d} + q]$.  By $q < j + d -
    \rendcut{j}{d} \leq d \leq 7\tau$, this contradicts the string
    $\T^{\infty}[\rendcut{j}{d} \dd j + d)$ being a prefix of
    $\T^{\infty}[\rendfull{i} \dd \rendfull{i} + 7\tau)$.  Thus,
    $\rend{i} - \rendfull{i} \geq j + d - \rendcut{j}{d}$.  Applying
    the definition of $\delta$ and using the above inequality, we
    obtain
    \begin{align*}
      \rend{i} - (i + \delta)
        &= (\rendfull{i} - (i + \delta)) + (\rend{i} - \rendfull{i})
         = (\rendcut{j}{d} - j) + (\rend{i} - \rendfull{i})\\
        &\geq (\rendcut{j}{d} - j) + (j + d - \rendcut{j}{d})
         = d \geq \ell \geq 3\tau - 1.
    \end{align*}
  \end{itemize}
  Thus, in both cases, we have shown $\rend{i} - (i + \delta) \geq
  3\tau - 1$, or equivalently, $\delta \leq \rend{i} - i - 3\tau + 1 <
  t$.

  It remains to show $i + \delta \in \Occeqm_{d}(j)$. For this, we
  first note $\rendfull{i + \delta} - (i + \delta) = \rendfull{i} - (i
  + \delta) = \rendcut{j}{d} - j$.  Combining this with $i + \delta, j
  \in \R_H$ gives $\T[i + \delta \dd \rendfull{i + \delta}) = \T[j \dd
  \rendcut{j}{d})$. This implies $\Lhead(i + \delta) = \Lhead(j)$ and
  $\Lexp(i + \delta) = \lfloor (\rendcut{j}{d} - j)/|H| \rfloor =
  k_d$. Moreover, by \cref{lm:R-block}, $\type(i + \delta) = \type(i)
  = -1$. Thus, $i + \delta \in \R^{-}_{s,H}$.  It remains to show
  $\T^{\infty}[i + \delta \dd i + \delta + d) = \T^{\infty}[j \dd j +
  d)$. For this, it suffices to combine $\T^{\infty}[i + \delta \dd
  \rendfull{i}) = \T^{\infty}[j \dd \rendcut{j}{d})$ (shown above) and
  $\T^{\infty}[\rendfull{i} \dd i + \delta + d) =
  \T^{\infty}[\rendcut{j}{d} \dd j + d)$ (following from
  $\T^{\infty}[\rendcut{j}{d} \dd j + d)$ being a prefix of
  $\T^{\infty}[\rendfull{i} \dd \rendfull{i} + 7\tau)$ and $i + \delta
  + d = \rendfull{i} - (\rendcut{j}{d} - j) + d \leq \rendfull{i} + d
  \leq \rendfull{i} + 7\tau$).
\end{proof}

\begin{proposition}\label{pr:sa-occ-eq}
  Under \cref{as:periodic}, given any position $j \in \R$ and an
  integer $d \in [\ell \dd 2\ell]$, we can compute $|\Occeqm_{d}(j)|$
  in $\bigO(t)$ time.
\end{proposition}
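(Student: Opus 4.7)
The plan is to apply Lemma~\ref{lm:occ-eq} to reduce $|\Occeqm_{d}(j)|$ to a constant number of int-string range counting queries (Problem~\ref{prob:int-str}) on the point set $\Pts := \Points_{7\tau}(\T, E^{-}_{H})$ supplied by Assumption~\ref{as:periodic}, where $H := \Lroot(j)$.

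First, using item~\ref{as:periodic-it-1} of Assumption~\ref{as:periodic}, retrieve $p := |\Lroot(j)|$, $s := \Lhead(j)$, and $\rend{j}$ in $\Oh(t)$ time; from these, derive $\Lexp(j) = \lfloor (\rend{j} - j - s)/p \rfloor$, $\Lexpcut{j}{d} = \min(\Lexp(j), \lfloor (d - s)/p \rfloor)$, and hence $\rendcut{j}{d} = j + s + \Lexpcut{j}{d} \cdot p$. Set $x := \rendcut{j}{d} - j$ and $Y := \T^{\infty}[\rendcut{j}{d} \dd j + d)$, noting that $|Y| \le d \le 2\ell \le 7\tau$. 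The root $H$ is passed to the structure as the length-$p$ substring starting at position $j + s$, the form admitted by Assumption~\ref{as:periodic}.

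By Lemma~\ref{lm:R-block}, the maximal $\R^{-}_{H}$-blocks in text order partition $\R^{-}_{H}$ and start exactly at the positions of $\R'^{-}_{H}$; since $\Occeqm_{d}(j) \sub \R^{-}_{H}$ by definition, applying Lemma~\ref{lm:occ-eq} to each such block and summing gives
\[
|\Occeqm_{d}(j)| = \bigl|\bigl\{i \in \R'^{-}_{H} : \rendfull{i} - i \ge x \text{ and } Y \text{ is a prefix of } \T^{\infty}[\rendfull{i} \dd \rendfull{i} + 7\tau)\bigr\}\bigr|.
\]
With $c := \max\Sigma$, the prefix condition is equivalent to $Y \preceq \T^{\infty}[\rendfull{i} \dd \rendfull{i} + 7\tau) \prec Yc^{\infty}$. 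Each $i \in \R'^{-}_{H}$ contributes to $\Pts$ a labelled point with integer coordinate $\rendfull{i} - i$ and string coordinate $\T^{\infty}[\rendfull{i} \dd \rendfull{i} + 7\tau)$, so the desired count equals $\rcount{\Pts}{x}{n}{Yc^{\infty}} - \rcount{\Pts}{x}{n}{Y}$.

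Both values are returned by a single invocation of Problem~\ref{prob:int-str}, item~\ref{prob:int-str-it-1}, which specifies $Y$ by its starting position in $\T^{\infty}$ together with its length $q_r := |Y|$; the preconditions $x \in [0\dd n]$ and $q_r \in [0\dd 2n]$ follow from the bounds above. Under item~\ref{as:periodic-it-2} of Assumption~\ref{as:periodic}, this query costs $\Oh(t)$ time, so the total running time is $\Oh(t)$. No serious obstacle arises: the combinatorial content is already delivered by Lemma~\ref{lm:occ-eq}, and one only has to translate its prefix condition into the canonical half-open lexicographic range $[Y,\, Yc^{\infty})$, which is routine.
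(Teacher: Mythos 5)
Your proposal is correct and follows essentially the same route as the paper's proof: it invokes \cref{lm:occ-eq} to characterize the contribution of each maximal block of $\R^{-}_{H}$, translates the prefix condition into the lexicographic range $[Y \dd Yc^{\infty})$, and evaluates $\rcount{\Pts}{x}{n}{Yc^{\infty}} - \rcount{\Pts}{x}{n}{Y}$ on $\Points_{7\tau}(\T, E^{-}_{H})$ via \cref{prob:int-str-it-1} of \cref{prob:int-str} with the same arguments. The only difference is cosmetic: you make explicit (via \cref{lm:R-block}) that the maximal blocks anchored at $\R'^{-}_{H}$ partition $\R^{-}_{H}$, a point the paper leaves implicit.
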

\begin{proof}

  The main idea of the query algorithm is to group all positions $i
  \in \R'^{-}$ by $\Lroot(i)$ and then sort all positions within each
  group by $\T^{\infty}[\rendfull{i} \dd \rendfull{i} + 7\tau)$.
  Then, by \cref{lm:occ-eq}, in order to compute $|\Occeqm_{d}(j)|$
  given $j \in \R_{H}$, it suffices to count the all positions $i \in
  \R'^{-}_{H}$ that satisfy $\rendfull{i} - i \geq \rendcut{j}{d} - j$
  and for which $\T^{\infty}[\rendcut{j}{d} \dd j + d)$ is a prefix of
  $\T^{\infty}[\rendfull{i} \dd \rendfull{i} + 7\tau)$.  The latter
  task is equivalent to the general range counting queries
  (\cref{sec:range-queries}) on a set of points $\Pts_H \sub
  \mathcal{X} \times \mathcal{Y}$ (with $\mathcal{X} = \Zz$ and
  $\mathcal{Y} = \Sigma^{*}$) containing the value $\rendfull{i} - i$
  on the $\mathcal{X}$-coordinate and the string
  $\T^{\infty}[\rendfull{i} \dd \rendfull{i} + 7\tau)$ on the
  $\mathcal{Y}$-coordinate for every $i \in \R'^{-}_{H}$.  Note that
  we need to provide efficient range counting queries on each
  collection $\Pts_H$ separately for every $H \in \Lroots$.  Since
  $|H|$ could be large, $H$ cannot be given explicitly during the
  query. Thus, we specify $H$ using its length and the position of
  some occurrence in $\T$.

  We use the following definitions. Let $q = 7\tau$ and $c =
  \max\Sigma$. Then, for any $H \in \Lroots$, we let $\Pts_H =
  \Points_{q}(\T, E^{-}_H)$ (see \cref{sec:sa-periodic-prelim} for the
  definition of $E^{-}_{H}$ and \cref{def:p-right-context} for the
  definition of $\Pts_H$). Note that by $\ell < n$ and $\tau = \lfloor
  \tfrac{\ell}{3} \rfloor$, the value $q = 7\tau \leq 2\ell + \tau <
  3n$ satisfies the requirement in \cref{prob:int-str}.

  Given any position $j \in \R$, we compute $|\Occeqm_{d}(j)|$ as
  follows. First, using \cref{as:periodic} we compute values $s =
  \Lhead(j)$, $p = |\Lroot(j)|$, and $\rend{j}$ in $\bigO(t)$
  time. Note, that then $\Lroot(j) = \T[j + s \dd j + s + p)$, i.e.,
  we have a starting position of an occurrence of $H := \Lroot(j)$ in
  $\T$.  We then calculate $k = \Lexp(j) = \lfloor \tfrac{\rend{j} -
  s}{p} \rfloor$ in $\bigO(1)$ time. Using those values, we further
  calculate $k' = \min(k, \lfloor \frac{d - s}{p} \rfloor)$ and
  $\rendcut{j}{d} = s + k'p$.  By \cref{lm:occ-eq} (this is the place
  we use the assumption $d \in [\ell \dd 2\ell]$) and the definition
  of $\Pts_H$, we now have
  \begin{align*}
    |\Occeqm_{d}(j)| &=
      \rcount{\Pts_H}{\rendcut{j}{d} - j}{n}{Yc^{\infty}} -
      \rcount{\Pts_H}{\rendcut{j}{d} - j}{n}{Y},
  \end{align*}
  where $Y = \T^{\infty}[\rendcut{j}{d} \dd j + d)$, which by
  \cref{as:periodic} we can compute in $\bigO(t)$ time using the query
  defined by \cref{prob:int-str-it-1} of \cref{prob:int-str} with the
  arguments $(i, x, q_r) = (\rendcut{j}{d}, \rendcut{j}{d} - j, j + d
  - \rendcut{j}{d})$. To check that all arguments satisfy the
  requirements of \cref{prob:int-str}, recall that $\ell < n$. Thus,
  $q_r \leq d \leq 2\ell < 2n$. On the other hand, $\rendcut{j}{d} - j
  \leq \rend{j} - j < n$ hold by $\T[n] = \texttt{\$}$. In total, the
  query takes $\bigO(t)$ time.
\end{proof}

\paragraph{Computing $|\Occgtm_{\ell}(j)|$ and $|\Occgtm_{2\ell}(j)|$}

We now describe a query algorithm, given any position $j \in \R$ and
any $d \geq \ell$, returns the cardinality of the sets
$\Occgtm_{d}(j)$ (in particular, it can return the cardinality of
$\Occgtm_{\ell}(j)$ and $\Occgtm_{2\ell}(j)$). We start with a
combinatorial result characterizing $\Occgtm_{d}(j)$ (where $d \geq
\ell$) in terms of $\rend{j} - j$ (\cref{lm:occ-gt}). We then present
a combinatorial result (\cref{lm:mod-count}) that relates the number
of positions $j'$ in $\R^{-}$ having the value $\Lexp(j')$ in a given
interval to a modular contract queries. We conclude with the algorithm
to compute $|\Occgtm_{d}(j)|$ (\cref{pr:sa-occ-gt}).

\begin{lemma}\label{lm:occ-gt}
  Assume $d \geq \ell$ and $j \in \R_{s,H}$. If $\rend{j} - j < d$,
  then it holds $\Occgtm_{d}(j) = \emptyset$. Otherwise, it holds
  $\Occgtm_{d}(j) = \{j' \in \R^{-}_{s,H} : \Lexp(j') > \lfloor
  \tfrac{d - s}{|H|} \rfloor\}$.
\end{lemma}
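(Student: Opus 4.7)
My plan is to separately handle the two cases declared in the statement, leveraging the characterization of $\Occm_d(j)$ provided by the preceding \cref{lm:occ-exp}. In both cases, the key observation will be what $k := \Lexpcut{j}{d} = \min(\Lexp(j), \lfloor (d{-}s)/|H| \rfloor)$ simplifies to, and how this pins down the relationship between $\Lexp(j')$ and $\lfloor (d{-}s)/|H| \rfloor$ for $j' \in \Occgtm_d(j)$.

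For the first case $\rend{j} - j < d$, I would first verify that $k = \Lexp(j)$. This follows from $\rend{j} - j = s + \Lexp(j)|H| + \Ltail(j) < d$ with $\Ltail(j) \geq 0$, which yields $\Lexp(j) < (d-s)/|H|$ and hence (by integrality) $\Lexp(j) \leq \lfloor (d-s)/|H| \rfloor$. Now suppose $j' \in \Occgtm_d(j)$; I would use~\cite[Fact~3.2]{sss} (which characterizes $\rend{j} - j$ as the length of the longest prefix of $\T[j \dd n]$ with period $|\Lroot(j)|$) together with $\T^\infty[j' \dd j' + d) = \T^\infty[j \dd j + d)$ to conclude $\rend{j'} - j' = \rend{j} - j$, forcing $\Lexp(j') = \Lexp(j) = k$ and contradicting $\Lexp(j') > k$. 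Hence $\Occgtm_d(j) = \emptyset$.

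For the second case $\rend{j} - j \geq d$, I would first establish that $k = \lfloor (d{-}s)/|H| \rfloor$. Writing $d - s = q|H| + r$ with $0 \leq r < |H|$, the inequality $\Lexp(j)|H| + \Ltail(j) \geq d - s = q|H| + r$ combined with $\Ltail(j) < |H|$ rules out $\Lexp(j) < q$, yielding $\Lexp(j) \geq q$. Then for the inclusion $(\supseteq)$, given $j' \in \R^-_{s,H}$ with $\Lexp(j') > \lfloor (d{-}s)/|H| \rfloor$, I would observe that $\rendfull{j'} - j' = s + \Lexp(j')|H| \geq s + (k+1)|H| > d$, so $\T[j' \dd j'+d)$ has period $|H|$; since both $\T[j \dd j+d)$ and $\T[j' \dd j'+d)$ are prefixes of $H'H^\infty$ (where $H'$ is the length-$s$ suffix of $H$) of the same length, they coincide, giving $j' \in \Occm_d(j)$, and together with $\Lexp(j') > k$ we obtain $j' \in \Occgtm_d(j)$. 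The reverse inclusion $(\subseteq)$ is immediate from \cref{lm:occ-exp}, which already places $j' \in \R^-_{s,H}$.

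The main obstacle, such as it is, will be the careful bookkeeping around the inequality $\Lexp(j) \geq \lfloor (d-s)/|H| \rfloor$ when $\rend{j} - j \geq d$: one must avoid the trap of applying $\Ltail(j) \leq |H|$ instead of the strict $\Ltail(j) < |H|$, since the former is too weak to kill the off-by-one. Beyond that, the proof is a fairly mechanical synthesis of the run-decomposition and of \cref{lm:occ-exp}, with no genuinely new combinatorial input required.
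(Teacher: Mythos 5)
Your proposal is correct and follows essentially the same route as the paper: the same case split on $\rend{j}-j$ versus $d$, the same identification of $\Lexpcut{j}{d}$ in each case, and the same periodicity argument (both length-$d$ windows being prefixes of $H'H^{\infty}$) for the inclusion $\supseteq$. The only cosmetic difference is in the first case, where you derive the contradiction via $\rend{j'}-j'=\rend{j}-j$ forcing $\Lexp(j')=\Lexp(j)$, whereas the paper argues contrapositively that $\Lexp(j')>k$ forces $\LCE_{\T}(j,j')=\rend{j}-j<d$; both rest on the same Fact~3.2 characterization of $\rend{\cdot}$.
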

\begin{proof}

  Assume first $\rend{j} - j < d$. Then $k = \min(\Lexp(j), \lfloor
  \tfrac{d - s}{|H|} \rfloor) = \min(\lfloor \tfrac{\rend{j} - j -
  s}{|H|} \rfloor, \lfloor \tfrac{d - s}{|H|} \rfloor) = \lfloor
  \tfrac{\rend{j} - j - s}{|H|} \rfloor = \Lexp(j)$. Suppose that
  $\Occgtm_{d}(j) \neq \emptyset$ and let $j' \in \Occgtm_{d}(j)$.
  Then, we have $\rend{j} - j = s + \Lexp(j)|H| + \Ltail(j) < s +
  (\Lexp(j) + 1)|H| = s + (k + 1)|H| \leq s + \Lexp(j')|H| \leq
  \rend{j'} - j'$. This implies $j \neq j'$. Moreover, by definition
  of run-decomposition, if $j, j' \in \R_{s,H}$ and $\rend{j} - j \neq
  \rend{j'} - j'$, then $\LCE_{\T}(j, j') = \min(\rend{j} - j,
  \rend{j'} - j')$. Therefore, $\LCE_{\T}(j, j') = \rend{j} - j <
  d$. By $j \neq j'$ and the uniqueness of $\T[n] = \texttt{\$}$, this
  implies $\T^{\infty}[j' \dd j' + d) \neq \T[j \dd j + d)$, and
  consequently, $j' \not\in \Occ_{d}(j)$, a contradiction. Thus, we
  must have $\Occgtm_{d}(j) = \emptyset$.

  Assume now $\rend{j} - j \geq d$. Then, $k = \min(\Lexp(j), \lfloor
  \tfrac{d - s}{|H|} \rfloor) = \min(\lfloor \tfrac{\rend{j} - j -
  s}{|H|} \rfloor, \lfloor \tfrac{d - s}{|H|} \rfloor) = \lfloor
  \tfrac{d-s}{|H|} \rfloor$ and hence
  \begin{align*}
    \Occgtm_{d}(j)
      &= \Occgt_{d} \cap \R^{-}\\
      &= \{j' \in \R^{-}_{s,H} \cap \Occ_{d}(j) :
         \Lexp(j') > k\}\\
      &= \{j' \in \R^{-}_{s,H} \cap \Occ_{d}(j) :
          \Lexp(j') > \lfloor \tfrac{d - s}{|H|} \rfloor\}\\
      &\sub \{j' \in \R^{-}_{s,H} :
        \Lexp(j') > \lfloor \tfrac{d - s}{|H|} \rfloor\},
  \end{align*}
  i.e., we obtain the first inclusion. To show the opposite inclusion
  let $j' \in \R^{-}_{s,H}$ be such that $\Lexp(j') > \lfloor \tfrac{d
  - s}{|H|} \rfloor$. By $\rend{j} - j \geq d$, we can write $\T[j \dd
  j + d) = H'H^{k}H''$, where $|H'| = s$, and $H'$ (resp.\ $H''$) is a
  proper prefix (resp.\ suffix) of $H$.  Thus, $\T[j \dd j + d)$ is a
  prefix of $H'H^{k + 1}$. On the other hand, the string $H'H^{k + 1}$
  is, by $\Lexp(j') > k$ and $j' \in \R_{s,H}$, a prefix of $\T[j' \dd
  n]$. Thus, $\LCE_{\T}(j', j) \geq d$, and consequently, $j' \in
  \Occ_{d}(j)$. By $j' \in \R^{-}_{s,H}$ and $\Lexp(j') > \lfloor
  \tfrac{d - s}{|H|} \rfloor = k$ we thus have $j' \in
  \Occgtm_{d}(j)$.
\end{proof}

\begin{lemma}\label{lm:mod-count}
  Let $H \in \Lroots$ and $s \in [0 \dd |H|)$. For any $k \in \Zp$,
  let us define $Q^{-}_k \,{:=}\, \{j \,{\in}\, \R^{-}_{s,H} :
  \Lexp(j) \,{\leq}\, k\}$. Let also $I_i \,{=}\, (a_i, b_i,i)$
  (\cref{def:intervals}). Then:
  \begin{enumerate}
  \item\label{lm:mod-count-item-1} For $i \,{\in}\, \R^{-}_H$, it
    holds $|Q^{-}_k \cap [i \dd \rend{i}-3\tau+2)| = |\{j \in [a_i \dd
    b_i) : j \,{\bmod}\, |H| \,{=}\, s\text{ and }\lfloor
    \tfrac{j}{|H|} \rfloor \,{\leq}\, k\}|$.
  \item\label{lm:mod-count-item-2} It holds $|Q^{-}_k| =
    \mcount{\mathcal{I}^{-}_{H}}{|H|}{s}{k}$.
  \end{enumerate}
\end{lemma}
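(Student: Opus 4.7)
The plan is to reduce both parts to a careful count inside a single right-maximal run of periodic positions. Fix $i \in \R^{-}_H$, let $s_i = \Lhead(i)$, and let $t := \rend{i} - i - 3\tau + 2$. By \cref{lm:R-block}, every $i + \delta$ with $\delta \in [0 \dd t)$ lies in $\R^{-}_H$, shares the endpoint $\rendfull{i+\delta} = \rendfull{i}$, and satisfies $\Lhead(i+\delta) = (s_i - \delta) \bmod |H|$. Hence the positions of this block that lie in $\R^{-}_{s,H}$ form the arithmetic progression $\{i + \delta_0 + m|H| : m \ge 0\}$ with $\delta_0 := (s_i - s) \bmod |H|$, truncated at $\delta_0 + m|H| < t$, and a direct computation from $\rendfull{i+\delta} = (i+\delta) + \Lhead(i+\delta) + \Lexp(i+\delta)|H|$ (combined with $\rendfull{i} - i = s_i + \Lexp(i)|H|$) yields $\Lexp(i + \delta_0 + m|H|) = \Lexp(i) + c - m$, where $c = 0$ if $s \le s_i$ and $c = -1$ if $s > s_i$.

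For Part~1, I parameterize the right-hand set by $j = q|H| + s$; since $s < |H|$, this gives $\lfloor j/|H| \rfloor = q$. Using $b_i = s_i + \Lexp(i)|H| + 1$ and $a_i = 3\tau - 1 - \Ltail(i)$, the three constraints $j \ge a_i$, $j < b_i$, and $q \le k$ translate to $q \ge \lceil (a_i - s)/|H| \rceil$, $q \le \Lexp(i) + c$, and $q \le k$. Under the substitution $q \mapsto m := \Lexp(i) + c - q$, these mirror the LHS constraints $\delta_0 + m|H| < t$, $m \ge 0$, and $\Lexp(i) + c - m \le k$, provided the single identity
\[
  \Lexp(i) + c - \lfloor (t - \delta_0 - 1)/|H| \rfloor = \lceil (a_i - s)/|H| \rceil
\]
holds. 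Substituting $t = s_i + \Lexp(i)|H| + \Ltail(i) - 3\tau + 2$ and $s_i - \delta_0 = s + c|H|$ collapses $t - \delta_0 - 1$ to $(\Lexp(i) + c)|H| - (a_i - s)$, whence the identity follows from $\lfloor N - x \rfloor = N - \lceil x \rceil$ for integer $N$. The bijection closes Part~1.

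Part~2 is then immediate: the family $\{[i \dd \rend{i} - 3\tau + 2) : i \in \R'^{-}\}$ partitions $\R^{-}$ by \cref{lm:R-block} together with the definition of $\R'^{-}$ in \eqref{eq:Rprim}, and only blocks with $\Lroot(i) = H$ can contribute to $Q^{-}_k \subseteq \R^{-}_{s,H}$. Summing the Part~1 equality over those $i$ recovers $\mcount{\mathcal{I}^{-}_{H}}{|H|}{s}{k}$ directly from \cref{def:intervals} and the definition of modular constraint counts in \cref{sec:mod-queries}. The main obstacle is the Part~1 bookkeeping: folding the two cases $s \le s_i$ and $s > s_i$ into the single constant $c \in \{0,-1\}$ and ensuring that the strict/non-strict inequalities, floor/ceiling conversions, and the half-open endpoints of $[a_i \dd b_i)$ all align exactly rather than being off by one; once $c$ is extracted and the one algebraic identity above is verified, both parts drop out.
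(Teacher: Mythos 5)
Your proof is correct and follows essentially the same route as the paper: both arguments reduce to a single maximal run $[i\dd \rend{i}-3\tau+2)$, use \cref{lm:R-block} to get $\Lhead(i+\delta)\equiv (\rendfull{i}-i)-\delta \pmod{|H|}$ and $\Lexp(i+\delta)=\lfloor((\rendfull{i}-i)-\delta)/|H|\rfloor$, and then biject the two sets (your composite map $q\mapsto m\mapsto\delta$ is exactly the paper's substitution $j=(\rendfull{i}-i)-\delta$), before summing over the runs in $\R'^{-}_H$ for Part~2. The paper just performs that substitution in one step rather than unrolling both sides into arithmetic progressions and verifying the endpoint identity, so your version carries the same content with more bookkeeping.
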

\begin{proof}

  1. Denote $e = \rendfull{i}-i$, $t = \rend{i}-i-3\tau+1$, and
  $b=e-t$.  As shown at the beginning of the proof of
  \cref{lm:occ-eq}, for any $i \in \R^{-}_{H}$, we have $t > 0$, $[i
  \dd i + t) \sub \R^{-}_{H}$, and for any $\delta \in [0 \dd t)$ it
  holds $\rend{i + \delta} = \rend{i}$ and $\rendfull{i + \delta} =
  \rendfull{i}$.  This implies $\Lhead(i + \delta) + \Lexp(i +
  \delta)|H| = \rendfull{i + \delta} - (i + \delta) = \rendfull{i} -
  (i + \delta) = (\rendfull{i} - i) - \delta = e - \delta$.  Thus,
  $\Lhead(i + \delta) = (e - \delta) \bmod |H|$ and $\Lexp(i + \delta)
  = \lfloor \frac{e - \delta}{|H|} \rfloor$. Hence:
  \begin{align*}
    |Q^{-}_k \cap [i \dd\rend{i}-3\tau+2)|
      &= |\{i {+} \delta : \delta \in [0 \dd t),\,
          \Lhead(i + \delta) = s, \text{ and }\Lexp(i + \delta)
          \leq k\}| \\
      &= |\{i {+} \delta : \delta \in [0 \dd t),\ (e {-} \delta)
         \bmod |H| = s,\text{ and }\lfloor
         \tfrac{e - \delta}{|H|} \rfloor \leq k\}| \\
      &= |\{i {+} e {-} j : j \in [b {+} 1 \dd e {+} 1),\, j \bmod
         |H| = s, \text{ and }\lfloor \tfrac{j}{|H|} \rfloor
         \leq k\}| \\
      &= |\{j \in [b{+}1 \dd e{+}1) : j \bmod |H| = s
         \text{ and }\lfloor \tfrac{j}{|H|} \rfloor \leq
         k\}| \\
      &= |\{j \in [a_i \dd b_i) : j \bmod |H| = s
         \text{ and }\lfloor \tfrac{j}{|H|} \rfloor \leq k\}|,
  \end{align*}
  where the third equation utilizes that if $\delta \in [0 \dd t)$,
  then letting $j = e - \delta$, we have $i + \delta = i + e - j$ and
  $j \in (e {-} t \dd e] = [b {+} 1 \dd e {+} 1)$. We then (fourth
  equality) used the fact that since $i$ and $e$ are fixed, the size
  of the ``shifted'' set does not change.  Note that, $b = (\Lhead(i)
  + \Lexp(i)|H|) - (\rend{i} - i - 3\tau + 2) = (\rend{i} - i -
  \Ltail(i)) - (\rend{i} - i - 3\tau + 2) = 3\tau - 2 - \Ltail(i) > 0$
  follows by $\Ltail(i) < |H| \leq \tau$. Thus, the interval $[a_i \dd
  b_i)$ contains only nonnegative integers.

  2. By \cref{lm:mod-count-item-1}, for any $i \in \R^{-}_H$, we can
  reduce the computation of $|Q^{-}_k \cap [i \dd i + t)\}|$, where $t
  = \rend{i} - i - 3\tau + 2$, to a modular constraint counting
  query. Observe that by definition of $\rend{i}$, if additionally it
  holds $i \in \R'$, then the interval $[i \dd i + t)$ is a
  \emph{maximal} interval of positions in $\R$, i.e., $i - 1, i + t
  \not \in \R$. Thus, letting $\mathcal{I}^{-}_H$ be the collection of
  weighted intervals (with weights corresponding to multiplicities)
  corresponding to all $i \in \R'^{-}_{H}$ (\cref{def:intervals}), we
  obtain the claim by definition of the modular constraint counting
  query (see \cref{sec:mod-queries}).
\end{proof}

\begin{proposition}\label{pr:sa-occ-gt}
  Under \cref{as:periodic}, given any $j \in \R$ and $d \geq \ell$, we
  can compute $|\Occgtm_{d}(j)|$ in $\bigO(t)$ time.
\end{proposition}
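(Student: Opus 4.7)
The plan is to reduce the computation directly via Lemma~\ref{lm:occ-gt} and Lemma~\ref{lm:mod-count}. First, using the query provided by Item~\ref{as:periodic-it-1} of \cref{as:periodic}, I would extract the values $s := \Lhead(j)$, $p := |\Lroot(j)|$, and $\rend{j}$ in $\bigO(t)$ time. From $s$ and $\rend{j}$, I compute the starting position $j+s$ of an occurrence of $H := \Lroot(j)$ in $\T$ (recall $H = \T[j+s \dd j+s+p)$), which is the way $H$ must be specified for the queries in Items~\ref{as:periodic-it-2} and~\ref{as:periodic-it-3} of \cref{as:periodic}.

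Next, I check whether $\rend{j} - j < d$. If so, \cref{lm:occ-gt} immediately gives $\Occgtm_{d}(j) = \emptyset$, and I return $0$. Otherwise, \cref{lm:occ-gt} says that
\[
  |\Occgtm_{d}(j)| = |\{j' \in \R^{-}_{s,H} : \Lexp(j') > k\}|,
  \qquad\text{where } k = \lfloor \tfrac{d-s}{p} \rfloor.
\]
I rewrite this as $|\R^{-}_{s,H}| - |\{j' \in \R^{-}_{s,H} : \Lexp(j') \le k\}|$, so it suffices to compute both terms. By \cref{lm:mod-count-item-2}, the second term equals $\mcount{\mathcal{I}^{-}_{H}}{p}{s}{k}$, and the first term equals $\mcountb{\mathcal{I}^{-}_{H}}{p}{s}$ (or equivalently $\mcount{\mathcal{I}^{-}_{H}}{p}{s}{K}$ for any $K$ at least as large as the largest relevant exponent, e.g.\ $K = \lceil n/p\rceil$). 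Both of these are modular constraint queries on $\mathcal{I}^{-}_{H}$ of the type supported by Item~\ref{as:periodic-it-3} of \cref{as:periodic}, with $H$ specified by $(j+s, p)$ as required.

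Putting it together, the query runs in $\bigO(t)$ time: one root-information query, plus $\bigO(1)$ arithmetic operations, plus two modular constraint counting queries, each costing $\bigO(t)$. I do not foresee a significant obstacle: the combinatorial reductions (\cref{lm:occ-gt} and \cref{lm:mod-count}) have already done the heavy lifting, and the proof is a straightforward assembly of those reductions with the queries declared in \cref{as:periodic}. The only minor bookkeeping issue is making sure the threshold $K$ used to obtain $|\R^{-}_{s,H}|$ is chosen so that no elements are cut off, which is handled by picking $K$ to be an obvious global upper bound on $\Lexp$, or equivalently by invoking $\mcountb$ directly.
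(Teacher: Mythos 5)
Your proposal is correct and follows essentially the same route as the paper: compute $s$, $p$, $\rend{j}$ via \cref{as:periodic}, dispatch the $\rend{j}-j<d$ case via \cref{lm:occ-gt}, and otherwise express $|\Occgtm_{d}(j)|$ as a difference of two modular constraint counting queries on $\mathcal{I}^{-}_{H}$ via \cref{lm:mod-count}. The only cosmetic difference is that the paper takes the upper threshold to be $n$ (i.e., $\mcount{\mathcal{I}^{-}_H}{p}{s}{n} - \mcount{\mathcal{I}^{-}_H}{p}{s}{\lfloor (d-s)/|H| \rfloor}$), whereas you invoke $\mcountb$ or a generic large cap $K$; these are equivalent since $\Lexp(j') \le n$ always holds.
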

\begin{proof}

  The main idea of the algorithm is as follows. By
  \cref{lm:mod-count}, we can reduce the computation of
  $|\Occgtm_{d}(j) \cap [i \dd i + t)\}|$, where $i \in \R^{-}$,
  $\Lroot(i) = \Lroot(j)$, and $t = \rend{i} - i - 3\tau + 2$, to an
  unweighted modular constraint counting query. Therefore, we can
  compute $|\Occgtm_{d}(j)|$ using the general (weighted) modular
  constraint counting queries (\cref{sec:mod-queries}) on
  $\mathcal{I}^{-}_H$ as defined in \cref{lm:mod-count}. Note that we
  need to provide efficient modular on each collection
  $\mathcal{I}^{-}_H$ separately for every $H \in \Lroots$.  Since
  $|H|$ could be large, $H$ cannot be given explicitly during the
  query. Thus, we specify $H$ using its length and the position of
  some occurrence in $\T$.

  Given any $j \in \R$, we compute $|\Occgtm_{d}(j)|$ as follows.
  First, using \cref{as:periodic} we compute values $s = \Lhead(j)$,
  $p = |\Lroot(j)|$, and $\rend{j}$ in $\bigO(t)$ time. Note, that
  then $\Lroot(j) = \T[j + s \dd j + s + p)$, i.e., we have a starting
  position of an occurrence of $H := \Lroot(j)$ in $\T$.  If $\rend{j}
  - j < d$, then by \cref{lm:occ-gt} we have $\Occgtm_{d}(j) =
  \emptyset$, and thus we return $|\Occgtm_{d}(j)| = 0$.  Let us thus
  assume $\rend{j} - j \geq d$.  We then calculate $\lfloor \frac{d -
    s}{p} \rfloor$ and by the combination of \cref{lm:occ-gt} and
  \cref{lm:mod-count-item-2} of \cref{lm:mod-count}, we obtain
  \begin{align*}
    |\Occgtm_{d}(j)|
      &= |\{j' \in \R^{-}_{s,H} : \lfloor \tfrac{d-s}{|H|} \rfloor <
         \Lexp(j') \leq n\}|\\
      &= \mcount{\mathcal{I}^{-}_H}{p}{s}{n} -
         \mcount{\mathcal{I}^{-}_H}{p}{s}{\lfloor \tfrac{d-s}{|H|} \rfloor}
  \end{align*}
  which by \cref{as:periodic} we can compute in $\bigO(t)$ time.  In
  total, the query takes $\bigO(t)$ time.
\end{proof}

\paragraph{Summary}

By combining the above results, we obtain the following query
algorithm to compute the values $|\Occm_{\ell}(j)|$ and
$|\Occm_{2\ell}(j)|$, given any position $j \in \R$.

\begin{proposition}\label{pr:sa-occ-size}
  Under \cref{as:periodic}, given any position $j \in \R$, we can in
  $\bigO(t)$ time compute the values $|\Occm_{\ell}(j)|$ and
  $|\Occm_{2\ell}(j)|$.
\end{proposition}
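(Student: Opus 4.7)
The plan is simply to combine the decomposition from \cref{cor:occ} with the two preceding propositions. By \cref{cor:occ}, for every $d \geq \ell$ the set $\Occm_d(j)$ is a disjoint union of $\Occeqm_d(j)$ and $\Occgtm_d(j)$, so
\[
  |\Occm_d(j)| = |\Occeqm_d(j)| + |\Occgtm_d(j)|.
\]
I would apply this identity twice, once for $d = \ell$ and once for $d = 2\ell$.

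For each choice of $d$, I would invoke \cref{pr:sa-occ-eq} to compute $|\Occeqm_d(j)|$ (whose hypothesis $d \in [\ell \dd 2\ell]$ is satisfied by both $d = \ell$ and $d = 2\ell$) and \cref{pr:sa-occ-gt} to compute $|\Occgtm_d(j)|$ (whose hypothesis $d \geq \ell$ is satisfied likewise). Each of these four invocations works under \cref{as:periodic} in $\bigO(t)$ time and accepts the given input $j \in \R$ without further preprocessing, so summing the two relevant values yields $|\Occm_\ell(j)|$ and $|\Occm_{2\ell}(j)|$ within the same asymptotic budget.

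There is no real obstacle: the combinatorial work has already been done in \cref{lm:occ-exp,cor:occ,lm:occ-eq,lm:occ-gt,lm:mod-count}, and the algorithmic work has been carried out in \cref{pr:sa-occ-eq,pr:sa-occ-gt}. The proof reduces to observing that $d = \ell$ and $d = 2\ell$ lie in the admissible ranges of the two propositions and that a constant number of $\bigO(t)$-time calls suffice, giving an overall running time of $\bigO(t)$.
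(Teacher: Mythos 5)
Your proposal is correct and matches the paper's own proof essentially verbatim: both decompose $\Occm_{d}(j)$ via \cref{cor:occ} into $\Occeqm_{d}(j)$ and $\Occgtm_{d}(j)$ for $d \in \{\ell, 2\ell\}$ and then make four $\bigO(t)$-time calls to \cref{pr:sa-occ-eq,pr:sa-occ-gt}, verifying that both values of $d$ lie in the admissible ranges. Nothing is missing.
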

\begin{proof}

  First, using \cref{pr:sa-occ-eq}, we compute $\delta^{\rm eq}_{\ell}
  := |\Occeqm_{\ell}(j)|$ and $\delta^{\rm eq}_{2\ell} :=
  |\Occeqm_{2\ell}(j)|$ in $\bigO(t)$ time. Then, using
  \cref{pr:sa-occ-gt}, we compute $\delta^{\rm gt}_{\ell} :=
  |\Occgtm_{\ell}(j)$ and $\delta^{\rm gt}_{2\ell} :=
  |\Occgtm_{2\ell}(j)|$ in $\bigO(t)$ time. By \cref{cor:occ}, we then
  have $|\Occm_{\ell}(j)| = \delta^{\rm eq}_{\ell} + \delta^{\rm
  gt}_{\ell}$ and $|\Occm_{2\ell}(j)| = \delta^{\rm eq}_{2\ell} +
  \delta^{\rm gt}_{2\ell}$.
\end{proof}

\subsubsection{Computing the Type}\label{sec:sa-periodic-type}

Assume that $i \in [1 \dd n]$ satisfies $\SA[i] \in \R$. In this
section, we show how under \cref{as:periodic}, given $i$ along with
$\LB_{\ell}(\SA[i])$, $\UB_{\ell}(\SA[i])$ and some $j \in
\Occ_{\ell}(\SA[i])$ (note, that we do not assume anything about
$\type(j)$; the query works correctly even if $\type(j) \neq
\type(\SA[i])$) to efficiently compute $\type(\SA[i])$, i.e., whether
it holds $\SA[i] \in \R^{-}$ or $\SA[i] \in \R^{+}$.

The section is organized as follows.  Given the input parameters as
described above, our query algorithm computes $\type(\SA[i])$ by
checking if it holds $\SA[i] \in \Occm_{\ell}(\SA[i])$.  To implement
such check, we first present a combinatorial result proving that
$\Occm_{\ell}(\SA[i])$ occupies a contiguous block of positions in
$\SA$ and showing what are the endpoints of this block
(\cref{lm:type}). We then use this characterization to develop an
efficient method of checking if $\SA[i] \in \Occm_{\ell}(\SA[i])$
holds (\cref{cor:type}). Finally, we develop a query algorithm that
efficiently computes $\type(\SA[i])$ in \cref{pr:sa-type}.

\begin{lemma}\label{lm:type}
  Let $i \in [1 \dd n]$ be such that $\SA[i] \in \R$.  Denote $b =
  \LB_{\ell}(\SA[i])$ and $e = b + |\Occm_{\ell}(\SA[i])|$.  Then, it
  holds $\Occm_{\ell}(\SA[i]) = \{\SA[i] : i \in (b \dd e]\}$.
\end{lemma}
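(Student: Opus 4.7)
The plan is to show that the block $\{\SA[i'] : i' \in (\LB_\ell(\SA[i]) \dd \UB_\ell(\SA[i])]\}$, which by \cref{cor:lbub} coincides with $\Occ_\ell(\SA[i])$ sorted lexicographically, splits cleanly into a prefix $\Occm_\ell(\SA[i])$ followed by a suffix $\Occp_\ell(\SA[i])$. The equality to prove then follows immediately by reading off the prefix of length $|\Occm_\ell(\SA[i])|$.

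First, I would observe that every $j' \in \Occ_\ell(\SA[i])$ belongs to $\R_{s,H}$ with $s = \Lhead(\SA[i])$ and $H = \Lroot(\SA[i])$. For $j' = \SA[i]$ this is trivial; for $j' \ne \SA[i]$, the defining equality $\T^\infty[j' \dd j' + \ell) = \T^\infty[\SA[i] \dd \SA[i] + \ell)$ together with the uniqueness of $\T[n] = \$$ forces $\LCE_\T(\SA[i], j') \ge \ell$, and since $\ell \ge 16$ yields $\ell \ge 3\tau - 1$ for $\tau = \lfloor \ell/3 \rfloor$, \cref{lm:exp} places $j'$ in $\R_{s,H}$.

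Second, using \cref{lm:SA-block-1} of \cref{lm:exp}, any $j_1 \in \R^-_{s,H}$ and $j_2 \in \R^+_{s,H}$ satisfy $\T[j_1 \dd n] \prec \T[j_2 \dd n]$. Applied to the elements of $\Occ_\ell(\SA[i])$, this says that all negative-type positions in $\Occ_\ell(\SA[i])$ (that is, the elements of $\Occm_\ell(\SA[i])$) lexicographically precede all positive-type ones (the elements of $\Occp_\ell(\SA[i])$). Combined with the fact that the block $(\LB_\ell(\SA[i]) \dd \UB_\ell(\SA[i])]$ of $\SA$ enumerates $\Occ_\ell(\SA[i])$ in lexicographic order of the corresponding suffixes (immediate from \cref{cor:lbub} and the definition of $\SA$), the first $|\Occm_\ell(\SA[i])|$ positions of this block are precisely the elements of $\Occm_\ell(\SA[i])$, which gives $\Occm_\ell(\SA[i]) = \{\SA[i'] : i' \in (b \dd e]\}$ with $b = \LB_\ell(\SA[i])$ and $e = b + |\Occm_\ell(\SA[i])|$.

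I do not expect a serious obstacle here: the argument is a direct synthesis of the structural lemma \cref{lm:exp} with the suffix-array characterization of $\LB_\ell$ and $\UB_\ell$ from \cref{cor:lbub}. The only subtle point is ensuring $\ell \ge 3\tau - 1$ so that \cref{lm:exp} can be invoked on every $j' \in \Occ_\ell(\SA[i])$, which is guaranteed by the global standing assumption $\ell \ge 16$ and the choice $\tau = \lfloor \ell/3 \rfloor$.
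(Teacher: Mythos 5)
Your proposal is correct and follows essentially the same route as the paper's proof: establish $\Occ_{\ell}(\SA[i])\sub\R_{s,H}$, invoke \cref{lm:SA-block-1} of \cref{lm:exp} to see that type~$-1$ positions lexicographically precede type~$+1$ positions within $\R_{s,H}$, and conclude that $\Occm_{\ell}(\SA[i])$ is the prefix of the contiguous $\SA$-block $(\LB_{\ell}(\SA[i])\dd\UB_{\ell}(\SA[i])]$. The only cosmetic difference is that the paper cites \cref{lm:occ-exp-item-1} of \cref{lm:occ-exp} for the containment in $\R_{s,H}$, whereas you re-derive it directly from \cref{lm:exp} via the same LCE argument.
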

\begin{proof}
  Let $s = \Lhead(\SA[i])$ and $H = \Lroot(\SA[i])$ By
  \cref{lm:occ-exp-item-1} of \cref{lm:occ-exp}, it holds
  $\Occ_{\ell}(\SA[i]) \sub \R_{s,H}$. On the other hand, by
  \cref{lm:exp}, all elements of $\R_{s,H}$ occupy a contiguous block
  of positions in $\SA$.  Moreover, all elements of $\R^{-}_{s,H}$ are
  earlier in the lexicographical order than the elements of
  $\R^{+}_{s,H}$. Thus, since $\Occ_{\ell}(\SA[i])$ by definition also
  occupies a contiguous block in $\SA$ starting at index $b + 1$, the
  elements of $\Occm_{\ell}(\SA[i])$ (assuming the set is nonempty)
  must start at this position, and occupy the block of
  $|\Occ_{\ell}(\SA[i])|$ consecutive positions.
\end{proof}

\begin{corollary}\label{cor:type}
  For any $i \in [1 \dd n]$ such that $\SA[i] \in \R$, $\SA[i] \in
  \Occm_{\ell}(\SA[i])$ holds if and only if $i - \LB_{\ell}(\SA[i])
  \leq |\Occm_{\ell}(\SA[i])|$.
\end{corollary}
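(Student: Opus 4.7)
The plan is to derive the corollary directly from \cref{lm:type}, using only the fact that $\SA$ is a permutation and that $\SA[i]$ is a trivial witness to $\Occ_\ell(\SA[i])$.

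First, I would set $b = \LB_\ell(\SA[i])$ and $e = b + |\Occm_\ell(\SA[i])|$ and invoke \cref{lm:type} to obtain the identity $\Occm_\ell(\SA[i]) = \{\SA[i'] : i' \in (b \dd e]\}$. Because $\SA$ is a permutation of $[1 \dd n]$, the equation $\SA[i] = \SA[i']$ forces $i = i'$, so membership of $\SA[i]$ in the right-hand set is equivalent to $i \in (b \dd e]$. In other words, $\SA[i] \in \Occm_\ell(\SA[i])$ if and only if $b < i \le e$.

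Next, I would eliminate the lower bound $b < i$ by showing it is automatic. From the definitions in the opening of \cref{sec:sa}, the indices $i'$ with $\SA[i'] \in \Occ_\ell(\SA[i])$ form exactly the block $(\LB_\ell(\SA[i]) \dd \UB_\ell(\SA[i])]$ of $\SA$; since $\SA[i]$ is itself in $\Occ_\ell(\SA[i])$ (trivially, as $\T^\infty[\SA[i] \dd \SA[i]+\ell) = \T^\infty[\SA[i] \dd \SA[i]+\ell)$), it follows that $i > \LB_\ell(\SA[i]) = b$. Substituting $e = b + |\Occm_\ell(\SA[i])|$, the remaining condition $i \le e$ is precisely $i - \LB_\ell(\SA[i]) \le |\Occm_\ell(\SA[i])|$, which is what we need.

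I do not anticipate a serious obstacle here: all combinatorial content sits in \cref{lm:type}, and the corollary is essentially a bookkeeping step. The only point deserving a moment of care is the degenerate case $|\Occm_\ell(\SA[i])| = 0$ (which occurs when $\SA[i] \in \R^+$); there the right-hand condition $i - b \le 0$ fails by the strict inequality $i > b$ established above, and simultaneously $\SA[i] \notin \Occm_\ell(\SA[i]) = \emptyset$, so the equivalence still holds.
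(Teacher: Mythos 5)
Your proposal is correct and follows essentially the same route as the paper: both directions reduce to the characterization $\Occm_{\ell}(\SA[i]) = \{\SA[i'] : i' \in (b \dd e]\}$ from \cref{lm:type}, combined with the observation that $i > \LB_{\ell}(\SA[i])$ holds automatically because $\SA[i] \in \Occ_{\ell}(\SA[i])$. The only quibble is the parenthetical claiming $|\Occm_{\ell}(\SA[i])| = 0$ occurs when $\SA[i] \in \R^{+}$ — the implication actually runs the other way (emptiness forces $\SA[i]\in\R^{+}$, not conversely) — but this does not affect the argument.
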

\begin{proof}

  Assume $\SA[i] \in \Occm_{\ell}(\SA[i])$. By \cref{lm:poslow}, $i
  \in (b \dd e]$, where $b = \LB_{\ell}(\SA[i])$ and $e = b +
  |\Occm_{\ell}(\SA[i])|$. In particular, $i \leq e =
  \LB_{\ell}(\SA[i]) + |\Occm_{\ell}(\SA[i])|$.

  Assume now $i - \LB_{\ell}(\SA[i]) \leq |\Occm_{\ell}(\SA[i])|$.
  Let $b = \LB_{\ell}(\SA[i])$ and $e = b + |\Occm_{\ell}(\SA[i])|$.
  Then, $i \leq e$. On the other hand, by definition of
  $\Occ_{\ell}(\SA[i])$, we have $i \in (\LB_{\ell}(\SA[i]) \dd
  \UB_{\ell}(\SA[i])]$. In particular, $i > \LB_{\ell}(\SA[i]) =
  b$. Therefore, we obtain $i \in (b \dd e]$. By \cref{lm:type}, this
  implies $\SA[i] \in \Occm_{\ell}(\SA[i])$.
\end{proof}

\begin{proposition}\label{pr:sa-type}
  Let $i \in [1 \dd n]$ be such that $\SA[i] \in \R$. Under
  \cref{as:periodic}, given $i$, $\LB_{\ell}(\SA[i])$,
  $\UB_{\ell}(\SA[i])$, and some position $j \in \Occ_{\ell}(\SA[i])$,
  we can compute $\type(\SA[i])$ in $\bigO(t)$ time.
\end{proposition}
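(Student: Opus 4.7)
The plan is to reduce $\type(\SA[i])$ to a single application of \cref{cor:type} combined with \cref{pr:sa-occ-size}. The key observation is that $\SA[i] \in \Occ_{\ell}(\SA[i])$ holds trivially, so $\SA[i] \in \Occm_{\ell}(\SA[i])$ is equivalent to $\SA[i] \in \R^{-}$, i.e., to $\type(\SA[i]) = -1$. Hence, by \cref{cor:type}, it suffices to compute $|\Occm_{\ell}(\SA[i])|$ and compare it with $i - \LB_{\ell}(\SA[i])$; output $-1$ if the former is at least the latter, and $+1$ otherwise.

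To compute $|\Occm_{\ell}(\SA[i])|$ without knowing $\SA[i]$, I would pass the input position $j \in \Occ_{\ell}(\SA[i])$ to \cref{pr:sa-occ-size}. This is legal because $\T^{\infty}[j \dd j + \ell) = \T^{\infty}[\SA[i] \dd \SA[i] + \ell)$ immediately yields $\Occ_{\ell}(j) = \Occ_{\ell}(\SA[i])$. Moreover, by the same argument as in \cref{pr:sa-core} (membership in $\R$ depends only on the length-$(3\tau-1)$ context, and $3\tau - 1 \leq \ell$), we get $j \in \R^{-}$ iff $\SA[i] \in \R^{-}$, and more generally $\Occm_{\ell}(j) = \Occm_{\ell}(\SA[i])$. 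Thus $|\Occm_{\ell}(j)| = |\Occm_{\ell}(\SA[i])|$ is exactly the quantity we need.

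Putting it together: verify $j \in \R$ (granted by $\SA[i] \in \R$ and the above context argument), invoke \cref{pr:sa-occ-size} on $j$ to obtain $m := |\Occm_{\ell}(j)| = |\Occm_{\ell}(\SA[i])|$ in $\bigO(t)$ time, and then return $-1$ if $i - \LB_{\ell}(\SA[i]) \le m$ and $+1$ otherwise, by \cref{cor:type}. The comparison and branching are $\bigO(1)$ on top of the $\bigO(t)$ call, so the total running time is $\bigO(t)$. The only subtlety worth being careful about is the equivalence $\type(\SA[i]) = -1 \iff \SA[i] \in \Occm_{\ell}(\SA[i])$ and the justification that \cref{pr:sa-occ-size} may be applied to $j$ rather than to the unknown $\SA[i]$; both follow from context-consistency arguments that mirror those in \cref{sec:sa-periodic-prelim,sec:sa-periodic-occ-size}, so no new combinatorial ingredient is required.
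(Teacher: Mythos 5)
Your proposal is correct and follows essentially the same route as the paper: compute $m = |\Occm_{\ell}(j)| = |\Occm_{\ell}(\SA[i])|$ via \cref{pr:sa-occ-size} applied to the given $j \in \Occ_{\ell}(\SA[i])$, then decide the type by comparing $i - \LB_{\ell}(\SA[i])$ with $m$ using \cref{cor:type}. Your explicit justification that $\type(\SA[i]) = -1 \iff \SA[i] \in \Occm_{\ell}(\SA[i])$ (since $\SA[i] \in \Occ_{\ell}(\SA[i])$ always holds) and that $\Occm_{\ell}(j) = \Occm_{\ell}(\SA[i])$ matches the paper's reasoning.
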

\begin{proof}

  The main idea of the query is as follows. By \cref{cor:type}, to
  compute $\type(\SA[i])$ it suffices to know $i$,
  $\LB_{\ell}(\SA[i])$ and $|\Occm_{\ell}(\SA[i])|$. The first two
  values are given as input.  The third is computed using
  \cref{pr:sa-occ-size}. Note, however, that the query in
  \cref{pr:sa-occ-size} can compute $\Occm_{\ell}(j)$ only if given
  $j$. In our case we do not have $\SA[i]$.  We observe, however, that
  by definition, for any $j \in \Occ_{\ell}(\SA[i])$ (note that
  $\type(j)$ can be either $-1$ or $+1$), it holds
  $|\Occm_{\ell}(\SA[i])| = |\Occm_{\ell}(j)|$. Thus, we can use $j$
  instead of $\SA[i]$.

  Given the index $i$, along with values $\LB_{\ell}(\SA[i])$,
  $\UB_{\ell}(\SA[i])$, and some position $j \in \Occ_{\ell}(\SA[i])$,
  we compute $\type(\SA[i])$ as follows. First, using the query from
  \cref{pr:sa-occ-size}, we compute the value $\delta :=
  |\Occm_{\ell}(j)|$ in $\bigO(t)$ time. By the above discussion, it
  holds $|\Occm_{\ell}(\SA[i])| = \delta$. By \cref{cor:type}, we then
  have $\type(\SA[i]) = -1$ if and only if $i - \LB_{\ell}(\SA[i])
  \leq \delta$, which we can evaluate in $\bigO(1)$ time.
\end{proof}

\subsubsection{Computing the Size of 
  \texorpdfstring{$\Poslow(j)$}{Poslow} and
  \texorpdfstring{$\Poshigh(j)$}{Poshigh}}\label{sec:sa-periodic-poslow-poshigh}

In this section, we present an algorithm to compute the values
$\deltalow(j)$ and $\deltahigh(j)$ for $j \in \R^{-}$.

The section is organized as follows. We start with the combinatorial
result (\cref{lm:delta-low-high}) that shows how to count the elements
of $\Poslow(j)$ and $\Poshigh(j)$ that belong to any right-maximal
contiguous block of elements of $\R^{-}$. This result is a very
general extension of \cref{lm:occ-eq}. This generalization, however,
is nontrivial and thus below we provide its complete proof (omitting
and referring to the appropriate places in the proof of
\cref{lm:occ-eq} only for parts that are identical).  We then use this
characterization to develop a query algorithm to compute
$\deltalow(j)$ and $\deltahigh(j)$ given any $j \in \R^{-}$
(\cref{pr:sa-delta-low-high}). We finally prove
(\cref{pr:sa-delta-low-high-2}) that the computation of $\deltalow(j)$
(resp.\ $\deltahigh(j)$) does not actually need the value of $j$, but
it is sufficient to only know that $\type(j) = -1$ and some $j' \in
\Occ_{\ell}(j)$ (resp.\ $j' \in \Occ_{2\ell}(j)$).

\begin{lemma}\label{lm:delta-low-high}
  Assume $i, j \in \R^{-}_{H}$ and let $t = \rend{i} - i - 3\tau +
  2$. Then, $|\Poslow(j) \cap [i \dd i + t)| \leq 1$ and $|\Poshigh(j)
  \cap [i \dd i + t)| \leq 1$.  Moreover, $|\Poslow(j) \cap [i \dd i +
  t)| = 1$ $($resp.\ $|\Poshigh(j) \cap [i \dd i + t)| = 1)$ holds if
  and only if
  \begin{itemize}
  \item $\rendfull{i} - i \geq \rendlow{j} - j$
    $($resp.\ $\rendfull{i} - i \geq \rendhigh{j} - j)$ and
  \item $\T^{\infty}[\rendfull{i} \dd \rendfull{i} + 7\tau) \succeq
    \T^{\infty}[\rendlow{j} \dd j +
    \ell)$\\ $($resp.\ $\T^{\infty}[\rendfull{i} \dd \rendfull{i} +
    7\tau) \succeq \T^{\infty}[\rendhigh{j} \dd j + 2\ell))$.
  \end{itemize}
\end{lemma}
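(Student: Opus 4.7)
My plan is to mimic the proof of \cref{lm:occ-eq}, replacing $\Occeqm_{d}(j)$ with $\Poslow(j)$ (or $\Poshigh(j)$) and the exact-prefix condition with the stated $\succeq$ inequality. I would start by invoking \cref{lm:R-block} to observe that every position in $[i \dd i+t)$ belongs to $\R^{-}_H$ and inherits from $i$ the values of $\rend$, $\rendfull$, and $\type=-1$; consequently, $\Lhead(i+\delta) + \Lexp(i+\delta)|H| = \rendfull{i} - (i+\delta)$ for every $\delta \in [0 \dd t)$. Any $j' = i+\delta \in \Poslow(j)$ must satisfy $\Lhead(j') = s := \Lhead(j)$ and $\Lexp(j') = k_1$, which force $s + k_1|H| = \rendfull{i} - (i+\delta)$ and pin $\delta$ to $\delta^\star := (\rendfull{i} - i) - (\rendlow{j} - j)$.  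This already gives $|\Poslow(j) \cap [i \dd i+t)| \leq 1$, and the bound for $\Poshigh(j)$ follows by the same argument with $k_2$ and $\rendhigh{j}$.

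For the forward direction of the equivalence, assume $j' = i + \delta^\star \in \Poslow(j)$.  The first condition follows from $\delta^\star \geq 0$.  For the second, the equalities $\T[j' \dd \rendfull{j'}) = \T[j \dd \rendlow{j})$ (both equal a length-$s$ suffix of $H$ followed by $H^{k_1}$) together with part~\ref{lm:utils-2} of \cref{lm:utils} reduce the ordering clause ``$\T[j' \dd n] \succeq \T[j \dd n]$ or $\LCE_\T(j,j') \geq \ell$'' to the comparison $\T^\infty[\rendfull{j'} \dd j'+\ell) \succeq \T^\infty[\rendlow{j} \dd j+\ell)$.  Since $\rendfull{j'} = \rendfull{i}$ and the left-hand string of length $\ell - (\rendlow{j}-j) \leq \ell \leq 7\tau$ is a prefix of $\T^\infty[\rendfull{i} \dd \rendfull{i} + 7\tau)$, this is equivalent to the claimed condition.

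The backward direction is the main obstacle.  Granting $\delta^\star \geq 0$ from the first condition, I need to show $\delta^\star < t$, i.e., $i + \delta^\star$ remains in the run starting at $i$.  Setting $r := \ell - (\rendlow{j} - j)$, the key inequality is $\rend{i} - \rendfull{i} \geq \min(r, \rend{j} - \rendlow{j})$.  I would prove this by contradiction: if the reverse holds, then the position $\Ltail(i) = \rend{i} - \rendfull{i}$ (measured from $\rendfull{i}$ in the left-hand string, and from $\rendlow{j}$ in the right) lies strictly inside the right-hand comparison window ($\Ltail(i) < r$) and strictly inside the periodic extension of $\T[\rendlow{j} \dd \rend{j})$ ($\Ltail(i) < \rend{j} - \rendlow{j}$); hence the two strings agree on the first $\Ltail(i)$ characters (both equal $H[1\dd \Ltail(i)]$), but by $\type(i)=-1$ the character $\T[\rend{i}]$ is strictly smaller than $\T[\rend{i} - |H|]$, which coincides with the character at position $\Ltail(i)$ on the right.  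This yields $\T^\infty[\rendfull{i} \dd \rendfull{i}+7\tau) \prec \T^\infty[\rendlow{j} \dd j+\ell)$, contradicting the assumed $\succeq$.  From the derived bound the chain $\rend{i} - (i + \delta^\star) \geq (\rendlow{j} - j) + \min(r, \rend{j} - \rendlow{j}) = \min(\ell, \rend{j} - j) \geq 3\tau - 1$ gives $\delta^\star < t$.  Finally, one verifies by direct computation that $j' := i + \delta^\star$ satisfies $j' \in \R^{-}_{s,H}$ with $\Lexp(j') = k_1$, and the ordering condition follows by reversing the reduction used in the forward direction.

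The $\Poshigh(j)$ statement is proved identically after substituting $k_2$, $\rendhigh{j}$, and $2\ell$ for $k_1$, $\rendlow{j}$, and $\ell$; here the inequality $2\ell \leq 7\tau$ (which holds for $\ell \geq 16$) replaces $\ell \leq 7\tau$ in justifying the extension of the left-hand comparand to length $7\tau$, and the final chain becomes $\rend{i} - (i + \delta^\star) \geq \min(2\ell, \rend{j} - j) \geq 3\tau - 1$.
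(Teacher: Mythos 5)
Your proof is correct and follows essentially the same route as the paper's: the uniqueness claim by pinning $\delta$ through the shared value of $\rendfull{\cdot}$ across the run, the forward direction by stripping the common periodic prefix, and the backward direction by deriving a contradiction at the mismatch character $\T[\rend{i}]$ and then chaining inequalities to get $\delta^\star < t$. Your single inequality $\rend{i}-\rendfull{i}\ge\min\bigl(\ell-(\rendlow{j}-j),\,\rend{j}-\rendlow{j}\bigr)$ neatly unifies the paper's two cases ($\rend{j}-j<\ell$ versus $\rend{j}-j\ge\ell$), and routing the ordering clause through \cref{lm:utils-2} of \cref{lm:utils} replaces the paper's explicit case analyses, but these are presentational simplifications rather than a genuinely different argument.
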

\begin{proof}

  As shown at the beginning of the proof of \cref{lm:occ-eq}, for any
  $i \in \R^{-}_{H}$, letting $t = \rend{j} - i - 3\tau + 2$, we have
  $t > 0$, $[i \dd i + t) \sub \R^{-}_{H}$, and for any $\delta \in [0
  \dd t)$ it holds $\rend{i + \delta} = \rend{i}$ and $\rendfull{i +
  \delta} = \rendfull{i}$, which in turn implies $\rendfull{i +
  \delta} - (i + \delta) = \rendfull{i} - i - \delta$. Let $s =
  \Lhead(j)$. Recall that for any $j' \in \Poslow(j)$ (resp.\ $j' \in
  \Poshigh(j)$), we have $\rendfull{j'} - j' = s + \Lexp(j')|H| = s +
  k_1|H| = \rendlow{j} - j$ (resp.\ $\rendfull{j'} - j' = s +
  \Lexp(j')|H| = s + k_2|H| = \rendhigh{j} - j$).  Thus, $i + \delta
  \in \Poslow(j)$ (resp.\ $i + \delta \in \Poshigh(j)$) implies
  $\rendfull{i + \delta} - (i + \delta) = \rendfull{i} - i - \delta =
  \rendlow{j} - j$ (resp.\ $\rendfull{i + \delta} - (i + \delta) =
  \rendfull{i} - i - \delta = \rendhigh{j} - j$), or equivalently,
  $\delta = (\rendfull{i} - i) - (\rendlow{j} - j)$ (resp.\ $\delta =
  (\rendfull{i} - i) - (\rendhigh{j} - j)$), and thus $|\Poslow(j)
  \cap [i \dd i + t)| \leq 1$ (resp.\ $|\Poshigh(j) \cap [i \dd i +
  t)| \leq 1$).

  Next, we prove the equivalence.  Let us first assume $|\Poslow(j)
  \cap [i \dd i + t)| = 1$ (resp.\ $|\Poshigh(j) \cap [i \dd i + t)| =
  1$), i.e., that $i + \delta \in \Poslow(j)$ (resp.\ $i + \delta \in
  \Poshigh(j)$) holds for some $\delta \in [0 \dd t)$. Then, as noted
  above, $\rendlow{j} - j = \rendfull{i} - i - \delta \leq
  \rendfull{i} - i$.  (resp.\ $\rendhigh{j} - j = \rendfull{i} - i -
  \delta \leq \rendfull{i} - i$) holds. This establishes the first
  condition. To show the second condition, let $s = \Lhead(j)$. By
  $k_1 \leq \Lexp(j)$ (resp.\ $k_2 \leq \Lexp(j)$), we have $\T[j \dd
  \rendlow{j}) = H'H^{k_1}$ (resp.\ $\T[j \dd \rendhigh{j}) =
  H'H^{k_2}$), where $H'$ is a length-$s$ prefix of $H$. On the other
  hand, by definition, $i + \delta \in \Poslow(j)$ (resp.\ $i + \delta
  \in \Poshigh(j)$) implies $\Lhead(i + \delta) = s$ and $\Lexp(i +
  \delta) = k_1$ (resp.\ $\Lexp(i + \delta) = k_2$). Thus, $\T[i +
  \delta \dd \rendfull{i+\delta}) = \T[i + \delta \dd \rendfull{i}) =
  H'H^{k_1} = \T[j \dd \rendlow{j})$ (resp.\ $\T[i + \delta \dd
  \rendfull{i+\delta}) = \T[i + \delta \dd \rendfull{i}) = H'H^{k_2} =
  \T[j \dd \rendhigh{j})$).  Therefore, the assumption $\T[i + \delta
  \dd n] \succeq \T[j \dd n]$ or $\LCE_{\T}(i + \delta, j) \geq \ell$
  (resp.\ $\LCE_{\T}(i + \delta, j) \geq 2\ell$) following from $i +
  \delta \in \Poslow(j)$ (resp.\ $i+\delta \in \Poshigh(j)$) is
  equivalent to $\T[\rendfull{i} \dd n] \succeq \T[\rendlow{j} \dd n]$
  (resp.\ $\T[\rendfull{i} \dd n] \succeq \T[\rendhigh{j} \dd n]$) or
  $\LCE_{\T}(i + \delta, j) = \LCE_{\T}(\rendfull{i}, \rendlow{j})
  \geq \ell - (\rendlow{j} - j)$ (resp.\ $\LCE_{\T}(i + \delta, h) =
  \LCE_{\T}(\rendfull{i}, \rendhigh{j}) \geq 2\ell - (\rendhigh{j} -
  j)$). We thus consider two cases:
  \begin{itemize}
  \item If $\LCE_{\T}(\rendfull{i}, \rendlow{j}) \geq \ell -
    (\rendlow{j} - j)$ (resp.\ $\LCE_{\T}(\rendfull{i}, \rendhigh{j})
    \geq 2\ell - (\rendhigh{j} - j)$) holds, then by the assumption
    $\ell \leq 7\tau$ (resp.\ $2\ell \leq 7\tau$), we obtain
    $\T^{\infty}[\rendfull{i} \dd \rendfull{i} + 7\tau) \succeq
    \T^{\infty}[\rendfull{i} \dd \rendfull{i} + \ell - (\rendlow{j} -
    j)) = \T^{\infty}[\rendlow{j} \dd j + \ell)$
    (resp.\ $\T^{\infty}[\rendfull{i} \dd \rendfull{i} + 7\tau)
    \succeq \T^{\infty}[\rendfull{i} \dd \rendfull{i} + 2\ell -
    (\rendhigh{j} - j)) = \T^{\infty}[\rendhigh{j} \dd j + 2\ell)$).
  \item On the other hand, if $\T[\rendfull{i} \dd n] \succeq
    \T[\rendlow{j} \dd n]$ (resp.\ $\T[\rendfull{i} \dd n] \succeq
    \T[\rendhigh{j} \dd n]$) holds, then $\T[n]$ being smallest in
    $\T$ implies $\T^{\infty}[\rendfull{i} \dd \rendfull{i} + q)
    \succeq \T^{\infty}[\rendlow{j} \dd \rendlow{j} + q)$
    (resp.\ $\T^{\infty}[\rendfull{i} \dd \rendfull{i} + q) \succeq
    \T^{\infty}[\rendhigh{j} \dd \rendhigh{j} + q)$) for any $q \geq
    0$. In particular, for $q = 7\tau$ we have
    $\T^{\infty}[\rendfull{i} \dd \rendfull{i} + 7\tau) \succeq
    \T^{\infty}[\rendlow{j} \dd \rendlow{j} + 7\tau) \succeq
    \T^{\infty}[\rendlow{j} \dd j + \ell)$
    (resp.\ $\T^{\infty}[\rendfull{i} \dd \rendfull{i} + 7\tau)
    \succeq \T^{\infty}[\rendhigh{j} \dd \rendhigh{j} + 7\tau) \succeq
    \T^{\infty}[\rendhigh{j} \dd j + 2\ell)$), where the last
    inequality follows by $j + \ell \leq \rendlow{j} + 7\tau$
    (resp.\ $j + 2\ell \leq \rendhigh{j} + 7\tau$).
  \end{itemize}
  To prove the opposite implication, assume $\rendfull{i} - i \geq
  \rendlow{j} - j$ (resp.\ $\rendfull{i} - i \geq \rendhigh{j} - j$)
  and $\T^{\infty}[\rendfull{i} \dd \rendfull{i} + 7\tau) \succeq
  \T^{\infty}[\rendlow{j} \dd j + \ell)$
  (resp.\ $\T^{\infty}[\rendfull{i} \dd \rendfull{i} + 7\tau) \succeq
  \T^{\infty}[\rendhigh{j} \dd j + 2\ell)$). Let $\delta =
  (\rendfull{i} - i) - (\rendlow{j} - j)$ (resp.\ $\delta =
  (\rendfull{i} - i) - (\rendhigh{j} - j)$). We will prove that
  $\delta \in [0 \dd t)$ and $i + \delta \in \Poslow(j)$ (resp.\ $i +
  \delta \in \Poshigh(j)$).  The inequality $\delta \geq 0$ follows
  from the definition of $\delta$ and our assumptions. To show $\delta
  < t$, we consider two cases:
  \begin{itemize}
  \item Let $\rend{j} - j < \ell$ (resp.\ $\rend{j} - j < 2\ell$). We
    start by noting $k_1 = \min(\Lexp(j), \lfloor \frac{\ell-s}{|H|}
    \rfloor) = \min(\lfloor \frac{\rend{j} - j - s}{|H|} \rfloor,
    \lfloor \frac{\ell - s}{|H|} \rfloor) = \lfloor \frac{\rend{j} - j
    - s}{|H|} \rfloor = \Lexp(j)$ (resp.\ $k_2 = \Lexp(j)$). Thus,
    $\rendlow{j} = j + s + k_1|H| = j + s + \Lexp(j)|H| =
    \rendfull{j}$ (resp.\ $\rendhigh{j} = j + s + k_2|H| = j + s +
    \Lexp(j)|H| = \rendfull{j}$). Let $q = \rend{i} - \rendfull{i}$.
    Using the same argument as in the proof of \cref{lm:occ-eq} (when
    considering the two cases for the value $\rend{j} - j$), we obtain
    that assuming $q < \rend{j} - \rendfull{j}$ implies
    $\T[\rendfull{i} \dd \rendfull{i} + q] \prec \T[\rendfull{j} \dd
    \rendfull{j} + q]$. By $q < \rend{j} - \rendfull{j} < j + \ell -
    \rendfull{j} = j + \ell - \rendlow{j} \leq \ell \leq 7\tau$
    (resp.\ $q < \rend{j} - \rendfull{j} < j + 2\ell - \rendfull{j} =
    j + 2\ell - \rendhigh{j} \leq 2\ell \leq 7\tau$), this implies
    $\T^{\infty}[\rendfull{i} \dd \rendfull{i} + 7\tau) \prec
    \T^{\infty}[\rendlow{j} \dd j + \ell)$
    (resp.\ $\T^{\infty}[\rendfull{i} \dd \rendfull{i} + 7\tau) \prec
    \T^{\infty}[\rendhigh{j} \dd j + 2\ell)$), contradicting our
    assumption.  Thus, it holds $\rend{i} - \rendfull{i} \geq \rend{j}
    - \rendfull{j}$ and hence using the same chain of inequalities as
    in the proof of \cref{lm:occ-eq} with $\rendhigh{j}$ replaced by
    $\rendlow{j}$ (resp.\ without any change), we obtain $\rend{i} -
    (i + \delta) \geq 3\tau - 1$, or equivalently, $\delta \leq
    \rend{i} - i - 3\tau + 1 < t$.
  \item Let $\rend{j} - j \geq \ell$ (resp.\ $\rend{j} - j \geq
    2\ell$).  Denote $q = \rend{i} - \rendfull{i}$. Using the same
    argument as in the proof \cref{lm:occ-eq} (when considering the
    two cases for the value $\rend{j} - j$), we obtain that assuming
    $q < j + \ell - \rendlow{j}$ (resp.\ $q < j + 2\ell -
    \rendhigh{j}$), it holds $\T[\rendfull{i} \dd \rendfull{i} + q]
    \prec \T[\rendlow{j} \dd \rendlow{j} + q]$
    (resp.\ $\T[\rendfull{i} \dd \rendfull{i} + q] \prec
    \T[\rendhigh{j} \dd \rendhigh{j} + q]$).  By $q < j + \ell -
    \rendlow{j} \leq \ell \leq 7\tau$ (resp.\ $q < j + 2\ell -
    \rendhigh{j} \leq 2\ell \leq 7\tau$), this implies
    $\T^{\infty}[\rendfull{i} \dd \rendfull{i} + 7\tau) \prec
    \T^{\infty}[\rendlow{j} \dd j + \ell)$
    (resp.\ $\T^{\infty}[\rendfull{i} \dd \rendfull{i} + 7\tau) \prec
    \T^{\infty}[\rendhigh{j} \dd j + 2\ell)$), contradicting our
    assumption.  Thus, $\rend{i} - \rendfull{i} \geq j + \ell -
    \rendlow{j}$ (resp.\ $\rend{i} - \rendfull{i} \geq j + 2\ell -
    \rendhigh{j}$) and hence using the same chain of inequalities as
    in the proof of \cref{lm:occ-eq} with $\rendhigh{j}$ replaced by
    $\rendlow{j}$ and $2\ell$ replaced by $\ell$ (resp.\ without any
    change), we obtain $\rend{i} - (i + \delta) \geq 3\tau - 1$, or
    equivalently, $\delta \leq \rend{i} - i - 3\tau + 1 < t$.
  \end{itemize}
  It remains to show $i + \delta \in \Poslow(j)$ (resp.\ $i + \delta
  \in \Poshigh(j)$). For this, we first observe that $\rendfull{i +
  \delta} - (i + \delta) = \rendfull{i} - (i + \delta) = \rendlow{j} -
  j$ (resp.\ $\rendfull{i + \delta} - (i + \delta) = \rendfull{i} - (i
  + \delta) = \rendhigh{j} - j$). Combining this with $i + \delta, j
  \in \R_H$ gives $\T[i + \delta \dd \rendfull{i + \delta}) = \T[j \dd
  \rendlow{j})$ (resp.\ $\T[i + \delta \dd \rendfull{i + \delta}) =
  \T[j \dd \rendhigh{j})$).  This implies $\Lhead(i + \delta) =
  \Lhead(j)$ and $\Lexp(i + \delta) = \lfloor (\rendlow{j} - j)/|H|
  \rfloor = k_1$ (resp.\ $\Lexp(i + \delta) = \lfloor (\rendhigh{j} -
  j)/|H| \rfloor = k_2$).  Moreover, by \cref{lm:R-block}, $\type(i +
  \delta) = \type(i) = -1$. Thus, we obtain $i + \delta \in
  \R^{-}_{s,H}$.  It remains to show that it holds $\T[i + \delta \dd
  n] \succeq \T[j \dd n]$ or $\LCE_{\T}(i + \delta, j) \geq \ell$
  (resp.\ $\LCE_{\T}(i + \delta, j) \geq 2\ell$).  For this, we first
  note that by $\rendfull{i+\delta} = \rendfull{i}$ and $\T[i + \delta
  \dd \rendfull{i + \delta}) = \T[j \dd \rendlow{j})$ (resp.\ $\T[i +
  \delta \dd \rendfull{i + \delta}) = \T[j \dd \rendhigh{j})$), we
  have $\LCE_{\T}(i + \delta, j) = \rendlow{j} - j +
  \LCE_{\T}(\rendfull{i}, \rendlow{j})$ (resp.\ $\LCE_{\T}(i + \delta,
  j) = \rendhigh{j} - j + \LCE_{\T}(\rendfull{i}, \rendhigh{j})$).
  Let $q' = \LCE_{\T}(\rendfull{i}, \rendlow{j})$ (resp.\ $q' =
  \LCE_{\T}(\rendfull{i}, \rendhigh{j})$).  We consider three cases:
  \begin{enumerate}
  \item First, if $q' \geq \ell - (\rendlow{j} - j)$ (resp.\ $q' \geq
    2\ell - (\rendhigh{j} - j)$), then we obtain $\LCE_{\T}(i +
    \delta, j) = \rendlow{j} - j + q' \geq \ell$ (resp.\ $\LCE_{\T}(i +
    \delta, j) = \rendhigh{j} - j + q' \geq 2\ell$).
  \item Second, if $q' < \ell - (\rendlow{j} - j)$ (resp.\ $q' < 2\ell
    - (\rendhigh{j} - j)$) and $\max(\rendfull{i} + q', \rendlow{j} +
    q') = n + 1$ (resp.\ $\max(\rendfull{i} + q', \rendhigh{j} + q') =
    n + 1$), then $\T[n] = \texttt{\$}$ being unique in $\T$ gives
    $\rendfull{i} = \rendlow{j}$ (resp.\ $\rendfull{i} =
    \rendhigh{j}$), and hence $\T[i + \delta \dd n] = \T[j \dd n]$.
  \item Finally, if $q' < \ell - (\rendlow{j} - j)$ (resp.\ $q' <
    2\ell - (\rendhigh{j} - j)$) and $\max(\rendfull{i} + q',
    \rendlow{j} + q') \leq n$ (resp.\ $\max(\rendfull{i} + q',
    \rendhigh{j} + q') \leq n$), then by $\T^{\infty}[\rendfull{i} \dd
    \rendfull{i} + 7\tau) \succeq \T^{\infty}[\rendlow{j} \dd j +
    \ell)$ (resp.\ $\T^{\infty}[\rendfull{i} \dd \rendfull{i} + 7\tau)
    \succeq \T^{\infty}[\rendhigh{j} \dd j + 2\ell)$), it holds
    $\T[\rendfull{i} + q'] \succ \T[\rendlow{j} + q']$
    (resp.\ $\T[\rendfull{i} + q'] \succ \T[\rendhigh{j} + q']$) and
    hence $\T[i + \delta \dd n] \succ \T[j \dd n]$.\qedhere
  \end{enumerate}
\end{proof}

\begin{proposition}\label{pr:sa-delta-low-high}
  Under \cref{as:periodic}, given any position $j \in \R^{-}$, we can
  in $\bigO(t)$ time compute $\deltalow(j)$ and $\deltahigh(j)$.
\end{proposition}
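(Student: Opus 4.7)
The plan is to mirror the structure of the proof of Proposition \ref{pr:sa-occ-eq}, using Lemma \ref{lm:delta-low-high} as the key combinatorial reduction instead of Lemma \ref{lm:occ-eq}. The key observation is that $\R^{-}_H$ decomposes into maximal blocks of the form $[i \dd \rend{i}-3\tau+2)$ with $i \in \R'^{-}_H$ (by Lemma \ref{lm:R-block} and the definition of $\R'$), and Lemma \ref{lm:delta-low-high} tells us that each such block contributes at most one element to $\Poslow(j)$ (respectively $\Poshigh(j)$), precisely when two conditions hold: an inequality on $\rendfull{i}-i$, and a lexicographic inequality on $\T^\infty[\rendfull{i} \dd \rendfull{i}+7\tau)$. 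This matches exactly the shape of int-string range queries on $\Points_{7\tau}(\T, E^{-}_H)$.

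First I would retrieve, via Assumption \ref{as:periodic}(1), the quantities $s = \Lhead(j)$, $p = |\Lroot(j)|$, and $\rend{j}$ in $\bigO(t)$ time. From these I derive $k = \Lexp(j) = \lfloor(\rend{j}-j-s)/p\rfloor$, and then the cut-off exponents $k_1 = \min(k, \lfloor(\ell-s)/p\rfloor)$ and $k_2 = \min(k, \lfloor(2\ell-s)/p\rfloor)$, yielding $\rendlow{j} = j+s+k_1 p$ and $\rendhigh{j} = j+s+k_2 p$. The root $H := \Lroot(j)$ is specified by the pair $(j+s, p)$, i.e.\ $H = \T[j+s \dd j+s+p)$, which satisfies the way $H$ is to be supplied to the queries of Assumption \ref{as:periodic}.

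Next I summarize the two computations by setting $\Pts_H := \Points_{7\tau}(\T, E^{-}_H)$, and letting $Y_{\rm low} := \T^\infty[\rendlow{j} \dd j+\ell)$ and $Y_{\rm high} := \T^\infty[\rendhigh{j} \dd j+2\ell)$. Summing the contributions from Lemma \ref{lm:delta-low-high} over all $i \in \R'^{-}_H$ gives
\[
  \deltalow(j) = \rcountb{\Pts_H}{\rendlow{j}-j}{n} - \rcount{\Pts_H}{\rendlow{j}-j}{n}{Y_{\rm low}}
\]
and analogously $\deltahigh(j) = \rcountb{\Pts_H}{\rendhigh{j}-j}{n} - \rcount{\Pts_H}{\rendhigh{j}-j}{n}{Y_{\rm high}}$. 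Both identities are evaluated by a single invocation of Problem \ref{prob:int-str}(1) each: for $\deltalow(j)$ I pass arguments $(i, x, q_r) = (\rendlow{j}, \rendlow{j}-j, j+\ell-\rendlow{j})$; for $\deltahigh(j)$, $(i, x, q_r) = (\rendhigh{j}, \rendhigh{j}-j, j+2\ell-\rendhigh{j})$. By Assumption \ref{as:periodic}(2) each query runs in $\bigO(t)$ time.

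Finally I verify that the query arguments fit Problem \ref{prob:int-str}: the uniqueness of $\T[n]=\texttt{\$}$ gives $\rend{j}-j < n$, hence $x \le \rend{j}-j < n$ in both cases; also $q_r \le \ell < n$ in the low case and $q_r \le 2\ell < 2n$ in the high case. The total running time is $\bigO(t)$. I do not expect any serious obstacle: all the combinatorial content is already encoded in Lemma \ref{lm:delta-low-high}, and the only bookkeeping is to pick the four right query parameters and convert the ``$\succeq$'' comparison into the ``total minus $\prec$'' form supported by the int-string range interface.
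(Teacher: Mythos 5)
Your proposal is correct and follows essentially the same route as the paper's proof: the same reduction via \cref{lm:delta-low-high} to counting positions $i\in\R'^{-}_H$ satisfying the offset inequality and the lexicographic condition, the same parameter derivation from \cref{as:periodic}, and the same int-string range queries with identical arguments. The only cosmetic difference is that you express the total count as $\rcountb{\Pts_H}{x}{n}$ where the paper writes $\rcount{\Pts_H}{x}{n}{c^{\infty}}$; these coincide and are both returned by the same invocation of \cref{prob:int-str}.
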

\begin{proof}

  By \cref{lm:delta-low-high}, in order to compute $\deltalow(j)$
  (resp.\ $\deltahigh(j)$) given $j \in \R^{-}_{H}$, it suffices to
  count the all positions $i \in \R'^{-}_{H}$ that satisfy
  $\rendfull{i} - i \geq \rendlow{j} - j$ (resp.\ $\rendfull{i} - i
  \geq \rendhigh{j} - j$) and $\T^{\infty}[\rendfull{i} \dd
  \rendfull{i} + 7\tau) \succeq \T^{\infty}[\rendlow{j} \dd j + \ell)$
  (resp.\ $\T^{\infty}[\rendfull{i} \dd \rendfull{i} + 7\tau) \succeq
  \T^{\infty}[\rendhigh{j} \dd j + 2\ell)$). The latter task is
  equivalent to the generalized orthogonal range counting queries
  (\cref{sec:range-queries}) on a set of points $\Pts_H \sub
  \mathcal{X} \times \mathcal{Y}$ (with $\mathcal{X} = \Zz$ and
  $\mathcal{Y} = \Sigma^{*}$) containing the value $\rendfull{i} - i$
  on the $\mathcal{X}$-coordinate and the string
  $\T^{\infty}[\rendfull{i} \dd \rendfull{i} + 7\tau)$ on the
  $\mathcal{Y}$-coordinate for every $i \in \R'^{-}_{H}$.

  Given any position $j \in \R^{-}$, we compute $\deltalow(j)$ and
  $\deltahigh(j)$ as follows.  First, using \cref{as:periodic} we
  compute values $s = \Lhead(j)$, $p = |\Lroot(j)|$, and $\rend{j}$ in
  $\bigO(t)$ time. Note, that then $\Lroot(j) = \T[j + s \dd j + s +
  p)$, i.e., we have a starting position of an occurrence of $H :=
  \Lroot(j)$ in $\T$.  We then calculate $k = \Lexp(j) = \lfloor
  \tfrac{\rend{j} - s}{p} \rfloor$ in $\bigO(1)$ time. Using those
  values, we further calculate $k_1 = \Lexpcut{j}{\ell} = \min(k,
  \lfloor \frac{\ell - s}{p} \rfloor)$, $k_2 = \Lexpcut{j}{2\ell} =
  \min(k, \lfloor \frac{2\ell - s}{p} \rfloor)$, $\rendlow{j} = s +
  k_1 p$, and $\rendhigh{j} = s + k_2 p$. By \cref{lm:delta-low-high}
  and the definition of $\Pts_H$, we now have
  \begin{align*}
    \deltalow(j) &=
      \rcount{\Pts_H}{\rendlow{j} - j}{n}{c^{\infty}} -
      \rcount{\Pts_H}{\rendlow{j} - j}{n}{\T^{\infty}[\rendlow{j}
        \dd j + \ell)}\text{ and }\\
    \deltahigh(j) &=
      \rcount{\Pts_H}{\rendhigh{j} - j}{n}{c^{\infty}} -
      \rcount{\Pts_H}{\rendhigh{j} - j}{n}{\T^{\infty}[\rendhigh{j}
        \dd j + 2\ell)},
  \end{align*}
  which by \cref{as:periodic} we can compute in $\bigO(t)$ time using
  the query defined by \cref{prob:int-str-it-1} of
  \cref{prob:int-str}, first with the query arguments $(i, x, q_r) =
  (\rendlow{j}, \rendlow{j} - j, j + \ell - \rendlow{j})$ and then
  with arguments $(i, x, q_r) = (\rendhigh{j}, \rendhigh{j} - j, j +
  2\ell - \rendhigh{j})$. To check that all arguments satisfy the
  requirements of \cref{prob:int-str}, recall that $\ell < n$. Thus,
  $q_r \leq 2\ell < 2n$. On the other hand, $\rendlow{j} - j \leq
  \rend{j} - j < n$ and $\rendhigh{j} - j \leq \rend{j} - j < n$ hold
  by $\T[n] = \texttt{\$}$.
\end{proof}

\begin{proposition}\label{pr:sa-delta-low-high-2}
  Let $j \in \R^{-}$. Under \cref{as:periodic}, given any position $j'
  \in \Occ_{\ell}(j)$ $($resp.\ $j' \in \Occ_{2\ell}(j))$, we can in
  $\bigO(t)$ time compute $\deltalow(j)$ $($resp.\ $\deltahigh(j))$.
\end{proposition}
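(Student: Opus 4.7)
The plan is to run the algorithm of \cref{pr:sa-delta-low-high} with $j'$ playing the role of $j$ throughout the assembly of the query arguments. The main observation is that the quantities passed to the data structure of \cref{as:periodic} -- namely the integer threshold $\rendlow{j}-j = s+k_1|H|$, the starting position from which the length-$q_r$ right context is read (namely $\rendlow{j}$, with $q_r = j+\ell-\rendlow{j}$), and the identification of $H=\Lroot(j)$ via its length and an occurrence in $\T$ -- depend on $j$ only through the length-$\ell$ window $\T^{\infty}[j \dd j+\ell)$. The analogous statement for $\deltahigh(j)$ uses the length-$2\ell$ window, which is exactly why the reduction for $\deltahigh$ requires $j'\in\Occ_{2\ell}(j)$ rather than merely $j'\in\Occ_{\ell}(j)$.

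Concretely, starting from $j'$, I invoke \cref{as:periodic}(\ref{as:periodic-it-1}) to retrieve $\Lhead(j')$, $|\Lroot(j')|$, and $\rend{j'}$ in $\Oh(t)$ time; by $3\tau-1\le\ell$ and $\T[j' \dd j'+3\tau-1) = \T[j \dd j+3\tau-1)$, we have $j'\in\R$ with $\Lhead(j')=s$ and $\Lroot(j')=H$, and $\T[j'+s \dd j'+s+|H|)$ is the occurrence of $H$ used to specify it to the data structure. Since $\T[j'' \dd \rend{j''})$ is the longest prefix of $\T[j'' \dd n]$ having period $|H|$ for $j''\in\{j,j'\}$ (by~\cite[Fact~3.2]{sss}) and $\T^{\infty}[j' \dd j'+\ell) = \T^{\infty}[j \dd j+\ell)$, the identity $\min(\rend{j'}-j',\ell) = \min(\rend{j}-j,\ell)$ holds, exactly as in the proof of \cref{lm:occ-exp-item-2} of \cref{lm:occ-exp}; consequently $k_1 = \min(\Lexp(j'),\lfloor(\ell-s)/|H|\rfloor)$ coincides with the $k_1$ computed from $j$ in \cref{pr:sa-delta-low-high}. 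Setting $\rendlow{j'}:=j'+s+k_1|H|$ yields $\rendlow{j'}-j' = \rendlow{j}-j$ and, since the entire substring lies inside the matching length-$\ell$ window, $\T^{\infty}[\rendlow{j'} \dd j'+\ell) = \T^{\infty}[\rendlow{j} \dd j+\ell)$.

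With these arguments prepared, I issue the same two range counting queries on $\Pts_H$ (using \cref{as:periodic}(\ref{as:periodic-it-2})) as in \cref{pr:sa-delta-low-high} and output their difference; by the preceding equalities the result is exactly $\deltalow(j)$, as guaranteed by that proposition applied to $j\in\R^{-}$. The computation of $\deltahigh(j)$ from $j'\in\Occ_{2\ell}(j)$ is symmetric, with $\ell$ replaced by $2\ell$, $k_1$ by $k_2$, $\rendlow$ by $\rendhigh$, and the key identity becoming $\min(\rend{j'}-j',2\ell)=\min(\rend{j}-j,2\ell)$. The total running time is $\Oh(t)$. The only subtlety worth flagging is that \cref{pr:sa-delta-low-high} is stated under the hypothesis $j\in\R^{-}$: this hypothesis is inherited from the current proposition and used only to ensure that $\deltalow(j)$ and $\deltahigh(j)$ are well-defined; our algorithm itself never needs $j'\in\R^{-}$, merely $j'\in\R$, which is automatic from $j'\in\Occ_{\ell}(j)$ and $j\in\R$.
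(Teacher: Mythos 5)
Your proposal is correct and follows essentially the same route as the paper's own proof: both reduce to the algorithm of \cref{pr:sa-delta-low-high} by showing, via the argument of \cref{lm:occ-exp}, that all query arguments ($s$, $|H|$, an occurrence of $H$, the threshold $\rendlow{j}-j$ resp.\ $\rendhigh{j}-j$, and the right-context fragment) can be read off from $j'$ because they depend only on the matching length-$\ell$ (resp.\ length-$2\ell$) window. Your closing remark that only $j'\in\R$ (not $\type(j')=-1$) is needed also matches the paper's own remark following the proposition.
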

\begin{proof}

  The main idea of the query is as follows.  The algorithm in the
  proof of \cref{pr:sa-delta-low-high} needs $s = \Lhead(j)$, $p =
  |\Lroot(j)|$, the value $\rendlow{j} - j$ (resp.\ $\rendhigh{j} -
  j$), some position $x \in [1 \dd n]$ such that
  $\T^{\infty}[\rendlow{j} \dd j + \ell) = \T^{\infty}[x \dd x + j +
  \ell - \rendlow{j})$ (resp.\ $\T^{\infty}[\rendhigh{j} \dd j +
  2\ell) = \T^{\infty}[x \dd x + j + 2\ell - \rendhigh{j})$), and some
  position $r \in [1 \dd n]$ such that $\Lroot(j) = \T[r \dd r +
  p)$. By \cref{lm:occ-exp-item-1} of \cref{lm:occ-exp}, we have
  $\Lhead(j') = \Lhead(j)$ and $\Lroot(j') = \Lroot(j)$.  This implies
  that we can determine $s$, $p$, and position $r$ using $j'$.  On the
  other hand, to determine $\rendlow{j} - j$ (resp.\ $\rendhigh{j} -
  j$) and the position $x$ we utilize that by
  \cref{lm:occ-exp-item-2,lm:occ-exp-item-3} of \cref{lm:occ-exp} it
  holds $\rendlow{j} - j = \rendlow{j'} - j'$ (resp.\ $\rendhigh{j} -
  j = \rendhigh{j'} - j'$) and position $x = \rendlow{j'}$ (resp.\ $x
  = \rendhigh{j'}$) satisfies $\T^{\infty}[x \dd x + j + \ell -
  \rendlow{j}) = \T^{\infty}[\rendlow{j} \dd j + \ell)$
  (resp.\ $\T^{\infty}[x \dd x + j + 2\ell - \rendhigh{j}) =
  \T^{\infty}[\rendhigh{j} \dd j + 2\ell)$).

  Given any $j' \in \Occ_{\ell}(j)$ (resp.\ $j' \in \Occ_{2\ell}(j)$),
  we compute $\deltalow(j)$ (resp.\ $\deltahigh(j)$) as follows.
  First, using \cref{as:periodic} we compute values $s = \Lhead(j') =
  \Lhead(j)$, $p = |\Lroot(j')| = |\Lroot(j)|$, and $\rend{j'}$ in
  $\bigO(t)$ time. Note, that then the position $r = j' + s$ satisfies
  $\Lroot(j) = \T[r \dd r + p)$, i.e., we have a starting position of
  an occurrence of $H := \Lroot(j') = \Lroot(j)$ in $\T$.  We then
  calculate $k = \Lexp(j') = \lfloor \tfrac{\rend{j'} - s}{p} \rfloor$
  in $\bigO(1)$ time. Using those values, we further calculate $k_1 :=
  \Lexpcut{j'}{\ell} = \min(k, \lfloor \tfrac{\ell - s}{p} \rfloor)$
  (resp.\ $k_2 := \Lexpcut{j'}{2\ell} = \min(k, \lfloor \tfrac{2\ell -
  s}{p} \rfloor)$) and $\rendlow{j'} - j' = s + k_1p$
  (resp.\ $\rendhigh{j'} - j' = s + k_2p$).  Finally, as in the proof
  of \cref{pr:sa-delta-low-high}, we obtain
  \begin{align*}
    \deltalow(j)
      &= \rcount{\Pts_H}{\rendlow{j} - j}{n}{c^{\infty}} -
         \rcount{\Pts_H}{\rendlow{j} - j}{n}{\T^{\infty}[\rendlow{j}
                 \dd j + \ell)}\\
      &= \rcount{\Pts_H}{\rendlow{j'} - j'}{n}{c^{\infty}} -
         \rcount{\Pts_H}{\rendlow{j'} - j'}{n}{\T^{\infty}[\rendlow{j'}
                 \dd j' + \ell)}
  \end{align*}
  (resp.\ $\deltahigh(j) = \rcount{\Pts_H}{\rendhigh{j} -
  j}{n}{c^{\infty}} - \rcount{\Pts_H}{\rendhigh{j} -
  j}{n}{\T^{\infty}[\rendhigh{j} \dd j + 2\ell)} =
  \rcount{\Pts_H}{\rendhigh{j'} - j'}{n}{c^{\infty}} -
  \rcount{\Pts_H}{\rendhigh{j'} - j'}{n}{\T^{\infty}[\rendhigh{j'} \dd
  j' + 2\ell)}$) in $\bigO(t)$ time, by using the query defined by
  \cref{prob:int-str-it-1} of \cref{prob:int-str} with the query
  arguments $(i, x, q_r) = (\rendlow{j'}, \rendlow{j'} - j', j' + \ell
  - \rendlow{j'})$ (resp.\ $(i, x, q_r) = (\rendhigh{j'},
  \rendhigh{j'} - j', j' + 2\ell - \rendhigh{j'})$).  The arguments
  satisfy the requirements of \cref{prob:int-str} by the same logic as
  in the proof of \cref{pr:sa-delta-low-high}.  In total, the query
  takes $\bigO(t)$ time.
\end{proof}

\begin{remark}
  Although it may seem that \cref{pr:sa-delta-low-high-2} can be
  simplified by proving that for any $j' \in \Occ_{\ell}(j)$
  (resp.\ $j' \in \Occ_{2\ell}(j)$) it holds $\Poslow(j') =
  \Poslow(j)$ (resp.\ $\Poshigh(j') = \Poshigh(j)$) and it suffices to
  use \cref{pr:sa-delta-low-high} on position $j'$, this does not
  hold. The problem occurs when $\rend{j} - j \geq \ell$
  (resp.\ $\rend{j} - j \geq 2\ell$).  Then, it may hold $\type(j') =
  +1$ and $\Poslow(j) \neq \Poslow(j')$ ($\Poshigh(j) \neq
  \Poshigh(j')$), where $\Poslow(j)$ (resp.\ $\Poshigh(j)$) is
  generalized to $j$ satisfying $\type(j) = +1$ as shown in the proof
  of \cref{pr:sa-periodic}. We shall later see that in our application
  of \cref{pr:sa-delta-low-high-2} it is not possible to guarantee
  $\type(j') = -1$ when $\type(j) = -1$, and the current form of
  \cref{pr:sa-delta-low-high-2}, in which we do not assume
  $\type(j')$, is in fact necessary.
\end{remark}

\subsubsection{Computing the Exponent}\label{sec:sa-periodic-exp}

Assume that $i \in [1 \dd n]$ satisfies $\SA[i] \in \R^{-}$ (positions
$i \in [1 \dd n]$ satisfying $\SA[i] \in \R^{+}$ are processed
symmetrically; see the proof of \cref{pr:sa-periodic}). In this
section, we show that under \cref{as:periodic}, given the index $i$
along with values $\LB_{\ell}(\SA[i])$, $\UB_{\ell}(\SA[i])$,
$\deltalow(\SA[i])$, and some position $j \in \Occ_{\ell}(\SA[i])$, we
can efficiently compute $\Lexp(\SA[i])$.

The section is organized as follows.  Given the input parameters as
described above, our query algorithm first checks if it holds $\SA[i]
\in \Poslow(\SA[i])$. To implement such check, we first present a
combinatorial result proving that $\Poslow(\SA[i])$ occupies a
contiguous block of positions in $\SA$ and showing what are the
endpoints of this block (\cref{lm:poslow}). We then use this
characterization to develop an efficient method of checking if $\SA[i]
\in \Poslow(\SA[i])$ holds (\cref{cor:poslow}). By definition of
$\Poslow(\SA[i])$, if $\SA[i] \in \Poslow(\SA[i])$, then
$\Lexp(\SA[i]) = \Lexpcut{\SA[i]}{\ell}$, which by \cref{lm:occ-exp}
is equal to $\Lexpcut{j}{\ell}$. Thus, the main difficulty is to
compute $\Lexp(\SA[i])$ when $\SA[i] \not\in \Poslow(\SA[i])$. In
\cref{lm:select-exp} we show that in such case the computation of
$\Lexp(\SA[i])$ can be reduced to modular constraint queries (see
\cref{sec:mod-queries}). We finally put everything together in
\cref{pr:sa-exp} to obtain a general query algorithm for computing
$\Lexp(\SA[i])$.

\begin{lemma}\label{lm:poslow}
  Let $i \in [1 \dd n]$ be such that $\SA[i] \in \R^{-}$.  Denote $b =
  \LB_{\ell}(\SA[i])$ and $e = b + \deltalow(\SA[i])$.  Then, it holds
  $\Poslow(\SA[i]) = \{\SA[i] : i \in (b \dd e]\}$.
\end{lemma}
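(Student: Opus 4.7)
The plan is to rewrite the defining predicate of $\Poslow(\SA[i])$ in terms of an SA-index condition, and then exploit the structure of $\R^{-}_{s,H}$ in SA to identify $\Poslow(\SA[i])$ as a specific contiguous block. Let $s=\Lhead(\SA[i])$, $H=\Lroot(\SA[i])$, and $k_1=\Lexpcut{\SA[i]}{\ell}$. Applying \cref{lm:utils-2} of \cref{lm:utils} with $\ell_1=\ell_2=\ell$, the clause ``$\T[j'\dd n]\succeq \T[\SA[i]\dd n]$ or $\LCE_{\T}(\SA[i],j')\ge \ell$'' in the definition of $\Poslow(\SA[i])$ is equivalent to $\T^{\infty}[j'\dd j'+\ell)\succeq \T^{\infty}[\SA[i]\dd \SA[i]+\ell)$, and by \cref{cor:lbub} the latter holds precisely when $\ISA[j']>b$. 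Thus
\[
  \Poslow(\SA[i]) = \{j'\in \R^{-}_{s,H} : \Lexp(j')=k_1 \text{ and } \ISA[j']>b\}.
\]

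By \cref{lm:exp} (in particular \cref{lm:SA-block-2}), the positions of $\R^{-}_{s,H}$ occupy a contiguous SA block sorted by increasing $\rend(j')-j'$. Since $\Lexp(j')=k_1$ is equivalent to $\rend(j')-j'\in [s+k_1|H|,\,s+(k_1+1)|H|)$, the set $B:=\{j'\in \R^{-}_{s,H}:\Lexp(j')=k_1\}$ occupies a contiguous range $[\alpha_1,\alpha_2]$ of SA indices (possibly empty). Intersecting with $\{j':\ISA[j']>b\}$ gives $\Poslow(\SA[i]) = \{\SA[i']: i'\in (\max(b,\alpha_1-1)\dd \alpha_2]\}$, and since $|\Poslow(\SA[i])|=\deltalow(\SA[i])=e-b$, it suffices to show that either both sides of the claim are empty or $\alpha_1\le b+1$ (which forces $\max(b,\alpha_1-1)=b$ and $\alpha_2=e$).

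For the remaining step I would invoke \cref{lm:type} applied to $\SA[i]\in \R^{-}$: the block $\Occm_\ell(\SA[i])$ occupies SA indices $(b\dd b+|\Occm_\ell(\SA[i])|]$, and it is nonempty because $\SA[i]$ itself lies in it, so $\SA[b+1]\in \Occm_\ell(\SA[i])\subseteq \Occ_\ell(\SA[i])$. Then \cref{lm:occ-exp-item-1} gives $\SA[b+1]\in \R^{-}_{s,H}$, and \cref{lm:occ-exp-item-2} combined with the definitions of $\Lhead$, $\Lroot$, and $\Lexpcut{\cdot}{\ell}$ yields $\Lexpcut{\SA[b+1]}{\ell}=k_1$, hence $\Lexp(\SA[b+1])\ge k_1$. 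If $\Lexp(\SA[b+1])=k_1$, then $b+1\in [\alpha_1,\alpha_2]$, so $\alpha_1\le b+1$ and the claim follows. Otherwise $\Lexp(\SA[b+1])>k_1$; then $\rend(\SA[b+1])-\SA[b+1]\ge s+(k_1+1)|H|$ strictly exceeds $\rend(j')-j'$ for every $j'\in B$, so \cref{lm:SA-block-2} places every element of $B$ at SA index $<b+1$, making $B\cap\{j':\ISA[j']>b\}=\emptyset$ and forcing $\deltalow(\SA[i])=0=e-b$. The main delicate point is the second case, where the strict SA/$\rend$-compatibility inside $\R^{-}_{s,H}$ is what rules out any ``stray'' $\Lexp=k_1$ position arising at SA index $>b$.
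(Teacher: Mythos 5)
Your proof is correct and takes essentially the same route as the paper's: both arguments translate the ``$\succeq$ or $\LCE_{\T}\ge\ell$'' clause into the SA-index condition $\ISA[j']>b$ via \cref{lm:utils-2} of \cref{lm:utils} and \cref{cor:lbub}, use \cref{lm:exp} to see that $\R^{-}_{s,H}$ occupies a contiguous SA block ordered compatibly with $\rend{j'}-j'$, and anchor everything at $\SA[b+1]$ by using \cref{lm:occ-exp} to pin $\Lexpcut{\SA[b+1]}{\ell}=k_1$. The paper packages the conclusion as ``contained in $(b\dd n]$ and downward-closed toward $b+1$'' rather than your ``contiguous block $B$ intersected with $\{\ISA>b\}$'', but the key lemmas and the delicate point (that when $\Lexp(\SA[b+1])>k_1$ every $\Lexp=k_1$ position of $\R^{-}_{s,H}$ is forced to SA index at most $b$) coincide.
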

\begin{proof}

  The proof consists of two steps. First, we show that
  $\Poslow(\SA[i]) \sub \{\SA[i] : i \in (b \dd n]\}$. Then, we show
  that for $i' \,{\in}\, (b {+} 1 \dd n]$, $\SA[i'] \,{\in}\,
  \Poslow(\SA[i])$ implies $\SA[i' {-} 1] \,{\in}\,
  \Poslow(\SA[i])$. This proves that $\Poslow(\SA[i]) = \{\SA[i] : i
  \in (b \dd e]\}$, where $e = b + |\Poslow(\SA[i])| = b +
  \deltalow(\SA[i])$, i.e., the claim.

  By $\SA[i] \in \Occ_{\ell}(\SA[i])$, the range $(\LB_{\ell}(\SA[i])
  \dd \UB_{\ell}(\SA[i])]$ is nonempty. In particular, $\SA[b + 1] \in
  \Occ_{\ell}(\SA[i])$, i.e., $\T^{\infty}[\SA[b + 1] \dd \SA[b + 1] +
  \ell) = \T^{\infty}[\SA[i] \dd \SA[i] + \ell)$.  Let $i' \in [1 \dd
  b]$ and denote $j' = \SA[i']$. The condition $i' \not\in (b \dd
  \UB_{\ell}(\SA[i])]$ implies $\T^{\infty}[j' \dd j' + \ell) \neq
  \T^{\infty}[\SA[i] \dd \SA[i] + \ell)$.  On the other hand, by
  definition of lexicographical order, $i' < b + 1$ implies
  $\T^{\infty}[j' \dd j' + \ell) \preceq \T^{\infty}[\SA[b + 1] \dd
  \SA[b + 1] + \ell) = \T^{\infty}[\SA[i] \dd \SA[i] + \ell)$. Thus,
  we must have $\T^{\infty}[j' \dd j' + \ell) \prec \T^{\infty}[\SA[i]
  \dd \SA[i] + \ell)$.  Let now $i'' \in [1 \dd n]$ be such that for
  $j'' = \SA[i'']$ it holds $j'' \in \Poslow(\SA[i])$. By definition
  of $\Poslow(\SA[i])$, this implies $\T[j'' \dd n] \succeq \T[\SA[i]
  \dd n]$ or $\LCE_{\T}(\SA[i], j'') \geq \ell$.  By \cref{lm:utils-2}
  in \cref{lm:utils} this is equivalent to $\T^{\infty}[j'' \dd j'' +
  \ell'') \succeq \T^{\infty}[\SA[i] \dd \SA[i] + \ell)$ for any
  $\ell'' \geq \ell$.  In particular, $\T^{\infty}[j'' \dd j'' + \ell)
  \succeq \T^{\infty}[\SA[i] \dd \SA[i] + \ell)$.  We have thus proved
  that $\T^{\infty}[j' \dd j' + \ell) \prec \T^{\infty}[\SA[i] \dd
  \SA[i] + \ell) \preceq \T^{\infty}[j'' \dd j'' + \ell)$.  In
  particular, $\T[j' \dd j' + \ell) \prec \T[j'' \dd j'' + \ell)$, and
  hence $i' < i''$. Since $i'$ was an arbitrary element of $[1 \dd
  b]$, we thus obtain $\Poslow(\SA[i]) \sub \{\SA[i] : i \in (b \dd
  n]\}$.

  Assume now that for some $i' \in (b + 1 \dd n]$ it holds $\SA[i']
  \in \Poslow(\SA[i])$. We will show that this implies $\SA[i' - 1]
  \in \Poslow(\SA[i])$. Let $s = \Lhead(\SA[i])$ and $H =
  \Lroot(\SA[i])$.
  \begin{itemize}
  \item First, observe that by $b + 1 \leq i' - 1$, the fact that
    $\SA[b + 1] \in \Occ_{\ell}(\SA[i])$ (see above), and the
    definition of the lexicographical order, we obtain
    $\T^{\infty}[\SA[i] \dd \SA[i] + \ell) = \T^{\infty}[\SA[b + 1]
    \dd \SA[b + 1] + \ell) \preceq \T^{\infty}[\SA[i' - 1] \dd \SA[i'
    - 1] + \ell)$.  By \cref{lm:utils-2} of \cref{lm:utils}, this
    implies that $\T[\SA[i' - 1] \dd n] \succeq \T[\SA[i] \dd n]$ or
    $\LCE_{\T}(\SA[i' - 1], \SA[i]) \geq \ell$.
  \item Second, by $\SA[b + 1] \in \Occ_{\ell}(\SA[i])$ and
    \cref{lm:occ-exp-item-1} of \cref{lm:occ-exp},
    we have $\SA[b + 1] \in \R_{s,H}$.  Furthermore, by $b + 1 < i'$,
    $\type(\SA[i']) = -1$, and \cref{lm:exp}, it holds $\type(\SA[b +
    1]) = -1$. Thus, $\SA[b + 1] \in \R^{-}_{s,H}$.  On the other
    hand, by definition of $\Poslow(\SA[i])$, $\SA[i'] \in
    \Poslow(\SA[i])$ implies $\SA[i] \in \R^{-}_{s,H}$. Thus, since by
    \cref{lm:exp} the positions in $\R^{-}_{s,H}$ occupy a contiguous
    block in $\SA$ and $b + 1 \leq i' - 1 < i'$, it holds $\SA[i' - 1]
    \in \R^{-}_{s,H}$.
  \item Denote $k_1 = \Lexpcut{\SA[i]}{\ell}$. Applying \cref{lm:exp},
    we obtain from $i' - 1 < i'$ and $\SA[i'] \in \Poslow(\SA[i])$
    that $\Lexp(\SA[i' - 1]) \leq \Lexp(\SA[i']) = k_1$. On the other
    hand, by $\SA[b + 1] \in \Occ_{\ell}(\SA[i])$ and
    \cref{lm:occ-exp-item-1,lm:occ-exp-item-2} of
    \cref{lm:occ-exp}, it holds $\Lhead(\SA[b + 1]) = s$,
    $\Lroot(\SA[b + 1]) = H$, and $\rendlow{\SA[b + 1]} - \SA[b + 1] =
    \rendlow{\SA[i]} - \SA[i]$.  Consequently,
    \begin{align*}
      \Lexpcut{\SA[b + 1]}{\ell}
        &= \left\lfloor \frac{\rendlow{\SA[b + 1]} -
            \SA[b + 1] - s}{|\Lroot(\SA[b + 1])|} \right\rfloor\\
        &= \left\lfloor \frac{\rendlow{\SA[i]} - \SA[i] -
            s}{|\Lroot(\SA[i])|} \right\rfloor\\[1.5ex]
        &= \Lexpcut{\SA[i]}{\ell}.
    \end{align*}
    Therefore, utilizing one last time \cref{lm:exp} for $\SA[b + 1]$
    and $\SA[i' - 1]$ we obtain from $b + 1 \leq i' - 1$ that $k_1 =
    \Lexpcut{\SA[b + 1]}{\ell} \leq \Lexp(\SA[b + 1]) \leq
    \Lexp(\SA[i' - 1])$.  Combining with the earlier bound
    $\Lexp(\SA[i' - 1]) \leq k_1$, this implies $\Lexp(\SA[i' - 1]) =
    k_1$.
  \end{itemize}
  Combining the above three conditions yields (by definition) $\SA[i'
  - 1] \in \Poslow(\SA[i])$.
\end{proof}

\begin{remark}
  By the above characterization, the set $\Poslow(\SA[i])$ occurs as a
  contiguous block of position in $\SA$ starting at index
  $\LB_{\ell}(\SA[i])$. Note, that the set $\Occ_{\ell}(\SA[i])$ has
  the same property. These two sets should not be confused, however,
  and they are not equal, nor one is always a subset of the other.
\end{remark}

\begin{corollary}\label{cor:poslow}
  For any $i \in [1 \dd n]$ such that $\SA[i] \in \R^{-}$, $\SA[i] \in
  \Poslow(\SA[i])$ holds if and only if $i - \LB_{\ell}(\SA[i]) \leq
  \deltalow(\SA[i])$.
\end{corollary}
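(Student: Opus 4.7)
The plan is to derive this corollary directly from \cref{lm:poslow}, in complete analogy with the proof of \cref{cor:type}. Let $b = \LB_{\ell}(\SA[i])$ and $e = b + \deltalow(\SA[i])$. By \cref{lm:poslow} we already know the exact identification $\Poslow(\SA[i]) = \{\SA[i'] : i' \in (b \dd e]\}$, so the corollary is essentially a translation of membership in $\Poslow(\SA[i])$ into a simple inequality on the rank $i$.

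First, for the forward direction I would assume $\SA[i] \in \Poslow(\SA[i])$. Since $\SA$ is a permutation, there is a unique index $i^\star$ with $\SA[i^\star] = \SA[i]$, namely $i^\star = i$. By \cref{lm:poslow} this index lies in $(b \dd e]$, so $i \le e = b + \deltalow(\SA[i])$, which rearranges to $i - \LB_{\ell}(\SA[i]) \le \deltalow(\SA[i])$.

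Next, for the backward direction I would assume $i - \LB_{\ell}(\SA[i]) \le \deltalow(\SA[i])$, so $i \le e$. To apply \cref{lm:poslow} it remains to show $i > b$. This is immediate from the hypothesis $\SA[i] \in \R^{-} \subseteq \Occ_{\ell}(\SA[i])$, since by definition $\{\SA[i'] : i' \in (\LB_{\ell}(\SA[i]) \dd \UB_{\ell}(\SA[i])]\} \supseteq \Occ_{\ell}(\SA[i])$ forces $i \in (\LB_{\ell}(\SA[i]) \dd \UB_{\ell}(\SA[i])]$, hence $i > b$. Combining $i > b$ and $i \le e$ gives $i \in (b \dd e]$, and \cref{lm:poslow} yields $\SA[i] \in \Poslow(\SA[i])$.

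There is no real obstacle here beyond invoking the correct structural lemma: the entire content is packaged in \cref{lm:poslow}, and the corollary is a one-line unfolding of that set equality, plus the trivial observation that $i$ lies strictly above $b$ because $\SA[i]$ occurs in $\Occ_{\ell}(\SA[i])$. The argument is essentially identical in form to \cref{cor:type}, with $\Occm_{\ell}(\SA[i])$ replaced by $\Poslow(\SA[i])$ and $|\Occm_{\ell}(\SA[i])|$ replaced by $\deltalow(\SA[i])$.
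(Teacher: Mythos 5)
Your proposal is correct and follows essentially the same route as the paper: both directions are read off from the set identity $\Poslow(\SA[i]) = \{\SA[i'] : i' \in (b \dd e]\}$ of \cref{lm:poslow}, with the lower bound $i > b$ in the backward direction coming from the fact that $i$ lies in the range $(\LB_{\ell}(\SA[i]) \dd \UB_{\ell}(\SA[i])]$. One small notational slip: the inclusion ``$\R^{-} \subseteq \Occ_{\ell}(\SA[i])$'' is not literally true (and the hypothesis $\SA[i]\in\R^{-}$ is not what is needed here); the fact you actually use is the trivial membership $\SA[i] \in \Occ_{\ell}(\SA[i])$, which is exactly what the paper invokes.
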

\begin{proof}

  Assume $\SA[i] \in \Poslow(\SA[i])$. By \cref{lm:poslow}, we then
  must have $i \in (b \dd e]$, where $b = \LB_{\ell}(\SA[i])$ and $e =
  b + \deltalow(\SA[i])$. In particular, $i \leq e =
  \LB_{\ell}(\SA[i]) + \deltalow(\SA[i])$.

  Assume now $i - \LB_{\ell}(\SA[i]) \leq \deltalow(\SA[i])$.  Let $b
  = \LB_{\ell}(\SA[i])$ and $e = b + \deltalow(\SA[i])$.  Then, $i
  \leq e$. On the other hand, by definition of the set
  $\Occ_{\ell}(\SA[i])$, we have $i \in (\LB_{\ell}(\SA[i]) \dd
  \UB_{\ell}(\SA[i])]$. In particular, $i > \LB_{\ell}(\SA[i]) =
  b$. Therefore, we obtain $i \in (b \dd e]$. By \cref{lm:poslow},
  this implies $\SA[i] \in \Poslow(\SA[i])$.
\end{proof}

\begin{lemma}\label{lm:select-exp}
  Assume that $i \in [1 \dd n]$ is such that $\SA[i] \in \R^{-}_{H}$
  $($where $H \in \Lroots)$ and $\SA[i] \not\in
  \Poslow(\SA[i])$. Denote $\mathcal{I} \,{=}\, \mathcal{I}^{-}_H$
  (\cref{def:intervals}), $s \,{=}\, \Lhead(\SA[i])$, $p = |H|$, $k_1
  = \Lexpcut{\SA[i]}{\ell}$, $c = \mcount{\mathcal{I}}{p}{s}{k_1}$,
  and $i' = \LB_{\ell}(\SA[i]) + \deltalow(\SA[i])$.  Then, it holds
  $\Lexp(\SA[i]) = \mselect{\mathcal{I}}{p}{s}{c + (i - i')}$.
\end{lemma}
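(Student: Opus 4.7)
The plan is to view $\mselect{\mathcal{I}}{p}{s}{\cdot}$ as inverting the rank-to-exponent map on the SA-block occupied by $\R^{-}_{s,H}$, and then to verify that $c + (i - i')$ equals the rank of $\SA[i]$ inside that block.  By \cref{lm:exp}, $\R^{-}_{s,H}$ occupies a contiguous range $(a_0 \dd a_0 + |\R^{-}_{s,H}|]$ of SA indices, sorted in ascending order of $\rend{\cdot} - \cdot = s + \Lexp(\cdot)\,p + \Ltail(\cdot)$; hence for every $q \ge 0$, the prefix $\{j \in \R^{-}_{s,H} : \Lexp(j) \le q\}$ occupies SA indices $(a_0 \dd a_0 + N_q]$, where $N_q := \mcount{\mathcal{I}}{p}{s}{q}$ by \cref{lm:mod-count}\,(2).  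The definition of $\mselect$ therefore gives $\mselect{\mathcal{I}}{p}{s}{i - a_0} = \Lexp(\SA[i])$, reducing the claim to the identity $a_0 = i' - c$, equivalently $\LB_\ell(\SA[i]) = a_0 + N_{k_1} - \deltalow(\SA[i])$.

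The hypothesis $\SA[i] \notin \Poslow(\SA[i])$ together with the trivial $\T[\SA[i] \dd n] \succeq \T[\SA[i] \dd n]$ forces $\Lexp(\SA[i]) \neq k_1$; combined with $k_1 = \Lexpcut{\SA[i]}{\ell} \le \Lexp(\SA[i])$, this gives $\Lexp(\SA[i]) > k_1$ and hence $k_1 = \lfloor (\ell - s)/p \rfloor$.  I compare, for each $j \in \R^{-}_{s,H}$, the length-$\ell$ prefix of $\T^{\infty}[j \dd]$ against $\T^{\infty}[\SA[i] \dd \SA[i] + \ell)$.  When $\Lexp(j) > k_1$, we have $\rend{j} - j > \ell$, so both prefixes equal the length-$\ell$ prefix of $H' H^{\infty}$ (with $H'$ the length-$s$ suffix of $H$) and $j \in \Occ_\ell(\SA[i])$.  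For $j \in P := \{j \in \R^{-}_{s,H} : \Lexp(j) = k_1\}$, the first $s + k_1 p$ characters coincide, and the comparison is decided at character offset $s + k_1 p + \Ltail(j)$: if $\Ltail(j) \ge \ell - s - k_1 p$, both strings are the same prefix of $H' H^{\infty}$, so $j \in \Occ_\ell(\SA[i]) \cap P = \Poslow(\SA[i])$; otherwise $\type(j) = -1$ supplies $\T[\rend{j}] \prec \T[\rend{j} - p] = H[\Ltail(j)]$ at the decisive position, yielding $\T^{\infty}[j \dd j + \ell) \prec \T^{\infty}[\SA[i] \dd \SA[i] + \ell)$.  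The same one-character argument, now applied at offset $\rend{j} - j < \ell$, also shows that every $j \in \R^{-}_{s,H}$ with $\Lexp(j) < k_1$ satisfies the strict inequality.

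Since $\R^{-}_{s,H}$ is sorted in SA by $\rend{\cdot} - \cdot$, and this quantity inside $P$ equals $s + k_1 p + \Ltail(\cdot)$, the small-$\Ltail$ subfamily from the previous paragraph forms a prefix of $P$ in SA, while the large-$\Ltail$ subfamily — precisely $\Poslow(\SA[i])$, of size $\deltalow(\SA[i])$ by \cref{lm:poslow} — forms the complementary suffix of $P$.  Combining this with the positions of $\R^{-}_{s,H}$ with exponent strictly less than $k_1$ (which precede $P$ in SA), the set $\Occ_\ell(\SA[i]) \cap \R^{-}_{s,H}$ occupies exactly the SA range $(a_0 + N_{k_1} - \deltalow(\SA[i]) \dd a_0 + |\R^{-}_{s,H}|]$.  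Because $\R^{-}_{s,H}$ precedes $\R^{+}_{s,H}$ in SA (\cref{lm:exp}) and $\Occ_\ell(\SA[i]) \subseteq \R_{s,H}$ contains $\SA[i] \in \R^{-}_{s,H}$, the first element of $\Occ_\ell(\SA[i])$ in SA lies in $\R^{-}_{s,H}$; hence $\LB_\ell(\SA[i]) = a_0 + N_{k_1} - \deltalow(\SA[i])$, which rearranges to $a_0 = i' - c$ as required.  The main obstacle is the one-character comparison inside $P$, which relies entirely on $\type(j) = -1$ to turn what would otherwise be a tie into a strict lexicographical inequality; everything else is bookkeeping with the orderings supplied by \cref{lm:exp}, \cref{lm:mod-count}, and \cref{lm:poslow}.
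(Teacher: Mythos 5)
Your proof is correct, and it reaches the conclusion by a noticeably different route than the paper. The paper anchors the rank computation at $i' = \LB_{\ell}(\SA[i]) + \deltalow(\SA[i])$: it defines $P_k = \{j' \in \R^{-}_{s,H} : \Lexp(j') \in (k_1 \dd k]\}$ and proves directly that $P_k$ occupies the $\SA$-range $(i' \dd i' + m_k]$, using \cref{lm:poslow} as a black box to know that $\Poslow(\SA[i])$ fills $(\LB_{\ell}(\SA[i]) \dd i']$, plus a separate downward-closure argument for contiguity. You instead anchor at the left endpoint $a_0$ of the entire $\R^{-}_{s,H}$ block, use the $\rend{\cdot}-\cdot$ ordering from \cref{lm:exp} to get that exponent-prefixes of the block are $\SA$-prefixes with sizes $N_q = \mcount{\mathcal{I}}{p}{s}{q}$, and reduce everything to the single identity $a_0 = i' - c$; you then prove that identity by a direct lexicographic classification of $\R^{-}_{s,H}$ against $\T^{\infty}[\SA[i]\dd \SA[i]+\ell)$, which in effect re-derives the positional content of \cref{lm:poslow} (that $\Poslow(\SA[i])$ is exactly the large-$\Ltail$ suffix of the $\Lexp = k_1$ group, sitting immediately below the $\Lexp > k_1$ elements) rather than citing it. What your version buys is a cleaner global picture — one sorted block, one cumulative-count function, one offset identity — at the cost of redoing work the paper has already packaged into \cref{lm:poslow}; the paper's version is more modular but its contiguity argument for $P_k$ is fiddlier. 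One small point to make explicit: your one-character comparison for the small-$\Ltail$ elements of $P$ uses $\T[\rend{j}] \prec \T[\rend{j}-p]$, which requires $\rend{j} \le n$; this always holds here because $\rend{j} = n+1$ would force the sentinel $\texttt{\$}$ to repeat with period $p$ inside the run, but it deserves a sentence since $\type(j) = -1$ by itself also covers the boundary case.
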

\begin{proof}
  For any $k \geq k_1$, denote $P_k = \{j' \in \R^{-}_{s,H} :
  \Lexp(j') \in (k_1 \dd k]\}$ and let $m_k = |P_k|$.  Our proof
  consists of two parts.
  \begin{enumerate}
  \item\label{lm:select-exp-it-1} We start by showing that for any $k
    \geq k_1$, it holds $P_k = \{\SA[i''] : i'' \in (i' \dd i' +
    m_k]\}$.  We will show this in two substeps. First, we will prove
    that for any $i'' \in [1 \dd n]$ satisfying $\SA[i''] \in P_k$, it
    holds $i'' \in (i' \dd n]$. Second, we show that for any $i'' \in
    (i'+1 \dd n]$, $\SA[i''] \in P_k$ implies $\SA[i'' - 1] \in
    P_k$. These two facts immediately imply the claim.
    \begin{itemize}
    \item Let $i'' \in [1 \dd n]$ be such that $\SA[i''] \in P_k$.
      Since $P_{k_1} = \emptyset$, this implies $k > k_1$.  Note that
      $\SA[i]$ satisfies all the conditions in the definition of
      $\Poslow(\SA[i])$, except possibly $\Lexp(\SA[i]) = k_1$. Thus,
      $\SA[i] \not\in \Poslow(\SA[i])$ implies $\Lexp(\SA[i]) \neq
      \Lexpcut{\SA[i]}{\ell}$. By $\Lexpcut{\SA[i]}{\ell} =
      \min(\Lexp(\SA[i]), \lfloor \tfrac{\ell-s}{|H|} \rfloor) \leq
      \Lexp(\SA[i])$, we then must have $k_1 = \lfloor \tfrac{\ell-s}{|H|}
      \rfloor$ and $\Lexpcut{\SA[i]}{\ell} < \Lexp(\SA[i])$.  Then, the
      string $H'H^{k_1+1}$ (where $H'$ is a length-$s$ prefix of $H$)
      is a prefix of $\T[\SA[i] \dd n]$. But since $|H'H^{k_1+1}| \geq
      \ell$, we also obtain that $\T^{\infty}[\SA[i] \dd \SA[i] +
      \ell)$ is a prefix of $H'H^{k_1+1}$.  Similarly, $\SA[i''] \in
      \R^{-}_{s,H}$ and $\Lexp(\SA[i'']) > k_1$ imply that
      $\T^{\infty}[\SA[i''] \dd \SA[i''] + \ell)$ is a prefix of
      $H'H^{k_1+1}$. Therefore, $\SA[i''] \in \Occ_{\ell}(\SA[i])$
      and, consequently, $i'' > \LB_{\ell}(\SA[i])$. Moreover, since
      by \cref{lm:poslow}, $\Poslow(\SA[i]) = \{\SA[j] : j \,{\in}\,
      (\LB_{\ell}(\SA[i]) \dd \LB_{\ell}(\SA[i]) +
      \deltalow(\SA[i])]\}$ and $\SA[i''] \not\in \Poslow(\SA[i])$, we
      must have $i'' > \LB_{\ell}(\SA[i]) + \deltalow(\SA[i]) = i'$,
      or equivalently, $i'' \in (i' \dd n]$.
    \item Assume now that for some $i'' \in (i' + 1 \dd n]$ it holds
      $\SA[i''] \in P_{k}$. We will show that this implies $\SA[i''-1]
      \in P_{k}$. As observed above, if $\SA[i''] \in P_{k}$ for $k >
      k_1$, then $\SA[i''] \in \Occ_{\ell}(\SA[i])$.  Thus, $i'' - 1 <
      \UB_{\ell}(\SA[i])$. On the other hand, $\LB_{\ell}(\SA[i]) \leq
      i' < i'' - 1$. Consequently, $\LB_{\ell}(\SA[i]) < i'' - 1 \leq
      \UB_{\ell}(\SA[i])$, or equivalently, $\SA[i'' - 1] \in
      \Occ_{\ell}(\SA[i])$.  By
      \cref{lm:occ-exp-item-1,lm:occ-exp-item-2} of \cref{lm:occ-exp},
      this implies $\SA[i''-1] \in \R^{-}_{s,H}$ and $k_1 =
      \Lexpcut{\SA[i]}{\ell} = \Lexpcut{\SA[i''-1]}{\ell} \leq
      \Lexp(\SA[i'' - 1])$. To obtain $\SA[i''-1] \in P_k$ it thus
      remains to show $\Lexp(\SA[i''-1]) \neq k_1$. This follows by
      $\SA[i''-1] \not\in \Poslow(\SA[i])$ (which holds since $i' <
      i'' - 1$ and by \cref{lm:poslow} $\Poslow(\SA[i]) \sub \{\SA[j]
      : j \in [1 \dd i']\}$) because $\SA[i''-1]$ being in
      $\Occ_{\ell}(\SA[i])$ implies that it satisfies all other
      conditions in the definition of $\Poslow(\SA[i])$.
    \end{itemize}
  \item We can now show $\Lexp(\SA[i]) = \mselect{\mathcal{I}}{p}{s}{c
    + (i-i')}$. Denote $k = \Lexp(\SA[i])$. As noted above $\SA[i]
    \not\in \Poslow(\SA[i])$ implies $k_1 < k$. By definition of
    $P_{k-1}$ and $P_{k}$, we have $\SA[i] \in P_k \sm P_{k-1}$. From
    \cref{lm:select-exp-it-1}, we have $P_{k-1} = \{\SA[i''] : i'' \in
    (i' \dd i' + m_{k-1}]\}$ and $P_k = \{\SA[i''] : i'' \in (i' \dd
    i' + m_k]\}$. Therefore, $i \in (i' + m_{k-1} \dd i' + m_k]$. On
    the other hand, by \cref{lm:mod-count}, $m_{k-1} =
    \mcount{\mathcal{I}}{p}{s}{k-1} - c$ and $m_k =
    \mcount{\mathcal{I}}{p}{s}{k} - c$.  Therefore, $i \in (i' +
    \mcount{\mathcal{I}}{p}{s}{k-1} - c \dd i' +
    \mcount{\mathcal{I}}{p}{s}{k} - c]$, or equivalently, $c + (i -
    i') \in (\mcount{\mathcal{I}}{p}{s}{k-1} \dd
    \mcount{\mathcal{I}}{p}{s}{k}]$.  By definition of the select
    query, this implies $k = \mselect{\mathcal{I}}{p}{s}{c + (i -
    i')}$. Since we defined $k = \Lexp(\SA[i])$, we obtain the main
    claim. \qedhere
  \end{enumerate}
\end{proof}

\begin{proposition}\label{pr:sa-exp}
  Let $i \in [1 \dd n]$ be such that $\SA[i] \in \R^{-}$.  Under
  \cref{as:periodic}, given the values $i$, $\LB_{\ell}(\SA[i])$,
  $\UB_{\ell}(\SA[i])$, $\deltalow(\SA[i])$, and some $j \in
  \Occ_{\ell}(\SA[i])$, we can compute $\Lexp(\SA[i])$ in $\bigO(t)$
  time.
\end{proposition}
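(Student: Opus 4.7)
The plan is to combine \cref{cor:poslow} (to determine whether $\SA[i]\in \Poslow(\SA[i])$) with either a direct computation (when $\SA[i]\in \Poslow(\SA[i])$) or \cref{lm:select-exp} (when $\SA[i]\notin \Poslow(\SA[i])$). The main hurdle is that all relevant data about $\SA[i]$ must be extracted from the surrogate position $j\in \Occ_{\ell}(\SA[i])$, since $\SA[i]$ itself is unknown at query time.

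First I would invoke \cref{as:periodic-it-1} of \cref{as:periodic} on $j$ to obtain $s' := \Lhead(j)$, $p := |\Lroot(j)|$, and $\rend{j}$ in $\Oh(t)$ time. By \cref{lm:occ-exp-item-1,lm:occ-exp-item-2} of \cref{lm:occ-exp}, we have $s' = \Lhead(\SA[i])$, $p = |\Lroot(\SA[i])|$, and, denoting $H := \Lroot(\SA[i])$, also $H = \T[j+s' \dd j+s'+p)$ (so we have a concrete occurrence of $H$ in $\T$ for specifying subsequent queries as required by \cref{as:periodic}). I would then compute $k_1 := \Lexpcut{\SA[i]}{\ell} = \min(\lfloor \tfrac{\rend{j}-j-s'}{p}\rfloor, \lfloor \tfrac{\ell-s'}{p}\rfloor)$ in $\Oh(1)$ time; by \cref{lm:occ-exp-item-2} this equals $\Lexpcut{j}{\ell}$.

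Next, I would apply \cref{cor:poslow}: using the input values $i$, $\LB_{\ell}(\SA[i])$, and $\deltalow(\SA[i])$, we test in $\Oh(1)$ time whether $i - \LB_{\ell}(\SA[i]) \leq \deltalow(\SA[i])$. If so, then $\SA[i]\in \Poslow(\SA[i])$, and by the definition of $\Poslow$ we have $\Lexp(\SA[i]) = k_1$, so we return $k_1$. Otherwise $\SA[i]\notin \Poslow(\SA[i])$, and we invoke \cref{lm:select-exp}: letting $\mathcal{I} := \mathcal{I}^{-}_H$, we first compute $c := \mcount{\mathcal{I}}{p}{s'}{k_1}$ via the modular constraint counting query of \cref{as:periodic-it-3} (specifying $H$ by the occurrence $(j+s', p)$ found above), and then compute $i' := \LB_{\ell}(\SA[i]) + \deltalow(\SA[i])$, and finally return $\mselect{\mathcal{I}}{p}{s'}{c + (i - i')}$ using the modular constraint selection query of \cref{as:periodic-it-3}. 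By \cref{lm:select-exp}, this equals $\Lexp(\SA[i])$.

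Each of the constant number of steps takes $\Oh(t)$ time under \cref{as:periodic}, giving the claimed $\Oh(t)$ total. The only subtle point is the one already emphasized above: we never have $\SA[i]$ explicitly, so we must verify at every step that the quantities we are computing ($s'$, $p$, $H$, $k_1$, and the argument $c+(i-i')$ of the select query) either coincide with the corresponding quantities for $\SA[i]$ or are the correct parameters prescribed by \cref{lm:select-exp}. This is exactly what the consistency guaranteed by $j\in \Occ_{\ell}(\SA[i])$ via \cref{lm:occ-exp} affords us, and the argument of the selection query matches the formula in \cref{lm:select-exp} because that formula depends only on $\LB_{\ell}(\SA[i])$, $\deltalow(\SA[i])$, $i$, $s'$, $p$, and $k_1$, all of which are either given or computed from $j$.
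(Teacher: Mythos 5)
Your proposal is correct and follows essentially the same route as the paper's proof: compute $\Lhead$, $|\Lroot|$, and $\rend{\cdot}$ from the surrogate $j$ via \cref{as:periodic}, transfer them to $\SA[i]$ using \cref{lm:occ-exp}, test membership in $\Poslow(\SA[i])$ via \cref{cor:poslow}, and in the negative case fall back to the modular constraint count/select queries exactly as prescribed by \cref{lm:select-exp}. No gaps.
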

\begin{proof}

  The main idea of the algorithm is as follows. The query algorithm
  first tests if $\SA[i] \in \Poslow(\SA[i])$ holds. By
  \cref{cor:poslow} such test can be performed quickly given the
  values $i$, $\LB_{\ell}(\SA[i])$, and $\deltalow(\SA[i])$, which are
  all given as input. The rest of the query algorithm depends on this
  test. If $\SA[i] \in \Poslow(\SA[i])$, then by definition of
  $\Poslow(\SA[i])$ and \cref{lm:occ-exp} (see below for details) we
  immediately obtain $\Lexp(\SA[i]) = \Lexpcut{\SA[i]}{\ell} =
  \Lexpcut{j}{\ell}$. If $\SA[i] \not\in \Poslow(\SA[i])$, then
  $\Lexp(\SA[i])$ is determined according to \cref{lm:select-exp}, and
  it is computed using the modular constraint queries on the set of
  intervals $\mathcal{I} = \mathcal{I}^{-}_H$ (where $H =
  \Lroot(\SA[i])$), which are supported under \cref{as:periodic}.

  Given any index $i \in [1 \dd n]$ such that $\SA[i] \in \R^{-}$,
  along with values $\LB_{\ell}(\SA[i])$, $\deltalow(\SA[i])$, and
  some $j \in \Occ_{\ell}(\SA[i])$, we compute $\Lexp(\SA[i])$ as
  follows.  First, using \cref{as:periodic} we compute values $s =
  \Lhead(j) = \Lhead(\SA[i])$, $p = |\Lroot(j)| = |\Lroot(\SA[i])|$,
  and $\rend{j}$ in $\bigO(t)$ time. Note, that then the position $r =
  j + s$ satisfies $\Lroot(\SA[i]) = \T[r \dd r + p)$, i.e., we have a
  starting position of an occurrence of $H := \Lroot(j) =
  \Lroot(\SA[i])$ in $\T$.  Denote $\mathcal{I} := \mathcal{I}^{-}_H$
  (see \cref{def:intervals}).  We then calculate $k = \Lexp(j) =
  \lfloor \tfrac{\rend{j} - s}{p} \rfloor$ in $\bigO(1)$ time. Using
  those values, we further calculate $k_1 := \min(k, \lfloor
  \tfrac{\ell - s}{p} \rfloor) = \Lexpcut{j}{\ell} =
  \Lexpcut{\SA[i]}{\ell}$ (the last equality follows by $j \in
  \Occ_{\ell}(\SA[i])$ and \cref{lm:occ-exp-item-1,lm:occ-exp-item-2}
  of \cref{lm:occ-exp}).  Next, in $\bigO(1)$ time we determine if
  $\SA[i] \in \Poslow(\SA[i])$. By \cref{cor:poslow}, $\SA[i] \in
  \Poshigh(\SA[i])$ holds if and only if $i \leq \LB_{\ell}(\SA[i]) +
  \deltalow(\SA[i])$. Consider two cases:
  \begin{itemize}
  \item Assume $i \leq \LB_{\ell}(\SA[i]) + \deltalow(\SA[i])$ (i.e.,
    $\SA[i] \in \Poslow(\SA[i])$). Then, by definition of
    $\Poslow(\SA[i])$, we have $\Lexp(\SA[i]) = k_1$.
  \item Assume now $i > \LB_{\ell}(\SA[i]) + \deltalow(\SA[i])$ (i.e.,
    $\SA[i] \not\in \Poslow(\SA[i])$). First, using a modular
    constraint counting query (\cref{sec:mod-queries}) we compute the
    value $c = \mcount{\mathcal{I}}{p}{s}{k_1}$.  By
    \cref{as:periodic}, this takes $\bigO(t)$ time.  Next, we compute
    $i' = \LB_{\ell}(\SA[i]) + \deltalow(\SA[i])$ in $\bigO(1)$
    time. Finally, using a modular constraint selection query
    (\cref{sec:mod-queries}) we compute $r =
    \mselect{\mathcal{I}}{p}{s}{c+(i-i')}$.  By \cref{as:periodic},
    this takes $\bigO(t)$ time and by \cref{lm:select-exp}, we then
    have $\Lexp(\SA[i]) = r$. \qedhere
  \end{itemize}
\end{proof}

\subsubsection{Computing the Size of 
  \texorpdfstring{$\Posmid(j)$}{Posmid}}\label{sec:sa-periodic-posmid}

In this section, we show that under \cref{as:periodic}, we can
efficiently compute $\deltamid(j) = |\Posmid(j)|$ for any $j \in
\R^{-}$ ($j \in \R^{+}$ can be processed symmetrically; see
\cref{pr:sa-periodic}).

The section is organized as follows.  We first develop an algorithm
that takes the position $j$ as input and returns $\deltamid(j)$
(\cref{pr:sa-delta-mid}). We then prove (\cref{pr:sa-delta-mid-2})
that the computation of $\deltamid(j)$ does not actually need the
value of $j$, but it is sufficient to only know $\Lexp(j)$ and some
$j' \in \Occ_{\ell}(j)$.

\begin{proposition}\label{pr:sa-delta-mid}
  Under \cref{as:periodic}, given any position $j \in \R^{-}$, we can
  in $\bigO(t)$ time compute $\deltamid(j)$.
\end{proposition}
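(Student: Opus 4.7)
The plan is to reduce the computation of $\deltamid(j)$ to two modular constraint counting queries on $\mathcal{I}^{-}_H$, where $H = \Lroot(j)$. By definition,
\[
  \Posmid(j) = \{j' \in \R^{-}_{s,H} : \Lexp(j') \in (k_1 \dd k_2]\},
\]
where $s = \Lhead(j)$, $k_1 = \Lexpcut{j}{\ell}$, and $k_2 = \Lexpcut{j}{2\ell}$. Hence
\[
  \deltamid(j) = |\{j' \in \R^{-}_{s,H} : \Lexp(j') \leq k_2\}| - |\{j' \in \R^{-}_{s,H} : \Lexp(j') \leq k_1\}|,
\]
and by \cref{lm:mod-count-item-2} of \cref{lm:mod-count}, each of these cardinalities equals $\mcount{\mathcal{I}^{-}_H}{|H|}{s}{k}$ for the respective value of $k$.

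Concretely, I would first invoke \cref{as:periodic-it-1} on $j$ to obtain $s = \Lhead(j)$, $p = |\Lroot(j)|$, and $\rend{j}$ in $\Oh(t)$ time; the position pair $(j+s,\,j+s+p)$ then represents $H = \Lroot(j)$ as an occurrence in $\T$, which is exactly the form required to specify $H$ in \cref{as:periodic-it-3}. Next I would compute $k = \Lexp(j) = \lfloor \tfrac{\rend{j}-j-s}{p} \rfloor$ and derive $k_1 = \min(k, \lfloor \tfrac{\ell-s}{p} \rfloor)$ and $k_2 = \min(k, \lfloor \tfrac{2\ell-s}{p} \rfloor)$, all in $\Oh(1)$ time.

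Finally, using the modular constraint counting queries guaranteed by \cref{as:periodic-it-3} on $\mathcal{I}^{-}_{H}$, I would return
\[
  \deltamid(j) = \mcount{\mathcal{I}^{-}_H}{p}{s}{k_2} - \mcount{\mathcal{I}^{-}_H}{p}{s}{k_1}.
\]
Each of these two queries runs in $\Oh(t)$ time by \cref{as:periodic}, so the total time is $\Oh(t)$. There is no real obstacle here: all combinatorial work has already been packaged into the identity between $|\{j' \in \R^{-}_{s,H} : \Lexp(j') \leq k\}|$ and $\mcount{\mathcal{I}^{-}_H}{p}{s}{k}$ proved in \cref{lm:mod-count}, so this proposition is essentially an application of that lemma together with the query interface.
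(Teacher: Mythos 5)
Your proposal is correct and follows essentially the same route as the paper's own proof: both reduce $\deltamid(j)$ to the difference of two modular constraint counting queries $\mcount{\mathcal{I}^{-}_H}{p}{s}{k_2} - \mcount{\mathcal{I}^{-}_H}{p}{s}{k_1}$ via \cref{lm:mod-count}, after extracting $s$, $p$, and $\rend{j}$ from \cref{as:periodic} and computing $k_1, k_2$ in constant time. (Your formula $k = \lfloor \tfrac{\rend{j}-j-s}{p} \rfloor$ is in fact the cleaner statement of what the paper intends.)
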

\begin{proof}

  The main idea of the query is as follows. By \cref{lm:mod-count}, we
  can reduce the computation of $|\Posmid(j) \cap [i \dd i + t)\}|$,
  where $i \in \R^{-}$, $\Lroot(i) = \Lroot(j)$, and $t = \rend{i} - i
  - 3\tau + 2$, to two modular constraint counting queries. Observe
  that if additionally, $i \in \R'$, then this interval $[i \dd i +
  t)$ is a maximal interval of positions in $\R$, i.e., $i - 1, i + t
  \not \in \R$. Thus, letting $\mathcal{I}^{-}_H$ be the collection of
  weighted intervals (with weights corresponding to multiplicities)
  constructed as in \cref{def:intervals} for all $i \in \R'^{-}$
  satisfying $\Lroot(i) = \Lroot(j)$, we can compute $\deltamid(j)$
  using the general (weighted) modular constraint counting queries
  (\cref{sec:mod-queries}) on $\mathcal{I}^{-}_{H}$.

  Given any $j \in \R^{-}$, we compute $\deltamid(j)$ as follows.
  First, using \cref{as:periodic} we compute values $s = \Lhead(j)$,
  $p = |\Lroot(j)|$, and $\rend{j}$ in $\bigO(t)$ time. Note, that
  then the position $r = j + s$ satisfies $\Lroot(j) = \T[r \dd r +
  p)$, i.e., we have a starting position of an occurrence of $H :=
  \Lroot(j)$ in $\T$.  We then calculate $k = \Lexp(j) = \lfloor
  \tfrac{\rend{j} - s}{p} \rfloor$ in $\bigO(1)$ time. Using those
  values, we further calculate $k_1 := \Lexpcut{j}{\ell} = \min(k,
  \lfloor \frac{\ell - s}{p} \rfloor)$ and $k_2 := \Lexpcut{j}{2\ell}
  = \min(k, \lfloor \frac{2\ell - s}{p} \rfloor)$. By
  \cref{lm:mod-count} we then obtain
  \[
    \deltamid(j) = \mcount{\mathcal{I}^{-}_H}{p}{s}{k_2} -
                   \mcount{\mathcal{I}^{-}_H}{p}{s}{k_1}
  \]
  which by \cref{as:periodic} we can compute in $\bigO(t)$ time.  In
  total, the query takes $\bigO(t)$ time.
\end{proof}

\begin{proposition}\label{pr:sa-delta-mid-2}
  Let $j \in \R^{-}$. Under \cref{as:periodic}, given some position
  $j' \in \Occ_{\ell}(j)$, and the value $\Lexp(j)$, we can in
  $\bigO(t)$ time compute $\deltamid(j)$.
\end{proposition}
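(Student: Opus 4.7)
The plan is to mimic the proof of \cref{pr:sa-delta-mid} while observing that every quantity that procedure needs about $j$—namely $s=\Lhead(j)$, $p=|\Lroot(j)|$, an occurrence of $H=\Lroot(j)$ in $\T$, and $\Lexp(j)$—can be recovered from the inputs we now have. The value $\Lexp(j)$ is given directly, so the only task is to extract $s$, $p$, and a pointer to $H$ from $j'$ alone.

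To do so, I would first invoke \cref{as:periodic}\ref{as:periodic-it-1} on the position $j'$ to obtain $|\Lroot(j')|$, $\Lhead(j')$, and $\rend{j'}$ in $\bigO(t)$ time. By parts~\ref{lm:occ-exp-item-1} and~\ref{lm:occ-exp-item-2} of \cref{lm:occ-exp} (applied with $d=\ell$), $j'\in\Occ_{\ell}(j)$ combined with $j\in\R^{-}$ implies $j'\in\R^{-}_{s,H}$ for $s=\Lhead(j)$ and $H=\Lroot(j)$. Therefore $\Lhead(j')=s$ and $|\Lroot(j')|=|H|=p$; moreover the position $r:=j'+s$ satisfies $\T[r\dd r+p)=H$, so $H$ is properly specified in the form required by \cref{as:periodic}\ref{as:periodic-it-3}.

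Having $s$, $p$, a witness of $H$, and the input value $\Lexp(j)$, I compute $k_1:=\Lexpcut{j}{\ell}=\min(\Lexp(j),\lfloor(\ell-s)/p\rfloor)$ and $k_2:=\Lexpcut{j}{2\ell}=\min(\Lexp(j),\lfloor(2\ell-s)/p\rfloor)$ in $\bigO(1)$ time. By \cref{lm:mod-count}\ref{lm:mod-count-item-2} applied to $Q^{-}_{k_2}\sm Q^{-}_{k_1}$, exactly as in the proof of \cref{pr:sa-delta-mid}, we have
\[
  \deltamid(j)
   \;=\; |\{j''\in\R^{-}_{s,H} : \Lexp(j'')\in(k_1\dd k_2]\}|
   \;=\; \mcount{\mathcal{I}^{-}_H}{p}{s}{k_2}-\mcount{\mathcal{I}^{-}_H}{p}{s}{k_1},
\]
and the right-hand side is evaluated with two modular constraint counting queries on $\mathcal{I}^{-}_H$, which take $\bigO(t)$ time by \cref{as:periodic}\ref{as:periodic-it-3}.

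There is no real obstacle beyond this: the only point that might look suspect is the use of $\Lexp(j)$ rather than $\Lexp(j')$ to define $k_1$ and $k_2$ (the two may differ, so we genuinely need $\Lexp(j)$ as an input rather than recomputing it from $j'$), but $\Lexp(j)$ is supplied by hypothesis. The total running time is $\bigO(t)$.
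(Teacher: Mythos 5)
Your proposal is correct and follows essentially the same route as the paper's proof: recover $s=\Lhead(j)$, $p=|\Lroot(j)|$, and a witness occurrence of $H$ from $j'$ via \cref{lm:occ-exp}, use the supplied $\Lexp(j)$ to form $k_1$ and $k_2$, and evaluate $\deltamid(j)=\mcount{\mathcal{I}^{-}_H}{p}{s}{k_2}-\mcount{\mathcal{I}^{-}_H}{p}{s}{k_1}$ by two modular constraint counting queries. Your closing remark that $\Lexp(j)$ genuinely must be an input (since it can differ from $\Lexp(j')$) is exactly the point the paper also emphasizes.
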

\begin{proof}

  The main idea is as follows. The algorithm in the proof of
  \cref{pr:sa-delta-mid} needs $s = \Lhead(j)$, $p = |\Lroot(j)|$, $k
  = \Lexp(j)$, and some $r \in [1 \dd n]$ such that $\Lroot(j) = \T[r
  \dd r + p)$. By \cref{lm:occ-exp-item-1} of \cref{lm:occ-exp}, for
  $j' \in \Occ_{\ell}(j)$, we have $\Lhead(j') = \Lhead(j)$ and
  $\Lroot(j') = \Lroot(j)$. This implies that we can determine $s$,
  $p$, and $r$ from $j'$.  The remaining information needed during the
  query ($\Lexp(j)$) is given as input.

  Given some $j' \in \Occ_{\ell}(j)$ and the value $k = \Lexp(j)$ as
  input, we compute $\deltamid(j)$ as follows.  First, using
  \cref{as:periodic} we compute $s = \Lhead(j') = \Lhead(j)$ and $p =
  |\Lroot(j')| = |\Lroot(j)|$ in $\bigO(t)$ time. Note, that then the
  position $r = j' + s$ satisfies $\Lroot(j) = \T[r \dd r + p)$, i.e.,
  we have a starting position of an occurrence of $H := \Lroot(j) =
  \Lroot(j')$ in $\T$.  Using those values, we further calculate $k_1
  := \Lexpcut{j}{\ell} = \min(k, \lfloor \tfrac{\ell - s}{p} \rfloor)$
  and $k_2 := \Lexpcut{j}{2\ell} = \min(k, \lfloor \tfrac{2\ell -
  s}{p} \rfloor)$.  Finally, as in the proof of
  \cref{pr:sa-delta-mid}, we obtain $\deltamid(j) =
  \mcount{\mathcal{I}^{-}_H}{p}{s}{k_2} -
  \mcount{\mathcal{I}^{-}_H}{p}{s}{k_1}$, which by \cref{as:periodic}
  we can compute in $\bigO(t)$ time.
\end{proof}

\subsubsection{Computing a Position in
  \texorpdfstring{$\Occ_{2\ell}(\SA[i])$}{Occ}}\label{sec:sa-periodic-occ-pos}

Assume that $i \in [1 \dd n]$ satisfies $\SA[i] \in \R^{-}$ ($i \in [1
\dd n]$ satisfying $\SA[i] \in \R^{+}$ are processed symmetrically;
see the proof of \cref{pr:sa-periodic}). In this section, we show how
under \cref{as:periodic}, given $i$ along with $\LB_{\ell}(\SA[i])$,
$\UB_{\ell}(\SA[i])$, $\deltalow(\SA[i])$, $\deltamid(\SA[i])$,
$\Lexp(\SA[i])$, and some $j \in \Occ_{\ell}(\SA[i])$, to efficiently
compute some $j' \in \Occ_{2\ell}(\SA[i])$.

The section is organized as follows.  Given the input parameters as
described above, our query algorithm first checks if it holds $\SA[i]
\in \Poshigh(\SA[i])$. To implement such check, we first present a
combinatorial result proving that $\Poshigh(\SA[i])$ occupies a
contiguous block of positions in $\SA$ and showing what are the
endpoints of this block (\cref{lm:poshigh}). We then use this
characterization to develop an efficient method of checking if $\SA[i]
\in \Poshigh(\SA[i])$ holds (\cref{cor:poshigh}). In the next two
results (\cref{lm:select,lm:select-2}), we show how in each of the two
cases reduce the computation of some position $j' \in
\Occ_{2\ell}(\SA[i])$ to a generalized range selection query (see
\cref{sec:range-queries}).  We then put everything together in
\cref{pr:sa-periodic-occ-pos}.

\begin{lemma}\label{lm:poshigh}
  Let $i \in [1 \dd n]$ be such that $\SA[i] \in \R^{-}$.  Denote $b =
  \LB_{2\ell}(\SA[i])$ and $e = b + \deltahigh(\SA[i])$.  Then, it
  holds $\Poshigh(\SA[i]) = \{\SA[i] : i \in (b \dd e]\}$.
\end{lemma}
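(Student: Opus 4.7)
\medskip

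\textbf{Proof plan for \cref{lm:poshigh}.} The strategy mirrors the proof of \cref{lm:poslow}, with $\ell$ replaced by $2\ell$ throughout and $k_1$ replaced by $k_2$. Specifically, I will establish (i) the inclusion $\Poshigh(\SA[i]) \subseteq \{\SA[j] : j \in (b \dd n]\}$, and (ii) that the set is downward-closed in the suffix-array order within the range $(b \dd n]$; together these force $\Poshigh(\SA[i]) = \{\SA[j] : j \in (b \dd b + |\Poshigh(\SA[i])|]\} = \{\SA[j] : j \in (b \dd e]\}$.

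For (i), since $\SA[i] \in \Occ_{2\ell}(\SA[i])$, the range $(\LB_{2\ell}(\SA[i]) \dd \UB_{2\ell}(\SA[i])]$ is nonempty, and hence $\SA[b+1] \in \Occ_{2\ell}(\SA[i])$, giving $\T^{\infty}[\SA[b+1] \dd \SA[b+1] + 2\ell) = \T^{\infty}[\SA[i] \dd \SA[i] + 2\ell)$. For any $i' \in [1 \dd b]$, the definitions of $\SA$ and $\LB_{2\ell}$ yield $\T^{\infty}[\SA[i'] \dd \SA[i'] + 2\ell) \prec \T^{\infty}[\SA[i] \dd \SA[i] + 2\ell)$. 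On the other hand, for any $j'' \in \Poshigh(\SA[i])$, the definition implies $\T[j'' \dd n] \succeq \T[\SA[i] \dd n]$ or $\LCE_{\T}(\SA[i], j'') \geq 2\ell$, so by \cref{lm:utils-2} of \cref{lm:utils} (with $\ell_1 = \ell_2 = 2\ell$) we get $\T^{\infty}[j'' \dd j'' + 2\ell) \succeq \T^{\infty}[\SA[i] \dd \SA[i] + 2\ell)$. These two facts are incompatible, so $\SA[i'] \notin \Poshigh(\SA[i])$ for every $i' \leq b$, establishing (i).

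For (ii), suppose $i' \in (b+1 \dd n]$ satisfies $\SA[i'] \in \Poshigh(\SA[i])$; I will verify the three conditions required for $\SA[i'-1] \in \Poshigh(\SA[i])$. Let $s = \Lhead(\SA[i])$ and $H = \Lroot(\SA[i])$. \emph{Ordering/LCE condition:} since $b+1 \leq i'-1$, we have $\T^{\infty}[\SA[i] \dd \SA[i] + 2\ell) = \T^{\infty}[\SA[b+1] \dd \SA[b+1] + 2\ell) \preceq \T^{\infty}[\SA[i'-1] \dd \SA[i'-1] + 2\ell)$, and \cref{lm:utils-2} of \cref{lm:utils} gives $\T[\SA[i'-1] \dd n] \succeq \T[\SA[i] \dd n]$ or $\LCE_{\T}(\SA[i'-1], \SA[i]) \geq 2\ell$. \emph{Membership in $\R^{-}_{s,H}$:} by \cref{lm:occ-exp-item-1} of \cref{lm:occ-exp}, $\SA[b+1] \in \R_{s,H}$; together with $\SA[i'] \in \R^{-}_{s,H}$ and \cref{lm:exp}, the $\type$ of $\SA[b+1]$ equals $-1$, and since elements of $\R^{-}_{s,H}$ form a contiguous block in $\SA$ containing both $\SA[b+1]$ and $\SA[i']$ with $b+1 \leq i'-1 \leq i'$, we get $\SA[i'-1] \in \R^{-}_{s,H}$. \emph{Exponent condition $\Lexp(\SA[i'-1]) = k_2$:} the upper bound follows from \cref{lm:exp} applied to $\SA[i'-1]$ and $\SA[i']$ in $\R^{-}_{s,H}$ with $i'-1 < i'$, giving $\rend{\SA[i'-1]} - \SA[i'-1] \leq \rend{\SA[i']} - \SA[i']$ and hence $\Lexp(\SA[i'-1]) \leq \Lexp(\SA[i']) = k_2$. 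For the lower bound, $\SA[b+1] \in \Occ_{2\ell}(\SA[i])$ combined with \cref{lm:occ-exp-item-2} of \cref{lm:occ-exp} yields $\Lexpcut{\SA[b+1]}{2\ell} = \Lexpcut{\SA[i]}{2\ell} = k_2$, hence $\Lexp(\SA[b+1]) \geq k_2$; applying \cref{lm:exp} once more to $\SA[b+1]$ and $\SA[i'-1]$ gives $\Lexp(\SA[b+1]) \leq \Lexp(\SA[i'-1])$, so $\Lexp(\SA[i'-1]) \geq k_2$.

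The main bookkeeping obstacle is the exponent step, which requires pinning $\Lexp(\SA[i'-1])$ exactly to $k_2$ rather than just bounding it; the argument above resolves this by sandwiching $\SA[i'-1]$ between $\SA[b+1]$ (contributing the lower bound via $\Lexpcut{\cdot}{2\ell}$) and $\SA[i']$ (contributing the upper bound via the $\type = -1$ monotonicity in \cref{lm:exp}). All other steps are direct analogues of the corresponding steps in the proof of \cref{lm:poslow}.
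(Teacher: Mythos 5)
Your proposal is correct and follows essentially the same route as the paper's own proof, which likewise presents \cref{lm:poshigh} as the $2\ell$/$k_2$ analogue of \cref{lm:poslow}: the same inclusion-plus-downward-closure structure, the same use of \cref{lm:utils-2}, \cref{lm:occ-exp}, and \cref{lm:exp}, and the same sandwich argument pinning $\Lexp(\SA[i'-1])$ to $k_2$ between $\SA[b+1]$ and $\SA[i']$.
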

\begin{proof}

  The proof is analogous to the proof of \cref{lm:poslow}. We will
  thus omit the parts that are identical as in the proof of
  \cref{lm:poshigh}. First, we show that $\Poshigh(\SA[i]) \sub
  \{\SA[i] : i \in (b \dd n]\}$. Then, we show that for $i' \,{\in}\,
  (b {+} 1 \dd n]$, $\SA[i'] \,{\in}\, \Poshigh(\SA[i])$ implies
  $\SA[i' {-} 1] \,{\in}\, \Poshigh(\SA[i])$. This proves that
  $\Poshigh(\SA[i]) = \{\SA[i] : i \in (b \dd e]\}$, where $e = b +
  |\Poshigh(\SA[i])| = b + \deltahigh(\SA[i])$, i.e., the claim.

  By $\SA[i] \in \Occ_{2\ell}(\SA[i])$, the range
  $(\LB_{2\ell}(\SA[i]) \dd \UB_{2\ell}(\SA[i])]$ is nonempty. In
  particular, $\SA[b + 1] \in \Occ_{2\ell}(\SA[i])$, i.e.,
  $\T^{\infty}[\SA[b + 1] \dd \SA[b + 1] + 2\ell) = \T^{\infty}[\SA[i]
  \dd \SA[i] + 2\ell)$.  Let $i' \in [1 \dd b]$ and denote $j' =
  \SA[i']$. The condition $i' \not\in (b \dd \UB_{2\ell}(\SA[i])]$
  implies $\T^{\infty}[j' \dd j' + 2\ell) \neq \T^{\infty}[\SA[i] \dd
  \SA[i] + 2\ell)$.  On the other hand, by definition of
  lexicographical order, $i' < b + 1$ implies $\T^{\infty}[j' \dd j' +
  2\ell) \preceq \T^{\infty}[\SA[b + 1] \dd \SA[b + 1] + 2\ell) =
  \T^{\infty}[\SA[i] \dd \SA[i] + 2\ell)$. Thus, we must have
  $\T^{\infty}[j' \dd j' + 2\ell) \prec \T^{\infty}[\SA[i] \dd \SA[i]
  + 2\ell)$.  Let now $i'' \in [1 \dd n]$ be such that for $j'' =
  \SA[i'']$ it holds $j'' \in \Poshigh(\SA[i])$. By definition of
  $\Poshigh(\SA[i])$, this implies $\T[j'' \dd n] \succeq \T[\SA[i]
  \dd n]$ or $\LCE_{\T}(\SA[i], j'') \geq 2\ell$.  By
  \cref{lm:utils-2} in \cref{lm:utils} this is equivalent to
  $\T^{\infty}[j'' \dd j'' + \ell'') \succeq \T^{\infty}[\SA[i] \dd
  \SA[i] + 2\ell)$ for any $\ell'' \geq 2\ell$.  In particular,
  $\T^{\infty}[j'' \dd j'' + 2\ell) \succeq \T^{\infty}[\SA[i] \dd
  \SA[i] + 2\ell)$.  We have thus proved that $\T^{\infty}[j' \dd j' +
  2\ell) \prec \T^{\infty}[\SA[i] \dd \SA[i] + 2\ell) \preceq
  \T^{\infty}[j'' \dd j'' + 2\ell)$.  In particular, $\T[j' \dd j' +
  2\ell) \prec \T[j'' \dd j'' + 2\ell)$, and hence $i' < i''$. Since
  $i'$ was an arbitrary element of $[1 \dd b]$, we thus obtain
  $\Poshigh(\SA[i]) \sub \{\SA[i] : i \in (b \dd n]\}$.

  Assume now that for some $i' \in (b + 1 \dd n]$ it holds $\SA[i']
  \in \Poshigh(\SA[i])$. We will show that this implies $\SA[i' - 1]
  \in \Poshigh(\SA[i])$. Let $s = \Lhead(\SA[i])$ and $H =
  \Lroot(\SA[i])$.
  \begin{itemize}
  \item First, observe that by $b + 1 \leq i' - 1$, the fact that
    $\SA[b + 1] \in \Occ_{2\ell}(\SA[i])$ (see above), and the
    definition of the lexicographical order, we obtain
    $\T^{\infty}[\SA[i] \dd \SA[i] + 2\ell) = \T^{\infty}[\SA[b + 1]
    \dd \SA[b + 1] + 2\ell) \preceq \T^{\infty}[\SA[i' - 1] \dd \SA[i'
    - 1] + 2\ell)$.  By \cref{lm:utils-2} of \cref{lm:utils}, this
    implies that $\T[\SA[i' - 1] \dd n] \succeq \T[\SA[i] \dd n]$ or
    $\LCE_{\T}(\SA[i' - 1], \SA[i]) \geq 2\ell$.
  \item Second, by $\SA[b + 1] \in \Occ_{2\ell}(\SA[i])$ and
    \cref{lm:occ-exp-item-1} of \cref{lm:occ-exp}, we have $\SA[b + 1]
    \in \R_{s,H}$. Thus, using the argument from the proof of
    \cref{lm:poslow}, we have $\SA[i' - 1] \in \R^{-}_{s,H}$.
  \item Denote $k_2 = \Lexpcut{\SA[i]}{2\ell}$. Applying
    \cref{lm:exp}, we obtain from $i' - 1 < i'$ and $\SA[i'] \in
    \Poshigh(\SA[i])$ that $\Lexp(\SA[i' - 1]) \leq \Lexp(\SA[i']) =
    k_2$. On the other hand, by $\SA[b + 1] \in \Occ_{2\ell}(\SA[i])$
    and \cref{lm:occ-exp-item-1,lm:occ-exp-item-2} of
    \cref{lm:occ-exp}, it holds $\Lhead(\SA[b + 1]) = s$,
    $\Lroot(\SA[b + 1]) = H$, and $\rendhigh{\SA[b + 1]} - \SA[b + 1]
    = \rendhigh{\SA[i]} - \SA[i]$.  Consequently, (see the proof of
    \cref{lm:poslow}), $\Lexpcut{\SA[b + 1]}{2\ell} =
    \Lexpcut{\SA[i]}{2\ell}$.  Therefore, utilizing one last time
    \cref{lm:exp} for $\SA[b + 1]$ and $\SA[i' - 1]$ we obtain from $b
    + 1 \leq i' - 1$ that $k_2 = \Lexpcut{\SA[b + 1]}{2\ell} \leq
    \Lexp(\SA[b + 1]) \leq \Lexp(\SA[i' - 1])$.  Combining with the
    earlier bound $\Lexp(\SA[i' - 1]) \leq k_2$, this implies
    $\Lexp(\SA[i' - 1]) = k_2$.
  \end{itemize}
  Combining the above three conditions yields (by definition) $\SA[i'
  - 1] \in \Poshigh(\SA[i])$.
\end{proof}

\begin{remark}
  Similarly as for $\Poslow(\SA[i])$ (see \cref{lm:poslow}), by the
  above characterization, the set $\Poshigh(\SA[i])$ occurs as a
  contiguous block of position in $\SA$ starting at index
  $\LB_{2\ell}(\SA[i])$. Note, that the set $\Occ_{2\ell}(\SA[i])$ has
  the same property. These two sets should not be confused, however,
  and they are not equal, nor one is always a subset of the other. We
  will, however, use $\Poshigh(\SA[i])$ to infer about
  $\Occ_{2\ell}(\SA[i])$. More precisely, to compute a position $j \in
  \Occ_{2\ell}(\SA[i])$, we will distinguish two cases: $\SA[i] \in
  \Poshigh(\SA[i])$ and $\SA[i] \not\in \Poshigh(\SA[i])$. In the
  first case, we will indeed locate and element of
  $\Occ_{2\ell}(\SA[i])$ from $\Poshigh(\SA[i])$.  We first, however,
  need to develop an efficient test of whether $\SA[i] \in
  \Poshigh(\SA[i])$ holds.
\end{remark}

\begin{corollary}\label{cor:poshigh}
  For any $i \in [1 \dd n]$ such that $\SA[i] \in \R^{-}$, $\SA[i] \in
  \Poshigh(\SA[i])$ holds if and only if $i - \LB_{\ell}(\SA[i]) \leq
  \deltalow(\SA[i]) + \deltamid(\SA[i])$.
\end{corollary}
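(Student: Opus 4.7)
The proof plan is a direct consequence of combining the block characterization of $\Poshigh(\SA[i])$ from \cref{lm:poshigh} with the arithmetic identities tying $\LB_{2\ell}$ to $\LB_\ell$ and the decomposition $\deltal_\ell = \deltalow + \deltamid - \deltahigh$ from \cref{lm:delta}. By \cref{lm:poshigh}, $\SA[i] \in \Poshigh(\SA[i])$ is equivalent to $i \in (b \dd e]$, where $b = \LB_{2\ell}(\SA[i])$ and $e = b + \deltahigh(\SA[i])$. Using the identity $\LB_{2\ell}(\SA[i]) = \LB_\ell(\SA[i]) + \deltal_\ell(\SA[i])$ (stated in \cref{sec:sa}) together with the decomposition $\deltal_\ell(\SA[i]) = \deltalow(\SA[i]) + \deltamid(\SA[i]) - \deltahigh(\SA[i])$ from \cref{lm:delta} (applicable since $\SA[i] \in \R^{-}$), I would compute
\[
  e = \LB_{2\ell}(\SA[i]) + \deltahigh(\SA[i]) = \LB_\ell(\SA[i]) + \deltalow(\SA[i]) + \deltamid(\SA[i]).
\]
Thus the upper bound condition $i \leq e$ is exactly $i - \LB_\ell(\SA[i]) \leq \deltalow(\SA[i]) + \deltamid(\SA[i])$, matching the statement of the corollary.

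For the forward implication, if $\SA[i] \in \Poshigh(\SA[i])$, then $i \leq e$, which immediately gives the stated inequality. For the reverse direction, I need both $i \leq e$ and $i > b$. The first is the hypothesis rewritten via the above computation; the second is automatic since $\SA[i] \in \Occ_{2\ell}(\SA[i])$ implies $i \in (\LB_{2\ell}(\SA[i]) \dd \UB_{2\ell}(\SA[i])]$, so $i > \LB_{2\ell}(\SA[i]) = b$. With both inequalities in hand, $i \in (b \dd e]$ and \cref{lm:poshigh} gives $\SA[i] \in \Poshigh(\SA[i])$.

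There is no real obstacle here; the argument is a short algebraic identification and an application of the block structure already established. The only point that deserves a sentence of justification is the lower bound $i > \LB_{2\ell}(\SA[i])$, which follows from the trivial membership $\SA[i] \in \Occ_{2\ell}(\SA[i])$ and the definition of $\UB_{2\ell}$/$\LB_{2\ell}$ as the endpoints of the $\SA$-block of $\Occ_{2\ell}(\SA[i])$. The whole proof will be essentially parallel to that of \cref{cor:poslow}, differing only in that here one invokes \cref{lm:poshigh} (rather than \cref{lm:poslow}) and must additionally apply \cref{lm:delta} to convert $\deltal_\ell$ into $\deltalow + \deltamid - \deltahigh$.
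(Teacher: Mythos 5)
Your proposal is correct and matches the paper's proof essentially step for step: both directions rely on \cref{lm:poshigh} for the block characterization, the identity $\LB_{2\ell}(\SA[i]) = \LB_{\ell}(\SA[i]) + \deltal_{\ell}(\SA[i])$, and \cref{lm:delta} to rewrite $e$ as $\LB_{\ell}(\SA[i]) + \deltalow(\SA[i]) + \deltamid(\SA[i])$, with the lower bound $i > \LB_{2\ell}(\SA[i])$ obtained from $\SA[i] \in \Occ_{2\ell}(\SA[i])$ exactly as in the paper. No gaps.
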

\begin{proof}

  Assume $\SA[i] \in \Poshigh(\SA[i])$. By \cref{lm:poshigh}, we then
  must have $i \in (b \dd e]$, where $b = \LB_{2\ell}(\SA[i])$ and $e
  = b + \deltahigh(\SA[i])$. In particular, $i \leq e =
  \LB_{2\ell}(\SA[i]) + \deltahigh(\SA[i]) = \LB_{\ell}(\SA[i]) +
  \deltal_{\ell}(\SA[i]) + \deltahigh(\SA[i]) = \LB_{\ell}(\SA[i]) +
  \deltalow(\SA[i]) + \deltamid(\SA[i])$, where the last equality
  follows by \cref{lm:delta}. This is equivalent to the claim.

  Assume now $i - \LB_{\ell}(\SA[i]) \leq \deltalow(\SA[i]) +
  \deltamid(\SA[i])$. Let $b = \LB_{2\ell}(\SA[i])$ and $e = b +
  \deltahigh(\SA[i])$. Then, by \cref{lm:delta}, we equivalently have
  $i \leq \LB_{\ell}(\SA[i]) + \deltalow(\SA[i]) + \deltamid(\SA[i]) =
  \LB_{2\ell}(\SA[i]) + \deltahigh(\SA[i]) = e$. On the other hand, by
  definition, we have $i \in (\LB_{2\ell}(\SA[i]) \dd
  \UB_{2\ell}(\SA[i])]$. In particular, $i > \LB_{2\ell}(\SA[i]) =
  b$. Therefore, we obtain $i \in (b \dd e]$. By \cref{lm:poshigh},
  this implies $\SA[i] \in \Poshigh(\SA[i])$.
\end{proof}

\begin{lemma}\label{lm:select}
  Assume that $i \in [1 \dd n]$ is such that $\SA[i] \in \R^{-}_H \cap
  \Poshigh(\SA[i])$, where $H \in \Lroots$.  Let $\mathsf{P}_H =
  \{(\rendfull{r}, \rendfull{r} - r) : r \in \R'^{-}_H\}$, $\Pts =
  \Points_{7\tau}(\T, \mathsf{P}_H)$, $d = \deltahigh(\SA[i])$, $x =
  \rendhigh{\SA[i]} - \SA[i]$, $m = \rcountb{\Pts}{x}{n}$, and $e =
  \LB_{\ell}(\SA[i]) + \deltalow(\SA[i]) + \deltamid(\SA[i])$. Then $d
  \leq m$. Moreover:
  \begin{enumerate}
  \item\label{lm:select-it-1} For $\delta \in [0 \dd d)$, any position
    $p \in \rselect{\Pts}{x}{n}{m - \delta}$ satisfies $\T^{\infty}[p
    - x \dd p - x + 2\ell) = \T^{\infty}[\SA[e - \delta] \dd \SA[e -
    \delta] + 2\ell)$.
  \item\label{lm:select-it-2} For $\delta = e - i$, we have $\delta
    \in [0 \dd d)$ and any position $p \in \rselect{\Pts}{x}{n}{m -
    \delta}$ satisfies $p - x \in \Occ_{2\ell}(\SA[i])$.
  \end{enumerate}
\end{lemma}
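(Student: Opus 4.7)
The plan is to mirror the structure of \cref{lm:sa-nonperiodic-occ-pos}, with the role of the synchronizing successor $s'$ replaced by the run-end $\rendfull{r(j)}$, where $r(j) \in \R'^{-}_H$ is the start of the maximal $\R$-block containing $j$. First I would prove $d \le m$: for each $j \in \Poshigh(\SA[i])$, set $r(j)$ as above (well-defined since $j \in \R^{-}_{s,H}$ by the definition of $\Poshigh$, and unique by \cref{lm:R-block}), and observe that the map $j \mapsto r(j)$ is injective because distinct elements of $\Poshigh(\SA[i])$ lie in disjoint maximal $\R$-blocks. Applying \cref{lm:delta-low-high} with $r(j)$ in the role of $i$ yields $\rendfull{r(j)} - r(j) \ge x$, so each $r(j)$ contributes a distinct point of $\Pts$ with $X$-coordinate in $[x \dd n)$, giving $d \le m$.

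For part~1, I would introduce the sorted sequence $(r_j)_{j \in [1 \dd m]}$ of all $r \in \R'^{-}_H$ with $\rendfull{r} - r \ge x$, ordered so that $j < j'$ implies $\T^{\infty}[\rendfull{r_j} \dd \rendfull{r_j} + 7\tau) \preceq \T^{\infty}[\rendfull{r_{j'}} \dd \rendfull{r_{j'}} + 7\tau)$, and then follow the three-substep template of \cref{lm:sa-nonperiodic-occ-pos}. Substep~(i) is a counting argument on this sorted sequence showing $\rendfull{r_j} \in \rselect{\Pts}{x}{n}{j}$. Substep~(ii) shows that any $p \in \rselect{\Pts}{x}{n}{j}$, necessarily of the form $p = \rendfull{r}$ for some $r \in \R'^{-}_H$ with $\rendfull{r} - r \ge x$ and identical $Y$-coordinate, also satisfies $\T[p - x \dd p) = \T[\rendfull{r_j} - x \dd \rendfull{r_j})$ via the periodic-extension argument described below. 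Substep~(iii) combines these with the fact that, by \cref{lm:poshigh} and $\LB_{2\ell}(\SA[i]) = e - d$ (using \cref{lm:delta}), $\Poshigh(\SA[i]) = \{\SA[i'] : i' \in (e - d \dd e]\}$; by \cref{lm:delta-low-high} these $d$ elements correspond bijectively to the $r_j$'s whose $Y$-coordinate is $\succeq Y^\star := \T^{\infty}[\rendhigh{\SA[i]} \dd \SA[i] + 2\ell)$, i.e.\ to $r_{m-d+1}, \ldots, r_m$. Within $\Poshigh(\SA[i])$, both the $\SA$-order (via \cref{lm:SA-block-2} of \cref{lm:exp}, governed by $\rend{\cdot} - \cdot = x + \Ltail(\cdot)$) and the $Y$-order (governed by the length of the prefix $H[1 \dd \Ltail(\cdot)]$ matching before the first mismatch character, which is $\prec H[\Ltail(\cdot) + 1]$ since $\type = -1$) are strictly increasing in $\Ltail$, so $r_{m - \delta}$ corresponds to $\SA[e - \delta]$ for $\delta \in [0 \dd d)$. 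Concatenating the length-$x$ prefix equality from (ii) with the length-$(7\tau)$ right-context equality and using $2\ell - x \le 7\tau$ then yields the claimed equality of part~1.

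Part~2 follows by applying part~1 with $\delta = e - i$: $\SA[i] \in \Poshigh(\SA[i])$ together with \cref{cor:poshigh} gives $\delta \ge 0$, and $i \in (\LB_{2\ell}(\SA[i]) \dd \UB_{2\ell}(\SA[i])]$ gives $\delta < d$; since $\SA[e - \delta] = \SA[i]$, this yields $p - x \in \Occ_{2\ell}(\SA[i])$. The main obstacle is substep~(ii): the select oracle may return $p = \rendfull{r}$ for some $r \ne r_j$, and I must show that $\T[p - x \dd p)$ is determined purely by $H$ and $x \bmod |H|$. Here I would use the density bound $|H| \le \tau/3$ together with the standard inequality $\rend{r} - r \ge 3\tau - 1$ to force $\Lexp(r) \ge 1$, so $\T[p - |H| \dd p) = H$; iterating, each character $\T[p - k]$ for $k \in [1 \dd x]$ depends only on $k \bmod |H|$ and $H$ (using $\rendfull{r} - r \ge x$ to guarantee we never leave the periodic part). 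The same computation applies at $\rendfull{r_j}$ and at $\SA[e - \delta]$, whose length-$x$ prefix is $H' H^{k_2}$ with $H'$ the length-$s$ suffix of $H$; the alignment agrees because both $p$ and $\rendfull{r_j}$ sit at the end of a full $H$-period, making the period offset of $p - x$ equal to $(-s) \bmod |H|$ in both cases.
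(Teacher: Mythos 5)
Your overall architecture matches the paper's proof: the same injection into $\R'^{-}_H$ for $d\le m$, the same three-substep template for part~1 (your sorted sequence $(r_j)$ playing the role of the paper's $(b_j)$), essentially the same periodic-extension argument for the left length-$x$ context in substep~(ii), and the same derivation of part~2. There is, however, a genuine gap in substep~(iii), which is the crux of part~1. You justify the correspondence between rank $m-\delta$ in the $(r_j)$-order and rank $e-\delta$ in the $\SA$-order by arguing that both orders are ``strictly increasing in $\Ltail$''. That monotonicity is true, but it only aligns the two orders at the granularity of $\Ltail$-values: within a tie-class $\{j'\in\Poshigh(\SA[i]) : \Ltail(j')=t\}$ the two orders may pair elements differently, and such elements need not have equal length-$2\ell$ right contexts. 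Indeed, $\rend{j'}-j'=x+\Ltail(j')$, and whenever $x+t<2\ell$ (which happens, e.g., when $k_2=\Lexp(\SA[i])<\lfloor\tfrac{2\ell-s}{|H|}\rfloor$, forcing $x\le 2\ell-|H|$ and hence $x+t<2\ell$ for every $t<|H|$), the length-$2\ell$ context extends past $\rend{j'}$, and two elements with equal $\Ltail$ can already disagree at the mismatch character $\T[\rend{j'}]$ (both are $\prec H[t+1]$, but they need not coincide). So the $\Ltail$-correspondence does not deliver the equality of contexts that your final ``concatenation'' step requires.

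The argument that closes this gap — and the one the paper uses — is that both sequences enumerate the same set $\Poshigh(\SA[i])$ sorted by the \emph{same} key, namely the length-$(x+7\tau)$ right context of the element: the $\SA$-order refines this key (all elements share the length-$x$ prefix $H'H^{k_2}$, so the suffix comparison is decided from position $x$ onward), and the $(r_j)$-order sorts by $\T^{\infty}[\rendfull{\cdot}\dd\rendfull{\cdot}+7\tau)$, which prefixed with the common left context is exactly this key. Two enumerations of the same multiset, each sorted by the same key, have equal keys at every rank regardless of how ties are broken; restricting to the length-$2\ell$ prefix (using $2\ell\le x+7\tau$) then yields part~1. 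Everything else in your proposal is sound, so substituting this for the $\Ltail$-monotonicity step repairs the proof.
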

\begin{proof}

  Note that $\Pts$ is well-defined, since positions $\rendfull{r}$ are
  distinct among $r\in \R'^{-}_{H}$.  Denote $q = |\R'^{-}_H|$. Let
  $(a_j)_{j \in [1 \dd q]}$ be a sequence containing all positions $r
  \in \R'^{-}_{H}$ ordered according to the string
  $\T^{\infty}[\rendfull{r} \dd \rendfull{r} + 7\tau)$ In other words,
  for any $j, j' \in [1 \dd q]$, $j < j'$ implies
  $\T^{\infty}[\rendfull{r} \dd \rendfull{r} + 7\tau) \prec
  \T^{\infty}[\rendfull{r'} \dd \rendfull{r'} + 7\tau)$, where $r =
  a_j$ and $r' = a_{j'}$.  Note, that the sequence $(a_j)_{j \in [1
  \dd q]}$ is not unique.  Since $\{(\rendfull{a_i}, \rendfull{a_i} -
  a_i) : i \in [1 \dd q]\} = \mathsf{P}_H$, it holds
  $|\{\rendfull{a_j} - x : j \in [1 \dd q]\text{ and } \rendfull{a_j}
  - a_j \geq x\}| = |\{j \in [1 \dd q] : \rendfull{a_j} - a_j \geq
  x\}| = |\{(p, v) \in \mathsf{P}_H : v \in [x \dd n)\}| = m$, where
  the last equality follows by the definition of
  $\rcountb{\Pts}{x}{n}$ (see \cref{sec:int-str}), and by observing
  that for any $q \in \R$, it holds $\rendfull{q} - q < n$.  On the
  other hand, observe that by \cref{lm:delta-low-high}, for any $j \in
  [1 \dd q]$, the right-maximal run of positions in $\R$ starting at
  position $r = a_j$ contains an element $p'$ of $\Poshigh(\SA[i])$ if
  and only if $\rendfull{r} - r \geq x$ and $\T^{\infty}[\rendfull{r}
  \dd \rendfull{r} + 7\tau) \succeq \T^{\infty}[\rendhigh{\SA[i]} \dd
  \SA[i] + 2\ell)$.  Moreover, if the latter condition is true, then
  since by definition on $\Poshigh(\SA[i])$ any such element $p'$ must
  satisfy $\Lexp(p') = \Lexpcut{\SA[i]}{2\ell}$ (or equivalently,
  $\rendfull{p'} - p' = s + \Lexpcut{\SA[i]}{2\ell}|H| =
  \rendhigh{\SA[i]} - \SA[i] = x$, where $s = \Lhead(\SA[i]) =
  \Lhead(p')$), we obtain $p' = \rendfull{p'} - x = \rendfull{r} -
  x$. In other words, $\Poshigh(\SA[i]) = \{\rendfull{a_j} - x : j \in
  [1 \dd q],\, \rendfull{a_j} - a_j \geq x,\text{ and
  }\T^{\infty}[\rendfull{a_j} \dd \rendfull{a_j} + 7\tau) \succeq
  \T^{\infty}[\rendhigh{\SA[i]} \dd \SA[i] + 2\tau)\}$.  The latter
  set (whose cardinality is equal to $d$) is clearly a subset of
  $\{\rendfull{a_j} - x : j \in [1 \dd q]\text{ and }\rendfull{a_j} -
  a_j \geq x\}$ (whose size, as shown above, is $m$). Thus, $d \leq
  m$.

  1. As shown above, $|\{j \in [1 \dd q] : \rendfull{a_j} - a_j \geq
  x\}| = m$.  Let $(b_j)_{j \in [1 \dd m]}$ be a subsequence of
  $(a_j)_{j \in [1 \dd q]}$ containing all the elements of the set
  $\{a_j : j \in [1 \dd q]\text{ and } \rendfull{a_j} - a_j \geq x\}$
  (in the same order as they appear in the sequence $(a_j)_{j \in [1
  \dd q]}$). Our proof consists of three steps:
  \begin{enumerate}[label=(\roman*)]
  \item\label{lm:select-it-1-1} Let $j \in [1 \dd m]$. We first show
    $b_j \in \rselect{\Pts}{x}{n}{j}$.  Denote $Y =
    \T^{\infty}[\rendfull{b_j} \dd \rendfull{b_j} + 7\tau)$. Let
    \begin{align*}
      r_{\rm beg} &=
        |\{a_t : t \in [1 \dd q],
          \rendfull{a_t} - a_t \geq x,\text{ and }
          \T^{\infty}[\rendfull{a_t} \dd \rendfull{a_t} + 7\tau)
          \prec Y\}|\text{ and}\\
      r_{\rm end} &=
        |\{a_t : t \in [1 \dd q],
          \rendfull{a_t} - a_t \geq x,\text{ and }
          \T^{\infty}[\rendfull{a_t} \dd \rendfull{a_t} + 7\tau)
          \preceq Y\}|.
    \end{align*}
    If an index $t \in [1 \dd q]$ satisfies $\rendfull{a_t} - a_t \geq
    x$, then we also have $a_t \in \{b_1, \ldots, b_m\}$.  Moreover,
    since for any indexes $t, t' \in [1 \dd m]$, $t < t'$ implies
    $\T^{\infty}[\rendfull{b_t} \dd \rendfull{b_t} + 7\tau) \preceq
    \T^{\infty}[\rendfull{b_{t'}} \dd \rendfull{b_{t'}} + 7\tau)$, if
    $t \in [1 \dd q]$ additionally satisfies
    $\T^{\infty}[\rendfull{a_t} \dd \rendfull{a_t} + 7\tau) \prec Y =
    \T^{\infty}[\rendfull{b_j} \dd \rendfull{b_j} + 7\tau)$, then $a_t
    \in \{b_1, \ldots, b_{j-1}\}$. Thus, $r_{\rm beg} < j$.  On the
    other hand, every $t \in [1 \dd j]$ satisfies $\rendfull{b_t} -
    b_t \geq x$ and $\T^{\infty}[\rendfull{b_t} \dd \rendfull{b_t} +
    7\tau) \preceq \T^{\infty}[\rendfull{b_j} \dd \rendfull{b_j} +
    7\tau) = Y$. Thus, $j \leq r_{\rm end}$. Altogether, $j \in
    (r_{\rm beg} \dd r_{\rm end}]$. By definition of $\Pts =
    \Points_{7\tau}(\T, \mathsf{P}_H)$ (\cref{def:p-right-context}),
    it holds $r_{\rm beg} = \rcount{\Pts}{x}{n}{Y}$ and $r_{\rm end} =
    \rcounti{\Pts}{x}{n}{Y}$. Thus, $j \in (\rcount{\Pts}{x}{n}{Y} \dd
    \rcounti{\Pts}{x}{n}{Y}]$.  On the other hand, by
    $(\rendfull{b_j}, \rendfull{b_j} - b_j) \in \mathsf{P}_H$, we have
    $(\rendfull{b_j} - b_j, Y, f) \in \Pts$.  Combining with $x \leq
    \rendfull{b_j} - b_j < n$ we obtain $b_j \in
    \rselect{\Pts}{x}{n}{j}$.
  \item\label{lm:select-it-1-2} Let $j \in [1 \dd m]$. We will now
    show that for any $p \in \rselect{\Pts}{x}{n}{j}$, it holds
    $\T^{\infty}[p - x \dd p + 7\tau) = \T^{\infty}[\rendfull{b_j} - x
    \dd \rendfull{b_j} + 7\tau)$. By \cref{lm:select-it-1-1} and the
    definition of $\rselect{\Pts}{x}{n}{j}$, the assumption $p \in
    \rselect{\Pts}{x}{n}{j}$ implies $\T^{\infty}[p \dd p + 7\tau) =
    \T^{\infty}[b_j \dd b_j + 7\tau)$.  Moreover, $x \leq \rendfull{p}
    - p < n$ and there exists $r \in \R'^{-}_H$ such that $p =
    \rendfull{r}$. The position $p$ is thus preceded in $\T^{\infty}$
    by a string $H'H^{z}$, where $H'$ is a prefix of $H$ of length $s
    = \Lhead(\SA[i])$ and $z = \Lexpcut{\SA[i]}{2\ell}$.  Since by
    definition of $(b_j)_{j\in [1 \dd m]}$, the position
    $\rendfull{b_j}$ is also preceded by $H'H^{z}$ in $\T$ and
    $|H'H^{z}| = x$, we obtain $\T^{\infty}[p - x \dd p + 7\tau) =
    \T^{\infty}[\rendfull{b_j} - x \dd \rendfull{b_j} + 7\tau)$.
  \item We are now ready to prove the main claim.  As noted above,
    $\Poshigh(\SA[i]) = \{\rendfull{a_j} - x : j \in [1 \dd q],\,
    \rendfull{a_j} - a_j \geq x,\text{ and }\T^{\infty}[\rendfull{a_j}
    \dd \rendfull{a_j} + 7\tau) \succeq \T^{\infty}[\rendhigh{\SA[i]}
    \dd \SA[i] + 2\tau)\}$. Since the positions $k$ in
    $(a_j)_{j \in [1 \dd q]}$ are sorted by
    $\T^{\infty}[\rendfull{k} \dd \rendfull{k} + 7\tau)$, we can
    eliminate the second condition. Denoting $j_{\rm skip} = |\{j \in
    [1 \dd q] : \T^{\infty}[\rendfull{a_j} \dd \rendfull{a_j} + 7\tau)
    \prec \T^{\infty}[\rendhigh{\SA[i]} \dd \SA[i] + 2\ell)\}|$, we
    have
    \begin{equation*}
      \Poshigh(\SA[i]) = \{\rendfull{a_j} - x : j \in (j_{\rm skip}
      \dd q]\text{ and }\rendfull{a_j} - a_a \geq x\}.
    \end{equation*}
    Consequently, by definition of $(b_j)_{j \in [1 \dd m]}$, we have
    $\Poshigh(\SA[i]) = \{\rendfull{b_j} - x : j \in (m-d \dd m]\}$.
    On the other hand, by \cref{lm:delta}, $e = \LB_{\ell}(\SA[i]) +
    \deltalow(\SA[i]) + \deltamid(\SA[i]) = \LB_{\ell}(\SA[i]) +
    \deltal_{\ell}(\SA[i]) + \deltahigh(\SA[i]) =
    \LB_{2\ell}(\SA[i]) + \deltahigh(\SA[i])$.  Thus, by
    \cref{lm:poshigh}, we have $\Poshigh(\SA[i]) = \{\SA[j] : j \in
    (e-d \dd e]\}$. We now observe:
    \begin{itemize}
    \item Let $j_1, j_2 \in (m-d \dd m]$ and assume $j_1 < j_2$. Since
      the elements of $(b_j)$ occur in the same order as in $(a_j)$,
      and positions $p$ in $(a_j)$ are sorted by $\T[\rendfull{p} \dd
      \rendfull{p} + 7\tau)$, it follows that
      $\T^{\infty}[\rendfull{p_1} \dd \rendfull{p_1} + 7\tau) \preceq
      \T^{\infty}[\rendfull{p_2} \dd \rendfull{p_2} + 7\tau)$, where
      $p_1 = b_{j_1}$ and $p_2 = b_{j_2}$. On the other hand, by
      $\rendfull{p_1} - x, \rendfull{p_2} - x \in \Poshigh(\SA[i])$,
      both positions $\rendfull{p_1} - x$ and $\rendfull{p_2} - x$ are
      followed in $\T$ by the string $H'H^{z}$, where $H'$ is a prefix
      of $H$ of length $\Lhead(\SA[i])$ and $z =
      \Lexpcut{\SA[i]}{2\ell}$. Recall, however, that $x =
      \rendhigh{\SA[i]} - \SA[i] = \Lhead(\SA[i]) +
      \Lexpcut{\SA[i]}{2\ell}$.  Thus, we obtain
      $\T^{\infty}[\rendfull{p_1} - x \dd \rendfull{p_1}) =
      \T^{\infty}[\rendfull{p_2} - x \dd \rendfull{p_2}) = H'H^{z}$,
      and consequently, $\T^{\infty}[\rendfull{p_1} - x \dd
      \rendfull{p_1} + 7\tau) \preceq \T^{\infty}[\rendfull{p_2} - x
      \dd \rendfull{p_2} + 7\tau)$.
    \item On the other hand, by definition of lexicographical order,
      for any $j_1, j_2 \in [1 \dd d]$, the assumption $j_1 < j_2$
      implies $\T^{\infty}[\SA[e - d + j_1] \dd \SA[e - d + j_1] + x +
      7\tau) \preceq \T^{\infty}[\SA[e - d + j_2] \dd \SA[e - d + j_2]
      + x + 7\tau)$.
    \end{itemize}
    We have shown that both sequences $\SA[e - d + 1], \ldots,
    \SA[e]$ and $\rendfull{b_{m-d+1}} - x, \ldots, \rendfull{b_m} - x$
    contain the same set of positions $\Poshigh(\SA[i])$ ordered
    according to the length-$(x+7\tau)$ right context in
    $\T^{\infty}$.  Therefore, regardless of how ties are resolved in
    each sequence, for any $j \in [1 \dd d]$, we have
    $\T^{\infty}[\SA[e - d + j] \dd \SA[e - d + j] + x + 7\tau) =
    \T^{\infty}[\rendfull{b_{m-d+j}} - x \dd \rendfull{b_{m-d+j}} +
    7\tau)$. Thus, by letting $\delta = d - j$, for any $\delta \in [0
    \dd d)$ we obtain
    \begin{equation*}
      \T^{\infty}[\SA[e - \delta] \dd \SA[e - \delta] + x + 7\tau) =
      \T^{\infty}[\rendfull{b_{m-\delta}} - x \dd
      \rendfull{b_{m-\delta}} + 7\tau).
    \end{equation*}
    To finalize the proof of the main claim, let $\delta \in [0 \dd
    d)$ and take any $p \in \rselect{\Pts}{x}{n}{m-\delta}$.  By
    \cref{lm:select-it-1-2} for $j = m - \delta$, we have
    $\T^{\infty}[p - x \dd p + 7\tau) = \T^{\infty}[\rendfull{b_{m -
    \delta}} - x \dd \rendfull{b_{m - \delta}} + 7\tau) =
    \T^{\infty}[\SA[e - \delta] \dd \SA[e - \delta] + x + 7\tau)$. In
    particular, by $0 \leq x$ and $2\ell \leq 7\tau$, we obtain $2\ell
    \leq x + 7\tau$ and $\T^{\infty}[p - x \dd p - x + 2\ell) =
    \T^{\infty}[\SA[e - \delta] \dd \SA[e - \delta] + 2\ell)$, i.e.,
    the claim.
  \end{enumerate}

  \noindent
  2. By \cref{lm:delta}, it holds $e = \LB_{\ell}(\SA[i]) +
  \deltal_{\ell}(\SA[i]) + \deltahigh(\SA[i]) = \LB_{2\ell}(\SA[i]) +
  \deltahigh(\SA[i])$. Thus, by \cref{lm:poshigh}, we have
  $\Poshigh(\SA[i]) = \{\SA[i] : i \in (e - d \dd e]\}$. Consequently,
  $\SA[i] \in \Poshigh(\SA[i])$ implies $i \in (e - d \dd e]$ and
  hence $\delta = e - i$ satisfies $\delta \in [0 \dd d)$. By
  \cref{lm:select-it-1} with $\delta = e - i$, any $p \in
  \rselect{\Pts}{x}{n}{m-\delta}$ satisfies $\T^{\infty}[p - x \dd p -
  x + 2\ell) = \T^{\infty}[\SA[e - \delta] \dd \SA[e - \delta] +
  2\ell) = \T^{\infty}[\SA[i] \dd \SA[i] + 2\ell)$, i.e., $p - x \in
  \Occ_{2\ell}(\SA[i])$.
\end{proof}

\begin{lemma}\label{lm:select-2}
  Assume that $i \in [1 \dd n]$ is such that $\SA[i] \in \R^{-}_{H}$
  $($where $H \in \Lroots)$ and $\SA[i] \not\in \Poshigh(\SA[i])$. Let
  $\mathsf{P}_H = \{(\rendfull{p}, \rendfull{p} - p) : p \in
  \R'^{-}_{H}\}$, $\Pts = {\rm Point}_{7\tau}(\T, \mathsf{P}_H)$, $s =
  \Lhead(\SA[i])$, $z = \lfloor \tfrac{2\ell-s}{|H|} \rfloor$, and $x
  = s + (z+1)|H|$. Then:
  \begin{enumerate}
  \item\label{lm:select-2-it-1} It holds $\rcountb{\Pts}{x}{n} \geq
    1$.
  \item\label{lm:select-2-it-2} Any $p \in \rselect{\Pts}{x}{n}{1}$
    satisfies $p - x \in \Occ_{2\ell}(\SA[i])$.
  \end{enumerate}
\end{lemma}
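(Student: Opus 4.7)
The plan is as follows. First, unpack the hypothesis: let $s = \Lhead(\SA[i])$ and note that $\SA[i]$ trivially satisfies $\T[\SA[i]\dd n] \succeq \T[\SA[i]\dd n]$ in the definition of $\Poshigh(\SA[i])$. Hence $\SA[i] \notin \Poshigh(\SA[i])$ forces $\Lexp(\SA[i]) \ne \Lexpcut{\SA[i]}{2\ell} = \min(\Lexp(\SA[i]), z)$, yielding $\Lexp(\SA[i]) \ge z+1$. In particular, $\rendfull{\SA[i]} - \SA[i] = s + \Lexp(\SA[i])|H| \ge s + (z+1)|H| = x$.

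For Part~\ref{lm:select-2-it-1}, I take $r$ to be the smallest element of the right-maximal block of positions in $\R$ that contains $\SA[i]$. Iterating \cref{lm:R-block} along this block yields $\Lroot(r) = H$, $\type(r) = -1$, and $\rend{r} = \rend{\SA[i]}$; by uniqueness of the run-decomposition, the equality $\rend{r} = \rend{\SA[i]}$ propagates to $\Ltail(r) = \Ltail(\SA[i])$ and thus $\rendfull{r} = \rendfull{\SA[i]}$. Therefore $r \in \R'^{-}_H$ with $\rendfull{r} - r \ge \rendfull{\SA[i]} - \SA[i] \ge x$, so the pair $(\rendfull{r}, \rendfull{r} - r) \in \mathsf{P}_H$ contributes a point to $\Pts$ with first coordinate in $[x \dd n)$, giving $\rcountb{\Pts}{x}{n} \ge 1$.

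For Part~\ref{lm:select-2-it-2}, pick any $p \in \rselect{\Pts}{x}{n}{1}$; then $p = \rendfull{r^*}$ for some $r^* \in \R'^{-}_H$ with $v := \rendfull{r^*} - r^* \ge x$. The crux is to show $\T[p - x \dd p) = H' H^{z+1}$, where $H'$ denotes the length-$s$ suffix of $H$. Since $\T[r^* \dd \rendfull{r^*})$ has period $|H|$ and its length-$|H|$ suffix equals $H$, one gets $\T[\rendfull{r^*} - k|H| \dd \rendfull{r^*}) = H^k$ for every $k \le \Lexp(r^*)$. From $\Lhead(r^*) + \Lexp(r^*)|H| = v \ge s + (z+1)|H|$ together with $\Lhead(r^*) < |H|$, one derives $\Lexp(r^*) \ge z+1$, and furthermore $\Lexp(r^*) \ge z+2$ whenever $\Lhead(r^*) < s$. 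A short case split on whether $\Lhead(r^*) \ge s$ or $\Lhead(r^*) < s$ identifies the $s$ characters immediately preceding the $H^{z+1}$ block as the length-$s$ suffix of $H$ (in the first case inside the head $H'_{\Lhead(r^*)}$ of the run; in the second inside the extra copy of $H$ guaranteed by $\Lexp(r^*) \ge z+2$). A parallel computation using $\SA[i] \in \R^{-}_{s,H}$ with $\Lexp(\SA[i]) \ge z+1$ yields $\T[\SA[i] \dd \SA[i] + x) = H' H^{z+1}$, so the two length-$x$ substrings coincide. Finally, $x = s + (z+1)|H| > 2\ell$ by the definition of $z$, so the length-$2\ell$ prefixes also agree, and $p - x \in \Occ_{2\ell}(\SA[i])$.

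The main obstacle is the case analysis on $\Lhead(r^*)$ versus $s$ inside the periodic suffix of $\T[r^*\dd \rendfull{r^*})$. It is worth noting that neither the lex-minimality built into $\rselect{\Pts}{x}{n}{1}$ nor the sign $-1$ of $\type(r^*)$ plays any role in identifying the text at $p - x$: once $\rendfull{r^*} - r^* \ge x$ is enforced by the first coordinate filter, the periodic structure immediately to the left of $\rendfull{r^*}$ is already forced to agree with $\T[\SA[i] \dd \SA[i] + x)$.
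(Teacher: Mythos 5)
Your proof is correct and follows essentially the same route as the paper's: part 1 derives $\Lexp(\SA[i])>z$ from $\SA[i]\notin\Poshigh(\SA[i])$ and exhibits a point with first coordinate at least $x$, and part 2 shows any selected label is preceded by $H'H^{z+1}$, whose length $x>2\ell$ forces agreement with the length-$2\ell$ prefix of $\T[\SA[i]\dd n]$. You are in fact slightly more careful than the paper: in part 1 you correctly pass to the block start $r\in\R'^{-}_H$ (the set $\mathsf{P}_H$ ranges only over $\R'^{-}_H$, and $\SA[i]$ itself need not lie in $\R'$), and you correctly identify $H'$ as the length-$s$ \emph{suffix} of $H$, consistent with the run-decomposition.
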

\begin{proof}

  1. By definition, if $\SA[i] \not\in \Poshigh(\SA[i])$, then
  $\Lexp(\SA[i]) \neq \Lexpcut{\SA[i]}{2\ell}$, since all other
  conditions in the definition of $\Poshigh(\SA[i])$ are satisfied for
  position $\SA[i]$.  However, since $\Lexpcut{\SA[i]}{2\ell} =
  \min(\Lexp(\SA[i]), \lfloor \tfrac{2\ell-s}{|H|} \rfloor)$, this
  implies $\lfloor \tfrac{2\ell-s}{|H|} \rfloor < \Lexp(\SA[i])$.
  Consequently, for $k = \SA[i]$ it holds $\rendfull{k} - k =
  \Lhead(\SA[i]) + \Lexp(\SA[i])|H| \geq s + (z+1)|H| = x$, and hence
  $(p, v) = (\rendfull{k}, \rendfull{k} - k) \in \mathsf{P}_H$
  satisfies $v \in [x \dd n)$. Thus, by definition of $\Pts$,
  $\rcountb{\Pts}{x}{n} \geq 1$.

  2. By $\Lexp(\SA[i]) \geq z + 1$, the string $H'H^{z+1}$ (where $H'$
  is a length-$s$ prefix of $H$) is a prefix of $\T[\SA[i] \dd
  n]$. But since $|H'H^{z + 1}| \geq 2\ell$, we also obtain that
  $\T^{\infty}[\SA[i] \dd \SA[i] + 2\ell)$ is a prefix of $H'H^{z+1}$.
  Thus, to find an element of $\Occ_{2\ell}(\SA[i])$, it suffices to
  find any position $p \in [1 \dd n]$ satisfying $\T^{\infty}[p - x
  \dd p) = H'H^{z+1}$ and then by the above argument, it holds $p - x
  \in \Occ_{2\ell}(\SA[i])$.  Let now $p \in \rselect{\Pts}{x}{n}{1}$.
  By definition of $\rselect{\Pts}{x}{n}{1}$, this implies $x \leq v <
  n$.  Since $(p, v) \in \mathsf{P}_H$ holds for some $v\in \Z_{\ge
  1}$, there exists $r \in \R'^{-}_H$ such that $p = \rendfull{r}$
  and $v = \rendfull{r} - r$, the position $p$ is preceded in
  $\T^{\infty}$ by a string $H'H^{z+1}$ (recall that $|H'H^{z+1}| =
  x$). By the above argument, this implies $p - x \in
  \Occ_{2\ell}(\SA[i])$.
\end{proof}

\begin{proposition}\label{pr:sa-periodic-occ-pos}
  Let $i \in [1 \dd n]$ be such that $\SA[i] \in \R^{-}$. Under
  \cref{as:periodic}, given the values $i$, $\LB_{\ell}(\SA[i])$,
  $\UB_{\ell}(\SA[i])$, $\deltalow(\SA[i])$, $\deltamid(\SA[i])$,
  $\Lexp(\SA[i])$, and some $j \in \Occ_{\ell}(\SA[i])$, we can
  compute some $j' \in \Occ_{2\ell}(\SA[i])$ in $\bigO(t)$ time.
\end{proposition}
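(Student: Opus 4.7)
The plan is to branch on whether $\SA[i] \in \Poshigh(\SA[i])$, apply \cref{lm:select} in the first case and \cref{lm:select-2} in the second, and in both cases reduce the computation of $j'$ to a range selection query on the structure of \cref{as:periodic-it-2}. A key observation that allows us to avoid needing $\SA[i]$ itself is that for $j \in \Occ_{\ell}(\SA[i])$, \cref{lm:occ-exp} gives $\Lhead(j)=\Lhead(\SA[i])$ and $\Lroot(j)=\Lroot(\SA[i])$, so all quantities depending only on $\Lhead(\SA[i])$ and $|\Lroot(\SA[i])|$ can be obtained from $j$; moreover, $\Lroot(\SA[i])$ occurs at position $j + \Lhead(j)$, which is precisely the form in which \cref{as:periodic} requires the root $H$ to be specified.

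First I would use \cref{as:periodic-it-1} on $j$ to obtain $s := \Lhead(j) = \Lhead(\SA[i])$ and $p := |\Lroot(j)| = |\Lroot(\SA[i])|$, which fixes $H := \Lroot(\SA[i]) = \T[j+s\dd j+s+p)$ by giving a starting position of an occurrence of $H$ in $\T$. Next I would apply \cref{cor:poshigh} to decide in $\bigO(1)$ time whether $\SA[i] \in \Poshigh(\SA[i])$, by comparing $i - \LB_{\ell}(\SA[i])$ with $\deltalow(\SA[i]) + \deltamid(\SA[i])$; both pieces of data are in the input.

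If $\SA[i] \in \Poshigh(\SA[i])$, I would compute $z := \Lexpcut{\SA[i]}{2\ell} = \min(\Lexp(\SA[i]), \lfloor (2\ell - s)/p \rfloor)$ (using the input value $\Lexp(\SA[i])$) and then $x := s + zp = \rendhigh{\SA[i]} - \SA[i]$. By \cref{lm:select} applied with this $x$ and $e := \LB_{\ell}(\SA[i]) + \deltalow(\SA[i]) + \deltamid(\SA[i])$ (which matches $\LB_{2\ell}(\SA[i])+\deltahigh(\SA[i])$ via \cref{lm:delta}), for $\delta := e - i \in [0 \dd \deltahigh(\SA[i]))$, any $p' \in \rselect{\Pts_H}{x}{n}{m - \delta}$ with $m := \rcountb{\Pts_H}{x}{n}$ satisfies $p' - x \in \Occ_{2\ell}(\SA[i])$; I would return $j' := p' - x$. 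Both $m$ and the selection query are provided by \cref{as:periodic-it-2} in $\bigO(t)$ time, with $H$ specified by position $j+s$ and length $p$.

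If $\SA[i] \notin \Poshigh(\SA[i])$, I would instead set $z := \lfloor (2\ell - s)/p \rfloor$ and $x := s + (z+1)p$, and invoke \cref{lm:select-2}: it guarantees $\rcountb{\Pts_H}{x}{n} \geq 1$ and that any $p' \in \rselect{\Pts_H}{x}{n}{1}$ satisfies $p' - x \in \Occ_{2\ell}(\SA[i])$, so I return $j' := p' - x$. All arithmetic is $\bigO(1)$, the \cref{as:periodic-it-1} call is $\bigO(t)$, and exactly one range selection (plus at most one $\rcountb$ query) on $\Pts_H$ is performed, for a total of $\bigO(t)$. The only delicate point is the bookkeeping to verify that the selection indices fall into the valid ranges, which is exactly what \cref{lm:select-it-2} of \cref{lm:select} and \cref{lm:select-2-it-1} of \cref{lm:select-2} assert; this is the main obstacle handled already by the combinatorial lemmas.
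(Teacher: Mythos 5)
Your proposal is correct and follows essentially the same route as the paper's proof: derive $s$, $p$, and an occurrence of $H$ from $j$ via \cref{lm:occ-exp} and \cref{as:periodic-it-1}, test $\SA[i]\in\Poshigh(\SA[i])$ with \cref{cor:poshigh}, and then apply \cref{lm:select} (resp.\ \cref{lm:select-2}) to reduce to a single range selection on the int-string structure of \cref{as:periodic-it-2}. All the parameter choices ($x$, $e$, $\delta$, the selection rank) match the paper's.
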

\begin{proof}

  The main idea is as follows. The query algorithm first tests if
  $\SA[i] \in \Poshigh(\SA[i])$ holds. By \cref{cor:poshigh} such test
  can be performed quickly given the values $i$, $\LB_{\ell}(\SA[i])$,
  $\deltalow(\SA[i])$, and $\deltamid(\SA[i])$, which are all given as
  input. The rest of the query algorithm depends on this test. If
  $\SA[i] \in \Poshigh(\SA[i])$, then $j' \in \Occ_{2\ell}(\SA[i])$ is
  determined according to \cref{lm:select}, and it is computed using
  the generalized range selection query on the set of points $\Pts =
  \Points_{7\tau}(\T, \mathsf{P}_H)$ (where $H = \Lroot(\SA[i])$),
  which are supported using the algorithm from \cref{pr:sa-occ-eq}. On
  the other hand, if $\SA[i] \not\in \Poshigh(\SA[i])$, then we use
  \cref{lm:select-2} to compute some $j' \in
  \Occ_{2\ell}(\SA[i])$. This again reduces to a generalized range
  selection query on $\Pts$, but with different input parameters.

  Given any index $i \in [1 \dd n]$ such that $\SA[i] \in \R^{-}$,
  along with values $\LB_{\ell}(\SA[i])$, $k = \Lexp(\SA[i])$,
  $\deltalow(\SA[i])$, $\deltamid(\SA[i])$, and some $j \in
  \Occ_{\ell}(\SA[i])$, we compute $j' \in \Occ_{2\ell}(\SA[i])$ as
  follows. First, using \cref{as:periodic} we compute values $s =
  \Lhead(j) = \Lhead(\SA[i])$ and $p = |\Lroot(j)| = |\Lroot(\SA[i])|$
  in $\bigO(t)$ time. Note, that then the position $r = j + s$
  satisfies $\Lroot(\SA[i]) = \T[r \dd r + p)$, i.e., we have a
  starting position of an occurrence of $H := \Lroot(j) =
  \Lroot(\SA[i])$ in $\T$. Denote $\Pts = \Points_{7\tau}(\T,
  E^{-}_H)$ (see \cref{def:p-right-context}). In $\bigO(1)$ time we
  calculate $k_2 := \Lexpcut{\SA[i]}{2\ell} = \min(k, \lfloor
  \tfrac{2\ell - s}{p} \rfloor)$.  Next, in $\bigO(1)$ time we
  determine if $\SA[i] \in \Poshigh(\SA[i])$. By \cref{cor:poshigh},
  $\SA[i] \in \Poshigh(\SA[i])$ holds if and only if $i \leq
  \LB_{\ell}(\SA[i]) + \deltalow(\SA[i]) +
  \deltamid(\SA[i])$. Consider two cases:
  \begin{itemize}
  \item Assume $i \leq \LB_{\ell}(\SA[i]) + \deltalow(\SA[i]) +
    \deltamid(\SA[i])$ (i.e., $\SA[i] \in \Poshigh(\SA[i])$). First,
    we compute $x = \rendhigh{\SA[i]} - \SA[i] = \Lhead(\SA[i]) +
    \Lexpcut{\SA[i]}{2\ell}|\Lroot(\SA[i])| = s + k_2p$, $e =
    \LB_{\ell}(\SA[i]) + \deltalow(\SA[i]) + \deltamid(\SA[i])$, and
    $\delta = e - i$ in $\bigO(1)$ time. Then, using the query defined
    by \cref{prob:int-str-it-1} of \cref{prob:int-str} we compute the
    value $m = \rcountb{\Pts}{x}{n}$. By \cref{as:periodic}, this
    takes $\bigO(t)$ time. Next, using the query defined by
    \cref{prob:int-str-it-2} of \cref{prob:int-str} we compute some $p
    \in \rselect{\Pts}{x}{n}{m - \delta}$. This again takes $\bigO(t)$
    time.  Finally, we calculate $j' = p - x$.  By
    \cref{lm:select-it-2} from \cref{lm:select}, it holds $j' \in
    \Occ_{2\ell}(\SA[i])$.
  \item Assume now $i > \LB_{\ell}(\SA[i]) + \deltalow(\SA[i]) +
    \deltamid(\SA[i])$ (i.e., $\SA[i] \not\in \Poshigh(\SA[i])$).
    First, calculate $z = \lfloor \tfrac{2\ell - s}{p} \rfloor$ and $x
    = s + (z + 1)p$.  Next, using \cref{prob:int-str-it-2} of
    \cref{prob:int-str} we compute some $p \in
    \rselect{\Pts}{x}{n}{1}$. By \cref{as:periodic}, this takes
    $\bigO(t)$ time.  Finally, we calculate $j' = p - x$. By
    \cref{lm:select-2-it-2} from \cref{lm:select-2}, it holds $j' \in
    \Occ_{2\ell}(\SA[i])$. \qedhere
  \end{itemize}
\end{proof}

\subsubsection{The Data Structure}\label{sec:sa-periodic-structure}


By combining the above results, under \cref{as:periodic}, we obtain
the following efficient algorithm to ``refine'' $(\LB_{\ell}(\SA[i],
\UB_{\ell}(\SA[i]))$ into $(\LB_{2\ell}(\SA[i]), \UB_{2\ell}(\SA[i]))$
and to simultaneously compute some $j' \in \Occ_{2\ell}(\SA[i])$.

\begin{proposition}\label{pr:sa-periodic}
  Let $i \in [1 \dd n]$ be such that $\SA[i] \in \R$.  Under
  \cref{as:periodic}, given the values $i$, $\LB_{\ell}(\SA[i])$,
  $\UB_{\ell}(\SA[i])$, and some $j \in \Occ_{\ell}(\SA[i])$, we can
  compute $(\LB_{2\ell}(\SA[i]), \UB_{2\ell}(\SA[i]))$ and some $j'
  \in \Occ_{2\ell}(\SA[i])$ in $\bigO(t)$ time.
\end{proposition}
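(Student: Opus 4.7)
The plan is to execute the steps outlined in the technical overview by chaining together the propositions proved in the previous subsections, following the type of $\SA[i]$. I would first invoke \cref{pr:sa-type} on the input $(i, \LB_\ell(\SA[i]), \UB_\ell(\SA[i]), j)$ to obtain $\type(\SA[i]) \in \{-1, +1\}$ in $\bigO(t)$ time. The rest of the computation then splits into two symmetric cases; I describe the case $\type(\SA[i]) = -1$ (i.e., $\SA[i] \in \R^{-}$) since the case $\type(\SA[i]) = +1$ is handled by an identical argument that substitutes $\Posallg$ for $\Posalll$ and the symmetric $+1$-variants of \cref{pr:sa-delta-low-high-2,pr:sa-exp,pr:sa-delta-mid-2,pr:sa-periodic-occ-pos,pr:sa-occ-size}, and uses the relation $\UB_{2\ell}(\SA[i]) = \UB_\ell(\SA[i]) - |\Posallg(\SA[i])|$ in place of $\LB_{2\ell}(\SA[i]) = \LB_\ell(\SA[i]) + |\Posalll(\SA[i])|$.

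Assuming $\SA[i] \in \R^{-}$, the sequence of calls is as follows. First, applying \cref{pr:sa-delta-low-high-2} with the input position $j \in \Occ_{\ell}(\SA[i])$ yields $\deltalow(\SA[i])$. Feeding this value together with $(i, \LB_\ell(\SA[i]), \UB_\ell(\SA[i]), j)$ into \cref{pr:sa-exp} gives $\Lexp(\SA[i])$. Next, \cref{pr:sa-delta-mid-2} applied to $j$ and $\Lexp(\SA[i])$ returns $\deltamid(\SA[i])$. At this point all prerequisites of \cref{pr:sa-periodic-occ-pos} (namely $i$, $\LB_\ell(\SA[i])$, $\UB_\ell(\SA[i])$, $\deltalow(\SA[i])$, $\deltamid(\SA[i])$, $\Lexp(\SA[i])$, and $j$) are available, so one more call produces some $j' \in \Occ_{2\ell}(\SA[i])$.

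Having $j'$ in hand, I invoke \cref{pr:sa-delta-low-high-2} again, this time with $j' \in \Occ_{2\ell}(\SA[i])$, to obtain $\deltahigh(\SA[i])$. By \cref{lm:delta} this gives
\[
  |\Posalll(\SA[i])| = \deltalow(\SA[i]) + \deltamid(\SA[i]) - \deltahigh(\SA[i]),
\]
and hence $\LB_{2\ell}(\SA[i]) = \LB_\ell(\SA[i]) + |\Posalll(\SA[i])|$ by the identity recalled at the start of \cref{sec:sa}. To finish, I need $|\Occ_{2\ell}(\SA[i])|$. Since $\Occ_{2\ell}(\SA[i]) \subseteq \R$ partitions as the disjoint union of $\Occm_{2\ell}(\SA[i])$ and $\Occp_{2\ell}(\SA[i])$, and since $j' \in \Occ_{2\ell}(\SA[i])$ implies $\Occm_{2\ell}(\SA[i]) = \Occm_{2\ell}(j')$ and $\Occp_{2\ell}(\SA[i]) = \Occp_{2\ell}(j')$, I apply \cref{pr:sa-occ-size} (and its symmetric analogue for $\Occp$) to $j'$ to obtain both sizes, and set $\UB_{2\ell}(\SA[i]) = \LB_{2\ell}(\SA[i]) + |\Occm_{2\ell}(\SA[i])| + |\Occp_{2\ell}(\SA[i])|$.

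Each of the seven invocations costs $\bigO(t)$ time by the corresponding proposition, so the total running time is $\bigO(t)$. The main conceptual hurdle is not any single step but verifying that the prerequisites of each call are actually available at the moment it is made: in particular that \cref{pr:sa-delta-low-high-2} is flexible enough to accept either an $\Occ_\ell$-witness (to produce $\deltalow$) or an $\Occ_{2\ell}$-witness (to produce $\deltahigh$) without knowing $\type(j')$, and that \cref{pr:sa-periodic-occ-pos} does not require prior knowledge of $\deltahigh(\SA[i])$ or $j'$ itself. Both are guaranteed by the way those auxiliary propositions were phrased, so no further combinatorial argument is needed beyond combining them in the order above.
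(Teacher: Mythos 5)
Your proposal is correct and follows essentially the same route as the paper's proof: the same seven-step chain (type, then $\deltalow$, $\Lexp$, $\deltamid$, the witness $j'$, $\deltahigh$, and finally the two $\Occ_{2\ell}$ sizes), combined via \cref{lm:delta} and the symmetric treatment of $\type(\SA[i])=+1$. No gaps.
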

\begin{proof}

  The algorithm consists of seven steps:
  \begin{enumerate}
  \item\label{sa:step-1} First, using \cref{pr:sa-type}, we compute in
    $\bigO(t)$ time the value of $\type(\SA[i])$. The query algorithm
    needs the index $i$ along with $\LB_{\ell}(\SA[i])$,
    $\UB_{\ell}(\SA[i])$, and some $j \in \Occ_{\ell}(\SA[i])$, which
    at this point we have as input. Let us assume $\type(\SA[i]) = -1$
    (the case $\type(\SA[i]) = +1$ is explained at the end of the
    proof).
  \item\label{sa:step-2} Next, using \cref{pr:sa-delta-low-high-2}, we
    compute in $\bigO(t)$ time the value of $\deltalow(\SA[i])$. The
    algorithm only needs some $j \in \Occ_{\ell}(\SA[i])$, which we
    are given as input.
  \item\label{sa:step-3} In the third step, using \cref{pr:sa-exp}, we
    compute in $\bigO(t)$ time the value of $\Lexp(\SA[i])$. The
    algorithm needs the values of $i$, $\LB_{\ell}(\SA[i])$,
    $\UB_{\ell}(\SA[i])$, some $j \in \Occ_{\ell}(\SA[i])$, and
    $\deltalow(\SA[i])$. The first four values are given as input, and
    the fifth value was computed in Step~\ref{sa:step-2}.
  \item\label{sa:step-4} Next, using \cref{pr:sa-delta-mid-2}, we
    compute in $\bigO(t)$ time the value $\deltamid(\SA[i])$. The
    algorithm needs some position $j \in \Occ_{\ell}(\SA[i])$ and the
    value $\Lexp(\SA[i])$ as input. The first argument is given as
    input, and the second value we computed in Step~\ref{sa:step-3}.
  \item\label{sa:step-5} Next, using \cref{pr:sa-periodic-occ-pos}, we
    compute in $\bigO(t)$ time some position $j' \in
    \Occ_{2\ell}(\SA[i])$.  The algorithm needs values $i$,
    $\LB_{\ell}(\SA[i])$, $\UB_{\ell}(\SA[i])$, some position $j \in
    \Occ_{\ell}(\SA[i])$, $\deltalow(\SA[i])$, $\Lexp(\SA[i])$, and
    $\deltamid(\SA[i])$ as input. The first four arguments are given
    as input.  The remaining three we computed in
    Step~\ref{sa:step-2}, Step~\ref{sa:step-3}, and
    Step~\ref{sa:step-4}.
  \item\label{sa:step-6} Next, using \cref{pr:sa-delta-low-high-2}, we
    compute in $\bigO(t)$ time the value of $\deltahigh(\SA[i])$. The
    algorithm only needs some $j' \in \Occ_{2\ell}(\SA[i])$, which we
    computed in Step~\ref{sa:step-5}.
  \item\label{sa:step-7} Finally, using \cref{pr:sa-occ-size} and its
    symmetric version adapted according to \cref{lm:exp} (see also
    below), we compute in $\bigO(t)$ time the values $m^{-} :=
    |\Occm_{2\ell}(j')|$ and $m^{+} := |\Occp_{2\ell}(j')|$, where $j'
    \in \Occ_{2\ell}(\SA[i])$.  Note that by definition of
    $\Occ_{2\ell}(\SA[i])$ and $j' \in \Occ_{2\ell}(\SA[i])$, we have
    $|\Occm_{2\ell}(\SA[i])| = |\Occm_{2\ell}(j')|$ and
    $|\Occp_{2\ell}(\SA[i])| = |\Occp_{2\ell}(j')|$. Observe now that
    by \cref{lm:occ-exp-item-1} of \cref{lm:occ-exp}, for any $p \in
    \Occ_{2\ell}(\SA[i])$, it holds $p \in \R$. Thus,
    $\Occ_{2\ell}(\SA[i])$ is a disjoint union of
    $\Occm_{2\ell}(\SA[i])$ and $\Occp_{2\ell}(\SA[i])$, and hence
    denoting $m = |\Occ_{2\ell}(\SA[i])|$, we have $m = m^{-} +
    m^{+}$. To compute $m^{-}$ and $m^{+}$, the query algorithms in
    \cref{pr:sa-occ-size} and its symmetric counterpart only require
    some position $j' \in \Occ_{2\ell}(\SA[i])$ as input.  We obtained
    such position in Step~\ref{sa:step-5}.
  \end{enumerate}
  After executing the above steps, we already have $j' \in
  \Occ_{2\ell}(\SA[i])$ (computed in Step~\ref{sa:step-5}) Thus, it
  remains to calculate the pair $(\LB_{2\ell}(\SA[i]),
  \UB_{2\ell}(\SA[i]))$.  For this, in $\bigO(1)$ time we first obtain
  $\deltal_{\ell}(\SA[i]) = \deltalow(\SA[i]) + \deltamid(\SA[i]) -
  \deltahigh(\SA[i])$ by \cref{lm:delta}.  We then obtain the output
  as $(\LB_{2\ell}(\SA[i]), \UB_{2\ell}(\SA[i])) = (\LB_{\ell}(\SA[i])
  + \deltal_{\ell}(\SA[i]), \LB_{\ell}(\SA[i]) +
  \deltal_{\ell}(\SA[i]) + m)$. In total, the query takes $\bigO(t)$
  time.

  If in Step~\ref{sa:step-1}, we have $\type(\SA[i]) = +1$, then in
  Steps~\ref{sa:step-2}--\ref{sa:step-6} instead of queries on
  $\Points_{7\tau}(\T, E^{-}_H)$ and $\mathcal{I}^{-}_H$ in
  \cref{as:periodic}, we perform the queries on $\Points_{7\tau}(\T,
  E^{+}_H)$ and $\mathcal{I}^{+}_H$. By \cref{lm:exp}, all remaining
  details are symmetrical.  In particular, $\deltag_{\ell}(\SA[i]) =
  \deltalow(\SA[i]) + \deltamid(\SA[i]) - \deltahigh(\SA[i])$ and
  $(\LB_{2\ell}(\SA[i]), \UB_{2\ell}(\SA[i])) = (\UB_{\ell}(\SA[i]) -
  \deltag_{\ell}(\SA[i]) - m, \UB_{\ell}(\SA[i]) -
  \deltag_{\ell}(\SA[i]))$. Definitions of sets $\Poslow(j)$,
  $\Posmid(j)$, and $\Poshigh(j)$ are adapted for $j \in \R^{+}$ as
  follows:
  \begin{align*}
    \Poslow(j) &=
      \{j' \in \R_{s,H}^{+} : \Lexp(j') = k_1\text{ and }(\T[j' \dd n]
      \preceq \T[j \dd n]\text{ or }\LCE_{\T}(j,j') \geq \ell)\},\\
    \Posmid(j) &=
      \{j' \in \R_{s,H}^{+} : \Lexp(j') \in (k_1 \dd k_2]\},\text{
      and}\\
    \Poshigh(j) &=
      \{j' \in \R_{s,H}^{+} : \Lexp(j') = k_2\text{ and }(\T[j' \dd n]
      \preceq \T[j \dd n]\text{ or }\LCE_{\T}(j, j') \geq 2\ell)\}.
  \end{align*}
  By the symmetric version of \cref{lm:delta}, for $j \in \R^{+}$,
  $\deltag_{\ell}(j) = \deltalow(j) + \deltamid(j) - \deltahigh(j)$.
\end{proof}

\subsection{The Final Data Structure}\label{sec:sa-final}

We are ready to present the general $\SA$ query algorithm (returning
$\SA[i]$ given any $i \in [1 \dd n]$).

The section is organized as follows. First, we show how to combine
\cref{sec:sa-core}, \cref{sec:sa-nonperiodic}, and
\cref{sec:sa-periodic} to obtain a query algorithm that for any $\ell
\in [16 \dd n)$, given any index $i$ along with the values
$\LB_{\ell}(\SA[i])$, $\UB_{\ell}(\SA[i])$, and some $j \in
\Occ_{\ell}(\SA[i])$ as input, computes the pair
$(\LB_{2\ell}(\SA[i]), \UB_{2\ell}(\SA[i]))$ and some $j' \in
\Occ_{2\ell}(\SA[i])$ (\cref{pr:sa-range}). We then combine this
result with the query algorithm that given $i \in [1 \dd n]$ computes
the pair $(\LB_{16}(\SA[i]), \UB_{16}(\SA[i]))$ and some $j' \in
\Occ_{16}(\SA[i])$ (\cref{as:small}) to obtain the final algorithm for
$\SA$ queries (\cref{pr:sa}).

\begin{proposition}\label{pr:sa-range}
  For any $\ell \in [16 \dd n)$, under
  \cref{as:core,as:nonperiodic,as:periodic}, given an index $i \in [1
  \dd n]$ along with the pair $(\LB_{\ell}(\SA[i]),
  \UB_{\ell}(\SA[i]))$ and some position $j \in \Occ_{\ell}(\SA[i])$
  as input, we can compute the pair $(\LB_{2\ell}(\SA[i]),
  \UB_{2\ell}(\SA[i]))$ and some $j' \in \Occ_{2\ell}(\SA[i])$ in
  $\bigO(t)$ time.
\end{proposition}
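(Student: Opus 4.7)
The plan is to simply combine the three subroutines developed in the preceding subsections, dispatching on whether $\SA[i]$ is a periodic or a nonperiodic position. Let $\tau = \lfloor \tfrac{\ell}{3} \rfloor$. The given inputs are exactly what each subroutine requires as a prerequisite, so the combination is essentially mechanical.

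First, I would apply \cref{pr:sa-core} to decide, using the input $j \in \Occ_{\ell}(\SA[i])$ and \cref{as:core}, whether $\SA[i] \in \R(\tau,\T)$; this is the classification step and costs $\bigO(t)$ time. Crucially, the check needs only $j$ (not $\SA[i]$ itself), because membership in $\R(\tau,\T)$ depends solely on the length-$(3\tau-1)$ context, and by $3\tau - 1 \le \ell$ together with $j \in \Occ_{\ell}(\SA[i])$ this context is identical for $j$ and $\SA[i]$.

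If the classification returns $\SA[i] \in [1 \dd n] \sm \R(\tau,\T)$ (the nonperiodic case), I would invoke \cref{pr:sa-nonperiodic} with the inputs $i$, $\LB_{\ell}(\SA[i])$, $\UB_{\ell}(\SA[i])$, and $j$, which under \cref{as:nonperiodic} returns the pair $(\LB_{2\ell}(\SA[i]), \UB_{2\ell}(\SA[i]))$ together with some $j' \in \Occ_{2\ell}(\SA[i])$ in $\bigO(t)$ time. Otherwise (the periodic case $\SA[i] \in \R(\tau,\T)$), I would invoke \cref{pr:sa-periodic} with the same inputs, which under \cref{as:periodic} returns the same output in $\bigO(t)$ time.

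In both branches all the required input values are either given or already computed, and each step is charged $\bigO(t)$ time, so the total cost is $\bigO(t)$. There is no real obstacle here, since the combinatorial and algorithmic heavy lifting has already been carried out in \cref{sec:sa-core,sec:sa-nonperiodic,sec:sa-periodic}; the only thing to verify is that the classification step correctly selects which of \cref{pr:sa-nonperiodic} or \cref{pr:sa-periodic} to call, and this follows immediately from \cref{pr:sa-core}.
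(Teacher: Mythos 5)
Your proposal is correct and matches the paper's own proof: classify $\SA[i]$ as periodic or nonperiodic via \cref{pr:sa-core} (using \cref{as:core} and the given $j\in\Occ_{\ell}(\SA[i])$), then dispatch to \cref{pr:sa-nonperiodic} or \cref{pr:sa-periodic} with the given inputs, each costing $\bigO(t)$ time.
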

\begin{proof}

  First, using \cref{as:core} we check in $\bigO(t)$ if $\SA[i] \in
  \R$.  If $\SA[i] \not\in \R$, then we compute the output in
  $\bigO(t)$ time using \cref{pr:sa-nonperiodic}.  Otherwise, we
  compute the output in $\bigO(t)$ time using \cref{pr:sa-periodic}.
\end{proof}

\begin{proposition}\label{pr:sa}
  Under \cref{as:small} and \cref{as:core,as:nonperiodic,as:periodic}
  for $\ell = 2^q$, where $q \in [4 \dd \lceil \log n \rceil)$, given
  any $i \in [1 \dd n]$, we can compute $\SA[i]$ in $\bigO(t \log n)$
  time.
\end{proposition}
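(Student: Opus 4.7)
The plan is a straightforward induction on the length parameter $\ell$, driving the iteration by repeatedly invoking \cref{pr:sa-range} to double $\ell$. First I would use \cref{as:small} to obtain, in $\bigO(t)$ time, a position $j_4 \in \Occ_{16}(\SA[i])$ together with the pair $(\LB_{16}(\SA[i]), \UB_{16}(\SA[i]))$; this serves as the base case with $\ell = 2^4 = 16$.

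Then, for each $q = 4, 5, \ldots, \lceil \log n \rceil - 1$ in turn, with $\ell := 2^q$, I would feed the current triple $(j_q, \LB_{\ell}(\SA[i]), \UB_{\ell}(\SA[i]))$ (where $j_q \in \Occ_{\ell}(\SA[i])$) into the algorithm of \cref{pr:sa-range}, which is applicable under \cref{as:core,as:nonperiodic,as:periodic} for the current value of $\ell$. Its output is a triple $(j_{q+1}, \LB_{2\ell}(\SA[i]), \UB_{2\ell}(\SA[i]))$ with $j_{q+1} \in \Occ_{2\ell}(\SA[i])$, computed in $\bigO(t)$ time. Iterating yields, after $\bigO(\log n)$ rounds, a position $j^{\star} \in \Occ_{L}(\SA[i])$ where $L := 2^{\lceil \log n \rceil} \geq n$.

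To finish, I would observe that $|\Occ_{L}(\SA[i])| = 1$: since $\T[n] = \texttt{\$}$ occurs uniquely in $\T$, any two distinct positions $p_1, p_2 \in [1 \dd n]$ satisfy $\LCE_{\T}(p_1, p_2) < n \le L$, so $\T^{\infty}[p_1 \dd p_1 + L) \neq \T^{\infty}[p_2 \dd p_2 + L)$. Thus $\Occ_{L}(\SA[i]) = \{\SA[i]\}$, forcing $j^{\star} = \SA[i]$, which is then returned.

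The only step requiring care is the time accounting: each of the $\bigO(\log n)$ calls to \cref{pr:sa-range} (and the single initial call to \cref{as:small}) costs $\bigO(t)$, summing to $\bigO(t \log n)$. The main obstacle in assembling the overall $\SA$ query has already been absorbed by \cref{pr:sa-range}, so this final proposition is essentially an outer loop plus the observation that the iteration terminates correctly once $\ell$ exceeds $n$; no new combinatorial work is needed.
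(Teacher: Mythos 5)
Your proposal is correct and matches the paper's proof essentially verbatim: invoke \cref{as:small} for the base case $\ell=16$, iterate \cref{pr:sa-range} for $q = 4, \ldots, \lceil \log n \rceil - 1$, and conclude from $\Occ_{2^{\lceil \log n\rceil}}(\SA[i]) = \{\SA[i]\}$ with the stated $\bigO(t\log n)$ accounting. Your added justification for the singleton via the uniqueness of $\T[n]=\texttt{\$}$ is a minor elaboration the paper leaves implicit.
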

\begin{proof}

  Given any position $i \in [1 \dd n]$, we compute $\SA[i]$ as
  follows. First, using \cref{as:small}, we compute the pair
  $(\LB_{16}(\SA[i]), \UB_{16}(\SA[i]))$ and some $j \in
  \Occ_{16}(\SA[i])$ in $\bigO(t)$ time. Then, for $q = 4, \ldots,
  \lceil \log n \rceil - 1$, we use \cref{pr:sa-range} with $\ell =
  2^q$ to compute in $\bigO(t)$ time the pair $(\LB_{2^{q+1}}(\SA[i]),
  \UB_{2^{q+1}}(\SA[i]))$ and some $j' \in \Occ_{2^{q+1}}(\SA[i])$,
  given position $i$ along with the pair $(\LB_{2^q}(\SA[i]),
  \UB_{2^q}(\SA[i]))$ and some $j \in \Occ_{2^q}(\SA[i])$ as
  input. After executing all steps, we have $(\LB_{\ell}(\SA[i]),
  \UB_{\ell}(\SA[i]))$ and some $j' \in \Occ_{\ell}(\SA[i])$, where
  $\ell = 2^{\lceil \log n \rceil} \geq n$. Since for any $k \geq n$,
  we have $\Occ_{k}(\SA[i]) = \{\SA[i]\}$, we finally return $\SA[i] =
  j'$.  In total, the query takes $\bigO(t \log n)$ time.
\end{proof}

\section{Balanced Signature Parsing}\label{sec:parsing}

\subsection{Symbols and Grammars}

For an alphabet $\Sigma$, we define the set (algebraic data type)
$\Symb$ of \emph{symbols} over $\Sigma$ as the least fixed point of
the following equation, where $\letter$ and $\block$ are two named
constructors:
\[
  \Symb = \{\letter(a): a\in \Sigma\} \cup \{\block(S) : S\in
  \Symb^+\},
\] 
We define a recursive \emph{expansion} function $\str : \Symb\to
\Sigma^+$ with
\[
  \str(X) =
  \begin{cases}
    a & \text{if }X=\letter(a)\text{ for
    }a\in\Sigma,\\ \bigodot_{i=1}^{|S|}\str(S[i]) &
    \text{if }X=\block(S)\text{ for }S\in \Symb^+,
  \end{cases}
\]
and we lift it to $\str : \Symb^*\to \Sigma^*$ with
$\str(S)=\bigodot_{i=1}^{|S|}\str(S[i])$.  We further define a
recursive \emph{level} function $\level : \Symb \to \Zz$ with
\[
  \level(X) =
  \begin{cases}
    0 & \text{if }X= \letter(a)\text{ for }a\in
    \Sigma,\\ \max_{i=1}^{|S|}\level(S[i])& \text{if
    }X=\block(S)\text{ for }S\in \Symb^+.
  \end{cases}
\]

We say that a set $\Gr\sub \Symb$ is a \emph{grammar} if $\Gr \sub
\{\letter(a):a\in \Sigma\}\cup \{\block(S) : S\in \Gr^+\}$. Our
algorithms maintain grammars using the following interface.

\newcommand{\iL}{\mathsf{insertLetter}}
\newcommand{\iT}{\mathsf{insertString}}
\newcommand{\iP}{\mathsf{insertPower}}

\begin{lemma}[Grammar Representation]\label{lem:gr}
  For every constant $C\in \Zz$, there exists a data structure that
  maintains a grammar $\Gr\sub \Symb$ subject to the following
  operations, where each symbol $X\in \Gr$ is represented by a unique
  identifier $\id(X)\in [0\dd g)$, where $g=|\Gr|$:
  \begin{description}
  \item[$\degree(X)$:] Given $X\in \Gr$, return $0$ if $X=\letter(a)$
    and $|S|$ if $X=\block(S)$.
  \item[$\letter^{-1}(X)$:] Given $X=\letter(a)$, return $a$.
  \item[$\block^{-1}(X,i)$:] Given $X=\block(S)$ and $i\in [1\dd
    |S|]$, return $S[i]$.
  \item[$\level(X)$:] Given $X\in \Gr$, return $\level(X)$.
  \item[$\length(X)$:] Given $X\in\Gr$, return $|\str(X)|$.
  \item[$\preflength(X,i)$:] Given $X=\block(S)$ and $i\in [0\dd
    |S|]$, return $|\str(S[1\dd i])|$.
  \item[$\iL(a)$:] Given $a\in \Sigma$, set $\Gr :=
    \Gr\cup\{\letter(a)\}$ and return $\letter(a)$.
  \item[$\iT(S)$:] Given $S\in \Gr^+$ with $|S|\le C$, set $\Gr :=
    \Gr\cup\{\block(S)\}$ and return $\block(S)$.
  \item[$\iP(Y,m)$:] Given $Y\in \Gr$ and $m\in \Zp$, set $\Gr :=
    \Gr\cup\{\block(Y^m)\}$ and return $\block(Y^m)$.
  \end{description}
  In the $\Omega(\log(\sigma + \sum_{X\in \Gr}\length(X)))$-bit word
  RAM model, the data structure size is $\Oh(g)$, insertions cost
  $\Oh(\frac{\log^2 \log g}{\log \log \log g})$ time, and queries cost
  $\Oh(1)$ time.
\end{lemma}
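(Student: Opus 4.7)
The plan is to store, for every symbol $X \in \Gr$, a constant-size record indexed by its identifier $\id(X) \in [0 \dd g)$. The record encodes: (i) a tag distinguishing the three creation routes $\iL$, $\iT$, $\iP$; (ii) either the letter $a$, the packed tuple of child identifiers $(\id(S[1]), \ldots, \id(S[|S|]))$ with $|S| \le C$, or the pair $(\id(Y), m)$; (iii) the scalars $\level(X)$ and $\length(X)$; and (iv) for $\iT$-created blocks, the prefix-sum array $L[0 \dd |S|]$ with $L[i] = \sum_{j=1}^{i}\length(S[j])$. Since $C$ is constant and every stored number fits in a machine word by the word-size assumption, each record occupies $\Oh(1)$ words, for a total space of $\Oh(g)$.

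Given this layout, every query reduces to reading the record and performing $\Oh(1)$ arithmetic. The operations $\letter^{-1}$, $\level$, and $\length$ are direct field reads; $\degree(X)$ returns the arity $|S|$ of the packed tuple for an $\iT$-block or the exponent $m$ for an $\iP$-block; $\block^{-1}(X,i)$ returns the $i$-th entry of the stored tuple, or simply $Y$ (since the underlying sequence of a power is $Y^m$); and $\preflength(X,i)$ equals $L[i]$ for stored tuples and $i \cdot \length(Y)$ for powers. All queries therefore run in $\Oh(1)$ worst-case time, regardless of the creation route.

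For insertions, the only nontrivial obligation is detecting whether the symbol to be inserted already exists, so that $\Gr$ remains a set. I would maintain three deterministic dynamic dictionaries: $D_{\letter}\colon \Sigma \to [0\dd g)$ sending $a$ to $\id(\letter(a))$; $D_{\block}$ sending a packed tuple in $\bigcup_{k \le C}[0\dd g)^k$ to the identifier of the corresponding $\block$-symbol; and $D_{\Pow}$ sending $(\id(Y), m)$ to $\id(\block(Y^m))$. Every key packs into $\Oh(1)$ machine words, and the combined universe has size polynomial in $\sigma + g + \max_X \length(X)$, so it fits in the assumed word width. An insertion then packs its input into the appropriate key, queries the relevant dictionary, and on a miss allocates the next free identifier, constructs the record (computing level, length, and prefix sums from the children's records in $\Oh(1)$ time), and writes the key into the dictionary; on a hit it simply returns the stored identifier.

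The main technical ingredient, and the only place where we do not achieve $\Oh(1)$ worst-case time per operation, is the dynamic dictionary itself: what we need is a deterministic dynamic dictionary on a polynomially bounded universe that supports $\Oh(1)$ worst-case queries together with $\Oh(\log^2 \log n / \log \log \log n)$ worst-case insertions. Such structures are available in the literature (for example via deterministic perfect hashing combined with exponential search trees in the Andersson--Thorup tradition); invoking one off the shelf yields the per-insertion bound of $\Oh(\log^2 \log g / \log \log \log g)$ stated in the lemma, while all other costs remain $\Oh(1)$ by the record-based design above.
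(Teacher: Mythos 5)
Your construction is essentially the paper's: a constant-size record per identifier (letter, packed child tuple of arity at most $C$, or pair $(\id(Y),m)$ for powers), precomputed $\level$ and $\length$ for $\Oh(1)$ queries, and a deterministic dynamic dictionary on the packed encodings to deduplicate at insertion time, which is where the $\Oh(\log^2\log g/\log\log\log g)$ cost comes from.

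There is, however, one concrete flaw in your design: you key deduplication on the \emph{creation route}, with separate dictionaries $D_{\block}$ (tuples) and $D_{\Pow}$ (powers) and a tag recording which constructor was used. But $\block(Y^m)$ is the \emph{same symbol} whether it is produced by $\iP(Y,m)$ or by $\iT(Y^m)$ with $m\le C$, and the lemma requires $\Gr$ to be a set with a unique identifier per symbol. Under your scheme, calling $\iT(YY)$ and later $\iP(Y,2)$ creates two distinct records and two identifiers for one symbol, which breaks uniqueness (and hence equality testing of symbols by identifier, on which the rest of the construction relies). The fix is cheap and is exactly what the paper does: before inserting, canonicalize the representation by checking whether the given tuple $S$ equals $Y^m$ for some $Y$ and $m\in\Zp$ (an $\Oh(1)$ test since $|S|\le C$), and always store and look up powers in the $(\id(Y),m)$ form. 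With that adjustment your argument goes through.
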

\begin{proof} 
  Internally, the symbols are stored in array $A[0\dd g)$ with the
  following entries $A[\id(X)]$ for $X\in \Gr$:
  \begin{itemize}
    \item if $X=\letter(a)$, then $A[\id(X)]=a$;
    \item if $X=\block(Y^m)$ for $Y\in \Gr$ and $m\in \Zp$, then
      $A[\id(X)]=(\id(Y),m)$;
    \item otherwise, $X=\block(S)$ for $S\in \Gr^{m}$ with $m\in [1\dd
      C]$, and $A[\id(X)]=\id(S[1])\cdots \id(S[m])$.
  \end{itemize}
  Additionally, we maintain the inverse mapping $A^{-1}$ using a
  dynamic deterministic dictionary~\cite{wexp}, as well as the
  precomputed values $\level(X)$ and $\length(X)$ for all symbols
  $X\in \Gr$.  The array $A$ and the precomputed values allow for an
  easy implementation of all queries in $\Oh(1)$ time.  As for
  insertions, we first build the internal representation of the
  inserted symbol $X$ (note that $\iT(S)$ needs to check if $S=Y^m$
  for some $Y\in \Symb$ and $m\in \Zp$). Then, we perform a dictionary
  lookup for $X$ in $A^{-1}$.  If the lookup is successful, we
  retrieve $\id(X)$ and return this identifier as the reference to
  $X$. Otherwise, we set $\id(X):=g$, increment $g$, insert $X$ to
  $\Gr$ (by adding the corresponding entries to $A$ and $A^{-1}$), and
  compute $\level(X)$ as well as $\length(X)$.  The running time is
  dominated by the cost $\Oh(\frac{\log^2 \log g}{\log \log \log g})$
  of a dictionary lookup and insertion.
\end{proof}

\subsection{Parse Trees}

Every symbol $X\in \Symb$ can be interpreted as a rooted ordered
\emph{parse tree} $\Tr(X)$ with each node $\nu$ associated to a symbol
$\symb(\nu)\in \Symb$. The root of $\Tr(X)$ is a node $\rho$ with
$\symb(\rho)=X$.  If $X=\letter(a)$ for $a\in \Sigma$, then $\rho$ has
no children.  Otherwise, $X=\block(S)$ for $S\in \Symb^m$ and $m\in
\Zp$; then, $\rho$ has $m$ children, and the subtree rooted at the
$i$th child is (a copy of) $\Tr(S[i])$.

Each node $\nu$ of $\Tr(X)$ is associated with a fragment $\str(\nu)$
of $\str(X)$ matching $\str(\symb(\nu))$.  For the root $\nu$, we
define $\str(\rho)=\str(X)[1\dd \length(X)]$ to be the whole
$\str(X)$.  Moreover, if $\nu_1,\ldots,\nu_m$ are the children of a
node $\nu$, then $\str(\nu_i)=\str(\nu)(r_{i-1}\dd r_i]$, where $r_i =
\preflength(\symb(\nu),i)$; this way, $\str(\nu)$ is the concatenation
of fragments $\str(\nu_i)$.

\begin{lemma}\label{lem:pt}
  Based on the interface of $\Gr$ provided in \cref{lem:gr}, for every
  $X\in \Gr$, one can implement references to nodes $\nu$ of $\Tr(X)$
  so that the following operations cost $\Oh(1)$ time:
  \begin{description}
  \item[$\root(X)$:] Given $X\in \Gr$, return the root of $\Tr(X)$.
  \item[$\symb(\nu)$:] Given a node $\nu$ of $\Tr(X)$, return the
    symbol $\symb(\nu)$.
  \item[$\str(\nu)$:] Given a node $\nu$ of $\Tr(X)$, return the
    fragment $\str(\nu)$.
  \item[$\child(\nu,i)$:] Given a node $\nu$ of $\Tr(X)$ and $i\in
    [1\dd \degree(\symb(\nu))]$, return the $i$th child of $\nu$.
  \item[$\parent(\nu)$:] Given a node $\nu$ of $\Tr(X)$, return $\bot$
    if $\nu=\root(X)$, and otherwise the parent of $\nu$.
  \item[$\indexop(\nu)$:] Given a node $\nu$ of $\Tr(X)$, return
    $\bot$ if $\nu=\root(X)$, and otherwise the value $i\in \Zp$ such
    that $\nu = \child(\parent(\nu),i)$.
  \end{description}
\end{lemma}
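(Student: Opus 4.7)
\medskip

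The plan is to represent each referenced node $\nu$ of $\Tr(X)$ by a small, dynamically allocated record that stores exactly the data needed to answer every listed query in $\Oh(1)$ time by a single lookup or by one constant-time call to the grammar interface of \cref{lem:gr}. Concretely, a node record for $\nu$ in $\Tr(X)$ will hold: (i) a reference to the root symbol $X\in \Gr$, (ii) the symbol $\symb(\nu)\in \Gr$, (iii) a pointer to the parent record (or $\bot$ if $\nu$ is the root), (iv) the index $\indexop(\nu)\in \Zp$ of $\nu$ among its siblings (or $\bot$ for the root), and (v) the starting position $s(\nu)\in [1\dd \length(X)]$ of $\str(\nu)$ within $\str(X)$. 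Since the parse tree is potentially huge (it unfolds every shared occurrence of a symbol in $\Gr$ into a distinct subtree), we never materialize it; we merely allocate a fresh record whenever $\root$ or $\child$ is invoked.

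The implementations are then immediate. For $\root(X)$, we allocate a new record with symbol $X$, parent $\bot$, index $\bot$, and start $s=1$. For $\symb(\nu)$, $\parent(\nu)$, and $\indexop(\nu)$, we simply return the corresponding stored field. For $\str(\nu)$, using the stored start $s(\nu)$ together with $\length(\symb(\nu))$ available from \cref{lem:gr}, we return the pair $\bigl(s(\nu),\; s(\nu)+\length(\symb(\nu))\bigr)$, which by definition encodes the fragment $\str(X)[s(\nu)\dd s(\nu)+\length(\symb(\nu)))$. For $\child(\nu,i)$, we obtain the child's symbol $Y=\block^{-1}(\symb(\nu),i)$ and the relative offset $\preflength(\symb(\nu),i-1)$ from \cref{lem:gr}, and allocate a new record with symbol $Y$, parent $\nu$, index $i$, and start $s(\nu)+\preflength(\symb(\nu),i-1)$; correctness of this start value follows directly from the recursive definition of $\str(\nu)$ as the concatenation of $\str(\nu_1),\ldots,\str(\nu_m)$ for the children $\nu_1,\ldots,\nu_m$ of $\nu$.

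Each operation performs $\Oh(1)$ field accesses and at most $\Oh(1)$ calls to the grammar interface, each of which runs in $\Oh(1)$ time by \cref{lem:gr}, so the overall cost is $\Oh(1)$ as required. The only real design choice is storing $s(\nu)$ inside the record instead of recomputing it from the parent chain; this avoids any traversal during $\str$ and $\child$ calls and is the main reason the operations achieve worst-case $\Oh(1)$ time rather than time proportional to depth, which is the step one would naively expect to be costly.
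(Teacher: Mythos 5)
Your proposal is correct and matches the paper's proof essentially verbatim: the paper also represents each node as a freshly allocated record $(\symb(\nu),\str(\nu),\indexop(\nu),\parent(\nu))$ forming a functional linked list to the root, with $\root$ and $\child$ allocating new records via $\block^{-1}$ and $\preflength$. The only cosmetic difference is that the paper stores the fragment $\str(\nu)$ (both endpoints) directly rather than just the start position plus $\length(\symb(\nu))$.
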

\begin{proof}
  Internally, each node $\nu$ is represented as a tuple
  $(\symb(\nu),\str(\nu),\indexop(\nu),\parent(\nu))$.  Here,
  $\parent(\nu)$ points to the parent of $\nu$ stored in the same
  representation.  In other words, the internal of $\nu$ is a singly
  linked list describing to the path from $\nu$ to the root of its
  parse tree. This is a \emph{functional} list, which means that many
  multiple references can share parts of the underlying lists.  The
  non-trivial operations are $\root(X)$, which returns $(X,
  \str(X)[1\dd |\str(X)|], \bot,\bot)$, and $\child(\nu,i)$, which
  returns
  $(\block^{-1}(\symb(\nu),i),\str(\nu)(\preflength(\symb(\nu),i-1)\dd
  \preflength(\symb(\nu),i)],i,\nu)$.
\end{proof}

\subsection{Deterministic Coin Tossing}

For an integer $h\in \Zz$, let us denote \[\tow{h} := \Big[0\dd
\underbrace{2^{2^{\cdot^{\cdot^2}}}}_h\Big)\qquad\text{and}\qquad\ptow{h}:=
\tow{h}\cup\{\bot\}.\]

\begin{theorem}\label{thm:mehlhorn}
  For every integer $h\in \Zp$, there is a function $g_h :
  \ptow{h}^{h+ 11} \to \{\zero,\one\}$ such that the function $G_h :
  \ptow{h}^\Z \to \{\zero,\one\}^\Z$, defined with $G_h(T)[i] =
  g_h(T(i-h - 7 \dd i+4])$ for $i\in \Z$, satisfies the following
  properties for all integers $n\in \Zz$ and strings $T\in
  \ptow{h}^\Z$ with $T[1\dd n]\in \tow{h}^n$, $T[0]=T[n+1]=\bot$, and
  $T[j]\ne T[j+1]$ for all $j\in [1\dd n)$:
  \begin{enumerate}[label={\rm(\arabic*)}]
  \item\label{it:bnd} $G_h(T)[0]=G_h(T)[n]=\one$.
  \item\label{it:frb} $G_h(T)[0\dd n]$ does not contain $\one\one$ and
    $\zero\zero\zero\zero\zero\zero$ as proper substrings.
  \item\label{it:dep} $G_h(T)[0\dd n]$ would not be affected by
    setting $T[j]:=\bot$ for all $j\in \Z\sm [1\dd n]$.
  \end{enumerate}
  Moreover, given $h$ and $T[1\dd n]$, the string $G_h(T)[0\dd n]$ can
  be computed in $\Oh(nh)$ time provided that each character of
  $T[1\dd n]$ is an $\Oh(w)$-bit number.
\end{theorem}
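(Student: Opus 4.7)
The plan is to realize $g_h$ by composing $h$ rounds of deterministic coin tossing (the Cole--Vishkin alphabet-reduction, as used by Mehlhorn--Sundar--Uhrig) with a final local-minimum marking step, and to handle the endpoints of $[0\dd n]$ via explicit boundary rules triggered by the sentinels $\bot$. Concretely, I first define iteratively $T_0 := T$ and, for each $k<h$, compute $T_{k+1}[i]$ from the pair $(T_k[i],T_k[i+1])$: if both are in $\tow{h-k}$ and $T_k[i]\neq T_k[i+1]$, let $\ell$ be the least-significant bit on which they differ and set $T_{k+1}[i]:=2\ell+\mathrm{bit}_\ell(T_k[i])$; otherwise set $T_{k+1}[i]:=\bot$. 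An easy induction on $k$ shows that $T_k$ takes values in $\ptow{h-k}$ with consecutive non-$\bot$ entries distinct, so $T_h[i]\in\{0,\ldots,5\}$ whenever $T[i\dd i+h]\in \tow{h}^{h+1}$. Computing $T_h[i]$ therefore reads only $T[i\dd i+h]$, i.e.\ $h$ positions to the right of $i$.

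Next I would set $G_h(T)[i]=\one$ iff $T_h[i]$ is a strict local minimum among $T_h[i-2\dd i+2]$ (treating $\bot$ as $+\infty$), except that I hard-code $G_h(T)[0]=G_h(T)[n]=\one$ and suppress the local-minimum rule in a constant-size buffer near the boundary so as not to create $\one\one$ adjacent to the forced boundary $\one$'s. The function $g_h$ acting on a window of $h+11$ characters then simulates all $h$ reduction rounds followed by the marking rule: alphabet reduction consumes a window of length $h+1$ on the right of $i$, the marking rule needs radius $2$ on each side within $T_h$, and recognizing a boundary (the unique offset at which $\bot$ sits inside $T(i-h-7\dd i+4]$) needs a handful of further positions on the left. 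These fit inside $T(i-h-7\dd i+4]$, and all required information is local.

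For correctness, property~\ref{it:dep} is immediate from the locality of $g_h$ together with the assumption $T[0]=T[n+1]=\bot$, which ensures $g_h$ sees $\bot$ at exactly the offsets where $T$ leaves $[1\dd n]$; property~\ref{it:bnd} is by construction. For~\ref{it:frb}, no two adjacent indices can both be strict local minima, ruling out $\one\one$, and since the reduced alphabet has size at most $6$ with adjacent entries distinct, every window of $6$ consecutive indices in $T_h$ must contain a strict local minimum by a pigeonhole argument, ruling out $\zero^6$. The boundary modifications only add forced $\one$'s at positions $0$ and $n$ and selectively remove would-be local minima within a few positions of each endpoint, so neither avoidance pattern can be broken. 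The running time $\Oh(nh)$ is clear because each of the $h$ rounds processes $n$ characters with $\Oh(1)$ word-RAM operations on $\Oh(w)$-bit values.

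The main obstacle is calibrating the constants so that everything fits inside the tight window $h+11$ and no bad pattern sneaks in near the endpoints. One must show that after forcing $G_h(T)[0]=\one$ and modifying the rule for $G_h(T)[j]$ with $j\in\{1,2,3\}$ (and symmetrically near $n$), the no-$\one\one$ and no-$\zero^6$ conditions still hold throughout $[0\dd n]$; this relies on $T_h$ near the boundary being well-behaved, which in turn follows from the inductive invariant of the alphabet-reduction rounds in the presence of $\bot$-sentinels. Getting this boundary bookkeeping right, and simultaneously fitting the window within $h+11$ positions, is the delicate part of the argument.
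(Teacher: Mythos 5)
There is a genuine gap in your argument for excluding $\zero\zero\zero\zero\zero\zero$. You claim that, over an alphabet of size $6$ with adjacent entries distinct, every window of $6$ consecutive positions of $T_h$ contains a strict local minimum "by pigeonhole." This is false: a strictly monotone run over $\{0,\ldots,5\}$ can have length $6$, so a pattern such as $0,1,2,3,4,5,4,3,2,1,0$ has its only strict local minima at the two ends, leaving $9$ consecutive unmarked interior positions; in particular the window with values $2,3,4,5,4,3$ contains no strict local minimum. Pure local-minimum marking therefore only bounds the gap between marks by roughly $10$, not $5$, and your construction would violate property~\ref{it:frb}. The standard repair (and what the function $f_h$ of Mehlhorn--Sundar--Uhrig, which the paper invokes as a black box, actually does) is to additionally mark suitable local \emph{maxima} -- e.g.\ those with no adjacent local minimum -- which is precisely what brings the maximal gap down to the $\zero\zero\zero\zero$-free guarantee that the paper then relaxes to $\zero^6$-freeness after inserting the two forced boundary $\zero$'s. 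Without this extra marking rule your proof of~\ref{it:frb} does not go through.

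A secondary but concrete problem is the window orientation: you compute $T_{k+1}[i]$ from $(T_k[i],T_k[i+1])$, so $T_h[i]$ depends on $T[i\dd i+h]$ and your radius-$2$ marking rule then reads $T$ up to position $i+h+2$. The prescribed window $T(i-h-7\dd i+4]$ extends only $4$ positions to the right of $i$ (and $h+6$ to the left), so your construction does not fit; the alphabet reduction must consume its $h$ positions on the \emph{left} of $i$. This is fixable by flipping the reduction direction, but as written the claimed locality fails. For comparison, the paper avoids rebuilding the coin-tossing machinery altogether: it takes $f_h$ from~\cite{Mehlhorn} as given (with its $\one\one$-free and $\zero\zero\zero\zero$-free guarantee on sentinel-padded strings), defines $g_h$ by truncating the window at the maximal $\bot$-free fragment around the center and hard-coding $\one$ at the fragment endpoints and $\zero$ at the positions adjacent to them, and then observes that any forbidden pattern in $G_h(T)[0\dd n]$ would induce one in $F_h$.
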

\begin{proof}
  The starting point of our construction is the function $f_h :
  \ptow{h}^{h + 11} \to \{\zero,\one\}$ behind~\cite[Lemma
  1]{Mehlhorn}, for which the derived $F_h:\ptow{h}^\Z \to
  \{\zero,\one\}^\Z$ (defined analogously to $G_h$) satisfies the
  following property: For all integers $n\in \Zz$ and strings $T\in
  \ptow{h}^\Z$ with $T[1\dd n]\in \tow{h}^n$, $T[j]=\bot$ for $j\in
  \Z\sm [1\dd n]$, and $T[j]\ne T[j+1]$ for $j\in [1\dd n)$, the
  string $F_h(T)[1\dd n]$ does not contain $\one\one$ and
  $\zero\zero\zero\zero$ as substrings.

  For a string $X\in \ptow{h}^{h + 11}$, let $X(\ell \dd r]$ be the
  maximal fragment such that $\ell \le h + 7 \le r$ and $X(\ell\dd
  r]\in \tow{h}^{*}$.  We set:
  \begin{enumerate}[label={\rm(\alph*)}]
  \item\label{it:one} $g_h(X)=\one$ if $\ell=h + 7$ or $r=h + 7$;
  \item\label{it:zero} $g_h(X)=\zero$ if $\ell=h + 6$ (and $r> h + 7$)
    or $r=h + 8$ (and $\ell < h + 7$);
  \item\label{it:f} $g_h(X)=f_h(\bot^{\ell}X(\ell\dd r]\bot^{h + 11 -
    r})$ otherwise.
  \end{enumerate}

  Let us fix a string $T\in \ptow{h}^\Z$ with $T[1\dd n]\in
  \tow{h}^n$, $T[0]=T[n+1]=\bot$, and $T[j]\ne T[j+1]$ for all $j\in
  [1\dd n)$.  Define a string $\bar{T}\in \ptow{h}^\Z$ so that
  $\bar{T}[j]=T[j]$ for $j\in [1\dd n]$ and $\bar{T}[j] =\bot$
  otherwise.

  Observe that $G_h(T)[0]=G_h(T)[n]=\one$ by~\ref{it:one}.  Moreover,
  if $n\ge 2$, then $G_h(T)[1] = G_h(T)[n-1] = \zero$
  by~\ref{it:zero}, and, if $n\ge 4$, then $G_h(T)[2\dd
  n-2]=F_h(\bar{T})[2\dd n-2]$ by~\ref{it:f}.  This immediately
  yields conditions~\ref{it:bnd} and~\ref{it:dep}.

  As for condition~\ref{it:frb}, note that any occurrence $\one\one$
  as a proper substring of $G_h(T)[0\dd n]$ would yield an occurrence
  of $\one\one$ as a substring $F_h(\bar{T})[2\dd n-2]$, and any
  occurrence of $\zero\zero\zero\zero\zero\zero$ as a substring of
  $G_h(T)[0\dd n]$ would yield an occurrence of $\zero\zero\zero\zero$
  as a substring of $F_h(\bar{T})[2\dd n-2]$.

  An efficient algorithm for computing $G_h(T)[0\dd n]$ follows from
  the corresponding algorithm for computing $F_h(\bar{T})[0\dd n]$.
\end{proof}

\begin{definition}[Restricted signature parsing]
  Consider a set $\Act\sub \Symb$ of \emph{active symbols}, an integer
  $h\in \Zz$, and an injective function $\sig : \Act \to \tow{h}$.
  Given a string $T\in \Symb^*$, define a string $\sig(T)\in
  \ptow{h}^\Z$ so that $\sig(T)[i]=\sig(T[i])$ if $i\in [1\dd |T|]$
  and $T[i]\in \Act$, and $\sig(T)[i]=\bot$ otherwise.  The
  \emph{restricted signature parsing} decomposes $T$ into
  \emph{blocks} that end at all positions $i\in [1\dd |T|]$ with
  $G_h(\sig(T))[i]=\one$, where $G_h$ is defined in
  \cref{thm:mehlhorn}.
\end{definition}

\begin{corollary}\label{cor:mehlhorn}
  Consider the restricted signature parsing of a string $T\in \Symb^*$
  such that no position $i\in [1\dd |T|)$ satisfies $T[i]=T[i+1]\in
  \Act$.  Then:
  \begin{enumerate}[label=(\arabic*)]
  \item each block contains exactly one inactive symbol or up to six
    active symbols;
  \item an active symbol forms a length-1 block if and only if it has
    no adjacent active symbols.
  \end{enumerate}
\end{corollary}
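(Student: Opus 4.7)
The plan is to derive the corollary by combining a local analysis of $g_h$ on windows containing $\bot$ with a global application of \cref{thm:mehlhorn} inside each maximal active run. First I would unfold the case distinction in the definition of $g_h$ to classify $G_h(\sig(T))[i]$ by the pattern of active and inactive positions near $i$: writing $X = \sig(T)(i-h-7 \dd i+4]$, position $h+7$ in $X$ equals $\sig(T)[i]$, so $g_h(X) = \one$ precisely when either $X[h+7] = \bot$ (i.e.\ $T[i]$ is inactive) or $X[h+7]$ is non-$\bot$ while $X[h+8] = \bot$ (i.e.\ $T[i]$ is active while $T[i+1]$ is inactive or $i = |T|$). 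This immediately shows that every inactive symbol is a block of length $1$ and that each maximal active run $T[a\dd b]$ ends with a block boundary at position $b$.

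To analyze blocks \emph{inside} a maximal active run $T[a\dd b]$ I would reduce to \cref{thm:mehlhorn} via an auxiliary string $T^\star \in \ptow{h}^{\Z}$ defined by $T^\star[j] := \sig(T)[a + j - 1]$ for $j \in [1\dd n^\star]$ (where $n^\star := b - a + 1$) and $T^\star[j] := \bot$ otherwise. The hypotheses of the theorem hold: $T^\star[1\dd n^\star] \in \tow{h}^{n^\star}$ because $\sig$ maps $\Act$ into $\tow{h}$; $T^\star[0] = T^\star[n^\star + 1] = \bot$ by construction; and consecutive entries of $T^\star[1\dd n^\star]$ differ because $\sig$ is injective on $\Act$ and the corollary's hypothesis forbids equal active neighbors. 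Since $g_h(X)$ depends only on the maximal non-$\bot$ fragment of $X$ around index $h+7$ (as is visible from its three-case definition), evaluating $g_h$ on the window centered at an index $i \in [a-1 \dd b]$ yields the same value whether we read from $\sig(T)$ or from $T^\star$: in both cases the single relevant non-$\bot$ fragment is the active run $\sig(T)[a\dd b] = T^\star[1\dd n^\star]$, and positions outside this fragment carry $\bot$ in both strings. This gives $G_h(\sig(T))[i] = G_h(T^\star)[i - a + 1]$ for all $i \in [a-1 \dd b]$.

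With this identification, the forbidden patterns from \cref{thm:mehlhorn} translate directly to the portion $G_h(\sig(T))[a-1 \dd b]$: no occurrence of $\one\one$ and no occurrence of $\zero\zero\zero\zero\zero\zero$ as proper substrings of $G_h(T^\star)[0 \dd n^\star]$. The absence of six consecutive zeros between the boundary $\one$s at $a-1$ and $b$ forces consecutive block boundaries within $[a-1 \dd b]$ to be at distance at most $6$, so each block contained in the active run holds at most six active symbols; together with the length-$1$ inactive blocks from the first step this yields property~(1). For property~(2), the ``if'' direction is immediate (a length-$1$ maximal active run has $g_h = \one$ at both of its endpoint indices), while for the ``only if'' direction, if $T[i]$ is active with at least one active neighbor then $T[i]$ lies in a run of length $n^\star \geq 2$, so $G_h(T^\star)[0 \dd n^\star]$ has length at least $3$; a length-$1$ block at $T[i]$ would force two adjacent $\one$s in $G_h(T^\star)$ at interior positions, producing an occurrence of $\one\one$ as a proper substring and contradicting \cref{thm:mehlhorn}.

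I expect the main obstacle to be the careful justification of the localization $G_h(\sig(T))[i] = G_h(T^\star)[i - a + 1]$ at the boundary indices $i \in \{a-1, a, b-1, b\}$, where the window of $g_h$ only just touches the boundary of the active run. The check reduces to unfolding the three cases in the definition of $g_h$ and verifying that the pair $(\ell, r)$ identified there is unchanged when passing from $\sig(T)$ to $T^\star$; the two $\bot$ ``walls'' delimiting the active run in $\sig(T)$ must be confirmed to play exactly the same role as the $\bot$ padding outside $[1 \dd n^\star]$ in $T^\star$, so that both the ``case (a)/(b) on the boundary'' values and the case (c) values fed to $f_h$ coincide between the two strings.
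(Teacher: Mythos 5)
Your proof is correct and follows essentially the same route as the paper: apply \cref{thm:mehlhorn} to each maximal active run (its hypotheses hold because maximality places $\bot$ at both ends of the run in $\sig(T)$, $\sig$ is injective on the active symbols, and equal active neighbours are excluded by assumption), then read off the block structure from the boundary $\one$s and the forbidden patterns $\one\one$ and $\zero\zero\zero\zero\zero\zero$. Two small corrections to the write-up: the detour through $T^\star$ and the localization argument is unnecessary, since \cref{thm:mehlhorn} only requires $T[0]=T[n+1]=\bot$ (and its property~(3) grants the locality explicitly), so the theorem applies directly to the shifted $\sig(T)$; and the claim that $g_h(X)=\one$ holds \emph{precisely} when $X[h+7]=\bot$ or $X[h+8]=\bot$ is false as an equivalence (case~(c) can also return $\one$ via $f_h$, and indeed must do so at the interior block boundaries of long runs) --- you only use, and only need, the ``if'' direction, so this should read ``in particular when'' rather than ``precisely when''.
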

\begin{proof}
  Let $T(\ell \dd r]$ be any maximal fragment of $T$ consisting of
  active symbols.  By \cref{thm:mehlhorn}, we have
  $G_h(\sig(T))[\ell]=G_h(\sig(T))[r]=\one$, so $T(\ell\dd r]$
  consists of full blocks.  Since $G_h(\sig(T))(\ell\dd r]$ does not
  contain $\zero\zero\zero\zero\zero\zero$, these blocks are of length
  at most six. Moreover, since $G_h(\sig(T))[\ell\dd r]$ does not
  contain $\one\one$ as a proper substring, the only possibility for a
  length-1 block within $T(\ell\dd r]$ is when $r-\ell=1$, i.e., when
  $T[r]$ is an active symbol with no adjacent active symbols.  In the
  reasoning above, $G_h(\sig(T))[\ell]=G_h(\sig(T))[r]=\one$ also
  holds when $\ell=r$, so all inactive symbols form length-1 blocks.
\end{proof}

\begin{corollary}\label{cor:consistent}
  Consider the restricted signature parsing of strings $T,T'\in
  \Symb^*$.  If a block ends at $T[i]$ but no block ends at $T'[i']$,
  then there exists:
  \begin{itemize}
  \item a string in $\Act^{\le 4}$ that is a prefix of exactly one of
    the suffixes $T(i\dd |T|]$ and $T' (i'\dd |T'|]$, or
  \item a string in $\Act^{\le h + 7}$ that is a suffix of exactly one
    of the prefixes $T[1\dd i]$ and $T'[1\dd i']$.
  \end{itemize}
\end{corollary}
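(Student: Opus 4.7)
The plan is to argue by contrapositive.  I will assume that no string in $\Act^{\le 4}$ is a prefix of exactly one of $T(i\dd |T|]$ and $T'(i'\dd |T'|]$, and that no string in $\Act^{\le h+7}$ is a suffix of exactly one of $T[1\dd i]$ and $T'[1\dd i']$, and then derive $G_h(\sig(T))[i] = G_h(\sig(T'))[i']$.  This equality contradicts the hypothesis that a block ends at $T[i]$ but not at $T'[i']$, since by definition the block boundaries are precisely the $\one$-positions of $G_h\circ \sig$.  Letting $W := \sig(T)(i-h-7\dd i+4]$ and $W' := \sig(T')(i'-h-7\dd i'+4]$ be the two length-$(h+11)$ windows whose central index $h+7$ corresponds to $i$ and $i'$ respectively, the task reduces to proving $g_h(W) = g_h(W')$.

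The first step is to translate the two indistinguishability assumptions into positional agreement of $\sig$-values inside the windows.  I will let $b\in [0\dd h+7]$ denote the largest integer with $T[i-b+1\dd i]\in \Act^b$ and $c\in [0\dd 4]$ denote the largest with $T(i\dd i+c]\in \Act^c$, defining $b'$ and $c'$ analogously for $T'$ and $i'$.  Applying the suffix assumption to $T[i-b+1\dd i]\in \Act^b$ will yield $T'[i'-b+1\dd i']=T[i-b+1\dd i]$, giving $b'\ge b$; the symmetric application, ruling out a length-$(b+1)$ active suffix on the $T'$ side whenever $b<h+7$, will force $b=b'$ together with matching contents, and the prefix assumption will analogously establish $c=c'$ with matching active prefixes.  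Whenever $b<h+7$, the position $i-b$ is either inactive in $T$ or out of range (i.e., $b=i$), so $\sig(T)[i-b]=\bot$, and the same will hold for $T'$; an analogous statement will cover $\sig(T)[i+c+1]$ and $\sig(T')[i'+c+1]$ when $c<4$.

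The second step will invoke the structure of $g_h$ from the proof of \cref{thm:mehlhorn}: the value $g_h(X)$ depends only on the pair $(\ell,r)$ describing the maximal $\tow{h}^*$-fragment $X(\ell\dd r]$ that straddles index $h+7$, together with the values inside this fragment.  When $T[i]$ is active ($b\ge 1$), the previous step will identify the fragment of $W$ as $(h+7-b\dd h+7+c]$ (clamped to $\ell=0$ when $b=h+7$ and to $r=h+11$ when $c=4$), with contents $T[i-b+1\dd i]\cdot T(i\dd i+c]$; the identical data will describe the fragment of $W'$, so cases~(a)--(c) of the construction produce the same output.  When $T[i]$ is inactive ($b=0$), the equality $W[h+7]=\bot$ forces $\ell=h+7$, since no fragment $W(\ell\dd r]\in \tow{h}^*$ with $\ell<h+7\le r$ can contain $W[h+7]$, and hence $g_h(W)=\one$ via case~(a); the identity $b=b'$ derived earlier ensures the same for $W'$.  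The main obstacle will be the boundary bookkeeping at the saturated values $b=h+7$ and $c=4$, where the fragment abuts the window edge and one must confirm that $g_h$ never inspects positions outside its $(h+11)$-symbol input; this will be immediate from the fact that $g_h$ is defined as a function on $\ptow{h}^{h+11}$.
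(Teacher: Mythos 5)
Your proposal is correct and follows essentially the same route as the paper's proof: both reduce the claim to showing that the maximal active fragment straddling $T[i]$ inside the window $\sig(T)(i-h-7\dd i+4]$ agrees (in position and content) with the corresponding fragment for $T'[i']$ whenever no witness string exists, and then invoke the fact that $g_h$ depends only on that maximal $\tow{h}^*$-fragment. The paper states this dependency by citing property~(3) of \cref{thm:mehlhorn}, whereas you re-derive it from the explicit cases (a)--(c) of the construction of $g_h$, but the argument is the same.
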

\begin{proof}  
  Let $T(\ell\dd r]$ and $T'(\ell'\dd r']$ be maximal fragments with
  $i-h - 7 \le \ell\le i\le r \le i+4$ and $i'-h -7 \le \ell'\le i'\le
  r'\le i'+4$, respectively, such that $T(\ell\dd r],T'(\ell'\dd
  r']\in \Act^*$.  If $T(\ell\dd i]\ne T'(\ell'\dd i']$, this yields a
  desired suffix of exactly one of the prefixes $T[1\dd i]$ and
  $T'[1\dd i']$.  Similarly, if $T(i\dd r]\ne T'(i'\dd r']$, this
  yields a desired prefix of exactly one of the suffixes $T(i\dd |T|]$
  and $T'(i'\dd |T'|]$.  By \cref{thm:mehlhorn}\ref{it:dep}, we have
  $G_h(\sig(T))[i] = g_h(\bot^{\ell-i+h + 7}\sig(T)(\ell\dd
  r]\bot^{i+4-r})$ and $G_h(\sig(T'))[i'] = g_h(\bot^{\ell'-i'+h +
  7}\sig(T')(\ell'\dd r']\bot^{i'+4-r'})$.  Due to the assumption that
  $T(\ell\dd i]=T'(\ell'\dd i']$ and $T(i\dd r]=T'(i'\dd r']$, this
  yields $G_h(\sig(T))[i]=G_h(\sig(T'))[i']$. Hence, blocks end at
  both $T[i]$ and $T'[i']$ or at neither $T[i]$ nor $T'[i']$.
\end{proof}

\subsection{Balanced Signature Parsing}\label{sec:bsp}

\begin{definition}[Restricted run parsing]
  Consider a set $\Act\sub \Symb$ of \emph{active symbols}.  The
  \emph{restricted run parsing} decomposes a string $T\in \Symb^*$
  into \emph{blocks} that end at all positions $i\in [1\dd |T|]$
  except those with $T[i]=T[i+1]\in \Act$.
\end{definition}

For every $k\in \Zz$, we define the set of level-$k$ \emph{active
symbols} $\Act_k := \{X\in \Symb : \level(X)= k\text{ and
}\length(X)\le d_k\}$, where $d_k := 2^{\floor{k/2}}$.  Moreover, we
say that a function $\sig:\Symb \to \Zz$ is a \emph{signature
function} if $\sig_k := \sig|_{\Act_k}$ injectively maps $\Act_k$ to
$\tow{h_k}$, where $h_k = \log^* |\Act_k|$.

\begin{fact}\label{rem:nk}
  Every $k\in \Zz$ satisfies $h_k = \Oh(\log^*(\sigma k))$.
\end{fact}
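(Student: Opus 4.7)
The plan is to bound $|\Act_k|$ by induction on $k$ and then take $\log^*$. First I would handle the base case: since $d_0 = 1$, only symbols with expansion of length one can lie in $\Act_0$, and the only such symbols relevant to the parsing are the letters $\letter(a)$ for $a \in \Sigma$ (blocks of level 0 consisting of a single letter are canonically identified with that letter). Hence $|\Act_0| = O(\sigma)$.

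Next, I would derive a recurrence. Every $X \in \Act_k$ with $k \geq 1$ is of the form $\block(S)$ where the components of $S$ are symbols at strictly lower active levels, and the construction rules of balanced signature parsing ensure $S$ has one of two structures (mirroring the $\iT$ and $\iP$ operations in \cref{lem:gr} together with the restricted signature/run parsings): either $|S| \leq 6$ (blocks produced by the deterministic coin tossing of \cref{cor:mehlhorn}) or $S = Y^m$ for some $Y$ and $m \in \Zp$ (runs). The first case contributes at most $\sum_{i=1}^{6} |\Act_{<k}|^i = O(|\Act_{<k}|^6)$ possibilities. The second case is bounded by $|\Act_{<k}| \cdot d_k$, since $\length(Y^m) = m \cdot \length(Y) \leq d_k = 2^{\lfloor k/2\rfloor}$ forces $m \leq d_k$. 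Setting $N_k := \max_{j \leq k}|\Act_j|$, both cases are dominated by a term of the form $c \cdot N_{k-1}^6$, so that $N_k \leq c N_{k-1}^6$ for some absolute constant $c$.

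Iterating this recurrence yields $N_k \leq c^{(6^k-1)/5} \sigma^{6^k} = \sigma^{O(6^k)}$, hence $\log N_k = O(6^k \log \sigma)$ and $\log \log N_k = O(k + \log \log \sigma)$. Applying $\log^*$ twice more and using $\log^*(x+y) \leq \log^*(x) + \log^*(y) + O(1)$, we get $\log^* N_k = O(\log^* k + \log^* \sigma) = O(\log^*(\sigma k))$. Since $h_k = \log^*|\Act_k| \leq \log^* N_k$, this is the bound claimed by the fact.

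The main obstacle is the first step: one needs a clean combinatorial description of the structural shapes available to a level-$k$ active symbol, so that the count $O(|\Act_{<k}|^6)$ is rigorously justified rather than asserted. This is not purely an application of the $\Symb$ grammar (which a priori permits blocks of arbitrary arity from arbitrary levels), but rather relies on the specific construction rules imposed by the balanced signature parsing: all blocks arise either from the restricted signature parsing (yielding arity $\le 6$ by \cref{cor:mehlhorn}) or from the restricted run parsing (yielding a single repeated child). Once this structural characterization is recorded, the arithmetic above runs through with no further obstacle, and the linear run contribution is absorbed into the polynomial one because $d_k = 2^{\lfloor k/2\rfloor} \ll N_{k-1}^5$ past the base case.
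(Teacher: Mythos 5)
Your plan founders at the step you yourself flag as the main obstacle: the structural characterization of level-$k$ active symbols is false for $\Act_k$ as the paper defines it. The set $\Act_k=\{X\in\Symb:\level(X)=k\text{ and }\length(X)\le d_k\}$ ranges over the entire algebraic data type $\Symb$, not over symbols produced by the balanced signature parsing, and $h_k=\log^*|\Act_k|$ is defined via this full set (it essentially has to be: consistency statements such as \cref{lem:recompr1} and \cref{cons:nc} require $\sig$ to be fixed in advance on the symbols arising in the parsings of \emph{arbitrary} strings, not just of the current text). A level-$k$ symbol of length at most $d_k$ can be a block of arity up to $d_k$ --- for instance $\block(X\,Y_1\cdots Y_{d_k-t})$ with one child $X$ of level $k-1$ and length $t$ together with $d_k-t$ letters --- and such a symbol has arity far exceeding six and is not a power. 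Hence the recurrence $N_k\le c\,N_{k-1}^6$ fails for the full sets, and the bound $\Oh(|\Act_{<k}|^6)$ in your first case is unjustified. The paper's proof avoids this by inducting on \emph{length as well as level}: with $s_{k,\ell}=|\{X\in\Symb:\level(X)\le k,\ \length(X)\le\ell\}|$, the length parameter caps the arity (every child has length at least one), and peeling off the first child yields $s_{k,\ell}\le s_{k-1,\ell}+\sum_{j=1}^{\ell-1}s_{k-1,j}\,s_{k,\ell-j}$, which resolves to $s_{k,\ell}\le\sigma^{\ell}(k+1)^{\ell-1}$; substituting $\ell=d_k=2^{\floor{k/2}}$ gives $|\Act_k|\le 2^{2^{\sigma k}}$ and hence $h_k\le 2+\log^*(\sigma k)$.

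Your closing arithmetic is fine --- any bound of the shape $|\Act_k|\le\sigma^{2^{\Oh(k)}}$ collapses to $\Oh(\log^*(\sigma k))$ after applying $\log^*$, and the paper's $(k+1)^{2^{k/2}}$ factor is absorbed the same way. Your argument would also become correct if $h_k$ were redefined as $\log^*$ of the set of level-$k$ active symbols realizable in the parsing of \emph{some} string: there \cref{cor:mehlhorn} and the run parsing really do force arity at most six or a power, and the data structure only ever assigns signatures to realizable symbols. But that is a modification of the paper's definitions, not a proof of the Fact as stated.
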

\begin{proof}
  For $k,\ell\in \Zz$, let $s_{k,\ell}=|\{X\in \Symb : \level(X)\le
  k\text{ and }\length(X)\le \ell\}|$.  We shall inductively prove
  that $s_{k,\ell}\le \sigma^\ell (k+1)^{\ell-1}$.  The base case of
  $k=0$ is easy to check since $s_{0,0} = 0$ and $s_{0,\ell}=\sigma$
  for $\ell \in \Zp$.  As for $k\in \Zp$, we have
  \begin{multline*}
    s_{k,\ell}\le s_{k-1,\ell}+\sum_{j=1}^{\ell-1} s_{k-1,j}\cdot
    s_{k,\ell-j} \le \sigma^{\ell}k^{\ell-1} +
    \sum_{j=1}^{\ell-1}\sigma^{j}k^{j-1}\sigma^{\ell-j}(k+1)^{\ell-j-1}
    \\ = \sigma^{\ell} (k+1)^{\ell-2}
    \left(\tfrac{k^{\ell-1}}{(k+1)^{\ell-2}} +
    \sum_{i=0}^{\ell-2}\tfrac{k^{i}}{(k+1)^{i}}\right) \le
    \sigma^{\ell} (k+1)^{\ell-2}
    \left(\tfrac{k^{\ell-1}}{(k+1)^{\ell-2}} +
    \tfrac{1-\frac{k^{\ell-1}}{(k+1)^{\ell-1}}}{1-\frac{k}{k+1}}\right)
    = \sigma^\ell (k+1)^{\ell-1}.
  \end{multline*}
  In particular, we conclude that $|\Act_k| \le
  \sigma^{2^{k/2}}(k+1)^{2^{k/2}-1}\le 2^{2^{\sigma k}}$, so $h_k \le
  2+\log^*(\sigma k)$.
\end{proof}

\begin{construction}[Balanced signature parsing]\label{constr:parsing}
  For a signature function $\sig$, the \emph{balanced signature
  parsing} of a string $T\in \Sigma^+$ is a sequence
  $(T_k)_{k=0}^\infty$ of strings $T_k\in \Symb^+$ constructed as
  follows: The string $T_0$ is obtained from $T$ by replacing each
  letter $T[j]$ with symbol $\letter(T[j])$.  For $k>0$, we perform
  the restricted signature parsing of $T_{k-1}$ with respect to
  $\Act_{k-1}$, $h_{k-1}$, and $\sig_{k-1}$ (if $k$ is even) or the
  restricted run parsing of $T_{k-1}$ with respect to $\Act_{k-1}$ (if
  $k$ is odd) and derive $T_k$ by collapsing each block $T_{k-1}[j\dd
  j+m)$ into the corresponding symbol $\block(T_{k-1}[j\dd j+m))$.
\end{construction}

Observe that $\str(T_k)=T$ for $k\in \Zz$. Hence, for every $j\in
[1\dd |T_k|]$, we can associate $T_k[j]$ with a fragment
$T(|\str(T_k[1\dd j))|\dd |\str(T_k[1\dd j])|]$ matching
$\str(T_k[j])$; these fragments are called \emph{phrases} induced by
$T_k$.  We also define a set $B_k(T)$ of \emph{phrase boundaries}
induced by $T_k$:
\[
  B_k(T) = \{|\str(T_k[1\dd j])| : j\in [0\dd |T_k|]\}.
\]

The following result lets us use \cref{cor:mehlhorn} when analyzing
the restricted signature parsing of $T_{k}$ for odd $k\in \Zp$.
\begin{fact}\label{fct:distinct}
  For every $T\in \Sigma^+$ and odd $k\in \Zp$, there is no $j\in
  [1\dd |T_k|)$ with $T_k[j]=T_k[j+1]\in \Act_{k}$.
\end{fact}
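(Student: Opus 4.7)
The plan is to argue by contradiction. Suppose that for some odd $k\in \Zp$ and some $j\in [1\dd |T_k|)$ we have $Z := T_k[j] = T_k[j+1] \in \Act_k$. Since $k$ is odd, $T_k$ is obtained from $T_{k-1}$ by restricted run parsing, so there are two consecutive blocks $T_{k-1}[a_1\dd b_1]$ and $T_{k-1}[a_2\dd b_2]$ with $a_2=b_1+1$ such that $T_k[j]=\block(T_{k-1}[a_1\dd b_1])$ and $T_k[j+1]=\block(T_{k-1}[a_2\dd b_2])$. The assumed equality of these symbols forces equality of the underlying sequences, so both blocks have a common length $L := b_1-a_1+1$.

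I would next split on $L$. If $L\ge 2$, then by the definition of restricted run parsing each block must be a monolithic run $X^L$ for a single symbol $X\in \Act_{k-1}$; since the two blocks are equal, they are built on the same $X$, and then $T_{k-1}[b_1]=T_{k-1}[b_1+1]=X\in \Act_{k-1}$. But this makes $b_1$ a non-boundary, contradicting the fact that the first block ends at $b_1$. Hence $L=1$, and setting $X := T_{k-1}[b_1] = T_{k-1}[b_1+1]$ we have $Z = \block((X))$. Applying the boundary rule once more at position $b_1$, since $T_{k-1}[b_1]=T_{k-1}[b_1+1]$, the only way for $b_1$ to be a boundary is $X\notin \Act_{k-1}$.

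It remains to derive a contradiction from $Z\in \Act_k$ and $X\notin \Act_{k-1}$. A short induction on $k$ (using that each pass of the parsing wraps every symbol of $T_{k-1}$ inside a fresh $\block(\cdot)$) shows that every symbol appearing in $T_{k-1}$ has level exactly $k-1$; in particular $\level(X)=k-1$, so $X\notin \Act_{k-1}$ is forced by the length condition, giving $\length(X)>d_{k-1}$. On the other hand $\str(Z)=\str(X)$, so $\length(Z)=\length(X)$, and $Z\in \Act_k$ gives $\length(Z)\le d_k$. The key arithmetic identity is that for odd $k$ we have
\[
  d_{k-1} \;=\; 2^{\floor{(k-1)/2}} \;=\; 2^{(k-1)/2} \;=\; 2^{\floor{k/2}} \;=\; d_k,
\]
which yields $d_k < \length(X) = \length(Z) \le d_k$, a contradiction. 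The only slightly subtle point in the plan is the inductive lemma that every symbol of $T_{k-1}$ has level exactly $k-1$; once this and the identity $d_{k-1}=d_k$ for odd $k$ are in hand, the claim follows from the short case analysis above.
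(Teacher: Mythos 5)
Your proof is correct and follows essentially the same route as the paper's: both arguments examine the junction between the two equal run-parsing blocks and use the identity $d_{k-1}=d_k$ for odd $k$ (together with the invariant that symbols of $T_{k-1}$ have level $k-1$) to force a contradiction with the boundary rule of the restricted run parsing. The only difference is organizational — the paper handles both block lengths uniformly by first deducing $X\in\Act_{k-1}$ from $\length(X)\le d_k=d_{k-1}$, whereas you split into the cases $L\ge 2$ and $L=1$ and run the $L=1$ case in the contrapositive direction; both are valid.
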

\begin{proof}
  For a proof by contradiction, suppose that $T_k[j]=T_k[j+1]\in
  \Act_{k}$ holds for some $j\in [1\dd |T_k|)$.  By definition of
  $T_k$, we have $T_k[j]=T_k[j+1]=\block(X^\ell)$ for some $X\in
  \Symb$ and $\ell\in \Zp$. In particular, $|\str(X)|\le
  |\str(X^\ell)| \le d_{k} = d_{k-1}$, so $X\in \Act_{k-1}$.  Let
  $T_{k-1}(i-\ell\dd i]$ and $T_{k-1}(i\dd i+\ell]$ be blocks of
  $T_{k-1}$ collapsed to $T_k[j]$ and $T_k[j+1]$, respectively.  Due
  to $T_{k-1}[i]=T_{k-1}[i+1]=X\in \Act_{k-1}$, no block ends at
  position $i$ in the restricted run parsing of $T_{k-1}$.  This
  contradicts the existence of the block $T_{k-1}(i-\ell\dd i]$.
\end{proof}
  
Next, we use \cref{cor:mehlhorn} to derive upper and lower bounds on
phrase lengths.
\begin{fact}\label{fct:recompr}
  For every $T\in \Sigma^+$, $k\in \Zz$, and $j\in [1\dd |T_k|]$, the
  string $\str(T_k[j])$ has length at most $3d_k$ or primitive root of
  length at most $d_k$.
\end{fact}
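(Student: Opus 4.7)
The plan is to prove the claim by induction on $k$. The base case $k=0$ is immediate: each $T_0[j]=\letter(T[j])$ expands to a single letter, so $|\str(T_0[j])|=1\le 3=3d_0$.

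For the inductive step, recall that $T_k[j]=\block(T_{k-1}[j'\dd j'+m))$, where the block $T_{k-1}[j'\dd j'+m)$ comes from the restricted signature parsing of $T_{k-1}$ with respect to $\Act_{k-1}$ (when $k$ is even) or the restricted run parsing of $T_{k-1}$ with respect to $\Act_{k-1}$ (when $k$ is odd). The argument splits on the parity of $k$ and leverages the relation $d_k=2^{\floor{k/2}}$: we have $d_{k-1}=d_k$ for odd $k$ and $d_{k-1}=d_k/2$ (so $6d_{k-1}=3d_k$) for even $k$.

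For odd $k$, each block is either a maximal run $X^\ell$ with $X\in \Act_{k-1}$ and $\ell\ge 1$, or a singleton symbol $X\notin \Act_{k-1}$. In the first case, $\str(T_k[j])=\str(X)^\ell$ has the same primitive root as $\str(X)$, whose length is at most $|\str(X)|\le d_{k-1}=d_k$. In the second case, the inductive hypothesis applied to the symbol $X$ at position $j'$ of $T_{k-1}$ yields either $|\str(X)|\le 3d_{k-1}=3d_k$ or a primitive root of $\str(X)$ of length at most $d_{k-1}=d_k$, as required. For even $k$, since $k-1$ is odd, \cref{fct:distinct} ensures that $T_{k-1}$ has no two adjacent active symbols that are equal; this is precisely the hypothesis needed to invoke \cref{cor:mehlhorn} for the restricted signature parsing producing $T_k$. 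Hence each block contains either a single inactive symbol $X\notin \Act_{k-1}$ or at most six active symbols. The first subcase reduces via the inductive hypothesis on $X$ together with $d_{k-1}\le d_k$. In the second subcase, each of the at most six active symbols expands to a string of length at most $d_{k-1}$, so $|\str(T_k[j])|\le 6d_{k-1}=3d_k$.

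I do not expect a serious obstacle. The main point requiring care is the proper alignment of indices when invoking \cref{fct:distinct,cor:mehlhorn}: the former is stated for odd levels, so for even $k$ we apply it with $k-1$ in place of its $k$, and the latter is applied to $T_{k-1}$ with $\Act=\Act_{k-1}$. Beyond this bookkeeping and the elementary fact that the primitive root of $w^\ell$ coincides with the primitive root of $w$, the proof is routine.
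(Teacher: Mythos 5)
Your proof is correct and follows essentially the same route as the paper's: induction on $k$, with the base case trivial, blocks of active symbols at odd levels yielding a short primitive root, and \cref{cor:mehlhorn} bounding even-level blocks by six active symbols of length at most $d_{k-1}$. The only cosmetic difference is that the paper first dispatches all length-one blocks via the inductive hypothesis before splitting on parity, whereas you split on parity first; your explicit invocation of \cref{fct:distinct} to license \cref{cor:mehlhorn} is a correct (and slightly more careful) spelling-out of a step the paper leaves implicit.
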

\begin{proof}
  We proceed by induction on $k$.  If $k=0$, then
  $|\str(T_k[j])|=1<3$.  Thus, we may assume $k>0$.  Let $T_{k-1}[i\dd
  i+\ell)$ be the fragment of $T_{k-1}$ obtained by expanding
  $T_k[j]$.  If $\ell=1$, then the inductive assumption shows that
  $\str(T_k[j])$ is of length at most $3d_{k-1} \le 3d_k$ or its
  primitive root is of length at most $d_{k-1}\le d_k$.  If $\ell\ge
  2$ and $k$ is odd, then $T_{k-1}[i\dd i+\ell)=X^\ell$ for some $X\in
  \Act_{k-1}$, and thus the primitive root of $\str(T_k[j])$ is of
  length at most $\length(X)\le d_{k-1}=d_k$.  If $\ell\ge 2$ and $k$
  is even, then, by \cref{cor:mehlhorn}, $T_{k-1}[i\dd i+\ell)\in
  \Act_{k-1}^{\ell}$ and $\ell\le 6$. Consequently, $|\str(T_k[j])|\le
  \ell\cdot d_{k-1} \le 6d_{k-1} = 3d_k$.
\end{proof}

\begin{lemma}\label{lem:sparse}
  For every $T\in \Sigma^+$, $k\in \Zz$, and $j\in [1\dd |T_k|)$, we
  have $|\str(T_k[j\dd j+1])|>\frac12 d_k$.
\end{lemma}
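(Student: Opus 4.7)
The plan is to proceed by induction on $k$, using the arithmetic facts that $d_0=d_1=1$, that $d_k = d_{k-1}$ when $k\ge 1$ is odd, and that $d_k = 2d_{k-1}$ when $k$ is even. The base case $k=0$ is immediate: each $T_0[j]$ equals $\letter(T[j])$, so $|\str(T_0[j\dd j+1])| = 2 > \tfrac12 = \tfrac12 d_0$.

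For the inductive step, fix $j\in [1\dd|T_k|)$, let $T_{k-1}[i\dd i+a)$ and $T_{k-1}[i+a\dd i+a+b)$ be the two consecutive blocks of $T_{k-1}$ that collapse to $T_k[j]$ and $T_k[j+1]$ respectively, and split on parity. If $k$ is odd, the parsing is run-based, so each block is either a single inactive symbol (of length $>d_{k-1}=d_k>\tfrac12 d_k$, in which case we are already done) or a maximal run $X^m$ with $X\in\Act_{k-1}$. In the remaining case where both blocks are active runs $\block(X^a)$ and $\block(Y^b)$, maximality forces $X\ne Y$, so $X=T_{k-1}[i+a-1]$ and $Y=T_{k-1}[i+a]$ form a consecutive pair in $T_{k-1}$; the inductive hypothesis then gives $|\str(T_k[j\dd j+1])|\ge \length(X)+\length(Y) > \tfrac12 d_{k-1}=\tfrac12 d_k$.

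If $k$ is even, the parsing is signature-based applied to $T_{k-1}$. Because $k-1$ is odd, \cref{fct:distinct} says that no two consecutive symbols of $T_{k-1}$ are equal and active, so \cref{cor:mehlhorn} applies: each block is either a single inactive symbol (again of length $>d_{k-1}=\tfrac12 d_k$, done) or contains between $1$ and $6$ active symbols, with a length-$1$ active block occurring only when the lone symbol has no adjacent active symbol. Suppose both blocks consist of active symbols. If $a=1$, the only symbol $T_{k-1}[i]$ is active with no adjacent active symbol; then $T_{k-1}[i+1]\notin\Act_{k-1}$, which forces the second block to be the length-$1$ inactive block containing just $T_{k-1}[i+1]$, contradicting the assumption that it is an active block. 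Hence $a\ge 2$, and symmetrically $b\ge 2$.

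Now applying the inductive hypothesis separately to the two disjoint consecutive pairs $T_{k-1}[i\dd i+1]$ and $T_{k-1}[i+a\dd i+a+1]$ (both of which lie inside the combined fragment $T_{k-1}[i\dd i+a+b)$) yields $|\str(T_{k-1}[i\dd i+1])|+|\str(T_{k-1}[i+a\dd i+a+1])| > 2\cdot\tfrac12 d_{k-1}=d_{k-1}=\tfrac12 d_k$, and since this sum is at most $|\str(T_k[j\dd j+1])|$, the claim follows. The only real subtlety is the argument that $a,b\ge 2$ in the even case; everything else is a straightforward bookkeeping application of the inductive hypothesis plus the previously established structure of blocks from \cref{cor:mehlhorn} and the run-parsing definition.
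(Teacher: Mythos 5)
Your proof is correct and follows essentially the same inductive strategy as the paper: induct on $k$, split on parity, dispose of inactive boundary symbols via the definition of $\Act_{k-1}$ (which forces length $>d_{k-1}$), and in the all-active even case apply the inductive hypothesis to two disjoint consecutive pairs of $T_{k-1}$ after deducing from \cref{cor:mehlhorn} that both blocks have degree at least $2$. The only difference is cosmetic: in the odd case the paper skips your case analysis entirely and just applies the inductive hypothesis to the first two symbols $T_{k-1}[i\dd i+1]$ of the combined expansion, which always exist since each block contributes at least one symbol.
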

\begin{proof}
  We proceed by induction on $k$. For $k=0$, the claim holds
  trivially: $|\str(T_0[j\dd j+1])|=2 > \frac12$.  Otherwise, let
  $T_{k-1}[i\dd i+\ell)$ be the fragment of $T_{k-1}$ obtained by
  expanding both symbols of $T_k[j\dd j+1]$.  If $k$ is odd, then
  $|\str(T_{k-1}[i\dd i+\ell))|\ge |\str(T_{k-1}[i\dd i+1])| > \frac12
  d_{k-1} = \frac12 d_k$.  If $k$ is even and $T_{k-1}[i]\in \Symb\sm
  \Act_{k-1}$ or $T_{k-1}[i+\ell-1]\in \Symb\sm \Act_{k-1}$, then
  $|\str(T_k[j\dd j+1])|> d_{k-1} = \frac12 d_k$ by definition of
  $\Act_{k-1}$.  Otherwise, by \cref{cor:mehlhorn},
  $\degree(T_{k}[j])\ge 2$ and $\degree(T_{k}[j+1])\ge 2$, and thus
  $|\str(T_k[j\dd j+1])|\ge |\str(T_{k-1}[i\dd i+3])|> 2\cdot \frac12
  d_{k-1} = \frac12 d_k$.
\end{proof}

Let us define sequences $(\alpha_k)_{k=0}^\infty$ and
$(\beta_k)_{k=0}^\infty$ with
\[
  \alpha_k =
  \begin{cases} 0 & \text{if }k=0,\\ \alpha_{k-1}+d_{k-1} &
    \text{if }k\in \Zp\text{ is odd},\\ \alpha_{k-1} + (h_{k-1} +
    7)d_{k-1} & \text{if }k\in \Zp\text{ is even};
  \end{cases}
  \quad
  \beta_k=
  \begin{cases}
    0 & \text{if }k=0,\\ \beta_{k-1} + d_{k-1} & \text{if }k\in
    \Zp\text{ is odd},\\ \beta_{k-1} + 4d_{k-1} & \text{if }k\in
    \Zp\text{ is even}.
  \end{cases}
\]

\begin{fact}\label{fct:alphabeta}
  For every $k\in \Zp$, we have $d_k \le \alpha_k <
  (2h_{k-1}+16)d_{k-1}$ and $\beta_k < 10d_{k-1}$.
\end{fact}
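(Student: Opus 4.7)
The plan is to unfold both recurrences into telescoping sums and then bound the resulting (geometric) series. Writing out the recursion gives
\[
  \alpha_k \;=\; \sum_{\substack{i=1\\ i\text{ odd}}}^{k} d_{i-1} \;+\; \sum_{\substack{i=1\\ i\text{ even}}}^{k} (h_{i-1}+7)\,d_{i-1},
  \qquad
  \beta_k \;=\; \sum_{\substack{i=1\\ i\text{ odd}}}^{k} d_{i-1} \;+\; \sum_{\substack{i=1\\ i\text{ even}}}^{k} 4\,d_{i-1},
\]
and I would use the identity $d_{i-1}=2^{\lfloor(i-1)/2\rfloor}$, so that consecutive odd/even pairs $(i,i+1)=(2j-1,2j)$ both contribute $d_{i-1}=2^{j-1}$. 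I would also record the two elementary identities $d_k=d_{k-1}$ for $k$ odd and $d_k=2d_{k-1}$ for $k$ even, which are the only arithmetic facts needed to translate between indices.

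For the lower bound $d_k\le \alpha_k$, I would handle the two parities separately. If $k$ is odd, the last term in the sum for $\alpha_k$ already contributes $d_{k-1}=d_k$, so we are done. If $k$ is even, the last term contributes $(h_{k-1}+7)d_{k-1}\ge 7 d_{k-1}>2 d_{k-1}=d_k$. Both cases are immediate from the unfolded formula above.

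For the upper bound $\alpha_k<(2h_{k-1}+16)d_{k-1}$, I would first treat $k$ even. Pairing up terms as noted, the expansion becomes
\[
  \alpha_k \;=\; \sum_{j=0}^{k/2-1} 2^{j} \;+\; \sum_{j=1}^{k/2}(h_{2j-1}+7)\cdot 2^{j-1}
  \;=\; (2^{k/2}-1) \;+\; \sum_{j=1}^{k/2}(h_{2j-1}+7)\cdot 2^{j-1}.
\]
Using that the sequence $(h_i)$ is nondecreasing (this is harmless: each $h_i$ is only required to satisfy $|\Act_i|\le 2^{\tow{h_i}}$, so we may freely replace it by its running maximum without affecting \cref{constr:parsing}), I would bound every $h_{2j-1}$ by $h_{k-1}$, obtaining
\[
  \alpha_k \;\le\; (2^{k/2}-1)+(h_{k-1}+7)(2^{k/2}-1)\;=\;(h_{k-1}+8)(2^{k/2}-1)\;<\;(h_{k-1}+8)\cdot 2 d_{k-1}\;=\;(2h_{k-1}+16)d_{k-1},
\]
since $2^{k/2}=2d_{k-1}$ for $k$ even. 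For $k$ odd, the even case (applied to $k-1$) and the identity $d_{k-1}=2d_{k-2}$ give $\alpha_{k-1}<(h_{k-2}+8)d_{k-1}$, hence $\alpha_k=\alpha_{k-1}+d_{k-1}<(h_{k-2}+9)d_{k-1}\le (2h_{k-1}+16)d_{k-1}$.

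The bound on $\beta_k$ is analogous but strictly easier, since the coefficients are absolute constants. For $k$ even, the same pairing yields $\beta_k=(2^{k/2}-1)+4(2^{k/2}-1)=5(2^{k/2}-1)<5\cdot 2d_{k-1}=10 d_{k-1}$. For $k$ odd, one uses the even case and $d_{k-1}=2d_{k-2}$ to get $\beta_{k-1}<10 d_{k-2}=5 d_{k-1}$, whence $\beta_k=\beta_{k-1}+d_{k-1}<6 d_{k-1}<10 d_{k-1}$. There is no real obstacle here; the only mildly subtle point is the reduction to a nondecreasing $(h_i)$ in the upper bound on $\alpha_k$, after which the whole statement reduces to summing two geometric series.
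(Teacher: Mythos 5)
Your proof is correct and, in substance, coincides with the paper's: the paper runs an induction with a strengthened hypothesis for odd $k$ (namely $\alpha_k<(h_{k-1}+9)d_{k-1}$ and $\beta_k<6d_{k-1}$), which is exactly what your unrolled telescoping sums produce, so the two arguments differ only in whether the geometric series is summed explicitly or hidden inside the induction.

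Two small caveats. First, your step $h_{2j-1}\le h_{k-1}$ needs $(h_i)$ to be nondecreasing. That is in fact true (the map $X\mapsto \block(X)$ injects $\Act_i$ into $\Act_{i+1}$, since it raises the level by one and preserves $\length$ while $d_i\le d_{i+1}$, so $|\Act_i|\le|\Act_{i+1}|$), but your fallback justification — replacing $h_i$ by its running maximum — does not by itself establish the stated fact: that substitution enlarges both $\alpha_k$ and the target bound $(2h_{k-1}+16)d_{k-1}$, so you would only prove the inequality for the modified quantities, not the ones defined in the paper. To be fair, the paper's own induction also silently uses $h_{k-2}\le h_{k-1}$ in both parity cases, so this is a shared rather than a new debt. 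Second, your odd-$k$ reductions invoke $h_{k-2}$ and $d_{k-2}$, which are undefined when $k=1$; the base case $k=1$ (where $\alpha_1=\beta_1=1$ and the bounds are trivially satisfied) should be checked directly, as the paper does.
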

\begin{proof}
  The lower bound on $\alpha_k$ is shown by simple induction.  The
  base case holds for $k=1$ due to $\alpha_1=d_0=1=d_1$.  For $k\ge
  2$, the inductive assumption yields $\alpha_{k}\ge
  \alpha_{k-1}+d_{k-1} \ge 2d_{k-1} = 2^{\floor{(k-1)/2}+1} \ge
  2^{\floor{k/2}} = d_{k}$.

  As for the upper bound on $\alpha_k$, we use induction with a
  stronger upper bound of $\alpha_k < (h_{k-1}+9)d_{k-1}$ for odd
  values $k$.  The base case of $k=1$ holds due to $\alpha_1=1 < h_0+9
  = (h_0+9)d_0$.  If $k$ is odd, then $\alpha_k = \alpha_{k-1}+d_{k-1}
  < (2h_{k-2}+16)d_{k-2}+d_{k-1} = (h_{k-2}+9)d_{k-1} \leq
  (h_{k-1}+9)d_{k-1}$ holds by the inductive assumption.  If $k$ is
  even, then $\alpha_k = \alpha_{k-1}+(h_{k-1}+7)d_{k-1} <
  (h_{k-2}+9)d_{k-2}+(h_{k-1}+7)d_{k-1} \le (2h_{k-1}+16)d_{k-1}$
  holds by the inductive assumption.

  As for the upper bound on $\beta_k$, we use induction with a
  stronger upper bound of $\beta_k < 6d_{k-1}$ for odd values $k$.
  The base case of $k=1$ holds due to $\beta_1=1 < 6 = 6d_0$.  If $k$
  is odd, then $\beta_k = \beta_{k-1}+d_{k-1} < 10d_{k-2}+d_{k-1} =
  6d_{k-1}$ holds by the inductive assumption.  If $k$ is even, then
  $\beta_k = \beta_{k-1}+4d_{k-1} < 6d_{k-2}+4d_{k-1} = 10d_{k-1}$
  holds by the inductive assumption.
\end{proof}

\begin{lemma}\label{lem:recompr1}
  Consider strings $T,T'\in \Sigma^+$, an integer $k\in \Zz$, and
  positions $i\in [0\dd |T|]$ and $i'\in [0\dd |T'|]$.  Suppose that
  the following two conditions are satisfied for every string
  $U\in\Sigma^*$:
  \begin{itemize}
  \item If $cn\le \beta_k$, then $U$ is a prefix of $T(i\dd |T|]$ if
    and only if $U$ is a prefix of $T'(i'\dd |T'|]$.
  \item If $cn\le \alpha_k$, then $U$ is a suffix of $T[1\dd i]$ if
    and only if $U$ is a suffix of $T'[1\dd i']$.
  \end{itemize}
  Then, $i\in B_k(T)$ if and only if $i'\in B_{k}(T')$.
\end{lemma}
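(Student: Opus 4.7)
\medskip

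\noindent\textbf{Proof plan.}
The plan is to proceed by induction on $k$. The base case $k=0$ is immediate: $B_0(T)=[0\dd |T|]$ and $B_0(T')=[0\dd |T'|]$, so both $i\in B_0(T)$ and $i'\in B_0(T')$ hold by hypothesis on the position ranges, and $\alpha_0=\beta_0=0$ makes the two conditions on $U$ vacuous. For the inductive step $k\ge 1$, since $\alpha_{k-1}\le \alpha_k$ and $\beta_{k-1}\le \beta_k$, the inductive hypothesis applied directly at $(i,i')$ gives $i\in B_{k-1}(T)\iff i'\in B_{k-1}(T')$. As $B_k(T)\sub B_{k-1}(T)$ and likewise for $T'$, if neither is a level-$(k-1)$ boundary we are done, so I assume both are and let $j,j'$ be the unique indices with $i=|\str(T_{k-1}[1\dd j])|$ and $i'=|\str(T'_{k-1}[1\dd j'])|$.

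The core technical step is to propagate the inductive hypothesis to nearby positions. For every offset $\delta\in[\alpha_{k-1}-\alpha_k,\,\beta_k-\beta_{k-1}]$, I will verify that the contexts of sizes $\alpha_{k-1}$ and $\beta_{k-1}$ around $(i+\delta, i'+\delta)$ are sub-windows of the original contexts of sizes $\alpha_k,\beta_k$ around $(i,i')$ (using the assumed matching on $T(i\dd |T|]$ and on $T[1\dd i]$, and combining them when the shifted window straddles position $i$). This entitles me to apply the inductive hypothesis at $(i+\delta,i'+\delta)$ and conclude that $B_{k-1}$-boundaries match on these shifted positions. Combining this with the character-by-character agreement of $T$ and $T'$ on the corresponding windows, and with a standard inner induction on level showing that a symbol $X\in \Symb$ appearing in $T_{k-1}$ is determined by its expansion together with the lower-level parsing structure, I obtain that the sequence of active level-$(k-1)$ symbols in $T_{k-1}$ within this window around $j$ agrees, symbol by symbol, with the corresponding active symbol sequence in $T'_{k-1}$ around $j'$.

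I then split on the parity of $k$. If $k$ is even, a level-$k$ block ends at $T_{k-1}[j]$ iff $G_{h_{k-1}}(\sig(T_{k-1}))[j]=\one$, and by \cref{cor:consistent} any discrepancy with the value at $T'_{k-1}[j']$ would be witnessed by either a string in $\Act_{k-1}^{\le 4}$ that is a prefix of exactly one of $T_{k-1}(j\dd |T_{k-1}|]$ and $T'_{k-1}(j'\dd |T'_{k-1}|]$, or a string in $\Act_{k-1}^{\le h_{k-1}+7}$ that is a suffix of exactly one of $T_{k-1}[1\dd j]$ and $T'_{k-1}[1\dd j']$. An active symbol at level $k-1$ has expansion length at most $d_{k-1}$, so these candidate strings span at most $4d_{k-1}=\beta_k-\beta_{k-1}$ and $(h_{k-1}+7)d_{k-1}=\alpha_k-\alpha_{k-1}$ positions respectively—precisely the reach established by the propagation step—so no such witness can exist. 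If $k$ is odd, the restricted run parsing places a boundary at $T_{k-1}[j]$ iff $T_{k-1}[j]\notin \Act_{k-1}$ or $T_{k-1}[j]\ne T_{k-1}[j+1]$; both conditions depend only on the symbols $T_{k-1}[j]$ and $T_{k-1}[j+1]$, each of expansion length at most $d_{k-1}$ (when active; inactive ones force a boundary automatically), and are resolved by the propagation step on the range $\delta\in[-d_{k-1}, d_{k-1}]=[\alpha_{k-1}-\alpha_k, \beta_k-\beta_{k-1}]$.

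The main obstacle will be the propagation step: carefully verifying the context-containment for each admissible $\delta$ (in particular when $\delta$ has opposite sign to the direction being extended, so the required window straddles $i$), and formalizing the statement that corresponding active symbols across the two parsings are \emph{identical as elements of $\Symb$}, not merely as fragments of $T$. I expect this to require an auxiliary inner induction on levels $0,\ldots,k-1$ showing that two positions with matching $B_l$-boundaries throughout a shared window and identical underlying character contents give rise to the same level-$l$ symbols—essentially a structural consistency lemma for \cref{constr:parsing}—after which the parity cases above go through cleanly.
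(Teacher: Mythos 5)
Your proposal is correct and follows essentially the same route as the paper's proof: the same outer induction on $k$, the same propagation step (applying the inductive hypothesis at shifted positions $i+\delta$, $i'+\delta$ within the window $[\alpha_{k'}-\alpha_k\dd \beta_k-\beta_{k'}]$, proved via the contrapositive by extending a short witness string to a longer one), and the same parity split using \cref{cor:consistent} with the bounds $4d_{k-1}$ and $(h_{k-1}+7)d_{k-1}$ for even $k$ and the equal-active-pair argument for odd $k$. The "structural consistency" point you flag (that matching lower-level parse structure plus matching characters yields identical symbols, not just identical fragments) is used implicitly in the paper as well, so making it explicit is fine but not a deviation.
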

\begin{proof}
  We proceed by induction on $k$.  The base case of $k=0$ is trivially
  satisfied due to $B_0(T)=[0\dd |T|]$, $B_0(T')=[0\dd |T'|]$, and
  $\alpha_0=\beta_0=0$.

  For the inductive step, consider $k\in\Zp$.  We first claim that,
  for each $k'\in [0\dd k)$ and $\delta \in [\alpha_{k'}-\alpha_k\dd
  \beta_k-\beta_{k'}]$, we have $i+\delta\in B_{k'}(T)$ if and only if
  $i'+\delta\in B_{k'}(T')$.  By symmetry, we may assume for a proof
  by contradiction that $i+\delta\in B_{k'}(T)$ (and, in particular,
  $i+\delta \in [0\dd |T|]$) yet $i'+\delta\notin B_{k'}(T')$. We
  consider two cases.
  \begin{description}
  \item[\boldmath{$\delta \in [0 \dd \beta_k-\beta_{k'}]$}.]  Due to
    $T(i\dd i+\delta]\in \Sigma^{\le \beta_k}$, this string must occur
    as a prefix of $T'(i'\dd |T'|]$.  Thus, $i'+\delta \le |T'|$ and
    $T(i\dd i+\delta]=T'(i'\dd i'+\delta]$.  If there is a string
    $U\in \Sigma^{\le \beta_{k'}}$ that is a prefix of exactly one of
    the suffixes $T(i+\delta\dd |T|]$ and $T'(i'+\delta\dd |T'|]$,
    then $T(i\dd i+\delta]\cdot U\in \Sigma^{\le \beta_k}$ is a prefix
    of exactly one of the suffixes $T(i\dd |T|]$ and $T'(i'\dd
    |T'|]$. Otherwise, the inductive assumption yields a string $U\in
    \Sigma^{\le \alpha_{k'}}$ that is a suffix of exactly one of the
    prefixes $T[1\dd i+\delta]$ and $T'[1\dd i'+\delta]$. In this
    case, $U(cn-\delta\dd cn]=T(i\dd i+\delta]=T'(i'\dd i'+\delta]$
    and $U[1\dd cn-\delta]\in \Sigma^{\le \alpha_{k}}$ is a suffix of
    exactly one of the prefixes $T[1\dd i]$ and $T'[1\dd i']$.
  \item[\boldmath{$\delta \in [\alpha_{k'}-\alpha_k \dd 0]$}.]  Due to
    $T(i+\delta\dd i]\in \Sigma^{\le \alpha_k}$, this string must
    occur as a suffix of $T'[1\dd i']$.  Thus, $i'+\delta \ge 0$ and
    $T(i+\delta\dd i]=T'(i'+\delta\dd i']$.  If there is a string
    $U\in \Sigma^{\le \alpha_{k'}}$ that is a suffix of exactly one of
    the prefixes $T[1\dd i+\delta]$ and $T'[1\dd i'+\delta]$, then
    $U\cdot T(i+\delta\dd i]\in \Sigma^{\le \alpha_k}$ is a suffix of
    exactly one of the prefixes $T[1\dd i]$ and $T'[1\dd
    i']$. Otherwise, the inductive assumption yields a string $U\in
    \Sigma^{\le \beta_{k'}}$ that is a prefix of exactly one of the
    suffixes $T(i+\delta\dd |T|]$ and $T'(i'+\delta\dd |T'|]$. In this
    case, $U[1\dd -\delta]=T(i+\delta \dd i]=T'(i'+\delta\dd i']$ and
    $U(-\delta \dd cn]\in \Sigma^{\le \beta_{k}}$ is a prefix of
    exactly one of the suffixes $T(i\dd |T|]$ and $T'(i'\dd |T'|]$.
  \end{description}

  If $i\notin B_{k-1}(T)$ and $i'\notin B_{k-1}(T')$, then $i\notin
  B_k(T)$ and $i'\notin B_k(T')$, so the lemma holds trivially.
  Otherwise, the claim, instantiated with $k'=k-1$ and $\delta=0$,
  implies that both $i\in B_{k-1}(T)$ and $i'\in B_{k-1}(T')$.  Let us
  set $j,j'$ so that $i = |\str(T_{k-1}[1\dd j])|$ and $i' =
  |\str(T'_{k-1}[1\dd j'])|$.  By the assumption on $i,i'$, exactly
  one of the positions $j,j'$ is an endpoint of a block of the parsing
  of $T_{k-1}$ and $T'_{k-1}$.

  If $k$ is odd, this means that either $T_{k-1}[j]=T_{k-1}[j+1]\in
  \Act_{k-1}$ or $T'_{k-1}[j']=T'_{k-1}[j'+1]\in \Act_{k-1}$ (but not
  both). Without loss of generality, suppose that
  $T_{k-1}[j]=T_{k-1}[j+1]=X$ for some $X\in \Act_{k-1}$.  By the
  claim, for each $k'\in [0\dd k)$, the fragments $T(i-|\str(X)|\dd
  i+|\str(X)|]$ and $T'(i'-|\str(X)|\dd i'+|\str(X)|]$ are parsed into
  level-$k'$ phrases in the same way.  In particular, this implies
  $T'_{k-1}[j']=T_{k-1}[j]=T_{k-1}[j+1]=T'_{k-1}[j'+1]\in \Act_{k-1}$,
  a contradiction.

  If $k$ is even, the aforementioned condition yields by
  \cref{cor:consistent} a string $S\in \Act_{k-1}^{\le 4}$ that is a
  prefix of exactly one of the suffixes $T_{k-1}(j\dd |T_{k-1}|]$ and
  $T'_{k-1}(j'\dd |T'_{k-1}|]$, or a string $S\in \Act_{k-1}^{\le
  h_{k-1}+7}$ that is a suffix of exactly one of the prefixes
  $T_{k-1}[1\dd j]$ and $T'_{k-1}[1\dd j']$.  Without loss of
  generality, suppose that $S$ is a prefix of $T_{k-1}(j\dd
  |T_{k-1}|]$ or a suffix of $T_{k-1}[1\dd j]$. If $S$ is a prefix of
  $T_{k-1}(j\dd |T_{k-1}|]$ then, due to $|\str(S)|\le
  \beta_{k}-\beta_{k-1}$, the claim implies that, for each $k'\in
  [0\dd k)$, the fragments $T(i\dd i+|\str(S)|]$ and $T'(i'\dd
  i'+|\str(S)|]$ are parsed into level-$k'$ phrases in the same way.
  In particular, $S$ is also a prefix of $T'_{k-1}(j'\dd |T'_{k-1}|]$,
  a contradiction.  Similarly, if $S$ is a suffix of $T_{k-1}[1\dd j]$
  then, due to $|\str(S)|\le \alpha_{k}-\alpha_{k-1}$, the claim
  implies that, for each $k'\in [0\dd k)$, the fragments
  $T(i-|\str(S)|\dd i]$ and $T'(i'-|\str(S)|\dd i']$ are parsed into
  level-$k'$ phrases in the same way.  In particular, $S$ is also a
  suffix of $T'_{k-1}[1\dd j']$, a contradiction.
\end{proof}

\begin{corollary}\label{cor:lcps}
  Consider strings $T,T'\in \Sigma^+$, an integer $k\in \Zz$, and a
  string $S\in \Symb^*$.
  \begin{itemize}
  \item If $S$ is a prefix of exactly one of the strings $T_k,T'_k$,
    then $\lcp(T,T')<|\str(S)|+\beta_k$.
  \item If $S$ is a suffix of exactly one of the strings $T_k,T'_k$,
    then $\lcs(T,T')<|\str(S)|+\alpha_k$.
  \end{itemize}
\end{corollary}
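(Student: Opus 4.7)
The plan is to establish the prefix statement by contradiction and obtain the suffix statement symmetrically. Suppose, towards a contradiction, that $\lcp(T,T')\ge|\str(S)|+\beta_k$ while $S$ is a prefix of exactly one of $T_k$, $T'_k$; set $i:=|\str(S)|$, so that $T[1\dd i+\beta_k]=T'[1\dd i+\beta_k]$ (in particular, $|T|,|T'|\ge i+\beta_k$).  I will prove by induction on $k'\in[0\dd k]$ the following claim $(*_{k'})$: for every $j^*\in[0\dd |T_{k'}|]$ with $|\str(T_{k'}[1\dd j^*])|\le i$, we have $T_{k'}[1\dd j^*]=T'_{k'}[1\dd j^*]$.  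Instantiating $(*_k)$ with the $j^*$ for which $T_k[1\dd j^*]=S$ forces $S$ to be a prefix of $T'_k$ as well, contradicting the hypothesis.

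The base case $k'=0$ holds because $T_0$ and $T'_0$ are the letter-by-letter encodings of $T$ and $T'$, which agree on the first $i$ characters.  For the inductive step, fix any $i^*\in[0\dd i]$ and observe that $i^*+\beta_{k'}\le i+\beta_k\le\lcp(T,T')$, so any string $U$ of length at most $\beta_{k'}$ is a prefix of $T(i^*\dd |T|]$ if and only if it is a prefix of $T'(i^*\dd |T'|]$; likewise, $T[1\dd i^*]=T'[1\dd i^*]$ makes the suffix condition trivial for any $U$ of length at most $\alpha_{k'}$.  \cref{lem:recompr1}, invoked at level $k'$ with positions $i^*=i^{*\prime}$, therefore yields $i^*\in B_{k'}(T)\iff i^*\in B_{k'}(T')$ for every $i^*\in[0\dd i]$.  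Combined with the inductive hypothesis $(*_{k'-1})$ (which guarantees that the level-$(k'-1)$ symbols of $T_{k'-1}$ and $T'_{k'-1}$ covering $[0\dd i]$ coincide), this aligns the level-$k'$ block decompositions on $[0\dd i]$, and since each level-$k'$ symbol is determined by its block of level-$(k'-1)$ symbols, $(*_{k'})$ follows.

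For the suffix statement, set $i:=|T|-|\str(S)|$ and $i':=|T'|-|\str(S)|$, so that $\lcs(T,T')\ge|\str(S)|+\alpha_k$ gives $T(i-\alpha_k\dd |T|]=T'(i'-\alpha_k\dd |T'|]$ (in particular, $i,i'\ge\alpha_k$, and both strings coincide with $\str(S)$ past positions $i$ and $i'$ respectively).  The fully symmetric induction, with the roles of $\alpha$ and $\beta$ swapped, invokes \cref{lem:recompr1} at every position $i^*\in(i-\alpha_k\dd i]$ versus $i^{*\prime}=i^*-(i-i')$ to match the boundary sets and then leverages the corresponding inductive hypothesis to match the phrase symbols, forcing $S$ to be a suffix of both $T_k$ and $T'_k$ and yielding the desired contradiction.

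The main obstacle I anticipate is interleaving the per-position application of \cref{lem:recompr1} with the passage through levels $k'=0,1,\ldots,k$: a single invocation of that lemma certifies only one boundary at one level, so one has to check uniformly across levels that the required margins $\alpha_{k'},\beta_{k'}$ fit within the margins $\alpha_k,\beta_k$ provided by the $\lcp$ or $\lcs$ hypothesis, and that the inductive hypothesis supplies exactly the information needed to promote matched boundaries into matched phrase symbols at the next level.
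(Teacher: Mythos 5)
Your proposal is correct and follows essentially the same route as the paper's proof: a proof by contradiction combined with an induction over parsing levels, where \cref{lem:recompr1} supplies agreement of the block boundaries on the shared prefix (resp.\ suffix) and the level-$(k'-1)$ agreement supplies matching symbols to collapse. The only difference is presentational—you induct on an explicit invariant $(*_{k'})$ about prefix agreement of $T_{k'}$ and $T'_{k'}$, whereas the paper inducts directly on the corollary statement and applies its contrapositive to the expansion $S'$ of $S$ at level $k-1$; these are equivalent.
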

\begin{proof}
  We proceed by induction on $k$. The case of $k=0$ is trivial.  For a
  proof by contradiction, suppose that $\lcp(T,T') \ge
  |\str(S)|+\beta_k$ and $S$ is prefix of $T_k$ yet $S$ is not a
  prefix of $T'_k$ (the remaining cases are symmetric).  Let $S'$ be
  the prefix of $T_{k-1}$ obtained by expanding symbols in $S$.  By
  the inductive assumption, $S'$ is also a prefix of $T'_{k-1}$.  At
  the same time \cref{lem:recompr1} yields $B_{k}(T)\cap [0\dd
  \lcp(T,T')-\beta_k]=B_{k}(T')\cap [0\dd \lcp(T,T')-\beta_k]$. Hence,
  the block boundaries within $S'$ are placed in the same way in the
  parsing of $T_{k-1}$ and $T'_{k-1}$. Consequently, $S$ is also a
  prefix of $T'_k$, a contradiction.
\end{proof}

\section{Dynamic Strings}\label{sec:ds}

By \cref{lem:sparse}, for every signature function $\sig$ and every
string $T\in \Sigma^+$ $|T_k|=1$ holds for sufficiently large $k\in
\Zz$ (whenever $k \ge 2\ceil{\log (2|T|)}$).  Hence, we define
$\symb_{\sig}(T)$ as the unique symbol in the string $T_k$ for the
smallest $k\in \Zz$ with $|T_k|=1$.

We maintain a growing set of strings $\W\sub \Sigma^+$ and represent
each string $T\in \W$ as $\symb_{\sig}(T)$ for implicit signature
function $\sig$.  Our data structure consists of a grammar $\Gr\sub
\Symb$ (maintained using \cref{lem:gr}) and the values $\sig(X)$
stored for all $X\in \Gr$.  The key invariant is that
$\symb_{\sig}(T)\in \Gr$ holds for each $T\in \W$; based on this, we
represent each string $T\in \W$ using the identifier
$\id(\symb_{\sig}(T))$ of the underlying symbol.  The following simple
observation allows leaving $\sig(X)$ unspecified for $X\in \Symb\sm
\Gr$.
\begin{observation}
  Consider a grammar $\Gr\sub \Symb$ and two signature functions
  $\sig,\sig'$ such that $\sig|_\Gr=\sig'|_\Gr$.  If
  $\symb_{\sig}(T)\in \Gr$ for some string $T\in \Sigma^+$, then
  $\symb_{\sig'}(T)=\symb_{\sig}(T)$.
\end{observation}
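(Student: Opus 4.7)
The plan is to run a two-stage induction: first go \emph{downward} to show that every symbol appearing at every level of the $\sig$-parsing of $T$ must lie in $\Gr$, and then go \emph{upward} to show that the $\sig'$-parsing builds the exact same sequence of strings.

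Let $(T_k)_{k=0}^\infty$ denote the balanced signature parsing of $T$ with respect to $\sig$, and let $k^\star := \min\{k \in \Zz : |T_k| = 1\}$, so that $\symb_\sig(T) = T_{k^\star}[1]$. First, I would prove by induction on $k$ (descending from $k^\star$ to $0$) that every entry of $T_k$ lies in $\Gr$. The base case $k = k^\star$ is the hypothesis $\symb_\sig(T) \in \Gr$. For the step, each symbol of $T_k$ has the form $\letter(a)$ (which then lies in $\Gr$ since $\Gr$ is a grammar only if such leaves are present, but here only blocks appear for $k \ge 1$, and in fact since $T_k$ consists of collapsed blocks of $T_{k-1}$, each $T_k[j] = \block(S_j)$ for some $S_j \in \Symb^+$ drawn from consecutive entries of $T_{k-1}$); since $\Gr$ is a grammar, $\block(S_j) \in \Gr$ forces $S_j \in \Gr^+$, so every entry of $T_{k-1}$ belongs to $\Gr$.

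Second, I would prove by induction on $k$ (ascending from $0$) that the $\sig'$-parsing of $T$, call it $(T'_k)_{k=0}^\infty$, satisfies $T'_k = T_k$ for all $k \le k^\star$. The base case $k=0$ holds because $T_0$ depends only on $T$, not on the signature function. For the step, consider how $T_k$ is produced from $T_{k-1}$: if $k$ is odd, the restricted run parsing is used, which depends only on equality of adjacent symbols in $T_{k-1}$ and on their membership in $\Act_{k-1}$; both are intrinsic to the symbols and so the same parsing results. If $k$ is even, the restricted signature parsing is used, and it depends on $\sig$ only through the values $\sig_{k-1}(T_{k-1}[i])$ for active entries $T_{k-1}[i] \in \Act_{k-1}$. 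By the downward induction, all these entries lie in $\Gr$, and by the hypothesis $\sig|_\Gr = \sig'|_\Gr$ we get $\sig(T_{k-1}[i]) = \sig'(T_{k-1}[i])$, so $\sig(T_{k-1}) = \sig'(T_{k-1})$ as strings over $\ptow{h_{k-1}}$. Hence the block boundaries computed via $G_{h_{k-1}}$ in \cref{thm:mehlhorn} coincide, so the collapsed string $T'_k$ equals $T_k$.

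Applying this at $k = k^\star$ gives $T'_{k^\star} = T_{k^\star}$, a string of length $1$. In particular, $|T'_{k^\star}| = 1$, so the index at which the $\sig'$-parsing first stabilizes is at most $k^\star$, and the unique entry of $T'_{k^\star}$ equals $T_{k^\star}[1] = \symb_\sig(T)$. Since length $1$ is preserved at subsequent levels (the collapse of a length-$1$ string is itself), $\symb_{\sig'}(T) = \symb_\sig(T)$, as required. The only mildly delicate point is the first induction, where one must invoke the grammar-closure property of $\Gr$ to descend from a block symbol to its children; everything else is routine.
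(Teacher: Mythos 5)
Your proof is correct, and it supplies an argument that the paper itself omits (the statement is labelled a ``simple observation'' and left unproved); the two-stage induction --- downward via grammar closure to get every level-$k$ symbol of the $\sig$-parsing into $\Gr$, then upward to show the $\sig'$-parsing produces identical strings $T'_k=T_k$ because run parsing is signature-independent and signature parsing only reads $\sig$ on symbols of $T_{k-1}\subseteq\Gr$ --- is exactly the intended reasoning.

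One small inaccuracy at the end: the parenthetical ``the collapse of a length-$1$ string is itself'' is false under \cref{constr:parsing}, since a length-$1$ block $T_{k}[j\dd j{+}1)$ collapses to the fresh wrapper symbol $\block(T_{k}[j\dd j{+}1))\ne T_k[j]$ (this is precisely why $\symb_\sig(T)$ is defined at the \emph{smallest} stabilizing level). Fortunately you do not need that claim: your upward induction gives $T'_k=T_k$ for all $k\le k^\star$, so $|T'_k|>1$ for $k<k^\star$ by minimality of $k^\star$ and $|T'_{k^\star}|=1$, whence the minimal stabilization index for $\sig'$ is exactly $k^\star$ and $\symb_{\sig'}(T)=T'_{k^\star}[1]=T_{k^\star}[1]=\symb_\sig(T)$.
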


Efficiency of our data structure is supported by an additional
invariant that, for each $k\in \Zz$, the signature function $\sig$
injectively maps $\Act_k\cap \Gr$ to $[0\dd |\Act_k\cap \Gr|)$.
Whenever a symbol $X\in \Act_k\sm \Gr$ is added to $\Gr$, it is
assigned the smallest ``free'' signature $|\Act_k\cap \Gr|$.  This
way, we make sure that the explicitly assigned signature values have
$\Oh(\log |\Gr|)$ bits.

In the following, we describe various operations supported by this
data structure.  Internally, we measure the efficiency in terms of the
grammar size $g:=|\Gr|$ (after executing the operation), the alphabet
size $\sigma := |\Sigma|$, and the lengths of the strings involved in
the operations.  We also assume that the machine word size $w$
satisfies $w=\Omega(\log (\sigma g))$.

\subsection{Access}

The $\accessop(T,i)$ operation retrieves $T[i]$ for a given string
$T\in \W$ and position $i\in [1\dd |T|]$.  For this, we traverse the
parse tree $\Tr(\symb_\sig(T))$ maintaining a pointer to a node $\nu$
such that $\str(\nu)=T(\ell\dd r]$ for $i\in (\ell\dd r]$.  If $\nu$
is a leaf, we return $\letter^{-1}(\symb(\nu))$.  If $\nu$ has $d\le
6$ children, we compute $\preflength(\symb(\nu),j)$ for $j\in [0\dd
d]$ and descend to the child $\nu_j$ such that
$\preflength(\symb(\nu),j-1)<i-\ell\le \preflength(\symb(\nu),j)$.
Otherwise, we have $\preflength(\symb(\nu),j)=j\cdot
\preflength(\symb(\nu),1)$ for $j\in [0\dd d]$, so we descend to the
$\ceil{\frac{i-\ell}{\preflength(\symb(\nu),1)}}$-th child.  Overall,
the running time is $\Oh(\level(\symb_\sig(T)))=\Oh(\log |T|)$.

\subsection{Longest Common Prefix}

The operation $\lcpop(T,U)$ computes the (length of) the longest
common prefix of any two strings $T,U\in \W$.  Let
$(T_k)_{k=0}^\infty$ and $(U_k)_{k=0}^\infty$ be the balanced
signature parsing of $T$ and $S$, respectively.  Moreover, let
$X=\symb_{\sig}(T)$ and $Y=\symb_{\sig}(U)$.  Our algorithm traverses
the parse trees $\Tr(X)$ and $\Tr(Y)$ maintaining two pointers $\nu_T$
and $\nu_U$, as illustrated in \cref{alg:lcp}.

\begin{algorithm}
  $\nu_T := \root(\symb_{\sig}(T))$\;
  $\nu_U := \root(\symb_{\sig}(U))$\;
  $\lcp := 0$\;
  \lWhile{$\level(\nu_T)>\level(\nu_U)$}{$\nu_T := \child(\nu_T,1)$}
  \lWhile{$\level(\nu_U)>\level(\nu_T)$}{$\nu_U := \child(\nu_U,1)$}\label{ln:pre}
  \While{\KwSty{true}}{\label{ln:whiletrue}
    \While{$\nu_T\ne \bot$ \KwSty{and} $\nu_U \ne \bot$ \KwSty{and} $\symb(\nu_T) = \symb(\nu_U)$}{\label{ln:fwd}
      $\lcp := \lcp + \length(\symb(\nu_T))$\;
      $\nu_T := \rightop(\nu_T)$\;
      $\nu_U := \rightop(\nu_U)$\;
    }
    \lIf{$\nu_T = \bot$ \KwSty{or} $\nu_U = \bot$ \KwSty{or} $\level(\nu_T)=0$}{\Return{$\lcp$}}
    $\nu_T := \child(\nu_T,1)$\;\label{ln:down}
    $\nu_U := \child(\nu_U,1)$\;
  }
  \caption{$\mathsf{lcp}(T,U)$}\label{alg:lcp}
\end{algorithm}
In this algorithm, we use a short-hand
$\level(\nu):=\level(\symb(\nu))$ and an extra $\rightop(\nu)$
operation that, given a node $\nu$ corresponding to $T_k[i]$, returns
the node corresponding to $T_k[i+1]$ or $\bot$ if $i=|T_k|$ (and
analogously for $U_k$).  We maintain an invariant at
\cref{ln:whiletrue} that $\level(\nu_T)=\level(\nu_U)$ and, denoting
this common value by $k$, the pointers $\nu_T$ and $\nu_U$ correspond
to symbols $T_k[i]$ and $U_k[i]$ such that $T_k[1\dd i)=U_k[1\dd i)$
and $\lcp = |\str(T_k[1\dd i))|$.  It is easy to see that this
invariant is indeed satisfied after executing \cref{ln:pre} with
$i=\lcp=0$ and $k=\min(\level(\symb_\sig(T)),\level(\symb_\sig(U)))$.
The while loop of \cref{ln:fwd} maintains the invariant and increments
$i$ as far as possible.  If $i$ reaches $|T_k|+1$ or $|U_k|+1$, then
we conclude that $T_k$ is a prefix of $U_k$ (or vice versa), and thus
the longest common prefix of $T$ and $S$ is of length $\lcp =
\min(|\str(T_k)|,|\str(U_k)|)=\min(|T|,cn)$; it is then reported
correctly.  Similarly, if $k=0$ and $T_k[i]\ne U_k[i]$, then the
longest common prefix of $T$ and $S$ is of length $i-1=\lcp$.  In the
remaining case, by moving $\nu_T$ and $\nu_U$ to the leftmost
children, we maintain the invariant and decrement the level $k$.

An efficient implementation of \cref{alg:lcp} requires one
optimization: if, at \cref{ln:down}, we have $\symb(\nu_T) =
\block(S_X)$ and $\symb(\nu_U)=\block(S_Y)$, then we immediately
simulate the first $\ell:=\lcp(S_X,S_Y)$ subsequent steps of
\cref{ln:whiletrue}. (Since $S_X$ and $S_Y$ are symbol powers or
strings of constant length, the value $\ell$ can be determined in
$\Oh(1)$ time.)  To carry out such simulation, we set $\lcp:=\lcp +
\preflength(\nu_T,\ell)$, $\nu_T := \child(\nu_T,\ell+1)$, and $\nu_U
:= \child(\nu_U,\ell+1)$, where
$\child(\nu,\degree(\nu)+1)=\child(\rightop(\nu),1)$.

As our traversal of $\Tr(X)$ and $\Tr(Y)$ always proceeds forward in
the pre-order, the time complexity is proportional to the number of
visited nodes (including the ancestors visited while traversing the
paths from $\nu$ to $\rightop(\nu)$).  \cref{cor:lcps} guarantees
that, for each $k$, the visited nodes at level $k$ correspond to
level-$k$ phrases overlapping $T(\max(0,\lcp(T,U)-\beta_{k+1})\dd
\lcp(T,U)]$. By \cref{lem:sparse,fct:alphabeta}, the number of such
phrases is $\Oh(\frac{\beta_{k+1}}{d_k})=\Oh(1)$.  Consequently, the
overall running time is $\Oh(\level(X)+\level(Y))=\Oh(\log|TU|)$.

\subsection{Longest Common Suffix}

A symmetric operation $\lcsop(T,U)$ computes the (length of) the
longest common suffix of any two strings $T,U\in \W$.  Its
implementation is analogous to that of $\lcpop(T,U)$.  However, in the
analysis, the number of level-$k$ nodes visited is
$\Oh(\frac{\alpha_{k+1}}{d_k})=\Oh(h_{k+1})=\Oh(\log^*(k\sigma))$,
which yields the total running time of
$\Oh(\log|TU|\log^*(|TU|\sigma))$.

\subsection{Insertion}

The $\makeop(T)$ operation inserts to $\W$ a given string $T\in
\Sigma^+$.  For this, we construct the subsequent levels of the
balanced signature parsing $(T_k)_{k=0}^\infty$ until reaching
$|T_k|=1$ so that $\symb_{\sig}(T)=T_k[1]$.  The construction of $T_k$
for $k=0$ costs $\Oh(|T|\frac{\log^2 \log g}{\log \log \log g})$ time.
The cost for odd values $k\in \Zp$ is $\Oh(|T_{k-1}|\frac{\log^2 \log
g}{\log \log \log g})$, whereas the cost for even values $k\in \Zp$ is
$\Oh(|T_{k-1}|(h_{k-1}+\frac{\log^2 \log g}{\log \log \log g})) =
\Oh(|T_{k-1}|(\log^*(\sigma k)+\frac{\log^2 \log g}{\log \log \log
g}))$ by \cref{thm:mehlhorn,rem:nk}.  Since \cref{lem:sparse} yields
$|T_k|=\Oh(1+\frac{|T|}{d_k})=\Oh(1+\frac{|T|}{2^{k/2}})$, the overall
cost is $\Oh(|T|(\log^*\sigma + \frac{\log^2 \log g}{\log \log \log
g}))$.

\subsection{Concatenation}

The $\concatop(L,R)$ operation, given two strings $L,R\in \W$, inserts
their concatenation $T:=L\cdot R$ to $\W$.  Consider the balanced
signature parsing $(L_k)_{k=0}^\infty$ of $L$, $(R_k)_{k=0}^\infty$ of
$R$, and $(T_k)_{k=0}^\infty$ of $T$.  For each $k\in \Zz$, let $P_k$
denote the longest prefix of $T_k$ with $|\str(P_k)|\le
\max(0,|L|-\beta_k)$, and let $S_k$ denote the longest suffix of $T_k$
with $|\str(S_k)|\le \max(0,|R|-\alpha_k)$.  By \cref{cor:lcps}, $P_k$
is also the longest prefix of $L_k$ with $|\str(P_k)|\le
\max(0,|L|-\beta_k)$, and $S_k$ is also the longest suffix of $R_k$
with $|\str(S_k)|\le \max(0,|R|-\alpha_k)$.  Moreover, define
$L'_k,R'_k,C_k$ so that $L_k=P_k L'_k$, $R_k = R'_kS_k$, and $T_k =
P_k C_k S_k$.  Furthermore, let $P'_k$ be the suffix of $P_k$ such
that $\str(P_k)=\str(P_{k+1})\str(P'_k)$, and let $S'_k$ be the prefix
of $S_k$ such that $\str(S_k)=\str(S'_k)\str(S_{k+1})$.

Our implementation of $\concatop(L,R)$ constructs $C_k$ for subsequent
integers $k\in \Zz$.  For $k=0$, the string $C_0$ is empty.  For odd
$k\in \Zp$, we build $P'_{k-1}C_{k-1}S'_{k-1}$ and apply the
restricted run-length parsing with respect to $\Act_{k-1}$.  Finally,
we retrieve $C_k$ by collapsing each block into the corresponding
symbol, inserted to $\Gr$ using $\iP$.  For even $k\in \Zp$, we
additionally retrieve the longest string $P''_{k-1}\in \Act_{k-1}^{\le
h_{k-1}+7}$ such that $P''_{k-1}P'_{k-1}$ is a suffix of $P_{k-1}$,
and the longest string $S''_{k-1}\in \Act_{k-1}^{\le 4}$ such that
$S'_{k-1}S''_{k-1}$ is a prefix of $S_{k-1}$.  Then, we perform the
balanced signature parsing of
$P''_{k-1}P'_{k-1}C_{k-1}S'_{k-1}S''_{k-1}$ with respect to
$\Act_{k-1}$, $h_{k-1}$, and $\sig_{k-1}$.  Finally, we cut
$P'_{k-1}C_{k-1}S'_{k-1}$ (which is guaranteed to consist of full
blocks by \cref{cor:consistent}), and we obtain $C_{k}$ by collapsing
each block into the corresponding symbol, inserted to $\Gr$ using
$\iT$.  We stop as soon as $|C_k|=1$ and $|P_k|=|S_k|=0$; the only
symbol of $C_k$ is then $Z:=\symb_{\sig}(T)$.

The efficiency of this procedure follows from the fact that, by
\cref{lem:sparse,fct:alphabeta}, $|C_k|=\Oh(h_k)$, $|L'_k| = \Oh(1)$,
and $|R'_k|=\Oh(h_k)$.  For odd $k\in \Zp$, we also have
$|P''_{k-1}P'_{k-1}|=\Oh(h_k)$ and $|S'_{k-1}S''_{k-1}|=\Oh(h_k)$,
whereas for even $k\in \Zp$, the strings $P'_{k-1}$ and $S'_{k-1}$
have run-length parsing of size $\Oh(1)$ and $\Oh(h_k)$, respectively.
In particular, all the required strings $P'_{k-1}$ and $P''_{k-1}$ can
be generated in $\Oh(\log |T| \log^*(|T|\sigma))$ time by traversing
the parse tree $\Tr(X)$, whereas all the required strings $S'_{k-1}$
and $S''_{k-1}$ can be generated in $\Oh(\log |T| \log^*(|T|\sigma))$
time by traversing the parse tree $\Tr(Y)$.  The cost of run-length
parsing at even levels $k\in \Zp$ is $\Oh(h_{k})=\Oh(\log^*
(|T|\sigma))$, whereas the cost of signature parsing at odd levels
$k\in \Zp$ is $\Oh(h_{k-1} h_k)=\Oh((\log^* |T|\sigma)^2)$, for a
total of $\Oh(\log |T| (\log^*(|T|\sigma))^2)$ across all levels.
Collapsing each block into the corresponding symbol costs
$\Oh(\frac{\log^2 \log g}{\log \log \log g})$ time, for a total of
$\Oh(\log|T|\log^*(|T|\sigma)(\log^*(|T|\sigma)+\frac{\log^2\log
g}{\log \log \log g}))$ time.

\subsection{Split}

The $\splitop(T,i)$ operation, given $T\in \W$ and $i\in [1\dd |T|)$,
inserts $L:= T[1\dd i]$ and $R:=T(i\dd |T|]$ to $\W$.  We utilize
the same notation as in the implementation of the operation
$\concatop(L,R)$.

To derive $\symb(L,\sig)$, we build $L'_k$ for subsequent integers
$k\in \Zz$.  For $k=0$, the string $L'_0$ is empty.  For odd $k\in
\Zp$, we build $P'_{k-1}L'_{k-1}$ and apply the restricted run-length
parsing with respect to $\Act_{k-1}$.  Finally, we retrieve $L'_k$ by
collapsing each block into the corresponding symbol, inserted to $\Gr$
using $\iP$.  For even $k\in \Zp$, we build
$P''_{k-1}P'_{k-1}L'_{k-1}$ and apply the balanced signature parsing
with respect to $\Act_{k-1}$, $h_{k-1}$, and $\sig_{k-1}$.  Finally,
we cut $P'_{k-1}L'_{k-1}$ (which is guaranteed to consist of full
blocks by \cref{cor:consistent}), and we obtain $L'_{k}$ by collapsing
each block into the corresponding symbol, inserted to $\Gr$ using
$\iT$.  We stop as soon as $|L'_k|=1$ and $|P_k|=0$; the only symbol
of $L'_k$ is then $\symb(L,\sig)$.

Symmetrically, to derive $\symb(R,\sig)$, we build $R'_k$ for
subsequent integers $k\in \Zz$.  For $k=0$, the string $R'_0$ is
empty.  For odd $k\in \Zp$, we build $R'_{k-1}S'_{k-1}$ and apply the
restricted run-length parsing with respect to $\Act_{k-1}$.  Finally,
we retrieve $R'_k$ by collapsing each block into the corresponding
symbol, inserted to $\Gr$ using $\iP$.  For even $k\in \Zp$, we build
$R'_{k-1}S'_{k-1}S''_{k-1}$ and apply the balanced signature parsing
with respect to $\Act_{k-1}$, $h_{k-1}$, and $\sig_{k-1}$.  Finally,
we cut $R'_{k-1}S'_{k-1}$ (which is guaranteed to consist of full
blocks by \cref{cor:consistent}), and we obtain $R'_{k}$ by collapsing
each block into the corresponding symbol, inserted to $\Gr$ using
$\iT$.  We stop as soon as $|R'_k|=1$ and $|S_k|=0$; the only symbol
of $R'_k$ is then $\symb(R,\sig)$.  The complexity analysis is similar
to that for $\concatop(L,R)$.

\subsection{Lexicographic Order of Cyclic Fragments}\label{sec:lex}

\newcommand{\CF}{\mathsf{CF}}
\newcommand{\CFp}{\mathsf{CF}^+}
\newcommand{\compareop}{\mathsf{compare}}

For a string $T\in \Sigma^+$, let $\CF(T)=\{T^\infty(i\dd j] : i,j\in
\Z\text{ such that }i\le j\}$; note that each fragment in $\CF(T)$ can
be represented using its endpoints $i,j$.  We also define
$\CFp(T)=\CF(T)\cup\{F\cdot c^\infty : F\in \CF(T)\}$, where $c=\max
\Sigma$.  Finally, $\CF(\W)=\bigcup_{T\in \W}\CF(T)$ and
$\CFp(\W)=\bigcup_{T\in \W}\CFp(T)$.

Our next operation $\compareop(F_1,F_2)$, given $F_1,F_2\in \CFp(\W)$,
decides whether $F_1\prec F_2$.  For each $i\in \{1,2\}$, let
$G_i=T_i^\infty(\ell_i\dd r_i]\in \CF(T_i)$ be such that $F_i\in
\{G_i,G_i\cdot c^\infty\}$ and $T_i\in \W$.  Our initial goal is to
determine $\lcp(G_1,G_2)$.  As the first step, we use the $\splitop$
and $\concatop$ operations in order to insert $T_i^\infty(\ell_i\dd
\ell_i+|T_i|]$. Effectively, this lets us assume that $\ell_i = 0$.
Moreover, by symmetry, we assume that $|T_1| \le |T_2|$.  Next, we
compute $\min(r_1,r_2,\lcpop(T_1,T_2))$. If this value is less than
$|T_1|$, then it is equal to the sought longest common prefix
$\lcp(T_1^\infty(0\dd r_1],T_2^\infty(0\dd r_2])$.  Otherwise, we use
the $\splitop$ and $\concatop$ operations to insert
$T_2^\infty(|T_1|\dd |T_1T_2|]$ to $\W$, and we compute
$\min(r_1,r_2,|T_1|+\lcpop(T_2, T_2^\infty(|T_1|\dd |T_1T_2|]))$.  If
this value is less than $|T_1T_2|$, then it is equal to the sought
longest common prefix $\lcp(T_1^\infty(0\dd r_1],T_2^\infty(0\dd
r_2])$.  In the remaining case, we simply have $\lcp(T_1^\infty(0\dd
r_1],T_2^\infty(0\dd r_2])=\min(r_1,r_2)$.  If the reported value $t$
is smaller than $\min(r_1,r_2)$, then we use the $\accessop$ operation
to check $G_1[t+1]\prec G_2[t+1]$ and return the reported answer.  If
$G_1$ is a proper prefix of $G_2$, we return YES if and only if $F_1
\in \CF(T_1)$.  If $G_2$ is a proper prefix of $G_1$, we return YES if
and only if $F_2 \notin \CF(T_2)$.  The running time is
$\Oh(\log|T|\log^*(|T|\sigma)(\log^*(|T|\sigma)+\frac{\log^2\log
g}{\log \log \log g}))$.

A symmetric procedure lets us implement $\compareop(F_1,F_2)$ if
$F_1,F_2 \in \CFp(\revstr{\W})$, where $\revstr{W}=\{\revstr{T}: T\in
\W\}$.

\subsection{Internal Pattern Matching}

The $\ipmop(P,T)$ operation, given $P,T\in \W$ with $|P|\le |T|\le
2|P|$, computes $\Occ(P,T):=\{o\in [0\dd |T|-|P|] : P=T(o\dd
o+|P|]\}$.  As shown in~\cite{phdtomek,Kociumaka2015}, $\Occ(P,T)$
forms an arithmetic progression.

Our implementation of $\ipmop(P,T)$ first identifies the maximum level
$k\in \Zz$ with $|P|\ge \alpha_k+\beta_k+3d_k$.\footnote{If there is
no such level, i.e., if $|P|\le 2$, we answer the query naively by
decompressing $P$ and $T$.}  We consider two cases depending on
whether $B_k(P)\cap [\alpha_k\dd |P|-\beta_k]=\emptyset$.

If there exists $i\in B_k(P)\cap [\alpha_k\dd |P|-\beta_k]$ then,
\cref{lem:recompr1} shows that $i+o\in B_k(T)$ holds for any $o\in
\Occ(P,T)$.  Thus, for each position $j\in B_k(T)\cap [i\dd |T|]$, we
compute $o:=j-i$, extract $T(o\dd |T|]$ via $\splitop(T,o)$, and check
whether $P$ is a prefix of $T(o\dd |T|]$ via $\lcpop(P, T(o\dd |T|])$.

Thus, it remains to consider the case when $B_k(P)\cap [\alpha_k\dd
|P|-\beta_k]=\emptyset$, i.e., when $P[\alpha_k\dd \allowbreak
|P|-\beta_k+1]$ is contained in a single level-$k$ phrase.  Suppose
that this phrase is $P(\ell\dd r]$; due to $r-\ell > 3d_k$,
\cref{fct:recompr} shows that $P(\ell\dd r]=Q^c$ for some integer
$c\in \Z_{\ge 2}$ and primitive string $Q$ with $|Q|\le d_k$.
Moreover, if $k'\in \Zz$ is the maximum level such that $P(\ell\dd r]$
consists of multiple level-$k'$ phrases, then all these phrases match
$Q$.  By \cref{lem:recompr1}, for each $o\in \Occ(P,T)$, the fragment
$T(o+\alpha_k\dd o+|P|-\beta_k]$ is also contained in a single
level-$k$ phrase $T(\ell'\dd r']$.  Moreover, another application of
\cref{lem:recompr1} (at level $k'$) shows that the phrase $T(\ell'\dd
r']$ is decomposed into $c'\in \Z_{\ge 2}$ level-$k'$ phrases matching
$Q$.  Our algorithm computes the symbol of $P_k$ corresponding to
$P(\ell\dd r]$, the symbol of $P_{k'}$ corresponding to the primitive
root $Q$, and all the candidate phrases $T(\ell'\dd r']$ with the same
primitive root $Q$.  We select an arbitrary phrase boundary $i\in
B_{k'}(P)\cap [\alpha_k\dd |P|-\beta_k]$ and observe that, by
\cref{lem:recompr1}, for each $o\in \Occ(P,T)$, we have $i+o\in
B_{k'}(T)$.  Given that we assume that $T(o+\alpha_k\dd
o+|P|-\beta_k]$ is contained in $T(\ell'\dd r']$, we further have
$i+o\in (\ell'\dd r')$, i.e., $j:=i+o$ is one of the boundaries in the
decomposition of $T(\ell'\dd r']=Q^{c'}$ into $c'$ individual
occurrences of $Q$.  Our goal is to verify each candidate $j$ by
checking whether $P(i\dd |P|]$ is a prefix of $T(j\dd |T|]$ and
whether $P[1\dd i]$ is a suffix of $T[1\dd j]$.  A naive
implementation performs one $\splitop$, one $\lcpop$, and one $\lcsop$
operation per candidate position $j$.  The positions $j$ form a
so-called \emph{periodic progression} (with period $Q$); thus, as
shown in~\cite[Lemma 7.1.4(c)]{phdtomek}, positions $j$ maximizing
$\lcp(P(i\dd |P|],T(j\dd |T|])$ are contiguous elements of the
periodic progression, and they can be retrieved using a constant
number of $\lcpop$ queries concerning suffixes of $P$ and $T$. Thus,
using $\Oh(1)$ calls to $\splitop$ and $\lcpop$, we can filter
positions $j$ for which $P(i\dd |P|]$ is a prefix of $T(j\dd |T|]$.
As symmetric procedure using $\Oh(1)$ calls to $\splitop$ and $\lcsop$
filters positions $j$ for which $P(i\dd |P|]$ is a prefix of $T(j\dd
|T|]$.  The set of positions $o\in \Occ(P,T)$ corresponding to
$T(\ell'\dd r']$ is obtained by intersecting the two ranges and
shifting it by $i$ positions to the left.  Finally, the occurrences
corresponding to various candidate phrase $T(\ell'\dd r']$ are
retrieved by combining all $o\in \Occ(P,T)$ into a single arithmetic
progression.

As for the running time, we observe that the number of candidate
position $j$ (in the first case) and the number of candidate phrases
$T(\ell'\dd r']$ (in the second case) is bounded by $|T_{k}| =
\Oh(1+\frac{|T|}{d_k}) = \Oh(1+\frac{|P|}{d_k}) = \Oh(h_{k}) =
\Oh(\log^*(|T|\sigma))$ by \cref{fct:alphabeta}. The overall running
time is therefore
$\Oh(\log|T|(\log^*(|T|\sigma))^2(\log^*(|T|\sigma)+\frac{\log^2 \log
g}{\log \log \log g}))$.

\subsection{2-Period Queries}

A 2-period query $\perop(T)$, given $T\in \W$, asks to compute the
shortest period $\per(T)$ or to report that $\per(T)>\frac12|T|$.  As
shown in~\cite{phdtomek,Kociumaka2015}, each of these queries can be
reduced to a constant number of $\ipmop$ and $\lcpop$ queries on
substrings of $T$.  Consequently, $\perop(T)$ queries can also be
answered in
$\Oh(\log|T|(\log^*(|T|\sigma))^2(\log^*(|T|\sigma)+\frac{\log^2 \log
g}{\log \log \log g}))$ time.

\subsection{Canonical Cyclic Shift}\label{sec:ccs}

As shown in~\cite{Kociumaka2015}, cyclic equivalence of $T$ and $T'$
can be tested using a constant number of $\ipmop$ and $\lcpop$ queries
on substrings of $T$; the resulting algorithm also reports a shift $s$
with $T'=\rot^{s}(T)$.  In this paper, however, we need a stronger
operation that, given a string $T\in \W$, computes a \emph{canonical
cyclic shift} $\canshop(T)$ such that
$f_{\sig}(T):=\rot^{\canshop(T)}(T)$ is a necklace-consistent function
(as defined in \cref{def:nc}).

Before implementing an algorithm computing $\canshop(T)$, let us
provide a synthetic definition of the underlying function $f$ in terms
of the signature function $\sig$.

\begin{construction}\label{cons:nc}
  For a string $T\in \Sigma^+$, let $k=\min\{t\in \Zz : d_t \ge
  4|T|\}$ and $U=T^c$, where
  $c=\lceil{\frac{\alpha_k+\beta_k+3d_k}{|T|}}\rceil$.  For a
  signature function $\sig$, we set $f_{\sig}(T)=\rot^{\max(B_k(U)\cap
  [0\dd \alpha_k))}(T)$.%
  \footnote{Recall that the function $B_k$ is defined in
    \cref{sec:bsp} based on the signature function $\sig$.}
\end{construction}

\begin{lemma}\label{lem:nc}
  For every signature function $\sig$, the function
  $f_{\sig}:\Sigma^+\to \Sigma^+$ defined in \cref{cons:nc} is a
  necklace-consistent function.
\end{lemma}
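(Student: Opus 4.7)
Condition~(1) of \cref{def:nc} holds trivially: from the definition in \cref{cons:nc}, $f_{\sig}(T) = \rot^{\canshop(T)}(T)$ is a cyclic rotation of $T$ and hence cyclically equivalent to~$T$.

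For condition~(2), let $T' = \rot^t(T)$ and set $p = \per(T)$; it suffices to show $\canshop(T) \equiv \canshop(T') + t \pmod{p}$, as this gives $\rot^{\canshop(T)}(T) = \rot^{\canshop(T') + t}(T) = \rot^{\canshop(T')}(T')$. Since $|T| = |T'|$, \cref{cons:nc} uses the same values of $k$ and $c$ for both $T$ and $T'$, and a routine calculation gives $U' := T'^c = (\rot^t(T))^c = \rot^t(T^c) = \rot^t(U)$. Hence $U$ and $U'$ are cyclic rotations of each other as length-$c|T|$ strings.

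The next step is to compare $B_k(U)$ with $B_k(U')$ using \cref{lem:recompr1}. For positions in the ``safe interior'' $I := [\alpha_k \dd c|T|-\beta_k]$, the boundary decision at $i$ depends only on $U[i-\alpha_k\dd i+\beta_k]$, which lies entirely within the periodic body $T^c$; hence whether $i \in B_k(U)$ depends only on $i \bmod |T|$. Thus there is a residue set $\mathcal{R}_U \subseteq [0,|T|)$ with $B_k(U) \cap I = \{i \in I : (i \bmod |T|) \in \mathcal{R}_U\}$, and similarly $\mathcal{R}_{U'}$. By \cref{lem:sparse} together with $d_k \ge 4|T|$, every block at level $k$ has length at least $\tfrac12 d_k - |T| \ge |T|$, so at most one boundary per period can occur in $I$; consequently $\mathcal{R}_U = \{r\}$ and $\mathcal{R}_{U'} = \{r'\}$ are singletons. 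The identification $U' = \rot^t(U)$ combined with \cref{lem:recompr1} applied at aligned positions of $U$ and $U'$ then yields $r' \equiv r + t \pmod{|T|}$ (with the precise sign dictated by the rotation convention fixed earlier in the paper).

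The heart of the proof is an inductive argument over the parsing levels $j = 0, 1, \dots, k$ establishing a ``synchronization lemma'': at each level, start-of-string artifacts are confined to the prefix $[0, \alpha_j)$ of $U$, so that every block boundary of $U$ in $[\alpha_j, \alpha_k)$ already follows the eventual periodic pattern associated with $r$. This is plausible because the locality guarantee used at level $j$ considers a left context of length at most $\alpha_j$, and the sequence $\alpha_0 < \alpha_1 < \cdots$ grows fast enough to absorb the extra context introduced at each level. Once the lemma is in place, we obtain $\max(B_k(U)\cap[0,\alpha_k)) = \max\{i < \alpha_k : i \equiv r \pmod{|T|}\}$ and the analogous formula for $U'$; a short case analysis on $(\alpha_k - 1) \bmod |T|$ then verifies the required congruence $\canshop(T) - \canshop(T') \equiv t \pmod{|T|}$, and hence modulo $p$ since $p$ divides $|T|$. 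The main obstacle is precisely this synchronization lemma, which requires careful handling of how the restricted run-length parsing (at odd levels) and restricted signature parsing (at even levels) handle the transition from the finite start of $U$ to the periodic interior.
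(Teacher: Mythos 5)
Your reduction to a congruence between shifts is the right high-level plan, and your observation that interior boundary decisions depend only on $i \bmod |T|$ (via \cref{lem:recompr1}) matches the paper. But the argument breaks at the step where you extract a residue $r$ from $B_k(U)\cap[\alpha_k\dd c|T|-\beta_k]$: this set is not a singleton residue class --- it is \emph{empty}. Indeed, by \cref{lem:recompr1} any boundary in the interior would replicate at every position of its residue class modulo $|T|$, producing consecutive phrases of total length at most $2|T|\le\frac12 d_k$ and contradicting \cref{lem:sparse}; this is exactly the first claim in the paper's proof. (``At most one boundary per period'' does not yield ``exactly one'', and here it is zero.) With no such $r$ at level $k$, the formula $\max(B_k(U)\cap[0\dd\alpha_k))=\max\{i<\alpha_k : i\equiv r\pmod{|T|}\}$ has no content; moreover, the positions below $\alpha_k$ are precisely those whose boundary status \emph{is} affected by the start of the string, so they need not follow any interior pattern. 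Your ``synchronization lemma'', which you explicitly leave unproven, is therefore not the statement that closes the gap.

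The paper's mechanism is different: the phrase $U(\ell\dd r]$ with $\ell=\max(B_k(U)\cap[0\dd\alpha_k))$ covers the whole interior, and by \cref{fct:recompr} and the periodicity lemma it is a power of $U(\ell\dd\ell+|T|]$; hence it was assembled at some level $j\le k$ from level-$(j-1)$ phrases matching $U(\ell\dd\ell+|T|]$, so $B_{j-1}(U)$ restricted to the interior is exactly the residue class of $\ell$ while $B_j(U)$ restricted to the interior is empty. The pair ``(first level whose interior boundary set is empty, residue at the previous level)'' is intrinsic, and \cref{lem:recompr1} transfers it from $U$ to $U'=\rot^s(U)$ with shift $s$, forcing $\ell'\equiv\ell-s\pmod{|T|}$. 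You would need this (or an equivalent) descent through the levels. One further small point: $\per(T)$ need not divide $|T|$ (e.g.\ $T=\texttt{aba}$), so your final reduction ``modulo $p$'' is unsound as stated; fortunately the congruence modulo $|T|$ already suffices because $\rot^{|T|}$ is the identity.
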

\begin{proof}
  Let us fix $T\in \Sigma^n$ with $n\in \Zp$.  By construction, the
  string $f_{\sig}(T)$ is cyclically equivalent to $T$.  However,
  before proving that $f_{\sig}(T)=f_{\sig}(\rot^s(T))$ holds for
  every $s\in \Z$, let us analyze the properties of the level-$k$
  phrase $U(\ell\dd r]$ in the balanced signature parsing of $U$ such
  that $\ell=\max(B_k(U)\cap[0\dd \alpha_k))$.  By definition, $\ell <
  \alpha_k \le r$; our first claim is that $r> cn-\beta_k$.  Thus, for
  a proof by contradiction, suppose that $r\in B_k(U)\cap [\alpha_k\dd
  cn-\beta_k]$.  By \cref{lem:recompr1}, $B_k(U) \supseteq \{r'\in
  [\alpha_k\dd cn-\beta_k] : r'\equiv r \pmod{n}\}$.  Due to
  $cn-\alpha_k - \beta_k \ge 3d_k > 2n$, this induces two subsequent
  phrases of total length at most $2n\le \frac12d_k$, contradicting
  \cref{lem:sparse}. The contradiction completes the proof that $r >
  cn-\beta_k$.  Consequently, $r-\ell > cn-\alpha_k-\beta_k \ge 3d_k$;
  by \cref{fct:recompr}, this means that $U(\ell\dd r]$ has primitive
  root of length at most $d_k$.  At the same time, $U(\ell\dd r]$ has
  period $n$ (inherited from $U$).  Using the periodicity lemma, we
  conclude that $U(\ell\dd \ell+n]$ is the primitive root of
  $U(\ell\dd r]$.  In particular, $U(\ell\dd r]$ has been created at
  some level $j\in [1\dd k]$ by merging several phrases matching
  $U(\ell\dd \ell+n]$. Thus, $B_{j-1}(U)\cap [\alpha_k \dd cn-\beta_k]
  = \{i\in [\alpha_k \dd cn-\beta_k] : i\equiv \ell \bmod{n}\}$ and
  $B_{j}(U)\cap [\alpha_k \dd cn-\beta_k]=\emptyset$.

  We are now ready to consider the corresponding phrase $U'(\ell'\dd
  r']$ in the balanced signature parsing of
  $U'=\rot^s(U)=(\rot^s(T))^c$. By the above argument,
  $B_{j'-1}(U')\cap [\alpha_k \dd cn-\beta_k] = \{i\in [\alpha_k \dd
  cn-\beta_k] : i\equiv \ell' \pmod{n}\}$ and $B_{j'}(U')\cap
  [\alpha_k \dd cn-\beta_k]=\emptyset$ hold for some $j'\in [1\dd k]$.
  However, \cref{lem:recompr1} also yields $B_{j-1}(U')\cap [\alpha_k
  \dd cn-\beta_k] = \{i\in [\alpha_k \dd cn-\beta_k] : i\equiv \ell-s
  \pmod{n}\}$ and $B_{j}(U')\cap [\alpha_k \dd cn-\beta_k]=\emptyset$.
  Consequently, $j'=j$ and $\ell'\equiv \ell-s \pmod{n}$.  In
  particular,
  $\rot^{\ell'}(\rot^s(T))=\rot^{\ell'+s}(T)=\rot^{\ell}(T)$ holds as
  claimed.
\end{proof}

Our algorithm computing $\canshop(T)$ simply computes $k$ and $c$,
builds $U=T^c$ by repeated calls to $\concatop$, and then builds
$B_k(U)$ by traversing the parse tree of $\symb_\sig(U)$.  Finally, it
returns $\max(B_k(U)\cap [0\dd \alpha_k))$.

As for the complexity analysis, we note \cref{lem:sparse} implies
$|B_k(U)|=\Oh(\frac{1}{d_k}|U|)=\Oh(c)$, whereas \cref{fct:alphabeta}
implies $c=\Oh(h_k)=\Oh(\log^*(|T|\sigma))$, Consequently, the total
running time does not exceed
$\Oh(\log|T|(\log^*(|T|\sigma))^2(\log^*(|T|\sigma)+\frac{\log^2 \log
g}{\log \log \log g}))$.

\section{From Balanced Signature Parsing to Synchronizing Sets}\label{sec:bsp2sss}

\begin{definition}[$\tau$-runs]
  For a string $T\in \Sigma^+$ and an integer $\tau\in [1\dd n]$, we
  define the set $\RUNS_{\tau}(T)$ of \emph{$\tau$-runs} in $T$ that
  consists of all fragments $T[p\dd q]$ of length at least $\tau$ that
  satisfy $\per(T[p\dd q])\le \frac13\tau$ yet cannot be extended (in
  any direction) while preserving the shortest period.
\end{definition}

By the Periodicity Lemma, distinct $\tau$-runs $\gamma,\gamma'$
satisfy $|\gamma \cap \gamma'|\le \frac23\tau$.  Consequently, each
$\tau$-run $\gamma$ contains at least $\frac13\tau$ (trailing)
positions which are disjoint from all $\tau$-runs starting to the left
of $\gamma$.  Hence, $\RUNS_{\tau}(T) \le \frac{3n}{\tau}$.

\begin{construction}\label{cons:sss}
  Consider the signature encoding of a string $T\in \Sigma^n$ with
  respect to a signature function $\sig$ and an integer $\tau\in \Zp$.
  Based on the set $B_k(T)$ for the largest $k\in \Zz$ with $\tau \ge
  3\alpha_k$, we defined the set of $\S_\sig(\tau,T)$ so that it
  consists of all positions $i\in [1\dd n-2\tau+1]$ that satisfy at
  least one of the following conditions:
  \begin{enumerate}[label={\rm (\arabic*)}]
  \item\label{it:sss:reg} $i+\tau-1\in B_k(T)$ and $T[i\dd i+2\tau)$
    is not contained in any $\tau$-run,
  \item\label{it:sss:beg} $i=p-1$ for some $\tau$-run $T[p\dd q]\in
    \RUNS_\tau(T)$,
  \item\label{it:sss:end} $i=q-2\tau+2$ for some $\tau$-run $T[p\dd
    q]\in \RUNS_\tau(T)$.
  \end{enumerate}
\end{construction}

Note that $\S_{\sig}(T,\tau)=\emptyset$ if $\tau > \frac12n$.

\begin{lemma}\label{lem:sss}
  For every $T\in \Sigma^n$ and $\tau\in [1\dd \floor{\frac12n}]$, the
  set $\S_\sig(\tau,T)$ obtained using \cref{cons:sss} is a
  $\tau$-synchronizing set.  Moreover, for every $i\in [1\dd
  n-3\tau+2]$, $|\S_\sig(\tau,T)\cap [i\dd
  i+\tau)|=\Oh(\log^*(\tau\sigma))$.
\end{lemma}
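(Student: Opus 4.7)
The plan is to organize the argument around the auxiliary bound $\max(\alpha_k,\beta_k)\le \tfrac{1}{3}\tau$. This splits into two observations: first, $\beta_k\le \alpha_k$ for every $k\in \Zz$, proved by induction on the recurrences (the odd-level steps add $d_{k-1}$ to both sequences, while the even-level steps add $(h_{k-1}{+}7)d_{k-1}$ to $\alpha_k$ but only $4d_{k-1}$ to $\beta_k$); second, $\tau \ge 3\alpha_k$ by the choice of $k$ in \cref{cons:sss}. I will use this bound throughout.

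For the consistency condition in \cref{def:sss}, I verify that each of the three clauses of \cref{cons:sss} depends only on $T[i\dd i+2\tau)$. For clause~(1), \cref{lem:recompr1} says membership of $i+\tau-1$ in $B_k(T)$ is determined by $T[i+\tau-\alpha_k\dd i+\tau-1+\beta_k]$, which lies inside $T[i\dd i+2\tau)$ by the bounds above; the auxiliary predicate ``$T[i\dd i+2\tau)$ is not contained in any $\tau$-run'' is equivalent to $\per(T[i\dd i+2\tau))>\tfrac{1}{3}\tau$, hence determined by $T[i\dd i+2\tau)$. Clauses~(2) and~(3) concern a $\tau$-run start or end at a specific position; each reduces to the period of an adjacent length-$\tau$ fragment together with a single character check, all encoded inside $T[i\dd i+2\tau)$.

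Next, I turn to density. If $i\in \R(\tau,T)$, the $\tau$-run $T[P\dd Q]$ containing $T[i\dd i+3\tau-2]$ has $P\le i$ and $Q\ge i+3\tau-2$, and a standard Fine--Wilf-plus-maximality argument shows no other $\tau$-run starts in $(i\dd i+\tau]$ or ends in $[i+2\tau-2\dd i+3\tau-3]$, killing clauses~(2) and~(3); every $T[j\dd j+2\tau)$ with $j\in [i\dd i+\tau)$ lies inside $T[P\dd Q]$, killing clause~(1). For the converse I argue by contradiction: assume $\S_\sig(\tau,T)\cap [i\dd i+\tau)=\emptyset$ while $i\notin \R(\tau,T)$. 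Failure of~(2) and~(3) forbids any $\tau$-run endpoint in $[i+1\dd i+\tau]$ and in $[i+2\tau-2\dd i+3\tau-3]$ respectively. Any $\tau$-run $T[p\dd q]$ covering $T[j\dd j+2\tau)$ for some $j\in [i\dd i+\tau)$ would satisfy $p\le i+\tau-1$ and $q\ge i+2\tau-1$; combined with the forbidding, these force $p\le i$ and $q\ge i+3\tau-2$, so $T[i\dd i+3\tau-2]\subseteq T[p\dd q]$, contradicting $i\notin \R(\tau,T)$. Hence no such $\tau$-run exists, and clause~(1) failure collapses to $B_k(T)\cap [i+\tau-1\dd i+2\tau-1)=\emptyset$. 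Consequently $T[i+\tau-1\dd i+2\tau-1]$ lies inside a single level-$k$ phrase of length $>\tau\ge 3d_k$, which by \cref{fct:recompr} must have primitive root of length $\le d_k\le \tfrac{1}{3}\tau$; the $\tau$-run extending that phrase, by the same endpoint forbidding, satisfies $P\le i$ and $Q\ge i+3\tau-2$, so $T[i\dd i+3\tau-2]$ inherits period $\le \tfrac{1}{3}\tau$, again contradicting $i\notin \R(\tau,T)$.

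Finally, for local sparsity I bound the three clauses separately. Clause~(1) contributes at most $|B_k(T)\cap [i+\tau-1\dd i+2\tau-1)|$ elements; by \cref{lem:sparse}, two consecutive level-$k$ phrases span more than $\tfrac{1}{2}d_k$ characters, so this count is $\Oh(\tau/d_k)$. The maximality of $k$ combined with \cref{fct:alphabeta} gives $\tau<3\alpha_{k+1}=\Oh(h_kd_k)$, so $\tau/d_k=\Oh(h_k)=\Oh(\log^*(\sigma\tau))$ by \cref{rem:nk} and $k=\Oh(\log\tau)$. Clauses~(2) and~(3) contribute $\Oh(1)$ elements each: Fine--Wilf forces starts (resp.\ ends) of distinct $\tau$-runs to be pairwise more than $\tfrac{1}{3}\tau$ apart, limiting the count in any length-$\tau$ window to at most three. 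I expect the main obstacle to be the $i\notin \R(\tau,T)$ direction of the density argument, whose bootstrap -- first excluding every $\tau$-run capable of containing a $T[j\dd j+2\tau)$, then invoking \cref{fct:recompr} on the forced long level-$k$ phrase to summon a $\tau$-run whose extent once more contradicts $i\notin \R(\tau,T)$ -- is the delicate step; all other ingredients follow by direct application of results already in the excerpt.
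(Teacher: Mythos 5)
Your proof is correct and follows essentially the same route as the paper's: consistency via locality of all three clauses (using \cref{lem:recompr1} for clause~(1)), density via the endpoint-forbidding consequences of clauses~(2)--(3) combined with \cref{fct:recompr} applied to the long level-$k$ phrase, and sparsity via \cref{lem:sparse} together with the $\frac23\tau$ bound on the overlap of distinct $\tau$-runs. The only differences are organizational: you run the density direction as a proof by contradiction where the paper argues directly by first exhibiting a run with $p\le i+\tau$ and $q\ge i+2\tau-2$ and then pushing its endpoints, and your intermediate inequality $\tau\ge 3d_k$ requires the separate (one-line) observation that the no-boundary case is vacuous when $k=0$, since $\alpha_0=0<d_0$ there.
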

\begin{proof}
  First, suppose that $i,i'\in [1\dd n-2\tau+1]$ satisfy $T[i\dd
  i+2\tau)= T[i'\dd i'+2\tau)$ and $i\in \S_\sig(\tau,T)$. We will
  show that if $i$ satisfies
  conditions~\ref{it:sss:reg}--\ref{it:sss:end}, then $i'$ satisfies
  the same condition. If $i$ satisfies condition~\ref{it:sss:reg},
  then $\per(T[i'\dd i'+2\tau))=\per(T[i\dd i+2\tau))>\frac13\tau$, so
  $T[i'\dd i'+2\tau)$ is not contained in any $\tau$-run. At the same
  time, due to $T(i+\tau-1-\alpha_k\dd
  i+\tau-1+\beta_k]=T(i'+\tau-1-\alpha_k\dd i'+\tau+\beta_k]$, by
  \cref{lem:recompr1}, $i+\tau-1\in B_k(T)$ implies $i'+\tau-1\in
  B_k(T)$.  Consequently, $i'$ also satisfies
  condition~\ref{it:sss:reg}.  If $i$ satisfies
  condition~\ref{it:sss:beg}, then $\per(T[i'+1\dd
  i'+\tau])=\per(T[i+1\dd i+\tau]) \le \frac13\tau < \per(T[i\dd
  i+\tau])=\per(T[i'\dd i'+\tau])$. Hence, $T[i'+1\dd i'+\tau]$ can be
  extended to a $\tau$-run $T[p'\dd q']$ that starts at position
  $p'=i+1$, and thus $i'$ satisfies condition~\ref{it:sss:beg}.
  Similarly, if $i$ satisfies condition~\ref{it:sss:end}, then
  $\per(T[i'+\tau-1\dd i'+2\tau-2])=\per(T[i+\tau-1\dd i+2\tau-2]) \le
  \frac13\tau < \per(T[i+\tau-1\dd i+2\tau-1])=\per(T[i'+\tau-1\dd
  i'+2\tau-1])$.  Hence, $T[i'+\tau-1\dd i'+2\tau-2]$ can be extended
  to a $\tau$-run $T[p'\dd q']$ that ends at position $q'=i+2\tau-2$,
  and thus $i'$ satisfies condition~\ref{it:sss:end}.

  For a proof of the density condition, consider a position $i\in
  [1\dd n-3\tau+2]$ with $[i\dd i+\tau)\cap
  \S_\sig(\tau,T)=\emptyset$.  We start by identifying a $\tau$-run
  $T[p\dd q]$ with $p \le i+\tau$ and $q\ge i+2\tau-2$.  First,
  suppose that there exists a position $b\in [i+\tau-1\dd
  i+2\tau-1)\cap B_k(T)$.  Since $b-\tau+1\in [i\dd i+\tau)$ has not
  been added to $\S_\sig(\tau,T)$, the fragment $T[\beta-\tau+1\dd
  \beta+\tau]$ must be contained in a $\tau$-run $T[p\dd q]$ that
  satisfies $p\le b-\tau+1 \le i+\tau-1$ and $q\ge \beta+\tau \ge
  i+2\tau-1$.

  Next, suppose that $[i+\tau-1\dd i+2\tau-1)\cap B_k(T) = \emptyset$.
  Then, $T[i+\tau-1\dd i+2\tau]$ is contained in a single phrase
  induced by $T_k$, and the length of this phrase is at least
  $\tau+1$.  If $k=0$, this contradicts that all level-$0$ phrases are
  of length $1$.  Due to $\tau+1 \ge 3\alpha_k + 1 > 3d_k$ (see
  \cref{fct:alphabeta}), so \cref{fct:recompr} yields
  $\per(T[i+\tau-1\dd i+2\tau])\le d_k \le \alpha_k \le \frac13\tau$
  (again by \cref{fct:alphabeta}). The run $T[p\dd q]$ extending
  $T[i+\tau-1\dd i+2\tau]$ satisfies $p\le i+\tau-1$ and $q\ge
  i+2\tau$.

  Note that $p-1$ and $q-2\tau+2$ satisfy conditions~\ref{it:sss:beg}
  and~\ref{it:sss:end}, respectively.  Due to $[i\dd i+\tau)\cap
  \S_\sig(\tau,T) = \emptyset$, this implies $p\le i$ and $q\ge
  i+3\tau-2$, which means that $\per(T[i\dd i+3\tau-1))\le
  \frac13\tau$ holds as claimed.

  For the converse implication, note that if $s\in [i\dd i+\tau)\cap
  \S_\sig(\tau,T)$, then $\per(T[i\dd i+3\tau-1))\ge \per(T[s\dd
  s+2\tau)) > \frac13\tau$ because $T[s\dd s+2\tau)$ is not contained
  in any $\tau$-run (the latter observation is trivial if $s$
  satisfies condition~\ref{it:sss:reg}; in the remaining two cases, it
  follows from the upper bound of $\frac23\tau$ on the overlap of two
  $\tau$-runs).

  As for the size, observe that $s\in \S_\sig(\tau,T)$ can be
  accounted to $s+\tau-1\in B_k(T)$ if it satisfies
  condition~\ref{it:sss:reg}, to $\gamma \in \RUNS_\tau(T)$ starting
  at position $s+1$ if it satisfies condition~\ref{it:sss:beg}, and to
  $\gamma\in \RUNS_\tau(T)$ ending at position $s+2\tau-2$ if it
  satisfies condition~\ref{it:sss:end}.  Since any two distinct
  $\tau$-runs $\gamma,\gamma'$ satisfy $|\gamma\cap \gamma'|\le
  \frac23\tau$, the number of positions $s\in \S\cap[i\dd i+\tau)$
  satisfying~\ref{it:sss:beg} or~\ref{it:sss:end} is $\Oh(1)$.  By
  \cref{lem:sparse}, the number of positions
  satisfying~\ref{it:sss:reg} is $\Oh(\frac{\tau}{d_k}) =
  \Oh(\frac{\alpha_{k+1}}{d_k}) = \Oh(h_k) = \Oh(\log^*(k\sigma)) =
  \Oh(\log^*(\tau\sigma))$ by \cref{fct:alphabeta,rem:nk}.
\end{proof}

An additional property of \cref{cons:sss} is consistency across
different strings.  The following observation can be proved by
adapting the first paragraph of the proof of \cref{lem:sss}.

\begin{observation}
  Consider strings $T,T'\in \Sigma^+$ and positions $i\in [1\dd
  |T|-2\tau+1]$, $i'\in [1\dd |T'|-2\tau+1]$.  If $T[i\dd
  i+2\tau)=T'[i'\dd i'+2\tau)$, then $i\in \S_\sig(\T,\tau)$ if and
  only if $i'\in \S_\sig(\tau,T')$.
\end{observation}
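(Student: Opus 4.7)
The plan is to adapt the first paragraph of the proof of \cref{lem:sss}, which verifies the \emph{consistency} condition within a single string, to the cross-string setting. By symmetry, it suffices to show that if $T[i\dd i+2\tau)=T'[i'\dd i'+2\tau)$ and $i\in \S_\sig(\tau,T)$, then $i'\in \S_\sig(\tau,T')$. Since $i\in \S_\sig(\tau,T)$ holds because $i$ satisfies at least one of the three conditions~\ref{it:sss:reg}--\ref{it:sss:end} of \cref{cons:sss}, I will check that the same condition is then satisfied by $i'$ in $T'$.

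Conditions~\ref{it:sss:beg} and~\ref{it:sss:end} transfer directly. Indeed, the shortest periods of the fragments $T[i\dd i+\tau]$ and $T[i+1\dd i+\tau]$ (resp.\ $T[i+\tau-1\dd i+2\tau-1]$ and $T[i+\tau-1\dd i+2\tau-2]$) coincide with those of the corresponding fragments of $T'$, because $T[i\dd i+2\tau)=T'[i'\dd i'+2\tau)$. Thus, the gap in periods that characterizes the left boundary of a $\tau$-run at position $i+1$ (resp.\ the right boundary at $i+2\tau-2$) in $T$ reappears verbatim at $i'+1$ (resp.\ $i'+2\tau-2$) in $T'$, producing the required $\tau$-run in $T'$ by extension.

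For condition~\ref{it:sss:reg}, two facts must be transferred: membership in $B_k(\cdot)$ and the absence of a $\tau$-run containing the window. The latter again depends only on $\per(T[i\dd i+2\tau)) = \per(T'[i'\dd i'+2\tau))$. For the former, I invoke \cref{lem:recompr1} applied to the pair $(T,T')$ at positions $i+\tau-1$ and $i'+\tau-1$: the hypotheses require that, for every $U\in \Sigma^{\le \alpha_k}$, $U$ is a suffix of $T[1\dd i+\tau-1]$ iff $U$ is a suffix of $T'[1\dd i'+\tau-1]$, and symmetrically for prefixes of length $\le \beta_k$ of the subsequent suffixes. These hypotheses follow from $T[i\dd i+2\tau) = T'[i'\dd i'+2\tau)$ once we verify $\alpha_k\le \tau$ and $\beta_k\le \tau+1$. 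The first bound is immediate from the choice of $k$ (which guarantees $\tau\ge 3\alpha_k$), and the second follows from \cref{fct:alphabeta}, which gives $\beta_k\le 10 d_{k-1}\le \alpha_k\le \tau$.

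The main (and only mildly technical) point is this last verification that the matching window of length $2\tau$ is wide enough to contain the $\alpha_k$-left-context and $\beta_k$-right-context around position $i+\tau-1$ needed to invoke \cref{lem:recompr1}; all other steps are direct substitutions into the argument of \cref{lem:sss}. Note that one does not need to worry about positions outside $[1\dd n]$ cluttering the context, since \cref{lem:recompr1} treats both strings symmetrically and does not require $i$ and $i'$ to be in the interior.
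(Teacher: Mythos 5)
Your approach is exactly the one the paper intends: the paper justifies this observation only by the remark that it can be proved by adapting the first paragraph of the proof of \cref{lem:sss}, and your adaptation -- transferring conditions~\ref{it:sss:beg} and~\ref{it:sss:end} via equality of the relevant shortest periods inside the matched window, and condition~\ref{it:sss:reg} via \cref{lem:recompr1} after checking that the $\alpha_k$-left and $\beta_k$-right contexts of position $i+\tau-1$ fit inside $T[i\dd i+2\tau)$ -- is precisely that adaptation.

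One justification is incorrect, although the claim it supports is true. You derive $\beta_k\le\tau$ from the chain $\beta_k\le 10d_{k-1}\le\alpha_k$, attributing the middle inequality to \cref{fct:alphabeta}. That fact only gives $\alpha_k\ge d_k$, and $d_k\le 2d_{k-1}$, so $10d_{k-1}\le\alpha_k$ does not follow; indeed it fails for small $k$ (e.g., $\alpha_1=1$ while $10d_0=10$, and for $\sigma=2$ one gets $\alpha_2=9<10=10d_1$). The inequality $\beta_k\le\alpha_k$ that you actually need does hold, but it has to be read off the defining recurrences rather than the bounds of \cref{fct:alphabeta}: $\alpha_0=\beta_0=0$, at odd levels both sequences grow by $d_{k-1}$, and at even levels $\alpha$ grows by $(h_{k-1}+7)d_{k-1}\ge 7d_{k-1}$ while $\beta$ grows by only $4d_{k-1}$, so $\beta_k\le\alpha_k\le\tfrac13\tau$ by induction. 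A small additional nit: the bound you need on the right context is $\beta_k\le\tau$ rather than $\beta_k\le\tau+1$, since the matched window extends exactly $\tau$ positions to the right of $i+\tau-1$; your argument establishes the stronger bound anyway, so nothing else breaks.
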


\begin{corollary}\label{cor:sssub}
  Consider a string $T\in \Sigma^+$ and its substring $U=T(i\dd j]$.
  Then, $\S_\sig(\tau,U)=\{s-i : s\in \S_\sig(\tau,T)\cap (i\dd
  j-2\tau+1]\}$.
\end{corollary}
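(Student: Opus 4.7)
The plan is to deduce this corollary directly from the Observation stated immediately above it. Fix any $s \in (i \dd j - 2\tau + 1]$ and set $s' := s - i$. Since $s + 2\tau - 1 \le j$, the $2\tau$-window $T[s \dd s + 2\tau)$ lies entirely inside $U = T(i \dd j]$, so $T[s \dd s + 2\tau) = U[s' \dd s' + 2\tau)$. Moreover, $s \in [1 \dd |T| - 2\tau + 1]$ and, using $|U| = j - i$, also $s' \in [1 \dd |U| - 2\tau + 1]$, so both positions are candidates for membership in the respective synchronizing sets. The Observation then yields $s \in \S_\sig(\tau, T)$ if and only if $s' \in \S_\sig(\tau, U)$.

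The map $s \mapsto s - i$ is a bijection between $(i \dd j - 2\tau + 1]$ and $[1 \dd |U| - 2\tau + 1]$, so the equivalence above supplies both inclusions of the claimed equality at once: every $s \in \S_\sig(\tau, T) \cap (i \dd j - 2\tau + 1]$ maps to an element $s - i \in \S_\sig(\tau, U)$, and conversely every $s' \in \S_\sig(\tau, U) \subseteq [1 \dd |U| - 2\tau + 1]$ pulls back to $s := s' + i \in (i \dd j - 2\tau + 1]$ with $s \in \S_\sig(\tau, T)$.

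The only content not already spelled out in the excerpt is the Observation itself, which the paper asserts follows by adapting the first paragraph of the proof of \cref{lem:sss}. To verify it one checks each of the three conditions in \cref{cons:sss} in turn. Conditions~(2) and~(3) depend only on the period structure within $T[s \dd s + 2\tau)$ and on the extensibility checks at the boundary positions $s$ and $s + 2\tau - 1$, all of which sit inside the shared $2\tau$-window and transfer verbatim between the two strings. Condition~(1), which refers to $B_k$ at position $s + \tau - 1$, is handled via \cref{lem:recompr1} once one observes that the $\alpha_k$-left and $\beta_k$-right contexts around $s + \tau - 1$ in $T$ coincide with the corresponding contexts around $s' + \tau - 1$ in $U$; this uses $3\alpha_k \le \tau$ from the choice of $k$ in \cref{cons:sss} together with a bound of the form $\beta_k \le \tau$ that follows routinely from \cref{fct:alphabeta}. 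The main (indeed essentially only) obstacle is this bit of context bookkeeping for the $B_k$ check; once the Observation is in hand, the corollary is immediate.
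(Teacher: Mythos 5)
Your derivation of the corollary from the Observation is correct and matches the paper, which states the corollary without proof as an immediate consequence of that Observation (the translation of the shared $2\tau$-window plus the bijection $s\mapsto s-i$ is exactly the intended argument). One tiny point in your side remark about proving the Observation: the needed bound $\beta_k\le\tau$ comes from the termwise comparison $\beta_k\le\alpha_k$ of the two recurrences combined with $3\alpha_k\le\tau$ from \cref{cons:sss}, rather than from \cref{fct:alphabeta} alone, whose stated bounds only give $\beta_k<10d_{k-1}\le\frac{10}{3}\tau$.
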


\begin{proposition}\label{prp:tauruns}
  The data structure of \cref{sec:ds} can be extended so that, given
  $T\in \W$ and $\tau\in \Zp$, the set $\RUNS_\tau(T)$, with
  $\tau$-runs ordered by their starting positions, can be constructed
  in $\Oh(\frac{|T|\log
  |T|}{\tau}(\log^*(|T|\sigma))^2(\log^*(|T|\sigma) +\frac{\log^2\log
  g}{\log \log \log g}))$ time.
\end{proposition}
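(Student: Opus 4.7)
The plan is to reduce the construction of $\RUNS_\tau(T)$ to $\Oh(|T|/\tau)$ calls to $\perop$, $\lcpop$, $\lcsop$, $\splitop$, and $\concatop$ on substrings of $T$; since each of these operations costs $\Oh(\log|T|(\log^*(|T|\sigma))^2(\log^*(|T|\sigma)+\frac{\log^2\log g}{\log \log \log g}))$ by \cref{sec:ds}, the claimed bound follows.

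First I would fix the level $k\in \Zz$ with $d_k \le \tfrac{\tau}{6} < d_{k+2}$ (so that $d_k = \Theta(\tau)$) and, with a single top-down traversal of the parse tree $\Tr(\symb_{\sig}(T))$, list the level-$k$ phrases of $T$ together with the boundaries $B_k(T)\cap[0\dd |T|]$ in left-to-right order. By \cref{lem:sparse} the number of phrases is $\Oh(|T|/d_k)=\Oh(|T|/\tau)$, and this traversal costs $\Oh(|T|/\tau+\log|T|)$. By \cref{fct:recompr}, every level-$k$ phrase is either short (length $\le 3d_k \le \tau/2$) or has primitive root of length $\le d_k \le \tau/6$, so it is a candidate to contribute to a $\tau$-run.

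Next, I would walk through this list and generate $\Oh(|T|/\tau)$ anchor positions: every boundary $b\in B_k(T)$, plus the midpoint of every phrase whose length exceeds $3d_k$. For each anchor $b$ I would extract the window $W_b:=T[\max(1,b-\tau)\dd \min(|T|,b+\tau)]$ using $\splitop$ and $\concatop$ and call $\perop(W_b)$; discard $b$ if the returned shortest period exceeds $\tau/3$, and otherwise denote that period by $p$. Then extend the periodic region maximally by $L:=\lcsop(T[1\dd b], T[1\dd b-p])$ and $R:=\lcpop(T(b\dd |T|], T(b+p\dd |T|])$, obtaining the candidate run $T[b-L-p+1\dd b+R+p]$. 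After a final $\perop$ call on this candidate (or on a length-$3p$ prefix of it) to confirm that $p$ is its shortest period, together with a length check $\ge \tau$, I would emit the run. Anchors encountered while still inside a previously emitted run are skipped, so the output is deduplicated and, because anchors are processed left to right, already sorted by starting position as required.

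The main obstacle is verifying both completeness (no $\tau$-run is missed) and soundness (no spurious run is produced). Completeness follows from a case split on a $\tau$-run $T[p\dd q]$: either it contains a boundary $b\in B_k(T)$ in its interior, in which case $|W_b|=2\tau\ge 6\per(T[p\dd q])$ and the Periodicity Lemma forces $\perop(W_b)$ to return the true shortest period of the run; or it is entirely contained in a single level-$k$ phrase, whose length must then exceed $\tau\ge 3d_k$, so the phrase-midpoint anchor handles it by the same argument. Soundness is enforced by the two post-hoc sanity checks. All remaining bookkeeping is routine and fits inside the $\Oh(|T|/\tau)$ operation budget, yielding the claimed running time.
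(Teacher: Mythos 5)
Your overall strategy---reduce the construction to $\Oh(|T|/\tau)$ calls to $\perop$, $\lcpop$, $\lcsop$, $\splitop$, and $\concatop$---is exactly the paper's, but the concrete realization has a genuine completeness gap in the choice of test window. You extract $W_b = T[\max(1,b-\tau)\dd\min(|T|,b+\tau)]$, a fragment of length roughly $2\tau$ centered at the anchor, and keep the anchor only if $\perop(W_b)\le \frac13\tau$. However, $\per(W_b)\le\frac13\tau$ requires the \emph{entire} window $W_b$ to lie inside a periodic region, and nothing guarantees this: a $\tau$-run $T[p\dd q]$ of length exactly $\tau$ cannot contain any window of length $2\tau$, so for every anchor $b$ inside it the window $W_b$ spills out of the run on at least one side, $\per(W_b)$ is in general larger than $\frac13\tau$, and the run is discarded. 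Your appeal to the Periodicity Lemma via ``$|W_b|=2\tau\ge 6\per(T[p\dd q])$'' conflates the length of the window with containment of the window in the run; the lemma transfers the period of the run to $W_b$ only when $W_b\subseteq T[p\dd q]$. The fix is to make the tested fragment \emph{shorter} than the run rather than longer: the paper partitions $T$ into blocks of length $\floor{\frac13\tau}$ and tests each pair of consecutive blocks, i.e.\ a fragment of length $2\floor{\frac13\tau}<\tau$. Every $\tau$-run, having length at least $\tau\ge 3\floor{\frac13\tau}$, fully contains some such pair; that pair then inherits the run's period $p\le\frac13\tau$, has length at least $2p$ so $\perop$ reports its shortest period (which equals $p$ by primitivity of the run's root), and the subsequent $\lcpop$/$\lcsop$ extension recovers the run exactly. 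With this change your parse-tree anchor machinery becomes unnecessary.

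A secondary issue: skipping every anchor that falls ``inside a previously emitted run'' is too aggressive for deduplication, because two distinct $\tau$-runs may overlap by up to $\frac23\tau$, so the unique block pair witnessing a later run can lie inside an earlier one. It suffices to generate each run possibly several times (processing block pairs left to right yields the runs in left-to-right order) and suppress a candidate only when it coincides with the most recently emitted run.
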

\begin{proof}
  We partition $T$ into blocks of length $\floor{\frac13\tau}$
  (leaving up to $\floor{\frac13\tau}$ trailing characters
  behind). For any two consecutive blocks, we extract the
  corresponding fragment (using $\splitop$) and apply $\perop$ to
  retrieve its shortest period (provided that it does not exceed
  $\frac13\tau$).  If the period indeed does not exceed $\frac13\tau$,
  we maximally extend the fragment while preserving its shortest
  period. This is implemented using an $\lcpop$ query on two suffixes
  of $T$ and an $\lcsop$ queries on two prefixes of $T$. If the
  maximal fragment is of length at least $\tau$, we include it in
  $\RUNS_\tau(T)$.  By the Periodicity Lemma, each $\tau$-run is
  generated this way (perhaps multiple times).  Moreover, if we
  process the block pairs in the left-to-right order, then the
  $\tau$-runs are also generated in the left-to-right order.
\end{proof}

\begin{proposition}\label{prp:sss}
  The data structure of \cref{sec:ds} can be extended so that, given
  $T\in \W$ and $\tau\in \Zp$, the set $\S_\sig(\tau,T)$ obtained
  using \cref{cons:sss} can be built in $\Oh(\frac{|T|\log
  |T|}{\tau}(\log^*(|T|\sigma))^2\cdot \allowbreak (\log^*(|T|\sigma)
  +\frac{\log^2\log g}{\log \log \log g}))$ time.
\end{proposition}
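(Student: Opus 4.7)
The plan is to construct $\S_\sig(\tau,T)$ by separately generating the positions arising from each of the three conditions in \cref{cons:sss} and then combining them via a linear merge. Conditions~\ref{it:sss:beg} and~\ref{it:sss:end} are read directly off $\RUNS_\tau(T)$, which I would compute up front using \cref{prp:tauruns}; this already incurs the claimed running time, so every remaining step needs only to fit inside that budget.

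For condition~\ref{it:sss:reg}, I first determine the target level $k$ as the largest integer with $\tau\ge 3\alpha_k$; since $(\alpha_k)$ grows geometrically, $k=\Oh(\log\tau)$ and a binary search on the precomputed sequence suffices. The maximality of $k$ together with \cref{fct:alphabeta,rem:nk} gives $\alpha_{k+1} > \tau/3$ and $\alpha_{k+1}<(2h_k+16)d_k$, hence $d_k = \Omega(\tau/\log^*(\tau\sigma))$; combined with \cref{lem:sparse} this bounds $|B_k(T)| = \Oh(|T|/d_k) = \Oh(\tfrac{|T|}{\tau}\log^*(\tau\sigma))$. I would enumerate $B_k(T)$ in left-to-right order by a depth-first traversal of $\Tr(\symb_\sig(T))$ using the node interface from \cref{lem:pt}, stopping at nodes of level at most $k$ (no recursion needed for run-length nodes: the internal boundaries of $\block(Y^m)$ of level $\le k$ are arithmetic progressions computable in $\Oh(1)$ amortized per boundary). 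This costs $\Oh(|B_k(T)|+\log|T|)$ time, which is well within the target.

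With the two sorted lists $B_k(T)$ and $\RUNS_\tau(T)$ in hand, I would sweep them simultaneously: for each $b\in B_k(T)$, set $i=b-\tau+1$, discard it if $i\notin[1\dd n-2\tau+1]$, and otherwise check whether $T[i\dd i+2\tau)$ is contained in any $\tau$-run by comparing against the current run in the sweep (both lists are sorted, so the amortized work per boundary is $\Oh(1)$). Positions passing the test are recorded for condition~\ref{it:sss:reg}. Finally I would add $p-1$ and $q-2\tau+2$ for each $T[p\dd q]\in \RUNS_\tau(T)$ (discarding any that fall outside $[1\dd n-2\tau+1]$), merge-sort the three streams, and remove duplicates.

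The running times sum up as follows: \cref{prp:tauruns} contributes the dominant $\Oh(\tfrac{|T|\log |T|}{\tau}(\log^*(|T|\sigma))^2(\log^*(|T|\sigma)+\tfrac{\log^2\log g}{\log\log\log g}))$ term, the parse-tree traversal adds $\Oh(\tfrac{|T|}{\tau}\log^*(|T|\sigma)+\log|T|)$, and the sweep and final merge are linear in the sizes of the intermediate lists, which are at most $\Oh(\tfrac{|T|}{\tau}\log^*(|T|\sigma))$. The main obstacle I foresee is the parse-tree traversal: I must ensure that descending to level $k$ through run-length nodes (which may encode very many equal-symbol children) does not blow up the traversal time; this is handled cleanly by treating such nodes' level-$k$ descendants as an arithmetic progression of phrase boundaries emitted one at a time, using the $\preflength$ primitive of \cref{lem:gr}. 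Once this is in place the complexity bound follows by summation.
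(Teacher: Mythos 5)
Your proposal is correct and follows essentially the same route as the paper's proof: generate $\RUNS_\tau(T)$ via \cref{prp:tauruns}, obtain $B_k(T)$ by traversing the parse tree $\Tr(\symb_\sig(T))$, and perform a simultaneous sorted sweep of the two lists to decide condition~\ref{it:sss:reg} while adding the run endpoints for conditions~\ref{it:sss:beg} and~\ref{it:sss:end}, with the total cost dominated by the runs computation. Your extra care with run-length nodes and the explicit bound $|B_k(T)|=\Oh(\tfrac{|T|}{\tau}\log^*(\tau\sigma))$ via \cref{lem:sparse,fct:alphabeta} are fine elaborations of details the paper leaves implicit.
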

\begin{proof}
  We first compute the largest $k\in \Zz$ with $2\tau \ge
  \alpha_k+\beta_k$ and construct $T_k$ by traversing the parse tree
  $\Tr(\symb_{\sig}(T))$.  In particular, this yields the set
  $B_k(T)$.  Next, we generate $\RUNS_\tau(T)$ using
  \cref{prp:tauruns}.  To construct $\S_\sig(\tau,T)$, we
  simultaneously scan $B_k(T)$ and $\RUNS_\tau(T)$.  For every
  position $i\in B_k(T)\cap [\alpha_k+1\dd n-2\tau-\alpha_k]$, we add
  $i-\tau+1$ to $\S_\sig(\tau,T)$ provided that $T(i-\tau\dd i-2\tau]$
  is not contained in any $\tau$-run.  Moreover, for every $\tau$-run
  $T[p\dd q]$, we add to $\S_\sig(\tau,T)$ positions $p-1$ and
  $q-2\tau+2$ (provided that they are within $[1\dd n-2\tau+1]$).  The
  query time is dominated by the cost of generating runs using
  \cref{prp:tauruns}.
\end{proof}

\section{Dynamic Text Implementation}\label{sec:app}

In this section, we develop a data structure that maintains a dynamic
text $T\in \Sigma^+$ subject to character insertions and deletions, as
well as substring swaps (the `cut-paste' operation), and supports
queries specified in
\cref{as:small,as:core,as:nonperiodic,as:periodic}.

For an alphabet $\Sigma$, we say that a \emph{labelled string} over
$\Sigma$ is a string over $\Sigma \times \Zz$.  For $c:=(a,\ell)\in
\Sigma \times \Zz$, we say that $\val(c):=a$ is the \emph{value} of
$c$ and $\labelop(c):=\ell$ is the \emph{label} of $c$.  For a
labelled string $S\in (\Sigma \times \Zz)^*$, we define the set of
labels $L(S)=\{\labelop(S[i]) : i\in [1\dd |S|\}$ and the string of
values $\val(S)=\val(S[1])\cdots \val(S[|S|])$.

Instead of maintaining a single labelled string representing $T$, our
data structure internally allows maintaining multiple labelled strings
(with character labels unique across the entire collection).  This
lets us decompose each update into smaller building blocks; for
example, a $\swapop$ (cut-paste) operation can be implemented using
three splits followed by three concatenations. This internal interface
matches the setting considered in a large body of previous work on
dynamic strings~\cite{Mehlhorn,Alstrup2000,dynstr}.

For a finite family $\LW \sub (\Sigma \times \Zz)^*$, we set
$L(\LW)=\bigcup_{S\in \LW} L(S)$ and $\|\LW\|=\sum_{S\in \LW} |S|$.
We say $\LW$ is \emph{uniquely labelled} if $|L(\LW)|=\|\LW\|$;
equivalently, for each label $\ell\in L(\LW)$ there exist unique $S\in
\LW$ and $i\in [1\dd |S|]$ such that $\ell=\labelop(S[i])$.

\begin{lemma}\label{lem:lf}
  There is a data structure maintaining a uniquely labelled family
  $\LW\sub (\Sigma \times \Zz)^+$ using the following interface, where
  $\labelop(S[1])$ is used as a reference to any string $S\in \LW$:
  \begin{description}
  \item[$\concatop(R,S)$:] Given distinct $R,S\in \LW$, set $\LW :=
    \LW\sm (\{R,S\})\cup \{R\cdot S\}$.
  \item[$\splitop(S,i)$:] Given $S\in \LW$ and $i\in [1\dd |S|)$, set
    $\LW := \LW\sm (\{S\})\cup \{S[1\dd i], S(i\dd |S|]\}$.
  \item[$\makeop(a,\ell)$:] Given $a\in \Sigma$ and $\ell\in \Zz\sm
    L(\LW)$, set $\LW:=\LW\cup\{(a,\ell)\}$.
  \item[$\deleteop(S)$:] Given $S\in \LW$ with $|S|=1$, set $\LW :=
    \LW\sm\{S\}$.
  \item[$\labelop(S,i)$:] Given $S\in \LW$ and $i\in [1\dd |S|]$,
    return $\labelop(S[i])$.
  \item[$\val(S,i)$:] Given $S\in \LW$ and $i\in [1\dd |S|]$, return
    $\val(S[i])$.
  \item[$\unlabel(\ell)$:] Given $\ell\in L(\LW)$, return $(S,i)$,
    where $S\in \LW$ and $i\in [1\dd |S|]$ are such that
    $\ell=\labelop(S[i])$.
  \end{description}
  In the word RAM model with word size $w$ satisfying $L(\LW)\sub
  [0\dd 2^w)$, each of these operations can be implemented in
  $\Oh(\log n)$ time, where $n=\|\LW\|$.
\end{lemma}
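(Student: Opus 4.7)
\medskip
\noindent\textbf{Proof proposal.}
The plan is to represent each labelled string $S\in\LW$ by a balanced binary search tree (an AVL or weight-balanced tree will do) whose in-order sequence of leaves stores the character tuples $S[1],\dots,S[|S|]$. Each node carries its value-label pair together with two augmented fields: a parent pointer and the size of its subtree. Alongside these trees I maintain a single global balanced BST $D$, keyed by label, mapping every $\ell\in L(\LW)$ to the unique tree node that stores the character with label $\ell$; uniqueness of labels in $\LW$ guarantees $D$ is a well-defined injection. Since the total number of nodes across all trees is $n=\|\LW\|$, every individual tree has height $\Oh(\log n)$ and $D$ has depth $\Oh(\log n)$.

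To turn the reference $\labelop(S[1])$ of a string $S$ into a usable handle, I would look it up in $D$ to recover the leftmost node of $S$'s tree, then walk parent pointers up to the root; this costs $\Oh(\log n)$ time. Given the root, $\labelop(S,i)$ and $\val(S,i)$ reduce to a standard descent that uses the subtree-size annotations to locate the $i$-th in-order node in $\Oh(\log n)$ time. For $\concatop(R,S)$ and $\splitop(S,i)$ I would invoke the textbook worst-case $\Oh(\log n)$ join and split procedures for balanced BSTs: only $\Oh(\log n)$ nodes along the spine are touched, and rebalancing rotations preserve both augmented fields locally. The $\makeop(a,\ell)$ operation allocates a singleton node and inserts the pair $(\ell,\text{node})$ into $D$ in $\Oh(\log n)$ time, while $\deleteop(S)$ removes the single surviving label from $D$ and discards the node, again in $\Oh(\log n)$ time.

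The $\unlabel(\ell)$ operation is the most interesting: after looking $\ell$ up in $D$ to find the node $\nu$ in $\Oh(\log n)$ time, I would walk from $\nu$ up to the root of its tree, accumulating a running rank by adding, at each step where the walk ascends from a right child, one plus the subtree size of the left sibling. When the walk reaches the root, the accumulated value is exactly the in-order index $i$ of $\nu$, and the root determines $S$. To produce the reference $\labelop(S[1])$ demanded of the caller, I would descend from the root along left children to the leftmost node and return its label; both the ascent and the descent have length $\Oh(\log n)$.

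The only delicate point, and the main obstacle in the write-up, is verifying that the chosen balanced-BST flavour supports join and split in worst-case $\Oh(\log n)$ time while correctly maintaining both augmented fields under every rotation. This is standard for AVL or red-black trees; subtree sizes and parent pointers are both local invariants, repaired in $\Oh(1)$ per rotation, and the number of rotations along a join/split path is $\Oh(\log n)$. Since label values themselves are immutable under all operations, the dictionary $D$ needs to be updated only by $\makeop$ and $\deleteop$, and no secondary maintenance is required during concatenations or splits. Combining these observations yields the claimed $\Oh(\log n)$ bound for every operation in the interface.
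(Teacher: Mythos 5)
Your proposal is correct and takes essentially the same approach as the paper: balanced BSTs (one per string) augmented with subtree sizes, a global label-to-node dictionary, standard $\Oh(\log n)$ join/split, and $\unlabel$ implemented by walking from the node to the root while accumulating the in-order rank. The only cosmetic difference is that the paper additionally stores the label of the leftmost descendant in each node, whereas you recover the reference $\labelop(S[1])$ by a left-spine descent; both cost $\Oh(\log n)$.
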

\begin{proof}
  Each string $S\in \LW$ is stored as a balanced binary search tree
  (such as an AVL tree~\cite{AVL}) whose in-order traversal yields
  $S$.  Additionally, each node of the BST is augmented with the size
  of its subtree and the label of the leftmost node.  We also maintain
  a node dictionary that maps each label $\ell \in L(\LW)$ to the node
  representing the character with label $\ell$.

  The $\unlabel(\ell)$ operation uses the node dictionary to reach the
  node representing the character with label $\ell$, and then
  traverses the path from $\ell$ to the root of the corresponding
  tree, using the subtree sizes to determine the number of nodes to
  the left of the path traversed.  The remaining operations use the
  same approach to locate the roots of the trees representing strings
  $S\in \LW$ represented by $\labelop(S[1])$.  Then, the $\val$ and
  $\labelop$ operations traverse the tree, using the subtree sizes to
  descend to the $i$th leftmost node, and they return the value and
  the label, respectively, of the corresponding character.  The
  $\splitop$ operation also descends to the $i$th node of the tree
  representing $S$, but then it splits the tree.  The $\concatop$
  operation joins two trees.  The $\deleteop$ operation deletes a
  single-element tree and removes the corresponding entry from the
  node dictionary.  The $\makeop$ operation creates a single-element
  tree and inserts a new entry to the node dictionary.
\end{proof}

As a warm-up, we show that \cref{lem:lf} allows for efficient random
access to a dynamic text.

\begin{corollary}\label{cor:access}
  A dynamic text $T\in \Sigma^n$ can be implemented so that
  initialization takes $\Oh(1)$ time, updates take $\Oh(\log n)$ time,
  and the following $\accessop(i)$ queries take $\Oh(\log n)$ time:
  \begin{description}
  \item[$\accessop(i)$:] given $i\in [1\dd n]$, return $T[i]$.
  \end{description}
\end{corollary}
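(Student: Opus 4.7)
My plan is to implement the dynamic text $T$ using a single labelled string $S \in \LW$ with $\val(S) = T$, maintained via the data structure of \cref{lem:lf}. Labels are assigned lazily (e.g., from a global counter that is incremented on each $\makeop$), ensuring they remain unique throughout the execution. Initialization creates a single-character labelled string containing $\$$ via one $\makeop$ call, which takes $\Oh(1)$ time (technically $\Oh(\log 1) = \Oh(1)$), establishing the invariant $\val(S) = T = \$$ required by \cref{def:dt}.

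For the updates from \cref{def:dt}, I would implement each one as a constant-length sequence of operations on $\LW$, preserving the invariant that $\LW = \{S\}$ and $\val(S) = T$ before and after the update. Specifically, $\makeop(i, a)$ is realized by first calling $\makeop(a, \ell)$ (with a fresh label $\ell$) to create a new singleton string $C \in \LW$, then (if $i \in [1 \dd |T|)$) calling $\splitop(S, i-1)$ to produce $L, R \in \LW$ with $\val(L) = T[1\dd i)$ and $\val(R) = T[i \dd |T|]$, and finally two $\concatop$ calls to form $L \cdot C \cdot R$; boundary cases $i = 1$ and $i = |T|$ use only a single $\concatop$. The operation $\deleteop(i)$ uses one or two $\splitop$ calls to isolate the character at position $i$ as a singleton $C$, calls $\deleteop(C)$, and then one $\concatop$ to reassemble the rest. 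The $\swapop(i,j,k)$ operation uses up to three $\splitop$ calls to produce fragments corresponding to $T[1\dd i)$, $T[i\dd j)$, $T[j\dd k)$, and $T[k\dd |T|]$, followed by three $\concatop$ calls to reassemble them in the order $T[1\dd i) \cdot T[j \dd k) \cdot T[i \dd j) \cdot T[k \dd |T|]$. Each update thus uses $\Oh(1)$ calls to \cref{lem:lf}, for a total of $\Oh(\log n)$ time.

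The $\accessop(i)$ query is implemented as a single call to $\val(S, i)$, which by \cref{lem:lf} takes $\Oh(\log n)$ time and returns $\val(S[i]) = T[i]$ by the invariant.

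There is no real obstacle; the main thing to verify is that the word-RAM precondition $L(\LW) \sub [0 \dd 2^w)$ of \cref{lem:lf} is maintained. Since every character ever inserted receives a distinct label and the total number of $\makeop$ calls over the lifetime of the structure is bounded by the total number of updates, and since we assume $w = \Omega(\log n)$ where $n$ bounds the text length, one can safely use any fresh labels drawn from an $\Oh(w)$-bit counter (reusing labels of deleted characters is optional). This yields the claimed complexities.
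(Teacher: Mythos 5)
Your proposal is correct and follows essentially the same route as the paper: represent $T$ as the single string of a uniquely labelled family from \cref{lem:lf}, simulate each update of \cref{def:dt} by $\Oh(1)$ calls to $\makeop$/$\deleteop$/$\splitop$/$\concatop$ (with fresh labels drawn from a counter), and answer $\accessop(i)$ via one $\val(S,i)$ call. The only differences are cosmetic (the paper spells out which labels to retain as handles and the exact boundary cases), and your remark on keeping labels within $[0\dd 2^w)$ is a valid point the paper leaves implicit.
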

\begin{proof}
  We maintain a uniquely labelled string family $\LW$ that satisfies
  the following invariant after each update is completed: $\LW =
  \{S\}$ for a labelled string $S$ such that $\val(S)=T$.  We also
  store a count $c$ of the insertions performed so that we can assign
  a fresh label to every inserted character, and a label
  $\ell:=\labelop(S[1])$ needed to access the only string in $\LW$.

  To implement $\initop(\sigma)$, we initialize an empty labelled
  family $\LW$ over $\Sigma=[0\dd \sigma)$ and perform an
  $\LW.\makeop(\$,0)$ operation to make sure that $\LW= \{S\}$ with
  $\val(S)=\$$.  We also set $c:=0$ and $\ell := 0$.
  
  As for the $T.\makeop(i,a)$ operation, we increment the counter $c$
  and perform the $\LW.\makeop(a, c)$ operation adding to $\LW$ a new
  labelled string $R$ with $\val(R)=a$.  If $i=1$, we simply perform
  $\LW.\concatop(\ell,c)$ that results in $\LW=\{R\cdot S\}$, and we
  update the label of the first character of the only string in $\LW$,
  setting $\ell := c$.  If $i>1$, on the other hand, we retrieve the
  label $\ell_i := \LW.\labelop(\ell,i)$ of $S[i]$, and perform the
  following calls: $\LW.\splitop(\ell,i-1)$ (resulting in $\LW =
  \{S[1\dd i), S[i\dd n],R\}$), $\LW.\concatop(\ell,c)$ (resulting in
  $\LW = \{S[1\dd i)\cdot R, S[i\dd n]\}$), and
  $\LW.\concatop(\ell,\ell_i)$ (resulting in $\LW = \{S[1\dd i)\cdot
  R\cdot S[i\dd n]\}$).

  As for the $T.\deleteop(i)$ operation, we retrieve the label
  $\ell_i:= \LW.\labelop(\ell,i)$ of the character to be deleted and
  the label $\ell_{i+1}:= \LW.\labelop(\ell,i+1)$ of the subsequent
  character.  Next, we perform the $\LW.\splitop(\ell, i)$ operation
  that results in $\LW = \{S[1\dd i],S(i\dd n]\}$.  If $i=1$, we then
  set $\ell:=\ell_{i+1}$ because $S[1\dd i]=S[i]$ and $S(i\dd
  n]=S[1\dd i)\cdot S(i\dd n]$.  Otherwise, we perform the
  $\LW.\splitop(\ell,i-1)$ operation (resulting in $\LW = \{S[1\dd i),
  S[i],S(i\dd n]\}$) and the $\LW.\concatop(\ell,\ell_{i+1})$
  operation (resulting in $\LW = \{S[1\dd i)\cdot S(i\dd n], S[i]\}$).
  In both cases, we conclude with a call $\LW.\deleteop(\ell_i)$ that
  removes $S[i]$ from $\LW$ and results in desired state $\LW =
  \{S[1\dd i)\cdot S(i\dd n]\}$.

  As for the $T.\swapop(i,j,k)$ operation, we first check whether $i <
  j < k$; otherwise, there is nothing to do.  Next, we retrieve the
  labels $\ell_i := \LW.\labelop(\ell,i)$, $\ell_j :=
  \LW.\labelop(\ell,j)$, and $\ell_k := \LW.\labelop(\ell,k)$ of
  $S[i]$, $S[j]$, and $S[k]$, respectively.  Then, we perform the
  $\LW.\splitop(\ell, k-1)$ operation (resulting in $\LW = \{S[1\dd
  k), S[k\dd n]\}$), the $\LW.\splitop(\ell,j-1)$ operation (resulting
  in $\LW = \{S[1\dd j), S[j\dd k), S[k\dd n]\}$), and the
  $\LW.\concatop(\ell,\ell_k)$ operation (resulting in $\LW = \{S[1\dd
  j)\cdot S[k\dd n], S[j\dd k)\}$).  If $i=1$, we proceed with the
  $\LW.\concatop(\ell_j,\ell)$ operation (resulting in $\LW = \{S[j\dd
  k)\cdot S[1\dd j)\cdot S[k\dd n]\}=\{S[1\dd i)\cdot S[j\dd k)\cdot
  S[i\dd j)\cdot S[k\dd n]\}$ and set $\ell:= \ell_j$ to update the
  label of the first character of the only string in $\LW$.)
  Otherwise, we perform the $\LW.\splitop(\ell,i-1)$ operation
  (resulting in $\LW = \{S[1\dd i), S[i\dd j)\cdot S[k\dd n], S[j\dd
  k)\}$), the $\LW.\concatop(\ell,\ell_j)$ operation (resulting in
  $\LW = \{S[1\dd i) \cdot S[j\dd k), S[i\dd j)\cdot S[k\dd n]\}$),
  and finally the $\LW.\concatop(\ell,\ell_i)$ operation (resulting in
  $\LW = \{S[1\dd i)\cdot S[j\dd k)\cdot S[i\dd j)\cdot S[k\dd n]\}$).

  Finally, the $T.\accessop(i)$ operation simply queries
  $\LW.\accessop(\ell,i)$ and forwards the obtained answer.
    
  It is easy to see that the initialization takes $\Oh(1)$ time
  whereas the remaining operations take $\Oh(\log n)$ time.
\end{proof}

\subsection{Implementing Assumption~\ref{as:small}}

For a labelled family $\LW$ and an integer $q\in \Zp$, we
define \[\CF_{q}(\LW)=\bigcup_{S\in \LW} \{(\val(S)^\infty[j\dd j+q),
\labelop(S[j])) : j\in [1\dd |S|]\}\]

\begin{lemma}\label{lem:rksel}
  For any fixed $q\in \Zp$, the data structure of \cref{lem:lf} can be
  augmented so that, given $r\in [1\dd \|\LW\|]$, the $r$th smallest
  element $(X,\ell)\in \CF_{q}(\LW)$ can be computed in $\Oh(q \log
  n)$ time, along with the counts $|\{(X',\ell')\in \CF_q(\LW)\} : X'
  \prec X\}|$ and $|\{(X',\ell')\in \CF_q(\LW) : X' \preceq
  X\}|$. This comes at the price of increasing the cost of all updates
  to $\Oh(q^2\log n)$.
\end{lemma}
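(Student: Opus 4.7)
The plan is to augment the labelled-family data structure of \cref{lem:lf} with a second global balanced binary search tree $\mathcal{B}$ (for instance, a weight-balanced tree or AVL tree) whose nodes are in one-to-one correspondence with the characters stored in $\LW$ and whose order is determined by the pair $(X,\ell)$ consisting of the length-$q$ cyclic context of the character and its label, the label breaking ties so that all keys are distinct. Each node of $\mathcal{B}$ caches (i) the label $\ell$ of the character it represents, (ii) its cached length-$q$ context $X\in\Sigma^q$ stored as an array of $q$ characters so that comparing two keys costs $\Oh(q)$ time, and (iii) the standard subtree-size augmentation used for order-statistic queries. The node dictionary of \cref{lem:lf} is extended so that, given any label, the corresponding $\mathcal{B}$-node can be reached in $\Oh(1)$ time via a back-pointer.

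Given a rank $r\in[1\dd \|\LW\|]$, the select query becomes a textbook subtree-size-guided descent in $\mathcal{B}$ that reaches the $r$-th leftmost node in $\Oh(\log n)$ pointer steps; at that node we read off $\ell$ and copy out $X$ in $\Oh(q)$ additional time. To report $|\{(X',\ell')\in \CF_q(\LW) : X' \prec X\}|$ and $|\{(X',\ell')\in \CF_q(\LW) : X' \preceq X\}|$, we perform two further descents of $\mathcal{B}$ guided by $X$ (treating the label as $-\infty$ and $+\infty$, respectively), accumulating the sizes of the subtrees that fall strictly to the left along the search path. Each comparison costs $\Oh(q)$ and each path has length $\Oh(\log n)$, so the overall query cost is $\Oh(q\log n)$, as claimed.

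For updates, the combinatorial point is that every primitive $\concatop$, $\splitop$, $\makeop$, or $\deleteop$ changes the length-$q$ cyclic context of only $\Oh(q)$ characters. In $\concatop(R,S)$, precisely the characters of $R$ whose length-$q$ context in $R$ alone reached past position $|R|$ (there are at most $\min(|R|,q-1)$ such positions) together with the characters of $S$ whose length-$q$ context in $S$ alone wrapped around (at most $\min(|S|,q-1)$ positions) can change; all remaining characters keep the identical $q$-character suffix that their cyclic context provides. The situation is symmetric for $\splitop$, and for $\makeop$/$\deleteop$ only a single $\mathcal{B}$-node is created or removed. Accordingly, the plan is, before performing the underlying $\LW$-operation, to enumerate this $\Oh(q)$-size set of affected positions using $\Oh(q)$ $\labelop$ and $\accessop$ calls from \cref{lem:lf}, delete their $\mathcal{B}$-nodes via the back-pointers, perform the $\LW$-level update itself in $\Oh(\log n)$ time, recompute each new length-$q$ context using $q$ calls to $\accessop$ (each $\Oh(\log n)$), and reinsert the corresponding nodes. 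A single $\mathcal{B}$-insertion or $\mathcal{B}$-deletion performs $\Oh(\log n)$ key comparisons of cost $\Oh(q)$, hence $\Oh(q\log n)$ time; summed over the $\Oh(q)$ affected positions this yields the claimed $\Oh(q^2\log n)$ bound on updates.

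The main obstacle will be the clean case analysis delineating exactly which character positions have their cyclic context altered under each update primitive, especially when one of the strings involved has length less than $q$ and its wrap-around therefore pervades every position. Once this bookkeeping is in place, the remaining ingredients -- a standard order-statistic balanced BST, back-pointers between characters and $\mathcal{B}$-nodes, and $\Oh(q)$-time comparisons from the cached contexts -- are entirely routine.
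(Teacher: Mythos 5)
Your proposal is correct and follows essentially the same route as the paper: both maintain $\CF_q(\LW)$ in an order-statistic balanced BST with contexts stored explicitly so that comparisons cost $\Oh(q)$, a label-to-node dictionary for locating affected entries, and the observation that $\concatop$/$\splitop$ alter the cyclic context of only the $\Oh(q)$ positions within distance $q$ of the affected boundary, each of which is rebuilt in $\Oh(q\log n)$ time. The bookkeeping you flag as the remaining obstacle is exactly the paper's range $(\max(0,|R|-q+1)\dd |R|]$ (and its counterpart for $S$), which your $\min(|R|,q-1)$ bound already handles, including the short-string case.
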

\begin{proof}
  We explicitly store $\CF_{q}(\LW)$ in a balanced binary search tree
  (ordered lexicograpically) and a dictionary that maps each label
  $\ell\in L(\LW)$ to the corresponding node $(X,\ell)\in
  \CF_{q}(\LW)$.  With this implementation, a query takes $\Oh(q\log
  n)$ time ($\Oh(\log n)$ comparisons in $\Oh(q)$ time each).

  As for the $\deleteop(S)$ operation, we just remove from
  $\CF_{q}(\LW)$ the corresponding pair $(S^q,\labelop(S) )$, which
  takes $\Oh(q\log n)$ time. Symmetrically, $\makeop(a,\ell)$ inserts
  $(a^q,\ell)$ in $\Oh(q\log n)$ time.

  The $\concatop(R,S)$ operation is implemented as follows: For each
  $j\in (\max(0,|R|-q+1)\dd |R|]$, we replace $(\val(R)^\infty[j\dd
  j+q), \labelop(R[j]))$ with $(\val(RS)^\infty[j\dd j+q),
  \labelop(R[j]))$ and, analogously, for each $j\in
  (\max(0,|S|-q+1)\dd |S|]$, we replace $(\val(S)^\infty[j\dd j+q),
  \labelop(S[j]))$ with $(\val(RS)^\infty[j+|R|\dd j+|R|+q),
  \labelop(S[j]))$.  Each new entry can be constructed in $\Oh(q\log
  n)$ time using $q$ calls to the $\val$ and $\labelop$ operations,
  for the overall running time of $\Oh(q^2\log n)$.

  The implementation of $\splitop(S,i)$ is symmetric. For each $j\in
  (\max(0,i-q+1)\dd i]$, we replace $(\val(S)^\infty[j\dd j+q),
  \labelop(S[j]))$ with $(\val(S[1\dd i])^\infty[j\dd j+q),
  \labelop(S[j]))$, and, analogously, for each $j\in
  (\max(i,|S|-q+1)\dd |S|]$, we replace $(\val(S)^\infty[j\dd j+q),
  \labelop(S[j]))$ with $(\val(S(i\dd |S|])^\infty[j-i\dd j-i+q)),
  \labelop(S[j]))$.
\end{proof}

\begin{proposition}\label{prp:small}
  A dynamic text $T\in \Sigma^n$ can be implemented so that
  initialization takes $\Oh(1)$ time, updates take $\Oh(\log n)$ time,
  and the queries of \cref{as:small} take $\Oh(\log n)$ time.
\end{proposition}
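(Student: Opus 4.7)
The plan is to combine the basic labelled-string machinery from \cref{lem:lf} with the rank-select augmentation of \cref{lem:rksel} at parameter $q=16$, reusing the single-string layout developed in the proof of \cref{cor:access}. Concretely, I would maintain a uniquely labelled family $\LW=\{S\}$ whose sole member satisfies $\val(S)=T$, along with a counter used to assign fresh labels and the label $\labelop(S[1])$ pointing to the root of the tree representing $T$. On top of $\LW$, I would install the $\CF_{16}(\LW)$ structure from \cref{lem:rksel}; since $q=16=\Oh(1)$, every update to $\LW$ costs $\Oh(\log n)$ time, so the reductions of $\initop$, $\makeop$, $\deleteop$, and $\swapop$ described in \cref{cor:access} run in $\Oh(\log n)$ time without any change.

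For the query, given $i\in[1\dd n]$, I would use \cref{lem:rksel} to locate the $i$-th smallest element $(X,\ell)\in \CF_{16}(\LW)$ in lexicographic order, recover the text position $j$ via $\unlabel(\ell)$, and simultaneously fetch $b:=|\{(X',\ell')\in \CF_{16}(\LW):X'\prec X\}|$ and $e:=|\{(X',\ell')\in \CF_{16}(\LW):X'\preceq X\}|$, all in $\Oh(\log n)$ time. The $j$ so obtained is then returned together with the pair $(b,e)$. By \cref{cor:lbub}, these numbers coincide with $\LB_{16}(\SA[i])$ and $\UB_{16}(\SA[i])$ provided that $X=T^\infty[\SA[i]\dd \SA[i]+16)$, i.e., provided $j\in \Occ_{16}(\SA[i])$.

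The main point to verify is therefore that the lexicographic order on $\CF_{16}(\LW)$ refines to the suffix-array order at the coarse level: the $i$-th element in the $\CF_{16}$ order shares its cyclic context with $\SA[i]$. This reduces to showing that, for any $j_1\ne j_2$ with $T^\infty[j_1\dd j_1{+}16)\ne T^\infty[j_2\dd j_2{+}16)$, the two orderings agree on the pair $(j_1,j_2)$, since then $\Occ_{16}(\SA[i])$ occupies exactly the positions whose cyclic context equals $T^\infty[\SA[i]\dd \SA[i]+16)$ under both orderings, and $\SA[i]$ witnesses the nonemptiness of this block. The comparison splits into the case where both suffixes have length $\ge 16$ (both orders use $T[j\dd j+16)$ directly) and the case where at least one is shorter. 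The latter case uses that $\$=\min\Sigma$ is unique in $T$, so the first $\$$ inside a wrapping cyclic context sits exactly at the end of the corresponding suffix; a short case check confirms that at the first differing position of the two cyclic contexts, both sides of the comparison are governed by the same characters as in the suffix comparison, which therefore yields the same verdict.

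I expect this case analysis---bookkeeping where $\$$ falls in a cyclic context versus in a suffix---to be the only genuinely nontrivial step; everything else is a direct plug-in of \cref{lem:lf,lem:rksel,cor:lbub}, and the stated time bounds follow immediately from the $q=16$ instantiation of \cref{lem:rksel}.
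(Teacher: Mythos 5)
Your proposal is correct and follows essentially the same route as the paper: maintain the single labelled string as in \cref{cor:access}, augmented via \cref{lem:rksel} with $q=16$, answer the query by a rank-select on $\CF_{16}(\LW)$ plus an $\unlabel$ call, and identify the returned counts with $\LB_{16}(\SA[i])$ and $\UB_{16}(\SA[i])$. The only difference is that the ``main point to verify'' you single out---agreement of the cyclic-context order with the suffix order, including the bookkeeping of where \texttt{\$} falls---is precisely the content of \cref{cor:lbub} (via \cref{lm:utils}), which the paper simply cites rather than re-deriving.
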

\begin{proof}
  We proceed as in the proof of \cref{cor:access}, but instead of
  implementing the uniquely labelled family $\LW$ using the vanilla
  version of \cref{lem:lf}, we apply the extension of \cref{lem:rksel}
  for $q=16$.  Due to $q=\Oh(1)$, this preserves the update times.

  As for the query, we forward the argument $i$ to the query of
  \cref{lem:rksel}.  This results in the $i$th smallest pair
  $(X,\ell)$ in the set $\{T^\infty[j\dd j+16), \labelop(S[j])) : j\in
  [1\dd n]\}$, as well as the counts $|\{j\in [1\dd n] : T^\infty[j\dd
  j+16)\prec X\}|$ and $|\{j\in [1\dd n] : T^\infty[j\dd j+16)\preceq
  X\}|$.  By \cref{cor:lbub}, we have $X = T^\infty[\SA[i]\dd
  \SA[i]+16)$, and thus the counts are equal to $\LB_{16}(\SA[i])$ and
  $\UB_{16}(\SA[i])$, respectively. Moreover, a position $j\in
  \Occ_{16}(\SA[i])$ can be retrieved using an $\unlabel(\ell)$ call.
\end{proof}

\subsection{Common Tools for Assumptions~\ref{as:core},~\ref{as:nonperiodic}, and~\ref{as:periodic}}

\newcommand{\Lb}{\mathsf{L}}

For a family $\Lb\sub L(\LW)$ and a label $\ell\in \LW$, we define
$\Predecessor_{\Lb}(\ell)$ and $\Successor_{\Lb}(\ell)$ as follows.
Let $(S,i)=\unlabel(\ell)$.  If $L(S[1\dd i])\cap \Lb = \emptyset$,
then we set $\Predecessor_{\Lb}(\ell)=\bot$.  Otherwise,
$\Predecessor_{\Lb}(\ell)=\labelop(S[\max\{t\in [1\dd i] :
\labelop(S[t])\in \Lb\}])$.  Symmetrically, if $L(S[i\dd |S|])\cap \Lb
= \emptyset$, then we set $\Successor_{\Lb}(\ell)=\bot$.  Otherwise,
$\Successor_{\Lb}(\ell)=\labelop(S[\min\{t\in [i\dd |S|] :
\labelop(S[t])\in \Lb\}])$.

\begin{lemma}\label{lem:predsucc}
  The data structure of \cref{lem:lf} can be augmented to maintain a
  set $\Lb\sub L(\LW)$ of \emph{marked} labels so that, given $\ell\in
  L(\LW)$, the values $\Predecessor_{\Lb}(\ell)$ and
  $\Successor_{\Lb}(\ell)$ can be computed in $\Oh(\log n)$
  time. Moreover, marking and unmarking a label $\ell\in L(\LW)$ also
  costs $\Oh(\log n)$ time.
\end{lemma}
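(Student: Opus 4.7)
The plan is to augment each balanced BST used by the representation of Lemma~\ref{lem:lf} with a single extra bit per node. For a node $v$, let $\mathsf{hasMarked}(v)$ be $\one$ if the subtree rooted at $v$ contains at least one node whose label is in $\Lb$, and $\zero$ otherwise. This augmentation is local in the sense that $\mathsf{hasMarked}(v)$ is determined by the mark-status of $v$ and the values of $\mathsf{hasMarked}$ at $v$'s children, so it can be recomputed in $\Oh(1)$ time after any single rotation. Consequently, the $\Oh(\log n)$ bounds of Lemma~\ref{lem:lf} for $\concatop$, $\splitop$, $\makeop$, and $\deleteop$ are preserved: each of these operations already touches only $\Oh(\log n)$ nodes and performs $\Oh(\log n)$ rotations, and $\mathsf{hasMarked}$ can be refreshed on every touched/rotated node in $\Oh(1)$.

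To implement marking or unmarking a label $\ell\in L(\LW)$, I would use the node dictionary of Lemma~\ref{lem:lf} to locate in $\Oh(\log n)$ time the node $v$ representing the character with label $\ell$, flip the local marked bit of $v$, and then walk from $v$ to the root of its BST, refreshing $\mathsf{hasMarked}$ at each visited ancestor. Since the tree is balanced, this walk touches $\Oh(\log n)$ nodes, each in $\Oh(1)$ time.

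To answer $\Successor_{\Lb}(\ell)$, I would locate the node $v$ representing $\ell$ via the dictionary and then search forward in the in-order traversal of the BST containing $v$. The search is a standard ``next marked node'' traversal guided by the $\mathsf{hasMarked}$ bits: if $v$ itself is marked we return $\ell$; otherwise, if $v$'s right subtree has $\mathsf{hasMarked}=\one$ we descend into it, at each step going left whenever the left child's subtree contains a marked node and right otherwise, until a marked node is found; and if the right subtree contains no marked node, we walk up the path to the root, and at each ancestor $p$ reached from its left subtree we first test whether $p$ itself is marked (returning its label if so) and then whether $p$'s right subtree has $\mathsf{hasMarked}=\one$ (descending into it as above if so). If we reach the root without finding a marked node, we return $\bot$. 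Each of the two passes (one downward, one upward-then-downward) visits $\Oh(\log n)$ nodes of the balanced tree, so the total query time is $\Oh(\log n)$. The $\Predecessor_{\Lb}$ query is entirely symmetric.

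No step presents a real obstacle; this is a textbook augmentation of a balanced BST by a single monotone subtree predicate, completely analogous to maintaining subtree sizes in order-statistic trees, which was already required by Lemma~\ref{lem:lf}. The only mild care required is in the $\concatop$ and $\splitop$ operations, where one must verify that the join/split routines of the chosen balanced BST (e.g., AVL) update $\mathsf{hasMarked}$ on every node whose children change, but this is standard and does not change the $\Oh(\log n)$ bound.
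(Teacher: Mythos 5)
Your proposal is correct and matches the paper's proof essentially verbatim: both store, per node, its own mark status plus a ``subtree contains a marked node'' bit, maintain these through the dictionary lookup and a root-ward walk (and through rotations in the other updates), and answer $\Predecessor_{\Lb}/\Successor_{\Lb}$ by the standard next/previous-marked-node traversal in $\Oh(\log n)$ time. No differences worth noting.
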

\begin{proof}
  Compared to the implementation in the proof of \cref{lem:lf}, each
  node $\nu$ stores two extra bits, specifying whether the
  corresponding label is marked and whether the subtree of $\nu$
  contains a node with a marked label.

  To determine the predecessor of $\Predecessor_{\Lb}(\ell)$, we first
  locate the node $\nu$ with label $\ell$.  Then, we check whether
  $\ell\in \Lb$; if so, we simply return $\ell$.  Otherwise, we
  traverse the path from $\nu$ the root of the corresponding tree.
  For $\nu$ as well as for each node that we reach from its left
  subtree, we check whether the right subtree contains a node with a
  marked label node. If so, we descend to the leftmost such node and
  return its label.  If our traversal reaches the root without
  success, we report that $\Predecessor_{\Lb}(\ell)=\bot$.  It is easy
  to see that this procedure is correct and takes $\Oh(\log n)$ time.
  A symmetric procedure computes $\Successor_{\Lb}(\ell)$.

  As for marking and unmarking, we use the node dictionary to locate
  the node with label $\ell$, and update its status with respect to
  marking. Then, we update the cumulative bits on the path towards the
  root; this costs $\Oh(\log n)$ time.  As for the remaining updates,
  we recompute the cumulative bits for all nodes visited; this does
  not increase the asymptotic time of these updates.
\end{proof}

Based on the function $\CFp(\cdot)$ defined for unlabeled string
families in \cref{sec:lex}, let us denote
$\CFp(\LW)=\CFp(\{\val(S):S\in \LW\})$.

\begin{lemma}\label{lem:comp}
  The data structure of \cref{lem:lf} can be augmented so that any two
  fragments in $\CFp(\LW)$ and any two fragments in
  $\CFp(\revstr{\LW})$ can be compared lexicographically in $\Oh(\log
  n\frac{\log^2 \log m\cdot \log^* m}{\log \log \log m})$ time, where
  $m=\sigma + t$ and $t$ is the total number of instruction that the
  data structure has performed so far.  This comes at the price of
  increasing the cost of $\concatop$ to $\Oh(\log n\frac{\log^2 \log
  m\cdot \log^* m}{\log \log \log m})$, $\splitop$ to $\Oh(\log
  n\frac{\log^2 \log m\cdot \log^* m}{\log \log \log m})$, and
  $\makeop$ to $\Oh(\frac{\log^2 \log m\cdot \log^* m}{\log \log \log
  m})$.
\end{lemma}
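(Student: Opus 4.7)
The plan is to layer the dynamic strings data structure of \cref{sec:ds} on top of the labelled family structure from \cref{lem:lf}, maintaining a grammar $\Gr\sub \Symb$ and a signature function $\sig$ so that, for every $S\in \LW$, both $\val(S)$ and $\revstr{\val(S)}$ are represented in $\Gr$ via their balanced signature parsing roots $\symb_\sig(\val(S))$ and $\symb_\sig(\revstr{\val(S)})$. Each update on $\LW$ is then echoed on both sides: $\LW.\makeop(a,\ell)$ invokes $\iL(a)$ on $\Gr$; $\LW.\concatop(R,S)$ invokes $\concatop$ on the forward values and $\concatop$ of $\revstr{\val(S)}$ with $\revstr{\val(R)}$ on the reverse side; $\LW.\splitop(S,i)$ invokes $\splitop$ at position $i$ on the forward side and at position $|S|-i$ on the reverse side; $\LW.\deleteop$ simply drops a reference to a root symbol and needs no grammar action. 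Per-update costs are dominated by those of $\concatop$ and $\splitop$ from \cref{sec:ds}, namely $\Oh(\log n \cdot \log^*(n\sigma)\cdot (\log^*(n\sigma) + \tfrac{\log^2\log g}{\log\log\log g}))$.

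To answer a comparison query on $F_1, F_2\in \CFp(\LW)$, I would invoke the $\compareop$ procedure of \cref{sec:lex} on the forward representations of the underlying values: the endpoints of the cyclic fragments together with the optional $c^\infty$ tails provide precisely the input format accepted by $\compareop$. For fragments in $\CFp(\revstr{\LW})$, I would call the symmetric variant of $\compareop$ on the reversed representations, announced at the end of \cref{sec:lex}, which relies on $\lcsop$ in place of $\lcpop$. Both variants perform $\Oh(1)$ calls to $\splitop$, $\concatop$, $\lcpop$ (or $\lcsop$), and $\accessop$ on substrings of length $\Oh(n)$, with all running times already bounded in \cref{sec:ds}.

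To translate these bounds into the figures in the statement, I substitute $g = \Oh(\poly m)$. This estimate holds because every symbol in $\Gr$ is inserted by an $\iL$, $\iT$, or $\iP$ call triggered by some update, and each $\LW.\concatop$ or $\LW.\splitop$ issues at most $\Oh(\log n\cdot (\log^*(n\sigma))^2)$ such insertions. Hence $\log g = \Theta(\log m)$ and $\log^*(n\sigma) = \Oh(\log^* m)$, so the product $\log^*(n\sigma)(\log^*(n\sigma) + \tfrac{\log^2\log g}{\log\log\log g})$ collapses to $\Oh(\tfrac{\log^2\log m\cdot \log^* m}{\log\log\log m})$ because $\log^* m = \Oh(\tfrac{\log^2\log m}{\log\log\log m})$, matching the update and query times claimed. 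The $\makeop$ bound follows from the $\Oh(\tfrac{\log^2\log g}{\log\log\log g})$ cost of a single $\iL$ call in \cref{lem:gr}.

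The hardest part will be verifying that maintaining the mirrored reversed copies does not inflate the asymptotic costs: because each update is echoed once on the forward and once on the reverse side, I must check that the reverse operation has the same complexity as the forward one. This follows from the symmetry of balanced signature parsing—the construction of \cref{sec:parsing} depends only on the sequence of characters at each level, not on their direction, so the analyses of $\concatop$, $\splitop$, and $\compareop$ from \cref{sec:ds} transfer verbatim to the reversed strings. A secondary point of care is the translation of endpoints between forward and reverse cyclic fragments, but this is a routine index computation, since a cyclic fragment of $\revstr{\val(S)}$ is, up to a cyclic shift, the reversal of a cyclic fragment of $\val(S)$, and the required shift is determined in $\Oh(1)$ arithmetic operations from the endpoints supplied with the query.
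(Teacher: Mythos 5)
Your proposal is correct and matches the paper's proof in essence: both layer the dynamic strings of \cref{sec:ds} on top of \cref{lem:lf}, echo every $\makeop$/$\splitop$/$\concatop$ onto the grammar-based representation, answer queries via the $\compareop$ procedure of \cref{sec:lex}, and obtain the stated bounds by observing that $g=\Oh(\poly m)$ and $\log^* m=\Oh(\frac{\log^2\log m}{\log\log\log m})$. The only divergence is that you maintain explicit reversed copies $\revstr{\val(S)}$ to compare fragments of $\CFp(\revstr{\LW})$ via $\lcpop$, whereas the paper keeps only the forward strings and invokes the symmetric $\compareop$ variant built on $\lcsop$; both choices are valid and asymptotically equivalent here.
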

\begin{proof}
  On top of the data structure of \cref{lem:lf}, we also store
  $\W:=\{\val(S) : S\in \LW\}$ using dynamic strings of \cref{sec:ds}.
  The $\makeop(a,\ell)$ operation adds a new string $a$ to $\W$ using
  $\W.\makeop(a)$.  The $\splitop(S,i)$ operation adds strings
  $\val(S)[1\dd i]$ and $\val(S)(i\dd |S|]$ to $\W$ using
  $\W.\splitop(\val(S),i)$.  The $\concatop(R,S)$ operation adds the
  string $\val(R)\cdot \val(S)$ to $\W$ using
  $\W.\concatop(\val(R),\val(S))$.  As for lexicographic comparisons,
  we use the $\compareop$ operation implemented in \cref{sec:lex}.
\end{proof}

\subsection{Implementing Assumption~\ref{as:core}}

Recall that, for a string $T$ and an integer $\tau\in \Zp$, we defined
$\R(\tau,T) = \{i\in [1\dd |T|-3\tau+2] : \per(T[i\dd i+3\tau-2])\le
\frac13\tau\}$ and $\R'(\tau,T)=\{i\in \R(\tau,T) : i-1\notin
\R(\tau,T)\}$.  We also define a symmetric set $\R''(\tau,T) = \{i\in
\R(\tau,T) : i+1\notin \R(\tau,T)\}$.  For a uniquely labelled family
$\LW$, we generalize these notions as follows:
\begin{align*}
  \R(\tau,\LW)&=\{\labelop(S[i]) : S\in \LW\text{ and } i\in \R(\tau,\val(S))\}\\
  \R'(\tau,\LW)&=\{\labelop(S[i]) : S\in \LW\text{ and } i\in \R'(\tau,\val(S))\}\\
  \R''(\tau,\LW)&=\{\labelop(S[i]) : S\in \LW\text{ and } i\in \R''(\tau,\val(S))\}.
\end{align*}

\begin{lemma}\label{lem:runs}
  For any fixed $\tau\in \Zp$, the data structure of \cref{lem:comp}
  can be augmented so that, given $\ell\in L(\LW)$, we can check in
  $\Oh(\log n)$ time whether $\ell\in \R(\tau,\LW)$.  This comes at
  the price of increasing the cost of $\concatop$ to $\Oh(\log n
  \frac{\log^2 \log m\cdot (\log^* m)^2}{\log \log \log m})$.
\end{lemma}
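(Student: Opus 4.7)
I would augment the data structure of \cref{lem:comp} with one extra bit stored at each BST node (accessed through the existing label-to-node dictionary of \cref{lem:lf}) indicating whether the corresponding label lies in $\R(\tau,\LW)$. A query then reduces to a single dictionary lookup and a bit read, for total cost $\Oh(\log n)$; no cumulative bits or tree traversal are needed since only pointwise membership is required.

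The maintenance rests on the observation that whether a position $j$ of a labelled string $S\in\LW$ belongs to $\R(\tau,\val(S))$ depends solely on whether $j+3\tau-2\le|S|$ and on the value of the fixed-length window $\val(S)[j\dd j+3\tau-2]$. Consequently only $\Oh(\tau)=\Oh(1)$ positions near any junction or cut can change status. Concretely, on $\concatop(R,S)$ the positions $j\in(|R|-3\tau+2\dd|R|]$ of $R$ acquire for the first time the right context required by $\R$ and must be rechecked; positions strictly earlier in $R$, as well as all positions in $S$, keep their right context intact and hence their status. Symmetrically, on $\splitop(S,i)$ the positions $j\in(i-3\tau+2\dd i]$ of the left piece lose the right context needed by the definition of $\R$ and must be unmarked, whereas positions in the right piece $S(i\dd|S|]$ are unaffected. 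The $\makeop$ and $\deleteop$ updates concern singletons and touch $\R(\tau,\LW)$ only trivially (singletons carry no positions in $\R$).

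To actually recompute the status of one candidate position $j$ after a $\concatop$, I would reach into the underlying dynamic-strings structure $\W$ that \cref{lem:comp} already maintains: perform two $\W.\splitop$ operations to isolate the length-$(3\tau-1)$ window $\val(RS)[j\dd j+3\tau-2]$, invoke $\W.\perop$ to retrieve the shortest period (or certify that it exceeds $\tfrac13\tau$), and then restore the original structure with $\W.\concatop$. By the bounds of \cref{sec:ds}, a single $\perop$ query costs $\Oh(\log n\,(\log^*m)^2(\log^*m+\frac{\log^2\log m}{\log\log\log m}))$, whose dominant term is $\Oh(\log n\,\frac{(\log^*m)^2\log^2\log m}{\log\log\log m})$; since $\tau$ is fixed, only $\Oh(1)$ such queries are issued per $\concatop$, matching the bound in the statement. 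The unmarking work done during $\splitop$ needs no period computations and contributes only $\Oh(\tau\log n)=\Oh(\log n)$.

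The main obstacle in this plan is the case analysis justifying that only the $\Oh(\tau)$ positions immediately before a junction or cut can change status, and in particular the asymmetry between the two operands (for $\concatop$) and between the two pieces (for $\splitop$): positions whose length-$(3\tau-1)$ right context lies entirely inside one of the fragments on the side that preserves its context are genuinely unchanged, both in range constraint and in period. Once this combinatorial point is locked in, the remaining work is the routine bookkeeping to address the affected positions via $\labelop/\unlabel$ calls and to issue the corresponding $\W.\perop$ queries, so the correctness and the claimed complexity follow.
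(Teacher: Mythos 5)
There is a genuine gap: you treat $\tau$ as a constant, writing ``$\Oh(\tau)=\Oh(1)$ positions near any junction or cut can change status'' and later ``$\Oh(\tau\log n)=\Oh(\log n)$''. But $\tau$ is only \emph{fixed} in the sense of being a parameter of the data structure; in the application (\cref{prp:core} instantiated with $\ell=2^q$ for $q$ up to $\ceil{\log N}$) it can be as large as $\Theta(n)$, and the claimed update bound $\Oh(\log n\,\frac{\log^2\log m\cdot(\log^* m)^2}{\log\log\log m})$ contains no $\tau$ factor. Your plan stores one membership bit per position and, after each $\concatop$ or $\splitop$, touches all $\Theta(\tau)$ positions in the affected window --- issuing up to $\Theta(\tau)$ separate $\perop$ queries for a $\concatop$ and $\Theta(\tau)$ dictionary updates for a $\splitop$. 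For $\tau=\Theta(n)$ this is $\Omega(n)$ per update, far exceeding the stated bound. (Your locality observation itself --- that only positions within $3\tau-2$ of the junction or cut can change status --- is correct and is exactly what the paper relies on; the problem is purely the per-position representation.)

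The paper avoids this by never materializing $\R(\tau,\LW)$ position by position. It marks only the \emph{boundaries} of maximal runs of $\R$-positions (the sets $\R'(\tau,\LW)$ and $\R''(\tau,\LW)$), answers a membership query for $\ell$ by a predecessor search for the nearest run start followed by a successor search for the matching run end, and checks whether the queried position lies between them. Because two distinct $\tau$-runs overlap in at most $\frac23\tau$ positions, only $\Oh(1)$ maximal runs can begin or end inside the length-$\Oh(\tau)$ window around a junction, so each update inserts or deletes only $\Oh(1)$ marks. The recomputation near a junction is also done in one batch: a single call to \cref{prp:tauruns} on a window $U$ of length $\Oh(\tau)$ costs $\Oh(\frac{|U|\log|U|}{\tau}\cdot\ldots)=\Oh(\log n\cdot\ldots)$, rather than $\Oh(\tau)$ individual $\perop$ queries. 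To repair your argument you would need to switch to this (or an equivalent) sparse representation of $\R(\tau,\LW)$ whose per-update change is $\Oh(1)$ independently of $\tau$.
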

\begin{proof}
  On top of the data structure of \cref{lem:comp}, we store two
  instances of the component of \cref{lem:predsucc}, for
  $\R'(\tau,\LW)$ and $\R''(\tau,\LW)$, respectively.
  
  In the query algorithm, we first find
  $\ell':=\Predecessor_{\R'(\tau,\LW)}(\ell)$.  If $\ell'=\bot$, we
  report that $\ell \notin \R(\tau,\LW)$.  Otherwise, we find
  $\ell'':=\Successor_{\R''(\tau,\LW)}(\ell')$ (it is never $\bot$),
  and we use the $\unlabel(\ell)$, $\unlabel(\ell')$, and
  $\unlabel(\ell'')$ operations to determine the positions $j$, $j'$,
  and $j''$ corresponding to these labels (these are positions in the
  same string $S\in \LW$). We report that $\ell \in \R(\tau,\LW)$ if
  and only if $j' \le j \le j''$.

  After executing the $\concatop(R,S)$ operation of \cref{lem:comp},
  we perform the following steps.  First, we remove $\labelop(S[1])$
  from $\R'(\tau,\LW)$ and $\labelop(R[|R|])$ from $\R''(\tau,\LW)$
  (if present in the respective sets).  Next, we compute the
  $\tau$-runs in $\val(U)$, where $U=R(\max(0,|R|-3\tau+1)\dd
  |R|]\cdot S[1\dd \min(|S|,3\tau-1)]$. For this, we use construct
  $\val(U)$ using $\concatop$ and $\splitop$ operations on $\W$, and
  then run the algorithm of \cref{prp:tauruns} on $\val(U)$.  We
  iterate over fragments $(R\cdot S)[x\dd y]$ corresponding to
  $\tau$-runs of length at least $3\tau-1$ in $\val(U)$.  For each
  such fragment, we add $\labelop((R\cdot S)[x])$ to $\R'(\tau,\LW)$
  if $x > |R|-3\tau+2$, and we add $\labelop((R\cdot S)[y])$ to
  $\R''(\tau \LW)$ if $y \le |R|+3\tau-2$.  The number of newly marked
  labels is $\Oh(1)$, so the extra cost is $\Oh(\log n \cdot
  \frac{\log^2 \log m\cdot \log^* m}{\log \log \log m})$ time,
  dominated by the procedure of \cref{prp:tauruns}.

  Before executing the $\splitop(S,i)$ operation of \cref{lem:comp},
  we perform the following steps.  First, we run the query algorithm
  to check whether $\labelop(S[i])\in \R(\tau,\LW)$ and
  $\labelop(S[i+1])\in \R(\tau,\LW)$. If $\labelop(S[i])\in
  \R(\tau,\LW)$, we add $\labelop(S[i])$ to $\R''(\tau,\LW)$, and if
  $\labelop(S[i+1])\in \R(\tau,\LW)$, we add $\labelop(S[i+1])\in
  \R'(\tau,\LW)$.  Then, we remove from $\R'(\tau,\LW)$ all labels
  $\ell$ with $\unlabel(\ell)=(S,j)$ for $j\in (i-3\tau+2\dd i]$, and
  from $\R''(\tau,\LW)$ all labels $\ell$ with $\unlabel(\ell)=(S,j)$
  for $j\in (i-3\tau+2\dd i]$ for $j\in (i\dd i+3\tau-2]$.  The labels
  to be removed are listed with $\Oh(\log n)$-time delay using
  predecessor queries on $\R'(\tau,\LW)$ and successor queries on
  $\R''(\tau,\LW)$, respectively.  The number of newly unmarked labels
  in $\Oh(1)$, so the extra cost compared to \cref{lem:comp} is
  $\Oh(\log n)$, dominated by the original cost of $\W.\splitop(S,i)$.
\end{proof}
  
\begin{proposition}\label{prp:core}
  For any fixed $\ell\in \Zp$, a dynamic text $T\in \Sigma^n$ can be
  implemented so that initialization takes $\Oh(\frac{\log^2 \log
  m\cdot \log^* m}{\log \log \log m})$ time, updates take $\Oh(\log
  n \cdot \frac{\log^2 \log m\cdot (\log^* m)^2}{\log \log \log m})$
  time, and the queries of \cref{as:core} take $\Oh(\log n)$ time,
  where $m=|\Sigma|+t$ and $t$ is the total number of instructions
  that the data structure has performed so far.
\end{proposition}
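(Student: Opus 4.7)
The plan is to lift the construction of \cref{cor:access} by replacing the underlying labelled-family component of \cref{lem:lf} with the enhanced component of \cref{lem:runs} instantiated at $\tau:=\floor{\ell/3}$. Concretely, we maintain $\LW=\{S\}$ with $\val(S)=T$, an insertion counter $c$ (to assign fresh labels), and the label $\ell_{\mathrm{hd}}:=\labelop(S[1])$ of the first character of the unique string.

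Initialization calls $\LW.\makeop(\$,0)$, which under \cref{lem:comp} (and hence \cref{lem:runs}, since the latter only adds $\Predecessor/\Successor$ maintenance via \cref{lem:predsucc} triggered by $\concatop$ and $\splitop$ but not by $\makeop$) costs $\Oh(\frac{\log^2 \log m\cdot \log^* m}{\log\log\log m})$ time. Each text update ($\makeop$, $\deleteop$, $\swapop$) is decomposed into $\Oh(1)$ operations on $\LW$, exactly as in the proof of \cref{cor:access}. Under \cref{lem:runs}, $\concatop$ costs $\Oh(\log n \cdot \frac{\log^2\log m\cdot(\log^* m)^2}{\log\log\log m})$, $\splitop$ costs $\Oh(\log n \cdot \frac{\log^2\log m\cdot \log^* m}{\log\log\log m})$, $\makeop$ costs $\Oh(\frac{\log^2\log m\cdot \log^* m}{\log\log\log m})$, $\deleteop$ costs $\Oh(\log n)$, and $\labelop$ costs $\Oh(\log n)$. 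The dominant term is the $\concatop$ cost, yielding the claimed update bound.

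To answer a query ``is $i\in \R(\tau,T)$?'' from \cref{as:core}, we first retrieve the label $\ell_i:=\LW.\labelop(\ell_{\mathrm{hd}},i)$ in $\Oh(\log n)$ time, and then invoke the $\R(\tau,\LW)$ membership test of \cref{lem:runs} on $\ell_i$, also in $\Oh(\log n)$ time. By the definition of $\R(\tau,\LW)$ and the invariant $\val(S)=T$, $\ell_i\in \R(\tau,\LW)$ if and only if $i\in \R(\tau,T)$, giving the $\Oh(\log n)$ query bound.

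The only subtle point is that swapping or inserting at the very front changes $\ell_{\mathrm{hd}}$; this is handled exactly as in \cref{cor:access} by updating the stored reference after each such operation, which adds only $\Oh(1)$ overhead per update. I do not anticipate any real obstacle: all the heavy lifting, including the $\tau$-run maintenance under $\concatop$/$\splitop$ via the $\Predecessor_{\R'(\tau,\LW)}$ and $\Successor_{\R''(\tau,\LW)}$ structures, is already packaged inside \cref{lem:runs}, so the proof is a matter of composing the existing black boxes and checking that the instructions counter $m=|\Sigma|+t$ used in the stated bounds dominates the grammar size $g$ used internally by \cref{lem:comp}.
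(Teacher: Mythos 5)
Your proposal is correct and matches the paper's own proof essentially verbatim: both replace the \cref{lem:lf} component in the \cref{cor:access} construction with the \cref{lem:runs} extension at $\tau=\lfloor\ell/3\rfloor$, answer queries by converting position $i$ to $\labelop(S[i])$ and invoking the $\R(\tau,\LW)$ membership test, and observe that the update cost is dominated by $\concatop$. The extra bookkeeping you mention (the head label and the instruction counter $m$) is already implicit in the \cref{cor:access} template, so there is nothing further to check.
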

\begin{proof}
  We proceed as in the proof of \cref{cor:access}, but instead of
  implementing family $\LW$ using \cref{lem:lf}, we apply the
  extension of \cref{lem:runs} for $\tau=\floor{\frac{\ell}{3}}$. This
  increases the cost of $\initop$ to $\Oh(\frac{\log^2 \log m\cdot
  \log^* m}{\log \log \log m})$ and the cost of updates to $\Oh(\log
  n \cdot \frac{\log^2 \log m\cdot (\log^* m)^2}{\log \log \log m})$
  (dominated by $\LW.\concatop$).

  As for a query, note that $\LW = \{S\}$, where $S$ is a labelled
  string with $\val(S)=T$.  Given a position $i\in [1\dd |T|]$, we
  compute $\labelop(S[i])$ using a call $\LW.\labelop(S,i)$.  Next, we
  apply the query algorithm of \cref{lem:runs} to check whether
  $\labelop(S[i])\in \R(\tau,\LW)$, which is equivalent to testing
  whether $i\in \R(\tau,T)$.  The query time is therefore $\Oh(\log
  n)$.
\end{proof}

\subsection{Implementing Assumption~\ref{as:nonperiodic}}

Based on \cref{cons:sss}, we define $\S_\sig(\tau,\LW) = \bigcup_{S\in
\LW}\{\labelop(S[i]) : i\in \S_\sig(\tau,\val(S))\}$.

\begin{lemma}\label{lem:succ}
  For any fixed $\tau\in \Zz$, the data structure of \cref{lem:comp}
  can be augmented so that, given $S\in \LW$ and $j\in [1\dd |S|]$,
  the value $\Successor_{\S_\sig(\tau,\val(S))}(j)$ can be computed in
  $\Oh(\log n)$ time, where $\sig$ is an (implicit) signature
  function. This comes at the price of increasing the cost of
  $\concatop$ to $\Oh(\log n\cdot \frac{\log^2 \log m\cdot (\log^*
  m)^2}{\log \log \log m})$.
\end{lemma}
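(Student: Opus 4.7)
The plan is to augment the data structure of \cref{lem:comp} with the marking component of \cref{lem:predsucc}, maintaining an invariant that a label $\ell\in L(\LW)$ is marked exactly when, writing $(S,i):=\unlabel(\ell)$, one has $i\in \S_\sig(\tau,\val(S))$. A successor query $\Successor_{\S_\sig(\tau,\val(S))}(j)$ then reduces to retrieving $\ell:=\LW.\labelop(S,j)$, invoking $\Successor_{\Lb}(\ell)$, and converting the returned label back to a position through $\unlabel$, for a total cost of $\Oh(\log n)$.

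The correctness of the marks under updates rests on \cref{cor:sssub}. Applied to $R$ and $S$ as substrings of $RS$, that corollary yields three useful consequences: (a) the mark status at positions $[1\dd |R|-2\tau+1]$ of $R$ agrees with the correct mark status at the same positions of $RS$; (b) the mark status at positions $[1\dd |S|-2\tau+1]$ of $S$ agrees with the correct mark status at positions $[|R|+1\dd |R|+|S|-2\tau+1]$ of $RS$; (c) the only positions whose mark status may differ between ``before'' and ``after'' the concatenation are those in the boundary window $(|R|-2\tau+1\dd |R|]$ of $RS$, which were beyond the valid range of $R$ alone and hence unmarked. Symmetrically, $\splitop(S,i)$ only requires unmarking positions in $(i-2\tau+1\dd i]$ of $S$ (which become beyond the valid range $|S[1\dd i]|-2\tau+1$ of the new prefix), while marks at all other positions transfer to the correct string with no modification.

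For $\concatop(R,S)$, I first invoke the underlying $\LW.\concatop$ of \cref{lem:comp}, and then extract the boundary substring $U:=\val(RS)(\max(0,|R|-3\tau)\dd \min(|RS|,|R|+3\tau)]$ via $\Oh(1)$ auxiliary $\splitop/\concatop$ calls on $\W$. Running \cref{prp:sss} on $U$ computes $\S_\sig(\tau,U)$; by \cref{cor:sssub}, its elements, shifted by $\max(0,|R|-3\tau)$, list exactly the positions of $\S_\sig(\tau,\val(RS))$ within $(\max(0,|R|-3\tau)\dd |R|+\tau+1]$, which covers the boundary window $(|R|-2\tau+1\dd |R|]$. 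By \cref{lem:sss}, only $\Oh(\log^*(\tau\sigma))$ such positions lie in the boundary window, and I mark the corresponding labels in $\Lb$. For $\splitop(S,i)$, I enumerate the $\Oh(\log^*(\tau\sigma))$ marks within $(i-2\tau+1\dd i]$ using repeated $\Successor_{\Lb}$ calls starting from position $i-2\tau+1$ of $S$, and unmark each; no invocation of \cref{prp:sss} is needed.

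The cost of $\concatop$ is dominated by the $\Oh(\log\tau\cdot (\log^*m)^2(\log^*m+\frac{\log^2\log m}{\log\log\log m}))$ call to \cref{prp:sss} on a length-$\Oh(\tau)$ string, together with the baseline $\Oh(\log n\cdot\frac{\log^2\log m\cdot \log^*m}{\log\log\log m})$ from \cref{lem:comp} and the $\Oh(\log^*m\cdot \log n)$ spent on mark updates; all of this fits within $\Oh(\log n\cdot\frac{\log^2\log m\cdot (\log^*m)^2}{\log\log\log m})$. The cost of $\splitop$ and $\makeop$ remains within the bounds of \cref{lem:comp}, absorbing the $\Oh(\log^*m\cdot \log n)$ unmarking overhead. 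The main obstacle is verifying that updates perturb the mark set only in an $\Oh(\tau)$ neighborhood of the modification site; this is precisely what \cref{cor:sssub} guarantees, and once combined with the local sparsity bound of \cref{lem:sss}, the rest reduces to a bookkeeping-level application of \cref{prp:sss} on the boundary window.
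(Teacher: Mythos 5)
Your proposal is correct and follows essentially the same route as the paper: maintain the marked-label component of \cref{lem:predsucc} for $\S_\sig(\tau,\LW)$, use \cref{cor:sssub} to argue that updates perturb marks only in a $2\tau$-window around the edit site, recompute the affected window via \cref{prp:sss} on a length-$\Oh(\tau)$ boundary substring for $\concatop$, and unmark $\Oh(\log^*(\tau\sigma))$ labels for $\splitop$. The only detail worth adding is the convention for a query with no marked successor, where one should return $|S|-2\tau+2$ as in the definition of $\Successor_{\S}$.
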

\begin{proof}
  On top of the data structure of \cref{lem:comp}, we store the
  component of \cref{lem:predsucc} with $\Lb:=\S_\sig(\tau,\LW)$.  As
  for the query algorithm, we compute $\ell = \labelop(S[j])$ and find
  $\ell' = \Successor_{\Lb}(\ell)$. If $\ell'=\bot$, then
  $\Successor_{\S_\sig(\tau,\val(S))}(j)=|S|-2\tau+2$ (see
  \cref{sec:sa-nonperiodic}). Otherwise,
  $\Successor_{\S_\sig(\tau,\val(S))}(j)=j'$, where
  $(S,j')=\unlabel(\ell')$.  Thus, the query time is $\Oh(\log n)$.

  While executing the $\splitop(S,i)$ operation, we unmark all nodes
  corresponding to $S[i-2\tau+2\dd i]$.  For this, we repeatedly
  compute $\Predecessor_{\S_\sig(\tau,\LW)}(\labelop(S[i]))$ and, if
  the predecessor exists, use the $\unlabel$ operation to retrieve its
  position $j$.  If $j\ge i-2\tau+2$, we unmark the corresponding
  label. Otherwise, we terminate the procedure.  By \cref{cor:sssub},
  this correctly maintains the set of marked nodes.  The running time
  of this step is $\Oh(\log n)$ per unmarked node and, by
  \cref{lem:sss}, $\Oh(\log n \log^*(\tau\sigma))$ in total. This cost
  is dominated by the cost $\Oh(\log n \frac{\log^2 \log m\cdot \log^*
  m}{\log \log \log m})$ of $\splitop(\val(S),i)$.

  While executing the $\concatop(R,S)$ operation, by \cref{cor:sssub},
  we need to mark $\S_{\sig}(U,\tau)$ in the tree representing $R\cdot
  S$, where $U=R(\max(0,|R|-2\tau+1)\dd |R|]\cdot S[1\dd
  \min(2\tau,|S|)]$.  For this, we construct $\val(U)$ using
  $\splitop$ and $\concatop$ operations on $\W$, and then apply
  \cref{prp:sss} to derive $\S_{\sig}(\val(U),\tau)$.  For each of the
  obtained positions, we locate the corresponding label using the
  $\labelop(R\cdot S, j)$.  The cost $\Oh(\log n\frac{\log^2 \log
  m\cdot (\log^* m)^2}{\log \log \log m})$ is dominated by the
  application of \cref{prp:sss}.
\end{proof}

For a uniquely labelled family $\LW$, an integer $\tau\in \Zp$, and a
signature function $\sig$, define
\[
  \Pts_\sig(\tau,\LW)=\bigcup_{S\in \LW}\{(\revstr{\val(S)^{\infty}[i
      - 7\tau \dd i)}, \val(S)^{\infty}[i \dd i+7\tau),\labelop(S[i]))
      : S\in \S_\sig(\tau,\val(S))\}
\]
(cf.\ \cref{def:p-context}).  We interpret $\Pts_\sig(\tau,\LW)\sub
\X\times \Y \times \Z$ as a family of labelled points with $\X =
\CFp(\revstr{\LW})$ and $\Y=\CFp(\LW)$, both ordered
lexicographically.

\begin{lemma}\label{lem:range}
  The data structure of \cref{lem:succ} can be augmented so that range
  queries (\cref{sec:range-queries}) on $\Pts_\sig(\tau,\LW)$ can be
  supported in time $\Oh(\log^3 n + \log^2 n \cdot \frac{\log^2 \log
  m\cdot \log^* m}{\log \log \log m})$. This comes at the price of
  increasing the cost of $\concatop$ and $\splitop$ to $\Oh(\log^2 n
  \cdot \frac{\log^2 \log m\cdot (\log^* m)^2}{\log \log \log m})$
  time.
\end{lemma}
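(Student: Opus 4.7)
The plan is to plug the set $\Pts_\sig(\tau,\LW)$ into the dynamic range structure of \cref{thm:range}, using \cref{lem:comp} as the comparison oracle for the axes $\X=\CFp(\revstr{\LW})$ and $\Y=\CFp(\LW)$. Since \cref{lem:comp} answers each lexicographic comparison in time $t=\Oh(\log n \cdot \frac{\log^2\log m\cdot \log^* m}{\log\log\log m})$, \cref{thm:range} immediately yields query time $\Oh((t+\log^2 n)\log n)=\Oh(\log^3 n + \log^2 n \cdot \frac{\log^2\log m\cdot \log^* m}{\log\log\log m})$, which matches the statement. The same oracle makes insertions/deletions into the range structure cost $\Oh((t+\log n)\log n) = \Oh(\log^2 n \cdot \frac{\log^2\log m\cdot \log^* m}{\log\log\log m})$ per point.

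The bulk of the work is maintaining $\Pts_\sig(\tau,\LW)$ through $\concatop$ and $\splitop$, reusing the mechanism of \cref{lem:succ}. For $\splitop(S,i)$, only two kinds of changes to $\Pts_\sig(\tau,\LW)$ can occur: (i) a label $\ell=\labelop(S[j])$ may enter or leave $\S_\sig(\tau,\LW)$, and (ii) even if $\ell$ remains in $\S_\sig(\tau,\LW)$, its point's coordinates may change because one of the length-$7\tau$ contexts used to compute $\revstr{\val(\cdot)^\infty[j-7\tau\dd j)}$ or $\val(\cdot)^\infty[j\dd j+7\tau)$ now lies in a different string of $\LW$ and therefore uses a different cyclic extension. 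Both phenomena are confined to positions $j$ with $|j-i|\le 7\tau$: membership changes are handled exactly as in \cref{lem:succ}, while the contexts of positions outside the window are unaffected because they stay $7\tau$ away from the boundary. A symmetric analysis applies to $\concatop(R,S)$, where the affected window spans $(|R|-7\tau\dd |R|+7\tau]$ in the merged string.

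To implement the update, I would first enumerate all labels of $\S_\sig(\tau,\LW)$ inside the affected window by repeated $\Predecessor$/$\Successor$ queries (as in \cref{lem:succ}), delete the corresponding points from the range structure of \cref{thm:range}, then perform the split/concat on the underlying $\LW$ and $\W$ components so that \cref{lem:succ} refreshes the marked set, and finally reinsert a point for every label that ends up in $\S_\sig(\tau,\LW)$ within the window, recomputing its left and right length-$7\tau$ contexts through $\splitop/\concatop$ on $\W$. By \cref{lem:sss}, the affected window of length $\Oh(\tau)$ contains only $\Oh(\log^*(\tau\sigma))$ synchronizing positions, so we pay for $\Oh(\log^* m)$ updates to the range structure at $\Oh(\log^2 n\cdot \frac{\log^2\log m\cdot \log^* m}{\log\log\log m})$ each, for the advertised total $\Oh(\log^2 n\cdot \frac{\log^2\log m\cdot (\log^* m)^2}{\log\log\log m})$; this dominates the $\Oh(\log n\cdot \frac{\log^2\log m\cdot (\log^* m)^2}{\log\log\log m})$ cost inherited from \cref{lem:succ} and the $\Oh(\log^* m)$ context extractions from $\W$.

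The main obstacle is verifying that \emph{every} changed point lies within the $\pm 7\tau$ window and that the enumeration of these points can be driven by $\Successor_{\S_\sig(\tau,\LW)}$ without missing any: the subtlety is that the order-maintenance component inside \cref{thm:range} stores the \emph{old} coordinates of the points, so we must carry out the removals using the stale comparison results and the reinsertions using fresh contexts; doing these two phases around (rather than interleaved with) the call to $\LW.\splitop$/$\LW.\concatop$ is what makes the bookkeeping consistent with the comparison oracle of \cref{lem:comp}.
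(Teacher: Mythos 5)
Your proposal matches the paper's proof: maintain $\Pts_\sig(\tau,\LW)$ in the structure of \cref{thm:range} with \cref{lem:comp} as the comparison oracle (yielding exactly the stated query time), and on $\splitop$/$\concatop$ regenerate the $\Oh(\log^* m)$ affected points---those whose synchronizing-set membership or length-$7\tau$ context changed---via repeated $\Successor_{\S_\sig(\tau,\LW)}$ calls. One small imprecision: since the coordinates are taken in the \emph{cyclic} extensions $\val(\cdot)^\infty$, the points with changed contexts are not only those within $7\tau$ of the split/concat boundary but also those within $7\tau$ of the ends of the strings involved (where the wrap-around changes); this adds only $\Oh(1)$ further windows of length $\Oh(\tau)$, so the $\Oh(\log^* m)$ count from \cref{lem:sss} and the claimed update time are unaffected.
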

\begin{proof}
  Compared to the data structure of \cref{lem:succ}, we also maintain
  $\Pts_\sig(\tau,\LW)$ in the data structure of \cref{thm:range}.  By
  \cref{lem:comp}, each comparison on $\X$ and $\Y$ costs $\Oh(\log
  n\cdot \frac{\log^2 \log m\cdot \log^* m}{\log \log \log m})$ time.
  As a result, the cost of a range query is $\Oh(\log^3 n + \log^2
  n\cdot \frac{\log^2 \log m\cdot \log^* m}{\log \log \log m})$.  As
  for updates, $\makeop$ and $\deleteop$ do not incur updates to
  $\Pts$ (strings of length $1$ have empty synchronizing sets). On the
  other hand, $\splitop$ and $\concatop$ may incur $\Oh(\log^*m)$
  insertions and deletions, both corresponding to changes in the
  synchronizing set and to synchronizing positions whose context has
  changed; the latter can be generated by repeated calls to the
  $\Successor_{\S_\sig(\tau,\LW)}$ operation.  The running time is
  therefore increased to $\Oh(\log^2 n \cdot \frac{\log^2 \log m\cdot
  (\log^* m)^2}{\log \log \log m})$.
\end{proof}

\begin{proposition}\label{prp:nonperiodic}
  For any fixed $\ell\in \Zp$, a dynamic text $T\in \Sigma^n$ can be
  implemented so that initialization takes $\Oh(\frac{\log^2 \log
  m\cdot \log^* m}{\log \log \log m})$ time, updates take
  $\Oh(\log^2 n\cdot \frac{\log^2 \log m\cdot (\log^* m)^2}{\log \log
  \log m})$ time, and the queries of \cref{as:nonperiodic} take
  $\Oh(\log^3 n + \log^2 n\cdot \frac{\log^2 \log m\cdot \log^*
  m}{\log \log \log m})$ time, where $m=|\Sigma|+t$ and $t$ is the
  total number of instructions that the data structure has performed
  so far.
\end{proposition}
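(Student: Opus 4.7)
The plan is to adapt the framework from \cref{cor:access}, but substitute the vanilla labelled-family data structure of \cref{lem:lf} with the augmented version from \cref{lem:range}, instantiated with $\tau = \lfloor \ell/3 \rfloor$ (a constant, since $\ell$ is fixed). Throughout the lifetime of the structure, the invariant $\LW = \{S\}$ with $\val(S) = T$ is preserved, so $\S_\sig(\tau,\val(S))$ is a $\tau$-synchronizing set of $T$ by \cref{lem:sss} and its size density matches the requirements implicit in \cref{as:nonperiodic}.

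First I would define $\initop$, $\makeop$, $\deleteop$, and $\swapop$ exactly as in the proof of \cref{cor:access}: each text-level operation is expressed as $\Oh(1)$ calls to the underlying $\LW.\makeop$, $\LW.\deleteop$, $\LW.\splitop$, and $\LW.\concatop$ on the single string in $\LW$. Plugging in the costs from \cref{lem:range}, initialization performs one $\makeop$ at cost $\Oh(\frac{\log^2 \log m \cdot \log^* m}{\log \log \log m})$ (the dominant term in $\LW.\makeop$ from \cref{lem:comp} propagates through \cref{lem:succ} and \cref{lem:range}), and each update performs $\Oh(1)$ calls to $\LW.\splitop$ and $\LW.\concatop$, each costing $\Oh(\log^2 n \cdot \frac{\log^2 \log m \cdot (\log^* m)^2}{\log \log \log m})$, which matches the claimed update bound.

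For the queries of \cref{as:nonperiodic}, I would translate them directly. A $\Successor_{\S}(i)$ query is reduced to $\LW.\labelop(S,i)$ followed by the successor procedure from \cref{lem:succ}, costing $\Oh(\log n)$. A string–string range query on $\Points_{7\tau}(T,\S)$ with parameters $(i,q_l,q_r)$ is reduced to a range query on $\Pts_\sig(\tau,\LW)$ from \cref{lem:range}: the strings $X_l,X_u \in \CFp(\revstr{\LW})$ and $Y_l,Y_u \in \CFp(\LW)$ required by \cref{prob:str-str} are represented implicitly via their endpoints in the single labelled string (with $c^\infty$ suffixes encoded as a special tag handled by $\compareop$). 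Selection reduces to a range selection on the same set. Plugging in the cost from \cref{lem:range}, this yields the claimed query time of $\Oh(\log^3 n + \log^2 n \cdot \frac{\log^2 \log m \cdot \log^* m}{\log \log \log m})$.

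The main obstacle is verifying that the identification of the query coordinates $X_l, X_u, Y_l, Y_u$ (which in \cref{prob:str-str} are described as fragments of $T^\infty$ or $\revstr{T^\infty}$ possibly augmented with $c^\infty$ tails) genuinely falls inside $\CFp(\LW)$ and $\CFp(\revstr{\LW})$, and that the $\compareop$ operation of \cref{lem:comp} handles the $c^\infty$ extensions consistently with the ordering assumed in \cref{sec:range-queries}. This is essentially bookkeeping: the extensions $X_u = X_l c^\infty$ and $Y_u = Y_l c^\infty$ are already in the image of $\CFp(\cdot)$ by definition, and the implementation in \cref{sec:lex} explicitly supports such comparisons, so the translation is mechanical once the right endpoints are computed via a constant number of $\labelop$ and $\unlabel$ calls. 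With this in place, all three complexity bounds follow by adding the costs contributed by the underlying components.
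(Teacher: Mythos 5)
Your proposal matches the paper's proof: both proceed as in \cref{cor:access} but replace the plain labelled family of \cref{lem:lf} with the extensions of \cref{lem:succ,lem:range} for $\tau=\lfloor\ell/3\rfloor$, forward $\Successor_{\S}$ queries to \cref{lem:succ}, and answer the string-string range queries by converting index arguments into fragments of $\CFp(\LW)$ and $\CFp(\revstr{\LW})$ for the equivalent queries on $\Pts_\sig(\tau,\LW)$, with the same cost accounting. The $c^\infty$ bookkeeping you flag is indeed already covered by the definition of $\CFp(\cdot)$ and the $\compareop$ implementation, so no gap remains.
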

\begin{proof}
  We proceed as in the proof of \cref{cor:access}, but instead of
  implementing family $\LW$ using \cref{lem:lf}, we apply the
  extensions of \cref{lem:succ,lem:range} for
  $\tau=\floor{\frac{\ell}{3}}$. This increases the cost of $\initop$
  to $\Oh(\frac{\log^2 \log m\cdot \log^* m}{\log \log \log m})$ and
  the cost of updates to $\Oh(\log^2 n \cdot \frac{\log^2 \log m\cdot
  (\log^* m)^2}{\log \log \log m})$ (dominated by $\LW.\concatop$
  and $\LW.\splitop$).

  We use $\S:=\S_\sig(\tau,T)$ as the synchronizing set of $T$.  As
  for a $\Successor_{\S}(i)$ query for $i\in [1\dd n-3\tau+1]$, we
  simply forward the query to the component of \cref{lem:succ}.  The
  cost of this query is $\Oh(\log n)$.

  As for the string-string range queries on $\Points_{7\tau}(T,\S)$,
  we use the equivalent range queries on $\Pts_\sig(\tau,\LW)$
  instead. The only work needed is to convert the query arguments from
  indices to fragments in $\CFp(T)$ and $\CFp(\revstr{T})$ (as
  specified in \cref{prob:str-str}).  Moreover, the label returned by
  the range selection on $\Pts_\sig(\tau,\LW)$ is converted to a
  position in $T$ using the $\LW.\unlabel(\cdot)$ operation.  The
  query time is $\Oh(\log^3 n + \log^2 n\cdot \frac{\log^2 \log m\cdot
  \log^* m}{\log \log \log m})$, dominated by the cost of range
  queries of \cref{lem:range}.
\end{proof}

\subsection{Implementing Assumption~\ref{as:periodic}}
\begin{lemma}\label{lem:runs2}
  For any fixed $\tau\in \Zp$, the data structure of \cref{lem:runs}
  can be augmented so that, given $S\in \LW$ and $j\in
  \R(\tau,\val(S))$, in $\Oh(\log n)$ time we can compute
  $\rendgen{\tau}{\val(S)}{j}$,
  $|\Lrootgen{f_\sig}{\tau}{\val(S)}{j}|$, and
  $\Lheadgen{f_\sig}{\tau}{\val(S)}{j}$, where $f_\sig$ is the
  necklace-consistent function defined in \cref{cons:nc}.  This comes
  at the price of increasing the cost of $\splitop$ and $\concatop$ to
  $\Oh(\log n\cdot \frac{\log^2 \log m\cdot (\log^* m)^2}{\log \log
  \log m})$.
\end{lemma}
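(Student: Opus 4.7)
The plan is to augment the annotation scheme of \cref{lem:runs}: for every marked label $\ell\in \R'(\tau,\LW)$ representing the start $p$ of a run in some $S\in \LW$, I will store the triple $(E_\ell, P_\ell, H_\ell)$, where $E_\ell=\rendgen{\tau}{\val(S)}{p}$, $P_\ell=|\Lrootgen{f_\sig}{\tau}{\val(S)}{p}|$, and $H_\ell=\Lheadgen{f_\sig}{\tau}{\val(S)}{p}$. By \cref{lm:R-block}, the first two quantities are shared by every position of the run; for the third one, unrolling the definitions in \cref{sec:sa-periodic-prelim} shows that whenever $j$ belongs to the run started by $p$, one has $\Lheadgen{f_\sig}{\tau}{\val(S)}{j}=(H_\ell-(j-p))\bmod P_\ell$ (a short calculation: if $H$ occurs at offset $H_\ell$ from $p$ then period-$P_\ell$-ness of the run forces the first occurrence of $H$ at or after $j$ to sit at offset $(H_\ell-(j-p))\bmod P_\ell$).

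At query time, given $S$ and $j\in \R(\tau,\val(S))$, I compute $\ell=\LW.\labelop(S,j)$, then $\ell_p=\Predecessor_{\R'(\tau,\LW)}(\ell)$, and recover $p$ via $\LW.\unlabel(\ell_p)$. The stored annotations immediately give $\rendgen{\tau}{\val(S)}{j}=E_{\ell_p}$ and $|\Lrootgen{f_\sig}{\tau}{\val(S)}{j}|=P_{\ell_p}$, while the formula above yields $\Lheadgen{f_\sig}{\tau}{\val(S)}{j}$; every lookup costs $\Oh(\log n)$, as required.

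The real work lies in maintaining these annotations under $\concatop$ and $\splitop$. I will piggyback on the procedures of \cref{lem:runs}, which already identify (via a $\cref{prp:tauruns}$ call on an $\Oh(\tau)$-wide window around the boundary) the $\Oh(1)$ runs that appear, disappear, or change at the update site. For each newly marked $\ell\in \R'(\tau,\LW)$, the run-end $q$ and the period $P_\ell$ are by-products of that call, so $E_\ell=q+3\tau-2$ is free. To obtain $H_\ell$, I extract the root $X=\val(S)[p\dd p+P_\ell)$ using a constant number of $\W.\splitop$ and $\W.\concatop$ operations on the underlying dynamic-strings structure, invoke $\canshop(X)$ from \cref{sec:ccs}, and set $H_\ell=(P_\ell-\canshop(X))\bmod P_\ell$. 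The reason this works is that $\canshop(X)=s$ means $f_\sig(X)=\rot^s(X)$, and because the run has period $P_\ell$ throughout, the smallest offset $t\in [0\dd P_\ell)$ with $\val(S)[p+t\dd p+t+P_\ell)=\rot^s(X)$ is exactly $(P_\ell-s)\bmod P_\ell$.

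Each update thus issues $\Oh(1)$ calls to $\perop$ and $\canshop$ on fragments of length $\Oh(\tau)$, together with $\Oh(1)$ extra calls to $\W.\splitop/\W.\concatop$; by the bounds derived in \cref{sec:ds} these cost $\Oh(\log n\cdot \tfrac{\log^2\log m\cdot (\log^* m)^2}{\log\log\log m})$ time, which dominates the work inherited from \cref{lem:runs} and the auxiliary dictionary updates needed to attach the triples to their labels. The main obstacle I anticipate is justifying the identity $H_\ell=(P_\ell-\canshop(X))\bmod P_\ell$, i.e.\ matching the abstract necklace-consistent function $f_\sig$ of \cref{def:nc} against the operational content of \cref{cons:nc}; once this identity is in place, the remainder is a routine extension of the tools already assembled in \cref{lem:runs,prp:tauruns}.
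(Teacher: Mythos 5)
Your overall scheme is the paper's: annotate each label in $\R'(\tau,\LW)$ with the run's root length and head offset, answer a query by a $\Predecessor_{\R'(\tau,\LW)}$ lookup followed by the arithmetic conversion $(h+j'-j)\bmod p$, and recompute the annotations for the $\Oh(1)$ affected runs at each update via $\perop$ and $\canshop$ on $\Oh(\tau)$-length fragments. Two remarks, the first of which is a genuine bug as written.

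The bug: you store $E_\ell=\rendgen{\tau}{\val(S)}{p}$ as an \emph{absolute position} in $\val(S)$ and return it verbatim at query time. Positions are volatile in this framework: a $\splitop$ or $\concatop$ whose boundary lies far from the run leaves the run's labels marked and its annotations untouched, yet shifts the run's position inside its new host string, so the stored $E_\ell$ goes stale and the query returns a wrong value of $\rendgen{\tau}{\val(S)}{j}$. The fix is cheap: store the position-invariant quantity $\rendgen{\tau}{\val(S)}{p}-p$, or do what the paper does and not store $\rend{}$ at all, recovering it at query time from the mark at the run's last in-$\R$ position via $\Successor_{\R''(\tau,\LW)}$ followed by $\unlabel$. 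The latter also repairs your claim that the run end is a ``free by-product'' of the \cref{prp:tauruns} call on the $\Oh(\tau)$-window around the update site: a run extending beyond that window does not have its true end determined by that call, only its boundary-adjacent marks, so some $\Successor_{\R''(\tau,\LW)}$ query is needed anyway.

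A harmless divergence: you set $H_\ell=(P_\ell-\canshop(X))\bmod P_\ell$ whereas the paper uses $\canshop(X)\bmod P_\ell$. Under the paper's convention $\rot(S)=S[|S|]\cdot S[1\dd |S|-1]$ and $f_\sig(T)=\rot^{\canshop(T)}(T)$, your derivation (the first occurrence of $\rot^{s}(T)$ inside the period-$p$ run sits at offset $(p-s)\bmod p$ from the run start) is the literal one; the two formulas correspond to the two mutually inverse rotation conventions, and since either choice defines a necklace-consistent function, \cref{as:periodic} is satisfied either way.
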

\begin{proof}
  On top of the data structure of \cref{lem:runs}, for each $\ell \in
  \R'(\tau,\LW)$, we store $|\Lrootgen{f_\sig}{\tau}{\val(S)}{j}|$ and
  $\Lheadgen{f_\sig}{\tau}{\val(S)}{j}$, where $(S,j)=\unlabel(\ell)$.
  
  In the query algorithm, we first compute $\ell=\labelop(S[j])$.
  Then, we find $\ell'=\Predecessor_{\R'(\tau,\LW)}(\ell)$ and
  $\ell''=\Successor_{\R''(\tau,\LW)}(\ell')$, and we use the
  $\unlabel(\ell')$ and $\unlabel(\ell'')$ operations to retrieve the
  corresponding positions $j'$ and $j''$ in $S$.  We note that
  $\rendgen{\tau}{\val(S)}{j}=\rendgen{\tau}{\val(S)}{j'}=j''+3\tau-2$.
  Additionally, we retrieve the stored values $p:=
  |\Lrootgen{f_\sig}{\tau}{\val(S)}{j'}|$ and
  $h:=\Lheadgen{f_\sig}{\tau}{\val(S)}{j'}$, and we return
  $|\Lrootgen{f_\sig}{\tau}{\val(S)}{j}|=p$ and
  $\Lheadgen{f_\sig}{\tau}{\val(S)}{j'}=(h+j'-j)\bmod p$.  The
  correctness of this procedure follows from \cref{lm:R-block}, and
  the running time is clearly $\Oh(\log n)$.

  In order to maintain the extra information stored for $\ell\in
  \R'(\tau,\LW)$, we execute the following procedure whenever a label
  $\ell$ is added to $\R'(\tau,\LW)$: first, we determine
  $(S,j)=\unlabel(\ell)$.  Next, we extract $\val(S)[j\dd j+3\tau-1)$
  using $\W.\splitop$ operations, compute $p:= \perop(\val(S)[j\dd
  j+3\tau-1))$, and note that
  $|\Lrootgen{f_\sig}{\tau}{\val(S)}{j}|=p$.  Then, we extract
  $\val(S)[j\dd j+p)$ using another $\W.\splitop$ operation, compute
  $h=\W.\canshop(\val(S)[j\dd j+p))$, and note that
  $\Lheadgen{f_\sig}{\tau}{\val(S)}{j}=h\bmod p$.  Since
  $\splitop(S,i)$ and $\concatop(R,S)$ involve $\Oh(1)$ insertions to
  $\R'(\tau,\LW)$, the extra cost of these operations is $\Oh(\log
  n\cdot \frac{\log^2 \log m\cdot (\log^* m)^2}{\log \log \log m})$.
  The correctness follows from the fact that $\ell=\labelop(S[j])$ is
  deleted from $\R'(\tau,\LW)$ whenever $\splitop(S,i)$ with $i\in
  [j\dd j+3\tau-2)$ is performed.
\end{proof}

\newcommand{\I}{\mathcal{I}}

For a uniquely labelled family $\LW$, an integer $\tau\in\Zp$, a
string $H$, and a necklace-consistent function $f$, we define
(cf.\ \cref{def:p-right-context,def:intervals}):
\begin{align*}
  \Pts^-_{f,H}(\tau,\LW)&=\bigcup_{S\in \LW} \{(d, \val(S)^{\infty}[j \dd j+7\tau), \labelop(S[j])) : (j,d)\in E^{-}_{f,H}(\tau,\val(S))\},\\
  \Pts^+_{f,H}(\tau,\LW)&=\bigcup_{S\in \LW} \{(d, \val(S)^{\infty}[j \dd j+7\tau), \labelop(S[j])) : (j,d)\in E^{+}_{f,H}(\tau,\val(S))\},\\
  \I^-_{f,H}(\tau,\LW) &= \bigcup_{S\in \LW} \{(a,b,\labelop(S[j])) : (a,b,j)\in \I^-_{f,H}(\tau,\val(S))\},\\
  \I^+_{f,H}(\tau,\LW) &= \bigcup_{S\in \LW} \{(a,b,\labelop(S[j])) : (a,b,j)\in \I^+_{f,H}(\tau,\val(S))\}.
\end{align*}
We interpret $\Pts^\pm_{f,H}(\tau,\LW)$ as a family of labelled points
with $\X = \Z$ and $\Y=\CFp(\LW)$.

\begin{lemma}\label{lem:range2}
  The data structure of \cref{lem:runs2} can be augmented so that, for
  each string $H$, given as a fragment of $\val(S)$ for some $S\in
  \LW$, range queries (\cref{sec:range-queries}) on
  $\Pts^\pm_{f_\sig,H}(\tau,\LW)$ and modular constraint queries
  (\cref{sec:mod-queries}) on $\I^\pm_{f_\sig,H}(\tau,\LW)$ can be
  supported in time $\Oh(\log^3 n + \log^2 n\cdot \frac{\log^2 \log
  m\cdot \log^* m}{\log \log \log m})$.  This comes at the price of
  increasing the cost of $\concatop$ and $\splitop$ to $\Oh(\log^2 n
  \cdot \frac{\log^2 \log m\cdot (\log^* m)^2}{\log \log \log m})$
  time.
\end{lemma}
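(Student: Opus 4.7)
The plan is to mimic the architecture of \cref{lem:range}, but maintain a \emph{family} of range/modular-constraint structures indexed by root.  Concretely, on top of \cref{lem:runs2} we maintain, for each $H \in \Lroots_{f_\sig}(\tau,\W)$ currently realised by some run in some $\val(S)$ with $S \in \LW$, four auxiliary components: two instances of the structure of \cref{thm:range} holding $\Pts^{-}_{f_\sig,H}(\tau,\LW)$ and $\Pts^{+}_{f_\sig,H}(\tau,\LW)$, and two instances of the structure of \cref{cor:mod-queries} holding $\I^{-}_{f_\sig,H}(\tau,\LW)$ and $\I^{+}_{f_\sig,H}(\tau,\LW)$.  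These components are held in a deterministic dictionary keyed by the canonical identifier $\id(\symb_\sig(H))$ of the root; the dictionary is created/destroyed as roots appear and disappear.  Since $f_\sig$ depends only on the cyclic class (by \cref{lem:nc}), this key is well defined for the whole run containing a position, and \cref{lem:comp,lem:runs2} supply all primitives needed to access it.

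To answer a query, I would first decompress the input specification $H = \val(S)[i\dd i+p)$: using $\W.\splitop/\concatop$ I isolate the fragment, then apply $\W.\canshop$ (\cref{sec:ccs}) followed by \cref{lem:gr} to obtain $\id(\symb_\sig(f_\sig(H)))$; a dictionary lookup locates the appropriate structure.  On it I invoke the query of \cref{thm:range} or \cref{cor:mod-queries}.  The $\Y = \CFp(\LW)$ coordinates of the points are compared via \cref{lem:comp} in $\Oh(\log n\cdot \tfrac{\log^2\log m \cdot \log^* m}{\log\log\log m})$ time per comparison, so the resulting query time is $\Oh(\log^3 n + \log^2 n \cdot \tfrac{\log^2\log m \cdot \log^* m}{\log\log\log m})$ as claimed, and labels returned by selection queries are mapped back to positions via $\LW.\unlabel$.

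For updates, I would exploit locality.  By \cref{lm:R-block} and the density of $\S_\sig(\tau,\LW)$, a single $\splitop$ or $\concatop$ can affect at most $\Oh(1)$ runs in a vicinity of width $\Oh(\tau)$ around the cut/merge point: old runs may be truncated, broken in two, or fused; correspondingly, $\Oh(1)$ elements of $\R'^{\pm}_{f_\sig,H}$ (and thus of $E^{\pm}_{f_\sig,H}$ and $\I^{\pm}_{f_\sig,H}$) are inserted or deleted, as reported during the maintenance already performed in \cref{lem:runs2}.  Additionally, for every $(d, Y, \ell)$ point whose right context $Y \in \CFp(\LW)$ is altered by the update (these are at most $\Oh(\log^* m)$ points lying within $\Oh(\tau)$ of the cut/merge point, enumerable via repeated $\Predecessor/\Successor_{\R'(\tau,\LW)}$ calls), I re-insert the point with the refreshed coordinate.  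Each of the $\Oh(\log^* m)$ affected points costs $\Oh(\log n \cdot \tfrac{\log^2\log m \cdot \log^* m}{\log\log\log m})$ per insertion in the structure of \cref{thm:range}, giving the promised $\Oh(\log^2 n \cdot \tfrac{\log^2\log m \cdot (\log^* m)^2}{\log\log\log m})$ bound.

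The main obstacle I expect is the bookkeeping at the seams: when $\concatop(R,S)$ fuses a run ending in $R$ with one starting in $S$, the identifier of the resulting root may differ from both originals, so the corresponding entries must migrate between per-root structures, and the $\Ltail/\Lhead$ offsets recorded in the endpoint-interval pairs of $\I^{\pm}_{f_\sig,H}$ must be recomputed; symmetric care is needed for $\splitop$ when it bisects a run.  Handling this cleanly requires reducing every fusion/split to a constant-size sequence of localized deletions followed by fresh insertions computed via \cref{prp:tauruns} on a window of width $\Oh(\tau)$ around the modified boundary, precisely as in the maintenance of \cref{lem:runs,lem:succ}; the rest of the accounting, including the argument that the right contexts of all other points are unaffected (via \cref{lem:recompr1} applied at the appropriate level), is then routine.
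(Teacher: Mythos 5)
Your proposal is correct and follows essentially the same route as the paper: per-root instances of the structures of \cref{thm:range} and \cref{cor:mod-queries} stored in a dictionary keyed by $\symb_\sig(H)$, queries answered after canonicalizing $H$ via the dynamic-strings machinery with comparisons supplied by \cref{lem:comp}, and updates handled by deleting and freshly re-inserting the entries of the $\Oh(1)$ runs whose extent or $7\tau$-context straddles the cut/merge point. The only divergence is cosmetic: the paper observes that only $\Oh(1)$ run entries (not $\Oh(\log^* m)$) are affected per update, since distinct $\tau$-runs overlap in at most $\tfrac23\tau$ positions, but your looser count still lands within the stated bound.
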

\begin{proof}
  The non-empty sets $\Pts^\pm_{f_\sig,H}(\tau,\LW)$ (henceforth
  denoted $\Pts^\pm_H$) are maintained in the data structure of
  \cref{thm:range}, whereas the non-empty sets
  $\I^\pm_{f_\sig,H}(\tau,\LW)$ (henceforth denoted $\I^\pm_H$) are
  maintained in the data structure of \cref{cor:mod-queries}. Pointers
  to these components are stored in a deterministic dynamic
  dictionary~\cite{wexp} indexed by $\symb_\sig(H)$.

  The first step of the query algorithm is to compute $\symb_\sig(H)$
  by extracting the appropriate fragment using $\W.\splitop$.  This
  lets us identify the data structure responsible for the query in
  $\Oh(\log n\cdot \frac{\log^2 \log m \cdot \log^* m}{\log \log \log
  m})$ time.  By \cref{lem:comp}, each comparison on $\CFp(\LW)$
  costs $\Oh(\log n \cdot \frac{\log^2 \log m\cdot \log^* m}{\log \log
  \log m})$ time.  As a result, the cost of a range query is
  $\Oh(\log^3 n + \log^2 n \cdot \frac{\log^2 \log m\cdot \log^*
  m}{\log \log \log m})$.  On the other hand, the cost of a modular
  constraint query is $\Oh(\log^3 n)$.

  Before executing the $\splitop(S,i)$ operation of \cref{lem:runs},
  we identify all maximal intervals $[j'\dd j'']\in \R(\tau,\val(S))$
  (with $j'\in \R'(\tau,\val(S))$ and $j''\in \R''(\tau,\val(S))$)
  such that $i\in [j'\dd j''+7\tau)$ or $i+|S|\in [j'\dd j''+7\tau)$.
  There are $\Oh(1)$ such intervals, and they can be generated in
  $\Oh(1)$ time using the $\Successor$ and $\Predecessor$ operations
  on $\R'(\tau,\LW)$ and $\R''(\tau,\LW)$.  For each identified
  interval $[j'\dd j'']$, we remove the entries with label
  $\labelop(S[j'])$ from the sets $\Pts^\pm_H$ and $\I^\pm_H$
  containing it, and we add $\labelop(S[j'])$ to a temporary set $A$
  containing all labels $\ell\in \R'(\tau,\LW)$ which are missing
  their elements in $\Pts^\pm_H$ and $\I^\pm_H$.  Then, we execute the
  $\splitop(S,i)$ operation of \cref{lem:runs}, updating the set $A$
  whenever a label is removed from or added to $\R'(\tau,\LW)$.
  Finally, we iterate over the set $A$ to add the missing elements to
  $\Pts^\pm_H$ and $\I^\pm_H$.  For such a label $\ell\in A$, we
  retrieve $(T,j)=\unlabel(\ell)$ and ask the query of \cref{lem:runs}
  to determine $p:=|\Lrootgen{f_\sig}{\tau}{\val(T)}{j}|$,
  $h:=\Lheadgen{f_\sig}{\tau}{\val(T)}{j}$, and $e:=
  \rendgen{\tau}{\val(T)}{j}$.  This lets us extract $H=\val(T)[j+h\dd
  j+h+p)$ via a $\W.\splitop$ operation and, in particular identify
  $\symb_\sig(H)$. Moreover, we compute $\typegen{\tau}{\val(T)}{j}$
  by comparing $\val(T)^\infty[j\dd)$ with $\val(T)^\infty[j+p\dd )$
  via \cref{lem:comp} so that we know whether the elements should be
  inserted to $\Pts^-_H$ and $\I^-_H$ or to $\Pts^+_H$ and $\I^+_H$.
  Finally, we determine
  $e':=\rendfullgen{f_\sig}{\tau}{\val(T)}{j}=e-(e-j-h)\bmod p$ and
  insert a point $(e', \val(T)^\infty[e'\dd e'+7\tau),\labelop(j))$ to
  $\Pts^\pm_H$, and a tuple $(e'-e+3\tau-1, e'-j+1,\labelop(j))$ to
  $\I^\pm_H$.  The overall running time is increased to $\Oh(\log^2
  n\cdot \frac{\log^2 \log m\cdot \log^* m}{\log \log \log m})$,
  dominated by the cost of $\Oh(1)$ insertions and deletions on the
  sets $\Pts^\pm_H$.

  The implementation of the $\concatop(R,S)$ operation is symmetric.
  The only difference is that the auxiliary set $A$ is initialized
  based on maximal intervals $[j'\dd j'']\in \R(\tau,\val(R))$ with
  $|R|+1\in [j'\dd j''+7\tau)$ and on maximal intervals $[j'\dd
  j'']\in \R(\tau,\val(S))$ with $|S|+1\in [j'\dd j''+7\tau)$.
\end{proof}

\begin{proposition}\label{prp:periodic}
  For any fixed $\ell\in \Zp$, a dynamic text $T\in \Sigma^n$ can be
  implemented so that initialization takes $\Oh(\frac{\log^2 \log
  m\cdot \log^* m}{\log \log \log m})$ time, updates take
  $\Oh(\log^2 n \cdot \frac{\log^2 \log m\cdot (\log^* m)^2}{\log \log
  \log m})$ time, and the queries of \cref{as:periodic} take
  $\Oh(\log^3 n + \log^2 n\cdot \frac{\log^2 \log m\cdot \log^*
  m}{\log \log \log m})$ time, where $m=|\Sigma|+t$ and $t$ is the
  total number of instructions that the data structure has performed
  so far.
\end{proposition}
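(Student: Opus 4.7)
The plan is to mirror the structure of Propositions~\ref{prp:core} and~\ref{prp:nonperiodic}: we proceed as in Corollary~\ref{cor:access}, but replace the vanilla labelled-family component from Lemma~\ref{lem:lf} with the extension from Lemma~\ref{lem:range2}, instantiated with $\tau = \lfloor \ell/3 \rfloor$. Throughout the lifetime of the structure we maintain a uniquely labelled family $\LW = \{S\}$ whose sole element satisfies $\val(S) = T$, together with the grammar $\Gr$ underlying the (implicit) signature function $\sig$ and the associated sets $\Pts^\pm_{f_\sig,H}(\tau,\LW)$ and $\I^\pm_{f_\sig,H}(\tau,\LW)$ that Lemma~\ref{lem:range2} stores for every $H$ that appears as a periodic root. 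Initialization, $\makeop$, $\deleteop$, and $\swapop$ translate (as in the proof of Corollary~\ref{cor:access}) into a constant number of $\LW.\makeop$, $\LW.\deleteop$, $\LW.\splitop$, and $\LW.\concatop$ calls, so the update and initialization costs inherit directly from Lemma~\ref{lem:range2}, yielding exactly the bounds claimed in the proposition.

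For Item~\ref{as:periodic-it-1} of Assumption~\ref{as:periodic}, given $j \in \R(\tau,T)$ we first obtain the label $\ell := \LW.\labelop(S,j)$ in $\Oh(\log n)$ time and then invoke the query of Lemma~\ref{lem:runs2} on $(S,j)$ to retrieve $|\Lrootgen{f_\sig}{\tau}{\val(S)}{j}|$, $\Lheadgen{f_\sig}{\tau}{\val(S)}{j}$, and $\rendgen{\tau}{\val(S)}{j}$. Since $\val(S)=T$, these are exactly $|\Lroot(j)|$, $\Lhead(j)$, and $\rend{j}$, and the total time is $\Oh(\log n)$, which fits within the claimed bound.

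For Items~\ref{as:periodic-it-2} and~\ref{as:periodic-it-3} of Assumption~\ref{as:periodic}, the string $H$ is specified by a pair $(i,p)$ with $H = T[i\dd i+p)$; we pass this pair to the query interface of Lemma~\ref{lem:range2}, which interprets $H$ as the fragment $\val(S)[i\dd i+p)$ and performs the internal dictionary lookup on $\symb_\sig(H)$ to find the data-structure instances for $\Pts^\pm_H$ and $\I^\pm_H$. Query arguments naming a right context $\T^\infty[i'\dd i'+q_r)$ (as in Problem~\ref{prob:int-str}) are translated into cyclic fragments in $\CFp(\LW)$, compared via Lemma~\ref{lem:comp}; arguments for modular constraint queries are forwarded verbatim to the component of Corollary~\ref{cor:mod-queries}; and any label returned by a range-selection query is converted back to a text position via $\LW.\unlabel(\cdot)$. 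By Lemma~\ref{lem:range2}, the resulting time is $\Oh(\log^3 n + \log^2 n \cdot \frac{\log^2\log m \cdot \log^* m}{\log\log\log m})$, matching the proposition.

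The main conceptual subtlety (and the point that warrants verification rather than a routine inherited argument) is that the identities $\Pts^\pm_{f,H}(\tau,\val(S)) = \Pts^\pm_{f_\sig,H}(\tau,\LW)$ and $\I^\pm_{f,H}(\tau,\val(S)) = \I^\pm_{f_\sig,H}(\tau,\LW)$ (after mapping labels to text positions via $\LW.\unlabel$) must hold for some fixed necklace-consistent $f$, which is needed so that the Assumption~\ref{as:periodic} interface is served coherently across arbitrary update sequences. We take $f := f_\sig$ from Construction~\ref{cons:nc}, which is necklace-consistent by Lemma~\ref{lem:nc}; because $\sig$ is only \emph{extended} (never redefined) as new symbols are inserted into $\Gr$, the value $f_\sig(H)$ is stable for every $H$ once created, so $\Lroot$, $\Lhead$, $\type$, and hence the sets $\Pts^\pm_H$ and $\I^\pm_H$ enjoy a well-defined global meaning throughout the execution. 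Granting this, the rest is direct plumbing between Lemmas~\ref{lem:runs2} and~\ref{lem:range2} and the query language of Assumption~\ref{as:periodic}.
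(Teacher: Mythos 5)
Your proposal is correct and follows essentially the same route as the paper's proof: instantiate the labelled family of Corollary~\ref{cor:access} with the extensions of Lemmas~\ref{lem:runs2} and~\ref{lem:range2} for $\tau=\lfloor\ell/3\rfloor$, take $f:=f_\sig$, answer Item~\ref{as:periodic-it-1} via Lemma~\ref{lem:runs2}, and serve Items~\ref{as:periodic-it-2} and~\ref{as:periodic-it-3} by translating indices to fragments of $\CFp(\LW)$ and labels back to positions via $\LW.\unlabel(\cdot)$. Your closing remark on the stability of $f_\sig$ under extensions of $\sig$ is a worthwhile explicit justification that the paper leaves implicit.
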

\begin{proof}
  We proceed as in the proof of \cref{cor:access}, but instead of
  implementing family $\LW$ using \cref{lem:lf}, we apply the
  extensions of \cref{lem:runs2,lem:range2} for
  $\tau=\floor{\frac{\ell}{3}}$. This increases the cost of $\initop$
  to $\Oh(\frac{\log^2 \log m\cdot \log^* m}{\log \log \log m})$ and
  the cost of updates to $\Oh(\log^2 n \cdot \frac{\log^2 \log m\cdot
  (\log^* m)^2}{\log \log \log m})$ (dominated by $\LW.\splitop$ and
  $\LW.\concatop$).

  We use $f:=f_\sig$ as the necklace-consistent function.  The queries
  asking to compute $|\Lrootgen{f}{\tau}{T}{j}|$,
  $\Lheadgen{f}{\tau}{T}{j}$, and $\rendgen{\tau}{T}{j}$ for $j\in
  \R(\tau,T)$ are answered in $\Oh(\log n)$ time directly using
  \cref{lem:runs2}.  As for the int-string range queries for the sets
  $\Points_{7\tau}(T, E^\pm_{f,H}(\tau,T))$, we use the equivalent
  range queries on $\Pts^\pm_{f,H}(\tau,\LW)$ answered using
  \cref{lem:range}.  The only work needed is to covert the query
  arguments from indices to fragments in $\CFp(T)$ (as specified in
  \cref{prob:int-str}).  Moreover, the label returned by the range
  selection on $\Pts^\pm_{\sig,H}(\tau,\LW)$ is converted to a
  position in $T$ using the $\LW.\unlabel(\cdot)$ operation.  The
  query time is $\Oh(\log^3 n + \log^2 n \cdot \frac{\log^2 \log
  m\cdot \log^* m}{\log \log \log m})$, dominated by the cost of
  range queries of \cref{lem:range2}.

  Finally, the modular constraint queries on the sets
  $\I^\pm_{f,H}(\tau,T)$ are implemented using the equivalent modular
  constraint queries on the set $\I^\pm_{f,H}(\tau,\LW)$. No
  transformation of arguments and query outputs are necessary here,
  and the overall query time is $\Oh(\log^3 n + \log^2 n\cdot
  \frac{\log^2 \log m\cdot \log^* m}{\log \log \log m})$.
\end{proof}

\subsection{Summary}

We are now ready to describe the main result of our work: a dynamic
text implementation that can answer suffix array queries. We start
with a version with a bounded lifespan: it takes an additional
parameter $N$ at initialization time, and it is only able to handle
$N$ operations.  Then, we use this solution as a black box to develop
an `everlasting' dynamic suffix array.

\begin{proposition}\label{prp:saisa}
  For any given integer $N\ge \sigma$, a dynamic text $T\in [0\dd
  \sigma)^+$ can be implemented so that initialization takes $\Oh(\log
  N \cdot \frac{\log^2 \log N\cdot \log^* N}{\log \log \log N})$ time,
  updates take $\Oh(\log^3 N \cdot \frac{\log^2 \log N\cdot (\log^*
  N)^2}{\log \log \log N})$ time, the suffix array queries take
  $\Oh(\log^4 N)$ time, and the inverse suffix array queries take
  $\Oh(\log^5 N)$ time, provided that the total number of updates and
  queries does not exceed $N$.
\end{proposition}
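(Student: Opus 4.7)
The plan is to combine \cref{pr:sa} with the implementations of the four assumptions provided by \cref{prp:small,prp:core,prp:nonperiodic,prp:periodic}. Concretely, I would maintain in parallel one instance of the data structure of \cref{prp:small} plus, for each integer $q\in [4\dd \ceil{\log N})$, three instances (for $\ell=2^q$) of the data structures of \cref{prp:core,prp:nonperiodic,prp:periodic}. The only information these components share is the sequence of updates performed on the underlying text $T$: every call to $\initop$, $\makeop$, $\deleteop$, or $\swapop$ is dispatched to each of these $\Oh(\log N)$ components. Since $n\le N$ throughout the lifespan and the total number of instructions issued to each component stays $\Oh(N)$, we have $m = |\Sigma|+t = \Oh(N)$ in the bounds of the four propositions, so $\log m=\Oh(\log N)$ and $\log\log m=\Oh(\log\log N)$.

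The initialization cost is the sum of $\Oh(1)$ (from \cref{prp:small}) and $\Oh(\log N)$ copies of $\Oh(\frac{\log^2\log N\cdot \log^* N}{\log\log\log N})$, which matches the claimed bound. Each update is forwarded to all components, so its cost is dominated by the $\Oh(\log N)$ levels invoking the update of \cref{prp:periodic} (which is the most expensive), giving $\Oh(\log^3 N\cdot \frac{\log^2\log N\cdot(\log^* N)^2}{\log\log\log N})$.

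For a suffix array query, I would run the algorithm of \cref{pr:sa}, answering each of its $\Oh(\log n)$ iterations using the component for the current value of $\ell$. The per-level query time $t$ is $\Oh(\log^3 N + \log^2 N\cdot \frac{\log^2\log N\cdot \log^* N}{\log\log\log N}) = \Oh(\log^3 N)$ by \cref{prp:small,prp:core,prp:nonperiodic,prp:periodic} (the $\log^3 N$ term from the range and modular constraint queries dominates), so the total is $\Oh(\log^4 N)$ as required. For an inverse suffix array query $\SA^{-1}[j]$, I would binary search for the rank $i$ of $T[j\dd n]$ among all suffixes: at each step, given $\mathit{mid}$, compute $\SA[\mathit{mid}]$ using the above procedure and then compare $T[j\dd n]$ against $T[\SA[\mathit{mid}]\dd n]$ via an $\LCE$ query followed by at most one $\accessop$. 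The LCE/comparison can be obtained through the dynamic-strings component implicit in \cref{lem:comp} (which is already present inside \cref{prp:nonperiodic,prp:periodic}) in $\Oh(\log N\cdot \frac{\log^2\log N\cdot \log^* N}{\log\log\log N})$ time, cheaper than the SA query. The $\Oh(\log N)$ binary search steps therefore cost $\Oh(\log^5 N)$ in total.

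The main obstacle I expect is the bookkeeping required to keep the $\Oh(\log N)$ per-level components consistent under $\swapop$: each level implements the update through a sequence of $\splitop$/$\concatop$ operations on its own labelled family $\LW$, and one must make sure that the labels identifying positions of $T$ remain coherent across all levels (so that, e.g., $\Successor_\S$ queries at level $\ell$ can be fed into the lexicographic-range queries at the same level). This is a notational rather than algorithmic difficulty: each component can maintain its own label space via a private counter and a single dictionary translating ``global'' text positions to level-local labels, which adds only $\Oh(\log N)$ per update. A minor secondary point is verifying that the per-level assumption \cref{as:nonperiodic,as:periodic} are actually met by the chosen necklace-consistent function $f_\sig$ of \cref{cons:nc}; this is guaranteed by \cref{lem:nc}, so no additional work is required beyond invoking the already-built components.
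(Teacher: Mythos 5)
Your proposal is correct and follows essentially the same route as the paper's proof: replicate the update stream across one instance of \cref{prp:small} and per-level instances of \cref{prp:core,prp:nonperiodic,prp:periodic} for $\ell=2^q$, answer $\SA$ queries via \cref{pr:sa}, and answer $\SA^{-1}$ queries by binary search using an $\SA$ query plus a lexicographic comparison (the paper routes the comparison through \cref{lem:comp}) at each step. The only nuance worth noting is that the parameter $m$ counts instructions executed by the components, so it can reach $N\log^{\Oh(1)}N$ rather than $\Oh(N)$, but as you observe this still gives $\log m=\Oh(\log N)$ and the stated bounds follow.
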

\begin{proof}
  We maintain $T$ using data structures of \cref{prp:small,lem:comp},
  as well as several instances of the data structures of
  \cref{prp:core,prp:nonperiodic,prp:periodic} for $\ell=2^q$, where
  $q\in [4\dd \ceil{\log N}]$.  The initialization and each update
  operation needs to be replicated in all these components.

  The suffix queries are implemented using \cref{pr:sa}. Due to the
  fact that $|T|\le N$, the components maintained are sufficient to
  satisfy the assumption required in \cref{pr:sa}.

  As for the inverse suffix array queries, we perform binary
  search. In each of the $\Oh(\log |T|)=\Oh(\log N)$ steps, we compare
  the specified suffix $T[j\dd |T|]$ with the suffix $T[\SA[i]\dd
  |T|]$; here, we use a suffix array query of to determine $\SA[i]$
  and the lexicographic comparison (of \cref{lem:comp}) to compare the
  two suffixes lexicographically.

  Recall that the running times of all the components are expressed in
  terms of parameters $n=|T|$ (which does not exceed $N$) and
  $m=\sigma+t$, where $t$ is the total number of instructions
  performed so far by the respective component. This value may differ
  across components, but we bound it from above by the total number of
  instructions performed so far by all the components; let us call
  this value $M$.  Note that each update and query costs
  $\Oh(\log^{\Oh(1)}(N+M))$ time, which means that $M = \Oh(\sigma + N
  \log^{\Oh(1)} N) = \Oh(N\log^{\Oh(1)} N)$, where the last step
  follows from the assumption $N\ge \sigma$.

  Consequently, the initialization takes $\Oh(\log N \cdot
  \frac{\log^2 \log M\cdot \log^* M}{\log \log \log M}) = \Oh(\log N
  \cdot \frac{\log^2 \log N\cdot \log^* N}{\log \log \log N})$ time
  and the updates take $\Oh(\log N \cdot \log^2 n \cdot \frac{\log^2
  \log M\cdot (\log^* M)^2}{\log \log \log M}) = \Oh(\log^3 N \cdot
  \frac{\log^2 \log N\cdot (\log^* N)^3}{\log \log \log N})$ time.  By
  \cref{pr:sa}, suffix queries take $\Oh(\log n \cdot (\log^3 n +
  \log^2 n \frac{\log^2 \log M\cdot \log^* M}{\log \log \log
  M}))=\Oh(\log^4 N)$ time.  On the other hand, the inverse suffix
  array queries cost $\Oh(\log N \cdot (\log^4 N + \log n\cdot
  \frac{\log^2 \log M\cdot \log^* M}{\log \log \log M}))=\Oh(\log^5
  N)$ time.
\end{proof}

\begin{theorem}\label{thm:saisa}
  A dynamic text $T\in [0\dd \sigma)^n$ can be implemented so that
  initialization takes $\Oh(\log \sigma\cdot \frac{\log^2 \log \sigma
  \cdot \log^* \sigma}{\log \log \log \sigma})$ time, updates take
  $\Oh(\log^3 (n\sigma) \cdot \frac{\log^2 \log (n\sigma)\cdot (\log^*
  (n\sigma))^2}{\log \log \log (n\sigma)})$ time, the suffix array
  queries take $\Oh(\log^4 (n\sigma))$ time, and the inverse suffix
  array take queries $\Oh(\log^5 (n\sigma))$ time.
\end{theorem}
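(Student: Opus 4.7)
The plan is to reduce \cref{thm:saisa} to \cref{prp:saisa} via a global-rebuilding scheme, deamortized to worst-case guarantees through a background copy. \cref{prp:saisa} fixes a lifespan parameter $N\ge \sigma$ and charges $\polylog N$ per operation, whereas \cref{thm:saisa} charges $\polylog(n\sigma)$. Consequently, the crux is to keep, at all times, an active instance of \cref{prp:saisa} whose capacity $N$ satisfies $\max(\sigma,n)\in [N/16,N/4]$ with at least $N/2$ of its operation budget still unused. Under that invariant $\log N = \Theta(\log(n\sigma))$ and $\polylog N = \polylog(n\sigma)$, so every complexity from \cref{prp:saisa} translates directly into the form claimed by \cref{thm:saisa}.

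First I would instantiate \cref{prp:saisa} with capacity $N_0 := \max(\sigma,C)$ for a small absolute constant $C$, paying exactly the initialization cost demanded by the theorem. A rebuild toward a fresh instance $B$ of capacity $N'\in\{N/2,2N\}$ is triggered as soon as any of the three invariant bounds is about to be violated. Since $n$ changes by at most one per real operation, the interval from the trigger to the forced swap covers $\Theta(N)$ real operations, giving the rebuild a $\Theta(N)$-operation window during which $B$ must be brought into sync with $A$.

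The synchronization proceeds in two phases on top of the labelled-string interface of \cref{lem:lf}, on which \cref{prp:saisa} is built. At the rebuild trigger I snapshot the current sequence of character labels of $A$ (realized lazily as a cursor that walks $A$'s balanced tree) and initialize $B$ together with a FIFO log that records every subsequent operation executed on $A$. Each real operation is accompanied by $c=\Oh(1)$ units of catch-up work: each unit either (i) advances the snapshot cursor and appends the corresponding character (fetched from $A$ via $\accessop$) to the tail of $B$, skipping labels that have been deleted in $A$ in the meantime; or, once the snapshot is exhausted, (ii) dequeues the next logged operation and replays it on $B$, with positions translated by the trivially maintained offset between the snapshot prefix already copied and the current length of $B$. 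When both the snapshot cursor and the log are drained, $A$ and $B$ represent the same text and we swap them, discarding $A$.

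The main obstacle is the deamortization accounting. Over a window of $W=\Theta(N)$ real operations the rebuilder has $cW$ units of work available, while the work required is at most $n\le N$ snapshot copies plus one log replay per real operation during the window, i.e.\ $n+W=\Theta(N)$; hence any constant $c\ge 1+1/\alpha$ (for the window proportion $\alpha$) is sufficient, and a fixed constant choice gives the needed slack. Each unit of catch-up work is a single update, access, or $\unlabel$ call on an instance of capacity $\Theta(N)$, which by \cref{prp:saisa} and \cref{lem:lf} costs $\Oh(\polylog(n\sigma))$ time; this is absorbed into the update bound claimed by \cref{thm:saisa}. Queries arriving during a rebuild are served by the active instance $A$ alone and hence inherit the $\SA$ and $\SA^{-1}$ complexities of \cref{prp:saisa}, now expressed in terms of $n\sigma$. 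Putting these pieces together yields the four worst-case bounds of \cref{thm:saisa}.
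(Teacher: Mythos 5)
Your proposal is correct and follows essentially the same route as the paper: both reduce \cref{thm:saisa} to \cref{prp:saisa} by standard deamortized global rebuilding with two instances, a background copy of the text, and replay of the operations performed during the rebuild. The only differences are cosmetic bookkeeping choices (your interleaved cursor-plus-FIFO-log catch-up versus the paper's epoch split into a background-reorganization half and a buffer-draining half, and slightly different capacity/trigger constants), neither of which changes the argument or the resulting bounds.
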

\begin{proof}
  We first describe an amortized-time solution which performs a
  \emph{reorganization} every $\Omega(n)$ operations. This
  reorganization takes $\Oh(n\cdot U(n,\sigma))$ time, where
  $U(n,\sigma)=\Oh(\log^3 (n\sigma) \frac{\log^2 \log (n\sigma)\cdot
  (\log^* (n\sigma))^2}{\log \log \log (n\sigma)})$ is the update
  time.

  The text $T$ is stored using the data structures of both
  \cref{cor:access,prp:saisa}.  Moreover, we maintain a counter $t$
  representing the number of operations that can be performed before
  reorganization.  At initialization time, we set $t=1$ and initialize
  both components, setting $N=\sigma$ for \cref{prp:saisa}.  The
  updates and queries are forwarded to the component of
  \cref{prp:saisa}, but we first perform reorganization (if $t=0$) and
  decrement $t$ (unconditionally). As for the reorganization, we set
  $t=\ceil{\frac12 |T|}$, discard the component of \cref{prp:saisa},
  and initialize a fresh copy using $N=\max(\sigma,
  \ceil{\frac32|T|}-1)$; we then insert characters of $T$ one by one
  using the $\accessop$ operation of \cref{cor:access} and the
  $\makeop$ operation of \cref{prp:saisa}.

  To prove that this implementation is correct, we must argue that
  each instance of \cref{prp:saisa} performs no more than $N$
  operations. The instance created at initialization time is limited
  to a single operation, which is no more than the allowance of
  $N=\sigma$ operations. On the other hand, an instance created during
  a reorganization performs $|T|-1$ insertions during the
  reorganization, and is then limited to $\ceil{\frac12 |T|}$
  operations. In total, this does not exceed the allowance of
  $N=\max(\sigma, \ceil{\frac32|T|}-1)$ operations.

  It remains to analyze the time complexity. For this, we observe
  that, if $N>\sigma$, then $|T| \ge |T|-t \ge \floor{\frac13N}$ is
  preserved as an invariant. This means that
  $N=\Oh(\max(\sigma,|T|))=\Oh(\sigma|T|)$, and thus the operation
  times of \cref{prp:saisa} can be expressed using $n\sigma$ instead
  of $N$.  This also applies to the cost of reorganization, which uses
  initialization and $n-1$ updates.

  As for the deamortization, we use the standard technique of
  maintaining two instances of the above data structure. At any time,
  one them is active (handles updates and queries), whereas the other
  undergoes reorganization. The lifetime of the entire solution is
  organized into \emph{epochs}.  At the beginning of each epoch, the
  active instance is ready to handle $t\ge \frac12n$ forthcoming
  operations, whereas the other instance needs to be reorganized.  The
  epoch lasts for $t$ operations. During the first half of the epoch,
  the reorganization is performed in the background and the updates
  are buffered in a queue. For each operation in the second half of
  the epoch, at most one update in buffered (none if the operation is
  a query) and two buffered updates are executed (unless there are
  already fewer updates in the buffer). Since the reorganization cost
  is bounded by $\Oh(t \cdot U(n,t))$ and since the query cost is
  larger than the update cost, the deamortized solution has the same
  asymptotic time complexity as the amortized one.
\end{proof}

\section{Conditional Lower Bound for Copy-Pastes}

A natural extension of the dynamic text interface provided in
\cref{sec:app} would be to support not only cut-pastes (that move
fragments of $T$), but also a copy-pastes (that copy fragments of
$T$).  Such operation is readily supported by the underlying
implementation of dynamic strings (\cref{sec:ds}), but it is
incompatible with the concept of labelling characters -- a single
copy-paste may add multiple new characters, and we cannot afford to
assign them labels one by one.  In this section, we provide a
conditional lower bound showing that this is not due to a limitation
of our techniques, but rather due to inherent difficulty of the
(inverse) suffix array queries in dynamic texts.  Our lower bound is
conditioned on the Online Matrix-Vector Multiplication
Conjecture~\cite{DBLP:conf/stoc/HenzingerKNS15}, which is often used
in the context of dynamic algorithms.  The underlying reduction
resembles one from the Dynamic Internal Dictionary Matching
problem~\cite{CKMRRW21}, where the queries ask if a given fragment of
a static text contains an occurrence of any pattern from a dynamic
dictionary.

In the Online Boolean Matrix-Vector Multiplication (OMv) problem, we
are given as input an $n \times n$ boolean matrix $M$. Then, a
sequence $n$ vectors $v_1, \ldots, v_n$, each of size $n$, arrives in
an online fashion. For each such vector $v_i$, we are required to
output $Mv_i$ before receiving $v_{i+1}$.

\begin{conjecture}[OMv Conjecture~\cite{DBLP:conf/stoc/HenzingerKNS15}]\label{conj:OMv}
  For any constant $\epsilon>0$, there is no
  $\Oh(n^{3-\epsilon})$-time algorithm that solves OMv correctly with
  probability at least $\frac23$.
\end{conjecture}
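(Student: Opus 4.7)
The final statement is the Online Matrix--Vector Multiplication Conjecture of Henzinger, Krinninger, Nanongkai, and Saranurak~\cite{DBLP:conf/stoc/HenzingerKNS15}, reproduced verbatim from that paper. It is labelled \texttt{conjecture}, not \texttt{theorem}, because it is a complexity-theoretic hardness hypothesis, not a result established in the present work. The plan, therefore, is not to present a proof — no proof is known, and producing one would be a major breakthrough — but to explain honestly what can and cannot be said about it in the context of this paper.

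First, I would situate the statement: the paper invokes the OMv Conjecture only as a hypothesis to drive the conditional lower bound stated earlier in the introduction, namely that supporting copy-paste together with polylogarithmic $\SA$ or $\SA^{-1}$ queries would be surprising. In this role the conjecture plays the same part as SETH, 3SUM-hardness, or the APSP Conjecture do in fine-grained complexity: it is assumed, not proved. Consequently, the ``proof'' the paper owes the reader at this point is simply a pointer to the source where the conjecture was introduced and evidence was gathered; no internal argument is available or expected.

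Second, I would explain why an actual proof is out of reach with current techniques. An unconditional $n^{3-\Omega(1)}$ lower bound for OMv in the word-RAM (or even cell-probe) model would imply super-polynomial separations far beyond anything presently known; indeed it would subsume many long-standing open problems, including unconditional polynomial lower bounds for dynamic problems such as dynamic reachability and dynamic shortest paths, for which only conditional bounds are known. The known support for OMv is thus indirect: (i)~it is consistent with the Strong Exponential Time Hypothesis and with the combinatorial Boolean matrix multiplication barrier of $n^{3-o(1)}$; (ii)~\cite{DBLP:conf/stoc/HenzingerKNS15} show that OMv is equivalent, up to subpolynomial factors, to several other natural online matrix problems, providing internal robustness; (iii)~despite substantial effort on online and dynamic matrix problems, no $n^{3-\Omega(1)}$-time algorithm has been discovered. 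These points would form the body of any honest ``justification'' I could write.

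Finally, the main obstacle is structural rather than technical: any genuine proof attempt would have to circumvent the well-known difficulty of establishing strong unconditional lower bounds in the word-RAM model, and this paper neither claims nor requires such a breakthrough. I would therefore propose that the text following the conjecture statement contain only a citation to~\cite{DBLP:conf/stoc/HenzingerKNS15} and, if desired, a brief remark of the kind sketched above, rather than a proof sketch, to make it clear to the reader that the subsequent reduction is conditional on an external, widely-used hypothesis.
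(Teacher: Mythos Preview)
Your assessment is correct and matches the paper exactly: the statement is a conjecture cited from~\cite{DBLP:conf/stoc/HenzingerKNS15}, the paper provides no proof for it, and it is used solely as a hypothesis for the conditional lower bound in \cref{thm:lowerbound}. There is nothing to compare, since neither you nor the paper attempts a proof.
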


We use the following simplified version of~\cite[Theorem
2.2]{DBLP:conf/stoc/HenzingerKNS15}.

\begin{theorem}[\cite{DBLP:conf/stoc/HenzingerKNS15}]\label{thm:gOMv}
  For all constants $\gamma,\epsilon>0$, the OMv Conjecture implies
  that there is no algorithm that, given as input a $p \times q$
  matrix $M$, with $p=\lfloor q^{\gamma} \rfloor$, preprocesses $M$ in
  time polynomial in $p\cdot q$, and then, presented with a vector $v$
  of size $q$, computes $Mv$ in time $\Oh(q^{1+\gamma-\epsilon})$
  correctly with probability at least $\frac23$.
\end{theorem}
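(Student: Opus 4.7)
The plan is to derive this statement from the OMv Conjecture by a reduction that converts an efficient rectangular-matrix algorithm into an efficient square-matrix algorithm. Suppose for contradiction that, for some constants $\gamma,\epsilon>0$, an algorithm $\mathcal{A}$ exists which, after polynomial-in-$pq$ preprocessing of a $p\times q$ matrix with $p=\lfloor q^\gamma\rfloor$, answers any matrix-vector query in time $\Oh(q^{1+\gamma-\epsilon})$ with success probability at least $\tfrac23$. I aim to use $\mathcal{A}$ to solve the $n\times n$ OMv problem in time $\Oh(n^{3-\epsilon'})$ for some $\epsilon'>0$, which contradicts the OMv Conjecture.

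The reduction proceeds as follows. Given the $n\times n$ matrix $M$ of the OMv instance, set $q:=n$ and $p:=\lfloor n^\gamma\rfloor$, and partition $M$ horizontally into $r:=\lceil n/p\rceil=\Theta(n^{1-\gamma})$ strips $M^{(1)},\ldots,M^{(r)}$, each of format $p\times q$ (padding the last strip with zero rows if necessary). Preprocess each strip independently with $\mathcal{A}$; the total preprocessing cost is $r\cdot\poly(pq)=\poly(n)$, which can be absorbed into the preprocessing phase for OMv. When the $i$-th online vector $v_i$ arrives, invoke $\mathcal{A}$ on each strip to obtain $M^{(t)}v_i$ for $t\in[1\dd r]$ and concatenate these into $Mv_i$. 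The per-vector cost is $r\cdot\Oh(q^{1+\gamma-\epsilon})=\Oh(n^{1-\gamma})\cdot\Oh(n^{1+\gamma-\epsilon})=\Oh(n^{2-\epsilon})$, so the total cost over all $n$ vectors is $\Oh(n^{3-\epsilon})$.

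The main technical issue is correctness amplification: $\mathcal{A}$ is only guaranteed to succeed with probability $\tfrac23$ per query, whereas the full OMv reduction invokes $\mathcal{A}$ a total of $n\cdot r=\Theta(n^{2-\gamma})$ times and must succeed simultaneously on all of them (at least with constant probability). The standard fix is to boost $\mathcal{A}$ to success probability $1-n^{-c}$ for a sufficiently large constant $c$ by running $\Oh(\log n)$ independent copies in parallel and taking a majority vote entry-wise; a union bound then gives overall success probability $\geq\tfrac23$. This boosting multiplies both preprocessing and query cost by an $\Oh(\log n)$ factor, yielding total running time $\Oh(n^{3-\epsilon}\log n)=\Oh(n^{3-\epsilon/2})$ for all sufficiently large $n$, still contradicting the OMv Conjecture with $\epsilon':=\epsilon/2$.

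What remains is bookkeeping: checking that the boosted algorithm still fits into the online model (each boosted query completes before the next vector arrives, since the boosting is done in series per query), and verifying that the polynomial preprocessing bound in the hypothesis absorbs the $\Oh(\log n)$ repetition overhead. I expect no surprises here; the whole argument is a template ``rectangular-to-square'' padding-plus-amplification reduction, and the only subtlety is ensuring $\epsilon'$ remains strictly positive after the $\log n$ loss, which is why I work with $\epsilon/2$ rather than $\epsilon$.
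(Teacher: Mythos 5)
First, a framing remark: the paper does not prove this statement at all — it imports it as ``a simplified version of [Theorem~2.2]'' of \cite{DBLP:conf/stoc/HenzingerKNS15} — so you are attempting to reprove a black-box citation. Your high-level plan (self-reduce square OMv to the rectangular single-query problem, then amplify the success probability) is the right one, and the amplification step is fine, but the reduction as written has two concrete gaps. \textbf{(i) Preprocessing is not free.} \cref{conj:OMv} bounds the \emph{total} running time of an OMv algorithm by $\Oh(n^{3-\epsilon})$; there is no separate ``preprocessing phase for OMv'' into which costs can be absorbed. The hypothetical algorithm is only promised to preprocess in time \emph{polynomial} in $pq$, say $(pq)^{d}$ for an arbitrary constant $d$. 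With your strips, $pq=\Theta(n^{1+\gamma})$, so already for $d=2$ preprocessing the $r=\Theta(n^{1-\gamma})$ strips costs $\Theta(n^{1-\gamma}\cdot n^{2(1+\gamma)})=\Theta(n^{3+\gamma})$, and the OMv algorithm you build is not subcubic — no contradiction follows. \textbf{(ii) The case $\gamma>1$.} Your strips have height $p=\lfloor n^{\gamma}\rfloor$, which exceeds $n$ when $\gamma>1$; then $r=1$ and the single zero-padded strip is queried in $\Oh(n^{1+\gamma-\epsilon})$ time per vector, i.e.\ $\Oh(n^{2+\gamma-\epsilon})$ in total, which is not subcubic once $\gamma\ge 1+\epsilon$. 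This case cannot be discarded: the application in \cref{thm:lowerbound} uses $\gamma=\frac{1+\alpha-\beta}{1+\beta-\alpha}$, which exceeds $1$ whenever $\alpha>\beta$.

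Both gaps are closed by the same modification, which is essentially what \cite{DBLP:conf/stoc/HenzingerKNS15} does: tile $M$ in \emph{both} dimensions into $p\times q$ blocks with $pq=q^{1+\gamma}=n^{\mu}$, where the constant $\mu>0$ is chosen small enough relative to the degree $d$ that the total preprocessing $n^{2-\mu}\cdot\Oh(n^{d\mu})$ is subcubic. Each online vector is split into column groups, every block is queried once, and the $p$-bit outputs within a block row are OR-ed together. The per-vector cost is then $n^{2-\mu}\cdot\Oh\bigl(n^{\mu(1+\gamma-\epsilon)/(1+\gamma)}\bigr)=\Oh\bigl(n^{2-\mu\epsilon/(1+\gamma)}\bigr)$, for a total of $\Oh\bigl(n^{3-\mu\epsilon/(1+\gamma)}\bigr)$ over all $n$ vectors, which contradicts \cref{conj:OMv} for every fixed $\gamma>0$ regardless of whether $\gamma$ is below or above $1$. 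I recommend either carrying out this two-dimensional tiling explicitly or simply leaving the statement as a citation, as the paper does.
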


\begin{theorem}\label{thm:lowerbound}
  For all constants $\alpha,\beta>0$ with $\alpha+\beta < 1$, the OMv
  Conjecture implies that there is no dynamic algorithm that
  preprocesses a text $T$ in time polynomial in~$|T|$, supports
  copy-pastes in $\Oh(|T|^{\alpha})$ time, and inverse suffix array
  queries in $\Oh(|T|^{\beta})$ time, with each answer correct with
  probability at least $\frac23$.
\end{theorem}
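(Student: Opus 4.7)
The plan is a reduction from the parameterized OMv problem of \cref{thm:gOMv} to the hypothetical dynamic data structure, yielding a contradiction with \cref{conj:OMv}. Fix constants $\alpha,\beta>0$ with $\alpha+\beta<1$. Since $\alpha+\beta<1$ is equivalent to $\tfrac{\alpha}{1-\alpha}<\tfrac{1-\beta}{\beta}$, a rational $\gamma>0$ can be chosen strictly between these two values; this ensures simultaneously $\alpha(1+\gamma)<\gamma$ and $\beta(1+\gamma)<1$.

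Given a $p\times q$ Boolean matrix $M$ with $p=\lfloor q^\gamma\rfloor$, I will build (during preprocessing, in $\poly(pq)$ time) a text $T_M$ of length $\Theta(pq)=\Theta(q^{1+\gamma})$ over a constant-size alphabet. For each query vector $v\in\{0,1\}^q$, the protocol is: (i) $\Oh(q)$ copy-paste operations to embed $v$ and propagate the needed $v$-dependent content into the row gadgets; (ii) $\Oh(p)$ inverse suffix array queries, one or two per row, to read off each entry of $Mv$; (iii) $\Oh(q)$ further copy-pastes to restore $T$ to $T_M$ before the next vector arrives. Granting such a protocol, the per-query running time is
\[
  \Oh\bigl(q\cdot |T_M|^\alpha + p\cdot |T_M|^\beta\bigr) \;=\; \Oh\bigl(q^{1+\alpha(1+\gamma)} + q^{\gamma+\beta(1+\gamma)}\bigr),
\]
which, by the choice of $\gamma$, is $\Oh(q^{1+\gamma-\epsilon})$ for some $\epsilon>0$. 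Since a correct randomized such procedure (each answer correct w.p.~$\ge 2/3$) would contradict \cref{thm:gOMv}, the conclusion follows.

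The encoding is based on the following observation. Over the ordered alphabet $\{0<1<A\}$, associate to each row $r_i$ the length-$q$ string $E_i(v)$ defined by $E_i(v)[j]=A$ if $r_i[j]=0$ and $E_i(v)[j]=v[j]$ if $r_i[j]=1$. Then $E_i(v)\succ E_i(0^q)$ in lex order iff there exists $j$ with $r_i[j]=v[j]=1$, i.e.\ iff $(Mv)_i=1$, because the two strings agree at every $j$ with $r_i[j]=0$ and at every $j$ with $r_i[j]=1$ and $v[j]=0$, and at the first $j$ with $r_i[j]=v[j]=1$ they differ in the direction $1\succ 0$. The text $T_M$ will be set up so that, after $v$ has been embedded, both $E_i(v)$ and $E_i(0^q)$ occur as prefixes of dedicated suffixes of $T$ followed by identical canonical tails; one ISA comparison between the two anchor positions then decides $(Mv)_i$, giving $\Oh(1)$ queries per row.

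The main obstacle is producing all $p$ fragments $E_i(v)$ from a single embedded copy of $v$ in only $\Oh(q)$ copy-pastes, since these fragments collectively span $\Omega(pq)$ characters. The plan is a column-major layout of $T_M$: for each $j\in[1\dd q]$ and each $b\in\{0,1\}$ precompute a ``column block'' $C_j^{(b)}$ of length $p$ whose $i$-th cell equals $E_i(v)[j]$ under the constraint $v[j]=b$ (note that this depends only on $r_\cdot[j]$, which is static, and on $b$). All $2q$ such blocks fit within $|T_M|=\Theta(pq)$. At query time, for each $j$ a constant number of copy-pastes splices $C_j^{(v[j])}$ into an ``active column slot'', for $\Oh(q)$ operations in total; restoration is symmetric. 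The hardest technical point of the proof will be arranging matters so that each row $E_i(v)$---which, after the splicing, lives scattered as the $i$-th cell of $q$ different active slots---is nevertheless accessible as a suffix-prefix of $T$. The intended mechanism is a layer of row-indexed separator characters of carefully chosen alphabetic ranks placed around the column slots so that the suffix-array order threads the cells in row order; then each row has an anchor whose suffix reads $E_i(v)$ as its first $q$ informative characters followed by a tail that is shared across all rows, matching the structure of the precomputed $E_i(0^q)$ anchor so that a single ISA comparison between the two anchors reveals $(Mv)_i$.
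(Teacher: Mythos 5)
Your reduction framework and exponent arithmetic are sound and essentially match the paper's (the paper fixes $\gamma=\frac{1+\alpha-\beta}{1+\beta-\alpha}$ rather than an arbitrary rational in your interval, but both choices work), and your observation that $E_i(v)\succ E_i(0^q)$ iff $(Mv)_i=1$ is correct. The gap is in the gadget that is supposed to turn this observation into $\Oh(1)$ ISA queries per row. A suffix of $T$ is a \emph{contiguous} fragment, and its rank in the suffix array is determined by reading the text left to right from its starting position; no choice of separator characters can make the lexicographic order ``thread the cells in row order'' when the $q$ characters of $E_i(v)$ are scattered as the $i$th cell of $q$ different column slots. The suffix anchored at cell $(i,1)$ is compared according to cells $(i,1),(i{+}1,1),\dots$ of the \emph{first} column slot before it ever reaches column $2$. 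Making each $E_i(v)$ contiguous is not an option either: each of its $q$ characters originates in a different column block, so assembling even one row costs $\Omega(q)$ copy-pastes and all $p$ rows cost $\Omega(pq)$, destroying the time bound. Your self-imposed constraint of a constant-size alphabet is what forces you into this dead end.

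The paper's gadget avoids row assembly entirely by using an alphabet of size $p+3$ and a counting trick: column $j$ of $M$ is encoded as a block whose $i$th character is $a_{i\cdot M[i,j]}$, the query appends (copies) block $j$ to the end of $T$ for each $j$ with $v_j=1$, and the value $\ISA[pq+2i+1]$ at the unique occurrence of $a_i\#$ equals the total number of occurrences of characters $\preceq a_i$ in the current text. Hence $\ISA[pq+2i+1]-\ISA[pq+2i-1]$ counts occurrences of $a_i$, which exceeds the precomputed baseline $c_i$ exactly when some appended column contains $a_i$, i.e., when $(Mv)_i=1$. You would need to find this (or an equivalent) mechanism for extracting $(Mv)_i$ from ISA values; as written, your construction does not go through. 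Two minor further issues: step (iii) of your protocol cannot be realized, since copy-pastes only grow the text, but it is also unnecessary because \cref{thm:gOMv} involves a single query vector; and, restoration aside, the length of $T$ at most doubles, which is all the time analysis needs.
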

\begin{proof}
  Let us suppose that there is such an algorithm and set $\gamma =
  \frac{1+\alpha-\beta}{1+\beta-\alpha}$.  Given a $p \times q$ matrix
  $M$ satisfying $p=\lfloor q^{\gamma} \rfloor$, we construct a text
  $T$ of length $pq+2p+3$ over an alphabet $\{\$,a_0,\ldots,a_p,\#\}$
  (with $\$ \prec a_0 \prec \cdots \prec a_p \prec \$$) using the
  following formula:
  \[
    T = \left(\bigodot_{j=1}^q \left(\bigodot_{i=1}^p a_{i\cdot
      M[i,j]}\right)\right) \cdot \left(\bigodot_{i=0}^p a_i\#\right)
    \cdot \$.
  \]
  In other words, we first write down the columns of $M$ in the
  increasing order, replacing each zero with $a_0$ and each one with
  $a_i$, where $i\in [1\dd p]$ is the row index of the underlying
  matrix entry.  Then, we append $a_i\#$ for all $i\in [0\dd p]$, and
  finally we place $\$$ at the very end of $T$.  At the preprocessing
  time, we also precompute, for each $i\in [1\dd p]$, the number $c_i$
  of occurrences of $a_i$ in $T$.

  Given a query vector $v\in \{0,1\}^q$, we proceed as follows.  For
  each $j\in [1\dd q]$ with $v_j = 1$, we copy $\T((j-1)p\dd jp]$ to
  the end of $T$ (setting $T:= T[1\dd |T|) \cdot \T((j-1)q\dd jq]
  \cdot \T[|T|]$). Next, we construct the answer vector $w\in
  \{0,1\}^p$, setting $w_i := 1$ if and only if
  $\ISA[pq+2i+1]-\ISA[pq+2i-1]>c_i$ for $i\in [1\dd p]$.

  Let us prove that $w=Mv$ holds assuming that all the $\ISA$ queries
  are answered correctly.  For this, note that, for $i\in [0\dd p]$,
  the value $\ISA[pq+2i+1]$ represents total number of occurrences of
  symbols $\$,a_0,\ldots,a_i$ in $T$; that is because $T[pq+2i+1\dd
  pq+2i+2]$ is the unique occurrence of $a_i\#$ in $T$, and $\#$ is
  the largest symbol in $\Sigma$.  In particular,
  $\ISA[pq+2i+1]-\ISA[pq+2i-1]>c_i$ holds if and only if one of the
  copy-pastes involved a substring $\T((j-1)p\dd jp]$ containing
  $a_i$.  This character could have only occurred at position
  $\T[(j-1)p+i]$, indicating that $M[i,j]=1$.  Moreover, the whole
  copy-paste is executed if and only if $v_j=1$.  Consequently,
  $\ISA[pq+2i+1]-\ISA[pq+2i-1]>c_i$ holds if and only if there exists
  $j\in [1\dd q]$ with $M[i,j]=v_j=1$; this is precisely when
  $(Mv)_i=1$.

  If the answers to $\ISA$ queries are correct with probability at
  least $\frac23$, we can guarantee that the whole vector $w$ is
  correct with probability at least $1-n^{-\Omega(1)}\ge \frac23$ by
  maintaining $\Theta(\log n)$ independent instances of the algorithm
  and taking the dominant answer to each $\ISA$ query.  In total, we
  perform $\Ohtilde(q)$ copy-pastes and $\Ohtilde(p)$ $\ISA$ queries.
  Furthermore, the copy-pastes have disjoint sources, so the length of
  $T$ increases at most twofold.  Hence, the total time required
  is
  \[
    \Ohtilde\left(q |T|^\alpha+p
    |T|^\beta\right)=\Ohtilde\left(p^\alpha q^{1+\alpha} + p^{1+\beta}
    q^\beta\right)=\Ohtilde\left(q^{1+(1+\gamma)\alpha} + q^{\beta +
    \gamma(1+\beta)}\right) =
    \Ohtilde\left(q^{\frac{1+\alpha+\beta}{1+\beta-\alpha}}\right).
  \]
  In the light of \cref{thm:gOMv}, this would disprove \cref{conj:OMv}
  due to $\frac{1+\alpha+\beta}{1+\beta-\alpha} <
  \frac{2}{1+\beta-\alpha} = 1+\gamma$.
\end{proof}

\begin{remark}
  The reduction behind \cref{thm:lowerbound} not only proves hardness
  of inverse suffix array queries, but also of the counting version of
  the dynamic indexing problem: It shows that the counting queries are
  hard already for patterns of length one.  Moreover, since each
  $\ISA$ query can be reduced to a logarithmic number of $\SA$
  queries, we get the same lower bound for $\SA$ queries. (The
  underlying reduction is described in the proof of \cref{prp:saisa};
  the dynamic strings of \cref{sec:ds} support lexicographic
  comparisons and copy-pastes in $\Ohtilde(1)$ time.)
\end{remark}

\appendix

\section{Omitted Proofs}\label{app:extra-proofs}

\lmexp*
\begin{proof}
  Let $j' \in [1 \dd k]$ be such that $\LCE_{S}(j, j') \geq 3\tau -
  1$.  Then, by definition, $\Lrootgen{f}{\tau}{S}{j'} =
  \Lrootstrgen{f}{S[j' \dd j' + 3\tau - 1)} = \Lrootstrgen{f}{S[j \dd
  j + 3\tau - 1)} = \Lrootgen{f}{\tau}{S}{j}$. To show
  $\Lheadgen{f}{\tau}{S}{j'} = s$, note that by $|H| \leq \tau$, the
  string $H'H^2$ (where $H'$ is a length-$s$ suffix of $H$) is a
  prefix of $S[j \dd j + 3\tau - 1) = S[j' \dd j' + 3\tau - 1)$.  On
  the other hand, $\Lheadgen{f}{\tau}{S}{j'} = s'$ implies that
  $\widehat{H}'H^2$ (where $\widehat{H}'$ is a length-$s'$ suffix of
  $H$) is a prefix of $S[j' \dd j' + 3\tau - 1)$.  Thus, by the
  synchronization property of primitive
  strings~\cite[Lemma~1.11]{AlgorithmsOnStrings} applied to the two
  copies of $H$, we have $s' = s$, and consequently, $j' \in
  \R_{f,s,H}$.

  For the converse implication, assume $j, j' \in \R_{f,s,H}$. This
  implies that both $S[j \dd \rendgen{\tau}{S}{j})$ and $S[j' \dd
  \rendgen{\tau}{S}{j'})$ are prefixes of $H'H^{\infty}$ (where $H'$
  is as above). Thus, by $\rendgen{\tau}{S}{j} - j,\,
  \rendgen{\tau}{S}{j'} - j' \geq 3\tau - 1$, we obtain $\LCE_{S}(j,
  j') \geq 3\tau - 1$.

  1. Let $Q = H'H^{\infty}$, where $H'$ is a length-$s$ suffix of
  $H$. We will prove $S[j \dd k] \prec Q \prec S[j' \dd k]$, which
  implies the claim. First, we note that $\typegen{\tau}{S}{j} = -1$
  implies that either $\rendgen{\tau}{S}{j} = k + 1$, or
  $\rendgen{\tau}{S}{j} \leq k$ and $S[\rendgen{\tau}{S}{j}] \prec
  S[\rendgen{\tau}{S}{j} - |H|]$.  In the first case, $S[j \dd
  \rendgen{\tau}{S}{j}) = S[j \dd k]$ is a proper prefix of $Q$ and
  hence $S[j \dd k] \prec Q$. In the second case, letting $\ell =
  \rendgen{\tau}{S}{j} - j$, we have $S[j \dd j + \ell) = Q[1 \dd
  \ell]$ and $S[j + \ell] \prec S[j + \ell - |H|] = Q[1 + \ell - |H|]
  = Q[1 + \ell]$. Consequently, $S[j \dd k] \prec Q$. To show $Q \prec
  S[j' \dd k]$ we observe that $\typegen{\tau}{S}{j'} = +1$ implies
  $\rendgen{\tau}{S}{j'} \leq k$. Thus, letting $\ell' =
  \rendgen{\tau}{S}{j'} - j'$, we have $Q[1 \dd \ell'] = S[j' \dd j' +
  \ell')$ and $Q[1 + \ell'] = Q[1 + \ell' - |H|] = S[j' + \ell' - |H|]
  \prec S[j' + \ell']$. Hence, we obtain $Q \prec S[j' \dd k]$.

  2. Similarly as above, we consider two cases for
  $\rendgen{\tau}{S}{j}$. If $\rendgen{\tau}{S}{j} = k + 1$, then by
  $\rendgen{\tau}{S}{j} - j < \rendgen{\tau}{S}{j'} - j'$, $S[j \dd
  \rendgen{\tau}{S}{j}) = S[j \dd k]$ is a proper prefix of $S[j' \dd
  \rendgen{\tau}{S}{j'})$ and hence $S[j \dd k] \prec S[j' \dd
  \rendgen{\tau}{S}{j'}) \preceq S[j' \dd k]$. If
  $\rendgen{\tau}{S}{j} \leq k$, then letting $\ell =
  \rendgen{\tau}{S}{j} - j$, we have $S[j \dd j + \ell) = S[j' \dd j'
  + \ell)$ and by $\rendgen{\tau}{S}{j} - j < \rendgen{\tau}{S}{j'} -
  j'$, $S[j + \ell] \prec S[j + \ell - |H|] = S[j' + \ell - |H|] =
  S[j' + \ell]$.  Consequently, $S[j \dd k] \prec S[j' \dd k]$.

  3. By $\typegen{\tau}{S}{j'} = +1$ we have $\rendgen{\tau}{S}{j'}
  \leq k$. Thus, letting $\ell' = \rendgen{\tau}{S}{j'} - j'$, by
  $\rendgen{\tau}{S}{j} - j > \rendgen{\tau}{S}{j'} - j'$, we have
  $S[j \dd j + \ell') = S[j' \dd j' + \ell')$ and $S[j + \ell'] = S[j
  + \ell' - |H|] = S[j' + \ell' - |H|] \prec S[j' + \ell']$.
  Consequently, $S[j \dd k] \prec S[j' \dd k]$.
\end{proof}

\lmRblock*
\begin{proof}

  Denote $p = \per(S[j{-}1 \dd j{-}1{+}3\tau{-}1))$ and $p' = \per(S[j
  \dd j {+} 3\tau {-} 1))$. By $j - 1, j \in \R(\tau,S)$ we have $p,
  p' \leq \tfrac{\tau}{3}$.  Consider $Q = S[j \dd j + \tau)$. The
  string $Q$ has both periods $p$ and $p'$. If $p \neq p'$, then by
  the weak periodicity lemma, $Q$ has a period $p'' = \gcd(p, p') <
  p'$. Since $p'' \mid p'$ we obtain that $S[j \dd j + p')$ is not
  primitive, which contradicts $\per(S[j \dd j + 3\tau - 1)) =
  p'$. Thus, $p = p'$.  Then, by $p \leq \tfrac{\tau}{3}$, we have
  $S[j {-} 1 \dd j {-} 1 {+} p) = S[j {-} 1 {+} p \dd j {-} 1 {+}
  2p)$.  Consequently, $\{S[j {-} 1 {+} t \dd j {-} 1 {+} t {+} p) : t
  \in [0 \dd p)\} = \{S[j {+} t \dd j {+} t {+} p) : t \in [0 \dd
  p)\}$. This implies that $S[j {-} 1 \dd j {-} 1 {+} p)$ and $S[j \dd
  j {+} p)$ are cyclically equivalent. Thus, it holds
  $\Lrootgen{f}{\tau}{S}{j - 1} = f(S[j - 1 \dd j - 1 + p)) = f(S[j
  \dd j + p)) = \Lrootgen{f}{\tau}{S}{j}$.

  By \cite[Fact~3.2]{sss}, $S[j - 1 \dd \rendgen{\tau}{S}{j - 1})$
  (resp.\ $S[j \dd \rendgen{\tau}{S}{j})$) is the longest substring
  starting at position $j - 1$ (resp.\ $j$) with period $p$
  (resp.\ $p'$).  Equivalently, $\rendgen{\tau}{S}{j - 1} = j {-} 1
  {+} p + \LCE_{S}(j {-} 1, j {-} 1 {+} p)$ and $\rendgen{\tau}{S}{j}
  = j {+} p' + \LCE_{S}(j, j {+} p')$.  Thus, by $p = p'$ and $S[j {-}
  1] = S[j {-} 1 {+} p]$, we have $\rendgen{\tau}{S}{j - 1} = j {-}
  1 {+} p + \LCE_{S}(j {-} 1, j {-} 1 {+} p) = j {+} p + \LCE_{S}(j, j
  {+} p) = j {+} p' + \LCE_{S}(j, j {+} p') = \rendgen{\tau}{S}{j}$.

  The third claim follows from the definition of type and equalities
  $\rendgen{\tau}{S}{j - 1} = \rendgen{\tau}{S}{j}$ and $p = p'$.
\end{proof}

\bibliographystyle{alphaurl}
\bibliography{paper}

\end{document}